\definecolor{blueviolet}{rgb}{0.2, 0.2, 0.6}
\definecolor{webgreen}{rgb}{0,.5,0}
\definecolor{webbrown}{rgb}{.6,0,0}
\numberwithin{equation}{section}
\newcounter{lemc}
\newcounter{propc}
\newcounter{coroc}
\numberwithin{lemc}{section}
\numberwithin{propc}{section}
\numberwithin{coroc}{section}
\newtheorem{theorem}{Theorem}
\newtheorem{corollary}[coroc]{Corollary}
\newtheorem{definition}{Definition}
\newtheorem{lemma}[lemc]{Lemma}
\newtheorem{proposition}[propc]{Proposition}
\theoremstyle{definition}
\newtheorem{remark}{Remark}
\newcommand{\bx}{{\bm{x}}}
\newcommand{\be}{{\bm{e}}}
\newcommand{\balpha}{{\bm{\alpha}}}
\newcommand{\BC}{\mathbb{C}}
\newcommand{\CD}{\mathcal{D}}
\newcommand{\CI}{\mathcal{I}}
\newcommand{\CL}{\mathcal{L}}
\newcommand{\CO}{\mathcal{O}}
\newcommand{\BR}{\mathbb{R}}
\newcommand{\CS}{\mathcal{S}}
\newcommand{\BZ}{\mathbb{Z}}
\newcommand*{\verr}{E}
\newcommand{\vA}{\bm{A}}
\newcommand{\vB}{\bm{B}}
\newcommand{\vD}{\bm{D}}
\newcommand{\vE}{\bm{E}}
\newcommand{\vF}{\bm{F}}
\newcommand{\vG}{\bm{G}}
\newcommand{\vH}{\bm{H}}
\newcommand{\vh}{\bm{h}}
\newcommand{\vI}{\bm{I}}
\newcommand{\vL}{\bm{L}}
\newcommand{\vO}{\bm{O}}
\newcommand{\vP}{\bm{P}}
\newcommand{\vQ}{\bm{Q}}
\newcommand{\vR}{\bm{R}}
\newcommand{\vS}{\bm{S}}
\newcommand{\bt}{\bar{t}}
\newcommand{\vU}{\bm{U}}
\newcommand{\vV}{\bm{V}}
\newcommand{\vW}{\bm{W}}
\newcommand{\vX}{\bm{X}}
\newcommand{\vY }{\bm{Y }}
\newcommand{\vZ}{\bm{Z}}
\newcommand{\vsigma}{\bm{ \sigma}}
\newcommand{\vrho}{{\bm{\rho}}}
\renewcommand{\L}{\left}
\newcommand{\R}{\right}
\newcommand{\bE}{\bar{E}}
\newcommand{\bomega}{\bar{\omega}}
\newcommand{\tOmega}{\tilde{\Omega}}
\newcommand{\tCO}{\tilde{\CO}}
\newcommand{\vertiii}[1]{{\left\vert\kern-0.25ex\left\vert\kern-0.25ex\left\vert #1 \right\vert\kern-0.25ex\right\vert\kern-0.25ex\right\vert}}
\newcommand{\normp}[2]{\norm{#1}_{#2}}
\newcommand{\lnormp}[2]{\lnorm{#1}_{#2}}
\newcommand{\labs}[1]{\left\vert {#1} \right\vert}
\newcommand{\lnorm}[1]{\left\Vert {#1} \right\Vert}
\newcommand{\e}{\mathrm{e}}
\newcommand{\ri}{\mathrm{i}}
\newcommand{\rd}{\mathrm{d}}
\newcommand{\vect}[1]{\boldsymbol{#1}}
\newcommand{\xv}{{\vect{x}}}
\newcommand{\yv}{{\vect{y}}}
\newcommand{\sv}{{\vect{s}}}
\newcommand{\indicator}{\mathds{1}}
\newcommand{\undersetbrace}[2]{ \underset{#1}{\underbrace{#2}}}
\newcommand{\rom}[1]{\mathtt{\uppercase\expandafter{\romannumeral #1\relax}}}
\newcommand{\inp}{\textnormal{in}}
\newcommand{\circuit}{C}
\newcommand{\prop}{\textnormal{prop}}
\newcommand{\clock}{\textnormal{clock}}
\newcommand{\eff}{\textnormal{eff}}
\DeclareMathOperator*{\argmax}{arg\,max}
\DeclareMathOperator*{\argmin}{arg\,min}
\DeclareMathOperator{\wt}{wt}
\DeclareMathOperator{\ad}{ad}
\DeclareMathOperator{\Tr}{tr}
\DeclareMathOperator{\poly}{poly}
\newcommand{\ketbra}[2]{\lvert #1 \rangle \! \langle #2 \rvert}
\newcommand{\ketbrat}[1]{\lvert #1 \rangle \! \langle #1 \rvert}
\newcommand{\norm}[1]{\left\lVert#1\right\rVert}
\begin{document}

\title{Local minima in quantum systems}
\author[1,4]{Chi-Fang Anthony Chen}
\author[1,2,3]{Hsin-Yuan Huang}
\author[1,4]{John Preskill}
\author[1]{Leo Zhou}
\affil[1]{California Institute of Technology}
\affil[2]{Google Quantum AI}
\affil[3]{Massachusetts Institute of Technology}
\affil[4]{AWS Center for Quantum Computing}

\date{September 28, 2023}

\maketitle

\begin{abstract}
\normalsize
Finding ground states of quantum many-body systems is known to be hard for both classical and quantum computers. As a result, when Nature cools a quantum system in a low-temperature thermal bath, the ground state cannot always be found efficiently. Instead, Nature finds a local minimum of the energy. In this work, we study the problem of finding local minima in quantum systems under thermal perturbations. While local minima are much easier to find than ground states, we show that finding a local minimum is computationally hard for classical computers, even when the task is to output a single-qubit observable at any local minimum. In contrast, we prove that a quantum computer can always find a local minimum efficiently using a \emph{thermal gradient descent} algorithm that mimics the cooling process in Nature. To establish the classical hardness of finding local minima, we consider a family of two-dimensional Hamiltonians such that any problem solvable by polynomial-time quantum algorithms can be reduced to finding ground states of these Hamiltonians. We prove that for such Hamiltonians, all local minima are global minima. Therefore, assuming quantum computation is more powerful than classical computation, finding local minima is classically hard and quantumly easy.
\end{abstract}

\thispagestyle{empty}
\clearpage
\thispagestyle{empty}
\tableofcontents
\thispagestyle{empty}
\clearpage
\pagenumbering{arabic} 
\setcounter{page}{1}

\section{Introduction}

Finding ground states and other low-energy states of quantum many-body systems is a central problem in physics, materials science, and chemistry.
To address this problem, many powerful computational methods, such as density functional theory (DFT) \cite{HohenbergKohn, NobelKohn}, quantum Monte Carlo (QMC)~\cite{CEPERLEY555,SandvikSSE, becca_sorella_2017},
variational optimization with tensor network ansatzes~\cite{DMRG1,DMRG2, Garcia07,Vestraete2008,SCHOLLWOCK201196, Haghshenas2019, Hyatt2019} or neural network ansatzes \cite{Carleo_2017, hibat2020recurrent, Deng2017}, and data-driven machine learning approaches~\cite{gilmer2017neural,qiao2020orbnet, huang2021provably,lewis2023improved}, have been developed.
These methods work well for many physically relevant problem instances but fail badly in other cases.
One hopes that scalable fault-tolerant quantum computers will be able to solve a broader array of problem instances, but finding ground states of local Hamiltonians is known to be $\mathsf{QMA}$-hard~\cite{KSV02, kempe2006complexity}, and therefore is expected to be intractable even for quantum computers in some instances.
Indeed, the efficacy of existing quantum algorithms requires additional assumptions that are yet to be justified~\cite{lee2022there}, such as the presence of a trial state with sufficient ground state overlap~\cite{lin2022heisenberg,gharibian2022dequantizing} or a parameterized adiabatic path whose spectral gap remains open~\cite{farhi2000quantum}.

Under the widely accepted conjecture that Nature can be efficiently simulated on a quantum computer, the hardness of finding ground states on quantum computers implies that Nature cannot find ground states in general.
When a quantum system with Hamiltonian $\vH$ is placed in a low-temperature thermal bath, the system seeks a local minimum of the energy, which may not be the ground state of $\vH$.
For some physical systems, such as spin glasses \cite{edwards1975theory, kirkpatrick1978infinite, binder1986spin, mydosh1993spin}, finding a ground state is indeed known to be computationally hard; such systems, when cooled, almost always find a local minimum instead of the ground state. In these cases, the ground state of the Hamiltonian is physically irrelevant in that it is never observed in experiments.

Motivated by this perspective, in this work we study the problem of finding local minima in quantum many-body systems.
For concreteness, we consider an $n$-qubit system governed by a local Hamiltonian $\vH$.
The central question we are interested in is:
\begin{center}
    \emph{How tractable is the problem of finding local minima of the energy}\\
    \emph{in quantum systems using classical and quantum computers?}
\end{center}
To begin to answer this question, we need a mathematical definition of local minima in quantum systems.
Based on the standard definition in mathematical optimization \cite{pardalos1991quadratic, boyd2004convex,  agarwal2017finding, jin2018accelerated, ahmadi2022complexity}, we consider a local minimum in a quantum system governed by Hamiltonian $\vH$ to be a quantum state such that the expectation value of $\vH$ does not decrease under any small perturbation applied to the state.
The local minima of $\vH$ form a subset of the entire quantum state space, which contains the global minima, the ground states of $\vH$.
We will consider two definitions of perturbations for defining local minima. The first one is, in a sense, mathematically natural but turns out to be inadequate for reasons we will explain. The second one is well-motivated physically and turns out to have interesting properties which we will explore.

The first definition of perturbations we study in this work is local unitary perturbations, which can be viewed as short-time unitary evolution governed by a sum of few-body Hermitian operators, as might arise in an adaptive variational quantum eigensolver (VQE) \cite{o2016scalable, grimsley2019adaptive, cerezo2021variational}.
A drawback of this definition is that finding a local minimum becomes so easy that even a classical computer can solve it efficiently.
We prove that a random $n$-qubit pure state is almost always a local minimum of $\vH$ under local unitary perturbations. Hence, there are $\exp(\exp(\Omega(n)))$ many local minima that are not global minima in the energy landscape. Because the number of local minima is enormous, finding a local minimum under this definition is \emph{classically easy}.
While local unitary perturbations are natural from a mathematical perspective, they are not physically motivated since the evolution of a quantum system interacting with a low-temperature thermal bath is governed by \emph{quantum thermodynamics} and is inherently nonunitary.

\begin{figure}
    \centering
    \includegraphics[width=0.95\textwidth]{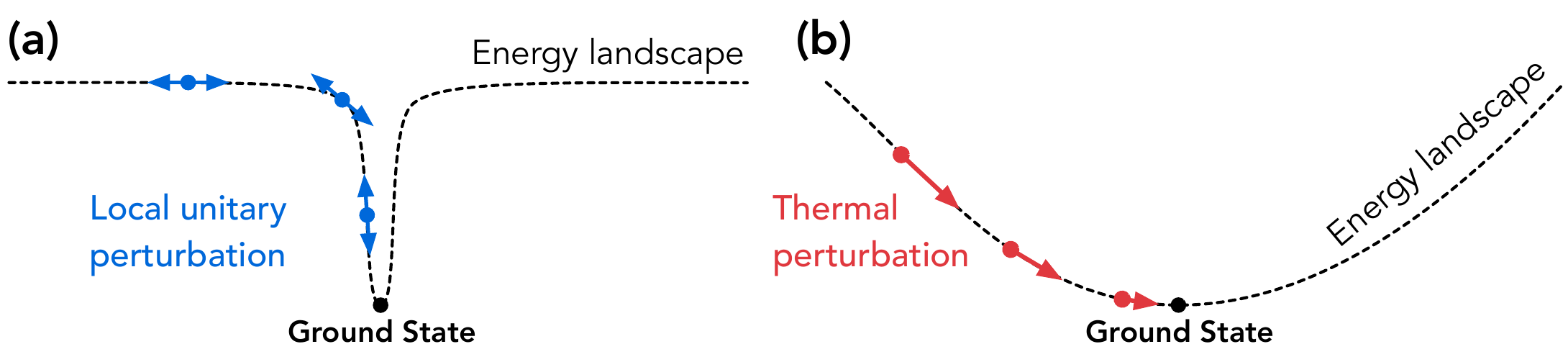}
    \caption{\emph{(a) Energy landscape under local unitary perturbations.} For any local Hamiltonian $\vH$, there will be doubly exponentially many local minima within the $n$-qubit state space that stems from a large barren plateau.
    \emph{(b) Energy landscape under thermal perturbations.} For some local Hamiltonians, such as a family of $\mathsf{BQP}$-hard Hamiltonians, the energy landscape over the entire $n$-qubit state space has a nice bowl shape, and the only local minimum is the global minimum. However, for $\mathsf{QMA}$-hard Hamiltonians, the energy landscape necessarily contains many suboptimal local minima.
    Local unitary perturbations are reversible, while thermal perturbations are irreversible.
    }
    \label{fig:landscape}
\end{figure}

Our second definition is inspired by how quantum systems actually seek out local minima in Nature.
Under suitable physical assumptions\footnote{Typical assumptions are that the system-bath coupling is weak and the thermal bath is memoryless.}, perturbations induced by a thermal bath are represented by a master equation defined by a linear combination of \textit{thermal Lindbladians} $\CL_a$, each associated with a local system-bath interaction $\vA^a$ \cite{lindblad1976generators, davies1979generators, breuer2002theory}.
In its modern formulation~\cite{mozgunov2020completely, Chen2023quantumthermal}, the thermal Lindbladian $\CL_a$ depends on the system Hamiltonian $\vH$ and two macroscopic bath quantities: the inverse temperature $\beta$ and a characteristic time scale $\tau$.
We prove two fundamental results concerning the problem of finding local minima under thermal perturbations.
We prove that a quantum computer can efficiently find a local minimum under thermal perturbations using a proposed \emph{quantum thermal gradient descent algorithm} that mimics Nature's cooling process.
And in stark contrast to the definition of a local minimum based on local unitary perturbations, we prove that finding local minima under thermal perturbations is universal for quantum computation and, hence, is \emph{classically hard} under the standard assumption $\mathsf{BPP} \neq \mathsf{BQP}$.

To establish the classical hardness of finding local minima under thermal perturbations, we consider geometrically local Hamiltonians on a 2D lattice, such that the ground state encodes the outcome of any efficient quantum computation using a modified version of Kitaev's circuit-to-Hamiltonian construction \cite{KSV02, OliveiraTerhal,AharonovAQCUniversal}.
The most technically involved result of this work is a theorem stating that for these 2D Hamiltonians, all local minima under low-temperature thermal perturbations are global minima, i.e., ground states. That is, the energy landscape for these Hamiltonians has a nice bowl shape over the entire $n$-qubit state space such that quantum gradient descent efficiently finds the ground state. Meanwhile, if a classical computer can efficiently find any local minima under thermal perturbations, then the classical computer can efficiently simulate quantum computation, which is widely believed to be impossible. To prove the theorem, we develop a set of techniques for establishing that a Hamiltonian $\vH$ has \emph{no suboptimal local minima}, i.e., all local minima of $\vH$ are global minima.

We conclude that local minima under thermal perturbations are, in general, hard to find classically but easy to find on a quantum computer. Hence, the local minima problem provides a quantumly tractable alternative to the ground state problem, which is believed to be hard for both classical and quantum computers.
Since ground states of quantum systems are frequently encountered in the laboratory, one wonders whether generic quantum many-body systems relax to their ground states efficiently when cooled because these systems have no suboptimal local minima, similar to the situation in convex optimization \cite{boyd2004convex}.
Exploring the shape of the energy landscape of Hamiltonians arising in physics, chemistry, and materials science may suggest new opportunities for solving classically intractable and physically relevant problems using quantum computers.

\section{Results}
\label{sec:results-main}
    
We now present our main results concerning the tractability of finding local minima in quantum systems.
The results are organized into the complexity of finding local minima under local unitary perturbations and under thermal perturbations.
A collection of notational conventions and some background on thermal Lindbladians can be found in Appendix~\ref{sec:recap_notation}.

We define local minima in quantum systems by generalizing a definition commonly used in classical optimization; see a brief review of classical optimization in Appendix~\ref{sec:local-minima-classical}.
Let $\mathcal{P}_{\balpha}$ be a perturbation parameterized by a small vector $\balpha$ that maps quantum states to quantum states.
An $\epsilon$-approximate local minimum of an $n$-qubit Hamiltonian $\vH$ under perturbation $\mathcal{P}$ is a state $\vrho$ with an energy $\Tr(\vH \vrho)$ that is an approximate minimum under perturbations, i.e.,
\begin{equation}\label{eq:epsilon-approximate}
    \Tr(\vH \vrho) \leq \Tr(\vH \mathcal{P}_{\balpha}(\vrho)) + \epsilon \norm{\balpha}
\end{equation}
for all small enough $\balpha$.
The formal definition is given in Appendix~\ref{sec:local-minima-quantum}.
We say an algorithm $\mathcal{A}$ has solved the problem of \textit{finding} local minima under perturbation $\mathcal{P}$ if given any $n$-qubit Hamiltonian $\vH$, written as a sum of few-qubit Hermitian operators, and any few-qubit observable $\vO$, the algorithm $\mathcal{A}$ can output a real value $\Tr(\vO \vrho)$ corresponding to any approximate local minimum $\vrho$ of $\vH$ under perturbations $\mathcal{P}$ up to a small error.\footnote{Since there could be multiple local minima and we consider finding one instance to be sufficient, this problem is closer to a \textit{relational} problem than to a \textit{decision} problem. }

\subsection{Local minima under local unitary perturbations}

We first study local minima under local unitary perturbations.
Local unitary perturbations are short-time unitary evolutions under a sum of few-body Hermitian operators.
A quantum circuit consisting of near-identity two-qubit gates induces a local unitary perturbation.
Consider an $n$-qubit pure state $\ket{\psi}$. A local unitary perturbation of $\ket{\psi}$ is given by
\begin{equation}
\text{(local unitary perturbation):} \quad \quad \ket{\psi} \rightarrow \exp\left( -\ri \sum_{a=1}^m \alpha_a \vh^a \right) \ket{\psi},
\end{equation}
where $\vh^a$ is a Hermitian operator acting on a few qubits, $m = \mathrm{poly}(n)$ is the number of such Hermitian operators, and $\balpha = \sum_{a} \alpha_a \hat{\be}_a \in \BR^{m}$ is a vector close to zero.
This definition is inspired by adaptive variational quantum eigensolvers \cite{o2016scalable, grimsley2019adaptive, cerezo2021variational}, and is the state version of the Riemannian geometry of quantum computation defined in \cite{nielsen2006quantum}.
When one variationally minimizes the energy by applying unitary gates, one finds a local minimum under local unitary perturbations.

To understand how easy the problem of finding local minima under local unitary perturbations is, we need to characterize the energy landscape.
The following lemma provides a universal characterization of the structure of the energy landscape under the geometry defined by local unitary perturbations for any local Hamiltonian $\vH$.
The formal statement is given in Lemma~\ref{lem:Haar-concentration}, and the proof is given in Appendix~\ref{sec:barren-plateau-energy-unitary}.
\begin{lemma}[Barren plateau; informal] \label{lem:barren-plateau-random-state-informal}
Given any $n$-qubit local Hamiltonian $\vH$. A random pure $n$-qubit state $\ket{\psi}$ is an approximate local minimum of $\vH$ under local unitary perturbations.
\end{lemma}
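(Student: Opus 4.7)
The plan is to reduce the $\epsilon$-approximate local-minimum condition of \cref{eq:epsilon-approximate} to a statement that the energy gradient at $\ket{\psi}$ is small, and then to establish exponential smallness of this gradient for a Haar-random pure state via concentration of measure on the sphere.

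First I would expand the perturbed energy to first order in $\balpha$. With $\vU(\balpha) = \exp(-\ri\sum_a \alpha_a \vh^a)$ and $\vrho = \ketbrat{\psi}$, cyclicity of the trace yields
\begin{equation}
\Tr\bigl(\vH\,\vU(\balpha)\vrho\vU(\balpha)^\dagger\bigr) \;=\; \Tr(\vH\vrho) \;+\; \sum_{a=1}^m \alpha_a\,g_a(\psi) \;+\; R(\balpha),
\end{equation}
where $g_a(\psi) = -\ri\langle\psi|[\vH,\vh^a]|\psi\rangle \in \BR$, and the remainder obeys $|R(\balpha)| \leq C_\vH \norm{\balpha}_2^2$ with a $\poly(n)$ prefactor $C_\vH$ depending on $\norm{\vH}_\infty$, $\max_a \norm{\vh^a}_\infty$, and $m$. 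Choosing the worst-case direction $\balpha = -t\, g(\psi)/\norm{g(\psi)}_2$ and sending $t \to 0$ shows that \cref{eq:epsilon-approximate}, in the limit of arbitrarily small $\balpha$, is equivalent to the single gradient-norm bound $\norm{g(\psi)}_2 \leq \epsilon$; conversely this bound is sufficient because $R(\balpha)/\norm{\balpha}_2 \to 0$ as $\balpha\to 0$.

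Next I would estimate $|g_a(\psi)|$ for Haar-random $\ket{\psi}$. Since $\Tr([\vH,\vh^a]) = 0$, the first moment $\E[g_a(\psi)]$ vanishes. The map $\ket{\psi}\mapsto\langle\psi|X|\psi\rangle$ is $2\norm{X}_\infty$-Lipschitz on $S^{2^{n+1}-1}\subset \BR^{2^{n+1}}$, and $\norm{[\vH,\vh^a]}_\infty \leq 2\norm{\vH}_\infty\norm{\vh^a}_\infty$ is at most polynomial in $n$ for local $\vH$ and few-qubit $\vh^a$. Levy's concentration inequality on the sphere then gives
\begin{equation}
\Pr\bigl[\,|g_a(\psi)| > t\,\bigr] \;\leq\; 2\exp\!\bigl(-c\, t^2 \cdot 2^n /\norm{[\vH,\vh^a]}_\infty^2\bigr)
\end{equation}
for a universal constant $c>0$. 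A union bound over the $m = \poly(n)$ perturbation directions then yields $\norm{g(\psi)}_2 \leq \sqrt{m}\cdot 2^{-n/3}$ with probability at least $1 - \exp(-\Omega(2^n/\poly(n)))$, exponentially close to one in $n$.

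Combining the two ingredients, a Haar-random pure state is, with overwhelming probability, an $\epsilon$-approximate local minimum of $\vH$ under local unitary perturbations with $\epsilon$ decaying exponentially in $n$. The main obstacle I anticipate is purely book-keeping: tracking how the operator-norm and locality prefactors enter both Levy's exponent and the quadratic remainder, so that \cref{eq:epsilon-approximate} holds on a neighborhood of $\balpha$ that is at least inverse-polynomial in $n$ (rather than exponentially small), and so that the threshold $\epsilon$ is uniform over the $\poly(n)$ perturbation directions. There is no structural obstruction; the phenomenon is precisely the barren-plateau concentration of local observables on Haar-random states.
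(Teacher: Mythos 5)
Your proposal is correct and follows the same skeleton as the paper's proof — first-order Taylor expansion with a crude polynomial bound on the quadratic remainder, exponential concentration of the commutator expectations $\langle\psi|[\vH,\vh^a]|\psi\rangle$ for Haar-random $\ket{\psi}$, a union bound over the $m=\poly(n)$ directions, and a choice of a small neighborhood radius $\delta$ to absorb the second-order term — but you establish the concentration step by a genuinely different key lemma. The paper decomposes $\vH$, $\vO$, and each $\vh^a$ in the Pauli basis, invokes the Pauli-expectation concentration bound of Hayden--Leung--Winter (Lemma III.5 of \cite{hayden2006aspects}) for each of the $\poly(n)$ Pauli strings in a support set built from the commutators, and then assembles the gradient bound via Cauchy--Schwarz; you instead apply Levy's lemma directly to the observables $[\vH,\vh^a]$, using $\Tr([\vH,\vh^a])=0$ for the mean and $\norm{[\vH,\vh^a]}_\infty\le 2\norm{\vH}_\infty\norm{\vh^a}_\infty=\poly(n)$ for the Lipschitz constant. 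Your route avoids the Pauli support-set bookkeeping entirely and only needs $\norm{\vH}_\infty=\poly(n)$ rather than locality per se; the paper's route, with the same machinery, simultaneously delivers the second half of its formal Lemma~\ref{lem:Haar-concentration} (that $\bra{\psi}\vO\ket{\psi}$ is exponentially close to $\Tr(\vO)/2^n$), which is what makes Theorem~\ref{thm:classical-easy-local-unitary} classically trivial — though your Levy argument would give that too by applying it to $\vO$ directly. Two minor points of imprecision, neither fatal: the paper's definition measures $\balpha$ in $\ell_1$ (your $\ell_2$ pairing is fine since the paper notes the results are norm-insensitive, up to $\poly(m)$ factors in the remainder), and your claim that approximate local-minimality is \emph{equivalent} to the gradient-norm bound with the same $\epsilon$ is slightly loose (the sufficient direction needs a margin to absorb the $C_\vH\norm{\balpha}^2$ term, exactly the $<$ versus $\le$ gap the paper handles by taking $\delta=\Theta(\epsilon/\poly(n))$); since the gradient you obtain is $\poly(n)2^{-n/3}$ while $\epsilon$ may be taken as large as $2^{-n/4}$, this margin exists and your conclusion, with its exponentially small accompanying $\delta$, matches the paper's regime.
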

\noindent Furthermore, the proof of the above lemma illustrates the following physical picture: the energy landscape in the pure state space defined based on local unitary perturbations consists of a large barren plateau \cite{mcclean2018barren} with doubly-exponentially many approximate local minima having exponentially small energy gradient.
Additionally, almost all of the local minima have local properties that are exponentially close to that of the maximally mixed state.
As a result, while finding ground states is classically hard, finding local minima under local unitary perturbations is classically trivial.

\begin{theorem}[Classically easy to find local minima under local unitary perturbations; informal] \label{thm:classical-easy-local-unitary-informal}
The problem of finding approximate local minima of $n$-qubit local Hamiltonian $\vH$ under local unitary perturbations is classically easy.
\end{theorem}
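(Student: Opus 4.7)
The plan is to exhibit a trivial classical algorithm: on input a local Hamiltonian $\vH$ and a $k$-qubit observable $\vO$, simply output $\Tr(\vO)/2^k$, where the trace is over the $k$ qubits on which $\vO$ acts. No information about $\vH$ is used at all. Since $k=O(1)$, the output is computable classically in constant time (or $\poly(n)$ time when the description of $\vO$ is encoded as a small matrix). Correctness reduces to showing that $\Tr(\vO)/2^k$ coincides, up to small error, with the expectation $\Tr(\vO\vrho)$ for some genuine approximate local minimum $\vrho$ of $\vH$.

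To certify correctness, I would combine Lemma~\ref{lem:barren-plateau-random-state-informal} with Levy's concentration of measure on the unit sphere of $\CC^{2^n}$. The lemma already guarantees that a Haar-random pure state $\ket{\psi}$ is an $\epsilon$-approximate local minimum of $\vH$ under local unitary perturbations with overwhelming probability. Separately, Levy's inequality applied to the $O(\norm{\vO})$-Lipschitz map $\ket{\psi}\mapsto \langle \psi|\vO|\psi\rangle$, whose Haar mean equals $\Tr(\vO\cdot \vI/2^n)=\Tr(\vO)/2^k$, yields
\begin{equation*}
\Pr_{\ket{\psi}\sim\mathrm{Haar}}\!\Bigl[\,\bigl\lvert \langle\psi|\vO|\psi\rangle - \Tr(\vO)/2^k\bigr\rvert > \epsilon\,\Bigr] \;\leq\; 2\exp\!\bigl(-\Omega(\epsilon^{2}\, 2^{n}/\norm{\vO}^{2})\bigr),
\end{equation*}
which is doubly-exponentially small for any $\epsilon=\Omega(1/\poly(n))$. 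A union bound with the high-probability event from the barren plateau lemma then shows that with probability approaching $1$, a Haar-random $\ket{\psi}$ is simultaneously (i) an approximate local minimum of $\vH$ and (ii) has $\langle\psi|\vO|\psi\rangle$ within exponentially small error of $\Tr(\vO)/2^k$. In particular such a state exists, so the algorithm's output certifies a valid approximate observable value on some approximate local minimum, which is exactly what the problem requires.

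The only mild subtlety is bookkeeping for the tolerance $\epsilon$ appearing in the problem definition of ``finding'' local minima: one must ensure the accuracy demanded of the output and the slack $\epsilon$ in \eqref{eq:epsilon-approximate} are both at least $\Omega(1/\poly(n))$, which is the standard convention and is comfortably met since both Haar-concentration tails above are much smaller. The actual physical content — that a Haar-random pure state is indistinguishable from the maximally mixed state with respect to any few-body probe, and is itself a local minimum of any local Hamiltonian — is carried entirely by the preceding barren plateau lemma, so this theorem is essentially an immediate corollary. The main obstacle, to the extent there is one, is cosmetic: stating Levy's inequality with the correct Lipschitz constant induced by the few-body structure of $\vO$ and aligning the error parameters with those of Lemma~\ref{lem:barren-plateau-random-state-informal} so that the simultaneous union bound goes through cleanly.
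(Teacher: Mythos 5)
Your proposal is correct and follows essentially the same route as the paper: the trivial algorithm outputs $\Tr(\vO)/2^{|S|}$ (equivalently $\Tr(\vO)/2^n$), and correctness follows from the fact that a Haar-random pure state is, with overwhelming probability, simultaneously an approximate local minimum and has $\langle\psi|\vO|\psi\rangle$ exponentially close to the maximally-mixed value. The only cosmetic difference is that the paper's formal version of the barren-plateau lemma (Lemma~\ref{lem:Haar-concentration}) already bundles the observable-concentration statement, so your separate Levy-plus-union-bound step is subsumed by citing that lemma directly.
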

\noindent
See Theorem~\ref{thm:classical-easy-local-unitary} for a more detailed statement, and Appendix~\ref{sec:main-unitary} for its proof.

The presence of barren plateaus in the energy landscape under local unitary perturbations causes the problem of finding local minima to be classically easy.
However, a definition of local minima based on local unitary perturbation is not physically well motivated since Nature cools a physical system via open-system dynamics by coupling to a thermal bath rather than by unitary dynamics.

\subsection{Local minima under thermal perturbations}

In this section, we consider local minima under thermal perturbations induced by a heat bath, formally defined in Appendix~\ref{sec:local-minima-quantum}.
We will show that the classical hardness of finding local minima under thermal perturbations is much different than the classical hardness under local unitary perturbations.

When the coupling between an $n$-qubit system and a thermal bath is weak, and the bath is memoryless, the complicated joint system-bath Hamiltonian dynamics reduces to a Markovian Lindbladian evolution of the system alone, $\vrho(t) = \e^{\CL t}[\vrho]$.
Remarkably, this continuous time generator $\CL$ can be defined by merely the system Hamiltonian $\vH$, the \textit{jump operators} $\vA^a$ through which the bath interacts with the system, and thermodynamic quantities of the bath: inverse temperature $\beta$ and a characteristic time-scale $\tau$. See Appendix~\ref{sec:thermo-lindblad} for an introduction and Appendix~\ref{sec:thermo-lindblad-detail} for an in-depth discussion.
Under these assumptions, we may effectively consider a thermal perturbation of $n$-qubit state $\vrho$ to be
\begin{equation}
\text{(thermal perturbation):} \quad \quad \vrho \rightarrow \exp\left( \sum_{a=1}^m \alpha_a \CL^{\beta, \tau, \vH}_{a} \right) (\vrho), \label{eq:thermal-perturb}
\end{equation}
where $\CL^{\beta, \tau, \vH}_{a}$ is the thermal Lindbladian associated with each jump operator $\vA^a$ acting on a few qubits, $m = \mathrm{poly}(n)$ is the number of jump operators, and $\balpha = \sum_{a} \alpha_a \hat{\be}_a \in \BR^{m}_{\geq 0}$ is a \emph{nonnegative} vector close to zero.
Here, the vector is nonnegative because thermodynamic processes are generally irreversible.
The irreversibility in thermal perturbations is crucial to ensure that there are fewer than doubly-exponentially many local minima in the energy landscape; see a discussion in Appendix~\ref{sec:irreversible-perturbations}.

We may define a local minimum under thermal perturbations to be a state $\vrho$ with the minimum energy $\Tr(\vH \vrho)$ under thermal perturbations given in Eq.~\eqref{eq:thermal-perturb}. More precisely, we will consider $\epsilon$-approximate local minima as in  Eq.~\eqref{eq:epsilon-approximate}.
A central concept that enables us to understand the energy landscape and establishes the computational complexity of finding local minima under thermal perturbations is the \textit{energy gradient operator},
\begin{equation}
	\text{(energy gradient operator):} \quad\quad \sum_{a=1}^m \CL^{\dag \beta, \tau, \vH}_{a}(\vH) \hat{\be}_a,
\end{equation}
where the adjoint $\CL^{\dag}$ is the Heisenberg-picture Lindbladian, i.e., $\Tr( \CL^{\dag}[\vO] \vrho ) = \Tr( \vO \CL[\vrho])$.
The energy gradient operator is a vector of individual gradient operators\footnote{This is similar to the spin operator $\vec{\sigma} = \sigma^x \hat{x} + \sigma^y \hat{y} + \sigma^z \hat{z}$, which is a vector of Hermitian observables.} associated with each jump operator $\vA^a$.
Indeed, the energy gradient operator naturally emerges by taking an infinitesimal perturbation, i.e., the gradient of the energy $\Tr(\vH \vrho)$,
\begin{equation}
    \Tr\left(\vH \exp\left(\sum_{a=1}^m \alpha_a \CL^{\beta, \tau, \vH}_{a} \right) (\vrho) \right) = \Tr(\vH \vrho) + \balpha \cdot \sum_{a=1}^m \Tr\left(\CL^{\dag \beta, \tau, \vH}_{a}(\vH) \vrho\right) \hat{\be}_a + \mathcal{O}(\norm{\balpha}^2).
\end{equation}
In Appendix~\ref{sec:local-minima-thermal-prop}, we describe the formal definition and some properties of the energy gradient.
We provide a concrete example showing the sets of local minima for ferromagnetic Ising chains under different longitudinal field strengths in Appendix~\ref{sec:example-Ising}.

Given the definition of thermal perturbations, we next study how tractable is the problem of finding a local minimum under thermal perturbations.
In stark contrast to finding local minima under local unitary perturbations, which is classically easy, our complexity-theoretic results show that finding local minima under thermal perturbations is both quantumly easy (Section~\ref{sec:quantumly_easy}) and classically hard (Section~\ref{sec:classically_hard}) if we assume the well-accepted conjecture that not all quantum circuits can be efficiently simulated on classical computers ($\mathsf{BPP} \neq \mathsf{BQP}$).

\subsubsection{Finding local minima is easy for quantum computers}\label{sec:quantumly_easy}

In practice, quantum systems find local minima easily when coupled to a cold thermal bath. Therefore, if our definition of a local minimum properly captures how a quantum system behaves in a cold environment, we expect finding local minima to be quantumly easy.
Indeed, in the following theorem, we prove that a quantum computer can always efficiently find a local minimum of $\vH$ under thermal perturbations starting from any initial state.

\begin{theorem}[Quantumly easy to find local minima under thermal perturbations; informal] \label{thm:quantum-easy-thermal-informal}
The problem of finding an $\epsilon$-approximate local minimum of an $n$-qubit local Hamiltonian $\vH$ under thermal perturbations with inverse temperature $\beta$ and time scale $\tau$ can be solved in $\mathrm{poly}(n, 1/\epsilon, \beta, \tau)$ quantum computational time.
\end{theorem}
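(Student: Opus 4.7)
My approach is to construct and analyze a \emph{quantum thermal gradient descent} algorithm that mimics Nature's cooling process. Abstractly, the thermal Lindbladians $\{\CL_a^{\beta,\tau,\vH}\}_{a=1}^m$ together with the nonnegativity constraint $\balpha \geq 0$ define a convex cone of feasible descent directions on the state manifold, and one performs constrained gradient descent on the energy functional $\vrho \mapsto \Tr(\vH\vrho)$ within this cone, starting from any efficiently preparable initial state (e.g.\ $\ket{0}^{\otimes n}$).

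I would first establish a \emph{descent lemma}: if $\vrho$ is not an $\epsilon$-approximate local minimum, then the energy gradient $g(\vrho) \in \BR^m$ with components $g_a(\vrho) = \Tr\!\big(\CL^{\dag \beta,\tau,\vH}_a(\vH)\,\vrho\big)$ has negative-part norm $\norm{g_-(\vrho)}_2 \geq \epsilon$, where $g_{-,a} := \max(-g_a,0)$. This is immediate from $\max_{\balpha \geq 0,\,\norm{\balpha}_2 = 1}(-\balpha \cdot g) = \norm{g_-}_2$, attained at $\balpha \propto g_-(\vrho)$. Expanding $\exp\!\big(\eta \sum_a \alpha_a \CL_a\big)$ to second order in $\eta$, the remainder is $O(\eta^2 L)$ for a smoothness constant $L$ controlled by $\norm{\vH}_{\mathrm{op}}$, $\max_a \norm{\CL_a^{\beta,\tau,\vH}}$, and $m$; for local $\vH$ and local jump operators these are $\poly(n,\beta,\tau)$. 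Choosing $\eta = \Theta(1/L)$ along $\balpha \propto g_-(\vrho)$ therefore decreases $\Tr(\vH\vrho)$ by at least $\Omega(\epsilon^2/L) = 1/\poly(n,1/\epsilon,\beta,\tau)$ per step.

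Each iteration is implemented on a quantum computer using known efficient simulations of thermal Lindbladians: (i) estimate every $g_a(\vrho)$ to additive accuracy $O(\epsilon/\sqrt{m})$ by preparing copies of $\vrho$, simulating the short-time evolution $e^{t\CL_a^{\beta,\tau,\vH}}\vrho$, and measuring $\vH$ term-by-term (total cost $\poly(n,1/\epsilon,\beta,\tau)$), and (ii) apply the perturbation $\exp(\eta \sum_a \alpha_a \CL_a)$ by composing short-time thermal Lindbladian evolutions. The algorithm maintains a quantum circuit, starting from the fixed initial state, whose output is the current $\vrho$; its depth grows polynomially with the iteration count. When the estimated gradient satisfies $\norm{g_-(\vrho)}_2 \leq \epsilon$, the algorithm halts and reports $\Tr(\vO\vrho)$ by measuring the few-qubit observable $\vO$ on fresh preparations of $\vrho$. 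Since the initial energy is at most $\norm{\vH}_{\mathrm{op}} = O(n)$ and each step reduces energy by $1/\poly(n,1/\epsilon,\beta,\tau)$, the total number of iterations is $\poly(n,1/\epsilon,\beta,\tau)$.

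The main obstacle, in my view, is controlling the second-order and higher Taylor remainder of $\exp(\eta \sum_a \alpha_a \CL_a)\vrho$ uniformly in $\vrho$ with a polynomial smoothness constant, since the thermal Lindbladians $\CL_a^{\beta,\tau,\vH}$ depend nontrivially on $\vH$ through energy-filter functions of width $\sim \tau^{-1}$; one must track carefully how $\norm{\CL_a}$ scales in $\beta$ and $\tau$ to keep $L = \poly(n,\beta,\tau)$. A secondary issue is variance reduction in the gradient estimation, since the signal per jump operator is small, but this is handled by standard $\poly(1/\epsilon)$-sample arguments.
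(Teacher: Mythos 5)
Your proposal is correct and follows essentially the same route as the paper: quantum thermal gradient descent from an arbitrary initial state, a second-order Taylor (descent) lemma giving an energy decrease of order $\epsilon^2/L$ per step whenever the negative gradient is large, gradient estimation via efficient thermal-Lindbladian simulation, a halting rule certified by the sufficient condition for approximate local minima, and a counting argument bounding the number of iterations polynomially because the energy is bounded below. The obstacle you flag is resolved favorably in the paper: the normalizations $\norm{\gamma_\beta}_\infty \le 1$ and $\tfrac{1}{\sqrt{2\pi}}\norm{c_\beta}_1 \le 1$ make $\norm{\CL_a^{\beta,\tau,\vH}}_{1-1} \le 3$ independently of $\beta$ and $\tau$, so the smoothness constant is $O(\norm{\vH}_\infty^2)$ and $\beta,\tau$ enter only the simulation and estimation costs; the paper's remaining differences (coordinate-wise steps along the single most negative direction, and measuring the block-encoded gradient operator $\CL_a^{\dag\beta,\tau,\vH}(\vH)$ rather than finite differences) are inessential.
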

The formal restatement is given in Theorem~\ref{thm:quantum-easy-thermal} and is proven in Appendix~\ref{sec:proof-thm-quantum-easy-thermal}. To establish the theorem, we propose a \emph{quantum thermal gradient descent algorithm} based on the energy gradient operator.
Gradient descent is necessary when the inverse temperature $\beta$ and time scale $\tau$ are not infinite.
When $\beta = \tau = \infty$, the energy gradient $\CL_a^{\dag \infty, \infty, \vH} (\vH)\preceq 0$ is nonpositive.
In this case, the algorithm can just perform a random walk along random directions because no perturbations increase energy.
But when $\beta$ and $\tau$ are finite, the energy gradient can be positive.
To find a local minimum that is a minimum under all thermal perturbations, the algorithm needs to carefully walk in directions with negative energy gradients.

To prove the convergence of quantum thermal gradient descent, we show that every small gradient step decreases the energy.
To establish this claim, we derive analytic properties of thermal Lindbladians based on a smoothness bound on the second derivatives in~\cite{Chen2023quantumthermal}.
To implement a gradient step based on thermal perturbations, we build on a recently developed efficient quantum algorithm that simulates thermal Lindbladian evolution using a quantum circuit augmented by mid-circuit measurements~\cite{Chen2023quantumthermal}.

\subsubsection{Finding local minima is hard for classical computers}\label{sec:classically_hard}

Given that finding local minima under local unitary perturbations is classically trivial, it is natural to wonder whether finding local minima under thermal perturbations is also classically easy. What does the corresponding energy landscape look like? And what computational problems can be solved using quantum thermal gradient descent?
As our second main result, we address these questions for a class of geometrically local Hamiltonians $\{ \vH_C \}$ on two-dimensional lattices, where the ground state encodes the output of quantum circuit $C$.

\begin{theorem}[No suboptimal local minimum in $\mathsf{BQP}$-hard Hamiltonians; informal] \label{thm:no-suboptimal-local-minima-informal}
For any quantum circuit $C$ with size $\labs{C}$, all approximate local minima of the geometrically local 2D Hamiltonian $\vH_C$ under thermal perturbations with inverse temperature $\beta=\poly(|C|)$ and time scale $\tau=\poly(|C|)$ are close to the ground state.
\end{theorem}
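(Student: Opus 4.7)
The plan is to prove the contrapositive. If $\vrho$ has energy $\Tr(\vH_C \vrho)$ significantly above the ground-state energy $E_0$, I will exhibit a jump operator $\vA^a$ whose associated thermal perturbation strictly decreases the energy, contradicting the $\epsilon$-approximate local-minimum condition. Taking $\balpha \geq 0$ small in Eq.~\eqref{eq:epsilon-approximate} and expanding to first order, the local-minimum condition is equivalent to $\Tr\!\bigl(\CL^{\dag \beta, \tau, \vH_C}_{a}(\vH_C)\, \vrho\bigr) \geq -\epsilon$ for every jump index $a$. The target is a quantitative ``no suboptimal local minima'' inequality of the form
\[
\min_a \Tr\!\bigl(\CL^{\dag \beta, \tau, \vH_C}_{a}(\vH_C)\, \vrho\bigr) \;\leq\; -\,\frac{\Tr(\vH_C \vrho) - E_0}{\poly(|C|)},
\]
which combined with the local-minimum bound immediately gives $\Tr(\vH_C \vrho) - E_0 \leq \epsilon \cdot \poly(|C|)$, i.e.\ approximate local minima are close to the ground state.

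To evaluate the left-hand side, I would expand each jump $\vA^a$ in Bohr frequencies of $\vH_C$, writing $\vA^a = \sum_\omega \vA^a_\omega$ with $[\vH_C, \vA^a_\omega] = -\omega \vA^a_\omega$, so that the $\omega > 0$ components take energy eigenstates to strictly lower-energy eigenstates. Combined with the KMS detailed balance $\gamma_\beta(\omega)/\gamma_\beta(-\omega) = e^{\beta\omega}$ built into the thermal Lindbladian of \cite{Chen2023quantumthermal}, the energy derivative splits (up to $\tau^{-1}$ smoothing corrections) into a sum of cooling terms weighted by $\omega\,\gamma_\beta(\omega)\,\Tr(\vA^{a\dag}_\omega \vA^a_\omega \vrho)$ minus heating terms weighted by $\omega\,\gamma_\beta(-\omega)\,\Tr(\vA^a_\omega \vA^{a\dag}_\omega \vrho)$. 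At $\beta = \poly(|C|)$ and Bohr frequencies $\omega \geq 1/\poly(|C|)$, the cooling contributions dominate, so the problem reduces to exhibiting, for any non-ground $\vrho$, a geometrically local jump $\vA^a$ whose positive-frequency projection has substantial expectation on $\vrho$.

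This is where the modified Kitaev structure of $\vH_C$ is decisive. Writing $\vH_C = \vH_{\mathrm{clock}} + \vH_{\mathrm{in}} + \vH_{\mathrm{prop}} + \vH_{\mathrm{out}}$, with unique ground state the Feynman--Kitaev history state of the circuit $C$ and spectral gap $1/\poly(|C|)$, I would argue term-by-term that any excess energy $\Delta(\vrho) := \Tr(\vH_C \vrho) - E_0$ can be locally cooled by a geometrically local single-qubit Pauli jump: clock-register violations by an $X$ on a clock qubit, input and output penalties by a Pauli on the offending data qubit, and propagation violations by the constituent single-qubit factors of the two-qubit gate they enforce. Each such move has matching Bohr frequency $\omega \geq 1/\poly(|C|)$ and rate factor of order unity at the chosen $\beta$, so summing over the Hamiltonian terms contributing to $\Delta(\vrho)$ recovers the target inequality up to the polynomial factor.

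The main obstacle I anticipate is promoting this per-eigenstate picture to arbitrary mixed states $\vrho$ carrying coherences between energy eigenspaces of $\vH_C$. A pure-eigenstate analysis gives a clean per-transition sum, whereas a general $\vrho$ produces cross-frequency terms $\Tr(\vA^{a\dag}_\omega \vA^a_{\omega'} \vrho)$ with $\omega \neq \omega'$ that the finite time-scale $\tau$ only partially suppresses. Controlling these coherence errors below the $\Omega(\Delta(\vrho)/\poly(|C|))$ signal, uniformly over $\vrho$ and without breaking geometric locality of the jumps on the 2D lattice, will require a careful spectral coarse-graining of $\vH_C$ together with the smoothness and second-derivative bounds on $\CL^{\beta,\tau,\vH}_{a}$ established in \cite{Chen2023quantumthermal}.
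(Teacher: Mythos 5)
Your skeleton (pass to the first-order necessary condition $\Tr(\CL_a^{\dag}[\vH_C]\vrho)\ge -\epsilon$ for all $a$, then show every state far from the ground space admits a jump with gradient at most $-1/\poly$) is the same starting point as the paper's ``negative gradient condition.'' The gap is in how you propose to exhibit that jump. Your term-by-term picture — clock violations fixed by $\vX$ on a clock qubit, input penalties by ``a Pauli on the offending data qubit,'' propagation violations by ``the constituent single-qubit factors of the two-qubit gate'' — treats $\vH_C$ as if each violated term could be locally repaired, which fails for $\vH_{\prop}$ and $\vH_{\inp}$. Within the clock-valid subspace, the excited states of $\vH_{\prop}$ are not local defects but delocalized angular-momentum eigenstates $\ket{v_k}$ of an effective spin-$T/2$ operator (this is exactly what the binomial/$h_t=\sqrt{t(T-t+1)}$ modification of Kitaev's construction produces), and cooling them uses clock-register $\vZ_t$ jumps whose relevant matrix elements need a nontrivial lower bound of order $1/\sqrt{T}$; single-qubit factors of the gate $\vU_t$ do not give Bohr-frequency-matched transitions here. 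Likewise, lowering an input penalty requires the two-qubit jump $\vX_j\otimes\ketbrat{0}_{t_j}$ (whose useful matrix element carries the weight $\sum_{t<t_j}\xi_t$, which is why the identity-gate padding matters), not a bare Pauli on the data qubit. Also note $\vH_C$ has no $\vH_{\rm out}$ term by design — had it one, your per-term repair argument genuinely could not work, which is precisely the QMA obstruction the paper discusses.

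The difficulty you defer to the last paragraph — arbitrary mixed states with coherences across energy eigenspaces — is not a technical afterthought; it is the main content of the proof, and ``spectral coarse-graining plus smoothness bounds'' does not by itself close it. The paper resolves it by proving an operator inequality $-\CL^{\dag}[\vH_C]\succeq r(\vI-\vP_G)-\epsilon'\vI$ rather than a per-state scalar bound, and it establishes this perturbatively through the chain $\vH_{\rom1}=\vH_{\clock}\rightarrow\vH_{\rom2}=\vH_{\clock}+\vH_{\prop}\rightarrow\vH_{\rom3}=\vH_C$, using a monotonicity-of-gradient-under-level-splitting theorem whose error terms are suppressed by the \emph{Bohr-frequency gap} of the unperturbed Hamiltonian (not its spectral gap). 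The binomial modification exists precisely so that the restricted spectra are integer-spaced (multiples of $J_{\prop}$ and $J_{\inp}$), making those Bohr-frequency gaps $1/\poly(|C|)$ and the perturbative transfer possible. Without supplying an argument of comparable strength — some way to control the cross-frequency terms uniformly over all superpositions of excited states, with errors tied to a Bohr-frequency scale — your sketch establishes the easy commuting piece ($\vH_{\clock}$) but not the theorem. (Your final step from energy-closeness to ground-state overlap is fine, since the spectral gap is $1/\poly$ and $\epsilon$ can be taken $1/\poly$ smaller.)
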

\noindent This theorem is the most technically involved contribution of this work. The formal statement is given in Theorem~\ref{thm:no-suboptimal-local-minima-informal2} and is proven in Appendix~\ref{sec:universal-quantum-computation}. Conceptually, the landscape of these 2D Hamiltonians has a nice \emph{bowl shape}, like in convex optimization \cite{boyd2004convex}.
Therefore, performing thermal gradient descent (Theorem~\ref{thm:quantum-easy-thermal-informal}) allows us to prepare the ground state starting from an \textit{arbitrary} initial state.
For a choice of inverse temperature that grows polynomially with $|C|$, thermal fluctuations in the cooling process do not kill the power of quantum computation.

As a consequence of this energy landscape characterization, we can show that finding a local minimum under thermal perturbations is classically intractable, assuming quantum computation is more powerful than classical computation.
See Theorem~\ref{thm:classical-hard-LM-thermal} for a formal restatement and the proof.

\begin{theorem}[Classically hard to find local minima under thermal perturbations; informal] \label{thm:classical-hard-LM-thermal-informal}
Assume the widely believed conjecture that $\mathsf{BPP} \neq \mathsf{BQP}$.
The problem of finding an approximate local minimum of an $n$-qubit local Hamiltonian $\vH$ under thermal perturbations is universal for quantum computation and is thus classically hard.
\end{theorem}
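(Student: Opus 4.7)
The plan is to prove classical hardness by reduction from an arbitrary $\mathsf{BQP}$ computation, using Theorem~\ref{thm:no-suboptimal-local-minima-informal} as the main structural ingredient, and then to argue membership via Theorem~\ref{thm:quantum-easy-thermal-informal} to conclude that the problem is universal for quantum computation. So the first step I would take is to recall the standard definition of the local-minima problem: given a local Hamiltonian specified as a sum of few-qubit terms, parameters $(\beta,\tau,\epsilon)$, and a few-qubit observable $\vO$, output $\Tr(\vO\vrho)$ for some $\epsilon$-approximate local minimum $\vrho$ under thermal perturbations. To show classical hardness, it suffices to exhibit a single family of local Hamiltonians on which solving this relational problem is enough to decide an arbitrary language in $\mathsf{BQP}$.

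For the reduction, I would start from an arbitrary $\mathsf{BQP}$ language $L$, pick any polynomial-size quantum circuit $C$ deciding $L$ on input $x$ with the usual $2/3$ vs.\ $1/3$ acceptance probabilities, and instantiate the 2D geometrically local Hamiltonian $\vH_C$ from the circuit-to-Hamiltonian construction underlying Theorem~\ref{thm:no-suboptimal-local-minima-informal}. The key properties I would invoke are: (i) $\vH_C$ can be written as a sum of $\poly(|C|)$ geometrically local terms; (ii) its ground state encodes the computation of $C$ in the sense that measuring a designated output qubit yields the acceptance bit with the same bias as $C$; and (iii) by Theorem~\ref{thm:no-suboptimal-local-minima-informal}, for $\beta,\tau=\poly(|C|)$ every approximate local minimum under thermal perturbations is close in trace distance to the ground space of $\vH_C$. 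Choosing the output qubit observable as $\vO$, any approximate local minimum $\vrho$ must satisfy $|\Tr(\vO\vrho)-p_{\mathrm{accept}}(C)|$ small enough to distinguish $2/3$ from $1/3$. Hence a classical algorithm for the local-minima problem on this family yields a $\mathsf{BPP}$ algorithm for $L$, giving $\mathsf{BPP}=\mathsf{BQP}$ and contradicting the assumption.

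For the quantum-universality (upper bound) side, I would appeal directly to Theorem~\ref{thm:quantum-easy-thermal-informal}: the thermal gradient descent algorithm finds an $\epsilon$-approximate local minimum in time $\poly(n,1/\epsilon,\beta,\tau)$, so the local-minima problem is in $\mathsf{BQP}$. Combined with the hardness reduction, the problem is $\mathsf{BQP}$-complete in the promise/relational sense, which is what we mean by universal for quantum computation.

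The main obstacle I anticipate is quantitative rather than conceptual: aligning the approximation parameters across the three ingredients. Specifically, Theorem~\ref{thm:no-suboptimal-local-minima-informal} gives a bound of the form $\mathrm{dist}(\vrho,\text{ground space})\leq f(\epsilon,\beta,\tau,|C|)$, and I need to choose $\epsilon=1/\poly(|C|)$ small enough (and $\beta,\tau=\poly(|C|)$ large enough) that this distance translates, via $|\Tr(\vO(\vrho-\vrho_{\mathrm{gs}}))|\leq\norm{\vO}\cdot\mathrm{dist}(\vrho,\vrho_{\mathrm{gs}})$, into a bias smaller than the $\mathsf{BQP}$ promise gap $1/6$. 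I would also need to verify that the thermal-perturbation generators $\CL_a^{\beta,\tau,\vH_C}$ used to define approximate local minima match those used in Theorem~\ref{thm:no-suboptimal-local-minima-informal} and in the quantum thermal gradient descent algorithm, so that ``$\epsilon$-approximate local minimum'' has a single unambiguous meaning across the reduction. Once these parameter choices are made consistently, the rest of the argument is a direct composition of the two earlier theorems.
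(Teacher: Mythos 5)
Your proposal is correct and follows essentially the same route as the paper's proof of Theorem~\ref{thm:classical-hard-LM-thermal}: apply the hypothetical classical solver to the family $\vH_C$, use Theorem~\ref{thm:no-suboptimal-local-minima-informal2} to force every approximate local minimum to be close (in trace distance, via Fuchs--van de Graaf) to the unique ground state, read off a single-qubit observable, and conclude $\mathsf{BPP}=\mathsf{BQP}$. The only point you gloss over is your property (ii): the ground state is a \emph{history state}, so the output-qubit expectation is not literally ``the same bias as $C$'' but is diluted by the clock superposition; the paper recovers the circuit's output bias only because the construction pads $C$ with identity gates so that the binomial weights $\xi_t$ concentrate after the last gate acting on the output qubit (the Hoeffding tail bound in Proposition~\ref{prop:BQP-hardness-estimate-prop-VHC}), which is exactly the step your parameter-alignment remark would need to absorb.
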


There have been other proposals for solving $\mathsf{BQP}$-hard problems by finding suitable quantum states, such as designing a gapped adiabatic path for Hamiltonians to find ground states \cite{AharonovAQCUniversal}, engineering Lindbladians to have rapid dissipative evolution towards steady states \cite{verstraete2009quantum} and performing quantum phase estimation on an initial state with high ground-state overlap \cite{gharibian2022dequantizing}. 
These approaches draw inspiration from physics to motivate algorithms for solving problems on analog and digital quantum devices but do not emulate naturally occurring physical processes. In contrast, the problem of finding a local minimum is motivated by ubiquitous physical processes in Nature that produce the low-energy states studied in physics, chemistry, and materials science.
Furthermore, the local minima problem enjoys the robustness of thermodynamics: one merely needs to specify macroscopic bath quantities $\beta$ and $\tau$ without worrying about microscopic details, and the choice of jump operators can be flexible since adding more jumps (even unwanted ones) only \textit{improves} the gradient and \textit{removes} suboptimal local minima.

We now highlight the proof idea for Theorem~\ref{thm:no-suboptimal-local-minima-informal} as follows. We consider a family of geometrically local $n$-qubit Hamiltonians
$\{\vH_C\}$ in a 2D lattice defined by modifying Kitaev's circuit-to-Hamiltonian construction \cite{OliveiraTerhal, KSV02} where the ground state encodes the computation of a quantum circuit $\vU_C = \vU_T \ldots \vU_1$. In particular, we design the ground state of $\vH_C$ to be
\begin{equation}
    \sum_{t=0}^T \sqrt{\xi_t}\big(\vU_t \cdots \vU_1 \ket{0^n}\big) \otimes \ket{0^t 1^{T-t}},
    \qquad
    \text{where} \quad
    \xi_t := \frac{1}{2^T}  \binom{T}{t}.
\end{equation}
The binomial coefficient $\xi_t$ is our modification of Kitaev's construction and is chosen to ensure that desired properties hold for the spectrum and the energy gradients.\footnote{The binomial distribution ensures the Bohr-frequency gap is sufficiently large, which is central to the robustness of energy gradients under errors due to finite temperature and small perturbations.
We believe that the standard circuit-to-Hamiltonian construction also has a large Bohr-frequency gap, but the proof seems more difficult.}
Therefore, estimating local properties of the ground state of $\vH_C$ is equivalent to simulating the quantum circuit $C$, which is $\mathsf{BQP}$-hard.

Given the Hamiltonian $\vH_C$, the central challenge is to show that all of its approximate local minima under thermal perturbations are also approximate global minima.
This seems daunting to study due to the complex expression for the thermal Lindbladian $\CL_a^{\beta,\tau,\vH}$ and the doubly exponentially large space of possible quantum states.
Previous studies on circuit-to-Hamiltonian mappings mainly focused on the lowest energy states.
Here, we need to worry about potential local minima in all excited states in any superposition.
To make progress, we propose a sufficient condition in Appendix~\ref{sec:Hamil-suboptimal-local-minima} that captures the nice landscape of $\vH_C$ and rules out the presence of \textit{any} suboptimal local minimum.
Let $\vP_G(\vH)$ be the projector onto the ground state space of $\vH$. Assume there exists a unit vector $\hat{\balpha} \in \BR^m_{\geq 0}$ and $r > 0$ with
\vspace{-5pt}
\begin{equation}
\text{(negative gradient condition):} \quad\quad - \sum_{a=1}^m \hat{\alpha}_a \CL^{\dag \beta, \tau, \vH}_{a}(\vH) \succeq r(\vI - \vP_G(\vH)).
\end{equation}
This negative gradient condition implies that any state with a small ground state overlap must experience a substantially negative energy gradient, i.e., it must not be a local minimum.

To prove that $\vH_C$ satisfies the negative gradient condition, we propose a series of lemmas and mathematical techniques for characterizing energy gradients in few-qubit systems, in commuting Hamiltonians, and in subspaces of the Hamiltonian, which are stated in Appendix~\ref{sec:characterize-neg-grad-condition} and proven in Appendix~\ref{sec:monotone_gradient}.
These new techniques build on the operator Fourier transform, and the secular approximation given in \cite{Chen2023quantumthermal} for systematically handling energy uncertainty in thermal Lindbladians, which we review and adapt for our purpose in Appendix~\ref{sec:OFT}.
Using these techniques, we analyze the energy gradient of the entire system perturbatively by considering a sequence of Hamiltonians
\begin{align*}
    \vH_1\rightarrow \vH_2\rightarrow \vH_3 = \vH_C \quad &\text{with refining ground spaces}\quad \vP_1 \supset \vP_2 \supset \vP_3,\\
    &\text{where}\quad \norm{\vH_1} \gg \norm{\vH_2-\vH_1}\gg \norm{\vH_3-\vH_2}.
\end{align*}
Through these perturbations, we sequentially rule out local minima in excited states of the Hamiltonian $\vH_1, \vH_2$ and, finally, $\vH_3 = \vH_C$.
For example, we show the first Hamiltonian $\vH_1$ satisfies the negative gradient condition and that the gradient is \textit{stable} under perturbation going from $\vH_1 \rightarrow \vH_2 \rightarrow \vH_3$.
Controlling perturbations of the energy gradient is surprisingly challenging, and it is not \emph{a priori} clear why this stability property should hold due to multiple (possibly competing) energy scales, including $\beta^{-1},\tau^{-1}$, the spectral gap, and the Bohr-frequency gap.\footnote{Recall that spectral gap is the minimum non-zero difference between energy eigenvalues. Bohr-frequency gap is the minimum non-zero difference between the difference of energy eigenvalues.} The perturbative errors are not suppressed by the spectral gap of the Hamiltonian as seen in standard settings, but instead by the Bohr-frequency gap, which can be much smaller (see \cref{thm:mono_gradient_full}).
These techniques allow us to establish the robustness of energy gradients when perturbing a degenerate Hamiltonian with a sufficiently large Bohr-frequency gap.

We emphasize that while we proved that $\vH_C$ has no suboptimal local minima when $C$ is a polynomial-size quantum circuit, the same is not true for general local Hamiltonians. Finding the ground state of a local Hamiltonian is a $\mathsf{QMA}$-hard problem; hence, we do not expect it to be solved efficiently by the quantum thermal gradient descent algorithm or by any other quantum algorithm.
In the case of a quantum circuit that verifies the witness for a problem in $\mathsf{QMA}$, Kitaev's corresponding local Hamiltonian contains a term, often denoted $\vH_{\rm in}$, which specifies some of the input qubits and leaves the input qubits corresponding to the witness unspecified, and a term, often denoted $\vH_{\rm out}$, which checks whether the witness is accepted.
Due to the unspecified witness qubits in $\vH_{\rm in}$, the energy landscape contains a significant number of local minima corresponding to all possible witnesses.
Furthermore, most of these local minima correspond to rejected witnesses and are suboptimal because of the energy penalty from $\vH_{\rm out}$.
For these $\mathsf{QMA}$-complete Hamiltonians, quantum thermal gradient descent is likely to remain stuck for a long time at a suboptimal local minimum.
In $\vH_C$, the term $\vH_{\rm in}$ specifies all input qubits, and the term $\vH_{\rm out}$ is absent, which greatly simplifies the energy landscape, enabling quantum thermal gradient descent to find the global minimum efficiently.

\vspace{-5pt}
\section{Discussion}
We have good reasons for believing that scalable fault-tolerant quantum computers will be more powerful than classical computers, but for what problems of practical interest should we expect a superpolynomial quantum advantage? Quantum computers might substantially speed up the task of characterizing properties of ground states for some local Hamiltonians that arise in physics, chemistry, and materials science, but it is not clear how to identify particular problems for which such speedups occur~\cite{lee2023evaluating}. In some cases, classical methods provide good solutions, while in other cases, the problem is hard even for quantum computers. 

Here we have focused on an easier problem, namely finding local minima rather than global minima of a Hamiltonian. This problem is very well motivated physically because the task of finding a local minimum under thermal perturbations is routinely carried out by actual physical systems when in contact with a cold thermal bath. We showed that this problem is solved efficiently by a proposed quantum optimization algorithm, the \emph{quantum thermal gradient descent algorithm}. Furthermore, we showed that finding a local minimum is classically hard in general (assuming that $\mathsf{BPP}\neq \mathsf{BQP}$). Hence, the local minimum problem is a quantumly tractable alternative to the ground state problem for which superpolynomial quantum advantage can be achieved for some problem instances.  

Our main results pertain to perturbations that arise in quantum thermodynamics~\cite{lindblad1976generators, davies1979generators, breuer2002theory, mozgunov2020completely, Chen2023quantumthermal}. We noted that the energy landscape under such thermal perturbations is much nicer than the energy landscape encountered by quantum optimization algorithms relying on local unitary perturbations such as VQE~\cite{o2016scalable, grimsley2019adaptive, cerezo2021variational}; see Theorems~\ref{thm:classical-easy-local-unitary-informal}~and~\ref{thm:no-suboptimal-local-minima-informal}.
From an algorithmic design perspective, we are free to choose any perturbation. Indeed, we may modify the thermal Lindbladians to have nicer analytic properties or algorithmic costs~\cite{Chen2023quantumthermal}. While these synthetic Lindbladians may not simulate Nature, they constitute a broader class of \textit{Monte Carlo} quantum algorithms~\cite{temme2009QuantumMetropolis, Rall22thermal, Chen2023quantumthermal,ding2023single} that may improve upon Nature. Apart from Lindbladians, other families of perturbations, such as unitary perturbations accompanied by mid-circuit measurements and/or qubit resets, may also yield nice bowl-shaped energy landscapes without suboptimal local minima.
Progress on this question could lead to more efficient quantum optimization algorithms for finding low-energy states or for other applications.

There are a plethora of classical algorithms for minimizing energies of quantum systems based on classical variational ansatzes for quantum states, such as tensor networks \cite{DMRG1,DMRG2,Garcia07,Vestraete2008,SCHOLLWOCK201196,Jordan2008, Vanderstraeten2016, Corboz2016,landau2015polynomial, arad2017rigorous, abrahamsen2020polynomialtime,Stoudenmire12,Wu2019,Moore2020, Zalatel2020, Haghshenas2019, Hyatt2019} and neural network quantum states \cite{Carleo_2017, hibat2020recurrent, Deng2017, dassarma2017, Nomura2017, choo_fermionicnqs2020, Ferrari2019, Glasser2018, Choo2018, rbm_su2, morawetz2020, Luo2021}.
These classical algorithms find a local minimum within a family of states defined by the classical variational ansatz.
However, a local minimum of the energy among the set of states subject to the classical ansatz might not be a local minimum under thermal perturbations. If not, we can load the state found by the classical algorithm into a quantum computer and find a lower energy state by running the quantum thermal gradient descent algorithm.
A corollary of our main results states the following. 
A formal statement is given as Corollary~\ref{cor:quantum-adv-classical-ansatz}, and its proof is in Appendix~\ref{sec:main-thermal}.

\begin{corollary}[Quantum advantage in finding lower-energy state; informal]
    Assume that not all polynomial-size quantum circuits can be efficiently simulated on a classical computer.
    Then there are 2D geometrically local Hamiltonians such that given any classical ansatz that allows efficient estimation of single-qubit observables and an output state $\vrho^\#$ of any efficient classical algorithm that optimizes the classical ansatz, running quantum thermal gradient descent starting at $\vrho^\#$ will strictly lower the energy.
\end{corollary}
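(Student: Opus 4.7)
The plan is a direct contradiction argument built on top of Theorems~\ref{thm:quantum-easy-thermal-informal} and \ref{thm:no-suboptimal-local-minima-informal}, instantiated on the family $\{\vH_C\}$ of 2D geometrically local $\mathsf{BQP}$-hard Hamiltonians from the circuit-to-Hamiltonian construction. Fix any polynomial-time quantum circuit $C$ such that classically approximating a single-qubit observable on the ground state of $\vH_C$ is $\mathsf{BQP}$-hard, and fix the inverse temperature $\beta=\poly(|C|)$ and time scale $\tau=\poly(|C|)$ afforded by Theorem~\ref{thm:no-suboptimal-local-minima-informal}. Let $\epsilon = 1/\poly(|C|)$ be chosen small enough that any $\epsilon$-approximate local minimum of $\vH_C$ under these thermal perturbations has trace distance $o(1)$ from the ground state subspace, so that every single-qubit observable is approximated to inverse-polynomial accuracy.

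Now suppose, for contradiction, that some efficient classical algorithm $\mathcal{A}^\#$ operating over a classical ansatz with efficient single-qubit observable estimation outputs a state $\vrho^\#$ that is an $\epsilon$-approximate local minimum of $\vH_C$ under thermal perturbations. By Theorem~\ref{thm:no-suboptimal-local-minima-informal}, $\vrho^\#$ is close to the ground state of $\vH_C$, so for any single-qubit Pauli $\vO$ the value $\Tr(\vO\,\vrho^\#)$ approximates $\Tr(\vO\,\vP_G(\vH_C)/\rank\,\vP_G(\vH_C))$ to inverse-polynomial accuracy. Since single-qubit observables of $\vrho^\#$ can be computed classically in $\poly(|C|)$ time from the ansatz, this yields a $\mathsf{BPP}$ procedure for estimating single-qubit observables of the ground state of $\vH_C$, contradicting $\mathsf{BPP}\neq\mathsf{BQP}$ via the universality of $\vH_C$. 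Therefore $\vrho^\#$ is \emph{not} an $\epsilon$-approximate local minimum of $\vH_C$.

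Failure of the local-minimum condition means that there exists a thermal-perturbation direction $\hat{\balpha}\in\BR^m_{\ge 0}$ along which the energy gradient is strictly negative, i.e.\
\begin{equation}
    \sum_{a=1}^m \hat{\alpha}_a\,\Tr\!\left(\CL^{\dag\,\beta,\tau,\vH_C}_{a}(\vH_C)\,\vrho^\#\right) \;<\; -\epsilon.
\end{equation}
Invoking Theorem~\ref{thm:quantum-easy-thermal-informal}, the quantum thermal gradient descent algorithm initialized at $\vrho^\#$ takes a small step along such a negative-gradient direction and produces a state $\vrho'$ with $\Tr(\vH_C\,\vrho') < \Tr(\vH_C\,\vrho^\#)$, i.e.\ strictly lowers the energy. (To make this executable, we load $\vrho^\#$ into the quantum computer by first classically estimating its efficient-ansatz description and then preparing the state — the ansatz assumption is exactly what enables this.)

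The main obstacle I anticipate is calibrating the quantitative parameters: one must choose $\epsilon$, the finite sample sizes for classical single-qubit observable estimation, and the step size of the gradient descent so that (i) closeness-to-ground-state in Theorem~\ref{thm:no-suboptimal-local-minima-informal} implies closeness of single-qubit expectation values tight enough to violate $\mathsf{BPP}\neq\mathsf{BQP}$, and (ii) the ``strict'' energy decrease is strictly larger than the inherent estimation/implementation noise of the thermal gradient step and the loading error of $\vrho^\#$ onto the quantum device. The rest is packaging: once these polynomial-in-$|C|$ parameter choices line up, the corollary follows directly from the two cited theorems and the $\mathsf{BQP}$-completeness of $\vH_C$.
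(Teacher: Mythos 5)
Your proposal is correct and follows essentially the same route as the paper: a contradiction argument showing $\vrho^\#$ cannot be an $\epsilon$-approximate local minimum (you inline Theorem~\ref{thm:no-suboptimal-local-minima-informal} plus the $\mathsf{BQP}$-hardness of ground-state observables of $\vH_C$, where the paper simply cites Theorem~\ref{thm:classical-hard-LM-thermal}, whose proof is exactly that chain), followed by one thermal gradient descent step that strictly lowers the energy via the local-minimum gradient condition and the cooling lemma. The quantitative calibration you worry about is handled in the paper by the same $\poly(n)$ parameter choices, and the state-loading issue is moot since the corollary assumes access to samples of $\vrho^\#$.
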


\noindent The point is that we have proved the existence of local Hamiltonians for which finding a local minimum is quantumly easy and classically hard. For any such Hamiltonian, any quantum state $\vrho^\#$ found by the efficient classical algorithm will not be a local minimum; therefore, quantum thermal gradient descent will be able to descend to a state with strictly lower energy, even with just one gradient step. Furthermore, in many cases, we can evaluate the energy gradient at the classically optimized state $\vrho^\#$ by executing an efficient classical computation. A negative energy gradient confirms that a quantum algorithm starting from $\vrho^\#$ could outperform the classical algorithm. 

Many other interesting and challenging questions remain open.
Theorem~\ref{thm:no-suboptimal-local-minima-informal} shows that there are no suboptimal local minima in $\mathsf{BQP}$-hard $n$-qubit Hamiltonians for inverse temperature $\beta = \mathrm{poly}(n)$.
Do there exist $\mathsf{BQP}$-hard Hamiltonians with no suboptimal local minimum even for constant temperature, i.e., $\beta = \mathcal{O}(1)$?
If so, quantum advantage can be achieved by simply coupling a quantum system to a heat bath at a sufficiently low but constant temperature.
Our conclusion that finding local minima under thermal perturbations is classically hard relied on the complexity-theoretic conjecture that $\mathsf{BPP} \neq \mathsf{BQP}$.
Can we prove unconditionally that finding local minima is hard for classical algorithms, perhaps within a black-box oracle model?
Sometimes, when a system performs a random walk over a large plateau of suboptimal local minima for a sufficiently long time, the system escapes the plateau and reaches the true ground state (see e.g., Case 1 in \cref{sec:example-Ising}). Could we characterize when ground states can be found efficiently despite having many suboptimal local minima?
We have shown that there is a quantum advantage in finding local minima of quantum systems. Might there also be a quantum advantage in finding better local minima in classical optimization problems under some variant of quantum thermal gradient descents?

While ground state problems are hard to solve in general, many experimentally observed quantum systems efficiently relax to their ground states when cooled.
This physical phenomenon suggests that perhaps many Hamiltonians of interest in physics, chemistry, and materials science have no suboptimal local minima.
We have shown in \cref{thm:no-suboptimal-local-minima-informal} that a particular family of \textsf{BQP}-hard Hamiltonians has no suboptimal local minima under thermal perturbation.
An important future goal is to characterize broader classes of Hamiltonians that have a similarly good energy landscape.
Our proposed negative gradient condition suffices to rule out suboptimal local minima (\cref{lem:suff-cond-no-suboptimal}), but checking this condition for a general Hamilton involves highly complex calculations. It would be helpful to develop more general-purpose and efficient methods to verify this property for specified physical Hamiltonians over spins, fermions, or bosons.
We hope the ideas and techniques presented here will yield a deeper understanding of the energy landscapes of quantum systems and point toward promising opportunities for achieving quantum advantage for physically relevant problems.

\vspace{0.5em}
\subsection*{Acknowledgments:}

{ The authors thank Anurag Anshu, Ryan Babbush, Fernando Brandao, Garnet Chan, Sitan Chen, Soonwon Choi, Jordan Cotler, Jarrod R. McClean, and Mehdi Soleimanifar for valuable input and inspiring discussions.
CFC is supported by the AWS Center for Quantum Computing internship.
HH is supported by a Google PhD fellowship and a MediaTek Research Young Scholarship.
HH acknowledges the visiting associate position at Massachusetts Institute of Technology.
LZ acknowledges funding from the Walter Burke Institute for Theoretical Physics at Caltech.
JP acknowledges support from the U.S. Department of Energy Office of Science, Office of Advanced Scientific Computing Research (DE-NA0003525, DE-SC0020290), the U.S. Department of Energy, Office of Science, National Quantum Information Science Research Centers, Quantum Systems Accelerator, and the National Science Foundation (PHY-1733907). The Institute for Quantum Information and Matter is an NSF Physics Frontiers Center.  }

\newpage
\vspace{4em}
\appendix

\addappheadtotoc

\noindent 
\textbf{\LARGE{}Appendices}
\vspace{0.5em}

\section{Notations and Preliminaries}\label{sec:recap_notation}

Before we begin stating and proving our results formally in the rest of the appendices, we present some notations  used throughout the paper.
We also give a brief review of key concepts in quantum information theory that we utilize in this work.

\subsection{Notations}

This section recapitulates notations, and the reader may skim through this and return as needed.
\begin{align*}
\vH &:= \sum_i E_i \ket{\psi_i}\bra{\psi_i} &\text{Hamiltonian and the eigendecomposition}\\
\text{Spec}(\vH) &:= \{ E_i \} & \text{the spectrum of the Hamiltonian}\\
\nu \in B(\vH)&:= \{ E_i - E_j \, | \, E_i, E_j \in \mathrm{Spec}(\vH) \} &\text{the set of Bohr frequencies}\\
\Delta_\nu(\vH) &:= \min \{|\nu_1-\nu_2|: \nu_1\neq \nu_2 \in B(\vH)\} &\text{the Bohr-frequency gap}\\
\vA(t)&:= \e^{\ri \vH t} \vA \e^{-\ri \vH t} & \text{Heisenberg evolution for operator $\vA$}\\
m: & & \text{the number of jump operators}\\
\{\vA^a\}_{a=1}^m: & &\text{the set of jump operators}\\
\vrho:& &\text{the density matrix}\\
\CL: && \text{a Lindbladian in the Schrodinger Picture}\\
\beta:& &\text{the inverse temperature}\\
\hat{\vA}(\omega)\equiv \hat{\vA}_{f}(\omega) &:= \frac{1}{\sqrt{2\pi}}\int_{-\infty}^{\infty} f(t) \e^{-\ri \omega t}\vA(t)\rd t& \text{Operator Fourier transform of $\vA$ under $f$}\\
f_{\tau}(t) &:= \frac{1}{\sqrt{\tau}} \cdot \indicator(\labs{t}\le \tau/2 ) & \text{the normalized window function with width $\tau$}\\
\hat{f}(\omega)&=\frac{1}{\sqrt{2\pi}}\int_{-\infty}^{\infty}\e^{-\ri\omega t} f(t)\mathrm{d}t & \text{Fourier transform of a scalar function $f(t)$}\\
\vA_\nu&:=\sum_{E_2 - E_1 = \nu } \vP_{E_2} \vA \vP_{E_1} &\text{operator $\vA$ at exact Bohr frequency $\nu$}\\
\vI: & &\text{the identity operator}\\
\norm{f}_p&:= (\int_{-\infty}^{\infty} \labs{f(t)}^p\rd t)^{1/p} &\text{the $p$-norm of a function}\\
	\norm{\vO}&:= \sup_{\ket{\psi},\ket{\phi}} \frac{\bra{\phi} \vO \ket{\psi}}{\norm{\ket{\psi}}\cdot \norm{\ket{\phi}}} \quad &\text{the operator norm of a matrix $\vO$}\\
  	\norm{\vO}_p&:= (\Tr \labs{\vO}^p)^{1/p}\quad&\text{the Schatten p-norm of a matrix $\vO$}\\
  \norm{\CL}_{p-p} &:= \sup_{\vO} \frac{\normp{\CL[\vO]}{p}}{\normp{\vO}{p}}\quad&\text{the induced $p-p$ norm of a superoperator $\CL$}
\end{align*}
We write scalars, functions and vectors in normal font, and natural constants $\e, \ri, \pi$ are particularly in Roman font. We write matrices in bold font $\vO$ and super-operators in curly font~$\CL$.

Furthermore, we define the indicator function $\indicator(S)$ which is 1 if the statement $S$ is true and 0 otherwise. 
For any orthogonal projector $\vP$, we denote $\vP^\perp=\vI-\vP$.
We say $\vA\stackrel{\verr}{\approx}\vB$ when $\|\vA-\vB\| \le \verr$.

To simplify the notation, we often drop $f$ as a subscript $\hat{\vA}_{f}(\omega) \equiv \hat{\vA}(\omega)$, by which we have chosen the window function $f(t)=f_{\tau}(t)$.

\begin{figure}[t]
	\centering
	\includegraphics[width=0.2\textwidth]{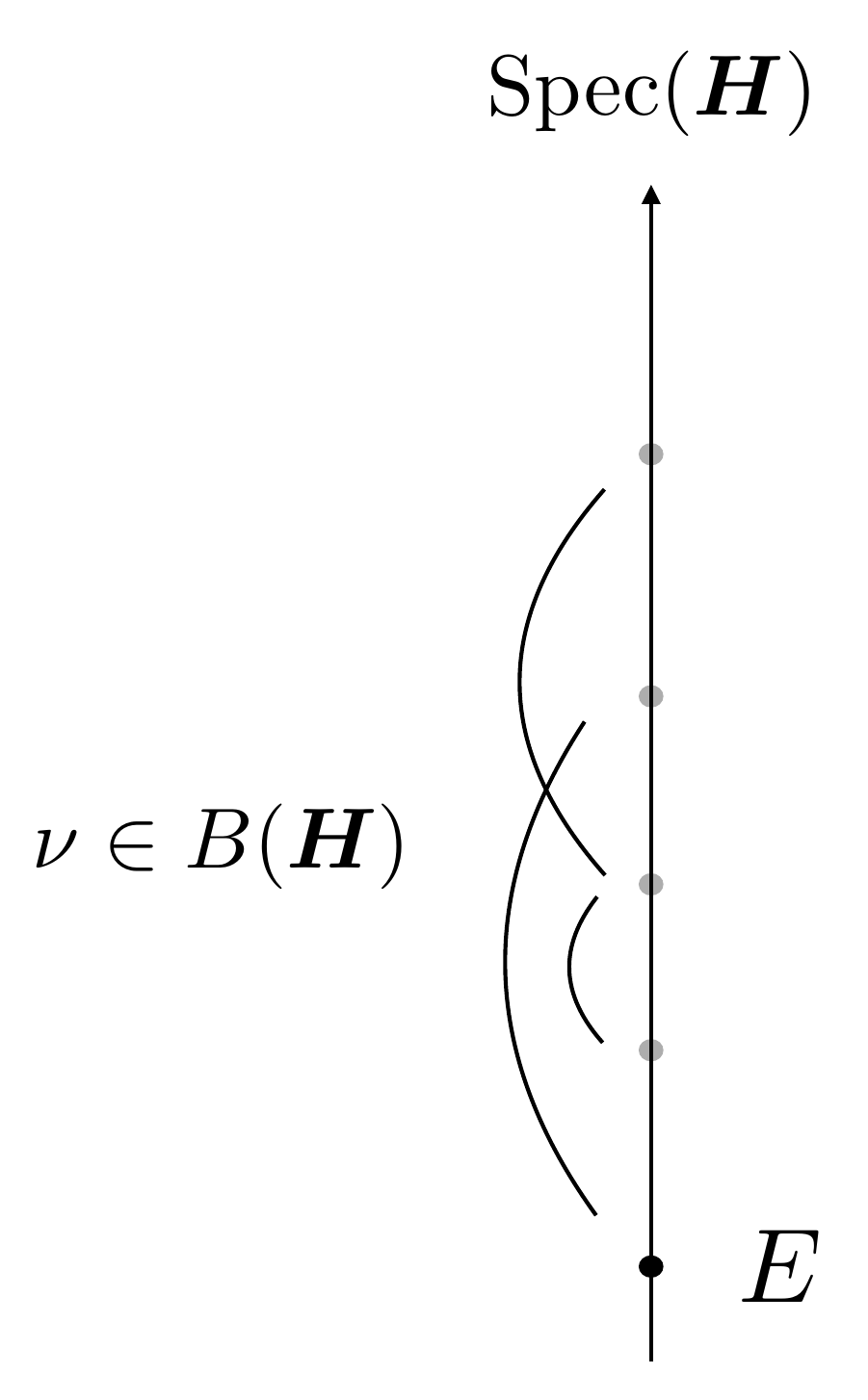}
	\caption{ The Bohr frequencies $\nu \in B(\vH) = \{ E_i - E_j \, | \, E_i, E_j \in \mathrm{Spec}(\vH) \}$ are the differences of energy (eigenvalues of the Hamiltonian $\vH$).
	}
	\label{fig:energybohr}
\end{figure}

\subsection{Lindbladians}

Completely Positive Trace-Preserving (CPTP) maps, also called quantum channels and quantum processes in the literature, correspond to all possible physical operations that could transform quantum states into other quantum states.
\textit{Lindbladian}s are infinitesimal generators of CPTP maps. That is, they map density operators to density operators (even if the map is tensored with the identity)
\begin{align}
    \CI\otimes \e^{\CL t}[\cdot] :\CS \rightarrow \CS \quad \text{for each} \quad t \ge 0.
\end{align}
In the Schrodinger Picture, a Lindbladian always has the following structure
\begin{align}
    \CL[\vrho] = \undersetbrace{\text{coherent term}}{-\ri [\vH, \vrho ]} + \sum_{j \in J} \bigg(\undersetbrace{\text{transition rate}}{\vL_j\vrho \vL_j^{\dag}} - \undersetbrace{\text{decay rate}}{\frac{1}{2} \{\vL_j^{\dag}\vL_j,\vrho\}}\bigg)
\end{align}
where the commutator is shorthanded by $[ \vA,\vB ] = \vA\vB - \vB\vA$ and the anti-commutator by $\{ \vA,\vB\} = \vA\vB+\vB\vA$. The operator $\vH$ can be any Hermitian matrix, and the set of \textit{Lindblad operators} $\{\vL_j\}_{j \in J}$ can be arbitrary as the second term always ensures trace-preserving.

\subsection{Thermal Lindbladians}
\label{sec:thermo-lindblad}
In this section, we describe the basic parameters that define a \textit{thermal Lindbladian}, i.e., Lindbladian originating from generic system-bath interactions under a Markovian, weak-coupling assumption~\cite{mozgunov2020completely}. 
Consider an $n$-qubit quantum system governed by a Hamiltonian $\vH$ and a heat bath with inverse temperature $\beta$ and time scale $\tau$. The bath interacts with the system via a set of local interaction terms
\textit{acting on the system} $\{\vA^1, \ldots, \vA^m\} = \{\vA^a\}_{a=1}^m$, where each operator $\vA^a$ acts on a constant number of qubits.
Each operator $\vA^a$ can be arbitrary ($\vA^a$ does not need to be Hermitian nor unitary), but  the set should be closed under Hermitian conjugate,
\begin{equation}
    \{ \vA^a \}_{a = 1}^m = \{ \vA^{a \dag} \}_{a = 1}^m.
\end{equation}
Each $\vA^a$ is referred to as a \emph{jump operator} and induces changes in energy (in the $n$-qubit system). For simplicity, we will enforce the following normalization for the interaction strengths,
\begin{align} \label{eq:normalization-vAa}
    \norm{\vA^{a \dag} \vA^a}_\infty \le 1 \quad \text{for each} \quad a = 1, \ldots, m.
\end{align}
For example, we may consider $m = 3n$ and $\vA^1, \ldots, \vA^m$ to be all single-qubit Pauli observables $\vX_i, \vY_i, \vZ_i$ for $i = 1, \ldots, n$, which have an interaction strength $\norm{\vA^{a \dag} \vA^a}_\infty = 1$.

The above parameters determine the thermal Lindbladian governing the equation of motion for the density operator, also referred to as the \emph{coarse-grained master equation}~\cite{mozgunov2020completely}
\begin{align} \label{eq:thermodynamics-rho}
    \frac{\rd\vrho}{\rd t} = -\ri [\vH, \vrho ] + \sum_{a=1}^m \alpha_a \mathcal{L}_{a}^{\beta, \tau, \vH}(\vrho),
\end{align}
The term $-\ri [\vH, \vrho ]$ corresponds to the Hamiltonian dynamics governed by the system Hamiltonian $\vH$, the (closed system) Schrodinger's equation. The effects of system bath interaction are captured by a weighted average of the thermal Lindbladian $\mathcal{L}_{a}^{\beta, \tau, \vH}$, defined by each local jump operator $\vA^a$, the Hamiltonian $\vH$, and parameters of the bath $\beta, \tau$.
The weighting is captured by the nonnegative vector $\balpha \in \mathbb{R}^m_{\geq 0}$.

By varying the $m$-dimensional nonnegative vector $\balpha \in \mathbb{R}^m_{\geq 0}$, the open system dynamics in Eq.~\eqref{eq:thermodynamics-rho} have the freedom to tune the interaction strengths for the jump operators.
Each $\alpha_a$ corresponds to the interaction strength of a jump operator $\vA^a$ and can be effectively absorbed into the set of jump operators by considering
\begin{equation}
    \left\{ \sqrt{\alpha_a} \vA^a \right\}_a.
\end{equation}
The interaction strength $\alpha_a \geq 0$ determines how much contribution each thermal Lindbladian $\mathcal{L}_{a}^{\beta, \tau, \vH}$ provides, and can be regarded as a \textit{probabilistic} mixture.
For example, if $\alpha_2$ is set to $0$, one removes the jump operator $\vA^2$ from the system-bath interaction.
This flexibility lets us study a (convex) set of thermal perturbations due to system-bath interaction by considering all $\balpha \in \mathbb{R}^m_{\geq 0}$.
As the system is weakly coupled to the bath, $\balpha$ is considered to be a vector with a small $\norm{\balpha}_1 = \sum_a \alpha_a$.

For each local interaction term $\vA^a$, the corresponding thermal Lindbladian $\mathcal{L}_{a}^{\beta, \tau, \vH}$ is an open system evolution with Lindblad jump operators $\{\hat{\vA}^a(\omega)\}_\omega$ for all possible energy differences $\omega \in (-\infty, \infty)$.
Each Lindblad jump operator $\hat{\vA}^a(\omega)$ is a restricted version of the system-bath interaction term $\vA^a$ that only contains transitions between eigenstates of $\vH$ whose associated eigenvalues, i.e., energies, differ by approximately $\omega$.
The inverse temperature $\beta$ sets the transition weight $\gamma_{\beta}(\omega)$, which determines the probability of occurrence for each Lindblad operator $\hat{\vA}^a(\omega)$.
For $\beta > 0$, the transition weight $\gamma_{\beta}(\omega)$ favors cooling $(\omega<0)$ over heating ($\omega>0$) transitions.
The timescale $\tau$ sets the resolution ($1/\tau$) at which $\hat{\vA}(\omega)$ identifies the energy differences between the eigenstates.
The exact form of thermal Lindbladians is relatively complex, so we defer further discussion to Appendix~\ref{sec:thermo-lindblad-detail} when needed for the full technical proof.

\section{Local minima in quantum systems}

In this appendix, we will introduce local minima in classical optimization, extend the definition to quantum systems, and formalize the problem of finding a local minimum in quantum systems.

\subsection{Local minima in classical optimization}
\label{sec:local-minima-classical}

In this subsection, we describe the definition of local minima in finite-dimensional Euclidean spaces, introduce a direct generalization to geometries with tangent spaces and exponential maps (such as circles and spheres), and discuss the concept of approximate local minima.

\subsubsection{Local minima in Euclidean space}

In classical optimization, one considers a real-valued function $h(\bx): \mathcal{X} \rightarrow \mathbb{R}$ over a domain $\mathcal{X} \subseteq \mathbb{R}^n$ consisting of $n$-dimensional vectors, and the goal is to find the global minimum of $h(\bx)$,
\begin{equation}
    \bx^* = \argmin_{\bx \in \mathcal{X}} h(\bx).
\end{equation}
Finding the global minimum is already NP-hard even when $h(\bx)$ is a quadratic function \cite{pardalos1991quadratic}.
Instead of finding a global minimum, one typically resorts to finding a local minimum $\bx^\#$, which is the minimum in a neighborhood around $\bx^\#$.
The definition of a local minimum $\bx^\#$ is that there exists a distance $\delta > 0$, such that
\begin{equation} \label{eq:exact-local-minima}
    h(\bx^\# + \balpha) \geq h(\bx^\#), \,\,\quad\text{for all}\quad \norm{\balpha} \leq \delta \quad \text{and}\quad \bx^\# + \balpha \in \mathcal{X}.
\end{equation}
Here the vector $\balpha$ is of the same dimension as $\bx^\#$.
We will refer to the above as an \emph{exact local minimum} because all points in the neighborhood have to be \textit{at least} $h(\bx^\#)$.
When there is an $\balpha$ such that $h(\bx^\# + \balpha)$ is only lower than $h(\bx^\#)$ by an extremely small value, $\bx^\#$ is still not an exact local minimum.
We will also define the approximate local minimum that relaxes this in Appendix~\ref{sec:approx-local-minima}.

\subsubsection{Local minima in general geometrical spaces}

The concept of a local minimum can be directly generalized to any geometry with tangent spaces and exponential maps, such as spheres, density matrices, unitaries, and more general Riemannian manifolds.
Consider the tangent space $T_{\bx}$ and the exponential map $\exp_\bx$ of a point $\bx$.
In a physical picture, the tangent space $T_{\bx}$ is the space consisting of all vectors $\balpha$ that describe the direction $\hat{\balpha}$ and magnitude $\norm{\balpha}$ for a particle moving at point $\bx$ on a manifold, and the exponential map $\exp_\bx$ is a function that takes in the vector $\balpha \in T_{\bx}$ encompassing the direction and magnitude and outputs the point after moving $\bx$ in the direction $\hat{\balpha}$ with a magnitude $\norm{\balpha}$.\footnote{Strictly speaking, to define the exponential map, we need to know how to ``transport'' the vector $\alpha$ along itself. Fortunately, this is natural for all cases we consider.}
To visualize these concepts, we give two warm-up examples in the following.

\paragraph{Euclidean space:}
In an $m$-dimensional Euclidean space, $\forall \bx \in \mathcal{X} = \mathbb{R}^m$, the tangent space is
\begin{equation}
    T_{\bx} = \{ \balpha \in \mathbb{R}^m \}.
\end{equation}
Given $\balpha \in T_{\bx}$, when we move $\bx$ in the direction $\hat{\balpha}$ with a magnitude $\norm{\balpha}$, we obtain
\begin{equation}
    \exp_\bx(\balpha) = \bx + \balpha.
\end{equation}
We can see that this matches our physical picture.

\paragraph{Particle moving counter-clockwise on a circle:}
As another warm-up, let us consider a unit circle $\mathcal{X} = \{ \bx \in \mathbb{R}^2 \, | \, \norm{\bx} = 1\}$ where a particle can only move counter-clockwise.
In this example, the tangent space $T_{\bx}$ of a unit vector $\bx \in \mathbb{R}^2$ with $\norm{\bx} = 1$ is the set of one-dimensional rays,
\begin{equation}
    T_{\bx} = \{ \alpha \in \mathbb{R} \,\, | \,\, \alpha \geq 0 \}.
\end{equation}
The condition $\alpha \geq 0$ comes from the constraint that the particle can only move counter-clockwise (unidirectional rather than bidirectional).
When we move $\bx$ according to $\alpha \in T_{\bx}$, we obtain
\begin{equation}
    \exp_\bx(\alpha) = \exp\left(\begin{pmatrix}
        0 & -\alpha \\ \alpha & 0
    \end{pmatrix}  \right) \bx = \begin{pmatrix}
        \cos \alpha & -\sin \alpha\\ \sin \alpha & \cos \alpha
    \end{pmatrix} \bx.
\end{equation}
The larger $\alpha$ is, the bigger the rotation is.

\vspace{1em}

Using the language of tangent spaces and exponential maps, 
an exact local minimum~$\bx^\# \in \mathcal{X}$ of a function $h$ is equivalent to the statement that there exists $\delta > 0$, such that
\begin{equation} \label{eq:exact-local-minima-general}
    h(\exp_{\bx^\#}(\balpha)) \geq h(\bx^\#), \quad \text{for all} \quad \balpha \in T_{\bx^\#}, ~ \norm{\balpha} \leq \delta.
\end{equation}
For the case of optimizing over $m$-dimensional Euclidean space, the condition of Eq.~\eqref{eq:exact-local-minima-general} becomes the same as Eq.~\eqref{eq:exact-local-minima} noting $\exp_{\bx^\#}(\balpha) = \bx^\# + \balpha$.
However, the condition can be quite different when the tangent space changes.
For example, consider a $2$-dimensional Euclidean space and the function $h(\bx) = \norm{\bx}^2$.
In general, there is only one exact local minimum $\bx^\# = 0$.
However, if the particle can only move to the right, the tangent space becomes $T_\bx = \{ \balpha \in \mathbb{R}^2 \,\, | \,\, \balpha_1 \geq 0 \}$ and every point $\bx$ with $\bx_1 \geq 0$ and $\bx_2 = 0$ is an exact local minimum.
Modifying the tangent space changes the definition of neighborhood. Hence, the set of local minima would be changed accordingly.
We will consider the most suitable norm $\norm{\balpha}$ for each context.

\subsubsection{Approximate local minima}
\label{sec:approx-local-minima}

While global minima are computationally hard to find, exact local minima are not much easier.
If there is an $\balpha$ such that $h(\exp_{\bx^\#}(\balpha))$ is lower than $h(\bx^\#)$ by an extremely small value, $\bx^\#$ is not consider to be an exact local minimum.
The requirement to resolve an extremely small value in exact local minima leads to the fact that finding an exact local minimum is still computationally hard~\cite{ahmadi2022complexity}.
Furthermore, exact local minima are very sensitive to small perturbations to the function~$h$.
Therefore, it is desirable to define approximate local minima to promote computational efficiency and robustness to small perturbations.
We consider the following principle for defining $\epsilon$-approximate local minima: \emph{if a function $h^*$ is very close to $h$, then an exact local minimum of $h^*$ is an approximate local minimum of $\tilde{h}$.}
The formal definition is given below.

\begin{definition}($\epsilon$-approximate local minima)
    Given a space $\mathcal{X}$ with tangent spaces $T_x$ and exponential maps $\exp_x$ for all $x \in \mathcal{X}$, and a function $h$: $\mathcal{X} \to \mathbb{R}$.
    $\bx^\#$ is an $\epsilon$-approximate local minimum of $h$ if $\bx^\#$ is the exact local minimum of some function $h^*$, where $\Delta(\bx) := h^*(\bx) - h(\bx)$ satisfies
    \begin{align}
        \left|\Delta(\bx)\right| \leq \epsilon &\quad \text{for each}\quad \bx \in \mathcal{X}, & \text{($\epsilon$-bounded)},\\
        \left|\Delta(\exp_{\bx^\#}(\balpha))\right| \leq \epsilon \norm{\balpha} &\quad \text{for each}\quad \balpha \in T_{\bx^\#}, & \text{($\epsilon$-Lipschitz around $x^\#$)}.
    \end{align}
    A $(\epsilon=0)$-approximate local minimum of $h$ is an exact local minimum of $h$.
\end{definition}

\noindent Under this definition, $\bx^\#$ is an approximate local minimum of $h$ if there is an $\bx$ in the neighborhood of $\bx^\#$ such that $h(\bx)$ is lower than $h(\bx^\#)$ by an extremely small value.
We also give the following equivalent characterization based on looking at the local neighborhood.

\begin{proposition}(An equivalent characterization of $\epsilon$-approximate local minima)
    $\bx^\# \in \mathcal{X}$ is an $\epsilon$-approximate local minimum of the function $h$ if and only if there exists a distance $\delta > 0$,
    \begin{equation} \label{eq:local-minima-general}
        h(\exp_{\bx^\#}(\balpha)) \geq h(\bx^\#) - \epsilon \norm{\balpha}\quad\text{for each}\quad \balpha \in T_{\bx^\#}, \norm{\balpha} \leq \delta,
    \end{equation}
    i.e., all the neighboring points can at most be $\epsilon \norm{\balpha}$ lower than the point $\bx^\#$.
\end{proposition}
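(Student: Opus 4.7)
The plan is to prove both directions of the equivalence directly from the definitional conditions on the perturbation $\Delta := h^* - h$, treating the forward implication as essentially unpacking the definition and reserving the real work for constructing an explicit witness $h^*$ in the backward direction.

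For the forward direction, the first step I would take is to extract the observation $\Delta(\bx^\#) = 0$ by evaluating the $\epsilon$-Lipschitz condition at $\balpha = \mathbf{0}$ (using $\exp_{\bx^\#}(\mathbf{0}) = \bx^\#$), which pins down $h^*(\bx^\#) = h(\bx^\#)$. Then for any $\balpha$ in the neighborhood on which $\bx^\#$ is an exact local minimum of $h^*$, one writes
\begin{equation*}
h(\exp_{\bx^\#}(\balpha)) \;=\; h^*(\exp_{\bx^\#}(\balpha)) - \Delta(\exp_{\bx^\#}(\balpha)) \;\geq\; h^*(\bx^\#) - \epsilon \norm{\balpha} \;=\; h(\bx^\#) - \epsilon \norm{\balpha},
\end{equation*}
which is exactly Eq.~(A.7).

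For the backward direction, I would explicitly construct a witness $h^*$ from the assumed inequality. Letting $\delta' = \min(\delta, 1)$ and $U = \{\exp_{\bx^\#}(\balpha) : \balpha \in T_{\bx^\#}, \, \norm{\balpha} \leq \delta'\}$, I set $h^*(\bx) = \max\{h(\bx), h(\bx^\#)\}$ for $\bx \in U$ and $h^*(\bx) = h(\bx)$ otherwise, so that $\Delta(\bx) = \max\{0, h(\bx^\#) - h(\bx)\}$ on $U$ and $\Delta \equiv 0$ elsewhere. Exact local minimality of $h^*$ at $\bx^\#$ is immediate since $h^*(\bx^\#) = h(\bx^\#)$ and $h^*(\bx) \geq h(\bx^\#)$ throughout $U$. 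For global boundedness, the assumed inequality yields $\Delta(\bx) \leq \epsilon \norm{\balpha} \leq \epsilon \delta' \leq \epsilon$ for any $\bx = \exp_{\bx^\#}(\balpha) \in U$, and $\Delta = 0$ outside $U$. The Lipschitz bound around $\bx^\#$ is analogous for $\norm{\balpha} \leq \delta'$; for $\norm{\balpha} > \delta'$, either $\exp_{\bx^\#}(\balpha) \notin U$ and $\Delta = 0$, or the point coincides with $\exp_{\bx^\#}(\balpha')$ for some $\norm{\balpha'} \leq \delta'$, in which case $\Delta \leq \epsilon \norm{\balpha'} \leq \epsilon \delta' \leq \epsilon \norm{\balpha}$.

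The main subtlety I anticipate is potential non-injectivity of the exponential map (already visible in the unit-circle example, where large $\balpha$ wraps around to points also reachable by small $\balpha'$). This forces the construction of $\Delta$ to be intrinsic in the point $\bx$ rather than parameterized by $\balpha$, and the Lipschitz verification to invoke a shortest-preimage argument for large $\balpha$. No continuity of $h^*$ is imposed by the definition, so the piecewise construction is harmless; and the edge case $\epsilon = 0$ forces $\Delta \equiv 0$ in both directions, correctly recovering the convention that an exact local minimum is a $0$-approximate local minimum.
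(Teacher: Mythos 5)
Your proof is correct, and its overall structure matches the paper's: the forward direction unpacks the definition using $\Delta(\bx^\#)=0$ (from the Lipschitz condition at $\balpha=\mathbf{0}$), and the backward direction builds an explicit witness $h^*$ by modifying $h$ only on the image of the small ball. The one genuine difference is the witness itself. The paper takes $\Delta(\bx)=h(\bx^\#)-h(\bx)$ on the neighborhood, i.e.\ it clamps $h^*$ to the constant value $h(\bx^\#)$ there; as written, this only controls the positive part of $\Delta$, and $|\Delta|$ can exceed $\epsilon$ at nearby points where $h$ rises above $h(\bx^\#)+\epsilon$, so the $\epsilon$-bounded and $\epsilon$-Lipschitz checks in the paper are stated a bit loosely. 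Your choice $\Delta(\bx)=\max\{0,\,h(\bx^\#)-h(\bx)\}$ keeps $\Delta\geq 0$ and bounded by $\epsilon\norm{\balpha}$ directly from the assumed inequality, so both conditions follow cleanly, at the small cost of verifying exact local minimality via $h^*\geq h(\bx^\#)$ on $U$ rather than constancy. Your explicit handling of non-injectivity of the exponential map (defining $\Delta$ intrinsically in $\bx$ and using a shortest-preimage bound for large $\balpha$) is also something the paper leaves implicit; it applies equally to both constructions and is a worthwhile clarification.
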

\begin{proof}
    For the ``only if'' statement, we recall the definition of an exact local minimum that there exists $\delta > 0$, such that $h^*(\exp_{\bx^\#}(\balpha)) - h^*(\bx^\#) \geq 0$ for all $\balpha \in T_{\bx^\#}$ and $\norm{\balpha} \leq \delta$.
    From the $\epsilon$-Lipschitz condition around $x^\#$ for the function $\Delta(\bx)$, we have
    \begin{align}
        0 &\leq h^*(\exp_{\bx^\#}(\balpha)) - h^*(\bx^\#) = h(\exp_{\bx^\#}(\balpha)) - h(\bx^\#) + \Delta(\exp_{\bx^\#}(\balpha))\\
        &\leq h(\exp_{\bx^\#}(\balpha)) - h(\bx^\#) - \epsilon \norm{\balpha}.
    \end{align}
    This concludes the ``only if'' statement.

    For the ``if'' statement, consider $\delta$ to be of at most $1$ and let
    \begin{equation}
        \Delta(\bx) := \begin{cases}
            h(\bx^\#) - h(\bx), & \text{if $\bx = \exp_{\bx^\#}(\balpha)$ for some $\balpha \in T_{\bx^\#}, \norm{\balpha} \leq \delta$},\\
            0, & \text{otherwise}.
        \end{cases}
    \end{equation}
    We have $\bx^\#$ is an exact local minimum for $h^*(\bx) := h(\bx) + \Delta(\bx)$.
    Furthermore, because $h(\bx^\#) - h(\exp_{\bx^\#}(\balpha)) \leq \epsilon \norm{\balpha} \leq \epsilon$, both $\epsilon$-bounded and $\epsilon$-Lipschitz around $x^\#$ are satisfied by $\Delta(\bx)$.
\end{proof}

\subsection{Defining local minima in quantum systems}
\label{sec:local-minima-quantum}

To define local minima, we need to consider the domain $\mathcal{X}$ of elements $\bx \in \mathcal{X}$, the optimization function $h(\bx)$, the tangent space $T_\bx$ consisting of all possible directions and magnitudes to move an element $\bx$, where $\balpha \in T_\bx$ encompass the direction $\hat{\balpha}$ and the magnitude $\norm{\balpha}$, and the exponential map $\exp_{\bx}(\balpha)$ that describes the resulting element after moving $\bx$ under $\balpha$.

In the following, we present two settings.
The first setting in Appendix~\ref{sec:def-low-temp-thermal} considers general quantum states that can evolve under thermodynamic processes induced by interacting with a low-temperature heat bath. This setting defines local minima under thermal perturbations.
The second setting in Appendix~\ref{sec:def-local-unitary} considers pure quantum states that can move under any unitary generated by a set of local Hermitian operators (e.g., all two-qubit Pauli observables $\vP_i \otimes \vQ_j,$ where $\vP, \vQ \in \{\vX, \vY, \vZ\}$).
This setting defines local minima under local unitary perturbations.

\subsubsection{Definition based on thermal perturbations}
\label{sec:def-low-temp-thermal}

In quantum mechanics, the central optimization problem considers a function $h$ defined by the Hamiltonian $\vH$ of an $n$-qubit quantum system,
\begin{equation}
    h(\vrho) = \Tr(\vH \vrho),
\end{equation}
which is the average energy of an $n$-qubit quantum state $\vrho$.
The ground states $\vrho^{(g)}$ of $\vH$ are the global minima of the optimization over $h(\vrho) = \Tr(\vH \vrho)$ in the quantum state space, i.e., the set of density operators (trace-one positive semidefinite matrices),
\begin{equation}
    \mathcal{S}_{2^n} := \{\vrho \in \mathbb{C}^{2^n \times 2^n} \,\, | \,\, \vrho^\dag = \vrho, \, \vrho \succeq 0, \, \Tr(\vrho) = 1 \}.
\end{equation}
When the quantum system is placed in a heat bath with inverse temperature $\beta \in [0, \infty]$, time scale $\tau \in [0, \infty]$, and system-bath interactions based on $m$ local jump operators\footnote{A local operator $\vA^a$ acts on $\mathcal{O}(1)$ qubits, but the set of qubits that $\vA^a$ acts on may not be geometrically close.} $\vA^1, \ldots, \vA^m$, the system dynamics are effectively described by the \textit{thermal Lindbladian}s $\mathcal{L}^{\beta, \tau,\vH}_{a}$,
\begin{align} \label{eq:thermo-lindblad-rho-again}
    \frac{\rd\vrho(t)}{\rd t} = -\ri [\vH, \vrho] + \sum_{a = 1}^m \alpha_a \mathcal{L}^{\beta, \tau,\vH}_{a}[\vrho],
\end{align}
where $\alpha_a \geq 0$ for each $a$.
After time $t$, the initial quantum state $\vrho$ will evolve to
\begin{equation}
    \vrho(t) = \exp\left( -\ri t [\vH, \cdot] + \sum_{a = 1}^m t \alpha_a \mathcal{L}^{\beta, \tau,\vH}_{a} \right)(\vrho).
\end{equation}
Each term $\mathcal{L}^{\beta, \tau,\vH}_{a}$ is the thermal Lindbladian associated with a local jump operator $\vA^a$ (recall that local operator $\vA^a$ acts on a constant number of qubits). See Appendix~\ref{sec:thermo-lindblad} for a brief review of thermal Lindbladians, and Appendix~\ref{sec:thermo-lindblad-detail} for the exact form of thermal Lindbladians.

The coefficient $\alpha_a \geq 0$ corresponds to the interaction strength of each jump operator $\vA^a$. 
As $\alpha_a < 0$ is equivalent to reversing time, we cannot have $\alpha_a < 0$ since thermodynamic processes are irreversible in general.
Different interaction strength vector $\balpha$ corresponds to a different system-bath interaction, and the thermodynamics could be different.
Because $\balpha$ describes the probability of each jump occurring, the natural norm $\norm{\balpha}$ for the interaction strength vector $\balpha$ is $\norm{\balpha}_1$.
We denote $\hat{\balpha} = \balpha / \norm{\balpha}_1$ as the unit vector.

The thermodynamics equation in Eq.~\eqref{eq:thermo-lindblad-rho-again} consists of a fast-rotating term $-\ri [\vH, \cdot]$ due to the system Hamiltonian $\vH$ that keeps the energy $\Tr(\vH \vrho)$ invariant and the thermal perturbation term $\sum_{a} \alpha_a \CL_a^{\beta, \tau, \vH}$ due to the heat bath that cools the system.
Because $-\ri [\vH, \cdot]$ keeps the energy constant, only the thermal perturbation term $\sum_{a} \alpha_a \CL_a^{\beta, \tau, \vH}$ is relevant for minimizing the energy $h(\vrho) = \Tr(\vH \vrho)$.
For notational simplicity, we will only consider contributions from the thermal perturbations and absorb the $t$ dependence in $t \alpha_a$ into $\alpha_a$ since $\balpha$ is an arbitrary nonnegative vector.
Together, the thermal perturbation on $\vrho$ due to a heat bath with inverse temperature $\beta$, time scale~$\tau$, and system-bath interactions generated by $\{\vA^a\}_a$ can be written as
\begin{equation}
    \vrho \rightarrow \exp\left( \sum_{a = 1}^m \alpha_a \mathcal{L}^{\beta, \tau,\vH}_{a} \right)(\vrho)
\end{equation}
for a nonnegative vector $\balpha \in \mathbb{R}^m_{\geq 0}$ that combines the interaction strength vector and time $t$.

A dictionary between all the relevant functions and variables for optimizing $\Tr(\vH \vrho)$ in $n$-qubit quantum systems under a heat bath with inverse temperature $\beta$ and time scale $\tau$ and optimizing $h(\bx)$ in an $n$-dimensional Euclidean space is given as follows.
\begin{align}
    \mathcal{X} = \mathbb{R}^{n} & \quad \leftrightarrow \quad \mathcal{X} = \mathcal{S}_{2^n}, & \text{(domain)}\\
    \bx \in \mathbb{R}^{n} & \quad \leftrightarrow \quad \vrho \in \mathcal{S}_{2^n}, & \text{(an element)}\\
    h(\bx) &\quad \leftrightarrow \quad h(\vrho) = \Tr(\vH \vrho),& \text{(optimization function)}\\
    T_\bx = \{\balpha \in \mathbb{R}^n\} &\quad \leftrightarrow \quad \{ \balpha \in \mathbb{R}^{m}_{\geq 0} \},& \text{(tangent space)}\\
    \exp_{\bx}(\balpha) = \bx + \balpha &\quad \leftrightarrow \quad \exp\left( \sum_{a = 1}^m \alpha_a \mathcal{L}^{\beta, \tau,\vH}_{a} \right)(\vrho) & \text{(exponential map)}.
\end{align}
The formal definition of tangent spaces and exponential maps via Lindbladians is given below.

\begin{definition}[Tangent spaces of quantum states in a heat bath]
    Consider an $n$-qubit quantum state~$\vrho$, an $n$-qubit Hamiltonian $\vH$, $m$ local jump operators $\{\vA^a\}_{a = 1}^m$, and parameters $\beta, \tau \ge 0$. The tangent space $T^{\beta, \tau, \vH, \{\vA^a\}_{a = 1}^m}_{\vrho}$ under a heat bath with an inverse temperature $\beta$, a time scale $\tau$, and system-bath interactions generated by $\{\vA^a\}_a$ is defined as
    \begin{equation}
        T^{\beta, \tau, \vH, \{\vA^a\}_{a = 1}^m}_{\vrho} := \left\{ \balpha \in \mathbb{R}^{m}_{\geq 0} \right\},
    \end{equation}
    which is independent of $\beta, \tau, \vH, \{\vA^a\}_{a = 1}^m$.
    The exponential map $\exp^{\beta, \tau, \vH, \{\vA^a\}_{a}}_{\vrho}$ is defined as
    \begin{equation}\label{eq:expmap-thermal}
        \exp^{\beta, \tau, \vH, \{\vA^a\}_{a}}_{\vrho}(\balpha) := \exp\left( \sum_{a = 1}^m \alpha_a \mathcal{L}^{\beta, \tau,\vH}_{a} \right)(\vrho).
    \end{equation}
\end{definition}
With the definition of tangent spaces and exponential maps, we can define $\epsilon$-approximate local minimum similar to the classical case in Eq.~\eqref{eq:local-minima-general}.
We consider the natural choice of $\norm{\cdot}_1$ for the nonnegative vector $\balpha$ encompassing the probability of each jump. Our results remain qualitatively the same for other reasonable vector norms, such as Euclidean norm $\norm{\cdot}_2$ or $\ell_p$ norm $\norm{\cdot}_p$.

\begin{definition}[Local minima under thermal perturbations]\label{defn:quantum_local}
    Given an $n$-qubit Hamiltonian $\vH$, $m$ local jump operators $\{\vA^a\}_{a = 1}^m$, and parameters $\beta, \tau \ge 0$, an $n$-qubit state ${\vrho} \in \mathcal{S}_{2^n}$ is an $\epsilon$-approximate local minimum of $\vH$ under thermal perturbations with an inverse temperature $\beta$, a time scale $\tau$, and system-bath interactions generated by $\{\vA^a\}_a$ if there is a $\delta > 0$ such that
    \begin{equation} \label{eq:local-minima-mixed}
        \Tr\left(\vH \exp^{\beta, \tau, \vH, \{\vA^a\}_{a}}_\vrho(\balpha)\right) \geq \Tr(\vH \vrho) - \epsilon \norm{\balpha}_1 \quad \text{for each}\quad \balpha \in \mathbb{R}^{m}_{\geq 0}, \norm{\balpha}_1 \leq \delta,
    \end{equation}
    i.e., all the neighboring points can at most be $\epsilon \norm{\balpha}_1$ lower than the point $\vrho$.
\end{definition}

A central concept we will be using for characterizing local minima under thermal perturbations is the energy gradient. 
The energy gradient at an $n$-qubit state $\vrho$ under thermal perturbation is determined by the following state-independent operator,
\begin{equation}
	\text{(energy gradient operator):} \quad\quad \sum_{a=1}^m \CL^{\dag \beta, \tau, \vH}_{a}(\vH) \hat{\be}_a,
\end{equation}
where we denote $\CL^{\dag \beta, \tau, \vH}_{a}$ to be the Hermitian conjugate of $\CL^{\beta, \tau, \vH}_{a}$.
The energy gradient operator is a vector of Hermitian observables. The terminology stems from the fact that evaluating the energy gradient operator on a state $\vrho$ gives the energy gradient at the state $\vrho$,
\begin{equation}
    \Tr\left(\vH \exp\left( -\ri \sum_{a=1}^m \alpha_a \CL^{\beta, \tau, \vH}_{a} \right) (\vrho) \right) = \Tr(\vH \vrho) + \balpha \cdot \sum_{a=1}^m \Tr\left(\CL^{\dag \beta, \tau, \vH}_{a}(\vH) \vrho\right) \hat{\be}_a + \mathcal{O}(\norm{\alpha}^2).
\end{equation}
In Appendix~\ref{sec:local-minima-thermal-prop}, we provide more discussions about the energy gradient.

Thermal perturbations depend on how the quantum system is interacting with the heat bath.
Local minima defined above are local minima of the Hamiltonian $\vH$ under thermal perturbations induced by all system-bath interactions generated by the jump operators $\{\vA^a\}_a$.

\begin{remark}[Thermodynamics at local minima]
Given a specific system-bath interaction, inverse temperature $\beta$, and time scale $\tau$, there could still be thermodynamics at a local minimum.
For example, when $\beta$ is not infinitely large, a local minimum could still move to other higher-energy states due to thermal fluctuations. Another example is when the local minimum is on a large and flat plateau, then the local minimum can still perform a random walk on the plateau.
\end{remark}

\subsubsection{Definition based on local unitary perturbations}
\label{sec:def-local-unitary}

Inspired by variation quantum eigensolvers \cite{grimsley2019adaptive, cerezo2021variational}, another natural definition for tangent spaces, exponential maps, and local minima considers pure states and local unitary perturbation.
Given $m$ local Hermitian operators $\vh_1, \ldots, \vh_m$ with $\norm{\vh_a}_\infty = 1$.
Here, local means that each operator $\vh_a$ only acts on a constant number of qubits.
We can consider all possible local unitary perturbations formed by performing time evolution under a Hamiltonian generated by the set $\{\vh_a\}_a$ of local Hermitian operators,
\begin{equation}
    \sum_{a=1}^m \alpha_a \vh_a,
\end{equation}
for any $\balpha \in \BR^m$.
Since the time evolution under a Hamiltonian is always reversible, there is no additional requirement that $\balpha$ must be in the nonnegative orthant.
Similar to thermal perturbations, we will absorb the contribution of evolution time $t$ into the arbitrary vector $\balpha$.
Consider the following dictionary between all the relevant functions and variables for optimizing $\bra{\psi} \vH \ket{\psi}$ in $n$-qubit pure state $\ket{\psi}$ under local unitary perturbation and optimizing $h(\bx)$ in an $n$-dimensional Euclidean space.
\begin{align}
    \mathcal{X} = \mathbb{R}^{n} & \quad \leftrightarrow \quad \mathcal{X} = \{ \ket{\psi} \in \mathbb{C}^{2^n} | \braket{\psi | \psi} = 1 \}, & \text{(domain)}\\
    \bx \in \mathbb{R}^{n} & \quad \leftrightarrow \quad \ket{\psi} \in \mathbb{C}^{2^n}, \braket{\psi | \psi} = 1, & \text{(an element)}\\
    h(\bx) &\quad \leftrightarrow \quad h(\ket{\psi}) = \bra{\psi} \vH \ket{\psi},& \text{(optimization function)}\\
    T_\bx = \{\balpha \in \mathbb{R}^n\} &\quad \leftrightarrow \quad \{ \balpha \in \mathbb{R}^m \},& \text{(tangent space)}\\
    \exp_{\bx}(\balpha) = \bx + \balpha &\quad \leftrightarrow \quad \exp\left( \sum_{a=1}^m \alpha_a \vh_a \right)\ket{\psi} & \text{(exponential map)}.
\end{align}
The tangent space and the exponential map can be formally defined as follows.

\begin{definition}[Tangent spaces of pure quantum states under local unitaries]
    Given an $n$-qubit pure quantum state $\ket{\psi}$ and $m$ local Hermitian operators $\{\vh_a\}_a$. The tangent space $T_\psi$ is defined as
    \begin{equation}
        T^{\{\vh_a\}_a}_\psi := \mathbb{R}^{m},
    \end{equation}
    and the exponential map $\exp_\psi$ is defined as
    \begin{equation}
        \exp^{\{\vh_a\}_a}_\psi(\balpha) := \exp\left( - i \sum_{a} \alpha_a \vh_a \right) \ket{\psi}.
    \end{equation}
\end{definition}

When the set $\{\vh_a\}_a$ is the set of all two-qubit Pauli observables, the tangent space $T_\psi$ and exponential map $\exp_\psi$ define a Riemannian manifold that connects all $n$-qubit pure states through unitary evolutions.
This Riemannian manifold is the state version of the manifold over quantum unitaries defined in a seminal work on the geometry of quantum computation \cite{nielsen2006quantum}.

The optimization function is $h(\ket{\psi}) = \bra{\psi} \vH \ket{\psi}$, the average energy of the Hamiltonian $\vH$ for the pure state $\ket{\psi}$.
Performing gradient descent on this pure state Riemannian manifold to minimize $\bra{\psi} \vH \ket{\psi}$ is equivalent to performing adaptive variational quantum optimization \cite{grimsley2019adaptive} to minimize the Hamiltonian $\vH$.
The local minima can be defined similarly as before.
To be consistent with local minima under thermal perturbations, we consider the $\ell_1$-norm $\norm{\balpha}_1$.
All of our results remain qualitatively the same for other reasonable vector norms, such as the Euclidean norm or $\ell_p$ norm.

\begin{definition}[local minima under local unitary perturbations] \label{def:local-minima-pure}
    Given an $n$-qubit Hamiltonian $\vH$, and $m$ local Hermitian operators $\{\vh_a\}_a$. A pure state $\ket{\psi}$ is an $\epsilon$-approximate local minimum of $\vH$ under local unitary perturbations generated by $\{\vh_a\}_a$ if
    \begin{equation} \label{eq:local-minima-pure}
         \exp^{\{\vh_a\}_a}_\psi(\balpha)^\dag \vH \exp^{\{\vh_a\}_a}_\psi(\balpha) \geq \bra{\psi} \vH \ket{\psi} - \epsilon \norm{\balpha}_1, \,\, \text{for each}\quad \balpha \in T^{\{\vh_a\}_a}_{\psi}, \norm{\balpha}_1 \leq \delta.
    \end{equation}
    for some $\delta > 0$.
\end{definition}

\noindent This is also a valid definition of local minima in quantum systems.
However, we will later show that the optimization landscape defined in this way always has a very large barren plateau.
Hence, the problem of finding a local minimum defined in this way will be a trivial problem.

\subsection{The problem of finding a local minimum in quantum systems}

With these definitions of local minima, we can define the task of finding a local minimum in a straightforward manner.
To formulate the problem to have purely classical output, we focus on outputting a simple property, such as the expectation value of a local observable $\vO$, of an approximate local minimum $\vrho$.
Furthermore, we only consider Hamiltonians $\vH$ that can be written as a sum of local observables, commonly referred to as local Hamiltonians in the literature.

While there can be many approximate local minima, we consider the algorithm to be successful if the algorithm outputs the property of any one of the local minima.

\begin{definition}[Finding a local minimum under low-temperature thermal perturbations] \label{def:thermal-LM}
    Given error $\epsilon > 0$, inverse temperature $\beta \geq 0$, time scale $\tau \geq 0$, an $n$-qubit local Hamiltonian $\vH$, $m$ local jump operators $\{\vA^a\}_a$, and a local observable $\vO$ with $\norm{\vO}_\infty \leq 1$.
    Output a real value $v \in [-1, 1]$, such that $v$ is $\epsilon$-close to $\Tr(\vO \vrho)$ for an $\epsilon$-approximate local minimum $\vrho$ of $\vH$ under thermal perturbations with an inverse temperature $\beta$, a time scale $\tau$, and system-bath interactions generated by $\{\vA^a\}_a$.
\end{definition}

\begin{definition}[Finding a local minimum under local unitary perturbations] \label{def:unitary-LU}
    Given error $\epsilon > 0$, an $n$-qubit local Hamiltonian~$\vH$, $m$ local Hermitian operators $\{\vh_a\}_a$, and a local observable $\vO$ with $\norm{\vO}_\infty \leq 1$.
    Output a real value $v \in [-1, 1]$, such that $v$ is $\epsilon$-close to $\bra{\psi} \vO \ket{\psi}$ for an $\epsilon$-approximate local minimum $\ket{\psi}$ of the Hamiltonian $\vH$ under local unitary perturbations generated by $\{\vh_a\}_a$.
\end{definition}

\noindent Ideally, we would like the two problems to be quantumly easy and classically hard.
However, we will show that only the first problem based on thermal perturbations is both quantumly easy and classically hard.
The second problem based on local unitary perturbation is classically trivial due to the presence of too many local minima in an exponentially large barren plateau.

\subsection{The importance of irreversible perturbations}
\label{sec:irreversible-perturbations}

Suppose that the perturbations $\mathcal{P}_\balpha$ parameterized by a polynomial-size vector $\balpha$ are reversible $\mathcal{P}_{- \balpha} = \mathcal{P}_{\balpha}^{-1}$ and are smooth.
The following argument shows that the energy landscape must have doubly-exponentially many approximate local minima.

Given any $n$-qubit state $\vrho$ and any $n$-qubit Hamiltonian $\vH$ with $\norm{\vH}_\infty = \mathrm{poly}(n)$.
Consider a gradient descent algorithm that starts at $\vrho$.
Because $\norm{\vH}_\infty = \mathrm{poly}(n)$, after a polynomial number of steps $T$, the gradient descent algorithm can find an approximate local minimum $\vrho^\#$ of $\vH$,
\begin{equation}
    \vrho^\# = \mathcal{P}_{\balpha_T} \ldots \mathcal{P}_{\balpha_1} (\vrho).
\end{equation}
From the reversibility of the perturbations, we have
\begin{equation} 
    \vrho = \mathcal{P}_{-\balpha_1} \ldots \mathcal{P}_{-\balpha_T} (\vrho^\#).
\end{equation}
Consider a covering net $\mathcal{N}$ for the set of approximate local minima of $\vH$.
Because the packing net for all $n$-qubit states is of size
\begin{equation}
    \exp(\exp(\Omega(n))),
\end{equation}
and the covering net for the perturbations is of size
\begin{equation}
    \exp(\mathrm{poly}(n)),
\end{equation}
we have the following relationship,
\begin{equation}
    \exp(T \cdot \mathrm{poly}(n)) |\mathcal{N}| = \exp(\exp(\Omega(n)))
\end{equation}
Hence, we can see that
\begin{equation}
    |\mathcal{N}| = \exp(\exp(\Omega(n)) - \mathrm{poly}(n)) = \exp(\exp(\Omega(n)))
\end{equation}
since $\exp(\Omega(n))$ grows much faster than $\mathrm{poly}(n)$.

\section{Characterizing local minima under local unitary perturbations}
\label{sec:barren-plateau-energy-unitary}

Now that we have defined local minima in quantum systems, we present a set of results characterizing properties of local minima in quantum systems in this and the next appendix.
These results provide a further understanding of local minima in quantum systems and are essential to establishing the main theorems given in Appendix~\ref{sec:complexity-main}.

We begin by looking at the energy landscape defined by local unitary perturbations.
We will prove a central lemma portraying the energy landscapes defined by local unitary perturbations for pure quantum systems.
The lemma states that most pure quantum states $\ket{\psi}$ are local minima under local unitary perturbations with an expectation value close to $\Tr(\vO) / 2^n = \Tr(\vO (\vI / 2^n))$ for any local observable~$\vO$.
Furthermore, the proof shows that the gradient at a randomly sampled local minimum $\ket{\psi}$ is exponentially close to zero.

Lemma~\ref{lem:Haar-concentration} and its proof provide the following physical picture.
In the energy landscape defined by local unitary perturbations, there is an overwhelmingly large barren plateau consisting of local minima with almost equal energy as their neighbors.
Furthermore, these local minima behave like the maximally mixed state $\vI / 2^n$, which makes the task of predicting properties for a local minimum under local unitary perturbations classically trivial to solve.

Given $m$ local Hermitian operators $\{\vh_a\}_a$ and $\balpha \in \mathbb{R}^m$.
By applying Taylor's theorem in Prop.~\ref{prop:Taylor} to the one-dimensional function
\begin{equation}
g(t) = \exp^{\{\vh_a\}_a}_\psi(t \hat{\balpha})^\dag \vH \exp^{\{\vh_a\}_a}_\psi(t \hat{\balpha})
\end{equation}
for $0 \leq t \leq \norm{\balpha}_1$ and $\hat{\balpha} = \balpha / \norm{\balpha}_1$, we can obtain the following proposition.

\begin{proposition}[Taylor's theorem for local unitary perturbations] \label{prop:taylor-unitary}
    Given an $n$-qubit Hamiltonian $\vH$, $\balpha \in \mathbb{R}^n$, $m$ local Hermitian operators $\{\vh_a\}_a$, and an $n$-qubit pure state $\ket{\psi}$. We have
    \begin{align}
        \exp_\psi(\balpha)^\dag \vH \exp_\psi(\balpha) &= \bra{\psi} \vH \ket{\psi} - \ri \bra{\psi} \left[\vH, \sum_{a=1}^m \alpha_a \vh_a \right] \ket{\psi} \nonumber \\
        &\qquad - \frac{1}{2} \sum_{a=1}^m \sum_{a'=1}^m \alpha_a \alpha_{a'} \exp_\psi(\eta \hat\balpha)^\dag [[\vH, \vh_a], \vh_{a'}] \exp_\psi(\eta \hat\balpha),
    \end{align}
    for some $0 \leq \eta \leq \norm{\balpha}_1$.
\end{proposition}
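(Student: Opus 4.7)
The plan is to reduce the multivariate statement to a one-dimensional Taylor expansion with Lagrange remainder along the ray from $\ket{\psi}$ to $\exp_\psi(\balpha)$, exactly as hinted in the paragraph preceding the proposition. Concretely, I would set $\vG := \sum_{a=1}^m \hat\alpha_a \vh_a$ so that $\exp_\psi(t\hat\balpha) = U(t)\ket{\psi}$ with $U(t) := \exp(-\ri t \vG)$, and define
\begin{equation}
g(t) := \exp_\psi(t\hat\balpha)^\dag\, \vH\, \exp_\psi(t\hat\balpha) = \bra{\psi} U(t)^\dag \vH U(t) \ket{\psi}.
\end{equation}
Applying Prop.~\ref{prop:Taylor} (the one-variable Taylor theorem with Lagrange remainder of order two) to $g$ on the interval $[0, \norm{\balpha}_1]$ yields, for some $\eta \in [0, \norm{\balpha}_1]$,
\begin{equation}
g(\norm{\balpha}_1) = g(0) + g'(0)\,\norm{\balpha}_1 + \tfrac{1}{2}\, g''(\eta)\,\norm{\balpha}_1^2,
\end{equation}
and the left-hand side equals $\exp_\psi(\balpha)^\dag \vH \exp_\psi(\balpha)$ since $\norm{\balpha}_1 \hat\balpha = \balpha$.

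Next I would compute the derivatives of $g$ directly. Because $U(t)$ is analytic with $U'(t) = -\ri \vG\, U(t)$, differentiation commutes with the inner product, giving
\begin{equation}
g'(t) = -\ri \bra{\psi} U(t)^\dag [\vH,\vG] U(t)\ket{\psi}, \qquad g''(t) = -\bra{\psi} U(t)^\dag [[\vH,\vG],\vG] U(t) \ket{\psi}.
\end{equation}
Evaluating at $t=0$ and using bilinearity of the commutator in $\vG = \sum_a \hat\alpha_a \vh_a$,
\begin{equation}
g'(0)\,\norm{\balpha}_1 = -\ri \bra{\psi}\Bigl[\vH,\ \sum_{a} \hat\alpha_a \norm{\balpha}_1 \vh_a\Bigr]\ket{\psi} = -\ri \bra{\psi}\Bigl[\vH,\ \sum_{a} \alpha_a \vh_a\Bigr]\ket{\psi},
\end{equation}
which matches the first-order term in the statement. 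Similarly, expanding the nested commutator as $[[\vH,\vG],\vG] = \sum_{a,a'} \hat\alpha_a \hat\alpha_{a'} [[\vH,\vh_a],\vh_{a'}]$ and absorbing $\norm{\balpha}_1^2 \hat\alpha_a \hat\alpha_{a'} = \alpha_a \alpha_{a'}$ into the double sum reproduces the quadratic remainder in the stated form, with $U(\eta)\ket{\psi} = \exp_\psi(\eta \hat\balpha)$.

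There is no real obstacle here: this is a routine application of single-variable Taylor expansion, and the only mild technical point is justifying that differentiation passes through the inner product and that $g \in C^2([0,\norm{\balpha}_1])$. Both follow immediately because $U(t) = \exp(-\ri t \vG)$ is entire in $t$ as a matrix exponential of a bounded Hermitian operator, so $g$ is in fact real-analytic and all manipulations above are justified termwise.
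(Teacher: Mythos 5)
Your proposal is correct and is essentially identical to the paper's argument: the paper likewise obtains the proposition by applying the one-variable Taylor theorem with Lagrange remainder (Prop.~\ref{prop:Taylor}) to $g(t) = \exp_\psi(t\hat\balpha)^\dag \vH \exp_\psi(t\hat\balpha)$ on $[0,\norm{\balpha}_1]$. Your explicit derivative computations $g'(t) = -\ri\bra{\psi}U(t)^\dag[\vH,\vG]U(t)\ket{\psi}$ and $g''(t) = -\bra{\psi}U(t)^\dag[[\vH,\vG],\vG]U(t)\ket{\psi}$, together with bilinearity of the commutator and rescaling by $\norm{\balpha}_1$, are exactly the (omitted) steps behind the paper's statement, signs included.
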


\begin{lemma}[A random state is a local minimum under local unitary perturbations; Restatement of Lemma~\ref{lem:barren-plateau-random-state-informal}] \label{lem:Haar-concentration}
    Consider a large problem size $n$.
    Given error $\epsilon \geq 1 / 2^{n/4}$, an $n$-qubit local Hamiltonian~$\vH$ with $\norm{\vH}_\infty = \mathrm{poly}(n)$, $m$ local Hermitian operators $\{\vh_a\}_a$ with $m = \mathrm{poly}(n)$ and $\norm{\vh_a}_\infty = 1$, and a local observable $\vO$ with $\norm{\vO}_\infty \leq 1$.
    With probability at least $1 - 1 / 2^{2^{n/4}}$, an $n$-qubit state $\ket{\psi}$ sampled uniformly at random is an $\epsilon$-approximate local minimum of $\vH$ under local unitary perturbations generated by $\{\vh_a\}_a$ and $\bra{\psi} \vO \ket{\psi}$ is $\epsilon$-close to $\Tr(\vO) / 2^n$.
\end{lemma}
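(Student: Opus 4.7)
\vspace{1em}
\noindent\textbf{Proof proposal for Lemma~\ref{lem:Haar-concentration}.}
The plan is to reduce everything to Haar-measure concentration (Levy's lemma) for three kinds of Hermitian observables evaluated in the Haar-random pure state $\ket\psi$: (i) the single observable $\vO$; (ii) each commutator $\ri[\vH,\vh_a]$, which is Hermitian because $[\vH,\vh_a]^\dag=-[\vH,\vh_a]$; and (iii) a Taylor remainder that will be handled deterministically rather than probabilistically. For each Hermitian $\vM$ with $\|\vM\|_\infty\le L$, the map $\ket\psi\mapsto\bra\psi\vM\ket\psi$ has Lipschitz constant $O(L)$ on the unit sphere in $\mathbb{C}^{2^n}$, and its Haar-mean equals $\Tr(\vM)/2^n$. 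Levy's lemma then yields a Gaussian tail
\begin{equation}
    \Pr\!\left[\,\big|\bra\psi\vM\ket\psi-\tfrac{\Tr(\vM)}{2^n}\big|\ge t\,\right]\;\le\;2\exp\!\left(-\Omega\!\left(\tfrac{2^n\,t^2}{L^2}\right)\right).
\end{equation}

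Concretely, I would first apply this to $\vM=\vO$ with $L=1$ and $t=\epsilon$, which gives the conclusion $|\bra\psi\vO\ket\psi-\Tr(\vO)/2^n|\le\epsilon$ except with probability $\exp(-\Omega(2^n\epsilon^2))$. Next, for each $a=1,\dots,m$, I would apply it to $\vM_a=\ri[\vH,\vh_a]$, noting $\|\vM_a\|_\infty\le 2\|\vH\|_\infty\|\vh_a\|_\infty=\mathrm{poly}(n)$ and $\Tr(\vM_a)=0$, with the target deviation $t=\epsilon/2$. Union-bounding over the $m=\mathrm{poly}(n)$ values of $a$ together with the event for $\vO$ costs only a polynomial factor, so the combined failure probability is at most
\begin{equation}
    (m+1)\cdot \exp\!\left(-\Omega\!\left(\tfrac{2^n\epsilon^2}{\mathrm{poly}(n)}\right)\right)\;\le\;2^{-2^{n/4}},
\end{equation}
where the final inequality uses the hypothesis $\epsilon\ge 2^{-n/4}$, which forces the exponent to be $\Omega(2^{n/2}/\mathrm{poly}(n))\gg 2^{n/4}$.

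On the event that all $m+1$ deviations are small, I use Proposition~\ref{prop:taylor-unitary} to lower-bound $\exp_\psi(\balpha)^\dag\vH\exp_\psi(\balpha)$. The first-order term is bounded by Hölder:
\begin{equation}
    \Big|\sum_{a}\alpha_a\,\bra\psi\,\ri[\vH,\vh_a]\,\ket\psi\Big|\;\le\;\|\balpha\|_1\cdot\max_a|\bra\psi\,\ri[\vH,\vh_a]\,\ket\psi|\;\le\;\tfrac{\epsilon}{2}\|\balpha\|_1.
\end{equation}
The second-order Taylor remainder is controlled deterministically, using $\|[[\vH,\vh_a],\vh_{a'}]\|_\infty\le 4\|\vH\|_\infty=\mathrm{poly}(n)$ and the unitarity of $\exp_\psi(\eta\hat\balpha)$, giving
\begin{equation}
    \Big|\sum_{a,a'}\alpha_a\alpha_{a'}\,\exp_\psi(\eta\hat\balpha)^\dag[[\vH,\vh_a],\vh_{a'}]\exp_\psi(\eta\hat\balpha)\Big|\;\le\;4\|\vH\|_\infty\,\|\balpha\|_1^2.
\end{equation}
Choosing $\delta=\epsilon/(8\|\vH\|_\infty)=\epsilon/\mathrm{poly}(n)$ ensures this second-order contribution is at most $(\epsilon/2)\|\balpha\|_1$ whenever $\|\balpha\|_1\le\delta$, and combining the two bounds yields the required local-minimum inequality~\eqref{eq:local-minima-pure}.

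The main obstacle is merely parameter bookkeeping: one must confirm that the lower bound $\epsilon\ge 2^{-n/4}$ is strong enough that the Gaussian tail in Levy's lemma, after absorbing the polynomial Lipschitz constants coming from $\|\vH\|_\infty=\mathrm{poly}(n)$ and the polynomial-size union bound over the jump directions, still dominates the doubly-exponentially small target probability $2^{-2^{n/4}}$. The threshold $\epsilon\ge 2^{-n/4}$ is chosen precisely to make $\epsilon^2\cdot 2^n\ge 2^{n/2}$, which leaves a comfortable margin over $2^{n/4}$ even after dividing by $\mathrm{poly}(n)$, so the argument closes.
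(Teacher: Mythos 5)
Your proposal is correct, and it follows the same overall strategy as the paper (concentration of the first-order term for a Haar-random state, a deterministic bound on the second-order Taylor remainder via Proposition~\ref{prop:taylor-unitary}, then a union bound and a choice of $\delta = \epsilon/\mathrm{poly}(n)$), but it executes the concentration step differently. The paper expands $\vH$, $\vO$, and the $\vh_a$ in the Pauli basis, invokes the Pauli-observable concentration bound of Hayden--Leung--Winter on each of the polynomially many Paulis in the relevant set $S$, and then reassembles the commutator expectations by Cauchy--Schwarz; this is where the locality of $\vH$, $\vO$, and $\vh_a$ enters, since it keeps $|S|=\mathrm{poly}(n)$. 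You instead apply Levy's lemma directly to the full Hermitian observables $\vO$ and $\ri[\vH,\vh_a]$, using only the operator-norm bounds $\norm{\vO}_\infty\le 1$ and $\norm{[\vH,\vh_a]}_\infty\le 2\norm{\vH}_\infty=\mathrm{poly}(n)$ together with $\Tr([\vH,\vh_a])=0$ and $\E_\psi\bra{\psi}\vO\ket{\psi}=\Tr(\vO)/2^n$. This buys you two things: the Pauli bookkeeping disappears, and the locality hypotheses on $\vH$, $\vO$, $\vh_a$ become unnecessary (only the stated norm bounds matter), so your version is marginally more general; what the paper's route buys is that it needs only the off-the-shelf Pauli concentration statement rather than the general Lipschitz form of Levy's lemma (and the mean-versus-median bookkeeping that comes with it, which in your setting is an innocuous $O(\norm{\vM}_\infty/2^{n/2})$ correction). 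Your parameter accounting is sound: with $\epsilon\ge 2^{-n/4}$ the exponent $\Omega(2^n\epsilon^2/\mathrm{poly}(n))=\Omega(2^{n/2}/\mathrm{poly}(n))$ comfortably dominates $2^{n/4}$ even after the $\mathrm{poly}(n)$ union bound, and your $\delta=\epsilon/(8\norm{\vH}_\infty)$ closes the second-order term exactly as in the paper, which likewise bounds that term deterministically.
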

\begin{proof}
    From Lemma III.5 in \cite{hayden2006aspects}, for any Pauli operator $\vQ \in \{\vI, \vX, \vY, \vZ\}^{\otimes n} \setminus \{\vI^{\otimes n}\}$ and a random $n$-qubit pure state $\ket{\psi}$ sampled uniformly, we have
    \begin{equation}
        \Pr_{\ket{\psi}}\left[ \left| \bra{\psi} \vQ \ket{\psi} \right| > \delta \right] \leq 2 \exp\left(- \frac{2^n \delta^2}{10} \right),
    \end{equation}
    for any $0 \leq \delta \leq 1$. Let $\delta = 1 / 2^{n/3}$. Then, we have
    \begin{equation}
        \Pr_{\ket{\psi}}\left[ \left| \bra{\psi} \vQ \ket{\psi} \right| > \frac{1}{2^{n/3}} \right] \leq 2 \exp\left(- \frac{2^{n/3}}{10} \right).
    \end{equation}
    Recall that any Hermitian operator has a unique Pauli decomposition:
    \begin{align}
        \vH &= \sum_{\vP \in \{\vI, \vX, \vY, \vZ\}^{\otimes n} } \alpha_{\vP}(\vH) \vP, \\
        \vO &= \sum_{\vP \in \{\vI, \vX, \vY, \vZ\}^{\otimes n} } \alpha_{\vP}(\vO) \vP, \\
        \vh^a &= \sum_{\vP \in \{\vI, \vX, \vY, \vZ\}^{\otimes n} } \alpha_{\vP}(\vh^a) \vP,
    \end{align}
    where the Pauli coefficients $\alpha_{\vP}(\cdot)$ satisfy
    \begin{align}
        \sum_{\vP \in \{\vI, \vX, \vY, \vZ\}^{\otimes n} } \alpha_{\vP}^2(\vH) &\leq \norm{\vH}_\infty^2 = \mathrm{poly}(n), \label{eq:alphavHbd} \\
        \sum_{\vP \in \{\vI, \vX, \vY, \vZ\}^{\otimes n} } \alpha_{\vP}^2(\vO) &\leq \norm{\vO}_\infty^2 = 1. \label{eq:alphavObd} \\
        \sum_{\vP \in \{\vI, \vX, \vY, \vZ\}^{\otimes n} } \alpha_{\vP}^2(\vh_a)  &\leq \norm{\vh_a}_\infty^2 = 1. \label{eq:alphavhabd}
    \end{align}
    Let $S_0$ be the set of Pauli operator $\vP$ with non-zero Pauli coefficients $\alpha_{\vP}$ in the Pauli decompositions of either $\vH$ or $\vO$,
    \begin{equation}
        S_0 = \left\{ \vP \in \{\vI, \vX, \vY, \vZ\}^{\otimes n} \setminus \{\vI^{\otimes n}\} \,\, | \,\, \alpha_{\vP}(\vH) \neq 0 \,\, \text{or} \,\, \alpha_{\vP}(\vO) \neq 0\right\},
    \end{equation} 
    and $S_E$ be the set of Pauli operator $\vP$ with non-zero Pauli coefficients $\alpha_{\vP}$ in the Pauli decompositions of $\vh_a$ for some $a$,
    \begin{equation}
        S_E = \left\{ \vP \in \{\vI, \vX, \vY, \vZ\}^{\otimes n} \setminus \{\vI^{\otimes n}\} \,\, | \,\, \exists 1 \leq a \leq m, \alpha_{\vP}(\vh_a) \neq 0\right\}.
    \end{equation} 
    Because $\vH$ is a local Hamiltonian and $\vO$ is a local observable, we have $|S_0| = \mathrm{poly}(n)$.
    Because $\vh_a$ is a local observable, we have $|S_E| = \mathcal{O}(m) = \mathrm{poly}(n)$.
    We then define,
    \begin{align}
        S &= \Big\{ \vP' \in \{\vI, \vX, \vY, \vZ\}^{\otimes n} \setminus \{\vI^{\otimes n}\} \,\, | \,\, \exists \vQ \in S_0, \vP \in S_E, \Tr(\vP' [\vQ, \vP]) \neq 0 \Big\} \cup S_0.
    \end{align}
    Because $|S_0| = \mathrm{poly}(n)$ and $[\vQ, \vP]$ is another Pauli observable up to a phase, we have $|S| \leq |S_E| |S_1| = \mathrm{poly}(n)$.
    The union bound yields the following probabilistic statement,
    \begin{equation}
        1-\Pr_{\ket{\psi}}\left[ \left| \bra{\psi} \vQ \ket{\psi} \right| < \frac{1}{2^{n/3}}, \,\, \forall \vQ \in S \right] \leq 2 |S| \exp\left(- \frac{2^{n/3}}{10} \right) \leq \frac{\mathrm{poly}(n)}{2^{2^{n/3} / 10}} < \frac{1}{2^{2^{n/4}}},
    \end{equation}
    where the last inequality holds for any large $n$ since $2^{2^{n/3} / 10 - 2^{n/4}}$ grows much faster than any polynomial of $n$.
    We condition on the event for the random state $\ket{\psi}$ that
    \begin{equation}
        \left| \bra{\psi} \vQ \ket{\psi} \right| < \frac{1}{2^{n/3}} \quad \text{for all}\quad \vQ \in S,
    \end{equation}
    referred to as event $E^*$. We can obtain the following from Cauchy-Schwarz inequality,
    \begin{align}
        \left| \bra{\psi} [\vH, \vh_a] \ket{\psi} \right| &\leq \sum_{\vQ, \vP \in \{\vI, \vX, \vY, \vZ\}^{\otimes n} } \left| \alpha_{\vQ}(\vH) \right| \left| \alpha_{\vP}(\vh_a) \right| \left| \bra{\psi} [\vQ, \vP] \ket{\psi} \right| \nonumber\\
        &\leq \sqrt{\sum_{\vQ, \vP \in \{\vI, \vX, \vY, \vZ\}^{\otimes n}} \alpha_{\vQ}^2(\vH) \alpha_{\vP}^2(\vh_a)} \sqrt{ \frac{|S|}{2^{2n/3}} } \leq \frac{\mathrm{poly}(n) }{2^{n/3}}, \label{eq:vHvhi-commutator}
    \end{align}
    where the second inequality uses the conditioning on event $E^*$ and $[\vQ, \vP] \neq 0 \implies \vQ, \vP \neq \vI^{\otimes n}$, and the third inequality uses $|S| = \mathrm{poly}(n)$ and Eq.~\eqref{eq:alphavHbd},~\eqref{eq:alphavhabd}. 
    Similarly, we also have 
    \begin{equation} \label{eq:vHvhi-double-commutator}
        \left| \bra{\psi} [[\vH, \vh_a], \vh_{a'}] \ket{\psi} \right| \leq \frac{\mathrm{poly}(n)}{2^{n/3}}.
    \end{equation}
    Using Eq.~\eqref{eq:alphavObd} instead of Eq.~\eqref{eq:alphavHbd}, we can similarly obtain
    \begin{equation} \label{eq:psivOpsi-err}
        \left| \bra{\psi} \vO \ket{\psi} - \alpha_{I^{\otimes n}}(\vO) \right| = \left| \bra{\psi} \vO \ket{\psi} - \frac{\Tr(O)}{2^n} \right| \leq \frac{\mathrm{poly}(n) }{2^{n/3}} < \frac{1}{2^{n/4}} \leq \epsilon
    \end{equation}
    for any large problem size $n$ since $2^{n/3}$ grows much faster than any polynomial in $n$.
    We now show that $\ket{\psi}$ is an $\epsilon$-approximate local minimum of $\vH$ under local unitary perturbations.
    To establish this claim, from Def.~\ref{def:local-minima-pure}, we need to prove that
    \begin{equation}
        \exp_\psi(\balpha)^\dag \vH \exp_\psi(\balpha) \geq \bra{\psi} \vH \ket{\psi} - \epsilon \norm{\balpha}_1, \,\, \text{for each}\quad \balpha \in T_{\psi}, \norm{\balpha}_1 \leq \delta
    \end{equation}
    for some $\delta > 0$.
    Recall from Lemma~\ref{prop:taylor-unitary} based on Taylor's theorem (Prop.~\ref{prop:Taylor}), we have
    \begin{align}
        \exp_\psi(\balpha)^\dag \vH \exp_\psi(\balpha) &= \bra{\psi} \vH \ket{\psi} - \ri \bra{\psi} \left[\vH, \sum_{a=1}^m \alpha_a \vh_a \right] \ket{\psi} \nonumber\\
        &\qquad - \frac{1}{2} \sum_{a=1}^m \sum_{a'=1}^m \alpha_a \alpha_{a'} \exp_\psi(\eta \hat\balpha)^\dag [[\vH, \vh_a], \vh_{a'}] \exp_\psi(\eta \hat\balpha),
    \end{align}
    for some $0 \leq \eta \leq \norm{\balpha}_1$.
    For the linear term, from Eq.~\eqref{eq:vHvhi-commutator} bounding $|\bra{\psi}[\vH, \vh_a] \ket{\psi}|$, we have
    \begin{equation}
        \left| -\ri \bra{\psi} \left[\vH, \sum_{a=1}^m \alpha_a \vh_a \right] \ket{\psi} \right| \leq \sum_{a=1}^m |\alpha_a| \frac{\mathrm{poly}(n) }{2^{n/3}} \leq \frac{ \mathrm{poly}(n) }{2^{n/3}} \norm{\balpha}_1,
    \end{equation}
    where the last inequality uses $m = \mathrm{poly}(n)$.
    For the quadratic residual term, we have
    \begin{equation}
        \left| \frac{1}{2} \sum_{a=1}^m \sum_{a'=1}^m \alpha_a \alpha_{a'} \exp_\psi(\eta \hat\balpha)^\dag [[\vH, \vh_a], \vh_{a'}] \exp_\psi(\eta \hat\balpha) \right| \leq  2 \norm{\balpha}_1^2 \norm{\vH}_\infty = \mathrm{poly}(n) \norm{\balpha}^2_1.
    \end{equation}
    Together, we can combine the inequalities to get
    \begin{align}
        \exp_\psi(\balpha)^\dag \vH \exp_\psi(\balpha) &\geq \bra{\psi} \vH \ket{\psi} - \frac{ \mathrm{poly}(n) }{2^{n/3}} \norm{\balpha}_1 - \mathrm{poly}(n) \norm{\balpha}_1^2 \nonumber\\
        &\geq \bra{\psi} \vH \ket{\psi} - 0.5 \epsilon \norm{\balpha}_1 - \mathrm{poly}(n) \norm{\balpha}_1^2,
    \end{align}
    where the second inequality holds for any large problem size $n$ since $\epsilon \geq 1 / 2^{n / 4}$ decays much slower than $\mathrm{poly}(n) / 2^{n/3}$.
    For any $\norm{\balpha} < \delta := 0.5 \epsilon / \mathrm{poly}(n)$, we have
    \begin{equation}
        \exp_\psi(\balpha)^\dag \vH \exp_\psi(\balpha) \geq \bra{\psi} \vH \ket{\psi} - 0.5 \epsilon \norm{\balpha}_1 - 0.5 \epsilon \norm{\balpha}_1 = \bra{\psi} \vH \ket{\psi} - \epsilon \norm{\balpha}_1,
    \end{equation}
    which shows that $\ket{\psi}$ is an $\epsilon$-approximate local minimum of $\vH$ under local unitary perturbations.
    Finally, because the event $E^*$ occurs with probability at least $0.99$, by combining Eq.~\eqref{eq:psivOpsi-err} and the above, we establish the claim that, with high probability, a random $n$-qubit state $\ket{\psi}$ sampled uniformly is an $\epsilon$-approximate local minimum of $\vH$ under local unitary perturbations and $\bra{\psi} \vO \ket{\psi}$ is $\epsilon$-close to $\Tr(O) / 2^n$.
\end{proof}

\section{Characterizing local minima under thermal perturbations}
\label{sec:local-minima-thermal-prop}

In this appendix, we characterize local minima under thermal perturbations.
In particular, we will focus on the gradients of the energy landscape, conditions of local minima, and conditions on the Hamiltonian $\vH$ that ensure approximate local minima are approximate global minima, i.e., there are no suboptimal local minima in the energy landscape.

\subsection{Energy gradients}

The energy landscape is much more nontrivial when defined under thermal perturbations.
We can study the energy landscape by looking at the energy gradients.
Recall the exponential map $\exp^{\beta, \tau, \vH, \{\vA^a\}_a}_\vrho$ in Eq.~\eqref{eq:expmap-thermal} and consider the one-dimensional function
\begin{equation}
g(t) = \Tr\left(\vH \exp^{\beta, \tau, \vH, \{\vA^a\}_a}_\vrho(t \hat{\balpha}) \right) = \Tr\left(\vH \exp\left( \sum_{a = 1}^m t \hat{\alpha}_a \mathcal{L}^{\beta, \tau,\vH}_{a} \right)(\vrho) \right)
\end{equation}
for $0 \leq t \leq \norm{\balpha}_1, \hat{\balpha} = \balpha / \norm{\balpha}_1$.
We have the following derivatives,
\begin{align}
    \frac{dg}{dt} (t) &= \Tr\left(\vH \sum_a \hat{\alpha}_a \CL_a^{\beta, \tau, \vH}\left[ \exp\left( \sum_{a = 1}^m t \hat{\alpha}_a \mathcal{L}^{\beta, \tau,\vH}_{a} \right)(\vrho) \right] \right),\\
    \frac{d^2g}{dt^2} (t) &= \Tr\left(\vH \sum_a \sum_{a'} \hat{\alpha}_a \hat{\alpha}_{a'} \CL_{a'}^{\beta, \tau, \vH} \left[ \CL_a^{\beta, \tau, \vH}\left[ \exp\left( \sum_{a = 1}^m t \hat{\alpha}_a \mathcal{L}^{\beta, \tau,\vH}_{a} \right)(\vrho) \right] \right] \right).
\end{align}

Recall Taylor's theorem with the Lagrange form of the remainder from standard single-variate calculus.
By applying Taylor's theorem in Prop.~\ref{prop:Taylor} to $g(t)$, we can obtain Prop.~\ref{prop:taylor-thermal}.

\begin{proposition}[Taylor's theorem] \label{prop:Taylor}
Let $g(t): \mathbb{R} \rightarrow \mathbb{R}$ be twice differentiable on the open interval between $0$ and $t$ and $g'(t)$ continuous on the closed interval between $0$ and $t$. Then
\begin{equation}
    g(t) = g(0) + g'(0) t + \frac{1}{2} g''(\eta) t^2,
\end{equation}
for some real number $\eta$ between $0$ and $t$.
\end{proposition}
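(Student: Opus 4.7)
The plan is to prove this second-order Taylor expansion with Lagrange remainder by the standard trick of reducing to Rolle's theorem via an auxiliary function, which lets us avoid integration and works symmetrically for $t>0$ and $t<0$.

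Concretely, I would first assume without loss of generality that $t > 0$ (the case $t < 0$ is identical with intervals reversed, and $t = 0$ is trivial). For this fixed $t$, I would define the auxiliary function
\begin{equation}
F(s) := g(t) - g(s) - g'(s)(t-s) - \frac{K}{2}(t-s)^2, \qquad s \in [0,t],
\end{equation}
where the constant $K$ is chosen so that $F(0) = 0$. That is, $K$ is defined by the equation
\begin{equation}
g(t) = g(0) + g'(0)\, t + \tfrac{1}{2} K\, t^2,
\end{equation}
so proving the proposition amounts to showing $K = g''(\eta)$ for some $\eta \in (0,t)$.

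Next I would verify the hypotheses for Rolle's theorem applied to $F$ on $[0,t]$. Note $F(t) = 0$ by direct substitution, and $F(0) = 0$ by our choice of $K$. The hypotheses on $g$ guarantee that $g'$ is continuous on $[0,t]$ and differentiable on $(0,t)$, which makes $F$ continuous on $[0,t]$ and differentiable on $(0,t)$. Differentiating carefully (and observing the cancellation of the $g'(s)$ terms), we get
\begin{equation}
F'(s) = -g'(s) - g''(s)(t-s) + g'(s) + K(t-s) = \bigl(K - g''(s)\bigr)(t-s).
\end{equation}
Applying Rolle's theorem yields some $\eta \in (0,t)$ with $F'(\eta) = 0$; since $t - \eta \neq 0$, we conclude $K = g''(\eta)$, which together with the defining equation for $K$ gives exactly the claimed identity.

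There isn't really a substantial obstacle here: this is a classical single-variable calculus result and the Rolle-based proof is textbook-standard. The only minor pitfall to watch is to state the regularity assumptions precisely (differentiability of $g'$ only needed on the open interval, continuity only needed on the closed interval) so that Rolle's theorem is validly invoked, and to handle the sign of $t$ uniformly in the final statement ``$\eta$ between $0$ and $t$.''
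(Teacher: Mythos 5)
Your proof is correct: the Rolle-based argument with the auxiliary function $F(s)$ is the standard textbook derivation of the Lagrange remainder, the computation of $F'(s)$ and the regularity checks (continuity of $g'$ on the closed interval giving continuity of $F$, twice differentiability on the open interval giving differentiability of $F$) are all in order. Note that the paper does not actually prove Proposition~\ref{prop:Taylor}; it simply recalls it as a standard single-variable calculus fact, so your write-up supplies exactly the classical argument the paper implicitly appeals to, and no comparison of methods is needed.
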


\begin{proposition}[Taylor's theorem for thermal perturbations] \label{prop:taylor-thermal}
    Given an $n$-qubit Hamiltonian $\vH$, $m$ local jump operators $\{\vA^a\}_a$, parameters $\beta, \tau \ge 0$, $\balpha \in \mathbb{R}^m_{\geq 0}$, and an $n$-qubit state ${\vrho} \in \mathcal{S}_{2^n}$.
    \begin{align}
        \Tr\left(\vH \exp^{\beta, \tau, \vH, \{\vA^a\}_a}_\vrho(\balpha) \right) &= \Tr(\vH \vrho) + \sum_a \alpha_a \Tr(\vH \CL^{\beta, \tau, \vH}_a[\vrho]) \nonumber \\
        &\quad + \frac{1}{2} \sum_{a} \sum_{a'} \alpha_a \alpha_{a'} \Tr\left(\vH \CL^{\beta, \tau, \vH}_{a'} [\CL^{\beta, \tau, \vH}_a[  \exp^{\beta, \tau, \vH, \{\vA^a\}_a}_\vrho(\eta \hat{\balpha}) ]] \right)
    \end{align}
    for some $0 \leq \eta \leq \norm{\balpha}_1$.
\end{proposition}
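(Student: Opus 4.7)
The plan is to reduce Proposition~\ref{prop:taylor-thermal} directly to the one-dimensional Taylor expansion in Proposition~\ref{prop:Taylor}, exactly following the scaffolding set up in the paragraph preceding the statement. I would define the scalar function
\[
 g(t) := \Tr\!\left(\vH\,\exp\!\Bigl(\textstyle\sum_{a=1}^{m} t\,\hat{\alpha}_a\,\CL^{\beta,\tau,\vH}_{a}\Bigr)\!(\vrho)\right),
 \qquad t\in[0,\norm{\balpha}_1],
\]
where $\hat{\balpha}=\balpha/\norm{\balpha}_1$ (handling $\balpha=0$ as a trivial case where both sides coincide). Then I would apply Proposition~\ref{prop:Taylor} at the point $t=\norm{\balpha}_1$, obtaining $g(\norm{\balpha}_1)=g(0)+g'(0)\norm{\balpha}_1+\tfrac{1}{2}g''(\eta)\norm{\balpha}_1^{2}$ for some $\eta\in[0,\norm{\balpha}_1]$, and then identify the three terms with the right-hand side of the statement.

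The first step is to justify that $g$ is $C^{2}$ on $[0,\norm{\balpha}_1]$ so that Taylor's theorem applies. This is immediate because $\CS := \sum_{a}\hat{\alpha}_a\CL^{\beta,\tau,\vH}_{a}$ is a bounded superoperator acting on the finite-dimensional space of $2^n\times 2^n$ matrices, so $t\mapsto \e^{t\CS}(\vrho)$ is a real-analytic matrix-valued function with derivatives obtained by differentiating under the exponential. Concretely, $\tfrac{\rd}{\rd t}\e^{t\CS}=\CS\,\e^{t\CS}=\e^{t\CS}\CS$, so $\Tr(\vH\,\cdot)$ applied to this gives a scalar-valued $C^{\infty}$ function, covering the hypotheses of Proposition~\ref{prop:Taylor}.

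Next, I would compute the three ingredients $g(0)$, $g'(0)$, and $g''(\eta)$ using the derivative formulas already displayed in the excerpt. At $t=0$, $\e^{t\CS}$ is the identity superoperator, so $g(0)=\Tr(\vH\vrho)$ and $g'(0)=\sum_{a}\hat{\alpha}_a\,\Tr\!\bigl(\vH\,\CL^{\beta,\tau,\vH}_{a}[\vrho]\bigr)$. The second derivative at an arbitrary $t=\eta$ is
\[
 g''(\eta)=\sum_{a,a'}\hat{\alpha}_a\hat{\alpha}_{a'}\,\Tr\!\left(\vH\,\CL^{\beta,\tau,\vH}_{a'}\!\bigl[\CL^{\beta,\tau,\vH}_{a}\bigl[\exp^{\beta,\tau,\vH,\{\vA^a\}_a}_{\vrho}(\eta\hat{\balpha})\bigr]\bigr]\right),
\]
where I used the definition of the exponential map in \eqref{eq:expmap-thermal} to identify $\e^{\eta\CS}(\vrho)$ with $\exp^{\beta,\tau,\vH,\{\vA^a\}_a}_{\vrho}(\eta\hat{\balpha})$. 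Multiplying the linear term by $\norm{\balpha}_1$ and the quadratic term by $\norm{\balpha}_1^{2}$ and using $\norm{\balpha}_1\hat{\alpha}_a=\alpha_a$ yields exactly the three summands in the proposition, with the same $\eta\in[0,\norm{\balpha}_1]$ supplied by Taylor's theorem.

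There is essentially no hard step: the proof is a bookkeeping exercise that lifts one-dimensional Taylor expansion along the ray $t\mapsto t\hat{\balpha}$ in the nonnegative orthant to the stated multivariate form. The only thing worth being careful about is that $\balpha$ is restricted to $\mathbb{R}^{m}_{\geq 0}$ (irreversibility), but since we only evaluate $g$ at $t\in[0,\norm{\balpha}_1]$ this restriction is automatically respected, and the intermediate $\eta\hat{\balpha}$ also lies in $\mathbb{R}^{m}_{\geq 0}$, so the exponential map is well-defined throughout.
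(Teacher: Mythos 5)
Your proposal is correct and follows essentially the same route as the paper: define the one-dimensional function $g(t) = \Tr\bigl(\vH \exp_\vrho^{\beta,\tau,\vH,\{\vA^a\}_a}(t\hat{\balpha})\bigr)$, use the displayed first and second derivatives, and apply Proposition~\ref{prop:Taylor} at $t=\norm{\balpha}_1$, identifying $\norm{\balpha}_1\hat{\alpha}_a=\alpha_a$. Your added remarks on smoothness (boundedness of the superoperator in finite dimension) and the trivial $\balpha=0$ case are fine and consistent with the paper's intent.
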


We define the energy gradients as follows. We separately consider a positive and a negative energy gradient. The motivation of the definition is that the positive (negative) energy gradient should determine the direction of the thermodynamics that causes the energy of the state to increase (decrease).
Because our goal is to understand local minima, we will focus on the negative energy gradient. When one studies local maxima, one will focus on the positive energy gradient.

\begin{definition}[Energy gradients of a state under thermal perturbations] \label{def:energy-grad}
    Given an $n$-qubit Hamiltonian $\vH$, $m$ local jump operators $\{\vA^a\}_a$, and parameters $\beta, \tau \ge 0$, the gradients of an $n$-qubit state ${\vrho} \in \mathcal{S}_{2^n}$ under thermal perturbations with inverse temperature $\beta$, time scale $\tau$, and system-bath interactions generated by $\{\vA^a\}_a$ is defined as
    \begin{align}
        {\bm{\nabla}}^+_{\beta, \tau, \{\vA^a\}_a}(\vH, \vrho) &:= \sum_{a=1}^m \max\left(+ \Tr\left( \vH \CL^{\beta, \tau, \vH}_a[\vrho] \right), 0 \right) \hat{\be}_a, & \text{(positive energy gradient)}\\
        {\bm{\nabla}}^-_{\beta, \tau, \{\vA^a\}_a}(\vH, \vrho) &:= \sum_{a=1}^m \max\left(- \Tr\left( \vH \CL^{\beta, \tau, \vH}_a[\vrho] \right), 0 \right) \hat{\be}_a, & \text{(negative energy gradient)}\\
        {\bm{\nabla}}_{\beta, \tau, \{\vA^a\}_a}(\vH, \vrho) &:= \sum_{a=1}^m \Tr\left( \vH \CL^{\beta, \tau, \vH}_a[\vrho] \right) \hat{\be}_a, & \text{(energy gradient)}
    \end{align}
    where $\hat{\be}_a$ is the unit vector along the $a$-th coordinate.
\end{definition}

Since the set of jump operators $\{\vA^a\}_a$ will be fixed, we will sometimes drop the dependence on $\{\vA^a\}_a$ for notational simplicity.
The positive/negative energy gradient belongs to the tangent space $\mathbb{R}_{\geq 0}^m$, but the energy gradient
\begin{equation}
    {\bm{\nabla}}_{\beta, \tau, \{\vA^a\}_a}(\vH, \vrho) = {\bm{\nabla}}^+_{\beta, \tau, \{\vA^a\}_a}(\vH, \vrho) - {\bm{\nabla}}^-_{\beta, \tau, \{\vA^a\}_a}(\vH, \vrho)
\end{equation}
may not be in the tangent space due to negative values. So, in general, one could not move in the direction of the energy gradient.
However, one could move in the direction of the positive or negative energy gradient. It is instructive to think about the Heisenberg picture and define the \textit{energy gradient operator}.

\begin{definition}[Energy gradient operator]
    Given an $n$-qubit Hamiltonian $\vH$, $m$ local jump operators $\{\vA^a\}_a$, inverse temperature $\beta \geq 0,$ and time scale $\tau \ge 0$, the energy gradient operators under thermal perturbations is
    \begin{equation}
        \sum_{a=1}^m \CL^{\dag\beta, \tau, \vH}_a[\vH] \,\, \hat{\be}_a,
    \end{equation}
    which is a vector of $n$-qubit Hermitian operators.
\end{definition}

We can provide an upper and lower bound to the energy gradients by combining Prop.~\ref{prop:norm-diss-part} and Prop.~\ref{prop:norm-lamb-shift-part} to obtain the following proposition.

\begin{proposition}[Bound on the energy gradients] \label{prop:bound-ene-grad}
    Given an $n$-qubit Hamiltonian $\vH$, $m$ local jump operators $\{\vA^a\}_a$, inverse temperature $\beta \geq 0,$ and time scale $\tau \ge 0$,
    \begin{align}
        \norm{\CL^{\dag \beta, \tau, \vH}_a(\vH)}_\infty \leq 3 \norm{\vH}_\infty
    \end{align}
    for all $a = 1, \ldots, m$.
\end{proposition}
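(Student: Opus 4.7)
}

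The plan is to split the adjoint thermal Lindbladian into its two canonical Heisenberg-picture pieces and then apply the two norm estimates that the statement explicitly invokes (Prop.~\ref{prop:norm-diss-part} and Prop.~\ref{prop:norm-lamb-shift-part}). Concretely, I will write
\begin{equation}
\CL^{\dag\beta,\tau,\vH}_{a}[\vO] \;=\; \mathcal{D}^{\dag\beta,\tau,\vH}_{a}[\vO] \;+\; \ri \bigl[\vB^{\beta,\tau,\vH}_{a},\,\vO\bigr],
\end{equation}
where $\mathcal{D}^{\dag}_{a}$ is the Heisenberg-picture dissipator assembled from the operator-Fourier-transform jumps $\hat{\vA}^a(\omega)$ weighted by the transition rates $\gamma_{\beta}(\omega)$, and $\vB_{a}$ is the Lamb-shift (coherent) term associated with the same jump operator $\vA^a$. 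This is exactly the standard Gorini--Kossakowski--Sudarshan--Lindblad form recalled in Appendix~\ref{sec:thermo-lindblad}, now specialized to the thermal case.

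Next I would substitute $\vO=\vH$ and bound each piece separately. For the dissipative piece, Prop.~\ref{prop:norm-diss-part} (combined with the normalization $\|\vA^{a\dagger}\vA^a\|_\infty\le 1$ from Eq.~\eqref{eq:normalization-vAa}) should yield an operator-norm bound of the shape
\begin{equation}
\bigl\|\mathcal{D}^{\dag\beta,\tau,\vH}_{a}[\vH]\bigr\|_{\infty} \;\le\; 2\,\norm{\vH}_{\infty},
\end{equation}
obtained by applying the triangle inequality to $\sum_{\omega}\gamma_\beta(\omega)\hat{\vA}^a(\omega)^\dag \vH\hat{\vA}^a(\omega)$ and to the anticommutator, together with the fact that the transition weights integrated against $\hat{\vA}^{a\dag}(\omega)\hat{\vA}^a(\omega)$ remain controlled by $\norm{\vA^{a\dag}\vA^a}_\infty$. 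For the Lamb-shift piece, Prop.~\ref{prop:norm-lamb-shift-part} should deliver an operator-norm bound on $\vB_a$, which via $\|\ri[\vB_a,\vH]\|_\infty\le 2\|\vB_a\|_\infty\|\vH\|_\infty$ translates to
\begin{equation}
\bigl\|\ri[\vB^{\beta,\tau,\vH}_{a},\,\vH]\bigr\|_{\infty} \;\le\; \norm{\vH}_{\infty}.
\end{equation}
The triangle inequality then gives the claimed $3\,\norm{\vH}_{\infty}$ bound.

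The main conceptual point I would emphasize in writing this up is the accounting: the factor of $2$ is consumed by the dissipator (one unit from the transition term and one from the decay/anticommutator term), and the factor of $1$ is consumed by the Lamb-shift commutator. There is no genuine obstacle in this proposition itself once Propositions~\ref{prop:norm-diss-part} and~\ref{prop:norm-lamb-shift-part} are in hand; all the real analytic work lies in those two auxiliary norm bounds, which require handling the operator Fourier transform and the convolution structure of $\gamma_\beta$. The only thing to be careful about here is to verify that the decomposition of $\CL^{\dag}_{a}$ into dissipator plus coherent part matches the normalization conventions used in the two cited propositions, so that the constants combine cleanly to $2+1=3$ rather than acquiring an extra factor from, e.g., a symmetrization over $\vA^a\leftrightarrow\vA^{a\dag}$.
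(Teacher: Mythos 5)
Your proposal is correct and is essentially the paper's argument: the paper proves this proposition precisely by combining Proposition~\ref{prop:norm-diss-part} (giving $\|\mathcal{D}^{\dag\beta,\tau,\vH}_a[\vH]\|_\infty \le 2\norm{\vA^{a\dag}\vA^a}\,\norm{\vH}_\infty \le 2\norm{\vH}_\infty$) with Proposition~\ref{prop:norm-lamb-shift-part} (giving $\|\vH^{\beta,\tau,\vH}_{LS,a}\|\le \tfrac12$, hence $\|[\vH_{LS,a},\vH]\|_\infty\le\norm{\vH}_\infty$), exactly the $2+1=3$ accounting you describe. Your normalization check is also consistent with the paper's conventions, since Eq.~\eqref{eq:normalization-vAa} gives $\norm{\vA^{a\dag}\vA^a}_\infty\le 1$ for each single jump.
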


The $\beta, \tau \rightarrow\infty$ limit (zero temperature heat bath with an infinite time scale) recovers the Davies' generator $\CL^{\infty,\infty,\vH}_a$.
The Davies' generator takes an energy eigenvector $\ketbra{\psi_j}{\psi_j}$ of $\vH$ to energy eigenvectors with equal or lower energy, i.e., for any $t \geq 0$,
\begin{equation}
    \bra{\psi_k} \exp\left(t \CL^{\infty,\infty,\vH}_a \right)\left(\ketbra{\psi_j}{\psi_j}\right) \ket{\psi_k} = 0 \quad \text{for any}\quad j, k\quad\text{such that}\quad  E_k > E_j.
\end{equation}
We can use the above to obtain the following proposition.
\begin{proposition}[Vanishing positive energy gradient]
\label{prop:L-nonpositive}
For $\beta = \tau = \infty$, we have
\begin{align}
    \CL^{\infty,\infty,\vH \dag}_a[\vH] \preceq 0, \quad \text{for each}\quad a.
\end{align}
Hence, the positive energy gradient vanishes ${\bm{\nabla}}^+_{\infty, \infty}(\vH, \vrho) = 0$ and ${\bm{\nabla}}_{\infty, \infty}(\vH, \vrho) = -{\bm{\nabla}}^-_{\infty, \infty}(\vH, \vrho)$ for all Hamiltonian $\vH$ and state $\vrho$.
\end{proposition}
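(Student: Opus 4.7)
The plan is to compute $\CL_a^{\dag\infty,\infty,\vH}[\vH]$ directly using the explicit Davies-generator form of the thermal Lindbladian at $\beta=\tau=\infty$. In this limit the jump operators reduce to the exact Bohr-frequency components $\vA^a_\nu := \sum_{E_2-E_1=\nu}\vP_{E_2}\vA^a\vP_{E_1}$, with detailed-balance rates $\gamma(\nu)\ge 0$ that are supported only on non-positive Bohr frequencies $\nu\le 0$ (\emph{pure cooling}), since the KMS ratio $\gamma(-\nu)/\gamma(\nu)=\e^{-\beta\nu}$ forces all heating rates to zero as $\beta\to\infty$. The Lamb-shift Hamiltonian appearing alongside is built from the operators $(\vA^a_\nu)^\dag\vA^a_\nu$, each of which is block-diagonal in the eigenbasis of $\vH$, and hence commutes with $\vH$.

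Concretely, writing
\[
\CL_a^{\infty,\infty,\vH}[\vrho] = -\ri[\vH^a_{LS},\vrho] + \sum_{\nu\le 0}\gamma(\nu)\bigl(\vA^a_\nu\vrho(\vA^a_\nu)^\dag - \tfrac{1}{2}\{(\vA^a_\nu)^\dag\vA^a_\nu,\vrho\}\bigr),
\]
passing to the Heisenberg picture and evaluating on $\vH$ makes the Lamb-shift term vanish via $\ri[\vH^a_{LS},\vH]=0$. For each dissipative summand I will invoke the covariance identity $[\vH,\vA^a_\nu]=\nu\,\vA^a_\nu$, which is immediate from the Bohr-frequency definition, to rewrite
\[
(\vA^a_\nu)^\dag\vH\vA^a_\nu - \tfrac{1}{2}\{(\vA^a_\nu)^\dag\vA^a_\nu,\vH\} = \nu\,(\vA^a_\nu)^\dag\vA^a_\nu + \tfrac{1}{2}\bigl[(\vA^a_\nu)^\dag\vA^a_\nu,\vH\bigr].
\]
The residual commutator vanishes because $(\vA^a_\nu)^\dag\vA^a_\nu$ is block-diagonal in the eigenbasis of $\vH$ (a shift by $\nu$ followed by a shift by $-\nu$ preserves each energy eigenspace). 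Summing over $\nu$ gives
\[
\CL_a^{\dag\infty,\infty,\vH}[\vH] = \sum_{\nu\le 0}\gamma(\nu)\,\nu\,(\vA^a_\nu)^\dag\vA^a_\nu \;\preceq\; 0,
\]
a manifestly nonpositive operator, being a sum of non-positive scalars times positive semidefinite matrices. The consequence for the energy gradient in \cref{def:energy-grad} is then immediate: $\Tr(\vH\,\CL_a^{\infty,\infty,\vH}[\vrho]) = \Tr(\CL_a^{\dag\infty,\infty,\vH}[\vH]\,\vrho)\le 0$ for every state $\vrho$, so the clipped positive coordinate is identically zero, yielding ${\bm{\nabla}}^+_{\infty,\infty}(\vH,\vrho)=0$ and ${\bm{\nabla}}_{\infty,\infty}(\vH,\vrho)=-{\bm{\nabla}}^-_{\infty,\infty}(\vH,\vrho)$.

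No step is genuinely difficult once the Davies-generator form is assembled; the entire argument is a short algebraic manipulation hinging on the covariance identity $[\vH,\vA^a_\nu]=\nu\,\vA^a_\nu$ and the zero-temperature vanishing of heating rates. The only care needed is cross-referencing the precise form of the thermal Lindbladian in Appendix~\ref{sec:thermo-lindblad-detail}: one must verify that the $\tau\to\infty$ limit of the smoothed operator Fourier transform $\hat\vA^a(\omega)$ collapses to the exact spectral component $\vA^a_\nu$, and that the Lamb-shift piece built from $(\vA^a_\nu)^\dag\vA^a_\nu$ truly commutes with $\vH$ as claimed. Given those inputs, the proof is essentially three lines and formalizes the physical intuition that at zero temperature with an infinite time scale, a Davies-type dissipator can only drive energy downward.
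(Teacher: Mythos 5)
Your proposal is correct, and it is somewhat more explicit than what the paper actually records. The paper gives no formal proof of this proposition: it simply notes the dynamical property of the zero-temperature Davies generator that an energy eigenvector never acquires population on higher-energy eigenvectors, $\bra{\psi_k}\e^{t\CL^{\infty,\infty,\vH}_a}(\ketbrat{\psi_j})\ket{\psi_k}=0$ for $E_k>E_j$, and asserts the proposition follows. Your route instead computes the Heisenberg gradient directly: the covariance identity $[\vH,\vA^a_\nu]=\nu\vA^a_\nu$, the block-diagonality of $\vA^{a\dag}_\nu\vA^a_\nu$, and the vanishing of heating rates at $\beta=\infty$ give $\CL_a^{\dag\infty,\infty,\vH}[\vH]=\sum_{\nu\le 0}\gamma(\nu)\,\nu\,\vA^{a\dag}_\nu\vA^a_\nu\preceq 0$, which is exactly the $\tau\to\infty$ specialization of the expression the paper derives elsewhere at finite $\tau$ (Lemma~\ref{lemma:expr_energy_gradient} together with Lemma~\ref{lem:recover-Davies}). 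Your computation also closes a small step the paper leaves implicit: the population statement by itself only bounds the diagonal matrix elements $\bra{\psi_j}\CL_a^{\dag}[\vH]\ket{\psi_j}\le 0$, and upgrading this to the operator inequality $\preceq 0$ uses precisely the block structure (or the explicit PSD decomposition) that your argument supplies. Two points of care, both of which you already flag or which are harmless: (i) ``$\beta=\tau=\infty$'' should be read as the limit of the finite-parameter construction, where the collapse of $\hat\vA^a(\omega)$ onto exact Bohr components and the vanishing of $[\vH^{\beta,\tau,\vH}_{LS,a},\vH]$ are the content of Lemma~\ref{lem:recover-Davies} and Proposition~\ref{prop:norm-lamb-shift-part}; (ii) with the Glauber choice of Eq.~\eqref{eq:glauber-dyn} the overall prefactor of $\gamma_\beta$ decays as $\beta\to\infty$, but since the proposition only asserts nonpositivity, all that matters is $\gamma\ge 0$ supported on cooling frequencies, so your sign argument is unaffected.
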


This proposition illustrates that thermal perturbations induced by a zero-temperature heat bath with an infinite time scale should only absorb energy from the quantum system and not cause the energy to increase. Hence, the positive energy gradient must vanish.

\subsection{A sufficient condition and a necessary condition of local minima}

Using the negative gradient, we can show a necessary condition and a sufficient condition for local minima under thermal perturbations.
They differ only slightly ($<$ vs $\leq$).
From the conditions, we can see that local minima are well characterized by the negative energy gradient.
Recall that $\norm{\bx}_\infty = \max_i |x_i| $ is the $\ell_\infty$ norm for a finite-dimensional vector $\bx$.

\begin{lemma}[A sufficient condition for local minima under thermal perturbations]
\label{lem:suff-QLM}
Given $\epsilon > 0$, an $n$-qubit Hamiltonian $\vH$, $m$ local jump operators $\{\vA^a\}_a$, and parameters $\beta, \tau \ge 0$, an $n$-qubit state $\vrho$ with a small negative energy gradient,
\begin{equation} \label{eq:QLM-1st-order-strict}
    \norm{{\bm{\nabla}}^-_{\beta, \tau, \{\vA^a\}_a}(\vH, \vrho)}_\infty < \epsilon,
\end{equation}
is an $\epsilon$-approximate local minimum ${\vrho}$ of the $n$-qubit Hamiltonian $\vH$ under thermal perturbations with inverse temperature $\beta$, time scale $\tau$, and system-bath interactions generated by $\{\vA^a\}_a$.
\end{lemma}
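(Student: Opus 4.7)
The plan is to invoke Taylor's theorem for thermal perturbations, Proposition~\ref{prop:taylor-thermal}, and show that the first-order term cannot be more negative than $-\epsilon\norm{\balpha}_1$, while the quadratic remainder can be absorbed by choosing the neighborhood $\delta$ sufficiently small. Concretely, expanding
\begin{equation}
\Tr\!\left(\vH\,\exp^{\beta,\tau,\vH,\{\vA^a\}_a}_\vrho(\balpha)\right) = \Tr(\vH\vrho) + \sum_a \alpha_a \Tr\!\left(\vH\,\CL^{\beta,\tau,\vH}_a[\vrho]\right) + R_2(\balpha),
\end{equation}
I would bound the two pieces separately and then stitch them together.

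For the first-order term, I would use the decomposition $\Tr(\vH\,\CL^{\beta,\tau,\vH}_a[\vrho]) = [{\bm{\nabla}}^+]_a - [{\bm{\nabla}}^-]_a$ from Definition~\ref{def:energy-grad}. Since $\balpha\in\mathbb{R}^m_{\geq 0}$ and $[{\bm{\nabla}}^+]_a \geq 0$, dropping the positive piece only decreases the sum, giving
\begin{equation}
\sum_a \alpha_a \Tr\!\left(\vH\,\CL^{\beta,\tau,\vH}_a[\vrho]\right) \;\geq\; -\sum_a \alpha_a\,[{\bm{\nabla}}^-]_a \;\geq\; -\norm{{\bm{\nabla}}^-_{\beta,\tau,\{\vA^a\}_a}(\vH,\vrho)}_\infty\,\norm{\balpha}_1.
\end{equation}
By the hypothesis~\eqref{eq:QLM-1st-order-strict}, this last quantity is strictly greater than $-\epsilon\,\norm{\balpha}_1$, leaving a positive slack $\epsilon' := \epsilon - \norm{{\bm{\nabla}}^-}_\infty > 0$ that we can burn to dominate the quadratic term.

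For the quadratic remainder $R_2(\balpha)$, the Lagrange form in Proposition~\ref{prop:taylor-thermal} places it as a double sum
\begin{equation}
R_2(\balpha) = \tfrac{1}{2}\sum_{a,a'}\alpha_a\alpha_{a'}\,\Tr\!\left(\CL^{\dag\beta,\tau,\vH}_{a'}\!\left[\CL^{\dag\beta,\tau,\vH}_a[\vH]\right]\,\vsigma_\eta\right),
\end{equation}
where $\vsigma_\eta = \exp^{\beta,\tau,\vH,\{\vA^a\}_a}_\vrho(\eta\hat\balpha)$ is a valid density operator. Using Proposition~\ref{prop:bound-ene-grad} applied twice—$\norm{\CL^{\dag}_a[\vH]}_\infty \leq 3\norm{\vH}_\infty$ and the analogous bound of $\CL_{a'}^{\dag}$ acting on a Hermitian observable (whose Lindbladian structure gives a bounded $\infty$–$\infty$ norm on Hermitian operators, independent of $\beta,\tau$)—one obtains $|R_2(\balpha)|\leq C\norm{\balpha}_1^2$ for some $C = \mathrm{poly}(n,\norm{\vH}_\infty)$.

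Combining these bounds gives
\begin{equation}
\Tr\!\left(\vH\,\exp^{\beta,\tau,\vH,\{\vA^a\}_a}_\vrho(\balpha)\right) \geq \Tr(\vH\vrho) - \norm{{\bm{\nabla}}^-}_\infty\,\norm{\balpha}_1 - C\norm{\balpha}_1^2,
\end{equation}
so choosing $\delta := \epsilon'/C > 0$ ensures that $C\norm{\balpha}_1^2 \leq \epsilon'\norm{\balpha}_1$ whenever $\norm{\balpha}_1\leq\delta$, whence the right-hand side is at least $\Tr(\vH\vrho) - \epsilon\norm{\balpha}_1$, matching Definition~\ref{defn:quantum_local}. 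The main obstacle I anticipate is establishing a clean, state-independent bound on the second derivative $\CL^\dag_{a'}\CL^\dag_a[\vH]$ that does not blow up with $\beta$ or $\tau$; if the bound from Proposition~\ref{prop:bound-ene-grad} does not directly iterate, I would instead invoke the smoothness bound on second derivatives of thermal Lindbladians from~\cite{Chen2023quantumthermal} referenced in the text, which is exactly the input needed to control $R_2(\balpha)$ uniformly over $\eta\in[0,\norm{\balpha}_1]$.
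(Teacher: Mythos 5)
Your proposal is correct and follows essentially the same route as the paper: Taylor's theorem for thermal perturbations, a lower bound $-\norm{{\bm{\nabla}}^-}_\infty\norm{\balpha}_1$ on the linear term using $\balpha \in \BR^m_{\geq 0}$, a state-independent bound $C\norm{\balpha}_1^2$ on the Lagrange remainder, and a choice of $\delta$ proportional to the slack $\epsilon - \norm{{\bm{\nabla}}^-}_\infty$. Your closing worry is unfounded: the paper bounds the remainder by $\frac{1}{2}C_L^2 C_H \norm{\balpha}_1^2$ with $C_L = \max_a \norm{\CL_a^{\beta,\tau,\vH}}_{1-1}$ and $C_H = \norm{\vH}_\infty$, and Proposition~\ref{prop:norm-thermal-lindblad} gives $C_L \leq 3$ uniformly in $\beta,\tau$, which is exactly the $\beta,\tau$-independent control you were seeking (equivalently, iterate Proposition~\ref{prop:bound-ene-grad} with the $\infty$--$\infty$ norm bound).
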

\begin{proof}
    Consider $C_L = \max_{a} \norm{\mathcal{L}_a^{\beta, \tau, \vH}}_{1 - 1} > 0$ and $C_H = \norm{\vH}_\infty$.
    Given $\balpha \in \mathbb{R}^m_{\geq 0}$, we have
    \begin{align}
        \left| \sum_{a} \sum_{a'} \alpha_a \alpha_{a'} \Tr(\vH \CL^{\beta, \tau, \vH}_{a'} [\CL^{\beta, \tau, \vH}_a[\vsigma]]) \right| &\leq C_L^2 C_H \norm{\balpha}_1^2,
    \end{align}
    for any state $\vsigma$.
    Let $\epsilon_0 := \norm{{\bm{\nabla}}^-_{\beta, \tau, \{\vA^a\}_a}(\vH, \vrho)}_\infty < \epsilon$.
    From $\alpha_a \geq 0$ and Cauchy-Schwarz inequality,
    \begin{equation}
        \sum_a \alpha_a \Tr(\vH \CL^{\beta, \tau, \vH}_a[\vrho]) \geq - \sum_a \left|\alpha_a\right| \max\left(- \Tr\left( \vH \CL^{\beta, \tau, \vH}_i[\vrho] \right), 0 \right) \geq -\norm{\balpha}_1 \epsilon_0.
    \end{equation}
    Together, Taylor's theorem for thermal perturbations (Prop.~\ref{prop:taylor-thermal}) implies
    \begin{equation}
        \Tr\left(\vH \exp^{\beta, \tau, \vH, \{\vA^a\}_a}_\vrho(\balpha) \right) \geq \Tr(\vH \vrho) - \norm{\balpha}_1 \epsilon_0 - \frac{\norm{\balpha}^2_1}{2} C_L^2 C_H,
    \end{equation}
    for any $\balpha \in \mathbb{R}^m_{\geq 0}$.
    From the above, we see that for any $\norm{\balpha}_1^2 < \delta := \frac{2 (\epsilon - \epsilon_0)}{C_L^2 C_H}$,
    \begin{align}
        \Tr\left(\vH \exp^{\beta, \tau, \vH, \{\vA^a\}_a}_\vrho(\balpha) \right) &\geq \Tr(\vH \vrho) - \epsilon \norm{\balpha}_1 + \norm{\balpha}_1\left( (\epsilon - \epsilon_0) - \frac{C_L^2 C_H}{2} \norm{\balpha}_1^2 \right) \nonumber \\
        &\geq \Tr(\vH \vrho) - \epsilon \norm{\balpha}_1.
    \end{align}
    So, $\vrho$ is an $\epsilon$-approximate local minimum of $\vH$ under thermal perturbations with inverse temperature $\beta$, time scale $\tau$, and system-bath interactions generated by $\{\vA^a\}_a$.
\end{proof}

\begin{lemma}[A necessary condition for local minima under thermal perturbations]
\label{lem:nece-QLM}
Given $\epsilon > 0$, an $n$-qubit Hamiltonian $\vH$, $m$ local jump operators $\{\vA^a\}_a$, and parameters $\beta, \tau \ge 0$, an $\epsilon$-approximate local minimum ${\vrho}$ of $\vH$ under thermal perturbations with inverse temperature $\beta$, time scale $\tau$, and system-bath interactions generated by $\{\vA^a\}_a$ satisfies
\begin{equation} \label{eq:QLM-1st-order}
    \norm{{\bm{\nabla}}^-_{\beta, \tau, \{\vA^a\}_a}(\vH, \vrho)}_\infty \leq \epsilon,
\end{equation}
which differs only slightly from the condition in Eq.~\eqref{eq:QLM-1st-order-strict}.
\end{lemma}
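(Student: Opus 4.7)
The plan is to prove the contrapositive: assuming the negative energy gradient has an $\ell_\infty$ component strictly larger than $\epsilon$, exhibit a one-parameter family of arbitrarily small perturbations along which the energy drops faster than $\epsilon\norm{\balpha}_1$, violating Definition~\ref{defn:quantum_local}. This is essentially the reverse direction of the computation already carried out in the proof of Lemma~\ref{lem:suff-QLM}, with the roles of the linear and quadratic terms swapped in the inequality chain.

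First, I would suppose for contradiction that $\norm{{\bm{\nabla}}^-_{\beta, \tau, \{\vA^a\}_a}(\vH, \vrho)}_\infty > \epsilon$. By Definition~\ref{def:energy-grad}, there is an index $a^\star \in \{1,\dots,m\}$ such that
\begin{equation}
\Tr\!\left(\vH\,\CL^{\beta,\tau,\vH}_{a^\star}[\vrho]\right) = -\epsilon_1 \quad\text{with}\quad \epsilon_1 > \epsilon.
\end{equation}
Now pick the direction $\balpha(t) := t\,\hat{\be}_{a^\star} \in \mathbb{R}^m_{\geq 0}$ for $t > 0$; note $\norm{\balpha(t)}_1 = t$, so this is a legitimate perturbation direction for the local-minimum definition. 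Applying Taylor's theorem for thermal perturbations (Prop.~\ref{prop:taylor-thermal}) in this single direction yields
\begin{equation}
\Tr\!\left(\vH\,\exp^{\beta,\tau,\vH,\{\vA^a\}_a}_{\vrho}(\balpha(t))\right) = \Tr(\vH\vrho) - t\,\epsilon_1 + \frac{t^2}{2}\,\Tr\!\left(\vH\,\CL^{\beta,\tau,\vH}_{a^\star}\!\left[\CL^{\beta,\tau,\vH}_{a^\star}[\vsigma(\eta)]\right]\right),
\end{equation}
where $\vsigma(\eta) := \exp^{\beta,\tau,\vH,\{\vA^a\}_a}_\vrho(\eta\,\hat{\be}_{a^\star})$ is a state and $0 \leq \eta \leq t$.

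Next, I would bound the quadratic remainder exactly as in the proof of Lemma~\ref{lem:suff-QLM}: setting $C_L := \norm{\CL^{\beta,\tau,\vH}_{a^\star}}_{1\text{-}1}$ and $C_H := \norm{\vH}_\infty$, the remainder is at most $\tfrac{t^2}{2}C_L^2 C_H$ in absolute value (here $\vsigma(\eta)$ has trace norm $1$, so the induced $1$-$1$ norms compose cleanly). Therefore
\begin{equation}
\Tr\!\left(\vH\,\exp^{\beta,\tau,\vH,\{\vA^a\}_a}_{\vrho}(\balpha(t))\right) \leq \Tr(\vH\vrho) - t\,\epsilon_1 + \tfrac{t^2}{2}C_L^2 C_H.
\end{equation}
Rearranging and using $\epsilon_1 > \epsilon$, for every $t$ with $0 < t < \tfrac{2(\epsilon_1 - \epsilon)}{C_L^2 C_H}$ we obtain the strict inequality
\begin{equation}
\Tr\!\left(\vH\,\exp^{\beta,\tau,\vH,\{\vA^a\}_a}_{\vrho}(\balpha(t))\right) < \Tr(\vH\vrho) - \epsilon\,t = \Tr(\vH\vrho) - \epsilon\,\norm{\balpha(t)}_1.
\end{equation}

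Finally, to conclude, I observe that the $\epsilon$-approximate local minimum property (Eq.~\eqref{eq:local-minima-mixed}) requires the reverse inequality for \emph{all} $\balpha$ with $\norm{\balpha}_1 \leq \delta$, for some fixed $\delta > 0$ associated with $\vrho$. But our family $\balpha(t)$ ranges over arbitrarily small norm, so choosing any $t < \min\!\bigl(\delta,\,\tfrac{2(\epsilon_1 - \epsilon)}{C_L^2 C_H}\bigr)$ produces a direct contradiction. Hence the assumption fails and $\norm{{\bm{\nabla}}^-_{\beta, \tau, \{\vA^a\}_a}(\vH, \vrho)}_\infty \leq \epsilon$. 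I do not expect any real obstacle here: the argument is a textbook first-order necessary condition, and all the analytic ingredients (boundedness of $\mathcal{L}^{\beta,\tau,\vH}_a$ in the $1$-$1$ norm and of $\vH$ in operator norm, plus the Taylor expansion) are already in hand from the sufficient-condition proof. The only small subtlety worth flagging is that the tangent cone is one-sided ($\alpha_{a^\star} \geq 0$), which is exactly what makes the negative gradient, rather than the full gradient, the right object to bound.
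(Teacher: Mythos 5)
Your proof is correct and follows essentially the same route as the paper: both single out the coordinate direction $\hat{\be}_{a^\star}$ achieving the largest negative gradient component and compare the first-order energy decrease against the $\epsilon\norm{\balpha}_1$ allowance in Definition~\ref{defn:quantum_local}. The only cosmetic difference is that the paper takes the limit $t\to 0^+$ of the difference quotient directly, whereas you run the contrapositive through the explicit Taylor remainder bound $\tfrac{t^2}{2}C_L^2 C_H$ from the sufficient-condition proof; both are valid.
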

\begin{proof}
    Recall that ${\bm{\nabla}}^-_{\beta, \tau, \{\vA^a\}_a}(\vH, \vrho) \in \mathbb{R}^m_{\geq 0}$.
    Let $a^* = \argmax_a \left( {\nabla}^-_{\beta, \tau, \{\vA^a\}_a}(\vH, \vrho)_a \right)$.
    If the negative energy gradient vector $\Tr(\vH \mathcal{L}_{a^*}^{\beta, \tau, \vH}[\vrho])$ is zero, then the claim holds.
    Hence, we only need to consider the case when the negative energy gradient vector is nonzero.
    In this case,
    \begin{equation}
        0 < \norm{{\bm{\nabla}}^-_{\beta, \tau, \{\vA^a\}_a}(\vH, \vrho)}_\infty = {\nabla}^-_{\beta, \tau, \{\vA^a\}_a}(\vH, \vrho)_{a^*} = - \Tr(\vH \mathcal{L}_{a^*}^{\beta, \tau, \vH}[\vrho]).
    \end{equation}
    Consider $\hat{\balpha} := \hat{\be}_{a^*} \in \mathbb{R}^m_{\geq 0}$, which satisfies $\norm{\hat{\balpha}}_1 = 1$. We have
    \begin{align}
        \lim_{t \rightarrow 0^+} \frac{\Tr(\vH \exp^{\beta, \tau, \vH, \{\vA^a\}_a}_\vrho(t\hat{\balpha})) - \Tr(\vH \vrho)}{t} &= \Tr(\vH \mathcal{L}_{a^*}^{\beta, \tau, \vH}[\vrho]) \nonumber \\
        &= - \norm{{\bm{\nabla}}^-_{\beta, \tau, \{\vA^a\}_a}(\vH, \vrho)}_\infty.
    \end{align}
    At the same time, for any $t > 0$, we also have
    \begin{equation}
        \frac{\Tr(\vH \exp^{\beta, \tau, \vH, \{\vA^a\}_a}_\vrho(t\hat{\balpha})) - \Tr(\vH \vrho)}{t} \geq -\epsilon \norm{\hat{\balpha}}_1 = -\epsilon.
    \end{equation}
    Together, we obtain the desired claim.
\end{proof}

\subsection{Hamiltonians without suboptimal local minima}
\label{sec:Hamil-suboptimal-local-minima}

An important concept in classical optimization is to understand when all local minima are global minima.
For example, in convex optimization, checking the convexity of the objective function $h(\bx)$ ensures that all local minima are global minima.
When all local minima are global minima, it is commonly referred to as having no suboptimal local minima in the optimization landscape.
For optimizing quantum Hamiltonians, we can define a similar concept.
Let us begin with a definition of approximate global minimum.

\begin{definition}[Approximate global minimum of Hamiltonians]
    Given $\epsilon, \delta > 0$ and an $n$-qubit Hamiltonian $\vH$ with minimum energy $E_0$.
    Let $\vP_{G+\epsilon}(\vH)$ be the projector to the subspace of energy eigenstates of $\vH$ with energy at most $E_0 + \epsilon$.
    An $n$-qubit state $\vrho$ is an $\epsilon$-approximate global minimum of $\vH$ with failure probability $\leq \delta$ if $\Tr(\vP_{G+\epsilon}(\vH) \vrho) \geq 1 - \delta$.
\end{definition}
\begin{definition}[No suboptimal local minima]
    Given $\epsilon > 0$.
    We say an $n$-qubit Hamiltonian $\vH$ has no suboptimal $\epsilon$-approximate local minima with failure probability $\delta$ if any $\epsilon$-approximate local minimum~$\vrho$ of $\vH$ is an $\epsilon$-approximate global minimum of $\vH$ with failure probability $\leq \delta$, i.e., $\Tr(\vP_{G+\epsilon}(\vH) \vrho) \geq 1-\delta$.
\end{definition}

While the above definitions apply to any Hamiltonian $\vH$, in this work, we will focus on Hamiltonians with a gap $\Delta > 0$ between the minimum energy and the second minimum energy, also known as the spectral gap.
By definition of $\vP_{G+\epsilon}(\vH)$ and spectral gap $\Delta$, we have
\begin{equation}
    \epsilon < \Delta \implies \vP_{G+\epsilon}(\vH) = \vP_{G}(\vH).
\end{equation}
As we will almost always consider $\epsilon < \Delta$, any $\epsilon$-approximate global minimum is an $0$-approximate global minimum or \emph{exact} global minimum.

Convexity implies all local minima are global in classical optimization. In the following lemma, we present a sufficient condition for ensuring that all local minima are global in quantum systems.
As we can see, all we need is to check the negative gradient operator is sufficiently positive in the non-ground-state space $\vI - \vP_G$.
We will refer to this as the \emph{negative gradient condition}.

\begin{lemma}[A sufficient condition ensuring all local minima are global]
\label{lem:suff-cond-no-suboptimal}
    Given $\epsilon, \delta > 0$, an $n$-qubit Hamiltonian $\vH$, $m$ local jump operators $\{\vA^a\}_a$, and parameters $\beta, \tau \ge 0$.
    Let $\vP_G(\vH)$ be the projection onto the ground state space of $\vH$.
    If there exists $\balpha \in \BR^m_{\geq 0}$ with $\norm{\balpha}_1 = 1$, such that the negative gradient operator satisfies
    \begin{equation} \label{eq:suff-cond-no-suboptimal}
        \text{(negative gradient condition):}\quad -\sum_{a} \alpha_a \CL_a^{\dag \beta, \tau, \vH} [\vH] \succeq \frac{2 \epsilon}{\delta} (\vI - \vP_G(\vH)) - \epsilon \vI,
    \end{equation}
    then any $\epsilon$-approximate local minimum~$\vrho$ of the $n$-qubit Hamiltonian $\vH$ under thermal perturbations with inverse temperature~$\beta$, time scale $\tau$, and system-bath interactions generated by $\{\vA^a\}_a$ is an exact global minimum with failure probability $\leq \delta$. That is, $\Tr(\vP_G(\vH) \vrho) \geq 1 - \delta$.
\end{lemma}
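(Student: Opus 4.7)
The plan is to combine the necessary condition for a local minimum (Lemma \ref{lem:nece-QLM}) with the negative gradient condition (Eq.~\eqref{eq:suff-cond-no-suboptimal}), evaluated on the density operator $\vrho$. The key observation is that both sides of the hypothesis are operators, so we can sandwich the inequality by $\vrho$ and take traces to get a scalar inequality, which directly produces the desired bound on $\Tr(\vP_G(\vH)\vrho)$.

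First, I would translate the negative gradient bound into information about $\vrho$. Taking $\Tr(\cdot\, \vrho)$ of both sides of the operator inequality in Eq.~\eqref{eq:suff-cond-no-suboptimal} (using that $\vrho \succeq 0$ and $\Tr \vrho = 1$) gives
\begin{equation}
-\sum_{a} \alpha_a \, \Tr\!\left(\CL_a^{\dag \beta,\tau,\vH}[\vH]\,\vrho\right) \;\geq\; \tfrac{2\epsilon}{\delta}\,\bigl(1 - \Tr(\vP_G(\vH)\vrho)\bigr) - \epsilon .
\end{equation}
By the adjoint identity $\Tr(\CL_a^{\dag}[\vH]\vrho) = \Tr(\vH \CL_a[\vrho])$, the left-hand side equals $-\sum_a \alpha_a \Tr(\vH\CL_a[\vrho])$.

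Second, I would upper bound this same quantity using the local minimum hypothesis. Since $\vrho$ is an $\epsilon$-approximate local minimum, Lemma~\ref{lem:nece-QLM} gives $\|{\bm{\nabla}}^{-}_{\beta,\tau,\{\vA^a\}_a}(\vH,\vrho)\|_\infty \leq \epsilon$, which by Definition~\ref{def:energy-grad} means $-\Tr(\vH\CL_a^{\beta,\tau,\vH}[\vrho]) \leq \epsilon$ for every $a$ (trivially if the quantity is already non-positive, and directly from the $\ell_\infty$ bound otherwise). Since $\balpha \in \BR^m_{\geq 0}$ with $\|\balpha\|_1 = 1$, taking a convex combination yields
\begin{equation}
-\sum_a \alpha_a \,\Tr\!\left(\vH\CL_a^{\beta,\tau,\vH}[\vrho]\right) \;\leq\; \epsilon .
\end{equation}

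Finally, chaining the two inequalities gives $\tfrac{2\epsilon}{\delta}(1 - \Tr(\vP_G(\vH)\vrho)) - \epsilon \leq \epsilon$, which rearranges to $\Tr(\vP_G(\vH)\vrho) \geq 1 - \delta$, as required. There is no real obstacle here: once one recognizes the template (apply necessary condition + a convex combination + compare against the operator-valued gradient lower bound), the argument is essentially a one-line manipulation. The only minor subtlety is to make sure the sign conventions for the negative energy gradient versus the Heisenberg-picture operator $\CL_a^{\dag}[\vH]$ are handled consistently, which is what the $\max(\cdot, 0)$ in the definition of ${\bm{\nabla}}^-$ buys us — guaranteeing the inequality $-\Tr(\CL_a^{\dag}[\vH]\vrho) \leq \epsilon$ holds for each $a$ regardless of sign, so that averaging with nonnegative weights $\alpha_a$ preserves it.
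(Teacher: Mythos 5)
Your proposal is correct and follows essentially the same route as the paper's proof: invoke Lemma~\ref{lem:nece-QLM} to get $-\Tr(\CL_a^{\dag \beta,\tau,\vH}[\vH]\vrho) \leq \epsilon$ for every $a$, average with the nonnegative weights $\alpha_a$, and compare against the trace of the negative gradient condition evaluated on $\vrho$ to conclude $\Tr(\vP_G(\vH)\vrho)\geq 1-\delta$. The sign/convention point you flag (via the $\max(\cdot,0)$ in the definition of ${\bm{\nabla}}^-$) is handled identically, if implicitly, in the paper.
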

\begin{proof}
    From the necessary condition for local minima in Lemma~\ref{lem:nece-QLM}, any $\epsilon$-approximate local minimum~$\vrho$ of the $n$-qubit Hamiltonian $\vH$ under thermal perturbations with inverse temperature $\beta$, time scale $\tau$, and system-bath interactions generated by $\{\vA^a\}_a$ satisfies
    \begin{equation}
        -\Tr(\CL_a^{\dag \beta, \tau, \vH} [\vH]\vrho) \leq \epsilon \quad \text{for each}\quad a = 1, \ldots, m.
    \end{equation}
    Hence, from Eq.~\eqref{eq:suff-cond-no-suboptimal}, we have
    \begin{equation}
        \epsilon \geq -\sum_{a} \alpha_a \Tr(\CL_a^{\dag \beta, \tau, \vH}[\vH] \vrho) \geq \frac{2 \epsilon}{\delta} (1 - \Tr(\vP_G(\vH) \vrho)) - \epsilon.
    \end{equation}
    This immediately implies that $\Tr(\vP_G(\vH) \vrho) \geq 1 - \delta$.
\end{proof}

\section{Complexity of finding a local minimum in quantum systems}
\label{sec:complexity-main}

In this appendix, we formally present the main results of this paper shown earlier in \cref{sec:results-main} regarding the computational complexity of finding a local minimum in quantum systems.
We separate the results into two parts.
First, we look at the problem of finding a local minimum under local unitary perturbations (Def.~\ref{def:unitary-LU}), showing that the problem is classically trivial to solve.
Next, we look at the problem of finding a local minimum under low-temperature thermal perturbations (Def.~\ref{def:thermal-LM}). We will see that this problem is quantumly easy but classically hard to solve, establishing a promising candidate problem for quantum advantage.

\subsection{Finding a local minimum under local unitary perturbations}
\label{sec:main-unitary}

We begin with the first main result stating the problem of a local minimum under local unitary perturbations is classically trivial.
The main issue is that there is a large barren plateau (which consists of many local minima with high energy) in the quantum optimization landscape.
Hence, a classical algorithm can efficiently estimate the properties of a single local minimum.

\begin{theorem}[Classically easy to find a local minimum under local unitary perturbations; Restatement of Theorem~\ref{thm:classical-easy-local-unitary-informal}] \label{thm:classical-easy-local-unitary}
    Consider a large problem size $n$.
    There is a trivial classical algorithm that guarantees the following.
    Given error $\epsilon = 1 / \mathrm{poly}(n)$, an $n$-qubit local Hamiltonian~$\vH$ with $\norm{\vH}_\infty = \mathrm{poly}(n)$, $m$ local Hermitian operators $\{\vh^a\}_{a=1}^m$ with $m = \mathrm{poly}(n)$ and $\norm{\vh_a}_\infty = 1$, and a local observable $\vO$ with $\norm{\vO}_\infty \leq 1$.
    
    The classical algorithm runs in time $\mathcal{O}(1)$ and outputs a real value $v \in [-1, 1]$, such that $v$ is $\epsilon$-close to $\bra{\psi} \vO \ket{\psi}$ for an $\epsilon$-approximate local minimum $\ket{\psi}$ of the Hamiltonian $\vH$ under local unitary perturbations generated by $\{\vh^a\}_a$.
\end{theorem}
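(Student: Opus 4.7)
The plan is to exploit Lemma~\ref{lem:Haar-concentration} directly: since a Haar-random pure state is, with probability at least $1 - 1/2^{2^{n/4}}$, simultaneously (i) an $\epsilon$-approximate local minimum of $\vH$ under local unitary perturbations generated by $\{\vh^a\}_a$, and (ii) a state on which $\vO$ has expectation value within $\epsilon$ of $\Tr(\vO)/2^n$, such a state exists. Therefore the classical algorithm only needs to output the single deterministic value $v := \Tr(\vO)/2^n$; by (i) and (ii) this $v$ is $\epsilon$-close to $\bra{\psi}\vO\ket{\psi}$ for \emph{some} $\epsilon$-approximate local minimum $\ket{\psi}$, which is exactly what Definition~\ref{def:unitary-LU} requires.

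Next I would verify that $v$ can be computed in $\mathcal{O}(1)$ classical time. Since $\vO$ is a local observable, it acts nontrivially on only $k = \mathcal{O}(1)$ qubits and can be written as $\vO = \vO_{\mathrm{loc}} \otimes \vI_{\mathrm{rest}}$ with $\vO_{\mathrm{loc}}$ a $2^k \times 2^k$ matrix. Then
\begin{equation}
v = \frac{\Tr(\vO)}{2^n} = \frac{\Tr(\vO_{\mathrm{loc}}) \cdot 2^{n-k}}{2^n} = \frac{\Tr(\vO_{\mathrm{loc}})}{2^k},
\end{equation}
which is a sum of $2^k = \mathcal{O}(1)$ numbers. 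Clearly $|v| \leq \norm{\vO}_\infty \leq 1$, so $v \in [-1,1]$ as required. The runtime is independent of $n$, $\vH$, and the perturbation generators $\{\vh^a\}_a$; indeed the algorithm does not even look at $\vH$ or $\{\vh^a\}_a$.

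There is essentially no obstacle, because the work has already been done in Lemma~\ref{lem:Haar-concentration}; the only thing to be mildly careful about is that the hypothesis of Lemma~\ref{lem:Haar-concentration} requires $\epsilon \geq 1/2^{n/4}$ and large enough $n$, both of which are comfortably satisfied under our assumption $\epsilon = 1/\mathrm{poly}(n)$ for sufficiently large $n$ (small $n$ can be handled by a finite table lookup, still $\mathcal{O}(1)$). One should also remark that this result is conceptually a ``trivialization'' theorem rather than a positive algorithmic contribution: the barren-plateau phenomenon forces overwhelmingly many local minima to behave like the maximally mixed state, so outputting the maximally-mixed-state expectation $\Tr(\vO)/2^n$ is automatically a valid answer. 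This motivates discarding the local-unitary definition and turning to thermal perturbations in the sequel.
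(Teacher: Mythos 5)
Your proposal is correct and follows essentially the same route as the paper's proof: both invoke Lemma~\ref{lem:Haar-concentration} to guarantee the existence of an $\epsilon$-approximate local minimum whose expectation value of $\vO$ is $\epsilon$-close to $\Tr(\vO)/2^n$, and then output that value, computed in $\mathcal{O}(1)$ time from the local matrix representation of $\vO$. Your added remarks about the $\epsilon \geq 1/2^{n/4}$ hypothesis and small-$n$ lookup are fine but not needed beyond what the paper already handles implicitly.
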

\begin{proof}
    From Lemma~\ref{lem:Haar-concentration} given in Appendix~\ref{sec:barren-plateau-energy-unitary} characterizing local minima of $\vH$ under local unitary perturbations, with high probability, a state $\ket{\psi}$ sampled uniformly at random from the space of pure states is an $\epsilon$-approximate local minimum $\ket{\psi}$ of the local Hamiltonian $\vH$ under local unitary perturbations, and $\bra{\psi} \vO \ket{\psi}$ is $\epsilon$-close to $\Tr(\vO) / 2^n$.
    Hence, there exists an $\epsilon$-approximate local minimum $\ket{\psi}$ of $\vH$ under local unitary perturbations, and $\bra{\psi} \vO \ket{\psi}$ is $\epsilon$-close to $\Tr(\vO) / 2^n$.

    This characterization of local minima gives rise to the following trivial classical algorithm.
    Given a local observable $\vO$, represented by the subset $S$ of qubits $\vO$ acts on with $|S| = \mathcal{O}(1)$ and a $2^{|S| \times |S|}$ Hermitian matrix $\vO^*$.
    A classical algorithm can compute $\Tr(\vO) / 2^n$ efficiently by computing the trace of $\vO^*$ and dividing by $2^{|S|}$. This trivial classical algorithm runs in time $\mathcal{O}(1)$.
\end{proof}

\subsection{Finding a local minimum under thermal perturbations}
\label{sec:main-thermal}

We now turn to the second main result of this work, which shows that finding a local minimum under low-temperature thermal perturbations is easy with a quantum computer.
This is in contrast to the task of finding the ground state (global minimum), which is hard on quantum computers.
The formal statement is given below in Theorem~\ref{thm:quantum-easy-thermal}.

\begin{theorem}[Quantumly easy to find a local minimum under thermal perturbations; Restatement of Theorem~\ref{thm:quantum-easy-thermal-informal}] \label{thm:quantum-easy-thermal}
    Let $n$ be the problem size.
    There is a $\mathrm{poly}(n)$-time quantum algorithm that guarantees the following.
    Given error $\epsilon = 1 / \mathrm{poly}(n)$, inverse temperature $0 \leq \beta \leq \mathrm{poly}(n)$, time scale $\tau = \mathrm{poly}(n)$, an $n$-qubit local Hamiltonian~$\vH$ with $\norm{\vH}_\infty = \mathrm{poly}(n)$, $m$ local jump operators $\{\vA^a\}_{a=1}^m$ with $m = \mathrm{poly}(n)$, and a local observable $\vO$ with $\norm{\vO}_\infty \leq 1$.

    The quantum algorithm outputs a real value $v \in [-1, 1]$, such that $v$ is $\epsilon$-close to $\Tr(\vO \vrho)$ for an $\epsilon$-approximate local minimum $\vrho$ of $\vH$ under thermal perturbations with an inverse temperature $\beta$, a time scale $\tau$, and system-bath interactions generated by $\{\vA^a\}_{a}$.
\end{theorem}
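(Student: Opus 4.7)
The plan is to design a \emph{quantum thermal gradient descent algorithm} that mimics how a system cools in a bath, and show it terminates in polynomial time at an $\epsilon$-approximate local minimum. The algorithm maintains a quantum state $\vrho_k$ (initialized to any efficiently preparable state, e.g., $\ketbrat{0^n}$), and at each iteration $k$ performs two sub-routines. First, for each jump operator $\vA^a$ with $a=1,\ldots,m$, estimate the gradient component $g_a := \Tr(\vH\,\CL^{\beta,\tau,\vH}_a[\vrho_k])$ to additive precision $\epsilon/4$. This can be done by invoking the efficient thermal Lindbladian simulation algorithm of~\cite{Chen2023quantumthermal} to approximately prepare $\exp(t\,\CL^{\beta,\tau,\vH}_a)[\vrho_k]$ for a small $t$, measuring local terms of $\vH$ on the output, and finite-differencing against the baseline $\Tr(\vH\vrho_k)$. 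By \cref{prop:bound-ene-grad}, $|g_a|\le 3\norm{\vH}_\infty=\poly(n)$, so averaging $\poly(n,1/\epsilon)$ samples suffices by Hoeffding. Second, if every $-\hat g_a < \epsilon/2$, then (up to the estimation error) Lemma~\ref{lem:suff-QLM} certifies that $\vrho_k$ is an $\epsilon$-approximate local minimum and we halt. Otherwise let $a^\star$ be the index with the most negative estimate $-\hat g_{a^\star}\ge \epsilon/2$ and update $\vrho_{k+1} := \exp(t_\star\,\CL^{\beta,\tau,\vH}_{a^\star})[\vrho_k]$ for a carefully chosen step size $t_\star$, again realized via~\cite{Chen2023quantumthermal}.

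Next I would bound the number of iterations by showing that each update strictly decreases the energy by a polynomially-small amount. By Taylor's theorem for thermal perturbations (\cref{prop:taylor-thermal}),
\begin{equation}
\Tr(\vH\vrho_{k+1}) \;\le\; \Tr(\vH\vrho_k) + t_\star\, g_{a^\star} + \tfrac{1}{2} t_\star^2\, C,
\end{equation}
where $C := \norm{\vH}_\infty\cdot\norm{\CL^{\beta,\tau,\vH}_{a^\star}}_{1-1}^{\,2} = \poly(n,\beta,\tau)$ using the norm bounds inherited from the thermal-Lindbladian formalism. Choosing $t_\star = \Theta(\epsilon/C)$ and using $g_{a^\star}\le -\epsilon/4$ (accounting for estimation error) yields an energy drop of at least $\Omega(\epsilon^2/C) = 1/\poly(n,1/\epsilon,\beta,\tau)$ per iteration. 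Since $|\Tr(\vH\vrho)|\le \norm{\vH}_\infty = \poly(n)$ throughout, the algorithm must halt within $T=\poly(n,1/\epsilon,\beta,\tau)$ iterations. Upon termination at a state $\vrho_{\mathrm{out}}$ satisfying the hypothesis of Lemma~\ref{lem:suff-QLM}, estimate $\Tr(\vO\vrho_{\mathrm{out}})$ to precision $\epsilon$ by re-running the entire pipeline $\poly(1/\epsilon)$ times (using the recorded sequence of jump indices $a^\star_1,\ldots,a^\star_T$ to deterministically reproduce the same final state up to small error) and measuring the local observable $\vO$ directly. Summing costs gives the claimed runtime $\poly(n,1/\epsilon,\beta,\tau)$.

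The main obstacle I expect is the careful accounting of approximation errors. The analytic quantity controlling local-minimum-hood in Lemma~\ref{lem:suff-QLM} is the exact infinitesimal gradient $\Tr(\vH\,\CL^{\beta,\tau,\vH}_a[\vrho])$, whereas the algorithm only has access to a finite-time, finite-precision simulation of the thermal Lindbladian. Ensuring the finite-difference estimator $\hat g_a$ faithfully tracks the analytic gradient to within $\epsilon/4$ requires a second-derivative (smoothness) bound on $t\mapsto \Tr(\vH\,\e^{t\CL^{\beta,\tau,\vH}_a}[\vrho])$, for which I would import the smoothness estimates for thermal Lindbladians developed in~\cite{Chen2023quantumthermal}. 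A coupled subtlety is that per-step simulation errors must not spoil the monotone energy-decrease argument; this forces a joint tuning of $t_\star$ and the per-step simulation precision (both $1/\poly(n,1/\epsilon,\beta,\tau,T)$), and exploits crucially the normalization $\norm{\vA^{a\dag}\vA^a}_\infty\le 1$ from Eq.~\eqref{eq:normalization-vAa} so that all relevant constants remain polynomial rather than exponential in $n$.
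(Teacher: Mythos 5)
Your proposal is correct and follows essentially the same route as the paper's proof: quantum thermal gradient descent with a per-step energy decrease established via Taylor's theorem (Prop.~\ref{prop:taylor-thermal}), termination certified by the sufficient condition of Lemma~\ref{lem:suff-QLM}, a polynomial iteration bound from the boundedness of $\Tr(\vH\vrho)$, and final estimation of $\Tr(\vO\vrho)$ by re-preparing the output state from the recorded descent sequence. The only substantive difference is the gradient-estimation subroutine: the paper block-encodes the energy gradient operator $\CL_a^{\dag\beta,\tau,\vH}[\vH]$ via QSVT and a Hadamard test (Lemma~\ref{lem:measure-energy-grad}), whereas you use finite differencing of the energy under a short simulated Lindblad evolution, which is also viable (at a higher sample cost) and requires exactly the second-derivative smoothness control you flag, which follows from Prop.~\ref{prop:taylor-thermal} together with the norm bounds in Props.~\ref{prop:bound-ene-grad} and~\ref{prop:norm-thermal-lindblad}.
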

\begin{proof}[Proof idea]
We consider a version of gradient descent, which we refer to as \emph{Quantum thermal gradient descent}, that mimics how Nature cools the quantum system when the system is interacting locally and weakly with a low-temperature heat bath.
The algorithm starts with an arbitrary initial state $\vrho_0$.
For each step $t = 0, 1, 2, \ldots$, the algorithm considers the current state $\vrho_t$ and proposes the next state $\vrho_{t+1}$.
The tangent space $T_{\vrho_t}^{\beta, \tau, \vH}$ at $\vrho_t$ is high dimensional with many possible directions/dynamics depending on the system-bath interaction.
The algorithm chooses a direction that lowers the energy as fast as possible by computing the gradient of the energy and proposes $\vrho_{t+1}$ by performing gradient descent.
As long as the current state $\vrho_t$ is not an $\epsilon$-approximate local minimum of $\vH$ under thermal perturbations, the energy will decrease by a sufficiently large amount 
\begin{equation}
    \Tr(\vH \vrho_{t+1}) < \Tr(\vH \vrho_{t}) - \frac{1}{\mathrm{poly}(n)}.
\end{equation}
Because the energy is bounded from below, there are, at most, a polynomial number of steps $t \leq \mathrm{poly}(n)$ until the algorithm arrives at an $\epsilon$-approximate local minimum of $\vH$ under thermal perturbations.
The detailed proof of Theorem~\ref{thm:quantum-easy-thermal} is given in Appendix~\ref{sec:proof-thm-quantum-easy-thermal}.
\end{proof}

Finally, we turn to the third main result establishing the difficulty of finding a local minimum under thermal perturbations using a classical computer.
To establish this result, we consider a class of geometrically local Hamiltonians $\{\vH_C\}_C$ on 2D lattices.
Each Hamiltonian $\vH_C$ corresponds to a 2D circuit $\vU_{C}=\vU_T\cdots\vU_2 \vU_1$ acting on $n$ qubits with $T = 2 t_0 + L = \mathrm{poly}(n)$ gates as constructed in Fig. 1 of \cite{OliveiraTerhal} with the additional padding to the construction in \cite{OliveiraTerhal} such that the first and last $t_0 = cL^2$ gates being the identity gates for a constant $c=\CO(1)$.
The construction in \cite{OliveiraTerhal} has the property that each gate of the 2D circuit $\vU_C$ is geometrically adjacent to the subsequent gate.

Given the 2D circuit $\vU_{C}$ on $n$ qubits with $T$ gates.
The geometrically local Hamiltonian $\vH_C$ acts on $n + T$ qubits on a 2D lattice and has a highly-entangled unique ground state that encodes the quantum computation based on the 2D circuit $\vU_C$,
\begin{equation}
    \ket{\eta_{\bm0}} = \sum_{t=0}^T \sqrt{\xi_t}\big(\vU_t \cdots \vU_1 \ket{0^n}\big) \otimes \ket{0^t 1^{T-t}},
    \qquad
    \text{where} \quad
    \xi_t := \frac{1}{2^T}  \binom{T}{t}.
\end{equation}
We present the detailed construction of the 2D Hamiltonian $\vH_C$ in \cref{def:circuit-H} in Appendix~\ref{sec:universal-quantum-computation}.
We have the following proposition for estimating single-qubit observables on the ground state of $\vH_C$.
\begin{proposition}[$\mathsf{BQP}$-hardness for estimating properties of the ground state of $\vH_C$] \label{prop:BQP-hardness-estimate-prop-VHC}
    If there is a classical algorithm that can estimate any single-qubit observable on the unique ground state of the geometrically local Hamiltonian $\vH_C$ in time polynomial in the number of qubits in $\vH_C$ to error $1 / 4$ for any $\vH_C$ in the class, then $\mathsf{BPP} = \mathsf{BQP}$.
\end{proposition}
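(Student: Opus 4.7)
The plan is to reduce an arbitrary $\mathsf{BQP}$ computation to evaluating a single-qubit expectation on $\ket{\eta_{\bm 0}}$, by engineering the binomially weighted history state so that most of its clock weight falls on values for which the unpadded circuit has already finished. First, given any language $L\in\mathsf{BQP}$ I would start from a standard polynomial-size decision circuit, amplify its success probability to $1-\eta$ by parallel repetition plus majority vote for a small absolute constant $\eta$, and lay it out in the nearest-neighbor 2D form of Oliveira--Terhal to obtain a circuit $\vU_{C'}$ of length $L=\poly(n)$. Padding with $t_0=cL^2$ identity gates on each side produces the circuit $C$ accepted by Definition~\ref{def:circuit-H}, with $T=2t_0+L$, and hence the Hamiltonian $\vH_C$ and its unique ground state $\ket{\eta_{\bm 0}}$.

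The key structural observation will be that the clock substrings $\ket{0^t 1^{T-t}}$ are mutually orthogonal, so the reduced state on the computational register is the classical mixture $\rho_{\mathrm{comp}}=\sum_{t=0}^T \xi_t \ketbra{\psi_t}{\psi_t}$ with $\ket{\psi_t}=\vU_t\cdots\vU_1\ket{0^n}$. For $t\le t_0$ this equals $\ket{0^n}$, for $t\ge t_0+L$ it equals the full output state $\vU_{C'}\ket{0^n}$, and for the $L-1$ transient values in between it can be arbitrary. Choosing $\vO=\vZ_1$ on the output qubit then splits $\Tr(\vZ_1\rho_{\mathrm{comp}})$ into a ``prep'' bucket (contributing $+1$ on its support), a ``transient'' bucket bounded in $[-1,1]$, and a ``post'' bucket contributing $1-2p^*$, where $p^*$ is the YES-probability of the amplified circuit.

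To extract the signal I would use binomial concentration. Since $\xi_t$ has standard deviation $\Theta(\sqrt{T})=\Theta(L\sqrt{c})$ about the mean $T/2=t_0+L/2$, and the pointwise bound $\xi_t=\CO(1/\sqrt{T})$ holds uniformly in $t$, the transient window of width $L$ carries total weight $\CO(L/\sqrt{T})=\CO(1/\sqrt{c})$; the two end buckets then each carry weight $\tfrac{1}{2}\pm\CO(1/\sqrt{c})$ by normalization and the symmetry of $\binom{T}{\cdot}$. Consequently $\Tr(\vZ_1\rho_{\mathrm{comp}})=(1-p^*)\pm\CO(1/\sqrt{c})$. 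Taking $c$ and $\eta$ to be sufficiently small absolute constants forces this quantity into $[0,\tfrac{1}{4}]$ on YES instances and $[\tfrac{3}{4},1]$ on NO instances, so an additive-$1/4$ estimate produced by the hypothesized classical algorithm (run on the $(n+T)$-qubit Hamiltonian $\vH_C$ and the operator $\vZ_1$) decides $L$ in classical polynomial time, yielding $\mathsf{BQP}\subseteq\mathsf{BPP}$ and hence $\mathsf{BPP}=\mathsf{BQP}$.

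The main obstacle I anticipate is controlling the transient bucket: the $L-1$ partially completed circuit states are fully adversarial in principle, and $\vZ_1$ on them may take any value in $[-1,1]$, so the entire reduction rests on the quadratic padding $t_0=cL^2$ dominating the circuit length $L$ relative to the clock's concentration scale $\Theta(\sqrt{T})$. With a shorter pad, or with a clock distribution that did not concentrate on length scale $\gg L$, the transient could drown out the signal. A secondary subtlety is checking that the success-probability amplification preserves the 2D nearest-neighbor geometry required by Definition~\ref{def:circuit-H}; I expect this to follow from running $\CO(1)$ copies in parallel disjoint strips and routing outputs via a short SWAP ladder, but the constant $c$ in the padding must be chosen to absorb this bookkeeping overhead.
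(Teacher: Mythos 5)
There is a genuine gap, and it sits exactly at the step you flagged as the crux. Your transient bucket $(t_0, t_0+L)$ is centered precisely at the mean $T/2 = t_0 + L/2$ of the binomial clock distribution, whose standard deviation is $\Theta(\sqrt{T}) = \Theta(\sqrt{c}\,L)$. Your bound $\CO(L/\sqrt{T}) = \CO(1/\sqrt{c})$ on the transient weight is therefore only useful when $c$ is a sufficiently \emph{large} constant, yet you conclude by ``taking $c$ \ldots to be sufficiently small.'' With $c$ small the situation is the opposite of what you need: the clock spread $\Theta(\sqrt{c}\,L)$ is much smaller than the window width $L$, so essentially \emph{all} of the weight $\xi_t$ falls inside the transient window, the ``prep'' and ``post'' buckets each carry only $\e^{-\Omega(1/c)}$ weight rather than $\tfrac12 \pm \CO(1/\sqrt{c})$, and $\Tr(\vZ_1\vrho_{\mathrm{comp}})$ is dominated by the adversarial mid-circuit values. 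As written, the reduction collapses. The fix within your strategy is to take $c$ a large constant (and then you must check this choice is compatible with the class of Hamiltonians used elsewhere — the landscape theorem tolerates any constant $c$, but the paper's own parameter choice is a small $c$), so your proof does not go through for the class as the paper instantiates it without revisiting that choice.

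The paper's proof avoids this tension entirely and works with small $c$: instead of trying to suppress the weight of the whole computation window, it pads identity gates at the \emph{end} of the $L$-gate computational block so that the measured qubit $j$ is last touched at time $T_j \le cL^2 + L/4$, i.e., at least $L/4$ \emph{below} the binomial mean $cL^2 + L/2$. For all $t > T_j$ the value of $\vZ_j$ is frozen at its final value, so the only error term is the binomial weight at $t \le T_j$, which Hoeffding bounds by $\e^{-L/(8+16cL)}$ — exponentially small for a \emph{small} constant $c$ (they take $c \le 1/(16\ln 18) - 1/(2L)$ to get error $\le 1/18$). In other words, the paper does not need the mid-computation weight to be small at all (indeed the construction elsewhere relies on each mid-computation time having $\Omega(1/T)$ weight); it only needs the weight \emph{before the last gate touching the output qubit} to be small, which is arranged by where the identities are placed rather than by how wide the clock distribution is. You would either adopt that device, or flip your constant to ``sufficiently large $c$'' and redo the compatibility bookkeeping.
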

\begin{proof}
Consider the single-qubit observable $\vZ_{j}$ and let $T_j$ be the last time that qubit $j$ is acted by a gate in the circuit $C$.
The ground state expectation of $\vZ_j$ is
\begin{align}
\braket{\eta_{\bm0}| \vZ_j |\eta_{\bm0}} &= \sum_{t = T_j + 1}^T \xi_t \braket{0^n|\vU_1^\dag \cdots \vU_{t}^\dag \vZ_j \vU_t \cdots \vU_1|0^n} + \sum_{t = 0}^{T_j} \xi_t \braket{0^n|\vU_1^\dag \cdots \vU_{t}^\dag \vZ_j \vU_t \cdots \vU_1|0^n} \nonumber\\
&= \braket{0^n|\vU_1^\dag \cdots \vU_T^\dag \vZ_j \vU_T \cdots \vU_1|0^n} P_{t > T_j} + \epsilon_j,\\
\text{where} &\quad P_{t > T_j} := \sum_{t > T_j} \xi_t, \quad \epsilon_{j} := \sum_{t \le T_j} \xi_t \braket{0^n|\vU_1^\dag \cdots \vU_t^\dag \vZ_j \vU_t \cdots \vU_1|0^n}.
\end{align}
We have used the fact that $U_t$ for $t > T_j$ acts like identity on the $j$-th qubit.
Note that
\begin{equation}
    |\epsilon_{j}| \le 1 - P_{t > T_j} =: P_{t \leq T_j}.
\end{equation}
We can make $\epsilon_j$ arbitrarily small using a tail bound on the binomial distribution.
Given any circuit, one could always pad more identity gates to form an $L$-gate circuit, such that the last 3/4 of the $L$ gates are identity.
Recall that $T = 2 t_0 + L = 2 c L^2+L$. Then $T_j \le cL^2+L/4$ and we have $|\epsilon_j| \le P_{t \leq T_j} \le P_{t \leq c L^2+L/4} $.

Using Hoeffding's inequality, we can bound the probability of sampling a time $t$, such that $t \le c L^2+L/4$, according to the Binomial distribution $\{\xi_t\}_{t=0}^T$. This yields,
\begin{equation}
    |\epsilon_j| \le \exp\left[-2T\left(\frac12 - \frac{cL^2+L/4}{2cL^2+L}\right)^2\right] = \e^{- \frac{L}{8+16 cL}}.
\end{equation}
By choosing a small constant $c\le 1/(16\ln18) - 1/(2L)$, we have $|\epsilon_j| \le 1 / 18$ and $P_{t > T_j} \ge 17 / 18$.
Because of the bounds on the error $|\epsilon_j|$ and the probability $P_{t > T_j}$, a classical algorithm satisfying the assumption of the proposition can determine whether
\begin{equation}
    \bra{0^n} \vU_1^\dag \cdots \vU_T^\dag Z_j \vU_T \cdots \vU_1 \ket{0^n} > 1 / 3 \quad \mathrm{or} \quad \bra{0^n} \vU_1^\dag \cdots \vU_T^\dag Z_j \vU_T \cdots \vU_1 \ket{0^n} < -1 / 3,
\end{equation}
for any 2D circuit $\vU_C$ with $T = 2 t_0 + L = 2 c L^2 + L$ gates, where the first $t_0$ and and the last $(3/4)L + t_0$ gates are identity.
Because one could think of the circuit $\vU_C$ as having $L/4$ gates for any $L = \mathrm{poly}(n)$, this immediately implies that a polynomial-time classical algorithm can decide whether the expectation value of $Z_j$ on the output state is $> 1/3$ or $< -1/3$ for any polynomial-size 2D circuit where all consecutive gates are adjacent in the 2D geometry.

A 2D circuit $\vU_C$ such that any gate is adjacent to the subsequent gate can be constructed from any quantum circuit without the 2D constraint, such that a single-qubit observable $Z_i$ on the output of the original quantum circuit corresponds to a single-qubit observable $Z_j$ on the output of the 2D circuit.
As a result, any polynomial-time classical algorithm that can determine whether the expectation value of $Z_j$ on $\vU_t \cdots \vU_1 \ket{0^n}$ is greater than $1/3$ or smaller than $-1/3$ can be used to simulate any polynomial-time quantum algorithm for solving decision problems in classical polynomial time. Hence, $\mathsf{BPP} = \mathsf{BQP}$.
\end{proof}

Using a series of mathematical techniques presented in Appendix~\ref{sec:characterize-neg-grad-condition} for characterizing whether all local minima are global minima in a many-body Hamiltonian, we prove that all local minima $\vH_C$ are close to the unique ground state $\ket{\eta_{\bm0}}$ in Theorem~\ref{thm:no-suboptimal-local-minima-informal2}.
This theorem is the most involved technical contribution of this work.
Intuitively, one can think of the energy landscape of the 2D Hamiltonian $\vH_C$ over the space of $n$-qubit density matrices under low-temperature thermal perturbations to have a good bowl shape.
This is in stark contrast to the energy landscape under local unitary perturbations, where the landscape always contains an overwhelmingly large barren plateau causing the problem of finding local minima to be classically easy.
Furthermore, this theorem shows that a low-temperature cooling can always find a state close to the ground state irrespective of where we initialize the state in the exponentially large quantum state space.

\begin{theorem}[All local minima are global in $\mathsf{BQP}$-hard Hamiltonians; Restatement of Theorem~\ref{thm:no-suboptimal-local-minima-informal}] \label{thm:no-suboptimal-local-minima-informal2}
Let $\vP_G(\vH_C) = \ketbrat{\eta_{\bm0}}$ be the ground state of the 2D Hamiltonian $\vH_{C}$ acting on $n + T = \mathrm{poly}(n)$ qubits.
There is a choice of $m = \poly(n)$ two-qubit jump operators $\{\vA^a\}_a$ satisfying the following.

Given $0 < \delta < 1$. For any small error $\epsilon = 1 / \poly(n,1/\delta)$, any $\epsilon$-approximate local minimum $\vrho$ of $\vH_{C}$ under thermal perturbations with a large inverse temperature $\beta = \poly(n, 1/\delta)$, a large time scale $\tau = \poly(n, 1/\delta)$, and system-bath interactions generated by $\{\vA^a\}_a$ is an exact global minimum with high probability, i.e., we have $\Tr(\vP_G(\vH_C) \vrho) \geq 1 - \delta$.
\end{theorem}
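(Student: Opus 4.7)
The plan is to verify the negative gradient condition from \cref{lem:suff-cond-no-suboptimal}: exhibit a unit vector $\hat{\balpha}\in\BR^m_{\ge 0}$ and a constant $r=1/\poly(n,1/\delta)$ with
\begin{equation}
-\sum_{a=1}^m \hat{\alpha}_a\,\CL_a^{\dag\beta,\tau,\vH_C}(\vH_C) \;\succeq\; r\bigl(\vI-\vP_G(\vH_C)\bigr).
\end{equation}
Once this holds for $\epsilon$ small enough compared to $r\delta$, the conclusion $\Tr(\vP_G(\vH_C)\vrho)\ge 1-\delta$ follows immediately from \cref{lem:suff-cond-no-suboptimal}. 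The choice of jump operators will be the natural two-qubit set that can move probability between adjacent clock basis states $\ket{0^t 1^{T-t}}$ and can apply local Paulis on the computation register; we need them to be rich enough that the thermal Lindbladian can cool toward $\ket{\eta_{\bm 0}}$ from every excited subspace.

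I would follow the perturbative three-Hamiltonian scheme sketched in the introduction: take $\vH_1$ to be the dominant \emph{clock Hamiltonian} enforcing the unary-clock structure on the $T$ clock qubits, so $\vP_1$ projects onto states of the form $\ket{\phi}\otimes\ket{0^t1^{T-t}}$; take $\vH_2=\vH_1+(\vH_C-\vH_1-\vH_{\mathrm{in}})$ to add the \emph{propagation} terms with their binomial-weighted history-state structure, so $\vP_2$ projects onto the history states over arbitrary $\ket{\phi}$; and finally $\vH_3=\vH_C$ with $\vP_3=\ketbrat{\eta_{\bm 0}}$ after adding $\vH_{\mathrm{in}}$ which pins $\ket{\phi}=\ket{0^n}$. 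The key design choice is that the energy scales are separated, $\norm{\vH_1}\gg\norm{\vH_2-\vH_1}\gg\norm{\vH_3-\vH_2}$, so that the Bohr-frequency spectrum of $\vH_k$ is a small perturbation of that of $\vH_{k-1}$, and the binomial coefficients $\xi_t=\binom{T}{t}/2^T$ ensure that the Bohr-frequency gap of each $\vH_k$ is $1/\poly(n)$ rather than exponentially small.

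Concretely, for $\vH_1$ a direct calculation in the operator Fourier/secular picture of \cref{sec:OFT} shows that the Davies-type piece of $\CL^{\dag\beta,\tau,\vH_1}_a$ acts on each Bohr-frequency sector and, after weighting by the cooling factor $\gamma_\beta(\omega)$, gives $-\sum_a\hat\alpha_a\CL_a^{\dag\beta,\tau,\vH_1}(\vH_1)\succeq r_1(\vI-\vP_1)$ for $r_1=1/\poly(n,\beta,\tau)$; this is the ``easy'' step because $\vH_1$ is essentially a commuting, classical-clock Hamiltonian. I would then transport this bound through the two perturbations by invoking the stability results for the energy gradient announced in \cref{thm:mono_gradient_full}: provided $\norm{\vH_{k+1}-\vH_k}$ is small compared to the Bohr-frequency gap $\Delta_\nu(\vH_k)$ (not merely the spectral gap), the gradient operator perturbs continuously and the negative-definite bound on $\vI-\vP_k$ degrades only mildly. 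Iterating this twice, first to descend into $\vP_2$ (history states, where energy gradients from the propagation terms pick out the correct $\vU_t\cdots\vU_1\ket{\phi}$ structure) and then into $\vP_3=\ketbrat{\eta_{\bm 0}}$ (where $\vH_{\mathrm{in}}$ supplies a strictly negative gradient on the $\ket{\phi}\neq\ket{0^n}$ subspace), yields the desired $r(\vI-\vP_G(\vH_C))$ bound with $r=1/\poly(n,1/\delta)$.

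The main obstacle — and the reason the paper calls this its most technically involved result — is the perturbative stability step, because the thermal Lindbladian $\CL^{\beta,\tau,\vH}_a$ depends on the full spectrum of $\vH$ through the operator Fourier transform $\hat{\vA}^a(\omega)$ and through $\gamma_\beta(\omega)$, so small perturbations of $\vH$ can reshuffle Bohr frequencies and mix sectors that were previously decoupled by the secular approximation. Controlling this requires quantitative smoothness bounds on $\hat{\vA}^a(\omega)$ and $\gamma_\beta(\omega)$ as functions of $\omega$, applied on scales set by the Bohr-frequency gap rather than the ordinary spectral gap; this is exactly where the binomial weighting $\xi_t$ pays off, because it spreads eigenvalues evenly enough that $\Delta_\nu(\vH_k)=1/\poly(n)$ at each stage. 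Once the stability lemmas (\cref{sec:characterize-neg-grad-condition}, proved in \cref{sec:monotone_gradient}) are in place, everything else is bookkeeping of polynomial factors in $n$, $\beta$, $\tau$, and $1/\delta$.
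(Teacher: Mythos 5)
Your proposal follows essentially the same route as the paper's proof: reduce to the negative gradient condition of \cref{lem:suff-cond-no-suboptimal}, analyze the sequence $\vH_{\clock} \rightarrow \vH_{\clock}+\vH_{\prop} \rightarrow \vH_{C}$ by computing the gradient for the commuting clock Hamiltonian directly and transporting it through the two perturbations via the Bohr-frequency-gap-controlled monotonicity results (\cref{thm:mono_gradient_full} and its subspace corollaries), with the binomial weighting guaranteeing $1/\poly$ Bohr-frequency gaps at each stage. The only refinement beyond your sketch is that the final stage requires the off-block-diagonal variant of subspace monotonicity (\cref{cor:off_diag_mono}), since $\vH_{\inp}$ couples the history-state subspace to its complement and plain subspace monotonicity gives too-loose error bounds, together with the subspace-to-global lifting of \cref{cor:PLP_and_L}.
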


The proof of Theorem~\ref{thm:no-suboptimal-local-minima-informal2} is given in Appendix~\ref{sec:universal-quantum-computation}.
To show that the landscape has a good bowl shape, we utilize the negative gradient condition given in Appendix~\ref{sec:Hamil-suboptimal-local-minima}.
However, the negative energy gradient operator is not easy to study.
To establish this strong claim, we give a series of techniques in Appendix~\ref{sec:characterize-neg-grad-condition} for characterizing negative energy gradient operator in few-qubit systems, in commuting Hamiltonians, and in perturbed Hamiltonians.
These technical tools can also be used to understand the energy landscape in other interacting many-body Hamiltonians.

While finding a local minimum under local unitary perturbations is classically easy,
the characterization of the energy landscape in these $\mathsf{BQP}$-hard Hamiltonians $\vH_C$ implies that finding a local minimum under thermal perturbations is \emph{universal for quantum computation} and is hence classically hard if $\mathsf{BPP} \neq \mathsf{BQP}$.
Recall that $\mathsf{BPP} = \mathsf{BQP}$ implies that all single-qubit measurements of all polynomial-size quantum circuits can be simulated in polynomial time on a classical computer.
Since one expects some quantum circuits to be hard to simulate on a classical computer, Theorem~\ref{thm:classical-hard-LM-thermal} implies that finding a local minimum under thermal perturbations is classically hard.

\begin{theorem}[Classically hard to find a local minimum under thermal perturbations; Restatement of Theorem~\ref{thm:classical-hard-LM-thermal-informal}] \label{thm:classical-hard-LM-thermal}
    Let $n$ be the problem size.
    Suppose there is a $\mathrm{poly}(n)$-time classical algorithm guaranteeing the following.
    Given error $\epsilon = 1 / \mathrm{poly}(n)$, inverse temperature $0 \leq \beta \leq \mathrm{poly}(n)$, time scale $0 \leq \tau \leq \mathrm{poly}(n)$, an $n$-qubit local Hamiltonian~$\vH$ with $\norm{\vH}_\infty = \mathrm{poly}(n)$, $m$ local jump operators $\{\vA^a\}_{a=1}^m$ with $m = \mathrm{poly}(n)$, and a single-qubit observable $\vO$ with $\norm{\vO}_\infty \leq 1$.
    
    The classical algorithm outputs a real value $v \in [-1, 1]$, such that $v$ is $\epsilon$-close to $\Tr(\vO \vrho)$ for an $\epsilon$-approximate local minimum $\vrho$ of the Hamiltonian $\vH$ under thermal perturbations with an inverse temperature $\beta$, a time scale $\tau$, and system-bath interactions generated by $\{\vA^a\}_{a}$.
    Then $\mathsf{BPP} = \mathsf{BQP}$.
\end{theorem}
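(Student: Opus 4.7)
The plan is to give a Karp-style polynomial-time reduction from an arbitrary $\mathsf{BQP}$-complete problem to the promise problem that the assumed classical algorithm solves. The two key ingredients are already available: Theorem~\ref{thm:no-suboptimal-local-minima-informal2}, which says that the 2D circuit-Hamiltonians $\vH_C$ have no suboptimal local minima under thermal perturbations with suitable $\beta,\tau,\epsilon$ and the appropriate jump operators, and Proposition~\ref{prop:BQP-hardness-estimate-prop-VHC}, which says that estimating single-qubit observables on the ground state $\ket{\eta_{\bm 0}}$ of $\vH_C$ is $\mathsf{BQP}$-hard via a reading-off argument on the $\vZ_j$-measurement of a padded 2D circuit.

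First, I would fix an arbitrary language $L\in\mathsf{BQP}$. By standard $\mathsf{BQP}$ amplification and the 2D-compilation used in Proposition~\ref{prop:BQP-hardness-estimate-prop-VHC}, every input $x$ can be mapped in classical polynomial time to a geometrically local 2D quantum circuit $\vU_C=\vU_T\cdots\vU_1$ on $n=\poly(|x|)$ qubits with $T=2t_0+L=\poly(|x|)$ gates, padded so that the first $t_0$ and the last $(3/4)L+t_0$ gates are identity, and a designated output qubit $j$ such that $x\in L$ iff $\braket{0^n|\vU_C^\dag \vZ_j\vU_C|0^n}>1/3$ and $x\notin L$ iff $\braket{0^n|\vU_C^\dag \vZ_j\vU_C|0^n}<-1/3$. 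I then construct $\vH_C$ on $n+T$ qubits as in Definition~\ref{def:circuit-H} and take the two-qubit jump-operator set $\{\vA^a\}_{a=1}^m$ guaranteed by Theorem~\ref{thm:no-suboptimal-local-minima-informal2}.

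Next I choose the parameters. Fix a small constant $\delta_0$ (say $1/100$). Theorem~\ref{thm:no-suboptimal-local-minima-informal2} then supplies $\beta,\tau=\poly(n,1/\delta_0)=\poly(n)$ and $\epsilon=1/\poly(n,1/\delta_0)=1/\poly(n)$ such that every $\epsilon$-approximate local minimum $\vrho$ of $\vH_C$ under thermal perturbations with these parameters satisfies $\Tr(\vP_G(\vH_C)\vrho)\ge 1-\delta_0$, i.e.\ $\vrho$ has trace-distance at most $\sqrt{\delta_0}$ from the pure ground state $\ketbrat{\eta_{\bm 0}}$ by Fuchs--van de Graaf. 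I invoke the assumed classical algorithm on the single-qubit observable $\vO=\vZ_j$ (which is a local Hamiltonian term inside $\vH_C$, hence available as input) with the chosen $\beta,\tau,\epsilon$; it returns in $\poly(n)$ time a number $v\in[-1,1]$ with $|v-\Tr(\vZ_j\vrho)|\le\epsilon$ for some such $\vrho$. By the trace-distance bound, $|\Tr(\vZ_j\vrho)-\braket{\eta_{\bm 0}|\vZ_j|\eta_{\bm 0}}|\le 2\sqrt{\delta_0}$, and the algebra in the proof of Proposition~\ref{prop:BQP-hardness-estimate-prop-VHC} shows $\braket{\eta_{\bm 0}|\vZ_j|\eta_{\bm 0}}=P_{t>T_j}\braket{0^n|\vU_C^\dag \vZ_j\vU_C|0^n}+\epsilon_j$ with $P_{t>T_j}\ge 17/18$ and $|\epsilon_j|\le 1/18$ for our padding. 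Choosing $\delta_0$ and $\epsilon$ sufficiently small (still $1/\poly(n)$) therefore forces $v>1/9$ when $x\in L$ and $v<-1/9$ when $x\notin L$, so thresholding $v$ decides $L$ in classical polynomial time. Since $L\in\mathsf{BQP}$ was arbitrary (and the whole pipeline is itself deterministic polynomial time modulo the hypothesized algorithm, which may be randomized), we conclude $\mathsf{BQP}\subseteq\mathsf{BPP}$, and combined with the trivial $\mathsf{BPP}\subseteq\mathsf{BQP}$ inclusion this gives $\mathsf{BPP}=\mathsf{BQP}$.

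The main obstacle is purely bookkeeping: one must verify that the parameter dependencies in Theorem~\ref{thm:no-suboptimal-local-minima-informal2} — $\beta,\tau$ polynomial in $n$ and $1/\delta$, and $\epsilon$ inverse-polynomial in the same — are compatible with the polynomial-time input specification required by the hypothesized classical algorithm (in particular that $\vH_C$ can be written as a sum of $O(\poly n)$ few-body Hermitian terms of bounded norm, which is immediate from the construction). Once those bookkeeping constraints are met, the chain ``local minimum $\Rightarrow$ close to ground state $\Rightarrow$ reads off $\braket{0^n|\vU_C^\dag\vZ_j\vU_C|0^n}$'' is essentially forced by the two cited results, and no further technical work is needed.
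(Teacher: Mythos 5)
Your proposal is correct and follows essentially the same route as the paper's proof: feed $\vH_C$ with the jump operators and parameters from Theorem~\ref{thm:no-suboptimal-local-minima-informal2} to the hypothesized classical algorithm, use Fuchs--van de Graaf to convert high ground-state overlap into closeness of $\Tr(\vZ_j\vrho)$ to $\braket{\eta_{\bm0}|\vZ_j|\eta_{\bm0}}$, and invoke the $\mathsf{BQP}$-hardness of estimating that ground-state observable (Proposition~\ref{prop:BQP-hardness-estimate-prop-VHC}) to conclude $\mathsf{BPP}=\mathsf{BQP}$. The only cosmetic difference is that you partially unpack the reading-off argument of Proposition~\ref{prop:BQP-hardness-estimate-prop-VHC} inline and track the constants ($\delta_0$, thresholds) slightly differently, which is just bookkeeping.
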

\begin{proof}
    Assuming the existence of a polynomial-time classical algorithm that satisfies the properties stated in the theorem.
    Apply this classical algorithm to the 2D Hamiltonian $\vH_C$ considered in Theorem~\ref{thm:no-suboptimal-local-minima-informal2} with a sufficiently small approximation error $\epsilon$, such that any $\epsilon$-approximate local minimum $\vrho$ of $\vH_{C}$ under thermal perturbations with polynomially-large $\beta$, $\tau$ and system-bath interactions generated by $\{\vA^a\}_a$ is an exact global minimum with high probability, i.e.,
    \begin{equation}
        \bra{\eta_{\bm0}} \vrho \ket{\eta_{\bm0}} = \Tr(\vP_G(\vH_C) \vrho) \geq 1 - \frac{1}{16^2},
    \end{equation}
    where $\ket{\eta_{\bm0}}$ is the unique ground state of $\vH_C$.
    We further consider $\epsilon$ to be small enough such that
    \begin{equation} \label{eq:epsilon-constraint-polyn}
        \epsilon < \frac{1}{8}.
    \end{equation}
    Let $\vrho$ be an $\epsilon$-approximate local minimum of the Hamiltonian $\vH$ under thermal perturbations.
    Consider the observable $\vO_j = \vZ_j$ from the proof of Proposition~\ref{prop:BQP-hardness-estimate-prop-VHC}.
    Using the Fuchs–van de Graaf inequalities, we have
    \begin{equation} \label{eq:Fuchs-van-de-graaf}
        \norm{\vrho - \ketbrat{\eta_{\bm0}}}_{1} \leq \frac{1}{8}.
    \end{equation}
    Because the classical algorithm can estimate $\Tr(\vO_j \vrho)$ to error $\epsilon$, from Eq.~\eqref{eq:epsilon-constraint-polyn}~and~\eqref{eq:Fuchs-van-de-graaf}, the classical algorithm can estimate $\braket{\eta_{\bm0}| \vO_j |\eta_{\bm0}}$ to error $1 / 4$ in time polynomial in the number of qubits in $\vH_C$.
    From Prop.~\ref{prop:BQP-hardness-estimate-prop-VHC}, this implies that $\mathsf{BPP} = \mathsf{BQP}$.
\end{proof}

In the following, we use the previous theorem to show that quantum machines can improve over any efficient classical algorithm that variationally optimizes a classical ansatz that can efficiently predict local properties by performing low-temperature cooling.
Examples of the classical ansatz include tensor networks with efficient tensor contraction algorithms and neural-network quantum states with fast sampling algorithms.
This result provides a physically-relevant problem that yields an advantage in minimizing the energy of a geometrically-local Hamiltonian.

\begin{corollary}[Quantum advantage over variationally optimized classical ansatz] \label{cor:quantum-adv-classical-ansatz}
    Under the conjecture that $\mathsf{BPP} \neq \mathsf{BQP}$, there exists a class of $n$-qubit geometrically-local Hamiltonian~$\vH$ on a two-dimensional lattice with $\norm{\vH}_\infty = \mathcal{O}(n)$ that satisfies the following.
    Given any classical ansatz of $n$-qubit state~$\vrho$ that can estimate the expectation value of single-qubit observables to $1 / \mathrm{poly}(n)$ error in $\mathrm{poly}(n)$-time on classical computers, any $\mathrm{poly}(n)$-time classical algorithm for minimizing the energy $\Tr(\vH \vrho)$ using the classical ansatz, and samples of the state $\vrho$ represented by the optimized classical ansatz.
    A quantum machine can find a state $\vrho^\#$ with strictly lower energy than $\vrho$ in $\mathrm{poly}(n)$ time by running a quantum thermal gradient descent based on low-temperature cooling.
\end{corollary}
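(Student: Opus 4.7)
The plan is to take the $\mathsf{BQP}$-hard Hamiltonian class $\{\vH_C\}$ on the 2D lattice from Theorem~\ref{thm:no-suboptimal-local-minima-informal2} as the witness class and argue by contradiction: the state $\vrho^{\#}$ returned by any efficient classical variational algorithm cannot be an $\epsilon$-approximate local minimum of $\vH_C$ under the thermal perturbations specified in that theorem, so a single step of the quantum thermal gradient descent from Theorem~\ref{thm:quantum-easy-thermal}, initialized at $\vrho^{\#}$ using the provided classical samples, must strictly lower the energy.

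Concretely, I would fix $\vH = \vH_C$ with its two-qubit jump operators $\{\vA^a\}_a$, a small constant fidelity parameter $\delta$ (say $\delta = 1/128$), inverse temperature $\beta = \poly(n)$, time scale $\tau = \poly(n)$, and the matching tolerance $\epsilon = 1/\poly(n)$ produced by Theorem~\ref{thm:no-suboptimal-local-minima-informal2}. Since $\vH_C$ is geometrically local on the 2D lattice and each qubit touches only $\CO(1)$ bounded local terms, $\norm{\vH_C}_\infty = \CO(n)$ as required. Suppose for contradiction that $\vrho^{\#}$ is an $\epsilon$-approximate local minimum. Then Theorem~\ref{thm:no-suboptimal-local-minima-informal2} gives $\braket{\eta_{\bm{0}}|\vrho^{\#}|\eta_{\bm{0}}} \ge 1-\delta$, and the Fuchs--van de Graaf inequality yields $\tfrac{1}{2}\norm{\vrho^{\#} - \ketbrat{\eta_{\bm{0}}}}_1 \le \sqrt{\delta}$. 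Combined with the ansatz's $1/\poly(n)$-accurate classical estimator of single-qubit expectation values on $\vrho^{\#}$, this yields a classical $\poly(n)$-time procedure that estimates $\braket{\eta_{\bm{0}}|\vZ_j|\eta_{\bm{0}}}$ to total error strictly less than $1/4$ for every $j$; but by Proposition~\ref{prop:BQP-hardness-estimate-prop-VHC}, such an estimator would imply $\mathsf{BPP} = \mathsf{BQP}$, contradicting the standing assumption. Hence $\vrho^{\#}$ is not an $\epsilon$-approximate local minimum.

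By the per-step energy-decrease estimate established in the proof of Theorem~\ref{thm:quantum-easy-thermal}, a single thermal gradient step on any state that fails to be an $\epsilon$-approximate local minimum strictly lowers the energy by at least $\Omega(1/\poly(n))$. Applying this one step to $\vrho^{\#}$ therefore produces a state with strictly smaller energy, as claimed. The main obstacle, inherited directly from Theorem~\ref{thm:classical-hard-LM-thermal}, is coordinating the three $1/\poly(n)$-scale error parameters---the ansatz's single-qubit estimation error, the local-minimum tolerance $\epsilon$, and the fidelity parameter $\delta$---so that the Fuchs--van de Graaf reduction cleanly lifts ansatz-level single-qubit predictions on $\vrho^{\#}$ into strictly $1/4$-accurate predictions on the $\mathsf{BQP}$-hard ground state $\ket{\eta_{\bm{0}}}$. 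All other ingredients (the 2D circuit-to-Hamiltonian construction, the no-suboptimal-local-minima theorem, and the convergence of thermal gradient descent) are used as black boxes from earlier in the paper.
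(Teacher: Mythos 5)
Your proposal is correct and takes essentially the same route as the paper's proof: argue by contradiction that $\vrho^{\#}$ cannot be an $\epsilon$-approximate local minimum (you inline the Fuchs--van de Graaf reduction to Proposition~\ref{prop:BQP-hardness-estimate-prop-VHC}, which the paper simply cites as Theorem~\ref{thm:classical-hard-LM-thermal}), and then apply one step of quantum thermal gradient descent, whose strict energy decrease on any non-local-minimum is exactly the content of Lemmas~\ref{lem:nece-QLM} and \ref{lem:cool-GD} used in the paper.
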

\begin{proof}[Proof]
    The central claim is that the state $\vrho$ found by an efficient classical algorithm cannot be an $\epsilon$-approximate local minimum under low-temperature thermal perturbations.
    We establish this claim by contradiction. Suppose that the classical ansatz for $\vrho$ found by the efficient classical algorithm is an $\epsilon$-approximate local minimum.
    Then the classical algorithm can use the classical ansatz to predict the expectation values of single-qubit observables of an $\epsilon$-approximate local minimum $\vrho$ of $\vH$ to $\epsilon$ error. From Theorem~\ref{thm:classical-hard-LM-thermal}, this implies that $\mathsf{BPP} = \mathsf{BQP}$, which is a contradiction.

    Because $\vrho$ is not an $\epsilon$-approximate local minimum under low-temperature thermal perturbations, a quantum machine can use samples of $\vrho$ to initialize at the state $\vrho$ and perform one gradient descent step based on low-temperature cooling.
    From Lemma~\ref{lem:nece-QLM} on the necessary condition for local minima, there exists $a \in \{1, \ldots, m\}$ such that $\Tr(\vH \CL^{\beta, \tau, \vH}_a[\vrho]) < -\epsilon$.
    From Lemma~\ref{lem:cool-GD} on cooling by gradient descent, one can show that a single gradient descent step yields a state $\vrho^{\mathrm{(next)}}$ with a strictly lower energy than the state $\vrho$.
    Hence, one establishes the desired claim.
\end{proof}

\section{Details of thermal Lindbladians}
\label{sec:thermo-lindblad-detail}

In the rest of the appendices, we give the full detailed proofs of Theorems~\ref{thm:quantum-easy-thermal} and \ref{thm:no-suboptimal-local-minima-informal2} that are central to establishing the computational complexity of finding local minima under thermal perturbations in the previous appendix.
To that end, we need to provide the techincal details of thermal Lindbladians that generate such perturbations.

We have previously presented a high-level introduction to thermal Lindbladians in Appendix~\ref{sec:thermo-lindblad}. This has been sufficient for defining local minima and analyzing some basic properties, but not enough for proving Theorems~\ref{thm:quantum-easy-thermal} and \ref{thm:no-suboptimal-local-minima-informal2}.
In this appendix, we present the exact form of thermal Lindbladians, their properties, and the algorithmic primitives for simulating quantum thermodynamics.

\subsection{Exact form}
\label{sec:physics-thermo-Lind}

The exact form of the thermal Lindbladian depends on a few physical concepts due to the microscopic derivation from a system-bath interaction~\cite{mozgunov2020completely}. For each jump $\vA^a$, we have
\begin{equation} \label{eq:dissipative-Lind}
    \mathcal{L}_{a}^{\beta, \tau, \vH}(\vrho) := -\ri [ \vH^{\beta, \tau, \vH}_{LS,a}, \vrho ] + \undersetbrace{:= \mathcal{D}^{\beta, \tau, \vH}_{a}[\vrho]}{\int_{-\infty}^{\infty} \gamma_\beta(\omega) \Big[\hat{\vA}^{a}(\omega) \vrho \hat{\vA}^{a}(\omega)^{\dag} - \frac12 \{\hat{\vA}^{a}(\omega)^{\dag} \hat{\vA}^{a}(\omega), \vrho\}\Big] \rd\omega},
\end{equation}
where $\mathcal{D}^{\beta, \tau, \vH}_{a}$ is the purely dissipative part of the thermal Lindbladian. Implicitly, the operator $\hat{\vA}^{a}(\omega)$ also depends on the Hamiltonian $\vH$ and the time scale $\tau.$ We now unpack the physical concepts that form the building blocks of this expression.

\paragraph{Transition weight.}

At a fixed inverse temperature $\beta$, the \emph{transition weight} $\gamma_{\beta}(\omega)$ tells us how strong the rate of a transition/jump should be, depending on the energy difference $\omega$.
In particular, the transition weight satisfies the following \textit{Kubo-Martin-Schwinger (KMS) condition} and convenient normalization
\begin{align}
    \gamma_{\beta}(\omega)/\gamma_{\beta}(-\omega)=\e^{-\beta\omega} \quad \text{and} \quad 0\le \gamma_{\beta}(\omega)\le 1 \quad \text{for any}\quad \beta\ge 0  \quad \text{and any}\quad \omega \in \BR, \label{eq:gamma_KMS}
\end{align}
which is reminiscent of how detailed balance is enforced in classical Markov chains. We remark that any $\gamma_\beta(\omega)$ obeying the above KMS condition and normalization also satisfies the following tail bound 
\begin{equation}
\label{eq:gamma-tailbound}
    \max_{\omega\ge \Delta} \omega \gamma_\beta(\omega)  \le \max_{\omega \ge \Delta } \omega \e^{-\beta \omega} = \frac{1}{\beta} \max_{x \ge \beta\Delta } x \e^{-x} \le \frac{1}{\beta} \max_{x \ge \beta\Delta } \e^{-x/2} = \frac{\e^{-\beta\Delta/2}}{\beta} .
\end{equation}

For concreteness, we usually adopt the common choice of $\gamma_\beta$ corresponding to \textit{Glauber dynamics}, with a cut-off frequency $\Lambda_0$ to regulate the inverse Fourier transform:
\begin{equation}
   \gamma_{\beta}(\omega) = \frac{1}{2 +  \ln(1+\beta \Lambda_0)}  \frac{\e^{-\omega^2/2\Lambda_0^2}}{1+\e^{\beta \omega}}.
   \label{eq:glauber-dyn}
\end{equation}
In the zero temperature regime ($\beta = \infty$), the function $(1+\e^{\beta \omega})^{-1}$ gives a step function (one for negative $\omega$ and zero for positive $\omega$).
Based on the choice of bath phenomenology, there are plenty of options for the transition 
weight, such as ohmic heating $\gamma_{\beta}(\omega) = \frac{1}{\omega_0} \frac{\omega \e^{-\omega^2/2\Lambda_0^2}}{1-\e^{-\beta \omega}}$.
However, for simplicity, we will stick to the Glauber dynamics.
Unless otherwise stated, we will also choose the cut-off frequency to be
\begin{equation}
\label{eq:Lambda_0}
    \Lambda_0 = 1,
\end{equation}
since each local jumps $\vA^a$ changes the energy 
by at most $\CO(1)$ for our usage (and is generally true for local Hamiltonians with bounded degree interaction graph and bounded-norm terms).
We do not expect our main conclusion to change under other reasonable choices of $\gamma_{\beta}(\omega)$. 

\paragraph{Operator Fourier transform.}
{~}
Given a jump operator $\vA^a$, we consider the \textit{operator Fourier Transform}~\cite{Chen2023quantumthermal} for the Heisenberg-evolved jump operator $\vA^a(t)$ characterized by a time scale $\tau \in \BR$ of the heat bath
\begin{align}
\label{eq:Aomega}
    \hat{\vA}^a(\omega) &:= \frac{1}{\sqrt{2\pi \tau}}\int_{-\tau/2}^{\tau/2} \underset{=:\vA^a(t)}{\underbrace{\e^{\ri \vH t}\vA^a \e^{-\ri \vH t}}} \e^{-\ri\omega t} \rd t.
\end{align}
The operator $\hat{\vA}^a(\omega)$ corresponds to matrix elements in $\vA^a$ that induce jumps between energy eigenstates with an energy difference approximately $\omega \pm \CO(\frac{1}{\tau})$.
The bigger $\tau$ is, the more precise $\omega$ corresponds to the true energy difference; see Appendix~\ref{sec:OFT} for further details. Physically, $\tau$ is related to microscopic parameters of the bath (the bath correlation time and the weak-coupling strength~\cite{mozgunov2020completely}), but our discussion only requires the single time scale $\tau$ that sets the Fourier transform energy uncertainty.

\paragraph{Lamb-shift.}

The interaction with the heat bath induces an additional correction term in the coherent Hamiltonian dynamics of the $n$-qubit system, known as the \textit{Lamb-shift}.
Given a jump operator $\vA^a$, we have the following Lamb-shift Hamiltonian that depends on the bath correlation function $c_\beta(t)$ and the time scale $\tau$
\begin{align}
\vH^{\beta, \tau, \vH}_{LS,a} := \frac{\ri}{2\sqrt{2\pi}\tau} \int_{-\tau/2}^{\tau/2} \int_{-\tau/2}^{\tau/2} \textrm{sgn}(t_1-t_2)c_{\beta}(t_2-t_1)\vA^a(t_2)\vA^a(t_1) \rd t_2\rd t_1. \label{eq:Lamb_shift}
\end{align}
While the Lamb-shift term is physically important, for our purposes, it is largely treated as a source of error; the energy gradient contribution comes from the dissipative part $\mathcal{D}^{\beta, \tau, \vH}_{a}$.

\paragraph{Bath correlation function.}

In the Lamb-shift term, the bath correlation function $c_{\beta}(t)$ is the Fourier transform of the transition weight $\gamma_{\beta}(\omega)$,
\begin{equation}
    c_{\beta}(t) = \frac{1}{\sqrt{2\pi}}\int_{-\infty}^{\infty} \gamma_{\beta}(\omega) \e^{+\ri \omega t} \rd t. \label{eq:C_from_gamma}
\end{equation}
The prefactor in Eq.~\eqref{eq:glauber-dyn} is chosen such that (see \cref{prop:cbeta_1norm})
\begin{equation}
\frac{1}{\sqrt{2\pi}}\int_{-\infty}^{\infty} \labs{c_{\beta}(t)}\rd t \le 1\label{eq:|C|}.
\end{equation}
This normalization sets the strength of $\norm{\vH^{\beta, \tau, \vH}_{LS,a}}$ to be bounded by $\CO(1)$.

\paragraph{Absolute zero $\beta = \infty$.}

It is instructive to consider the case of zero temperature $\beta = \infty$ and infinite time scale $\tau = \infty$.
In this case, the transition weight $\gamma_\beta(\omega)$ is a step function ($1$ for $\omega < 0$ and $0$ for $\omega > 0$) and $\hat{\vA}^a(\omega)$ measures the energy difference perfectly. Thus, all heating transitions ($\ket{E}\rightarrow \ket{E+\omega}$ for $\omega >0$) are forbidden, and all cooling transitions ($\ket{E}\rightarrow \ket{E+\omega}$ for $\omega <0$) will remain. 
Hence, in the case when $\beta = \tau = \infty$, the thermal Lindbladian only lowers the energy. This matches our physical intuition that a zero-temperature bath only absorbs energy from the system.

\paragraph{Multiple jumps.}

The thermal Lindbladian $\mathcal{L}_{a}^{\beta, \tau, \vH}$ considers merely a single jump operator $\vA^a$ in the system-bath interaction.
When there are multiple jump operators, the total thermal Lindbladian is a weighted sum of the individual thermal Lindbladian $\mathcal{L}_{a}^{\beta, \tau, \vH}$,
\begin{equation}
    \mathcal{L}^{\beta, \tau, \vH} = \sum_{a = 1}^m \alpha_a \mathcal{L}_{a}^{\beta, \tau, \vH},
\end{equation}
where $\alpha_a \geq 0$ is a nonnegative weight.

Again, the interaction strength vector $\balpha \in \BR^m_{\geq 0}$ weights the contribution of each thermal Lindbladian. Thus, the total equation of motion under multiple jumps reads 
\begin{align}
    \frac{d\vrho}{dt} &= -\ri [\vH, \vrho] + \mathcal{L}^{\beta, \tau, \vH}(\vrho)\\
    &= {-\ri \left[\vH + \sum_{a=1}^m \alpha_a \vH^{\beta, \tau, \vH}_{LS,a}, \vrho \right]} + \sum_{a=1}^m \alpha_a \mathcal{D}^{\beta, \tau, \vH}_{a}(\vrho), \label{eq:mainL}
\end{align}
which consists of a coherent part and a purely dissipative part.

\paragraph{Calculation for normalization of $c_{\beta}(t)$.}
We now give a supplemental calculation that shows our choice of $\gamma_\beta(\omega)$ in \cref{eq:glauber-dyn} satisfies the condition in \cref{eq:|C|}.
\begin{proposition}\label{prop:cbeta_1norm}
For
\begin{align}
\hat{f}(\omega) := \frac{\e^{-\omega^2/2\Lambda_0^2}}{1+\e^{\beta \omega}},
\end{align}
we have that
\begin{align}
    \frac{1}{\sqrt{2\pi}}\norm{ f }_1 \le {2+\ln(1+\beta\Lambda_0)}.
\end{align}
\end{proposition}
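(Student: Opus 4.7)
The plan is to estimate $\|f\|_1$ by combining three pointwise upper bounds on $|f(t)|$ obtained by differentiating $\hat{f}$ inside the Fourier-inversion integral. For each $k = 0,1,2$, $k$-fold integration by parts gives $|t^k f(t)| \le \|\hat{f}^{(k)}\|_1/\sqrt{2\pi}$, and so
\begin{equation*}
    |f(t)| \;\le\; \frac{1}{\sqrt{2\pi}}\,\min\!\Big(\|\hat{f}\|_1,\ \|\hat{f}'\|_1/|t|,\ \|\hat{f}''\|_1/t^2\Big).
\end{equation*}
Integrating this over $t$ and splitting the real line into three regions based on which of the three bounds is tightest yields the claimed logarithmic estimate, with the $\ln(1+\beta\Lambda_0)$ term coming from the intermediate region.

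Next, I would estimate the three $L^1$ norms. Using $\tfrac{1}{1+\e^{\beta\omega}} = \tfrac12 - \tfrac12\tanh(\beta\omega/2)$ and the parity of the Gaussian, the $\tanh$ piece integrates to zero and we obtain $\|\hat{f}\|_1 = \tfrac12\sqrt{2\pi}\,\Lambda_0$ exactly. The key ingredient for the logarithm is the $\beta,\Lambda_0$-independent bound $\|\hat{f}'\|_1 \le 2$: writing $\hat{f} = G\sigma$ with $G = \e^{-\omega^2/(2\Lambda_0^2)}$ and $\sigma = 1/(1+\e^{\beta\omega})$, the critical-point equation $\omega/\Lambda_0^2 + \beta(1-\sigma(\omega)) = 0$ has a strictly increasing left-hand side (its derivative is $1/\Lambda_0^2 + \beta^2\sigma(1-\sigma) > 0$), so $\hat{f}$ is unimodal on $\BR$, and hence $\|\hat{f}'\|_1 = \mathrm{TV}(\hat{f}) = 2\max\hat{f} \le 2$ since $\hat{f}\le 1$ pointwise. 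Differentiating $\hat{f}'' = (G\sigma)''$ into four terms and using the standard integrals $\int\mathrm{sech}^2(\beta\omega/2)\,\rd\omega = 4/\beta$ and $\int|\omega|\,\mathrm{sech}^2(\beta\omega/2)\,\rd\omega = 8\ln(2)/\beta^2$ together with Gaussian moments, one obtains $\|\hat{f}''\|_1 \le \beta + O(1/\Lambda_0)$.

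With these in hand, in the regime $\beta\Lambda_0 \gtrsim 1$ I would split $\int|f|\,\rd t$ at breakpoints $T_1\sim 1/\Lambda_0$ and $T_2\sim \beta$ and apply the tightest pointwise bound in each of the three resulting regions. The inner region contributes $T_1\,\|\hat{f}\|_1/\sqrt{2\pi} = O(1)$ and the outer region contributes $\|\hat{f}''\|_1/(T_2\sqrt{2\pi}) = O(1)$, while the intermediate region yields $\tfrac{2\|\hat{f}'\|_1}{\sqrt{2\pi}}\ln(T_2/T_1) = O(\ln(\beta\Lambda_0))$; tracking the constants carefully gives $\|f\|_1/\sqrt{2\pi} \le 2 + \ln(1+\beta\Lambda_0)$. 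In the complementary regime $\beta\Lambda_0\lesssim 1$ the three-region split degenerates (since $T_1 > T_2$), and I would fall back on a two-region estimate using only the $k=0$ and $k=2$ bounds; optimizing the single breakpoint yields $\|f\|_1/\sqrt{2\pi} \le 2/\sqrt{\pi} < 2 \le 2+\ln(1+\beta\Lambda_0)$.

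The main obstacle is the numerical bookkeeping at the crossover $\beta\Lambda_0\sim 1$, where the two regimes must be reconciled so that the additive constant $2$ in the advertised bound is not exceeded. The logarithmic dependence itself is forced by the fact that $\|\hat{f}'\|_1$ is bounded by a constant independent of both $\beta$ and $\Lambda_0$, together with the IBP identity $|tf(t)|\le \|\hat{f}'\|_1/\sqrt{2\pi}$ integrated across the scale ratio $[1/\Lambda_0,\beta]$.
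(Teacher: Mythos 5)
Your proposal is correct and follows essentially the same route as the paper's own proof: integration by parts gives $|t^k f(t)| \le \|\hat f^{(k)}\|_1/\sqrt{2\pi}$ for $k=0,1,2$, and a three-region split of the $t$-integral (bounded inner region, logarithmic middle region, quadratic tail) yields the $2+\ln(1+\beta\Lambda_0)$ bound. Your refinements---the exact evaluation $\|\hat f\|_1 = \tfrac12\sqrt{2\pi}\,\Lambda_0$, the explicit unimodality argument behind $\|\hat f'\|_1 \le 2$ (a step the paper asserts without proof), and a direct computation of $\|\hat f''\|_1$---only sharpen constants, and the crossover/degenerate-regime bookkeeping you worry about disappears if, as in the paper, the outer breakpoint is chosen of order $\beta + 1/\Lambda_0$ rather than $\beta$, so that it always exceeds the inner breakpoint $1/\Lambda_0$.
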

\begin{proof}
We want to bound the 1-norm of $f(t)$ in the time domain. To do so, we bound the moments in the time domains
\begin{align}
    \sqrt{2\pi} \norm{f}_{\infty} &\le \norm{\hat{f}}_{1} \le \int_{-\infty}^{\infty} \e^{-\omega^2/2\Lambda_0^2} \rd \omega = \Lambda_0 \sqrt{2\pi}.\\
    \sqrt{2\pi}\norm{t f(t)}_{\infty} &\le \norm{\frac{\rd}{\rd \omega}\hat{f}}_{1} = 2 \cdot \norm{\hat{f}}_{\infty} \le 2.\\
    \sqrt{2\pi}\norm{t^2 f(t)}_{\infty} &\le \norm{\frac{\rd^2}{\rd \omega^2}\hat{f}}_{1}  \le 4 \cdot \norm{ \frac{\rd}{\rd \omega}\hat{f}}_{\infty} \le T.
\end{align}
The second line uses the fact that $\hat{f}$ is increasing and then decreasing (from $-\infty$ to $\infty$). The third line evaluates the derivative 
\begin{align}
         4\labs{\frac{\rd}{\rd\omega}\L( \frac{\e^{-\omega^2/2\Lambda_0^2}}{1+\e^{\beta \omega}} \R)} & = 4\labs{\frac{-\e^{-\omega^2/2\Lambda_0^2}\omega/\Lambda_0^2 }{1+\e^{\beta \omega}} -  \frac{ \e^{-\omega^2/2\Lambda_0^2} \beta\e^{\beta \omega} }{(1+\e^{\beta \omega})^2}}\\
         & \le 4 \L(\frac{1}{\sqrt{\e}\Lambda_0} + \beta\R)=: T \tag{since $\e^{-x^2/2}x \le \frac{1}{\sqrt{\e}}$}.
\end{align}
Thus, we may partition into three integrals to optimize the bound
\begin{align*}
    \norm{f}_1 &= \L(\int_{\labs{t} \le \Lambda_0^{-1}} + \int_{ T \ge \labs{t} \ge \Lambda_0^{-1}}+ \int_{ \labs{t} \ge T} \R) \labs{f(t)} \rd t\\
    &\le \int_{\labs{t} \le \Lambda_0^{-1}} \Lambda_0 \rd t+ \frac{1}{\sqrt{2\pi}}\int_{ T \ge \labs{t} \ge \Lambda_0^{-1}} \frac{2}{\labs{t}} \rd t+ \frac{1}{\sqrt{2\pi}}\int_{ \labs{t} \ge T} \frac{T}{t^2} \rd t\\
    &\le 2+ \frac{4}{\sqrt{2\pi}} \ln(\Lambda_0T) + \frac{2}{\sqrt{2\pi}} \\
    &\le \frac{2 + 2\sqrt{2\pi} +  4 \ln(\frac{4}{\sqrt{\e}}+4\beta\Lambda_0)}{\sqrt{2\pi} }\\
    &\le \frac{2 + 2\sqrt{2\pi} +8\ln(2) +  4 \ln(1+\beta \Lambda_0)}{\sqrt{2\pi}} \le \sqrt{2\pi} (2+\ln(1+\beta\Lambda_0)).
\end{align*}
where in the last line, we used  $1/\sqrt{\e}\le 1$ among other numerical bounds.
\end{proof}

\subsection{Properties of thermal Lindbladians}

\noindent From the exact forms of the thermal Lindbladians, we have the following propositions.

\begin{proposition}[Norm for the dissipative part~\cite{Chen2023quantumthermal}] \label{prop:norm-diss-part}
Any purely dissipative Lindbladian $\sum_{a} \CD^{\beta, \tau, \vH}_a$ defined in Eq.~\eqref{eq:dissipative-Lind} for any set of jump operators $\{\vA^a\}_{a=1}^m$ and any transition weight satisfying Eq.~\eqref{eq:gamma_KMS} have bounded superoperator norms
\begin{align}
\lnormp{\sum_{a=1}^m \alpha_a \CD^{\dag\beta, \tau, \vH}_{a}}{\infty-\infty}= \lnormp{\sum_{a=1}^m\alpha_a\CD^{\beta, \tau, \vH}_{a}}{1-1} \le 2 \norm{\sum_{a=1}^m\alpha_a \vA^{a\dag}\vA^a}.
\end{align}
The first equality is the duality between the $1-1$ and $\infty-\infty$ superoperator norms.
\end{proposition}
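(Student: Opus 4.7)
The plan is to work in the Heisenberg picture via the duality $\lnormp{\sum_a \alpha_a \CD_a}{1-1} = \lnormp{\sum_a \alpha_a \CD_a^\dag}{\infty-\infty}$ stated in the proposition, and then reduce the bound to an operator inequality for the quadratic form $\sum_a \alpha_a \vG_a$, where
\begin{equation*}
    \vG_a := \int_{-\infty}^\infty \gamma_\beta(\omega)\, \hat{\vA}^a(\omega)^\dag \hat{\vA}^a(\omega)\, \rd \omega.
\end{equation*}
Since $\gamma_\beta(\omega) \ge 0$, each $\vG_a$ is positive semidefinite, and it is exactly the positive operator that appears in the anticommutator piece of $\CD_a$.

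First I would expand $\sum_a \alpha_a \CD_a^\dag[\vO] = \sum_a \alpha_a \int \gamma_\beta(\omega)\, \hat{\vA}^a(\omega)^\dag \vO \hat{\vA}^a(\omega)\, \rd\omega - \tfrac{1}{2}\{\sum_a \alpha_a \vG_a,\,\vO\}$ and split the norm by the triangle inequality. For the anticommutator piece, the bound $\|\tfrac{1}{2}\{\vM,\vO\}\| \le \|\vM\|\cdot \|\vO\|$ (with $\vM = \sum_a \alpha_a \vG_a$) yields $\|\vO\|\cdot \|\sum_a \alpha_a \vG_a\|$. For the first piece, I would use the completely positive sandwich structure: if $-\vI \preceq \vO \preceq \vI$, then $-\sum_a \alpha_a \vG_a \preceq \sum_a \alpha_a \int \gamma_\beta(\omega)\, \hat{\vA}^a(\omega)^\dag \vO \hat{\vA}^a(\omega)\, \rd\omega \preceq \sum_a \alpha_a \vG_a$ since each $\alpha_a \ge 0$ and each term is a CP map applied to $\vO$. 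Rescaling by $\|\vO\|$ then gives the same $\|\vO\|\cdot \|\sum_a \alpha_a \vG_a\|$ bound. Summing yields $\lnormp{\sum_a \alpha_a \CD_a^\dag}{\infty-\infty} \le 2\,\|\sum_a \alpha_a \vG_a\|$.

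The remaining and most substantive step is to establish the operator inequality
\begin{equation*}
    \left\lVert \sum_{a=1}^m \alpha_a \vG_a \right\rVert \;\le\; \left\lVert \sum_{a=1}^m \alpha_a \vA^{a\dag}\vA^a \right\rVert.
\end{equation*}
The key observation is that $0 \le \gamma_\beta(\omega) \le 1$ (from Eq.~\eqref{eq:gamma_KMS}), so $\vG_a \preceq \int \hat{\vA}^a(\omega)^\dag \hat{\vA}^a(\omega)\, \rd\omega$. I would then apply Parseval/Plancherel to the operator Fourier transform in Eq.~\eqref{eq:Aomega}: writing $\hat{\vA}^a(\omega) = \frac{1}{\sqrt{2\pi}}\int f_\tau(t)\, \vA^a(t)\, \e^{-\ri\omega t}\, \rd t$ with $f_\tau(t) = \tau^{-1/2}\indicator(|t|\le \tau/2)$, Parseval applied entrywise gives
\begin{equation*}
    \int_{-\infty}^\infty \hat{\vA}^a(\omega)^\dag \hat{\vA}^a(\omega)\, \rd\omega \;=\; \frac{1}{\tau}\int_{-\tau/2}^{\tau/2} \vA^a(t)^\dag \vA^a(t)\, \rd t \;=\; \frac{1}{\tau}\int_{-\tau/2}^{\tau/2} \e^{\ri \vH t}\, \vA^{a\dag}\vA^a\, \e^{-\ri \vH t}\, \rd t.
\end{equation*}
Summing against $\alpha_a \ge 0$, we obtain $\sum_a \alpha_a \vG_a \preceq \frac{1}{\tau}\int_{-\tau/2}^{\tau/2} \e^{\ri\vH t}(\sum_a \alpha_a \vA^{a\dag}\vA^a)\e^{-\ri\vH t}\,\rd t$, and since unitary conjugation preserves the operator norm while averaging cannot increase it, the right-hand side has norm at most $\|\sum_a \alpha_a \vA^{a\dag}\vA^a\|$. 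Combining these two bounds gives the claimed factor of $2$.

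The main obstacle is verifying cleanly that Parseval's identity can be applied to the operator-valued function $t \mapsto f_\tau(t)\vA^a(t)$; this is essentially entrywise Plancherel and is routine, but one must be careful that the windowing $f_\tau$ is what makes the time integral absolutely convergent and delivers the exact averaging of $\e^{\ri\vH t}\vA^{a\dag}\vA^a \e^{-\ri\vH t}$ over $[-\tau/2,\tau/2]$. Once that identification is in hand, the two-step comparison $\vG_a \preceq \int \hat{\vA}^a(\omega)^\dag\hat{\vA}^a(\omega)\,\rd\omega$ and the unitary-averaging bound finish the proof.
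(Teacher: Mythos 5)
Your proof is correct and follows essentially the route the paper intends: the paper imports this bound from \cite{Chen2023quantumthermal} without reproving it, and your two ingredients---splitting $\CD_a^\dag$ into the completely positive ``sandwich'' part plus the anticommutator part, each bounded by the norm of $\sum_a \alpha_a \vG_a$, and then controlling $\sum_a \alpha_a \vG_a$ via $0\le\gamma_\beta(\omega)\le1$ together with the operator Parseval identity (recorded in the paper as Proposition~\ref{prop:iso_operator_FT})---are exactly the standard argument. The only point to tidy is that your sandwich step as written assumes $\vO$ is Hermitian, whereas the $\infty$--$\infty$ norm is a supremum over all operators; this is closed by the standard fact that a completely positive map $\Phi$ satisfies $\norm{\Phi(\vO)}\le\norm{\Phi(\vI)}\,\norm{\vO}$ for arbitrary $\vO$ (Kadison--Schwarz/Russo--Dye), so the factor of $2$ is unaffected.
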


\begin{proposition}[Properties of the Lamb-shift term~\cite{Chen2023quantumthermal}] \label{prop:norm-lamb-shift-part}
The sum of Lamb-shift term~\eqref{eq:Lamb_shift} for any set of jump operators $\{\vA^a\}_{a=1}^m$ under a normalized bath correlation function $c_\beta(t)$ given by Eq.~\eqref{eq:|C|} satisfies that\footnote{Implicitly, the Lamb-shift term has units of energy yet do not scale with $\norm{\vH}$.}
\begin{align}
    \lnorm{\sum_{a=1}^m \alpha_a\vH^{\beta, \tau, \vH}_{LS,a}} &\le \frac{1}{2}\norm{\sum_{a=1}^m \alpha_a\vA^{a\dag}\vA^a}\\
    \norm{\sum_{a=1}^m\alpha_a[\vH^{\beta, \tau, \vH}_{LS,a},\vH]} &\le \CO\left( \frac{\norm{\vH}^{3/4}}{\tau^{1/4}}\norm{\sum_{a=1}^m \alpha_a\vA^{a\dag}\vA^a} \right).
\end{align}
For large enough $\tau$, the Lamb-shift term almost commutes with the Hamiltonian.
\end{proposition}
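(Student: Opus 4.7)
The plan is to prove both bounds by transporting the operator norm inside the double integral in Eq.~\eqref{eq:Lamb_shift}, which reduces each estimate to (i) a pointwise operator-norm bound on the integrand $\sum_a\alpha_a \vA^a(t_2)\vA^a(t_1)$ at fixed times, and (ii) a scalar integral against $\labs{c_\beta}$. The key pointwise estimate I would prove is
\begin{equation}
\lnorm{\sum_a \alpha_a\, \vA^a(t_2)\vA^a(t_1)}\;\le\; \norm{\sum_a \alpha_a\, \vA^{a\dag}\vA^a},
\end{equation}
which follows from two Cauchy--Schwarz steps: first on matrix elements, $\labs{\bra{\phi}\vA^a(t_2)\vA^a(t_1)\ket{\psi}}\le \lnorm{\vA^{a\dag}(t_2)\ket{\phi}}\lnorm{\vA^a(t_1)\ket{\psi}}$, then on the $\alpha_a$-weighted sum (using $\alpha_a\ge 0$), producing a factor $\sqrt{\norm{\sum_a\alpha_a \vA^a\vA^{a\dag}}\cdot\norm{\sum_a\alpha_a \vA^{a\dag}\vA^a}}$ via unitary invariance of Heisenberg evolution. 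The assumed closure of $\{\vA^a\}$ under Hermitian conjugation (with matched weights) equates the two factors.

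For the first inequality, the remaining scalar integral is handled by the change of variables $s=t_2-t_1$ and $u=(t_1+t_2)/2$, Fubini, and the normalization $\frac{1}{\sqrt{2\pi}}\norm{c_\beta}_1\le 1$ from Eq.~\eqref{eq:|C|}, giving
\begin{equation}
\frac{1}{2\sqrt{2\pi}\,\tau}\int_{-\tau}^{\tau}(\tau-\labs{s})\,\labs{c_\beta(s)}\,\rd s\;\le\; \frac{1}{2},
\end{equation}
which multiplies the pointwise operator bound to yield the claim.

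For the second inequality, I would exploit the identity $[\vH,\vA^a(t)]=-\ri\,\partial_t \vA^a(t)$ to rewrite $[\vH,\vA^a(t_2)\vA^a(t_1)]=-\ri(\partial_{t_1}+\partial_{t_2})[\vA^a(t_2)\vA^a(t_1)]$, and then integrate by parts in both $t_1$ and $t_2$. The interior contributions coming from differentiating $\textrm{sgn}(t_1-t_2)\,c_\beta(t_2-t_1)$ cancel across the two IBPs (the $\delta$-distributions produced by the sign function appear with opposite signs, and the $c_\beta'$ terms likewise cancel), so only boundary contributions at $t_{1,2}=\pm\tau/2$ survive; bounding these by the same Cauchy--Schwarz argument produces an estimate of order $\CO(\tau^{-1})\norm{\sum_a \alpha_a \vA^{a\dag}\vA^a}$. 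To match the stated $\norm{\vH}^{3/4}/\tau^{1/4}$ scaling in the regime where the naive pointwise estimate $\lnorm{\partial_t\vA^a(t)}\le 2\norm{\vH}\norm{\vA^a}$ is also available (yielding a trivial $\CO(\norm{\vH})$ bound), I would interpolate by splitting $\labs{t_2-t_1}\le T$ versus $\labs{t_2-t_1}>T$ and applying the two estimates on the respective pieces, then optimizing $T$ via AM--GM.

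The main obstacle is the integration-by-parts bookkeeping: the discontinuity of $\textrm{sgn}(t_1-t_2)$ on the diagonal threatens to generate spurious $\vA^a(t)^2$ contributions, and keeping the $\delta$-function and $c_\beta'$ cancellations honest requires splitting the domain into $t_1>t_2$ and $t_1<t_2$ and carefully tracking all boundary pieces along both $t_{1,2}=\pm\tau/2$ and the diagonal $t_1=t_2$. The specific $3/4$ exponent is a H\"older-type balance between the $\norm{\vH}$-costly pointwise derivative estimate and the $\tau$-costly boundary-only estimate; verifying that neither half of the split destroys the cancellations needed by the other is what makes the interpolation delicate.
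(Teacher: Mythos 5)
The paper does not actually prove this proposition---it is imported verbatim from~\cite{Chen2023quantumthermal}---so there is no in-paper proof to compare against; your proposal is an independent, elementary derivation tailored to the specific kernel in Eq.~\eqref{eq:Lamb_shift}, and it is essentially sound. The first bound is correct as you argue: the pointwise Cauchy--Schwarz step gives $\lVert\sum_a\alpha_a\vA^a(t_2)\vA^a(t_1)\rVert\le\sqrt{\lVert\sum_a\alpha_a\vA^a\vA^{a\dag}\rVert\,\lVert\sum_a\alpha_a\vA^{a\dag}\vA^a\rVert}$, and the scalar integral with the normalization~\eqref{eq:|C|} gives the factor $1/2$. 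Do note, however, that replacing the geometric mean by $\lVert\sum_a\alpha_a\vA^{a\dag}\vA^a\rVert$ genuinely requires that the weighted family be closed under adjoints with matched weights (or that each jump appear singly, where $\lVert\vA\vA^\dag\rVert=\lVert\vA^\dag\vA\rVert$); the proposition as stated says ``any set of jump operators,'' so you should either invoke the paper's standing closure assumption explicitly or state the bound with the geometric mean.

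For the commutator bound, your integration-by-parts idea works and in fact proves something stronger than needed: since the kernel $\mathrm{sgn}(t_1-t_2)c_\beta(t_2-t_1)$ depends only on $t_1-t_2$, the directional derivative $(\partial_{t_1}+\partial_{t_2})$ annihilates it (your $\delta$- and $c_\beta'$-cancellations, or more cleanly the divergence theorem on the two triangles, where the diagonal carries zero flux because its normal is orthogonal to $(1,1)$), leaving only the four outer edges, each bounded by $\lVert c_\beta\rVert_1$ times the same pointwise operator bound. This yields $\lVert\sum_a\alpha_a[\vH^{\beta,\tau,\vH}_{LS,a},\vH]\rVert=\CO(\tau^{-1})\lVert\sum_a\alpha_a\vA^{a\dag}\vA^a\rVert$ for this definition of the Lamb shift. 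Your final ``interpolation by splitting $\labs{t_2-t_1}\le T$'' is therefore unnecessary and is the one muddled spot in the plan: there is nothing left to balance, since the stated $\norm{\vH}^{3/4}/\tau^{1/4}$ follows immediately by combining your $\CO(1/\tau)$ estimate with the trivial bound $\lVert[\vH_{LS},\vH]\rVert\le 2\lVert\vH_{LS}\rVert\,\lVert\vH\rVert$ via $\min(a,b)\le a^{1/4}b^{3/4}$; moreover your worry that a split along $\labs{t_1-t_2}=T$ would spoil the cancellations is moot, because such cuts are parallel to the diagonal and contribute no boundary flux. With the adjoint-closure caveat made explicit and the superfluous interpolation dropped (or replaced by the one-line min/H\"older combination), the argument is complete.
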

From Prop.~\ref{prop:norm-diss-part} and Prop.~\ref{prop:norm-lamb-shift-part}, we have the following norm bound for thermal Lindbladians.

\begin{proposition}[Norm of thermal Lindbladians] \label{prop:norm-thermal-lindblad}
Given a Hamiltonian $\vH$, an inverse temperature $\beta \geq 0$, a time scale $\tau \geq 0$, $m$ local jump operators $\{\vA^a\}_{a=1}^m$, a transition weight $\gamma_\beta(\omega)$ satisfying Eq.~\eqref{eq:gamma_KMS}, a normalized bath correlation function $c_\beta(t)$ satisfying Eq.~\eqref{eq:|C|}. The associated thermal Lindbladian $\sum_{a=1}^m \alpha_a \mathcal{L}_{a}^{\beta, \tau, \vH}$ has bounded superoperator norms
\begin{align}
\lnormp{\sum_{a=1}^m \alpha_a \mathcal{L}_{a}^{\dag\beta, \tau, \vH }}{\infty-\infty}= \lnormp{\sum_{a=1}^m \alpha_a \mathcal{L}_{a}^{\beta, \tau, \vH}}{1-1} \le 3 \norm{\sum_{a=1}^m \alpha_a \vA^{a\dag}\vA^a},
\end{align}
which is controlled by the interaction strength vector $\balpha$ under the normalization of $\vA^a$ in Eq.~\eqref{eq:normalization-vAa}.
\end{proposition}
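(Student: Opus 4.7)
The plan is to decompose the thermal Lindbladian $\sum_a \alpha_a \mathcal{L}_a^{\beta,\tau,\vH}$ into its coherent (Lamb-shift) part and its purely dissipative part using the defining equation \eqref{eq:dissipative-Lind}, bound each piece separately in the induced $1\text{-}1$ norm, and combine by the triangle inequality. Writing $\vH_{LS} := \sum_{a=1}^m \alpha_a \vH_{LS,a}^{\beta,\tau,\vH}$, we have
\begin{equation}
\sum_{a=1}^m \alpha_a \mathcal{L}_a^{\beta,\tau,\vH}(\vrho) = -\ri[\vH_{LS},\vrho] + \sum_{a=1}^m \alpha_a \mathcal{D}_a^{\beta,\tau,\vH}(\vrho),
\end{equation}
so it suffices to bound each summand and add the bounds.

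For the dissipative piece, I will simply invoke Proposition~\ref{prop:norm-diss-part}, which already delivers $\lnormp{\sum_a \alpha_a \mathcal{D}_a^{\beta,\tau,\vH}}{1-1} \le 2\norm{\sum_a \alpha_a \vA^{a\dag}\vA^a}$. For the coherent piece, I will use the elementary fact that the commutator superoperator $\mathcal{C}_\vA(\vrho) := [\vA,\vrho]$ satisfies $\lnormp{\mathcal{C}_\vA}{1-1} \le 2\norm{\vA}$ (by the triangle inequality $\norm{\vA\vrho-\vrho\vA}_1 \le \norm{\vA\vrho}_1+\norm{\vrho\vA}_1 \le 2\norm{\vA}\norm{\vrho}_1$ together with Hölder). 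Applied to $\vA=\vH_{LS}$ and combined with the operator-norm bound $\norm{\vH_{LS}} \le \tfrac{1}{2}\norm{\sum_a \alpha_a \vA^{a\dag}\vA^a}$ from Proposition~\ref{prop:norm-lamb-shift-part}, the coherent part contributes at most $\norm{\sum_a \alpha_a \vA^{a\dag}\vA^a}$ to the $1\text{-}1$ norm.

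Adding the $2$ from the dissipative part and the $1$ from the coherent part gives the claimed factor of $3$. The equality $\lnormp{\sum_a \alpha_a \mathcal{L}_a^{\dag \beta,\tau,\vH}}{\infty-\infty} = \lnormp{\sum_a \alpha_a \mathcal{L}_a^{\beta,\tau,\vH}}{1-1}$ is then just the standard duality between Schatten $1$ and $\infty$ norms (the induced norm of a superoperator equals that of its adjoint under the trace pairing). There is no real obstacle here: every ingredient is either an invocation of Propositions~\ref{prop:norm-diss-part} and \ref{prop:norm-lamb-shift-part} or a one-line Hölder/triangle argument; the only thing to be careful about is not to double-count the normalization factor $\norm{\sum_a \alpha_a \vA^{a\dag}\vA^a}$ when passing between the operator-norm bound on $\vH_{LS}$ and the superoperator-norm bound on $[\vH_{LS},\,\cdot\,]$.
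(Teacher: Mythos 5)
Your proposal is correct and matches the paper's (implicit) argument: the paper derives this proposition by exactly the same decomposition, combining Proposition~\ref{prop:norm-diss-part} (factor $2$ for the dissipative part) with Proposition~\ref{prop:norm-lamb-shift-part} and the commutator bound $\lnormp{[\vH_{LS},\cdot]}{1-1}\le 2\norm{\vH_{LS}}$ (factor $1$ for the coherent part), plus the standard $1\text{-}1$/$\infty\text{-}\infty$ duality.
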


\subsection{Algorithmic primitives for simulating thermal Lindbladians}

In this subsection, we review existing algorithmic primitives for simulating thermal Lindbladians~\cite{Chen2023quantumthermal}, estimating energy and expectation value of observables using block-encoding and quantum singular value transform (QSVT). See \cite{martyn2021grand} for a tutorial on block encoding and QSVT.
We begin with a definition of a block-encoding for Hermitian matrices, i.e., observables.

\begin{definition}[Block-encoding for Hermitian matrices]\label{def:blockObservable} 
We say that a unitary $\vU$ is a block-encoding for a Hermitian matrix $\vO$ if  
 \begin{align}
     (\bra{0^d}\otimes \vI)\cdot \vU\cdot(\ket{0^{d}} \otimes \vI)=\vO \quad \text{for}\quad d \in \mathbb{Z}^{+}.
 \end{align}
\end{definition}

Recall the following result stating that expectation values can be estimated using block-encoding.

\begin{proposition}[Measuring observable using block-encoding] \label{prop:measure-obs-block-enc}
Given a block-encoding $\vU_{\vO}$ for a Hermitian matrix $\vO$ and samples of a state $\vrho$.
One could estimate $\Tr(\vO \vrho)$ to small error $0 < \epsilon < 0.5$ using only $\tilde{\mathcal{O}}(1 / \epsilon^2)$ queries to the unitary $\vU_{\vO}$.
\end{proposition}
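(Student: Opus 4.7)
The plan is to implement this via the standard Hadamard test, treating $\vU_{\vO}$ as the unitary whose $|0^d\rangle\langle 0^d|$-block matrix element we wish to extract. Concretely, I would introduce one additional ancilla qubit initialized in $|+\rangle$, together with the $d$-qubit block-encoding register initialized to $|0^d\rangle$, and a fresh copy of $\vrho$ in the system register. I would then apply a controlled version of $\vU_{\vO}$ conditioned on the Hadamard-test qubit, apply a Hadamard on that qubit, and measure it in the computational basis. By the block-encoding identity $(\bra{0^d}\otimes \vI)\vU_{\vO}(\ket{0^d}\otimes \vI)=\vO$, together with the fact that $\vO$ is Hermitian so that $\Tr(\vO\vrho)$ is real, a direct calculation shows that the probability of measuring $0$ on the Hadamard-test qubit equals $\tfrac{1}{2}(1+\Tr(\vO\vrho))$.

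From here, the quantity $\Tr(\vO\vrho)$ is encoded in a Bernoulli-style random variable with bounded output in $[-1,1]$ (or equivalently in $\{0,1\}$ after centering). I would estimate this mean by repeating the Hadamard-test circuit independently $N$ times, each time using a fresh sample of $\vrho$, and averaging the outcomes. A standard Hoeffding concentration bound implies that $N=\mathcal{O}(\log(1/\delta)/\epsilon^2)$ repetitions suffice to produce an $\epsilon$-accurate estimate of $\Tr(\vO\vrho)$ with success probability at least $1-\delta$. Each repetition uses a single query to (the controlled version of) $\vU_{\vO}$, so the total query count is $\tilde{\mathcal{O}}(1/\epsilon^2)$, hiding the mild $\log(1/\delta)$ factor inside the tilde.

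There is not really a deep obstacle here; the only subtle points I would verify carefully are (i) that controlling $\vU_{\vO}$ costs a constant overhead and does not spoil the block-encoding identity after one measures the ancilla, and (ii) that one genuinely only needs classical samples of $\vrho$ rather than a purification or coherent preparation. Point (i) follows because the Hadamard test only interrogates the matrix element $\bra{\phi}\vU_{\vO}\ket{\phi}$ where $\ket{\phi}=\ket{0^d}\otimes\ket{\psi}$, and linearity in $\vrho$ extends the pure-state identity $\bra{\psi}\vO\ket{\psi}=\bra{\phi}\vU_{\vO}\ket{\phi}$ to $\Tr(\vO\vrho)$ for mixed states. Point (ii) is what forces the sample complexity to scale as $1/\epsilon^2$ rather than $1/\epsilon$: without coherent state preparation one cannot invoke amplitude estimation, so Hoeffding is essentially tight. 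The resulting algorithm is precisely the one claimed in the proposition.
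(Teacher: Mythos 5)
Your proposal is correct and follows essentially the same route as the paper: the paper's (one-line) proof likewise applies the Hadamard test to the state $\ket{0}\bra{0}\otimes\vrho$ to sample $\Tr[\vU_{\vO}(\ket{0}\bra{0}\otimes\vrho)]=\Tr[\vO\vrho]$, with Hermiticity of $\vO$ guaranteeing this is real, and repeats $\mathcal{O}(1/\epsilon^2)$ times via Hoeffding. Your added remarks on controlled-$\vU_{\vO}$ overhead and the linearity argument extending the pure-state identity to mixed $\vrho$ are consistent with (and slightly more explicit than) the paper's argument.
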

\begin{proof}
    Consider $\ket{0}\bra{0}\otimes \vrho$ and apply the Hadamard test to sample $ \Tr[ \vU \ket{0}\bra{0}\otimes \vrho ] = \Tr[\vO \vrho]$\footnote{We thank Yu Tong for discussions on this argument.}.
\end{proof}
Linear combinations of unitaries allow us to make efficient block-encodings of Hamiltonians presented as a sum of local terms. This fact results in the following proposition.

\begin{proposition}[Block-encoding for Hamiltonian; see \cite{berry2015hamiltonian, childs2017quantum, martyn2021grand}]\label{def:blockHamiltonian} 
Any $n$-qubit Hamiltonian $\vH$ has an efficient block-encoding $\vU_{\vH/\lambda_1}$ for some scalar $\lambda_1$ being the 1-norm of Pauli expansion coefficients.
\end{proposition}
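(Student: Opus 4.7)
The plan is to prove this via the standard linear combination of unitaries (LCU) construction. First, I would expand the Hamiltonian in the Pauli basis as $\vH = \sum_{j=1}^M c_j \vP_j$ with $c_j \in \mathbb{R}$ and $\vP_j \in \{\vI,\vX,\vY,\vZ\}^{\otimes n}$, and define the normalization
\begin{equation}
\lambda_1 := \sum_{j=1}^M |c_j|.
\end{equation}
I would then pick an ancilla register of $d = \lceil \log_2 M \rceil$ qubits large enough to index the terms.

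Next, I would build two auxiliary unitaries. The PREPARE unitary $\vV$ acts on the ancilla register so that
\begin{equation}
\vV \ket{0^d} = \frac{1}{\sqrt{\lambda_1}} \sum_{j=1}^M \sqrt{|c_j|}\, \ket{j},
\end{equation}
which is efficient whenever the coefficients $\{c_j\}$ are given explicitly (for local Hamiltonians, $M = \poly(n)$ so this state preparation runs in $\poly(n)$ time). The SELECT unitary acts on the ancilla plus system as
\begin{equation}
\vW = \sum_{j=1}^M \ketbra{j}{j} \otimes \sign(c_j)\, \vP_j,
\end{equation}
which is implementable by a controlled multiplexor over the Pauli operators and signs. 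I would then set $\vU_{\vH/\lambda_1} := (\vV^\dagger \otimes \vI)\, \vW\, (\vV \otimes \vI)$.

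The verification is then a direct computation:
\begin{equation}
(\bra{0^d} \otimes \vI)\, \vU_{\vH/\lambda_1}\, (\ket{0^d} \otimes \vI)
= \sum_{j,j'} \frac{\sqrt{|c_{j'}||c_j|}}{\lambda_1} \braket{j'|j}\, \sign(c_j)\, \vP_j
= \frac{1}{\lambda_1} \sum_{j=1}^M c_j \vP_j = \frac{\vH}{\lambda_1},
\end{equation}
which matches Definition~\ref{def:blockObservable}. Since there is no real obstacle here, the only step requiring care is confirming that PREPARE and SELECT admit $\poly(n)$-depth circuits under the local Hamiltonian assumption; this follows because $M = \poly(n)$ and each $\vP_j$ is a tensor product of single-qubit Paulis, so SELECT decomposes into $M$ multi-controlled single-qubit gates, and the PREPARE state has polynomially many nonzero amplitudes and hence can be prepared efficiently by standard state preparation routines. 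I would cite the well-known LCU references already listed in the proposition for the detailed gate complexity.
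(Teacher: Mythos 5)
Your proposal is correct and is essentially the approach the paper relies on: the paper does not spell out a proof but invokes the standard linear-combination-of-unitaries construction from the cited references, which is exactly your PREPARE/SELECT argument with $\lambda_1 = \sum_j |c_j|$, and your verification that $(\bra{0^d}\otimes \vI)\,\vU_{\vH/\lambda_1}\,(\ket{0^d}\otimes \vI) = \vH/\lambda_1$ is accurate. The efficiency claim also matches the paper's setting, since a local Hamiltonian has $M=\poly(n)$ Pauli terms, making both PREPARE and SELECT implementable in $\poly(n)$ time.
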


From \cite{Chen2023quantumthermal}, we have the following for the Lamb-shift term $\vH_{LS}$ from Eq.~\eqref{eq:Lamb_shift}. Conveniently, the Lamb-shift term is already normalized (Proposition~\ref{prop:norm-lamb-shift-part}).
\begin{proposition}[Block-encoding for Lamb-shift term; see~\cite{Chen2023quantumthermal}]\label{prop:blockLS}
The Lamb-shift term $\vH_{LS}$ has an efficient block-encoding $\vU_{LS}$.
\end{proposition}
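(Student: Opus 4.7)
The plan is to build $\vU_{LS}$ as a nested Linear-Combination-of-Unitaries (LCU), with Hamiltonian simulation supplying the Heisenberg-evolved operators and a quadrature rule discretizing the double integral in Eq.~\eqref{eq:Lamb_shift}. Concretely, I would first split the integration domain $[-\tau/2,\tau/2]^2$ along the diagonal $t_1 = t_2$ to remove the discontinuity introduced by $\mathrm{sgn}(t_1-t_2)$, then approximate the integral on each of the two triangles by a smooth quadrature (e.g., Gauss--Legendre) with $N = \mathrm{poly}(\tau,\|\vH\|,1/\epsilon)$ nodes. Since $c_\beta(t)$ is the Fourier transform of the Gaussian-suppressed $\gamma_\beta(\omega)$ from Eq.~\eqref{eq:glauber-dyn}, it is smooth with rapidly decaying derivatives, so this quadrature converges at the standard spectral rate and the resulting operator-norm error is $\epsilon$.

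Next, for each quadrature pair $(t_j,t_k)$, I would block-encode the product $\vA^a(t_j)\vA^a(t_k) = \e^{\ri\vH t_j}\vA^a \e^{-\ri\vH t_j}\e^{\ri\vH t_k}\vA^a \e^{-\ri\vH t_k}$. The block-encoding of $\vH$ from Prop.~\ref{def:blockHamiltonian}, fed into QSVT-based Hamiltonian simulation, yields efficient block-encodings of $\e^{\pm\ri\vH t}$; the local operator $\vA^a$ is itself unitarily dilated (after rescaling by $\|\vA^a\|\le 1$ from Eq.~\eqref{eq:normalization-vAa}); and standard product rules for block-encodings give the needed four-factor product at cost $\tilde{\CO}(\lambda_1\tau + \log(1/\epsilon))$ per pair. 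Finally, I would apply LCU with coefficients
\begin{equation}
c_{j,k} \;=\; \frac{\ri}{2\sqrt{2\pi}\,\tau}\, w_j w_k\, \mathrm{sgn}(t_j-t_k)\, c_\beta(t_k-t_j),
\end{equation}
where $w_j,w_k$ are the quadrature weights. The 1-norm $\sum_{j,k}|c_{j,k}|$ is bounded by $\frac{1}{2\sqrt{2\pi}\,\tau}\cdot \tau \cdot \|c_\beta\|_1 = \CO(1)$ by the normalization Eq.~\eqref{eq:|C|} (Proposition~\ref{prop:cbeta_1norm}), so the subnormalization of the resulting block-encoding of $\vH^{\beta,\tau,\vH}_{LS,a}$ is a constant, matching the norm bound in Proposition~\ref{prop:norm-lamb-shift-part}.

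To handle multiple jumps, I would apply one more outer LCU over $a=1,\dots,m$ with weights $\alpha_a$, using the selection-oracle over the index $a$ and Prop.~\ref{prop:blockLS}'s single-jump construction as the controlled subroutine; the outer LCU then block-encodes $\sum_a \alpha_a \vH^{\beta,\tau,\vH}_{LS,a}$ with subnormalization $\CO(\|\balpha\|_1)$.

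The main obstacle I expect is the careful error accounting across the nested construction: the quadrature error depends on the smoothness of $t\mapsto \vA^a(t)\vA^a(t')$ in operator norm, which inherits derivatives of size $\CO(\|\vH\|)$ from the Heisenberg evolution, so keeping the number of nodes $N$ polynomial forces a polynomial blow-up in gate count in $\tau$ and $\|\vH\|$. Equally tricky is tracking sub-normalizations through the nested block-encodings so that the final $\vU_{LS}$ encodes $\vH_{LS}$ at $\Theta(1)$ scale rather than an exponentially shrunk one; the diagonal split is essential here because otherwise the quadrature error near $t_1 \approx t_2$ would force $N$ to scale with the operator-norm modulus of continuity of $\mathrm{sgn}$, which diverges. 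With that split in place, the remaining analysis is routine LCU bookkeeping.
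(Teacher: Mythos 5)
Your construction is correct and is essentially the same as the one this paper imports: the paper does not prove Proposition~\ref{prop:blockLS} itself but defers to \cite{Chen2023quantumthermal}, whose circuitry (visible in the components listed with Theorem~\ref{thm:LCUSim_main}: controlled Hamiltonian simulation over a discretized time register together with the $\bm{Prep}_{c_\beta(\bt)}$, $\bm{Prep}'_{c_\beta(\bt)}$ oracles) is exactly an LCU over discretized times of $c_\beta(t_2-t_1)\,\vA^a(t_2)\vA^a(t_1)$ with 1-norm controlled by $\norm{c_\beta}_1$ via Eq.~\eqref{eq:|C|}, matching your accounting and the $\CO(1)$ subnormalization of Proposition~\ref{prop:norm-lamb-shift-part}. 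The only quibble is cosmetic: since $\mathrm{sgn}$ is piecewise constant, your "diagonal split" is just the statement that the quadrature should respect the two half-domains (the cited construction absorbs the sign into the coefficient oracle directly), not a point where the argument would otherwise fail.
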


We define the block-encoding for a Lindbladian without the coherent commutator term $-\mathrm{i} [\vH, \vrho]$.

\begin{definition}[Block-encoding for Lindblad operators~{\cite{Chen2023quantumthermal}}]\label{def:blockLindladian}
	Given a purely irreversible Lindbladian
\begin{align}
     \CL[\vrho]:=\sum_{j \in J} \L(\vL_j\vrho \vL_j^{\dag} - \frac{1}{2} \vL_j^{\dag}\vL_j\vrho - \frac{1}{2}\vrho\vL_j^{\dag}\vL_j\R), 
\end{align} 
we say that a unitary $\vU$ is a block-encoding for Lindblad operators $\{\vL_j\}_{j\in J}$ if \footnote{In the first register, we could use any orthonormal basis, sticking to computational basis elements $\ket{j}$ is just for ease of presentation. Intuitively one can think about $b$ as the number of ancilla qubits used for implementing the operators $\vL_j$, while typically $a-b\approx \log|J|$.} 
 \begin{align}
     (\bra{0^b}\otimes \vI)\cdot \vU\cdot(\ket{0^{c}} \otimes \vI)=\sum_{j\in J} \ket{j} \otimes \vL_j \quad \text{for}\quad b\le c \in \mathbb{Z}^{+}.
 \end{align}
\end{definition}

\begin{theorem}[Linear-time Lindbladian simulation~\cite{Chen2023quantumthermal}]\label{thm:LCUSim_main}
Suppose the jumps $\vA^a$ can be block-encoded by a unitary $\vV_{jump}$ using $c\in \BZ$ ancillas qubits. Then, we can simulate the map $\e^{t \CL}$ for~\eqref{eq:mainL} to $\epsilon \le 1/2$ precision in the diamond norm using 
 \begin{align}
	&\tCO (( c+1)) \quad &\text{resettable ancilla},\\ 
	&\tCO (( t+1) \tau ) \quad &\text{controlled Hamiltonian simulation time},\\
	\quad &\tCO((t+1)(c+1))\quad &\text{other two-qubit gates},\\
    \text{and}\quad &\tCO(t+1) \quad &\text{queries to $\vW$, $\bm{prep}_{c_\beta(t)}$,$\bm{Prep}'_{c_\beta(\bt)}$, and $\vV_{jump}$}
	\end{align}
 where $\tilde{\CO}(\cdot)$ absorbed poly-logarithmic dependences on $t, \norm{\vH},\epsilon, \tau, \beta$.
 Furthermore, a block-encoding of the purely irreversible Lindbladian $\mathcal{D}^{\beta, \tau,\vH}$ with discretized frequency labels can be implemented efficiently. 
\end{theorem}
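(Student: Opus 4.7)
The plan is to implement $\e^{t\CL}$ for the generator~\eqref{eq:mainL} by a first-order Trotter splitting into its coherent part $-\ri[\vH+\sum_a\alpha_a\vH_{LS,a}^{\beta,\tau,\vH},\cdot]$ and its purely dissipative part $\sum_a\alpha_a\CD_a^{\beta,\tau,\vH}$, iterated over $N\approx t\cdot \lnormp{\CL}{1-1}$ slices of length $\delta t\approx 1/\lnormp{\CL}{1-1}$. Because the $\omega$-integral in~\eqref{eq:dissipative-Lind} is effectively cut off at $|\omega|\lesssim \Lambda_0$ by the Gaussian factor in~\eqref{eq:glauber-dyn} and only resolves scales $\gtrsim 1/\tau$ (the width of $\hat{\vA}^a(\omega)$), a first reduction would replace each $\omega$-integral by a finite Riemann sum on a grid of size $\tCO(\tau)$, producing a discretized dissipator consisting of finitely many Lindblad operators $\sqrt{\gamma_\beta(\omega_k)\alpha_a}\,\hat{\vA}^a(\omega_k)$.

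For the coherent slice I would combine $\vU_{\vH/\lambda_1}$ from \cref{def:blockHamiltonian} with $\vU_{LS}$ from \cref{prop:blockLS} via an LCU into a block-encoding of $\vH+\sum_a\alpha_a\vH_{LS,a}^{\beta,\tau,\vH}$, then invoke QSVT-based Hamiltonian simulation to realize $\e^{-\ri(\vH+\vH_{LS})\delta t}$. The $\tCO((t+1)\tau)$ controlled-Hamiltonian-simulation budget is forced by the Lamb-shift block-encoding, which already internally invokes $\e^{\pm\ri\vH s}$ for $|s|\le \tau/2$ per the integral in~\eqref{eq:Lamb_shift}, and by the same Hamiltonian-simulation subroutine needed inside the operator Fourier transform below.

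For the dissipative slice I would construct a block-encoding of
\begin{equation*}
\vK \,:=\, \sum_{a,k}\ket{a,k}\otimes\sqrt{\gamma_\beta(\omega_k)\alpha_a}\,\hat{\vA}^a(\omega_k)
\end{equation*}
by (i) loading jumps via $\vV_{jump}$, (ii) realizing $\hat{\vA}^a(\omega_k)=\frac{1}{\sqrt{2\pi\tau}}\int_{-\tau/2}^{\tau/2}\e^{\ri\vH t}\vA^a\e^{-\ri\vH t}\e^{-\ri\omega_k t}\,\rd t$ via an LCU over a discretized time grid (with $\vW$ implementing the Walsh/Fourier transform over the $\omega_k$ register, and $\bm{prep}_{c_\beta(t)}$, $\bm{Prep}'_{c_\beta(\bt)}$ loading the time-domain and frequency-domain quadrature weights), and (iii) loading $\sqrt{\gamma_\beta(\omega_k)\alpha_a}$ via a dedicated $\bm{prep}$ unitary. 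A single Euler Lindblad step is then implemented by applying $\vK$ to $\ket{0^b}\otimes\vrho$ and measuring then resetting the $(a,k)$-register: projection onto $\ket{0^b}$ enacts $\vI-\tfrac{\delta t}{2}\vK^\dag\vK+\CO(\delta t^2)$, ``jump'' outcomes apply Kraus operators proportional to $\sqrt{\delta t}\,\hat{\vA}^a(\omega_k)$, and averaging reproduces $\vrho+\delta t\cdot \sum_a\alpha_a\CD_a^{\beta,\tau,\vH}(\vrho)+\CO(\delta t^2)$. Crucially, resetting the ancillas after each slice keeps the ancilla count at $\tCO(c+1)$ rather than scaling with $N$.

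The main obstacle I anticipate is the error bookkeeping: to reach total diamond-norm error $\epsilon$, one must simultaneously control (a) the first-order Trotter splitting between coherent and dissipative parts, (b) the $\omega$-grid discretization of $\CD_a$, (c) the inner $t$-grid discretization used to build $\hat{\vA}^a(\omega_k)$, (d) the linear-order truncation inherent to the weak-measurement step, and (e) the QSVT/LCU polynomial-approximation errors. Each contribution must be driven to $\tCO(\epsilon/N)$, and one must verify that the resulting quadrature grids have only polylogarithmically many points in $\tau,\beta,\norm{\vH},1/\epsilon$ so that the per-slice gate count remains $\tCO(c+1)$ and the per-slice query count remains $\tCO(1)$. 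A secondary subtlety is preserving the CPTP structure despite discretization, which forces both $\gamma_\beta(\omega_k)\alpha_a$ and the time-domain quadrature weights to be absorbed into properly normalized state-preparation unitaries rather than appearing as naive Riemann-sum coefficients.
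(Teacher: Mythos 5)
First, note that the paper does not prove this statement at all: Theorem~\ref{thm:LCUSim_main} is imported verbatim from~\cite{Chen2023quantumthermal}, and the surrounding text only lists the circuit primitives (the controlled Hamiltonian simulation, $\bm{Prep}_{c_\beta(\bt)}$, $\bm{Prep}'_{c_\beta(\bt)}$, and $\vW$) that the cited construction consumes. So there is no internal proof to match against; your attempt has to be judged against the construction in the cited work, whose building blocks you have in fact correctly identified (operator Fourier transform via LCU over a discretized time grid, controlled $\e^{\pm\ri \bt\vH}$, transition-weight rotation $\vW$, dilation-plus-reset to realize the dissipator, discretization errors absorbed at polylogarithmic cost).

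The genuine gap is in the time-slicing scheme, and it is fatal to the \emph{stated complexity}. A first-order Trotter split combined with a weak-measurement Euler step has per-slice error $\CO\big((\lnormp{\CL}{1-1}\,\delta t)^2\big)$, so the total diamond-norm error over $N = t/\delta t$ slices is $\CO\big(t\,\lnormp{\CL}{1-1}^2\,\delta t\big)$; driving this to $\epsilon$ forces $N = \CO(t^2\lnormp{\CL}{1-1}^2/\epsilon)$ slices. That is quadratic in $t$ and polynomial in $1/\epsilon$, whereas the theorem claims $\tCO(t+1)$ queries and $\tCO((t+1)\tau)$ Hamiltonian-simulation time with only \emph{polylogarithmic} dependence on $\epsilon$ hidden in $\tCO(\cdot)$. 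Your remark that ``each contribution must be driven to $\tCO(\epsilon/N)$'' does not rescue this, because for a first-order scheme the per-slice error is second order in $\delta t$, which ties $N$ to $1/\epsilon$ polynomially. The construction in~\cite{Chen2023quantumthermal} avoids this by not using naive Euler steps: it implements each segment through a truncated Dyson-type series for the Lindbladian dilation compressed via linear combinations of unitaries together with oblivious amplitude amplification for channels (in the spirit of the Cleve--Wang linear-time Lindbladian simulation), which is what yields linear scaling in $t$ with $\mathrm{polylog}(1/\epsilon)$ overhead while keeping the ancillas resettable and the per-segment query count $\tCO(1)$. Without that higher-order compression step, your argument proves a weaker statement (simulation with $\CO(t^2/\epsilon)$ cost), not the theorem as quoted.
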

\noindent The above uses the following circuit components required for implementation: the controlled Hamiltonian simulation
\begin{align}
	\sum_{\bt \in S_{t_0}}\ketbra{\bt}{\bt}\otimes \e^{\pm \ri \bt \vH},
\end{align}
the unitary gates for preparing the bath correlation function in superposition 
\begin{align}
    \bm{Prep}_{c_\beta(\bt)}: \ket{\bar{0}} \rightarrow \sum_{\bar{t}\in S_{t_0}} \sqrt{\labs{c_\beta(\bt)}} \ket{\bt} \quad \text{and}\quad   \bm{Prep}'_{c_\beta(\bt)}:\ket{\bar{0}} \rightarrow \sum_{\bar{t}\in S_{t_0}} \frac{c_\beta(\bt)}{\sqrt{\labs{c_\beta(\bt)}}} \ket{\bt},
\end{align}
and the controlled rotation for transition weights
\begin{align}
    \vW := \sum_{\bomega\in S_{\omega_0}}  \begin{pmatrix} \sqrt{\gamma(\bomega)} & -\sqrt{1-\gamma(\bomega)}\\ \sqrt{1-\gamma(\bomega)} &  \sqrt{\gamma(\bomega)} \end{pmatrix} \otimes \ketbra{\bomega}{\bomega}.
\end{align}
Indeed, the above implementation uses discrete labels for the time $\bt \in S_{t_0}$ and frequencies $\bomega\in S_{\omega_0}$ corresponding to $\rd t$ and $\rd \omega$; these dominate the ancilla use. For conceptual simplicity, we focus on the continuous integral everywhere else and emphasize that the discretization is merely for implementation and introduces a negligible error; see~\cite{Chen2023quantumthermal}.

The controlled Hamiltonian simulation can be implemented efficiently for any $n$-qubit local Hamiltonian $\vH$ \cite{berry2015hamiltonian, childs2017quantum, martyn2021grand}.
The other operations $\vW$, $\bm{prep}_{c_\beta(t)}$,$\bm{Prep}'_{c_\beta(\bt)}$ can all be implemented efficiently~\cite{Chen2023quantumthermal} with the physically-motivated choice considered in Appendix~\ref{sec:physics-thermo-Lind}.
\begin{proposition}[Gradient of an observable under Lindbladian evolution; adapted from~\cite{Chen2023quantumthermal}]\label{prop:LOL}
	Given block-encodings $\vU$ for a purely irreversible Lindbladian (Def.~\ref{def:blockLindladian}) and $\vU_{\vO}$ for a Hermitian observable $\vO$, we get a block-encoding of 
 \begin{align}
  \sum_{j\in J}\vL_j^\dag \vO \vL_j \quad \text{via}\quad \vV:= (\vY_{\frac{1}{2}}\otimes \vU^\dag\otimes \vI_d) \cdot \L(2\ketbra{0^{b+1}}{0^{b+1}}\otimes\vI-\vI\R)\otimes  \vU_{\vO}\cdot(\vY_{\frac{1}{2}}\otimes \vU\otimes \vI_d),
 \end{align}
where $\ket{\pm} := (\ket{0}\pm\ket{1})/\sqrt{2}$ and $\vY_{\frac{1}{2}} := \frac{1}{\sqrt{2}}\begin{pmatrix} 1 & -1\\ 1 &  1 \end{pmatrix}$. 
\end{proposition}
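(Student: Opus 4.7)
The plan is to verify directly that $\vV$ satisfies the block-encoding identity $(\bra{0^{b+1}}\bra{0^{c-b}}\bra{0^d}\otimes\vI_n)\,\vV\,(\ket{0^{b+1}}\ket{0^{c-b}}\ket{0^d}\otimes\vI_n) = \sum_{j\in J}\vL_j^\dag\vO\vL_j$ required by \cref{def:blockObservable}. The guiding observation is that if we introduce the isometry $\vK := \sum_{j\in J}\ket{j}\otimes\vL_j$ taking the $n$-qubit system to the $(c-b)$-qubit ``$j$''-register tensored with the system, then $\vK^\dag(\vI\otimes\vO)\vK = \sum_{j,j'}\braket{j'|j}\,\vL_{j'}^\dag\vO\vL_j = \sum_j\vL_j^\dag\vO\vL_j$ is the target operator. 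Thus the task reduces to checking that $\vV$ assembles this sandwich out of the block-encodings $\vU$ (containing $\vK$) and $\vU_\vO$ (containing $\vO$), with the junk contributions coming from the non-block parts of $\vU$ and $\vU_\vO$ correctly canceling.

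First I would fix the register convention --- one $\vY_{1/2}$ qubit, the $b$-qubit $\vU$-ancilla, the $(c-b)$-qubit $j$-register, the $d$-qubit $\vU_\vO$-ancilla, and the $n$-qubit system --- and expand the block-encoding properties as $\vU\ket{0^c}\ket{\psi} = \ket{0^b}\,\vK\ket{\psi} + \ket{\xi_\psi}$ where $\ket{\xi_\psi}$ has no $\ket{0^b}$ component on the first $b$ qubits, and analogously $\vU_\vO\ket{0^d}\ket{\phi} = \ket{0^d}\,\vO\ket{\phi} + \ket{\eta_\phi}$ with no $\ket{0^d}$ component on the $\vU_\vO$-ancilla. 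Next I would push the outer layer $\vY_{1/2}\otimes\vU^\dag\otimes\vI_d$ of $\vV$ onto the left bra, turning it into a bra whose ket form is $\vY_{1/2}^\dag\ket{0}\otimes\vU\ket{0^c}\ket{\psi'}\otimes\ket{0^d}$, while the input is $\vY_{1/2}\ket{0}\otimes\vU\ket{0^c}\ket{\psi}\otimes\ket{0^d}$. The problem then becomes computing the overlap of these two states across the middle operator $(2\ket{0^{b+1}}\bra{0^{b+1}}\otimes\vI - \vI)\otimes\vU_\vO$.

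The central computation splits the middle operator into two pieces. The ``$-\vI\otimes\vU_\vO$'' piece yields an ancilla-$1$ overlap $(\bra{0}\vY_{1/2}^\dag)(\vY_{1/2}\ket{0}) = \tfrac{1}{2}(\bra{0}-\bra{1})(\ket{0}+\ket{1}) = 0$, so this contribution vanishes automatically --- this is precisely the purpose of using two $\vY_{1/2}$ rotations rather than a single Hadamard pair. For the ``$2\ket{0^{b+1}}\bra{0^{b+1}}\otimes\vI\otimes\vU_\vO$'' piece, the projector onto $\ket{0}\ket{0^b}$ on registers $(1,2)$ kills the $\ket{\xi_\psi}$ junk and extracts $\tfrac{1}{\sqrt{2}}\vK\ket{\psi}$ on the remaining $(j,\text{system})$ registers; applying $\vU_\vO$ then produces $\ket{0^d}\vO\vL_j\ket{\psi}$ together with a piece killed by the final $\bra{0^d}$; and the identity $\bra{0^c}\vU^\dag\ket{0^b} = \vK^\dag$ recovers the outer factor. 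Tracking the scalars ($2$ from the projector, $\tfrac{1}{\sqrt{2}}$ from each $\vY_{1/2}$ overlap) gives $2\cdot\tfrac{1}{\sqrt{2}}\cdot\tfrac{1}{\sqrt{2}} = 1$, yielding exactly $\bra{\psi'}\vK^\dag(\vI\otimes\vO)\vK\ket{\psi} = \bra{\psi'}\sum_j\vL_j^\dag\vO\vL_j\ket{\psi}$ as required.

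The main obstacle is essentially bookkeeping: coordinating the five registers and verifying that the $\vU$-junk $\ket{\xi_\psi}$ and the $\vU_\vO$-junk $\ket{\eta_\phi}$ are annihilated at the correct places. The nontrivial design choice is the use of $\vY_{1/2}$ on both sides instead of a Hadamard and its inverse; this becomes transparent once one notes that the purpose is to arrange $\vY_{1/2}^\dag\ket{0} \perp \vY_{1/2}\ket{0}$, so that the identity part of the reflection silently cancels and only the $\ket{0^{b+1}}\bra{0^{b+1}}$ projector piece carrying the useful $\vK^\dag(\vI\otimes\vO)\vK$ structure survives. Once this cancellation is recognized, the remainder of the proof is a routine chase of ancilla indices through the two block-encodings.
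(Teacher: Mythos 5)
Your proposal is correct and takes essentially the same route as the paper's proof: a direct verification of the block-encoding identity in which the $-\vI$ part of the reflection is annihilated by the overlap $\braket{-|+}=0$ coming from the two $\vY_{\frac{1}{2}}$ rotations, while the $2\ketbra{0^{b+1}}{0^{b+1}}$ projector piece (with scalar $2\cdot\tfrac12=1$) extracts $\sum_{j\in J}\vL_j^\dag\vO\vL_j$ exactly as in the paper's computation. The only cosmetic slip is the symbol $\bra{0}\vY_{\frac{1}{2}}^\dag$ for the qubit-one bra, which is actually $\bra{0}\vY_{\frac{1}{2}}=\bra{-}$ (the dagger of $\vY_{\frac{1}{2}}^\dag\ket{0}$); your explicit evaluation $\tfrac12(\bra{0}-\bra{1})(\ket{0}+\ket{1})=0$ is the correct quantity, so nothing in the argument is affected.
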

\begin{proof}
We calculate
\begin{align}
	&(\bra{0^{c+1}}\otimes \vI\otimes \bra{0^d})\cdot \vV\cdot (\ket{0^{c+1}}\otimes \vI\otimes \ket{0^d}) \nonumber \\
	&=\bigg(\bra{-}\otimes (\bra{0^{c}}\otimes \vI)\vU^\dag \otimes \bra{0^d}\bigg)\cdot\L(2\ketbra{0^{b+1}}{0^{b+1}}\otimes\vI-\vI\R)\otimes \vU_{\vO} \cdot\bigg(\ket{+}\otimes \vU(\ket{0^{c}}\otimes \vI)\otimes \ket{0^d}\bigg) \nonumber\\
 &=\bigg(\bra{-}\otimes (\bra{0^{c}}\otimes \vI)\vU^\dag \otimes \bra{0^d}\bigg)\cdot\L(2\ketbra{0^{b+1}}{0^{b+1}}\otimes\vI\R)\otimes \vU_{\vO} \cdot\bigg(\ket{+}\otimes \vU(\ket{0^{c}}\otimes \vI)\otimes \ket{0^d}\bigg) \nonumber\\&
	=(\bra{0^{c}}\otimes \vI)\cdot \vU^\dag\cdot (\ketbra{0^{b}}{0^{b}}\otimes\vI)\otimes \vO\cdot (\ket{0^{c}}\otimes \vI) \\&
	=\L(\sum_{j\in J}\bra{j}\otimes \vL_j^\dag \R)\vI\otimes \vO  \L(\sum_{j'\in J}\ket{j'}\otimes \vL_{j'}\R)
	=\sum_{j\in J}\vL_j^\dag \vO \vL_j.\tag*{\qedhere}
\end{align}
\end{proof}

\begin{corollary}[Block-encoding the gradient of the Hamiltonian] \label{cor:grad-Hamil}
Given a block-encoding for a purely irreversible Lindbladian $\CL$ and a Hamiltonian $\vH$, there is an efficient block-encoding for
\begin{align}
    \frac{1}{2}\CL^\dag[\vH],
\end{align}
which is a Hermitian operator corresponding to the gradient of $\vH$ under $\CL$.
\end{corollary}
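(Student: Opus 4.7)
The plan is to directly apply Proposition~\ref{prop:LOL} together with a small linear combination of unitaries (LCU) step. Recall that for a purely irreversible Lindbladian $\CL[\vrho]=\sum_j(\vL_j\vrho\vL_j^\dag-\tfrac12\{\vL_j^\dag\vL_j,\vrho\})$, the adjoint acts on an observable $\vO$ as
\begin{equation}
    \CL^\dag[\vO]=\sum_{j\in J}\vL_j^\dag \vO\vL_j-\frac{1}{2}\Big\{\sum_{j\in J}\vL_j^\dag\vL_j,\,\vO\Big\}.
\end{equation}
So it suffices to block-encode each of the two terms on the right with $\vO=\vH$ and combine them.

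For the first term, I would feed the given block-encoding $\vU$ of the Lindblad operators $\{\vL_j\}_{j\in J}$ (Definition~\ref{def:blockLindladian}) and the block-encoding $\vU_{\vH/\lambda_1}$ of the Hamiltonian (Proposition~\ref{def:blockHamiltonian}) directly into Proposition~\ref{prop:LOL}. This produces, up to a known scalar normalization $\lambda_1$ that absorbs into the effective subnormalization of the block-encoding, a unitary whose top-left block is $\sum_{j\in J}\vL_j^\dag \vH\vL_j$. For the anticommutator term, note that taking $\vO=\vI$ in Proposition~\ref{prop:LOL} yields a block-encoding of $\sum_{j\in J}\vL_j^\dag\vL_j$; composing this block-encoding with $\vU_{\vH/\lambda_1}$ (and its adjoint), using the standard product rule for block-encodings, gives block-encodings of both $\vH\sum_j\vL_j^\dag\vL_j$ and $\sum_j\vL_j^\dag\vL_j\,\vH$, whose average is the Hermitian operator $\tfrac12\{\sum_j\vL_j^\dag\vL_j,\vH\}$.

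Finally, I would invoke the standard LCU lemma to take the weighted difference
\begin{equation}
    \frac{1}{2}\CL^\dag[\vH]=\frac{1}{2}\sum_{j\in J}\vL_j^\dag \vH\vL_j-\frac{1}{4}\Big\{\sum_{j\in J}\vL_j^\dag\vL_j,\,\vH\Big\},
\end{equation}
reading off the coefficients $(\tfrac12,-\tfrac14,-\tfrac14)$ to build a single-qubit $\bm{prep}$ unitary and a controlled $\bm{select}$ unitary that routes between the three block-encodings constructed above. The factor $\tfrac12$ in the statement absorbs the resulting normalization constant of the LCU and the subnormalization $\lambda_1$ from the Hamiltonian block-encoding (these can be rescaled by amplification or simply treated as the block-encoding subnormalization), and Hermiticity of the target guarantees that the output is a valid block-encoding of a Hermitian operator.

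The step I expect to require the most care is bookkeeping the subnormalization factors: the Lindblad operators $\vL_j$ carry an implicit $\sqrt{\gamma_\beta}$-type weighting inside $\vU$, the Hamiltonian block-encoding has subnormalization $\lambda_1$, and composing block-encodings multiplies these. Getting the output block-encoding to match exactly $\tfrac12\CL^\dag[\vH]$ (rather than some rescaling) is a routine but not entirely mechanical matter, and will dictate the precise LCU coefficients. No deep obstacle arises, however, because all constituent operators already have efficient block-encodings from the cited results.
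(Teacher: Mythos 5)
Your proposal is correct and follows essentially the same route as the paper: apply Proposition~\ref{prop:LOL} with $\vO=\vH$ and $\vO=\vI$ to block-encode $\sum_j \vL_j^\dag \vH \vL_j$ and $\sum_j \vL_j^\dag \vL_j$, then combine via block-encoding products and linear combinations into $\tfrac12\CL^\dag[\vH]=\tfrac12\sum_j\vL_j^\dag\vH\vL_j-\tfrac14\sum_j\vL_j^\dag\vL_j\,\vH-\tfrac14\vH\sum_j\vL_j^\dag\vL_j$, exactly the decomposition and coefficients used in the paper (which phrases the combination step as QSVT arithmetic rather than LCU, a cosmetic difference). Your extra attention to the subnormalization $\lambda_1$ is fine and only affects the overall scale of the block-encoding, which the statement tolerates.
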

\begin{proof}
    Apply Proposition~\ref{prop:LOL} for Lindbladian $\CL$, Hermitian observables $\vH$ and $\vI$ to obtain block-encodings for $\sum_{j\in J}\vL_j^\dag \vH \vL_j$ and $\sum_{j\in J}\vL_j^\dag\vL_j$. Then, use quantum singular value transform (QSVT) for products and sums of block-encoding to obtain the block-encoding for
    \begin{equation}
        \frac{1}{2}\CL^\dag[\vH] = \frac{1}{2}\sum_{j\in J}\vL_j^\dag \vH \vL_j - \frac{1}{4} \sum_{j\in J}\vL_j^\dag\vL_j \vH - \frac{1}{4} \vH \sum_{j\in J}\vL_j^\dag\vL_j
    \end{equation}
    at high precision.
\end{proof}

From all of the above propositions, corollaries, and theorems, we can obtain the following.

\begin{lemma}[Measuring energy gradient] \label{lem:measure-energy-grad}
Given an $n$-qubit Hamiltonian $\vH$, inverse temperature $\beta \geq 0,$ time scale $\tau \ge 0$, samples of an $n$-qubit state $\vrho$, a thermal Lindbladian $\mathcal{L}_{\beta, \tau, \{\vA^a\}_a}$ from Eq.~\eqref{eq:mainL}.
The energy gradient
\begin{equation}
\Tr(\vH \mathcal{L}^{\beta, \tau, \vH}(\vrho) ) = \Tr( \mathcal{L}^{\dag \beta, \tau, \vH} (\vH) \vrho )    
\end{equation}
can be estimated to error $\epsilon$ using time and samples of $\vrho$ polynomial in $n, 1/\epsilon, \norm{\vH}, \beta, \tau$.
\end{lemma}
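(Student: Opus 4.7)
The plan is to build an efficient block-encoding of the Heisenberg-picture operator $\mathcal{L}^{\dag\beta,\tau,\vH}(\vH)$ and then invoke Proposition~\ref{prop:measure-obs-block-enc} to estimate its expectation value on samples of $\vrho$. The starting point is the decomposition implied by Eq.~\eqref{eq:mainL} and its dual,
\begin{equation}
\mathcal{L}^{\dag\beta,\tau,\vH}(\vH) \;=\; \ri[\vH^{\beta,\tau,\vH}_{LS},\vH] \;+\; \mathcal{D}^{\dag\beta,\tau,\vH}(\vH),
\end{equation}
which is manifestly Hermitian since both $\vH$ and $\vH_{LS}$ are Hermitian. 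By Proposition~\ref{prop:norm-thermal-lindblad} the induced $\infty{-}\infty$ norm of $\mathcal{L}^{\dag\beta,\tau,\vH}$ is bounded by an absolute constant under our normalization of jumps, so the operator we want to measure has norm at most $3\norm{\vH} = \poly(n)$.

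First I would construct a block-encoding of the dissipative contribution $\mathcal{D}^{\dag\beta,\tau,\vH}(\vH)$. Theorem~\ref{thm:LCUSim_main} already produces an efficient block-encoding of the purely irreversible Lindbladian $\mathcal{D}^{\beta,\tau,\vH}$ (with discretized frequency labels), and Proposition~\ref{def:blockHamiltonian} gives a block-encoding of $\vH/\lambda_1$. Feeding these into Corollary~\ref{cor:grad-Hamil} yields a block-encoding of $\tfrac{1}{2}\mathcal{D}^{\dag\beta,\tau,\vH}(\vH)$, up to the normalization $\lambda_1$ inherited from the Hamiltonian block-encoding. Next I would construct a block-encoding of the Lamb-shift commutator $\ri[\vH_{LS},\vH]$: Proposition~\ref{prop:blockLS} supplies a block-encoding of $\vH_{LS}$, and composing it with the Hamiltonian block-encoding via standard QSVT rules for products and subtraction gives a block-encoding of $\vH_{LS}\vH-\vH\vH_{LS}$ at precision we control. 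Finally, I would sum the two block-encodings using an LCU routine, absorbing the factor of $\ri$ as a sign flip via the standard trick, to produce a block-encoding of $\mathcal{L}^{\dag\beta,\tau,\vH}(\vH)/\lambda$ for some $\lambda = \poly(n,\norm{\vH},\beta,\tau)$ depending on the normalization factors used above.

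With this block-encoding in hand, Proposition~\ref{prop:measure-obs-block-enc} lets us estimate $\Tr(\mathcal{L}^{\dag\beta,\tau,\vH}(\vH)\vrho)/\lambda$ to precision $\epsilon/\lambda$ using $\tilde{\mathcal{O}}(\lambda^2/\epsilon^2) = \poly(n,1/\epsilon,\norm{\vH},\beta,\tau)$ samples of $\vrho$ and the same number of queries to the block-encoding. Each such query costs $\poly(n,\beta,\tau,\log(1/\epsilon))$ gates by Theorem~\ref{thm:LCUSim_main} and the standard cost of QSVT composition, so the total runtime is $\poly(n,1/\epsilon,\norm{\vH},\beta,\tau)$ as claimed.

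The main bookkeeping obstacle is tracking the accumulated normalization factor $\lambda$ through all the block-encoding compositions (products, sums, and the Lindbladian simulation) so that $\lambda$ remains polynomial in the problem parameters; this is especially subtle for the Lamb-shift factor, since Proposition~\ref{prop:norm-lamb-shift-part} controls $\norm{\vH_{LS}}$ but the commutator $[\vH_{LS},\vH]$ can be as large as $\norm{\vH}$. A secondary subtlety is ensuring that the discretization errors introduced by Theorem~\ref{thm:LCUSim_main} (finite time and frequency grids $S_{t_0},S_{\omega_0}$) can be tuned to produce an approximation of $\mathcal{D}^{\dag\beta,\tau,\vH}(\vH)$ in operator norm at precision $\epsilon/\norm{\vH}$, which only costs $\polylog(1/\epsilon)$ additional ancilla and thus preserves the overall polynomial bound.
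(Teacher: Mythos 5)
Your proposal is correct and follows essentially the same route as the paper's proof: decompose $\mathcal{L}^{\dag}(\vH) = \ri[\vH_{LS},\vH] + \mathcal{D}^{\dag}(\vH)$, block-encode the commutator from Propositions~\ref{def:blockHamiltonian} and \ref{prop:blockLS}, block-encode $\mathcal{D}^{\dag}(\vH)$ via Theorem~\ref{thm:LCUSim_main} and Corollary~\ref{cor:grad-Hamil}, combine with QSVT/LCU, and finish with Proposition~\ref{prop:measure-obs-block-enc}. Your explicit tracking of the normalization factor $\lambda$ and the discretization error is a welcome addition of detail that the paper compresses into a one-line remark about high-precision QSVT and poly-logarithmic discretization costs.
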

\begin{proof}
    From the form of thermal Lindbladians~\eqref{eq:mainL} and dropping the scripts $\mathcal{L}^{\beta, \tau, \vH} =\CL, \vH_{LS}^{\beta, \tau, \vH} = \vH_{LS}, \mathcal{D}^{\beta, \tau,\vH} = \CD$ we have
    \begin{equation}
        \mathcal{L}^{\dag} (\vH) = \ri [\vH_{LS}, \vH] + \mathcal{D}^{\dag}(\vH).
    \end{equation}
    Our goal is to create the block-encoding for $\mathcal{L}^{\dag} (\vH)$.
    First, we use quantum singular value transform (QSVT) for products and sums of block-encoding to obtain the block-encoding for $\ri [\vH_{LS}, \vH]$ from block-encoding for $\vH$ and $\vH_{LS}$ in Propositions \ref{def:blockHamiltonian} and \ref{prop:blockLS}.
    Next, using the block-encoding for the purely irreversible Lindbladian $\mathcal{D}$ from Theorem~\ref{thm:LCUSim_main} and the block-encoding for $\vH$, we can apply Corollary~\ref{cor:grad-Hamil} to obtain efficient block-encoding for $\mathcal{D}^{\dag}(\vH)$.
    To obtain the block-encoding for $\mathcal{L}^{\dag} (\vH)$, we use QSVT for sums of block-encoding to add $\ri [\vH_{LS}, \vH]$ and $\mathcal{D}^{\dag}(\vH)$.
    Finally, using Prop.~\ref{prop:measure-obs-block-enc}, we can estimate $\Tr(\mathcal{L}^{\dag} (\vH) \vrho)$ efficiently. All the above QSVT manipulations operate at high precision, and the discrete Fourier transform well-approximates the continuum at poly-logarithmic costs~\cite{Chen2023quantumthermal}.
\end{proof}

\section[A polynomial-time quantum algorithm for finding a local minimum under thermal perturbations (Proof of Theorem~\ref{thm:quantum-easy-thermal})]{A polynomial-time quantum algorithm for finding a local minimum under thermal perturbations (Proof of Theorem~\ref{thm:quantum-easy-thermal})} \label{sec:proof-thm-quantum-easy-thermal}

In this appendix, we present the proof of Theorem~\ref{thm:quantum-easy-thermal} by giving a polynomial-time quantum algorithm for finding local minima under thermal perturbations.
We refer to the efficient quantum algorithm as \emph{Quantum thermal gradient descent} as the algorithm performs gradient descent using thermal Lindbladians induced by a heat bath.
The algorithm uses the properties of thermal Lindbladians presented in Appendix~\ref{sec:thermo-lindblad-detail}.

\subsection{Cooling by gradient descent}

The central idea of quantum thermal gradient descent is the following.
When we are not at a local minimum under thermal perturbations, the negative energy gradient will be sufficiently large, and we can decrease the energy by following a direction with a negative energy gradient.
This is characterized by the following lemma.
We will use this lemma to design the gradient descent algorithm for finding a local minimum.

\begin{lemma}[Cooling by gradient descent] \label{lem:cool-GD}
Given parameters $0 < \tilde{\epsilon} < 0.5, B \geq 1$, $\beta, \tau \ge 0$, an $n$-qubit Hamiltonian $\vH$ with $\norm{\vH}_\infty \leq B$, and $m$ local jump operators $\{\vA^a\}_a$.
Consider $a = 1, \ldots, m$ with an approximate energy gradient $g_{a}$ satisfying
\begin{equation}
    \left| g_a - \Tr\left(\vH \mathcal{L}^{\beta, \tau, \vH}_{a} [\vrho^{(t-1)}] \right) \right| < 0.01 \tilde{\epsilon}
\end{equation}
Suppose there exist $a^* \in \{1, \ldots, m\}$ with sufficiently negative approximate energy gradient,
\begin{equation}
    g_{a^*} < -0.99 \tilde{\epsilon}.
\end{equation}
The state after evolving $\vrho$ along the direction $\hat{\be}_{a^*}$ for a small step $s = |g_{a^*}| / (9 B^2) > 0$,
\begin{equation}
    \vrho^{(\mathrm{next})} := \exp_{\vrho}^{\beta, \tau, \vH, \{\vA^a\}_a}\left( s \hat{\be}_{a^*} \right)
\end{equation}
guarantees the following energy decrease,
\begin{equation}
    \Tr\left(\vH \vrho^{(\mathrm{next})} \right) < \Tr(\vH \vrho) - \frac{\tilde{\epsilon}^2}{20 B^2}.
\end{equation}
\end{lemma}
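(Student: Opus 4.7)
The plan is to apply Taylor's theorem for thermal perturbations (Prop.~\ref{prop:taylor-thermal}) to the one-parameter curve $s \mapsto \exp_{\vrho}^{\beta,\tau,\vH,\{\vA^a\}_a}(s\hat{\be}_{a^*})$. This gives
\begin{equation*}
\Tr(\vH\vrho^{(\mathrm{next})}) = \Tr(\vH\vrho) + s\,\Tr\!\bigl(\vH\,\CL^{\beta,\tau,\vH}_{a^*}[\vrho]\bigr) + \tfrac{1}{2}s^{2}\,\Tr\!\bigl(\vH\,(\CL^{\beta,\tau,\vH}_{a^*})^{2}[\vsigma]\bigr)
\end{equation*}
for some state $\vsigma$ on the trajectory. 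So the task reduces to producing a negative upper bound on the linear term and a modest upper bound on the quadratic term, then choosing $s$ to trade them off.

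For the linear term, the hypothesis gives $\Tr(\vH\,\CL^{\beta,\tau,\vH}_{a^*}[\vrho]) \le g_{a^*} + 0.01\tilde{\epsilon} < -0.98\tilde{\epsilon}$, and more usefully, writing $|g_{a^*}| = -g_{a^*}$,
\begin{equation*}
s\,\Tr\!\bigl(\vH\,\CL^{\beta,\tau,\vH}_{a^*}[\vrho]\bigr) \le -s|g_{a^*}| + 0.01\,s\,\tilde{\epsilon}.
\end{equation*}
For the quadratic term, I would pass to the Heisenberg picture and use Prop.~\ref{prop:norm-thermal-lindblad} together with the normalization $\|\vA^{a\dag}\vA^{a}\|_\infty\le 1$ from Eq.~\eqref{eq:normalization-vAa} to get $\|(\CL^{\beta,\tau,\vH}_{a^*})^{\dag\,2}[\vH]\|_\infty \le 9\|\vH\|_\infty \le 9B$, whence $|\Tr(\vH\,(\CL^{\beta,\tau,\vH}_{a^*})^{2}[\vsigma])| \le 9B$.

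Plugging in $s = |g_{a^*}|/(9B^{2})$ gives a net energy change bounded by
\begin{equation*}
\Delta E \;\le\; -\frac{|g_{a^*}|^{2}}{9B^{2}} + \frac{0.01\,|g_{a^*}|\,\tilde{\epsilon}}{9B^{2}} + \frac{|g_{a^*}|^{2}}{18B^{3}}.
\end{equation*}
Using $|g_{a^*}| > 0.99\tilde{\epsilon}$ to absorb the middle term (it is at most $\approx 0.0101\,|g_{a^*}|^{2}/(9B^{2})$) and $B\ge 1$ to bound $1/B^{3} \le 1/B^{2}$, the three terms combine to $\Delta E \le -c\,|g_{a^*}|^{2}/B^{2}$ for some explicit constant $c > 0.053$. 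One more application of $|g_{a^*}|^{2} > (0.99\tilde{\epsilon})^{2}$ then yields $\Delta E < -\tilde{\epsilon}^{2}/(20B^{2})$, as claimed.

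There is no real obstacle in this argument; the only subtlety is bookkeeping the constants so that the $0.01\tilde{\epsilon}$ gradient-estimation error is comfortably dominated by the true descent direction $|g_{a^*}|$ (which the hypothesis $g_{a^*} < -0.99\tilde{\epsilon}$ is precisely designed to guarantee) and so that the quadratic correction at step size $s\propto |g_{a^*}|/B^{2}$ is strictly smaller than the gained linear progress. The choice $s = |g_{a^*}|/(9B^{2})$ is essentially the standard $-\nabla/L$ step with Lipschitz constant $L = 9B$ coming from the $\CO(1)$ operator norm of $\CL^{\beta,\tau,\vH}_{a^*}$.
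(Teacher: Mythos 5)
Your proposal is correct and follows essentially the same route as the paper's proof: expand via Taylor's theorem for thermal perturbations (Prop.~\ref{prop:taylor-thermal}), lower-bound the linear term using the gradient-estimation accuracy and the threshold $g_{a^*}<-0.99\tilde\epsilon$, upper-bound the quadratic remainder by $\CO(B)$ using the superoperator norm bounds on the thermal Lindbladian (Props.~\ref{prop:bound-ene-grad} and~\ref{prop:norm-thermal-lindblad}), and then plug in $s=|g_{a^*}|/(9B^2)$ with the same constant bookkeeping (your remainder bound $4.5B$ is in fact slightly tighter than the paper's stated $4.5B^2$, and both comfortably yield the $\tilde\epsilon^2/(20B^2)$ decrease).
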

\begin{proof}
From Prop.~\ref{prop:taylor-thermal} on Taylor's theorem, we have the following identity
\begin{equation}
    \Tr\left(\vH \vrho^{(\mathrm{next})} \right) = \Tr(\vH \vrho) + s \Tr(\vH \CL^{\beta, \tau, \vH}_{a^*} [\vrho]) + \frac{s^2}{2} \Tr(\vH \CL^{\beta, \tau, \vH}_{a^*} [\CL^{\beta, \tau, \vH}_{a^*}[\vsigma]])
\end{equation}
for some $n$-qubit state $\vsigma$.
We will separately control the linear term and the quadratic term.

\paragraph{Linear term.}
From the definition of the energy gradient vector ${\bm{\nabla}}_{\beta, \tau, \{\vA^a\}_a}(\vH, \vrho)$, we have
\begin{equation}
    \Tr(\vH \CL^{\beta, \tau, \vH}_{a^*} [\vrho]) < g_{a^*} + 0.01 \tilde{\epsilon} < \frac{98}{99} g_{a^*} = -\frac{98}{99} |g_{a^*}|.
\end{equation}
The second inequality follows from $g_{a^*} < -0.99 \tilde{\epsilon}$, hence $0.01 \tilde{\epsilon} < -(1/99) g_{a^*}$.

\paragraph{Quadratic term.}
We can bound the quadratic term as follows,
\begin{align}
    \frac{1}{2} \Tr(\vH \CL^{\beta, \tau, \vH}_{a^*} [\CL^{\beta, \tau, \vH}_{a^*}[\vsigma]]) \leq \frac{1}{2} \lnormp{\CL^{\dag\beta, \tau, \vH }_{a^*}}{\infty-\infty} \norm{\CL_{a^*}^{\dag \beta, \tau, \vH}(\vH)}_\infty.
\end{align}
From Prop.~\ref{prop:bound-ene-grad} and Prop.~\ref{prop:norm-thermal-lindblad} that bounds the norm of these objects, we have
\begin{equation}
    \frac{1}{2}\lnormp{\mathcal{L}^{\dag\beta, \tau, \vH }_a}{\infty-\infty} \norm{\CL_{a'}^{\dag \beta, \tau, \vH}(\vH)}_\infty \leq 4.5 \norm{\vH}_\infty^2 \leq 4.5 B^2.
\end{equation}
Combining the linear and quadratic terms with $s = |g_{a^*}| / (9 B^2) > 0$, we have
\begin{equation}
    \Tr\left(\vH \vrho^{(\mathrm{next})} \right) \leq \Tr(\vH \vrho) - \left(\frac{98}{99 \times 9} - \frac{1}{18} \right) \frac{|g_{a^*}|^2}{B^2} < \Tr(\vH \vrho) - 0.054 \frac{|g_{a^*}|^2}{B^2}.
\end{equation}
We can use $|g_{a^*}|^2 > 0.99^2 \tilde{\epsilon}^2$ to obtain the desired claim.
\end{proof}

\subsection{Quantum thermal gradient descent}

Given error $\epsilon = 1 / \mathrm{poly}(n)$, norm bound $B = \mathrm{poly}(n)$, inverse temperature $0 \leq \beta \leq \mathrm{poly}(n)$, time scale $\tau = \mathrm{poly}(n)$, an $n$-qubit local Hamiltonian~$\vH$ with $\norm{\vH}_\infty \leq B$, $m$ local jump operators $\{\vA^a\}_a$ with $m = \mathrm{poly}(n)$, and a local observable $\vO$ with $\norm{\vO}_\infty \leq 1$.

We consider a coordinate-wise gradient descent algorithm that implements the following.
The initial state $\vrho^{(0)}$ is arbitrary as long as copies of the state can be prepared on the quantum computer. For example, we can set $\vrho^{(0)}$ to be the maximally mixed state $\frac{\vI}{2^n}$.
The total number of steps is
\begin{equation} \label{eq:T-size-def}
    T := \frac{42 B^3}{\epsilon^2}.
\end{equation}
For each time step $t$ from $1$ to $T$, the algorithm does the following.
\begin{enumerate}
    \item For each direction $a = 1, \ldots, m$, estimate an approximate energy gradient $g_a^{(t)}$ satisfying
    \begin{equation} \label{eq:est-err-grad-vec}
    \left| g_a^{(t)} - \Tr\left(\mathcal{L}^{\dag \beta, \tau, \vH}_{a}(\vH) \vrho^{(t-1)} \right) \right| < 0.0099 \epsilon.
    \end{equation}
    The energy gradient can be estimated efficiently using Lemma~\ref{lem:measure-energy-grad} given copies of $\vrho^{(t-1)}$ prepared through Eq.~\eqref{eq:vrhot-GD} and Theorem~\ref{thm:LCUSim_main}.
    From the bound on energy gradients in Prop.~\ref{prop:bound-ene-grad}, we have $|g_a^{(t)}| \leq 3 B + 0.0099 \epsilon$.
    If $g_a^{(t)} < -0.99 \epsilon$, set $a^{(t)} := a$ and \emph{terminate} the for-loop over $a$.
    \item If $a^{(t)}$ is not found, set $\vrho^{(T)} := \vrho^{(t-1)}$ and \emph{terminate} the for-loop over $t$. Otherwise evolve $\vrho^{(t-1)}$ under the direction $\hat{\be}_{a^{(t)}}$ for a small step $s^{(t)} := |g_a^{(t)}| / (9 B^2)$,
    \begin{equation} \label{eq:vrhot-GD}
        \vrho^{(t)} := \exp\left( s^{(t)} \mathcal{L}^{\beta, \tau, \vH}_{a^{(t)}} \right)\left(\vrho^{(t-1)} \right) = \prod_{t' = 1}^{t} \exp\left( s^{(t)} \mathcal{L}^{\beta, \tau, \vH}_{a^{(t)}} \right) \left(\vrho^{(0)}\right).
    \end{equation}
    Because $0 \leq s^{(t)} \leq 1 / (2 B)$, a single copy of $\vrho^{(t)}$ can be prepared in polynomial-time using the thermal Lindbladian simulation algorithm in \cite{Chen2023quantumthermal}; see Theorem~\ref{thm:LCUSim_main}.
\end{enumerate}
We will show that the state $\vrho^{(T)}$ created by the gradient descent algorithm is an $\epsilon$-approximate local minimum of $\vH$ under thermal perturbations.
Furthermore, using the thermal Lindbladian simulation algorithm, a quantum machine can efficiently create many copies of $\vrho^{(T)}$.

\subsection{Proof of Theorem~\ref{thm:quantum-easy-thermal}}

The central idea in the proof of Theorem~\ref{thm:quantum-easy-thermal} is the following lemma.
The lemma combines the key results characterizing local minima in Appendix~\ref{sec:local-minima-thermal-prop}.

\begin{lemma}[Gradient descent finds a local minimum]
    $\vrho^{(T)}$ from Eq.~\eqref{eq:vrhot-GD} is an $\epsilon$-approximate local minimum of $\vH$ under thermal perturbations with inverse temperature $\beta$, time scale $\tau$, and system-bath interactions generated by $\{\vA^a\}_a$.
\end{lemma}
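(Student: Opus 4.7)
The plan is to combine the cooling guarantee (Lemma~\ref{lem:cool-GD}) with the sufficient condition for local minima (Lemma~\ref{lem:suff-QLM}) via a standard potential-function argument, using the energy $\Tr(\vH\vrho^{(t)})$ as the potential. The argument splits naturally into two cases depending on how the outer for-loop exits.

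First I would analyze the ``early termination'' case: if at some step $t \leq T$ no index $a^{(t)}$ is found, then by construction $g_a^{(t)} \geq -0.99\,\epsilon$ for every $a=1,\dots,m$. Combining with the gradient-estimation accuracy~\eqref{eq:est-err-grad-vec} gives $\Tr(\CL_a^{\dag\beta,\tau,\vH}(\vH)\vrho^{(t-1)}) \geq -0.99\,\epsilon - 0.0099\,\epsilon > -\epsilon$ for every $a$, so
\begin{equation}
    \norm{{\bm{\nabla}}^-_{\beta,\tau,\{\vA^a\}_a}(\vH,\vrho^{(t-1)})}_\infty < \epsilon.
\end{equation}
Lemma~\ref{lem:suff-QLM} then certifies $\vrho^{(T)} = \vrho^{(t-1)}$ as an $\epsilon$-approximate local minimum, which is the desired conclusion.

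Second, I would rule out the ``no early termination'' case by a counting argument. Whenever an index $a^{(t)}$ is found, the found coordinate has $g_{a^{(t)}}^{(t)} < -0.99\,\epsilon$ and the estimation accuracy~\eqref{eq:est-err-grad-vec} places us precisely in the setting of Lemma~\ref{lem:cool-GD} (with $\tilde\epsilon = \epsilon$ and the norm bound $B$). That lemma gives
\begin{equation}
    \Tr(\vH\vrho^{(t)}) < \Tr(\vH\vrho^{(t-1)}) - \frac{\epsilon^2}{20\,B^2}.
\end{equation}
Telescoping, if every step $t=1,\dots,T$ were active, the energy would drop by more than $T\,\epsilon^2/(20B^2) = 42 B^3/(20 B^2) = 2.1\,B$, contradicting the a priori bound $|\Tr(\vH\vrho^{(t)})| \leq B$ (so the total drop is at most $2B$). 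Hence the for-loop must have exited early for some $t \leq T$, reducing to the first case.

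The proof is essentially routine once Lemmas~\ref{lem:cool-GD} and~\ref{lem:suff-QLM} are in hand; the only delicate point is bookkeeping the constants so that the slack between the algorithm's threshold $-0.99\,\epsilon$, the estimation error $0.0099\,\epsilon$, and the final required bound $\epsilon$ in Lemma~\ref{lem:suff-QLM} all line up, and so that the chosen $T$ in~\eqref{eq:T-size-def} strictly exceeds the maximum number of possible active steps $\lceil 2B/(\epsilon^2/(20 B^2))\rceil = 40\,B^3/\epsilon^2$. I expect no further obstacle, since the heavy lifting (gradient simulation cost, smoothness of $\CL^{\beta,\tau,\vH}$, and the Taylor expansion underlying the cooling step) is already packaged in the cited lemmas and in Theorem~\ref{thm:LCUSim_main}.
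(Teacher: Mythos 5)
Your proof is correct and follows essentially the same route as the paper's: the early-termination case is certified by Lemma~\ref{lem:suff-QLM}, and the no-early-termination case is ruled out by telescoping the per-step decrease from Lemma~\ref{lem:cool-GD} against the a priori bound $|\Tr(\vH \vrho)| \leq B$. The only cosmetic difference is that you invoke Lemma~\ref{lem:cool-GD} with $\tilde{\epsilon} = \epsilon$ rather than the paper's $\tilde{\epsilon} = 0.99\epsilon$; both choices satisfy its hypotheses (the algorithm's $0.0099\epsilon$ estimation error and $-0.99\epsilon$ threshold) and yield the contradiction within $T = 42B^3/\epsilon^2$ steps.
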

\begin{proof}
    Suppose the algorithm terminates at some time step $t<T$, then $g^{(t)}_a \geq -0.99 \epsilon$.
    From Eq.~\eqref{eq:est-err-grad-vec}, we have $\Tr\left(\mathcal{L}^{\dag \beta, \tau, \vH}_{a}(\vH) \vrho^{(t-1)} \right) \geq -0.9999 \epsilon$.
    Hence,
    \begin{equation}
        \norm{{\bm{\nabla}}_{\beta, \tau, \{\vA^a\}_a}^-(\vH, \vrho^{(t-1)})}_\infty \leq 0.9999 \epsilon < \epsilon.
    \end{equation}
    From the sufficient condition for local minima given in Lemma~\ref{lem:suff-QLM}, we have $\vrho^{(T)} = \vrho^{(t-1)}$ is an $\epsilon$-approximate local minimum ${\vrho}$ of the $n$-qubit Hamiltonian $\vH$ under thermal perturbations.

    We now show by contradiction that the algorithm must terminate early.
    Assume that the algorithm did not terminate early.
    Then, we can use Lemma~\ref{lem:cool-GD} with $\tilde{\epsilon} = 0.99 \epsilon$ for cooling by gradient descent to obtain
    \begin{equation}
        \Tr(\vH \vrho^{(T)}) \leq \Tr(\vH \vrho^{(T-1)}) - \frac{0.99^2 \epsilon^2}{20 B^2} \leq \ldots \leq  \Tr(\vH \vrho^{(0)}) - \frac{0.99^2 \epsilon^2}{20 B^2} T \leq  \norm{\vH}_\infty - \frac{0.99^2 \epsilon^2}{20 B^2} T.
    \end{equation}
    From the definition of $T$ in Eq.~\eqref{eq:T-size-def} and $\norm{\vH}_\infty \leq B$, we have
    \begin{equation}
        \Tr(\vH \vrho^{(T)}) \leq \norm{\vH}_\infty - \frac{0.99^2 \epsilon^2}{20 B^2} \frac{42 B^3}{\epsilon^2} \leq \norm{\vH}_\infty - 2.05 B \leq -1.05 B.
    \end{equation}
    At the same time, because $\norm{\vrho^{(T)}}_1 = 1$,
    \begin{equation}
        \Tr(\vH \vrho^{(T)}) \geq - \norm{\vH}_\infty \geq -B.
    \end{equation}
    This is a contradiction. Hence the algorithm must terminate early.
\end{proof}

The polynomial-time quantum algorithm for establishing Theorem~\ref{thm:quantum-easy-thermal} is as follows.
The algorithm runs quantum thermal gradient descent to find a local minimum $\vrho^{(T)}$ of $\vH$ under thermal perturbations.
Recall that $B$ is the upper bound on $\norm{\vH}_\infty$, and is equal to $\mathrm{poly}(n)$, and $1/\epsilon = \mathrm{poly}(n)$.
Because every step can be done in polynomial time, and there are at most $T = 42 B^3 / \epsilon^2 = \mathrm{poly}(n)$ time steps, quantum thermal gradient descent runs in time polynomial in $n$.

Now, given any observable $\vO$. The quantum algorithm prepares $\mathcal{O}(1 / \epsilon^2) = \mathrm{poly}(n)$ copies of $\vrho^{(T)}$ in $\mathrm{poly}(n)$ time, then measures $\vO$ on the $\mathcal{O}(1 / \epsilon^2)$ copies of $\vrho^{(T)}$ to estimate $\Tr(\vO \vrho^{(T)})$ to $\epsilon$ error.
This concludes the proof of Theorem~\ref{thm:quantum-easy-thermal}.

\section{Characterizing energy gradients in low-temperature heat bath}
\label{sec:characterize-neg-grad-condition}

Recall from Appendix~\ref{sec:Hamil-suboptimal-local-minima} on certifying Hamiltonians without suboptimal local minima, if there exists $\balpha \in \BR^m_{\geq 0}$ with $\norm{\balpha}_1 = 1$, such that the negative gradient condition holds,
\begin{equation}
    -\sum_{a} \alpha_a \CL_a^{\dag \beta, \tau, \vH}[\vH] \succeq \frac{2 \epsilon}{\delta} (\vI - \vP_G(\vH) ) - \epsilon \vI,
\end{equation}
then any $\epsilon$-approximate local minimum~$\vrho$ of the $n$-qubit Hamiltonian $\vH$ under thermal perturbations is an exact global minimum of $\vH$ with failure probability $\leq \delta$, i.e., $\Tr(\vP_G(\vH) \vrho) \geq 1 - \delta$, where $\vP_G(\vH)$ is the projection onto the ground state space.
To understand when the above condition holds, it is imperative to characterize the energy gradients, $\CL_a^{\dag\beta,\tau,\vH}[\vH]$.

In this appendix, we present various lemmas and theorems characterizing the energy gradients, which will be used in our proof of \cref{thm:no-suboptimal-local-minima-informal2} in \cref{sec:universal-quantum-computation} for showing that a certain family of Hamiltonians has no suboptimal local minima.
We remark that the proofs of many formal statements in this appendix require concepts and results that won't be shown till later in Appendices~\ref{sec:OFT} and \ref{sec:monotone_gradient}, and we recommend the first-time reader to freely skip the proofs and return later.

For simplicity, we will focus on the nonnegative vector $\balpha$ being uniform over a subset $S$ for the remaining appendices. We will show that this is sufficient for our purposes even though having the ability to choose $\balpha$ is more powerful. We define the following Lindbladian with uniform weights over a subset $S \subseteq \{1, \ldots, m\}$,
\begin{equation} 
\label{eq:thermal-L-unif}
    \CL := \sum_{a \in S} \CL_a^{\beta, \tau, \vH}.
\end{equation}
Recall from Appendix~\ref{sec:thermo-lindblad-detail} that each $\CL_a^{\dag \beta, \tau, \vH}$ corresponds to a jump operator $\vA^a$ satisfying the normalization condition $\norm{\vA^a}_\infty \leq 1$. If we let $r := 2 m \epsilon / \delta, \epsilon' = m \epsilon$ and $S = \{1, \ldots, m\}$, then the negative gradient condition becomes
\begin{equation} \label{eq:NGC-full}
    \text{(negative gradient condition)}: \quad - \CL^{\dag}[\vH] \succeq r (\vI - \vP_G) - \epsilon' \vI,
\end{equation}
This will be the central inequality we would like to establish for the remaining appendices.
Throughout the proofs, we will consider different subsets $S$ and show a relation similar to Eq.~\eqref{eq:NGC-full} for subset $S$. 

\subsection{Basic properties of energy gradients in low-temperature bath}
\label{sec:gradient-properties}

We show a few basic properties of energy gradients under a low-temperature, long-time-scale bath. First, we show that the energy gradient, at large $\beta$ (i.e., low temperatures), is negative semi-definite up to controllable error. Intuitively, this can be seen from the KMS condition in Eq.~\eqref{eq:gamma_KMS}, $\gamma_{\beta}(\omega) = \gamma_{\beta}(-\omega) \e^{-\beta \omega}$: the heating transition is suppressed by the Boltzmann weight, allowing energy to increase by $\omega\sim \beta^{-1}$. Another source of error is the uncertainty in energy $\tau^{-1}$.

\begin{lemma}[Almost negative gradients]\label{lem:gradient_upperbound}
Consider the thermal Lindbladian $\CL = \sum_{a\in S}\CL_a^{\beta,\tau, \vH}$ with jump operators $\{\vA^a\}_{a\in S}$ where $\norm{\vA^a}\le 1$, and $\gamma_{\beta}(\omega)$ satisfying Eq.~\eqref{eq:gamma_KMS}.  Then,
    \begin{align}
         \CL^{\dag}[\vH] \preceq  \CO\left( \labs{S}\L(\frac{\norm{\vH}^{3/4}}{\tau^{1/4}}+\frac{1}{\tau} + \frac{1}{\beta}\R) \right) \cdot \vI. 
    \end{align}
\end{lemma}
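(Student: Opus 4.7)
The plan is to split $\CL_a^{\dag\beta,\tau,\vH}[\vH]$ into the Lamb-shift commutator and the purely dissipative part, bound each from above in operator-inequality sense, and then sum over $a\in S$, which produces the $|S|$ prefactor. Using the decomposition $\CL_a^{\dag\beta,\tau,\vH}[\vH] = \ri[\vH_{LS,a}^{\beta,\tau,\vH},\vH] + \CD_a^{\dag\beta,\tau,\vH}[\vH]$, the Lamb-shift piece is immediate: $\ri[\vH_{LS,a},\vH]$ is Hermitian, so $\ri\sum_{a\in S}[\vH_{LS,a},\vH] \preceq \lnorm{\sum_{a\in S}[\vH_{LS,a},\vH]}\,\vI$, and Prop.~\ref{prop:norm-lamb-shift-part} (together with $\norm{\sum_{a\in S}\vA^{a\dag}\vA^a} \le |S|$ from $\norm{\vA^{a\dag}\vA^a}\le 1$) yields the $\CO(|S|\,\norm{\vH}^{3/4}/\tau^{1/4})\,\vI$ contribution.

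For the dissipative piece, I would pass to the exact Bohr-frequency decomposition $\vA^a = \sum_\nu \vA^a_\nu$ and the induced operator Fourier representation $\hat{\vA}^a(\omega) = \sum_\nu \hat{f}_\tau(\omega-\nu)\,\vA^a_\nu$. The elementary algebraic identity
\begin{equation*}
\vA_{\nu_1}^{a\dag}\vH\vA_{\nu_2}^a - \tfrac{1}{2}\{\vA_{\nu_1}^{a\dag}\vA_{\nu_2}^a,\vH\} = \tfrac{\nu_1+\nu_2}{2}\,\vA_{\nu_1}^{a\dag}\vA_{\nu_2}^a,
\end{equation*}
which follows from $[\vH,\vA_\nu] = \nu\vA_\nu$, then rewrites $\CD_a^{\dag\beta,\tau,\vH}[\vH]$ as a double sum over $(\nu_1,\nu_2)$ weighted by the cross-correlations $\int\gamma_\beta(\omega)\hat{f}_\tau(\omega-\nu_1)^*\hat{f}_\tau(\omega-\nu_2)\,\rd\omega$. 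I would split this sum into diagonal ($\nu_1=\nu_2=\nu$) and off-diagonal contributions. On the diagonal, the cooling half $\nu\le 0$ is manifestly $\preceq 0$, while for the heating half $\nu>0$ the KMS bound $\omega\gamma_\beta(\omega)\le\omega\,\e^{-\beta\omega}\le 1/(\e\beta)$, combined with sinc-squared tail control of $|\hat{f}_\tau(\omega-\nu)|^2$ away from $\omega=\nu$ and the Parseval-type estimate $\sum_\nu \vA_\nu^{a\dag}\vA_\nu^a \preceq \vA^{a\dag}\vA^a \preceq \vI$, yields a $\CO(1/\beta + 1/\tau)\,\vI$ upper bound.

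The hard part is the off-diagonal ($\nu_1\ne\nu_2$) contribution, which is not manifestly sign-definite. Here I would invoke the operator Fourier transform and secular-approximation tools adapted from~\cite{Chen2023quantumthermal} (reviewed in Appendix~\ref{sec:OFT}) to quantitatively dominate the weighted cross-correlations $\int\gamma_\beta(\omega)\hat{f}_\tau(\omega-\nu_1)^*\hat{f}_\tau(\omega-\nu_2)\,\rd\omega$ for $\nu_1\ne\nu_2$ by a factor of order $1/\tau$ once resummed over all Bohr-frequency pairs using the Parseval-type bound on $\sum_\nu \vA_\nu^{a\dag}\vA_\nu^a$ and a Schur-test-style estimate on the resulting quadratic form; this yields an additional $\CO(|S|/\tau)\,\vI$ contribution that merges into the stated bound. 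Assembling the Lamb-shift, diagonal, and off-diagonal estimates and summing over $a\in S$ then produces the desired $\CO\bigl(|S|(\norm{\vH}^{3/4}/\tau^{1/4} + 1/\tau + 1/\beta)\bigr)\,\vI$ upper bound on $\CL^\dag[\vH]$.
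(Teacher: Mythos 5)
Your Lamb-shift estimate and your treatment of the diagonal heating terms are fine, but the off-diagonal step contains a genuine gap: the claim that the $\nu_1\neq\nu_2$ contribution can be bounded in operator norm by $\CO(|S|/\tau)$ is false for general $\vH$. The lemma makes no Bohr-frequency-gap assumption, so there can be distinct Bohr frequencies separated by much less than $1/\tau$; for such pairs the cross-correlation $\int\gamma_\beta(\omega)\hat{f}_\tau(\omega-\nu_1)^*\hat{f}_\tau(\omega-\nu_2)\,\rd\omega$ is $\Theta(\gamma_\beta(\nu_1))$, not small, and the prefactor $\tfrac{\nu_1+\nu_2}{2}$ can be of order $\norm{\vH}$. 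Concretely, take a three-level system with two excited states at energies $E$ and $E+\delta$, $\delta\ll 1/\tau$, $E=\Theta(1)$, and a single jump $\vA=\ket{g}\bra{e_1}+\ket{g}\bra{e_2}$: the off-diagonal piece of your decomposition is approximately $-E\gamma_\beta(-E)\,(\ket{e_1}\bra{e_2}+\ket{e_2}\bra{e_1})$, whose norm is $\Theta(1/\ln\beta)$ independently of $\tau$, far exceeding the allowed error budget $\CO(\norm{\vH}^{3/4}/\tau^{1/4}+1/\tau+1/\beta)$. No secular approximation or Schur test can make this term small — it simply is not small; it is only harmless because it recombines with the diagonal cooling terms into the complete square $-E\gamma_\beta(-E)\,(\ket{e_1}+\ket{e_2})(\bra{e_1}+\bra{e_2})\preceq 0$. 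Your decomposition by Bohr frequencies destroys exactly this structure.

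The paper's proof sidesteps the issue by never separating cross terms: it uses Prop.~\ref{prop:norm-lamb-shift-part} and the integration-by-parts simplification of Lemma~\ref{lemma:expr_energy_gradient} to replace $\CL^{\dag}[\vH]$ with $\sum_{a\in S}\int\gamma_\beta(\omega)\,\omega\,\hat{\vA}^a(\omega)^\dag\hat{\vA}^a(\omega)\,\rd\omega$ up to $\CO(|S|(\norm{\vH}^{3/4}/\tau^{1/4}+1/\tau))$, and then splits the \emph{$\omega$-integral} (not the Bohr frequencies) at $\omega=0$. For each $\omega\le 0$ the integrand is a nonpositive multiple of the PSD operator $\hat{\vA}^a(\omega)^\dag\hat{\vA}^a(\omega)$ — with all cross terms safely inside the square — hence that half is $\preceq 0$; the $\omega\ge0$ half is bounded in norm by $|S|\max_{\omega\ge0}\omega\gamma_\beta(\omega)\le |S|/\beta$ using the operator Parseval identity (Prop.~\ref{prop:iso_operator_FT}) and the tail bound of Eq.~\eqref{eq:gamma-tailbound}. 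If you reorganize your argument to keep the complete square at each fixed $\omega$ (equivalently, adopt Lemma~\ref{lemma:expr_energy_gradient} rather than the exact Bohr decomposition), your remaining estimates go through; as written, the off-diagonal bound is a step that would fail.
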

\begin{proof}
Rewrite the energy gradient with an error controlled by \cref{prop:norm-lamb-shift-part} and Lemma~\ref{lemma:expr_energy_gradient} gives
\begin{align}
\CL^{\dag}[\vH] &\approx \sum_{a\in S} \int_{-\infty}^{\infty} \gamma_{\beta}(\omega)\omega
 \hat{\vA}^{a}(\omega)^{\dag}\hat{\vA}^a(\omega) \rd \omega \nonumber \\
 & = \sum_{a\in S} \int_{0}^{\infty} \gamma_{\beta}(\omega)\omega
 \hat{\vA}^{a}(\omega)^{\dag}\hat{\vA}^a(\omega) \rd \omega + \sum_{a\in S} \int_{-\infty}^{0} \gamma_{\beta}(\omega)\omega
 \hat{\vA}^{a}(\omega)^{\dag}\hat{\vA}^a(\omega) \rd \omega
\end{align}
and bound the positive operator
\begin{align}
    \norm{\sum_{a\in S} \int_{0}^{\infty} \gamma_{\beta}(\omega)\omega
 \hat{\vA}^{a}(\omega)^{\dag}\hat{\vA}^a(\omega) \rd \omega}_{\infty} \le \labs{S}\max_{\omega \ge 0} \gamma_{\beta}(\omega) \omega \le \frac{|S|}{\beta}.
\end{align}
The second inequality uses the tail bound in Eq.~\eqref{eq:gamma-tailbound} with $\Delta=0$.
\end{proof}
Second, we show that the energy gradient operator is nearly diagonal in the energy basis. The intuition is that for any operator $\vA$, the product
\begin{align}
   \hat{\vA}^{\dag}(\omega)\hat{\vA}(\omega) 
\end{align}
is nearly diagonal in the energy basis for large $\tau$.

\begin{lemma}[Energy gradient is almost diagonal]
\label{lem:gradient-diagonal}
In the setting of Lemma~\ref{lem:gradient_upperbound}, assume that for any two well-isolated energy eigensubspaces $\vP_1$ and $\vP_2$ such that the two sets of eigenvalues have at least distance $\delta$. Then,
    \begin{align}
    \norm{\vP_{1} \CL^{\dag}[\vH] \vP_{2}} \le \CO\L(\labs{S}\L(\frac{\norm{\vH}^{3/4}}{\tau^{1/4}}+\frac{1}{\tau} + \frac{\norm{\theta_\beta}_{\infty}}{\sqrt{\delta\tau }}\R)\R).
    \end{align}
    where $\theta_{\beta}(\omega) := \gamma_\beta(\omega)\omega$.
\end{lemma}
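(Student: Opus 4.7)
The plan is to refine the computation in \cref{lem:gradient_upperbound} by tracking which contributions couple the two well-separated energy sectors $\vP_1$ and $\vP_2$. Starting from
\begin{equation*}
\CL^\dag[\vH] = \ri\sum_{a\in S}[\vH^{\beta,\tau,\vH}_{LS,a},\vH] + \sum_{a\in S}\int_{-\infty}^\infty \gamma_\beta(\omega)\bigl(\hat{\vA}^a(\omega)^\dag\vH\hat{\vA}^a(\omega) - \tfrac{1}{2}\{\hat{\vA}^a(\omega)^\dag\hat{\vA}^a(\omega),\vH\}\bigr)\rd\omega,
\end{equation*}
the Lamb-shift commutator has operator norm at most $\labs{S}\cdot\CO(\norm{\vH}^{3/4}/\tau^{1/4})$ by \cref{prop:norm-lamb-shift-part}, which already supplies the first term of the target bound on any block, $\vP_1[\cdot]\vP_2$ included. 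For the dissipative integral, I would invoke the approximate intertwining property of the operator Fourier transform developed in \cref{sec:OFT}, namely that $\vH\hat{\vA}^a(\omega) - \hat{\vA}^a(\omega)(\vH+\omega\vI)$ has norm $\CO(1/\tau)$; this is the same near-shift fact used implicitly in \cref{lemma:expr_energy_gradient}. Substituting converts the integrand into $\theta_\beta(\omega)\hat{\vA}^a(\omega)^\dag\hat{\vA}^a(\omega) + \tfrac{1}{2}[\hat{\vA}^a(\omega)^\dag\hat{\vA}^a(\omega),\vH]$ up to $\CO(1/\tau)$ errors in norm, which after integrating against $\gamma_\beta$ yields the $\CO(\labs{S}/\tau)$ term of the final bound. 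The remaining task is to bound the $\vP_1[\cdot]\vP_2$ block of the clean operator $\vM^a := \int\theta_\beta(\omega)\hat{\vA}^a(\omega)^\dag\hat{\vA}^a(\omega)\,\rd\omega$.

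Testing against unit vectors $\ket{u}=\sum_i c_i\ket{i}$ in $\range(\vP_1)$ and $\ket{v}=\sum_j d_j\ket{j}$ in $\range(\vP_2)$, and using $\hat{\vA}^a(\omega)_{ki} = \vA^a_{ki}\,\hat{f}_\tau(\omega-(E_k-E_i))$ from \cref{sec:OFT}, the quadratic form expands as $\sum_{k,i,j} c_i^* d_j\,\overline{\vA^a_{ki}}\,\vA^a_{kj}\,I_{kij}$, where
\begin{equation*}
I_{kij} = \int_{-\infty}^\infty \theta_\beta(\omega)\,\hat{f}_\tau(\omega-(E_k-E_i))^*\,\hat{f}_\tau(\omega-(E_k-E_j))\,\rd\omega.
\end{equation*}
The key scalar estimate is $\labs{I_{kij}} \le \norm{\theta_\beta}_\infty \cdot \CO(1/\sqrt{\tau\delta})$. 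Since $(E_k-E_i) - (E_k-E_j) = E_j - E_i$ has magnitude at least $\delta$, I would split the $\omega$ axis at the midpoint of the two shifts; on each half one of the two arguments of $\hat{f}_\tau$ has magnitude at least $\delta/2$. Applying Cauchy--Schwarz restricted to that half and using the $L^2$ tail estimate $\int_{\labs{x}\ge\delta/2}\labs{\hat{f}_\tau(x)}^2\rd x = \CO(1/(\tau\delta))$ against the global bound $\norm{\hat{f}_\tau}_2 = 1$ yields $\int\labs{\hat{f}_\tau(\omega-a)\hat{f}_\tau(\omega-b)}\rd\omega \le \CO(1/\sqrt{\tau\labs{a-b}})$, which after multiplying by $\norm{\theta_\beta}_\infty$ gives the desired bound on $\labs{I_{kij}}$.

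With this scalar bound in hand, the $k$-sum is dispatched via an outer Cauchy--Schwarz: $\sum_i \labs{c_i}\labs{\vA^a_{ki}}$ and $\sum_j \labs{d_j}\labs{\vA^a_{kj}}$ are the $k$-th components of $\vA^a\tilde c$ and $\vA^a\tilde d$, where $\tilde c_i = \labs{c_i}$ and $\tilde d_j = \labs{d_j}$, both of whose norms are bounded by $\norm{\vA^a}\le 1$. Thus $\labs{\bra{u} \vM^a \ket{v}} \le \norm{\theta_\beta}_\infty \cdot \CO(1/\sqrt{\tau\delta})$ uniformly in $\ket{u},\ket{v}$, so $\norm{\vP_1 \vM^a \vP_2} \le \CO(\norm{\theta_\beta}_\infty/\sqrt{\tau\delta})$. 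Summing over the $\labs{S}$ jumps and combining with the Lamb-shift and secular-approximation contributions yields the claimed three-term bound.

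The main technical obstacle is the scalar estimate for $I_{kij}$: a naive global Cauchy--Schwarz using $\norm{\hat{f}_\tau}_2 = 1$ only gives an $\CO(1)$ bound, and an $L^1$ bound on $\labs{\hat{f}_\tau}$ diverges logarithmically since the rectangular-window sinc is not integrable. The correct $1/\sqrt{\tau\delta}$ scaling is recovered precisely by restricting Cauchy--Schwarz to the ``far'' half of the split so the $L^2$ tail supplies the decay. A secondary subtlety is that the secular-approximation residue $\tfrac{1}{2}[\hat{\vA}^a(\omega)^\dag\hat{\vA}^a(\omega),\vH]$, after $\omega$-integration, contributes only $\CO(1/\tau)$ to the block norm rather than anything $\delta$-dependent; this reflects that the $1/\tau$ energy-matching uncertainty of $\hat{\vA}^a(\omega)$ is uniform across Bohr frequencies and is not amplified by the projector structure.
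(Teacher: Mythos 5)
Your reduction steps are fine and coincide with the paper's: the Lamb-shift commutator is disposed of by \cref{prop:norm-lamb-shift-part}, and the passage to $\vM^a := \int\theta_\beta(\omega)\,\hat{\vA}^a(\omega)^\dag\hat{\vA}^a(\omega)\,\rd\omega$ costs $\CO(\labs{S}/\tau)$; your leftover commutator $\tfrac12[\hat{\vA}^{a\dag}\hat{\vA}^a,\vH]$ is pointwise equal to the boundary terms of the integration-by-parts identity (Prop.~\ref{prop:byParts}), so it is indeed absorbed into the same $\CO(\labs{S}/\tau)$ error as in \cref{lemma:expr_energy_gradient}, although you assert rather than show this. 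The genuine gap is in your final summation. You bound $\labs{\bra{u}\vM^a\ket{v}}$ by $\max_{k,i,j}\labs{I_{kij}}$ times $\sum_k\big(\sum_i\labs{c_i}\labs{\vA^a_{ki}}\big)\big(\sum_j\labs{d_j}\labs{\vA^a_{kj}}\big)$ and claim the inner sums are components of $\vA^a\tilde{c}$, $\vA^a\tilde{d}$ with norm $\le\norm{\vA^a}\le 1$. They are not: they are components of the matrix of \emph{entrywise absolute values} $\big(\labs{\vA^a_{ki}}\big)$ applied to $\tilde c,\tilde d$, and the operator norm of that matrix is not controlled by $\norm{\vA^a}$ — it can exceed it by a factor of order the square root of the Hilbert-space dimension (e.g., a Hadamard-type unitary has unit norm while its entrywise absolute value has norm $2^{(n+T)/2}$). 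So your uniform bound $\labs{I_{kij}}\le\CO(\norm{\theta_\beta}_\infty/\sqrt{\delta\tau})$, correct as it is, cannot be assembled into a dimension-free bound on $\norm{\vP_1\vM^a\vP_2}$ by this absolute-value Cauchy--Schwarz; this is exactly the loose-bound trap the operator Fourier transform machinery is built to avoid (see the remark after Proposition~\ref{prop:iso_operator_FT}).

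The repair is to keep the phases and work at the operator level, which is the paper's route: split $\hat f_\tau = \hat f_\tau\hat s_\mu + \hat f_\tau(1-\hat s_\mu)$ with $\mu=\delta/2$, i.e., $\hat{\vA}^a(\omega)=\hat{\vS}^a_\mu(\omega)+(\hat{\vA}^a-\hat{\vS}^a_\mu)(\omega)$. The near--near block vanishes identically, $\vP_1\hat{\vS}^{a\dag}_\mu(\omega)\hat{\vS}^a_\mu(\omega)\vP_2=0$ for every $\omega$, because that operator only transfers energy by less than $2\mu=\delta$ while the two spectra are $\delta$-separated; the remaining cross terms are bounded \emph{collectively} (not matrix element by matrix element) via the purification/Parseval Cauchy--Schwarz of \cref{cor:AA-SS}, with $\norm{\hat f_\tau(1-\hat s_\mu)}_2=\CO(1/\sqrt{\delta\tau})$ from Eq.~\eqref{eq:secular_tailbound}, yielding the $\CO(\labs{S}\norm{\theta_\beta}_\infty/\sqrt{\delta\tau})$ term. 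Your scalar estimate on $I_{kij}$ is the commutative shadow of this step, but as written your proof does not establish the lemma.
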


\begin{proof}
Formally, approximate the energy gradient by dropping the Lamb-shift term (\cref{prop:norm-lamb-shift-part}) and applying Lemma~\ref{lemma:expr_energy_gradient},
\begin{align}
\CL^{\dag}[\vH] \approx \sum_{a\in S} \int_{-\infty}^{\infty} \gamma_{\beta}(\omega)\omega
 \hat{\vA}^{a}(\omega)^{\dag}\hat{\vA}^a(\omega) \rd \omega .
\end{align}
We then  apply the secular approximation for $\mu = \delta/2$ (Corollary~\ref{cor:AA-SS}) such that the transition amplitudes vanishes between the subspaces
\begin{align}
    \vP_1\hat{\vS}^a_{\mu}(\omega)^{\dag}\hat{\vS}^a_{\mu}(\omega) \vP_2 = 0\quad \text{for each}\quad \omega \in \BR\quad \text{and}\quad a \in A.
\end{align}
Combining the errors in each of the approximations leads to the claimed result.
\end{proof}

Next, we show that the finite-$\tau$ Lindbladian can be approximated by the infinite-$\tau$ version under certain conditions. The latter, known as the Davies' generator~\cite{davies1976quantum}, has a simpler form that is more amenable for analysis in some situations.

\begin{lemma}[Recovering Davies' generator]
\label{lem:recover-Davies}
Consider the dissipative part of the thermal Lindbladian $\CD_a^{\beta,\tau, \vH}$ with the jump operators $\vA^a$ where $\norm{\vA^a}\le 1$, and any $\gamma_{\beta}$ such that $\norm{\gamma_{\beta}}_{\infty}\le 1$. Suppose the Bohr-frequency gap is $\Delta_\nu(\vH)$, then
\begin{equation}
    \norm{\CD_a^{\dag\beta,\tau, \vH}-\CD_a^{\dag\beta,\infty, \vH}}_{\infty-\infty} \le \CO\L(  \max_{\nu} \labs{\gamma_\beta(\nu)- \int_{-\infty}^\infty \gamma_\beta(\omega) \left|\hat{f}(\omega-\nu)\right|^2\rd \omega}+\frac{1}{\sqrt{\Delta_\nu(\vH) \tau}}\R).
\end{equation}
\end{lemma}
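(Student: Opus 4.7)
The plan is to expand both $\CD_a^{\dag\beta,\tau,\vH}$ and $\CD_a^{\dag\beta,\infty,\vH}$ in the eigenbasis of $\vH$ via the operator Fourier transform, and then show that the finite-$\tau$ expression is close to the Davies form by two separate approximations: (i) replacing the smeared window $\hat{f}(\omega-\nu)$ by a delta at each Bohr frequency $\nu \in B(\vH)$ (yielding the off-diagonal secular error), and (ii) replacing the convolved transition weight $\int \gamma_\beta(\omega)|\hat{f}(\omega-\nu)|^2\rd\omega$ by the exact value $\gamma_\beta(\nu)$ (yielding the first error term in the lemma).

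Concretely, first I would decompose $\hat{\vA}^a(\omega)=\sum_{\nu\in B(\vH)} \hat{f}(\omega-\nu)\vA^a_\nu$ (using the spectral expansion from Appendix~\ref{sec:OFT}), and substitute this into the dissipator
\begin{equation}
\CD_a^{\dag\beta,\tau,\vH}[\vO]=\int_{-\infty}^{\infty}\gamma_\beta(\omega)\Big[\hat{\vA}^{a\dag}(\omega)\vO\hat{\vA}^a(\omega)-\tfrac{1}{2}\{\hat{\vA}^{a\dag}(\omega)\hat{\vA}^a(\omega),\vO\}\Big]\rd\omega.
\end{equation}
Expanding the products gives a double sum over $\nu_1,\nu_2\in B(\vH)$ of the form $\vA^{a\dag}_{\nu_1}\vO\,\vA^a_{\nu_2}\cdot\int\gamma_\beta(\omega)\overline{\hat f(\omega-\nu_1)}\hat f(\omega-\nu_2)\rd\omega$, and similarly for the anticommutator piece. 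The Davies dissipator $\CD_a^{\dag\beta,\infty,\vH}$ is exactly the diagonal contribution $\nu_1=\nu_2=\nu$ with the integral replaced by $\gamma_\beta(\nu)$.

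The second step handles the off-diagonal ($\nu_1\ne\nu_2$) terms. Here I would invoke the secular approximation machinery of Corollary~\ref{cor:AA-SS} (already used to prove Lemma~\ref{lem:gradient-diagonal}) at resolution $\mu=\Delta_\nu(\vH)/2$, which replaces $\hat{\vA}^a(\omega)$ by a version supported only on a single Bohr frequency per $\omega$-slice, with induced $\infty$-$\infty$ error $\CO(1/\sqrt{\Delta_\nu(\vH)\tau})$. Since the jump satisfies $\|\vA^a\|\le 1$ and $\|\gamma_\beta\|_\infty\le 1$, these perturbative errors propagate through the two products in the dissipator with at most a constant multiplicative blowup, producing the second term of the stated bound. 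The third step handles the remaining diagonal contribution: after the secular truncation, each surviving term is $\big(\int\gamma_\beta(\omega)|\hat f(\omega-\nu)|^2\rd\omega\big)\cdot(\vA^{a\dag}_\nu\vO\vA^a_\nu-\tfrac12\{\vA^{a\dag}_\nu\vA^a_\nu,\vO\})$, which differs from the Davies term by $|\gamma_\beta(\nu)-\int\gamma_\beta(\omega)|\hat f(\omega-\nu)|^2\rd\omega|$; taking the maximum over $\nu$ and using $\sum_\nu\|\vA^{a\dag}_\nu\vA^a_\nu\|\le\CO(1)$ (Parseval) bounds this contribution by the first error term.

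The main obstacle I anticipate is the off-diagonal estimate. A naive bound that takes $\nu_1,\nu_2$ absolute values term-by-term loses control because $|B(\vH)|$ can be exponentially large, and the scalar integral $\int\gamma_\beta(\omega)\overline{\hat f(\omega-\nu_1)}\hat f(\omega-\nu_2)\rd\omega$ only decays like $1/(|\nu_1-\nu_2|\tau)$ rather than exponentially. The resolution is to avoid a scalar term-by-term sum and instead use the secular approximation as an operator-level statement, where the $\infty$-$\infty$ cost scales only like $1/\sqrt{\Delta_\nu(\vH)\tau}$ independently of $|B(\vH)|$, because the off-diagonal cross-correlations are packaged into a single bounded operator via Parseval on the time-domain window $f_\tau$. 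This is exactly the content already isolated in Corollary~\ref{cor:AA-SS}, so the proof reduces to carefully bookkeeping the two sources of error (secular truncation versus weight convolution) and verifying that each enters only linearly in the final $\infty$-$\infty$ bound.
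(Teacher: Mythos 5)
Your proposal follows essentially the same route as the paper's proof: expand the dissipator via the operator Fourier transform, apply the secular approximation at resolution $\mu=\Delta_\nu(\vH)/2$ so that distinct Bohr-frequency blocks decohere at operator level (Corollary~\ref{cor:AA-SS}/\ref{cor:S-A}), and then compare the convolved weight $\int\gamma_\beta(\omega)|\hat f(\omega-\nu)|^2\,\rd\omega$ to $\gamma_\beta(\nu)$, which is exactly the paper's two-error bookkeeping ($\verr_1$ for the secular truncation, $\verr_2$ for the weight discrepancy). One phrasing slip: the scalar claim $\sum_\nu\|\vA^{a\dag}_\nu\vA^a_\nu\|\le\CO(1)$ is not what Parseval gives (the correct statement is the operator bound $\|\sum_\nu\vA^{a\dag}_\nu\vA^a_\nu\|\le\|\vA^{a\dag}\vA^a\|\le 1$, used through Lemma~\ref{lem:Op_purification_bounds}), but since your argument only needs the operator-level version this is not a gap.
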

Therefore, the Bohr-frequency gap sets a timescale $\sim\Delta_{\nu}^{-1}$ such that the map $\CD_a^{\dag\beta,\tau, \vH}$ stabilized.

\begin{proof}
It suffices to consider $\CD_a^{\dag\beta,\tau,\vH} [\vO] $ acting on arbitrary operator $\vO$ such that $\norm{\vO} =1$:
\begin{align}
\CD_a^{\dag\beta,\tau,\vH} [\vO] &= \int_{-\infty}^\infty \gamma_\beta(\omega) \Big[\hat\vA^a(\omega)^\dag \vO \hat\vA^a(\omega) - \frac12 \{\hat\vA^a(\omega)^\dag \hat\vA^a(\omega), \vO\}\Big]\rd \omega\\
&\stackrel{\verr_1}{\approx} \int_{-\infty}^\infty \gamma_\beta(\omega) \Big[\hat\vS^a(\omega)^\dag \vO \hat\vS^a(\omega) - \frac12 \{\hat\vS^a(\omega)^\dag \hat\vS^a(\omega), \vO\}\Big]\rd \omega\tag{secular approximation: Corollary~\ref{cor:S-A}}\\
&=\sum_{\nu,\nu'\in B(\vH)} \Big(\vA^{a\dag}_{\nu'} \vO \vA^a_\nu - \frac12 \{\vA^{a\dag}_{\nu'} \vA^a_\nu, \vO\} \Big) 
    \int_{-\infty}^\infty \gamma_\beta(\omega)\hat{f}_\mu^*(\omega-\nu') \hat{f}_\mu(\omega-\nu)\rd \omega \tag{truncated at frequency $\mu = \frac{\Delta_{\nu}}{2}$}\\
&=\sum_{\nu\in B(\vH)} \Big(\vA^{a\dag}_{\nu} \vO \vA^a_\nu - \frac12 \{\vA^{a\dag}_{\nu} \vA^a_\nu, \vO\} \Big) 
    \int_{-\infty}^\infty \gamma_\beta(\omega) \left|\hat{f}_\mu(\omega-\nu)\right|^2\rd \omega \tag{different blocks $\nu\ne \nu'$ decohere}\\ 
&\stackrel{\verr_2}{\approx}\sum_{\nu\in B(\vH)} \Big(\vA^{a\dag}_{\nu} \vO \vA^a_\nu - \frac12 \{\vA^{a\dag}_{\nu} \vA^a_\nu, \vO\} \Big) \gamma_\beta(\nu) \tag{Lemma~\ref{lem:Op_purification_bounds}}.
\end{align}
The approximation errors are bounded by
\begin{align}
    \verr_1 &\le 2\norm{\vO} \norm{\gamma_{\beta}}_{\infty}\norm{\vA^a} \norm{f_{\tau}\cdot (1-\hat{s}_{\mu})}_2\norm{f_{\tau}}_2= \CO\L( \frac{1}{\sqrt{\mu\tau}}\R), \\
    \verr_2 & \le 2\norm{\vO} \norm{\vA^a} \max_{\nu} \labs{\gamma_\beta(\nu)- \int_{-\infty}^\infty \gamma_\beta(\omega) \left|\hat{f}_\mu(\omega-\nu)\right|^2\rd \omega} \nonumber \\
    &\le \CO\L(\max_{\nu} \labs{\gamma_\beta(\nu)- \int_{-\infty}^\infty \gamma_\beta(\omega) \left|\hat{f}(\omega-\nu)\right|^2\rd \omega}+\frac{\norm{\gamma_{\beta}}_{\infty}}{\mu \tau}\R).
\end{align}
Combine the error bounds to conclude the proof. 
Note that since the bound becomes vacuous at $\mu\tau  = \Omega(1)$, we have that $\CO( 1/\sqrt{\mu\tau} + 1/(\mu\tau) ) = \CO( 1/\sqrt{\mu\tau})$.
\end{proof}

\begin{lemma}
\label{lem:Davies-error-bound}
For $\gamma_{\beta}(\omega)$ defined in Eq.~\eqref{eq:glauber-dyn} with $\Lambda_0=\Theta(1)$, we have that
\begin{align}
    \max_{\nu\in \BR} \labs{\gamma_\beta(\nu)- \int_{-\infty}^\infty \gamma_\beta(\omega) \left|\hat{f}(\omega-\nu)\right|^2\rd \omega} \le  \CO\L(\frac{1+\beta}{\tau} \ln\tau\R). 
\end{align}
\end{lemma}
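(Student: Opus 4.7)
The plan is to exploit the $L^2$-normalization of $|\hat{f}|^2$ and then quantify the regularity of $\gamma_\beta$ on a suitably chosen near/far split. Since $f_\tau$ is $L^2$-normalized, Parseval gives $\int_{-\infty}^\infty |\hat{f}(\omega-\nu)|^2\rd\omega = 1$, so after substituting $u=\omega-\nu$ the error can be written as
\begin{equation}
    \gamma_\beta(\nu) - \int_{-\infty}^\infty \gamma_\beta(\omega)|\hat{f}(\omega-\nu)|^2\rd\omega
    = -\int_{-\infty}^\infty \bigl(\gamma_\beta(\nu+u)-\gamma_\beta(\nu)\bigr)|\hat{f}(u)|^2\rd u,
\end{equation}
which is uniform in $\nu$ once we control the increment of $\gamma_\beta$. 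Crucially, $|\hat{f}(u)|^2 = \frac{2\sin^2(u\tau/2)}{\pi\tau u^2}$, and this sinc-squared density is sharply concentrated at scale $1/\tau$ but has a slow $1/u^2$ tail, which forces a logarithmic factor.

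Next I would record two elementary bounds for $\gamma_\beta$ from its explicit form in Eq.~\eqref{eq:glauber-dyn}: uniformly, $\|\gamma_\beta\|_\infty = O(1/\ln(1+\beta\Lambda_0))= O(1)$, and by differentiating the product $e^{-\omega^2/2\Lambda_0^2}/(1+e^{\beta\omega})$ one gets $|\gamma_\beta'(s)| \le C(|s|/\Lambda_0^2 + \beta)\,e^{-s^2/2\Lambda_0^2}$ for an absolute $C$, so globally $\|\gamma_\beta'\|_\infty = O(1+\beta)$ since the Gaussian tames the polynomial factor. In particular we have the global Lipschitz bound $|\gamma_\beta(\nu+u)-\gamma_\beta(\nu)| \le O((1+\beta)|u|)$ and the trivial bound $\le O(1)$.

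Then I would split the integral at $|u|=1$ and apply the appropriate bound on each piece. For the near region $|u|\le 1$, using Lipschitz continuity,
\begin{equation}
    \int_{|u|\le 1}|u|\,|\hat f(u)|^2\,\rd u
    = \frac{4}{\pi\tau}\int_0^{\tau/2}\frac{\sin^2 v}{v}\rd v
    = O\!\left(\frac{\ln\tau}{\tau}\right),
\end{equation}
where the change of variables $v=u\tau/2$ reduces the estimate to the classical logarithmic growth of $\int_0^T \sin^2 v/v\,\rd v$. For the far region $|u|>1$, using $\|\gamma_\beta\|_\infty=O(1)$ and the pointwise bound $|\hat f(u)|^2 \le \tfrac{2}{\pi\tau u^2}$,
\begin{equation}
    \int_{|u|>1}|\hat f(u)|^2\,\rd u \le \frac{2}{\pi\tau}\int_{|u|>1}\frac{\rd u}{u^2} = O\!\left(\frac{1}{\tau}\right).
\end{equation}
Combining the two pieces gives the desired $O((1+\beta)\ln\tau / \tau)$ bound, uniformly in $\nu$.

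The only subtle point, and the one that determines the shape of the final bound, is that a naive application of global Lipschitz continuity against $\int|u|\,|\hat f(u)|^2\rd u$ fails because this integral is logarithmically divergent. The near/far split at the $O(1)$ scale is what converts that divergence into the harmless $\ln\tau$ factor, while the Gaussian prefactor in $\gamma_\beta'$ is what lets us use a single uniform Lipschitz constant $O(1+\beta)$ rather than tracking an additional $|\nu|$ dependence across the real line.
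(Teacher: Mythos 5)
Your proof is correct and gives exactly the stated bound, but it follows a somewhat different route than the paper. The paper symmetrizes the integrand about $\nu$, treats $2|\hat f(x)|^2$ as a probability density on $[0,\infty)$, and integrates by parts: the boundary term reproduces $\gamma_\beta(\nu)$ exactly (since $P(0)=1$), and the error becomes $\int_0^\infty \tfrac{1}{2}\big(\gamma'_\beta(\nu+x)+\gamma'_\beta(x-\nu)\big) P(x)\,\rd x$ with $P$ the tail of the sinc-squared distribution; it then splits at $x=1/\tau$, using $P\le 1$ together with $\|\gamma'_\beta\|_\infty=\CO(1+\beta)$ on the near piece, and the tail bound $P(x)\le 4/(\pi x\tau)$ from Eq.~\eqref{eq:secular_tailbound} together with the Gaussian localization of $\gamma'_\beta$ on the far piece, which is where its $\ln\tau$ arises. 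You instead use Parseval's normalization $\int|\hat f|^2=1$ to write the error directly as an increment of $\gamma_\beta$ averaged against the explicit kernel $\tfrac{2\sin^2(u\tau/2)}{\pi\tau u^2}$, and split at $|u|=1$: the Lipschitz bound $\|\gamma'_\beta\|_\infty = \CO(1+\beta)$ against $\int_{|u|\le 1}|u|\,|\hat f(u)|^2\rd u = \CO(\ln\tau/\tau)$ produces the logarithm (via $\int_0^{\tau/2}\sin^2 v/v\,\rd v$), while the far region only uses $\|\gamma_\beta\|_\infty=\CO(1)$ against the $1/(\tau u^2)$ tail. Your version avoids the integration by parts and, notably, the paper's slightly delicate far-region step that invokes the rapid decay of $\gamma'_\beta$ outside an $\CO(\Lambda_0)$ window — your far region needs only boundedness of $\gamma_\beta$ itself. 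What the paper's formulation buys is that it keeps the Gaussian decay of $\gamma'_\beta$ and the explicit cutoff $\Lambda_0$ visible (giving $\ln(\tau\Lambda_0)$ and reusing the already-established tail bound), which would matter if one wanted to track $\Lambda_0$-dependence; for the stated $\CO\!\big((1+\beta)\ln\tau/\tau\big)$ with $\Lambda_0=\Theta(1)$ the two arguments are equivalent, and both (like the paper's) implicitly assume $\tau$ is at least a constant so that the $\ln\tau$ bound is meaningful.
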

\begin{proof}
Recall the integration-by-part trick for expectation integral
\begin{align}
    \int_{0}^{\infty} f(x)p(x)\rd x = -[f(x)P(x)]^{x=\infty}_{x=0} + \int_{0}^{\infty} f'(x) P(x) \rd x\quad \text{where}\quad P(x) := \int_{x}^{\infty} p(y) \rd y.
\end{align}
Then, 
\begin{align}
\int_{-\infty}^\infty \gamma_\beta(\omega) \left|\hat{f}(\omega-\nu)\right|^2\rd \omega
&=\int_{-\infty}^\infty \gamma_\beta(\nu+x) \left|\hat{f}(x)\right|^2\rd x= \int_{0}^\infty \frac{\gamma_\beta(\nu+x)+\gamma_\beta(\nu-x)}{2} \undersetbrace{=:p(x)}{2\left|\hat{f}(x)\right|^2}\rd x \nonumber \\
&= \gamma_{\beta}(\nu)+\int_{0}^\infty \frac{\gamma'_\beta(\nu+x)+\gamma'_\beta(x-\nu)}{2} P(x)\rd x,
\end{align}
where in the last line we applied the integration-by-part and used $\gamma(\pm\infty)=0$ and $P(0)=1$.
The error term can be bounded as follows
\begin{align}
&\labs{\int_{0}^\infty \frac{\gamma'_\beta(\nu+x)+\gamma'_\beta(x-\nu)}{2} P(x)\rd x} \nonumber\\
&=\labs{\int_{0}^{1/\tau} \frac{\gamma'_\beta(\nu+x)+\gamma'_\beta(\nu-x)}{2} P(x)\rd x + \int_{1/\tau}^\infty \frac{\gamma'_\beta(\nu+x)+\gamma'_\beta(\nu-x)}{2} P(x)\rd x} \nonumber \\
& = \CO(\frac{1+\beta}{\tau}) + \CO(\frac{1+\beta}{\tau} \ln\tau). 
\end{align}
In the last line, we control the first term by $P(x) \le 1$ and noting by the product rule we have
\begin{align}
         \labs{\gamma'_\beta(\omega)} &= \CO\L(\labs{\frac{\rd}{\rd\omega} \L( \frac{\e^{-\omega^2/2\Lambda_0^2}}{1+\e^{\beta \omega}} \R)}\R) 
 = \CO\L (\labs{-\frac{\e^{-\omega^2/2\Lambda_0^2}\omega/\Lambda_0^2 }{1+\e^{\beta \omega}} -  \frac{ \e^{-\omega^2/2\Lambda_0^2} \beta\e^{\beta \omega} }{(1+\e^{\beta \omega})^2}}\R)\\
         &\le \CO(\frac{1}{\Lambda_0}+\beta)\tag{with change of variable $x = \beta \omega$ and $y = \omega/ \Lambda_0$ }.
\end{align}
The second term uses the tail bound $P(x) \le \frac{4}{\pi x \tau}$ from Eq.~\eqref{eq:secular_tailbound} and that $\gamma'_\beta(\nu\pm x)$ 
are each rapidly decaying outside an $x\in [\mp\nu-\Lambda_0,\mp\nu+\Lambda_0]$ window so that the integral over $\frac{1}{x} \rd x $ only contributes at most $\CO(\int^{\Lambda_0}_{1/\tau}\frac{1}{x}\rd x)=\CO(\log(\tau\Lambda_0))$.
\end{proof}

\subsection{Relating subspace and local gradients to global gradients}
As a method of proof, we will often analyze a Lindbladian by its constituents, and here we present a few useful relations. 
First, when studying gradients, the gradient acting on a subspace is often conceptually simpler. The following lemma relates the energy gradient in a subspace and the full energy gradient. This is a direct consequence of Lemma~\ref{lem:gradient_upperbound} and Lemma~\ref{lem:gradient-diagonal} above.

\begin{lemma}[Subspace gradient and global gradient]
\label{cor:PLP_and_L}
In the setting of Lemma~\ref{lem:gradient-diagonal}, suppose $\vP$ projects onto a set of eigenstates of $\vH$ separated by the rest by a gap of at least $\delta$.
Then,
\begin{equation}
    -\CL^{\dag}[\vH] \succeq -\vP \CL^{\dag}[\vH] \vP - \CO\left(\labs{S}\L(\frac{\norm{\vH}^{3/4}}{\tau^{1/4}}+\frac{1}{\beta}+ \frac{1}{\tau}+ \frac{\norm{\theta_\beta}_{\infty}}{\sqrt{\delta\tau }} \R)\right)\cdot \vI.
\end{equation}
\end{lemma}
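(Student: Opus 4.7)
The plan is to perform a block decomposition of the Hermitian operator $\vX := -\CL^{\dag}[\vH]$ with respect to the projectors $\vP$ and $\vP^\perp = \vI - \vP$, and to control the unwanted blocks using the two preceding lemmas. Explicitly,
\begin{equation}
\vX = \vP \vX \vP + \vP^\perp \vX \vP^\perp + \bigl(\vP \vX \vP^\perp + \vP^\perp \vX \vP\bigr),
\end{equation}
so we only need lower bounds on $\vP^\perp \vX \vP^\perp$ and on the off-diagonal cross-term to conclude.

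For the diagonal piece $\vP^\perp \vX \vP^\perp$, I would invoke Lemma~\ref{lem:gradient_upperbound}, which gives the operator inequality $\CL^{\dag}[\vH] \preceq C_1 \cdot \vI$ with $C_1 = \CO\bigl(|S|(\norm{\vH}^{3/4}/\tau^{1/4} + 1/\tau + 1/\beta)\bigr)$. Compressing by $\vP^\perp$ on both sides preserves the inequality, so $\vP^\perp \vX \vP^\perp = -\vP^\perp \CL^{\dag}[\vH] \vP^\perp \succeq -C_1 \vP^\perp \succeq -C_1 \vI$.

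For the off-diagonal cross-term, the key observation is that any Hermitian operator of the block form $\vY = \vP \vX \vP^\perp + \vP^\perp \vX \vP$ has spectrum $\pm\sigma_i(\vP \vX \vP^\perp)$, hence $\vY \succeq -\norm{\vP \vX \vP^\perp}_\infty \cdot \vI$. By hypothesis, $\vP$ projects onto eigenstates of $\vH$ separated from the rest by at least $\delta$, so I can apply Lemma~\ref{lem:gradient-diagonal} with the two well-isolated subspaces taken to be the ranges of $\vP$ and $\vP^\perp$. This yields $\norm{\vP \CL^{\dag}[\vH] \vP^\perp} \le C_2$ with $C_2 = \CO\bigl(|S|(\norm{\vH}^{3/4}/\tau^{1/4} + 1/\tau + \norm{\theta_\beta}_\infty/\sqrt{\delta\tau})\bigr)$, and therefore the cross-term is bounded below by $-C_2 \vI$.

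Assembling the three pieces gives $-\CL^{\dag}[\vH] \succeq -\vP \CL^{\dag}[\vH] \vP - (C_1 + C_2)\vI$, and $C_1 + C_2$ is exactly the big-$\CO$ constant in the statement. There is no substantive obstacle: the only subtlety is recognizing that the off-diagonal cross-term is itself a Hermitian operator whose most negative eigenvalue is controlled by a single off-diagonal block norm, which is precisely what Lemma~\ref{lem:gradient-diagonal} provides.
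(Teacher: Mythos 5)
Your proposal is correct and follows essentially the same route as the paper's proof: the same block decomposition of $\CL^{\dag}[\vH]$ with respect to $\vP$ and $\vP^\perp$, with Lemma~\ref{lem:gradient_upperbound} controlling the $\vP^\perp$ diagonal block and Lemma~\ref{lem:gradient-diagonal} controlling the off-diagonal cross-term, whose most negative eigenvalue is bounded by the block norm exactly as you argue. The assembled error constants match the statement.
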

\begin{proof}
Let $\vL = \CL^{\dag}[\vH]$. We have, 
\begin{align}
        \vL &= \vP \vL \vP + \vP^\perp \vL \vP^\perp + \vP \vL \vP^\perp + \vP^\perp \vL \vP.
\end{align}
Using \cref{lem:gradient_upperbound} establishing the almost negativity of the energy gradient,
\begin{align}
     -\vL &\succeq - \CO\left( \labs{S}\L(\frac{\norm{\vH}^{3/4}}{\tau^{1/4}}+\frac{1}{\tau} +\frac{1}{\beta} \R)\right) \vI,
\end{align}
we have
\begin{align}
    - \vP^{\perp}\vL\vP^{\perp} &\succeq - \CO\left(\labs{S}\L(\frac{\norm{\vH}^{3/4}}{\tau^{1/4}}+\frac{1}{\tau} +\frac{1}{\beta}\right)\R)\vP^{\perp}. \tag*{(Lemma~\ref{lem:gradient_upperbound})}\\
     \|\vP \vL \vP^\perp + \vP^\perp \vL \vP \| &\le \CO\left(\labs{S}\L(\frac{\norm{\vH}^{3/4}}{\tau^{1/4}}+\frac{1}{\tau} + \frac{\norm{\theta_\beta}_{\infty}}{\sqrt{\delta\tau }} \right)\R).\tag*{(Lemma~\ref{lem:gradient-diagonal})}
\end{align}
Putting the bounds together yields the advertised result.
\end{proof}

Next, we provide a lemma that gives a simplified expression of the energy gradient operator when restricted to a subspace of low-energy eigenstates.

\begin{lemma}[Gradient in a subspace]
\label{lem:subspace-gradient-expr}
In the setting of Lemma~\ref{lem:gradient_upperbound}, suppose $\vH$ has a subspace of low-energy eigenstates with corresponding projector $\vQ$ that is separated from the higher energy eigenstates by an excitation gap $\Delta_{\vQ}$.
Let $\Delta_\nu=\min_{\nu_1\neq \nu_2\in B(\vH|_{\vQ})} |\nu_1-\nu_2|$ be the Bohr-frequency gap of $\vH$ restricted to the subspace. Assuming $\Delta_\nu/2 < \Delta_{\vQ}$, then the energy gradient operator in the subspace can be approximated using
\begin{align}
\Bigg\|\vQ\CL^{\dag}[\vH]\vQ - \sum_{a\in S} \sum_{\nu\in B(\vH|_{\vQ})} \vQ \vA^{a\dag}_\nu \vQ \vA^a_{\nu} \vQ \int_{-\infty}^0 
\gamma_\beta(\omega) \omega  |\hat{f}_\mu(\omega-\nu)|^2 \rd\omega \Bigg\| \le \epsilon
\end{align}
where $\mu=\Delta_\nu/2$ and 
\begin{align}
\epsilon \le |S| \CO\L(\frac{\norm{\vH}^{3/4}}{\tau^{1/4}} + \frac{1}{\tau} + \frac{1}{\beta} + \frac{\|\omega\gamma_\beta(\omega)\|_{\infty}}{\sqrt{\Delta_\nu \tau}}\R).    
\end{align}
\end{lemma}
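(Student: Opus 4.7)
The plan is to approximate $\vQ\CL^{\dag}[\vH]\vQ$ by peeling off controlled errors in three stages. Starting from the decomposition $\CL^{\dag}_a[\vH] = \ri[\vH^{\beta,\tau,\vH}_{LS,a},\vH] + \CD^{\dag\beta,\tau,\vH}_a[\vH]$, I first strip the Lamb-shift and isolate cooling frequencies. Proposition~\ref{prop:norm-lamb-shift-part} bounds the sum of Lamb-shift commutators by $\CO(|S|\,\|\vH\|^{3/4}/\tau^{1/4})$, and Lemma~\ref{lemma:expr_energy_gradient} rewrites the dissipative part as $\int_{-\infty}^{\infty}\gamma_\beta(\omega)\,\omega\,\hat\vA^a(\omega)^\dag\hat\vA^a(\omega)\,\rd\omega$ up to $\CO(|S|/\tau)$. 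Splitting at $\omega=0$, the heating tail $\int_0^\infty$ is PSD with operator norm at most $\|\omega\gamma_\beta(\omega)\indicator(\omega\ge 0)\|_\infty\cdot \|\int\hat\vA^a(\omega)^\dag\hat\vA^a(\omega)\,\rd\omega\| \le \CO(1/\beta)$, using the tail bound~\eqref{eq:gamma-tailbound} together with the Parseval-type inequality $\int\hat\vA^a(\omega)^\dag\hat\vA^a(\omega)\,\rd\omega \preceq \|\vA^a\|^2\,\vI \preceq \vI$; summed over $a\in S$ this contributes $\CO(|S|/\beta)$ and effectively restricts the integral to $\omega\le 0$.

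Next, I apply the secular approximation (Corollary~\ref{cor:S-A}) at resolution $\mu=\Delta_\nu/2$ to replace each $\hat\vA^a(\omega)$ by $\hat\vS^a_\mu(\omega)=\sum_{\nu\in B(\vH)}\hat{f}_\mu(\omega-\nu)\,\vA^a_\nu$, at cost $\CO\bigl(|S|\,\|\omega\gamma_\beta\|_\infty/\sqrt{\Delta_\nu\,\tau}\bigr)$. Expanding $\hat\vS^a_\mu(\omega)^\dag\hat\vS^a_\mu(\omega)$ produces cross terms weighted by $\hat{f}_\mu(\omega-\nu')^*\hat{f}_\mu(\omega-\nu)$; since $|\nu-\nu'|\ge\Delta_\nu=2\mu$, these windows are disjointly supported, so only the diagonal $\nu=\nu'$ survives, leaving $\sum_{\nu\in B(\vH)}\vA^{a\dag}_\nu\vA^a_\nu\int_{-\infty}^0\gamma_\beta(\omega)\,\omega\,|\hat{f}_\mu(\omega-\nu)|^2\,\rd\omega$.

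Finally, inserting $\vI=\vQ+\vQ^\perp$ between $\vA^{a\dag}_\nu$ and $\vA^a_\nu$ splits the sandwiched product into $\vQ\vA^{a\dag}_\nu\vQ\vA^a_\nu\vQ + \vQ\vA^{a\dag}_\nu\vQ^\perp\vA^a_\nu\vQ$. The first term retains only transitions internal to $\vQ$, which vanish unless $\nu\in B(\vH|_\vQ)$, reproducing the target sum. The cross-term forces $\vA^a_\nu$ to bridge $\vQ$ and $\vQ^\perp$, hence $\nu\ge\Delta_\vQ$; combined with the restriction $\omega\le 0$ from Stage~1 this gives $|\omega-\nu|\ge\Delta_\vQ>\mu$, placing $\hat{f}_\mu(\omega-\nu)$ deep in its tail so that this piece is absorbed into the Stage-2 secular-approximation error.

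The main technical obstacle is the cross-term bound in Stage~3: one must verify that the tail decay of $\hat{f}_\mu$ outside $|\omega-\nu|<\mu$, combined with the assumption $\Delta_\vQ>\mu$ and the bounded weight $\gamma_\beta(\omega)\omega$, yields no new error scale beyond the $1/\sqrt{\Delta_\nu\,\tau}$ bound furnished by the Operator Fourier Transform calculus of Appendix~\ref{sec:OFT}. Summing the contributions from Lamb-shift stripping, the gradient rewrite, the heating tail, secular truncation, and the $\vQ^\perp$ cross-term then yields the claimed $\epsilon$.
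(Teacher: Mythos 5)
Your Stages 1 and 3 follow the paper's route (strip the Lamb shift via Proposition~\ref{prop:norm-lamb-shift-part}, rewrite via Lemma~\ref{lemma:expr_energy_gradient}, drop the heating half-line at cost $\CO(|S|/\beta)$, secular-truncate at $\mu=\Delta_\nu/2$, then use $\omega\le 0$ and $\mu<\Delta_{\vQ}$ to control transitions out of $\vQ$), but Stage 2 contains a genuine gap: you claim that after the secular approximation only the diagonal $\nu=\nu'$ terms survive because ``$|\nu-\nu'|\ge\Delta_\nu=2\mu$.'' That separation is only guaranteed for Bohr frequencies of $\vH$ \emph{restricted to the subspace}, $B(\vH|_{\vQ})$ --- that is how $\Delta_\nu$ is defined in the statement. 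The full spectrum above the excitation gap can produce Bohr frequencies in $B(\vH)$ that are arbitrarily close to one another (or to elements of $B(\vH|_{\vQ})$), so the windows $\hat f_\mu(\omega-\nu)$ and $\hat f_\mu(\omega-\nu')$ need not be disjoint, and the off-diagonal terms you discard at this point are not small in operator norm. As written, the reduction of the unsandwiched operator to $\sum_{\nu\in B(\vH)}\vA^{a\dag}_\nu\vA^a_\nu\int_{-\infty}^0\gamma_\beta(\omega)\,\omega\,|\hat f_\mu(\omega-\nu)|^2\,\rd\omega$ is unjustified and in general false.

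The fix is a reordering, and it is exactly what the paper does: keep the double sum over $\nu,\nu'\in B(\vH)$ after the secular step, sandwich with $\vQ$ \emph{first}, and then use the constraints already at your disposal. Since $\omega\le 0$ and $|\omega-\nu|,|\omega-\nu'|<\mu<\Delta_{\vQ}$, any surviving term cannot leave the subspace, so the middle projector $\vQ$ can be inserted with zero error and both $\nu,\nu'$ are forced into $B(\vH|_{\vQ})$; only at that stage does the $2\mu$ separation apply, and it kills the $\nu\neq\nu'$ terms exactly (the hard cutoff in $\hat f_\mu$ makes the products vanish identically, not merely ``deep in the tail'' to be absorbed into the secular error, which is also how your Stage-3 cross term $\vQ\vA^{a\dag}_\nu\vQ^\perp\vA^a_\nu\vQ$ should be dispatched). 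With that reordering your error accounting matches the paper's: $\CO(|S|\|\vH\|^{3/4}/\tau^{1/4})$, $\CO(|S|/\tau)$, $\CO(|S|/\beta)$, and $\CO(|S|\|\omega\gamma_\beta(\omega)\|_\infty/\sqrt{\mu\tau})$ with $\mu=\Delta_\nu/2$.
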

\begin{proof}
We invoke a series of approximations to rewrite in terms of the exact Bohr frequencies on the subspace $\vP_{\rom1}$.
\begin{align}
   \CL^\dag[\vH] 
   & \stackrel{\verr_1}{\approx} \CD^\dag[\vH]\tag{Proposition~\ref{prop:norm-lamb-shift-part}}\\
   & \stackrel{\verr_2}{\approx}  
   \sum_{a\in S}\int_{-\infty}^{\infty} \gamma_{\beta}(\omega)\omega  \vA^{a}(\omega)^\dag\vA^{a}(\omega) \rd \omega \tag{Lemma~\ref{lemma:expr_energy_gradient}}\\
   & \stackrel{\verr_3}{\approx} \sum_{a\in S} \int_{-\infty}^{0} \gamma_{\beta}(\omega) \omega  \vA^{a}(\omega)^\dag\vA^{a}(\omega) \rd \omega\tag{Operator norms: Corollary~\ref{cor:iso_OFT}}\\
   &\stackrel{\verr_4}{\approx} \sum_{a\in S} \int_{-\infty}^{0} \gamma_{\beta}(\omega) \omega  \vS^{a}(\omega)^\dag\vS^{a}(\omega) \rd \omega\tag{secular approximation: Corollary~\ref{cor:AA-SS}}\\
   &= \sum_{a\in S} \sum_{\nu',\nu\in B(\vH)} \int_{-\infty}^{0} \gamma_{\beta}(\omega) \omega \vA^{a\dag}_{\nu'}\vA^{a}_{\nu} \hat{f_{\mu}}^*(\omega-\nu') \hat{f_{\mu}}(\omega-\nu) \rd \omega =: \vX.
   \label{eq:La-before-subspace}
\end{align}
The errors are $\verr_1 = \CO(\labs{S}\norm{\vH}^{3/4}/\tau^{1/4})$, $\verr_2 = \CO(\labs{S}/\tau)$, $\verr_3 = \CO(\labs{S}/\beta)$, and $\verr_4 = \CO(\labs{S}\times \|\omega\gamma_\beta(\omega)\|_{\infty}/\sqrt{\mu\tau})$.
In particular, ${\verr_3}$ arises from dropping the positive integral range, with error bounded by $\max_{\omega\ge0} \omega\gamma_\beta(\omega) \le 1/\beta$. 

Sandwiching Eq.~\eqref{eq:La-before-subspace} with $\vQ$ further simplifies the expression as it restricts to transitions in the subspace. 
Specifically, we have
\begin{align}
\vQ \vX \vQ  &= \sum_{a\in S}\sum_{\nu',\nu\in B(\vH)} \int_{-\infty}^{0} \gamma_{\beta}(\omega) \omega \vQ \vA^{a\dag}_{\nu'} \vQ  \vA^{a}_{\nu} \vQ  \hat{f_{\mu}}^*(\omega-\nu') \hat{f_{\mu}}(\omega-\nu) \rd \omega\tag{no heating transitions}\\
&=  \sum_{a\in S} \sum_{\nu\in B(\vH|_{\vQ})} \int_{-\infty}^{0} \gamma_{\beta}(\omega) \omega  \vQ \vA^{a\dag}_{\nu}\vQ\vA^{a}_{\nu}\vQ \labs{\hat{f}_\mu(\omega-\nu)}^2 \rd \omega\tag{different Bohr-frequencies decohere}.
\end{align}
The first line inserts an additional projector $\vQ$ between $\vA^{a\dag}_{\nu'}$ and $\vA^a_{\nu}$ because any transition to excited states require $\nu,\nu' > \Delta_{\vQ}$, but this is forbidden by the restrictions that $\omega \le 0$ (from the integral) and that $|\nu-\omega|, |\nu'-\omega| < \mu < \Delta_{\vQ}$ (from  the secular approximation).
In the second line, since the Bohr frequencies in $B(\vH|_{\vQ})$ are at least $\Delta_\nu=2\mu$ apart, we must have that 
\begin{align}
\hat{f}^*_{\mu}(\omega-\nu') \hat{f}_{\mu}(\omega-\nu)=0 \quad \text{for all}\quad \omega\in \BR, \quad \text{unless}\quad \nu'=\nu.
\end{align}
Combining the above with Eq.~\eqref{eq:La-before-subspace} to conclude the proof.
\end{proof}

When the Hamiltonian is local, thinking about the gradient ``locally'' is sometimes useful. The following lemma gives a sufficient condition that guarantees a global gradient. Since the consequence is strong, the premise is also more stringent; it is only helpful when the Hamiltonian is frustration-free.
\begin{lemma}[Local-to-global gradient condition]
\label{lem:localconvex}
Suppose $\vH=\sum_i \vh_i$, where each term $\vh_i\succeq 0$. Then for any (not necessarily thermal) Lindbladian $\CL$,
\begin{equation}
    -\CL^{\dag}[\vh_i] \succeq r_i \vh_i
    \qquad \Longrightarrow \qquad
    -\CL^{\dag}[\vH] \succeq r \vH,
\end{equation}
where $r = \min_i r_i$.
\end{lemma}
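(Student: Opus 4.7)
The plan is a direct linearity argument that makes essential use of the frustration-free assumption $\vh_i \succeq 0$. First I would exploit that $\CL^\dag$ is a linear superoperator, so that
\[
-\CL^\dag[\vH] \;=\; -\CL^\dag\!\left[\sum_i \vh_i\right] \;=\; \sum_i \bigl(-\CL^\dag[\vh_i]\bigr).
\]
Then I would invoke the hypothesis $-\CL^\dag[\vh_i] \succeq r_i \vh_i$ term-by-term to get $-\CL^\dag[\vH] \succeq \sum_i r_i \vh_i$, where the sum of operator inequalities in positive semidefinite order is valid since $\succeq$ is preserved under addition.

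Finally I would use the crucial assumption $\vh_i \succeq 0$ to conclude: since $r_i \geq r = \min_i r_i$ and $\vh_i \succeq 0$, we have $r_i \vh_i \succeq r \vh_i$ for every $i$, hence
\[
\sum_i r_i \vh_i \;\succeq\; r \sum_i \vh_i \;=\; r \vH.
\]
Combining the two chains of inequalities yields $-\CL^\dag[\vH] \succeq r \vH$.

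There is no real obstacle here; the only subtlety worth flagging is that the positivity $\vh_i \succeq 0$ cannot be dropped, since otherwise one could not replace $r_i \vh_i$ by $r \vh_i$ in the positive semidefinite order (e.g.\ if some $\vh_i$ had a negative eigenvalue, weakening $r_i$ to $r$ could flip the inequality on that eigenvector). This is why the lemma is restricted to the frustration-free setting, and presumably why the authors flag that it is only useful in that regime.
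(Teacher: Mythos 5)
Your proposal is correct and follows essentially the same route as the paper's proof: linearity of $\CL^\dag$, the term-by-term hypothesis, and the observation that $r_i \vh_i \succeq r\vh_i$ because $r_i \ge r$ and $\vh_i \succeq 0$. Your remark that the positivity of each $\vh_i$ is indispensable for the last step is accurate and matches the paper's implicit reliance on the frustration-free setting.
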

\begin{proof}
By linearity, we have $-\CL^\dag[\vH] = \sum_i -\CL^\dag[\vh_i]\succeq \sum_i r_i \vh_i $.
Since $r_i \vh_i \succeq r \vh_i$, we have $-\CL^{\dag}[\vH] \succeq r\sum_i \vh_i = r \vH$,
concluding the proof.
\end{proof}

\subsection{Gradients for commuting Hamiltonians}

When we are given a commuting Hamiltonian, the energy gradient induced by any local jump operator can be understood by restricting the system to its neighborhood.
In this situation, the negative gradient condition for the overall Hamiltonian can be decomposed into conditions that can be checked locally.
This gives an efficient method to show a commuting Hamiltonian has a negative gradient for all its excited states, which we elucidate in this section of the appendix.

Recall the thermal Lindbladian $\CL:=\CL^{\beta,\tau,\vH}$ defined in Eq.~\eqref{eq:dissipative-Lind} for a local jump operator $\vA^a$, whose Heisenberg picture is
\begin{equation}
\CL_a^{\dag\beta,\tau,\vH}[\vO]  = \ri[\vH^{\beta,\tau,\vH}_{LS,a}, \vO] + \CD_a^{\dag\beta,\tau,\vH}[\vO],
\end{equation}
where
\begin{equation}  
\CD_a^{\dag\beta,\tau,\vH}[\vO] =
\int_{-\infty}^\infty \gamma_\beta(\omega) \Big[\hat\vA^a(\omega)^\dag \vO \hat\vA^a(\omega) - \frac12 \{\hat\vA^a(\omega)^\dag \hat\vA^a(\omega), \vO\}\Big]\rd \omega.
\end{equation}
Note $\hat{\vA}^a(\omega)$ is the operator Fourier transform of $\vA^a(t)=\e^{\ri\vH t} \vA^a \e^{-\ri \vH t}$, and $\vH^{\beta, \tau, \vH}_{LS,a}$ is a Lamb-shift term defined in Eq.~\eqref{eq:Lamb_shift}.

When $\vH$ is a commuting Hamiltonian (e.g.,~\cite{kastoryano2016commuting,capel2021modified}), an important observation is that $\vA^a(t)$ only depends on the part of $\vH$ that does not commute with $\vA^a$. In particular, the energy gradient for each jump operator only depends on the neighborhood of $\vA^a$.

\begin{lemma}[Commuting Hamiltonian and localized Lindblad operators]
\label{lem:commuting-decomp}
Suppose $\vH=\sum_e \vh_e$ is a commuting Hamiltonian. For any jump operator $\vA^a$, the associated energy gradient simplifies to
\begin{equation}
\CL_a^{\dag\beta,\tau,\vH}[\vH] = \CL_a^{\dag\beta,\tau,\vH_{\ni a}}[\vH_{\ni a}]
\end{equation}
where $\vH_{\ni a} = \sum_{e:\,[\vh_e, \vA^a] \neq 0} \vh_e$ is the part of $\vH$ does not commute with $\vA^a$.
\end{lemma}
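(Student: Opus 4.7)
The plan is to exploit two independent consequences of the commuting structure: (i) the Lindbladian superoperator $\CL_a^{\dag\beta,\tau,\vH}$ depends on $\vH$ only through the Heisenberg-evolved jump operator $\vA^a(t)$, and this evolution is insensitive to the part of $\vH$ that commutes with $\vA^a$; and (ii) the trivial part $\vH_{\not\ni a}$ is annihilated by the Heisenberg-picture Lindbladian. Combining these gives the identity.

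First, I would decompose $\vH = \vH_{\ni a} + \vH_{\not\ni a}$. Because $\vH$ is commuting, each summand $\vh_e$ commutes with every other, so in particular $[\vH_{\ni a},\vH_{\not\ni a}]=0$. By definition of $\vH_{\not\ni a}$ we also have $[\vH_{\not\ni a},\vA^a]=0$. Hence
\begin{equation}
\vA^a(t) = \e^{\ri \vH t}\vA^a \e^{-\ri \vH t} = \e^{\ri \vH_{\ni a} t}\e^{\ri \vH_{\not\ni a} t}\vA^a \e^{-\ri \vH_{\not\ni a} t}\e^{-\ri \vH_{\ni a} t} = \e^{\ri \vH_{\ni a} t}\vA^a \e^{-\ri \vH_{\ni a} t},
\end{equation}
so the Heisenberg evolution of $\vA^a$ under $\vH$ coincides with its evolution under $\vH_{\ni a}$. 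Since $\hat{\vA}^a(\omega)$ is the operator Fourier transform of $\vA^a(t)$ (Eq.~\eqref{eq:Aomega}), and the Lamb-shift $\vH^{\beta,\tau,\vH}_{LS,a}$ (Eq.~\eqref{eq:Lamb_shift}) is built entirely from $\vA^a(t)$, both objects depend on $\vH$ only through $\vH_{\ni a}$. Consequently, as superoperators,
\begin{equation}
\CL_a^{\dag\beta,\tau,\vH} \;=\; \CL_a^{\dag\beta,\tau,\vH_{\ni a}}.
\end{equation}

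Next, I would apply this superoperator identity to $\vH = \vH_{\ni a} + \vH_{\not\ni a}$ and show the contribution from $\vH_{\not\ni a}$ vanishes. Because $\vH_{\not\ni a}$ commutes with $\vA^a$ and with $\vH_{\ni a}$, it commutes with $\vA^a(t)$ for every $t$, hence with $\hat{\vA}^a(\omega)$, $\hat{\vA}^a(\omega)^\dag$, and with $\vH^{\beta,\tau,\vH_{\ni a}}_{LS,a}$. Therefore every term in the Heisenberg form of the Lindbladian annihilates $\vH_{\not\ni a}$:
\begin{equation}
\hat{\vA}^a(\omega)^\dag \vH_{\not\ni a}\hat{\vA}^a(\omega) - \tfrac12\{\hat{\vA}^a(\omega)^\dag\hat{\vA}^a(\omega),\vH_{\not\ni a}\}=0,\qquad \ri[\vH^{\beta,\tau,\vH_{\ni a}}_{LS,a},\vH_{\not\ni a}]=0,
\end{equation}
so $\CL_a^{\dag\beta,\tau,\vH_{\ni a}}[\vH_{\not\ni a}]=0$. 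Combining with the superoperator identity yields
\begin{equation}
\CL_a^{\dag\beta,\tau,\vH}[\vH] = \CL_a^{\dag\beta,\tau,\vH_{\ni a}}[\vH_{\ni a}+\vH_{\not\ni a}] = \CL_a^{\dag\beta,\tau,\vH_{\ni a}}[\vH_{\ni a}],
\end{equation}
as claimed.

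There is no real technical obstacle here; the proof is essentially a bookkeeping argument once one observes that for commuting $\vH$ the Heisenberg evolution factorizes cleanly. The only point that deserves care is verifying that both constituents of $\CL_a^{\dag}$—the dissipator and the Lamb-shift—depend on $\vH$ solely through $\vA^a(t)$, so that the replacement $\vH\mapsto \vH_{\ni a}$ inside the superoperator is legitimate, which I would justify by inspection of Eqs.~\eqref{eq:dissipative-Lind},~\eqref{eq:Aomega},~and~\eqref{eq:Lamb_shift}.
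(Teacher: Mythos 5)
Your proposal is correct and follows essentially the same route as the paper's proof: the factorization $\vA^a(t)=\e^{\ri\vH_{\ni a}t}\vA^a\e^{-\ri\vH_{\ni a}t}$ gives the superoperator identity $\CL_a^{\dag\beta,\tau,\vH}=\CL_a^{\dag\beta,\tau,\vH_{\ni a}}$, and the commutation of $\vH_{\not\ni a}$ with $\vA^a(t)$ (hence with $\hat{\vA}^a(\omega)$ and the Lamb-shift term) shows that part is annihilated. Your explicit verification that both the dissipator and the Lamb-shift depend on $\vH$ only through $\vA^a(t)$ is just a slightly more spelled-out version of the paper's argument.
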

\begin{proof}
When $\vH$ is commuting, we have $\vA^a(t) = \e^{\ri \vH t} \vA^a \e^{-\ri \vH t} = \e^{\ri \vH_{\ni a} t} \vA^a \e^{-\ri \vH_{\ni a} t}$, so the Lindbladian superoperator only depends on $\vH_{\ni a}$, i.e., $\CL_a^{\dag\beta,\tau,\vH} = \CL_a^{\dag\beta,\tau,\vH_{\ni a}}$.

Let $\vH_{\not\ni a} = \vH - \vH_{\ni a}$ be the part of $\vH$ that commutes with $\vA^a$.
Since $[\vH_{\not\ni a}, \vH_{\ni a}]=0$, we have $[\vA^a(t), \vH_{\not\ni a}] = 0$ for each $t$, which implies $[\vH^{\beta, \tau, \vH}_{LS,a}, \vH_{\not\ni a}] = [\hat{\vA}^a(\omega), \vH_{\not\ni a}]=0$.
Thus we have $\CL_a^{\dag\beta,\tau,\vH}[\vH_{\not\ni a}] = 0$.
\end{proof}

\subsection{Negative gradient condition under perturbations to Hamiltonians}

We next look at how the negative energy gradient condition changes under perturbations to the $n$-qubit Hamiltonian $\vH$.
See Appendix~\ref{sec:monotone_gradient} for the proof of the following theorem.
\begin{theorem}[Monotonicity of gradient under level splitting]\label{thm:mono_gradient}
    Consider a highly degenerate Hamiltonian $\vH = \sum_{\bar{E}} \bar{E}\vP_{\Bar{E}}$ with Bohr-frequency gap $\Delta_{\nu} := \min_{\nu_1\neq \nu_2\in B(\vH)} |\nu_1-\nu_2|$ of $\vH$, and add a perturbation $\vH':=\vH+\vV$. Let $\vP = \vP_{\bar{E}}$ be a projector to an energy subspace and $\vP'$ the corresponding perturbed subspace.
    Suppose the perturbation is weaker than the Bohr-frequency gap, 
        $\norm{\vV} \le \frac18\Delta_{\nu}$.
    For any $\beta, \tau >0$, let $\CL = \sum_{a\in S} \CL^{\beta,\tau, \vH}_a, \CL' = \sum_{a\in S} \CL^{\beta,\tau, \vH'}_a$ be thermal Lindbladians with jumps $\{\vA^a\}_{a\in S}$, where $\norm{\vA^a}\le 1$ and the transition weight $\gamma_{\beta}(\omega)$ is given by Eq.~\eqref{eq:glauber-dyn}.
    Then we have the monotone property that
\begin{align}
    -\CL^\dag[\vH] \succeq r(\vI-\vP) -\epsilon \vI \quad \text{implies}\quad -\CL^{'\dag}[\vH'] \succeq r(\vI-\vP') - \epsilon' \vI
\end{align}
where
\begin{align} \label{eq:mono_err_bound}
    \epsilon' \le \epsilon+ \labs{S} \cdot \CO\L(\frac{1}{\tau} +\frac{\norm{\vH}^{3/4}}{\tau^{1/4}}+\frac{\Lambda_0^{2/3}}{\tau^{1/3}}+\frac{\Lambda_0}{\sqrt{\Delta_{\nu}\tau}}  + \frac{\e^{-\beta\Delta_{\nu}/4}}{\beta} +\big(1+\frac{\Lambda_0+r}{\Delta_\nu}\big)\norm{\vV}\R).
\end{align}
\end{theorem}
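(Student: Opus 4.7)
The plan is to decompose
\begin{equation*}
-\CL'^{\dag}[\vH'] \;=\; -\CL^{\dag}[\vH] \;+\; (\CL - \CL')^{\dag}[\vH] \;-\; \CL'^{\dag}[\vV],
\end{equation*}
transfer the assumed lower bound on $-\CL^{\dag}[\vH]$ to the perturbed side, and absorb all the discrepancies into $\epsilon'$. The last term is easy: Proposition~\ref{prop:bound-ene-grad} together with linearity in jumps yields $\|\CL'^{\dag}[\vV]\| \le 3|S|\|\vV\|$, which accounts for the bare $|S|\|\vV\|$ contribution of \eqref{eq:mono_err_bound}. To convert a bound relative to $\vI-\vP$ into one relative to $\vI-\vP'$, I will use $\vI-\vP = (\vI-\vP') + (\vP'-\vP)$, producing a deficit $r\|\vP-\vP'\|\cdot\vI$; since $\|\vV\|\le \Delta_\nu/8$ is well below the subspace excitation gap of $\vH$, the Davis--Kahan theorem gives $\|\vP-\vP'\|=\CO(\|\vV\|/\Delta_\nu)$, matching the $r\|\vV\|/\Delta_\nu$ summand in \eqref{eq:mono_err_bound}.

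The hard part is bounding $(\CL-\CL')^{\dag}[\vH]$ by a multiple of $\|\vV\|$ \emph{rather than} $\tau\|\vV\|$. A direct Duhamel expansion of $\vA^a(t)=\e^{\ri\vH t}\vA^a\e^{-\ri\vH t}$ against its $\vH'$-evolved analogue yields exactly the latter, too-loose estimate, so the time-domain route is hopeless. My plan is to pass to the Davies-like form on both Hamiltonians via the secular approximation, in the spirit of Lemma~\ref{lem:subspace-gradient-expr}: drop the Lamb-shift (cost $|S|\|\vH\|^{3/4}/\tau^{1/4}$), truncate the window at scale $\mu\asymp\Delta_\nu/2$ (cost $|S|\Lambda_0/\sqrt{\Delta_\nu\tau}$), and replace the smeared transition weight by $\theta_\beta(\nu):=\gamma_\beta(\nu)\,\nu$ (cost $|S|(1/\tau + \e^{-\beta\Delta_\nu/4}/\beta)$ via Eq.~\eqref{eq:gamma-tailbound}). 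This rewrites
\begin{equation*}
\CL^{\dag}[\vH] \;\approx\; \sum_{a\in S}\sum_{\nu\in B(\vH)} \theta_\beta(\nu)\, \vA^{a\dag}_\nu \vA^{a}_\nu,
\end{equation*}
and analogously for $\CL'^{\dag}[\vH']$, with perturbed Fourier components $\vA'^{a}_{\nu'}$ at the shifted frequencies $\nu'\in B(\vH')$.

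With both sides in this spectral form, the comparison becomes a pure eigendecomposition estimate. The hypothesis $\|\vV\|\le\Delta_\nu/8$ forces every Bohr frequency to shift by at most $2\|\vV\|\le\Delta_\nu/4$, so the pairing $\nu\leftrightarrow\nu'$ is unambiguous, and Davis--Kahan on each spectral bunch produces a near-identity isometry $\vU$ with $\|\vU-\vI\|=\CO(\|\vV\|/\Delta_\nu)$ that intertwines $\vA^{a}_\nu$ with $\vA'^{a}_{\nu'}$. Combined with the Lipschitz estimate $|\theta_\beta(\nu')-\theta_\beta(\nu)|=\CO(\Lambda_0\|\vV\|)$ on the Glauber weight from \eqref{eq:glauber-dyn}, these intertwining bounds transfer into operator-norm changes of order $|S|\Lambda_0\|\vV\|/\Delta_\nu$, which is exactly the $(\Lambda_0/\Delta_\nu)\|\vV\|$ term in \eqref{eq:mono_err_bound}; the residual $\Lambda_0^{2/3}/\tau^{1/3}$ is the loss from having to inflate the secular window just enough to cleanly separate the shifted frequencies in $B(\vH')$ from their neighbors in $B(\vH)$, balancing truncation error against frequency-collision error in a $\tau^{-1/3}$-scaling optimum. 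Aggregating the three sources (bare $\CL'^{\dag}[\vV]$, spectral perturbation of $\CL$, and projector shift $\|\vP-\vP'\|$) with the secular-reduction errors produces exactly \eqref{eq:mono_err_bound}.
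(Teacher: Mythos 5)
There is a genuine gap, and it sits exactly at the step you flag as "the hard part." Your decomposition requires a two-sided operator-norm bound $\|(\CL-\CL')^{\dag}[\vH]\| = \CO\big(|S|(1+\Lambda_0/\Delta_\nu)\|\vV\|\big)$ plus small $\tau$-errors, equivalently that the unperturbed and perturbed Davies-like forms are close in norm. This is false, no matter how small $\|\vV\|$ is. When a degenerate Bohr-frequency block $\nu$ splits into nearby frequencies $\nu',\nu''\in B(\vH')$, the unperturbed form contains the intra-block coherences $\vA^{a\dag}_{\nu'}\vA^{a}_{\nu''}$ with $\nu'\neq\nu''$, while the perturbed one decoheres them; these cross terms have norm of order $\|\theta_\beta\|_\infty\|\vA^a\|^2=\Theta(1)$, not $\CO(\|\vV\|/\Delta_\nu)$. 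Concretely, take a single ground state $\ket{g}$, a two-dimensional excited level at energy $\Delta$ split by $\vV$ into $\ket{e_\pm}$ with energies $\Delta\pm\eta$, and a jump $\vA=\ketbra{g}{\psi}$ with $\ket{\psi}=(\ket{e_+}+\ket{e_-})/\sqrt2$: the unperturbed spectral form is $\theta(-\Delta)\ketbrat{\psi}$ while the perturbed one is $\approx\tfrac{\theta(-\Delta)}{2}(\ketbrat{e_+}+\ketbrat{e_-})$, and their difference $\tfrac{\theta(-\Delta)}{2}(\ketbra{e_+}{e_-}+\mathrm{h.c.})$ has norm $|\theta(-\Delta)|/2$ independently of $\eta=\|\vV\|$. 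The Davis--Kahan isometry you invoke only intertwines $\vA^a_\nu$ with the \emph{sum} $\sum_{\nu'\approx\nu}\vA^{a}_{\nu'}$ of the split pieces, and the square of the sum is not the sum of the squares; your argument silently identifies the two. A further sanity check: if your three error terms were all $\CO(\|\vV\|)$ in norm, the implication would be symmetric under exchanging $\vH\leftrightarrow\vH'$, but the theorem is intrinsically one-sided (the paper emphasizes that it fails with the roles switched, and that $\normp{\CL-\CL'}{1-1}$ is not small at large $\tau$).

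There is also a secondary problem with the reduction of $\CL'^{\dag}[\vH']$ to an exact Davies form at the frequencies $B(\vH')$: with your window $\mu\asymp\Delta_\nu(\vH)/2$ the secular approximation cannot separate the split frequencies \emph{within} a block (their spacing can be arbitrarily small compared to $\|\vV\|$), so replacing the smeared weights by $\theta_\beta(\nu')$ would cost errors scaling like $1/\sqrt{\Delta_\nu(\vH')\,\tau}$, which is not controlled by the theorem's parameters. The paper's proof avoids both issues: it never resolves the intra-block structure — it groups perturbed frequencies into $\vA^a_{\approx\nu}$, keeps the operator Fourier transform with respect to $\vH'$ inside each block, rounds $\theta$ only at the block level, truncates to the nonpositive part $\theta_-$, and then, instead of a norm comparison, invokes the one-sided monotonicity of Lemma~\ref{lemma:mono_rate} (decoherence can only increase transition rates against an observable commuting with $\vH'$), together with the perturbative rewriting of $\vA^a_\nu$ into $\vA^a_{\approx\nu}$ controlled by Prop.~\ref{prop:thetaAAAA}. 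Your treatment of the easy pieces ($\CL'^{\dag}[\vV]$ via Prop.~\ref{prop:bound-ene-grad}, and the projector swap via Davis--Kahan giving $r\,\CO(\|\vV\|/\Delta_\nu)$) is fine, but without an analogue of the monotonicity lemma the central estimate does not go through.
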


Finally, we look at how the negative energy gradient condition changes when restricted to a subspace.
See Appendix~\ref{sec:monotone_grad_subspace} for proofs of the following two corollaries.

\begin{corollary}[Monotonicity of gradient on a subspace]\label{cor:monotonicity_subspace}
Consider a Hamiltonian $\vH = \sum_{\bar{E}} \bar{E}\vP_{\Bar{E}}$ and its perturbation $\vH':=\vH+\vV$. Let $\vP$ be the ground space projector for $\vH$ and $\vP'$ be the corresponding perturbed eigensubspace of $\vH'$.
Let $\vQ$ be a low-energy eigensubspace projector of $\vH$ (i.e., $\vQ  = \sum_{E \le E_{\vQ}} \vP_E$ for $E_{\vQ} \in \text{Spec}(\vH)$) with excitation gap $\Delta_{\vQ}$.
Assume $\frac{\|\vV\|\norm{\vH}}{\Delta_{\vQ}} \le \frac{1}{144}\Delta_\nu$ where $\Delta_\nu := \min_{\nu_1\neq \nu_2\in B(\vH|_{\vQ})} |\nu_1-\nu_2|$ is the Bohr-frequency gap of $\vH$ within the subspace $\vQ$.  
For any $\beta, \tau >0$, let $\CL = \sum_{a\in S} \CL^{\beta,\tau, \vH}_a, \CL' = \sum_{a\in S} \CL^{\beta,\tau, \vH'}_a$ be thermal Lindbladians with jumps $\{\vA^a\}_{a\in S}$, where $\norm{\vA^a}\le 1$ and the transition weight $\gamma_{\beta}(\omega)$ is given by Eq.~\eqref{eq:glauber-dyn}.
    Then we have the monotone property that
\begin{align}
    - \vQ \CL^{\dag}[\vH]\vQ \succeq r \vQ(\vI-\vP) - \epsilon \vI
    \quad \text{implies} \quad - \vQ' \CL'^{\dag}[\vH']\vQ' \succeq r \vQ'(\vI-\vP') - \epsilon' \vI
\end{align}
where $\vQ'$ projects onto the perturbed eigensubspace of $\vH'$ identified with $\vQ$, and
\begin{align}
    \epsilon' \le \epsilon + \labs{S}\cdot \CO\bigg(\frac{1}{\tau} +\frac{\norm{\vH}^{3/4}}{\tau^{1/4}}+\frac{\Lambda_0^{2/3}}{\tau^{1/3}}+\frac{\Lambda_0}{\sqrt{\Delta_{\nu}\tau}} +\frac{\Lambda_0}{\sqrt{\Delta_{\vQ}\tau}} + \frac{\e^{-\beta\Delta_{\nu}/4}}{\beta} + \frac{\e^{-\beta\Delta_{\vQ}/4}}{\beta}  \nonumber \\
    + \L(1+\frac{\Lambda_0}{\Delta_\nu}\R)\frac{\norm{\vV}\norm{\vH}}{\Delta_{\vQ}} + r \Big(\frac{\norm{\vV}}{\Delta_{\vQ}}+\frac{\norm{\vV}}{\Delta_{\nu}}\Big)\bigg).
\label{eq:mono-subspace-err}
\end{align}
\end{corollary}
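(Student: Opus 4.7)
The plan is to reduce Corollary~\ref{cor:monotonicity_subspace} to Theorem~\ref{thm:mono_gradient} by working entirely inside the subspace $\vQ$ (respectively $\vQ'$). The first step is to apply Lemma~\ref{lem:subspace-gradient-expr} twice to obtain simplified expressions for $\vQ \CL^{\dag}[\vH] \vQ$ and $\vQ' \CL'^{\dag}[\vH'] \vQ'$ in terms of operators living only on those subspaces, at the cost of error terms of order $|S| \cdot \CO(\|\vH\|^{3/4}/\tau^{1/4} + 1/\tau + 1/\beta + \Lambda_0/\sqrt{\Delta_\vQ \tau} + \e^{-\beta \Delta_\vQ/4}/\beta)$, where the appearance of $\Delta_\vQ$ reflects the fact that the lemma controls leakage out of $\vQ$ via the excitation gap. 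After this reduction, the subspace gradient is essentially that of a thermal Lindbladian with jumps $\vQ\vA^a\vQ$ acting on the restricted Hamiltonian $\vH|_\vQ$, whose Bohr-frequency gap is $\Delta_\nu$.

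Since $\vQ$ is separated from the rest of $\mathrm{Spec}(\vH)$ by $\Delta_\vQ \gg \|\vV\|$, standard perturbation theory (e.g., a Schrieffer--Wolff transformation) yields a unitary $\vU$ with $\|\vU - \vI\| = \CO(\|\vV\|/\Delta_\vQ)$ satisfying $\vU \vQ \vU^{\dag} = \vQ'$. Conjugating the restricted perturbed Hamiltonian by $\vU^{\dag}$ produces an effective Hamiltonian on $\vQ$ of the form $\vH|_\vQ + \vW$, where $\vW$ absorbs both the direct piece $\vQ\vV\vQ$ and the block-rotation contribution; the latter scales as $\CO(\|\vV\|\|\vH\|/\Delta_\vQ)$, so $\|\vW\| = \CO(\|\vV\|\|\vH\|/\Delta_\vQ)$. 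The hypothesis $\|\vV\|\|\vH\|/\Delta_\vQ \le \Delta_\nu/144$ places us precisely in the regime required by Theorem~\ref{thm:mono_gradient}: the effective perturbation is much smaller than the Bohr-frequency gap $\Delta_\nu$ of the unperturbed subspace Hamiltonian.

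Applying Theorem~\ref{thm:mono_gradient} within $\vQ$ to the pair $(\vH|_\vQ, \vH|_\vQ + \vW)$ transfers the negative-gradient bound to the effective Hamiltonian, with the error from Eq.~\eqref{eq:mono_err_bound} evaluated with $\vH \to \vH|_\vQ$, $\vV \to \vW$, and the relevant Bohr-frequency gap $\Delta_\nu$. Substituting $\|\vW\| = \CO(\|\vV\|\|\vH\|/\Delta_\vQ)$ into the perturbation-dependent terms $(1 + (\Lambda_0+r)/\Delta_\nu)\|\vW\|$ produces the advertised $(1 + \Lambda_0/\Delta_\nu)\|\vV\|\|\vH\|/\Delta_\vQ$ and $r\|\vV\|/\Delta_\nu$ contributions. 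The extra $r\|\vV\|/\Delta_\vQ$ in Eq.~\eqref{eq:mono-subspace-err} arises from the mismatch between $\vU\vP\vU^{\dag}$ and $\vP'$ when undoing the conjugation, which is controlled by $\|\vU-\vI\| = \CO(\|\vV\|/\Delta_\vQ)$ and multiplied by $r$ through the coefficient in front of the projector. Combining these with the first-step subspace-restriction errors gives the bound in Eq.~\eqref{eq:mono-subspace-err}.

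The main obstacle is the bookkeeping between two very different perturbative scales: the external rotation $\vQ \to \vQ'$ is controlled by $\Delta_\vQ$, while the internal level-splitting within $\vQ$ is controlled by the potentially much smaller Bohr-frequency gap $\Delta_\nu$. Theorem~\ref{thm:mono_gradient} requires the effective perturbation to be smaller than $\Delta_\nu$, not merely $\Delta_\vQ$, which is why the hypothesis bounds the product $\|\vV\|\|\vH\|/\Delta_\vQ$ rather than $\|\vV\|$. Care must also be taken that the perturbed Lindbladian $\CL'_a$, built from the original jumps $\vA^a$ and the perturbed Hamiltonian $\vH'$, behaves after restriction to $\vQ'$ and conjugation by $\vU$ as the thermal Lindbladian of the effective Hamiltonian with effective jumps $\vU^{\dag}\vA^a\vU$, to within the error budget; verifying this requires tracking how the operator Fourier transform $\hat{\vA}^a(\omega)$ restricted to the low-energy block responds to both the subspace rotation and the internal level reorganization, which is where the $\Lambda_0/\sqrt{\Delta_\vQ\tau}$ and $\Lambda_0/\sqrt{\Delta_\nu\tau}$ error terms both enter.
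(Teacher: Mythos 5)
Your overall strategy is the paper's: restrict both gradients to the low-energy block using the excitation gap, reduce the comparison to an effective perturbation of size $\CO(\norm{\vV}\norm{\vH}/\Delta_{\vQ})$ measured against the intra-subspace Bohr-frequency gap $\Delta_\nu$, invoke the monotonicity theorem, and track projector mismatches. The paper does this without a Schrieffer--Wolff rotation: it compares the flattened Hamiltonians $\vH_{\vQ}=\vH\vQ+E_{\vQ}(\vI-\vQ)$ and $\vH'_{\vQ'}=\vH'\vQ'+E'_{\vQ'}(\vI-\vQ')$ with a \emph{common} set of restricted jumps $\vQ\vA^a\vQ$, and applies the general-$\vO$ version (Theorem~\ref{thm:mono_gradient_full}) with $\vO=\vQ(\vI-\vP)$, $\vO'=\vQ'(\vI-\vP')$. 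Your rotation-to-a-common-subspace variant is workable, but two steps as written do not go through. First, you cannot apply Lemma~\ref{lem:subspace-gradient-expr} to the perturbed side: its hypothesis and its $1/\sqrt{\Delta_\nu\tau}$ error refer to the Bohr-frequency gap of the Hamiltonian \emph{restricted to the subspace}, and for $\vH'|_{\vQ'}$ that gap is exactly what the level splitting destroys (it can be arbitrarily small), so the fully decohered form the lemma produces is unavailable there. The paper's proof only inserts the projector $\vQ'$ via a secular truncation at scale $\mu=\Delta_{\vQ'}/2$ (leakage control) and keeps the expression in Lindbladian form, delegating every intra-subspace Bohr-frequency issue to the monotonicity theorem; your quoted error terms are consistent with that weaker restriction step, but the lemma you cite does more and would fail.

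Second, your accounting of the $r$-dependent error is off. Substituting the effective perturbation $\norm{\vW}=\CO(\norm{\vV}\norm{\vH}/\Delta_{\vQ})$ into the simplified bound of Theorem~\ref{thm:mono_gradient}, the term $\bigl(r/\Delta_\nu\bigr)\norm{\vW}$ is $r\norm{\vV}\norm{\vH}/(\Delta_{\vQ}\Delta_\nu)$, which exceeds the claimed $r\norm{\vV}/\Delta_\nu$ by the factor $\norm{\vH}/\Delta_{\vQ}$ (and under the stated hypothesis is only guaranteed to be $\CO(r)$). To recover the advertised bound one must not pass the projector comparison through Davis--Kahan with $\vW$: the paper uses the general form with the error $r\norm{\vO-\vO'}$ and bounds $\norm{\vQ'-\vQ}+\norm{\vP-\vP'}=\CO(\norm{\vV}/\Delta_{\vQ}+\norm{\vV}/\Delta_\nu)$ directly in terms of the \emph{original} $\vV$ (Lemma~\ref{lem:PP'}); in your language the fix is to note that the ground space of the effective Hamiltonian is exactly $\vU^\dag\vP'\vU$, so the only projector cost is the conjugation itself, $\CO(\norm{\vV}/\Delta_{\vQ})$, plus the direct $\norm{\vP-\vP'}=\CO(\norm{\vV}/\Delta_\nu)$ bound. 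With those two repairs (restriction-only reduction on the perturbed side, and direct projector bookkeeping), your argument matches the paper's; as written it either invokes an inapplicable lemma or proves a strictly weaker error bound.
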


\begin{corollary}[Monotonicity of gradient on a subspace under off-block-diagonal perturbation]\label{cor:off_diag_mono}
In the setting of Corollary~\ref{cor:monotonicity_subspace}, instead assume $\frac{\norm{\vV}}{\Delta_\nu},\frac{\norm{\vV}}{\Delta_{\vQ}} \le (const.)$, and that the perturbation is off-block-diagonal, i.e.,  $\vQ\vV\vQ=(\vI-\vQ)\vV(\vI-\vQ) = 0$. Then,
\begin{align}
    - \vQ \CL^{\dag}[\vH]\vQ \succeq r \vQ(\vI-\vP) - \epsilon \vI
    \quad \text{implies} \quad - \vQ' \CL'^{\dag}[\vH']\vQ' \succeq r \vQ'(\vI-\vP') - \epsilon' \vI
\end{align}
where
\begin{align}
    \epsilon' &\le \epsilon + \labs{S}\cdot \CO\bigg(\frac{1}{\tau} +\frac{\norm{\vH}^{3/4}}{\tau^{1/4}}+\frac{\Lambda_0^{2/3}}{\tau^{1/3}}+\frac{\Lambda_0}{\sqrt{\Delta_{\nu}\tau}} +\frac{\Lambda_0}{\sqrt{\Delta_{\vQ}\tau}} + \frac{\e^{-\beta\Delta_{\nu}/4}}{\beta} + \frac{\e^{-\beta\Delta_{\vQ}/4}}{\beta}   \nonumber\\
    &\qquad\qquad\qquad\qquad +\frac{\norm{\vV}^2}{\Delta_{\vQ}} +  \norm{\vH_{\vQ}} \cdot \Big( \frac{\norm{\vH_{\vQ}}\norm{\vV}}{\Delta_{\vQ}\Delta_\nu}+\frac{\norm{\vV}^2}{\Delta_{\vQ}\Delta_\nu}\Big) 
    + r \Big(\frac{\norm{\vV}}{\Delta_{\vQ}}+\frac{\norm{\vV}^2}{\Delta_{\vQ}\Delta_{\nu}}\Big)\bigg).
\label{eq:mono-subspace-offdiag-err-cor}
\end{align}
\end{corollary}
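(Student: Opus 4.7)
The plan is to upgrade the proof of Corollary~\ref{cor:monotonicity_subspace} by exploiting the off-block-diagonal structure of $\vV$ via a Schrieffer--Wolff rotation. Since first-order degenerate perturbation theory inside $\vQ$ vanishes when $\vV$ has no block-diagonal part, the leading spectral perturbation within $\vQ$ is of order $\norm{\vV}^2/\Delta_{\vQ}$ rather than $\norm{\vV}$, which is the origin of the improved error terms in Eq.~\eqref{eq:mono-subspace-offdiag-err-cor}.

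First, I would construct an anti-Hermitian, purely off-block-diagonal Schrieffer--Wolff generator $\vS$ with $\norm{\vS} = \CO(\norm{\vV}/\Delta_{\vQ})$, defined by solving $[\vH,\vS] + \vV = 0$ on the off-diagonal blocks (straightforward since $\vH$ has gap $\Delta_{\vQ}$ between $\vQ$ and $\vI-\vQ$). The unitary $\vU := \e^{\vS}$ block-diagonalizes the perturbed Hamiltonian through second order,
\begin{equation}
\vU^\dag(\vH+\vV)\vU = \vH + \tfrac{1}{2}[\vS,\vV] + \CO\L(\norm{\vV}^3/\Delta_{\vQ}^2\R),
\end{equation}
with $\vQ$-block correction $\tfrac{1}{2}\vQ[\vS,\vV]\vQ$ of norm $\CO(\norm{\vV}^2/\Delta_{\vQ})$. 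Moreover, $\vQ' = \vU \vQ \vU^\dag$ and $\vP' = \vU \vP \vU^\dag$ up to $\CO(\norm{\vV}^2/\Delta_{\vQ}^2)$.

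Second, I would transport the hypothesis and conclusion through $\vU$. The conjugated Lindbladian $\tilde\CL$ built from $\tilde\vH := \vU^\dag \vH' \vU$ and rotated jumps $\tilde\vA^a := \vU^\dag \vA^a \vU$ satisfies $\vU^\dag \CL'^{\dag}[\vH']\vU = \tilde\CL^{\dag}[\tilde\vH]$ exactly. I would then compare $\tilde\CL$ with the thermal Lindbladian built from $\tilde\vH$ and the \emph{original} jumps $\vA^a$, using $\norm{\tilde\vA^a - \vA^a} = \CO(\norm{\vV}/\Delta_{\vQ})$ and the Lindbladian-norm bounds (\cref{prop:norm-thermal-lindblad}); this contributes the $r\norm{\vV}/\Delta_{\vQ}$ and $\norm{\vV}^2/\Delta_{\vQ}$ terms in the error. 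With this conversion in hand, I would apply Theorem~\ref{thm:mono_gradient} (monotonicity under level splitting) to $\vH$ perturbed by the block-diagonal correction $\tfrac{1}{2}[\vS,\vV]$, whose norm is only $\CO(\norm{\vV}^2/\Delta_{\vQ})$. Because the relevant perturbation inside $\vQ$ is now second order, the Bohr-frequency gap $\Delta_\nu$ is preserved and the linear-in-$\norm{\vV}$ contribution from Corollary~\ref{cor:monotonicity_subspace} is upgraded to the quadratic improvement $\norm{\vV}^2/\Delta_{\vQ}$. The cross terms $\norm{\vH_{\vQ}}(\norm{\vH_{\vQ}}\norm{\vV}+\norm{\vV}^2)/(\Delta_{\vQ}\Delta_\nu)$ arise from tracking how conjugation by $\vU$ rotates matrix elements of $\vH|_{\vQ}$ itself into the off-block-diagonal sector by an amount $\lesssim \norm{\vH_{\vQ}}\norm{\vS}$, which then couples to the Bohr-frequency structure of $\vH|_{\vQ}$ at the cost of a factor $1/\Delta_\nu$ through the secular approximation (\cref{cor:S-A}).

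The hardest step is the comparison in the second paragraph: verifying that rotating the jumps (and hence the operator Fourier transforms $\hat{\vA}^a(\omega)$ and the Lamb-shift integrals) does not spoil the frequency-resolution structure that underlies the energy-gradient analysis. When $\vH$ is replaced by $\tilde\vH$, Bohr frequencies inside $\vQ$ are shifted by $\CO(\norm{\vV}^2/\Delta_{\vQ})$ while frequencies connecting $\vQ$ to $\vI-\vQ$ are shifted by $\CO(\norm{\vS}\norm{\vH}) = \CO(\norm{\vV}\norm{\vH}/\Delta_{\vQ})$; the latter must be absorbed without reintroducing a linear-in-$\norm{\vV}$ error. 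This requires a careful application of the OFT smoothness and secular-approximation estimates from \cref{sec:OFT} to bound $\lnorm{\tilde\CL^\dag[\tilde\vH] - \CL'^{\,\dag}[\vH']}$ by exactly the cross-term combination appearing in Eq.~\eqref{eq:mono-subspace-offdiag-err-cor}. Once this operator-norm estimate is established, the remaining bookkeeping is a direct substitution into the proof of Corollary~\ref{cor:monotonicity_subspace}.
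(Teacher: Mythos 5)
Your core idea---use a Schrieffer--Wolff rotation to turn the off-block-diagonal $\vV$ into an effectively second-order block-diagonal perturbation, then rerun the subspace-monotonicity machinery---is essentially the paper's route. The paper's \cref{lem:off_diag_perturbation} constructs exactly your generator (an anti-Hermitian $\vB$ solving $\vV=-[\vB,\vD]$ with $\norm{\vB}=\CO(\norm{\vV}/\Delta_{\vQ})$ and eigenvalue shifts $\CO(\norm{\vV}^2/\Delta_{\vQ})$), and its proof of the corollary then simply feeds the sharper bounds on $\delta_\lambda$, $\norm{\vH_{\vQ}-\vH'_{\vQ'}}$, and $\norm{\vP-\vP'}$ back into the proof of \cref{cor:monotonicity_subspace}, which calls the expanded \cref{thm:mono_gradient_full} with $\delta_\lambda$ and $\theta_{\max}\le\norm{\vH_{\vQ}}$ treated as parameters independent of $\norm{\vV}$. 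Where you differ is packaging: you conjugate the Lindbladian by $\e^{\vS}$ (the covariance identity you use is indeed exact) and then compare rotated jumps against the original ones. That step is legitimate but redundant---an equivalent jump-replacement comparison ($\vR'^a$ versus $\vR^a$) is already performed inside the proof of \cref{cor:monotonicity_subspace}, and the paper never needs to act on the Lindbladian with the rotation at all; the rotation serves purely as a perturbation-theory input. Your route buys conceptual transparency (the second-order nature of the effective perturbation is manifest), at the price of re-deriving Lindbladian-comparison estimates the paper reuses off the shelf.

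Two quantitative claims in your sketch would need repair in execution. First, $\vP'$ differs from $\vU\vP\vU^\dag$ not by $\CO(\norm{\vV}^2/\Delta_{\vQ}^2)$ but by $\CO(\norm{\vV}^2/(\Delta_{\vQ}\Delta_\nu))$: the residual block-diagonal correction of size $\norm{\vV}^2/\Delta_{\vQ}$ rotates $\vP$ \emph{within} $\vQ$ against the internal gap, which is lower bounded only by $\Delta_\nu$ and in the intended application is far smaller than $\Delta_{\vQ}$; this is precisely the origin of the $r\norm{\vV}^2/(\Delta_{\vQ}\Delta_\nu)$ term in Eq.~\eqref{eq:mono-subspace-offdiag-err-cor}. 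Second, your attribution of error terms to the jump comparison is off: rotating the jumps by an angle $\CO(\norm{\vV}/\Delta_{\vQ})$ perturbs the gradient operator by $\CO(\norm{\vH_{\vQ}}\norm{\vV}/\Delta_{\vQ})$ (proportional to the Hamiltonian scale, not to $r$ and not quadratic in $\norm{\vV}$); the $r$-proportional terms instead come from replacing $\vQ(\vI-\vP)$ by $\vQ'(\vI-\vP')$ on the right-hand side, and the $\norm{\vV}^2/\Delta_{\vQ}$ term is the second-order eigenvalue shift $\delta_\lambda$. Relatedly, the plain \cref{thm:mono_gradient} with its $(1+(\Lambda_0+r)/\Delta_\nu)\norm{\vV}$-type error does not expose $\delta_\lambda$ and $\theta_{\max}$ separately, so to realize the $\norm{\vH_{\vQ}}$-weighted cross terms you must invoke \cref{thm:mono_gradient_full}, applied after the excitation-gap reduction to the modified subspace Hamiltonians $\vH_{\vQ}$, $\vH'_{\vQ'}$ (which is where the $\Delta_{\vQ}$-dependent terms $\Lambda_0/\sqrt{\Delta_{\vQ}\tau}$ and $\e^{-\beta\Delta_{\vQ}/4}/\beta$ enter). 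None of this breaks your architecture, but the bookkeeping as written would not reproduce Eq.~\eqref{eq:mono-subspace-offdiag-err-cor} without these corrections.
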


\clearpage
\section{Energy landscape of an Ising chain}
\label{sec:example-Ising}

In this appendix, we take a brief aside to characterize the energy landscape of the one-dimensional ferromagnetic Ising chain under thermal perturbations.
This provides a basic example on how the definition of local minima under thermal perturbations is related to the physical picture.
We will see that this system has many suboptimal local minima in the absence of an external field with a lifetime polynomial in the system size.
Once an external field is added, however, the system essentially has no suboptimal local minima and can quickly cool to the ground state where all spins are aligned.
This observation corresponds to the following physical phenomena: When there is no external magnetic field, a ferromagnetic system will often be stuck in a configuration with many domain walls, and an externally applied magnetic field can quickly magnetize the system.

The Hamiltonian for the ferromagnetic Ising chain on a periodic boundary condition is
\begin{equation}
\vH = -\sum_{j=1}^n \vZ_j \vZ_{j+1} - h\sum_{j=1}^n \vZ_j,
\end{equation}
where we identify $\vZ_{n+1}\equiv \vZ_1$.
Intuitively, this system energetically favors configurations where adjacent spins are aligned.
When $h=0$, we have two degenerate ground states, $\ket{00\cdots0}$ and $\ket{11\cdots1}$, which are the global minima.
This degeneracy is broken when $h\neq 0$, and these two states split by energy $2nh$.
The system also has many excited states with \emph{domain walls}, i.e., locations where adjacent spins are anti-aligned such as $\ket{01}$ and $\ket{10}$.
In what follows, we study the energy landscape of the above system under thermal perturbations with jump operator $\{\vA^j = \vX_j\}_{j=1}^n$, setting $\tau=\infty$ for simplicity. 
We analyze three cases.

\paragraph{Case 1: no external field ($h=0$).}
In this case, we will see that any bit string state with domain walls sufficiently far from each other, e.g. $\ket{\cdots0001111000\cdots}$ is a suboptimal local minimum.
Indeed, there is no local operation to strictly decrease the energy of such states; the jump operators $\{\vX_j\}$ can only displace the domain walls by one site, which does not change the energy.

We can see this more formally by computing the energy gradient operator.
Since $\vH$ is a commuting Hamiltonian, we may apply \cref{lem:commuting-decomp} and study the gradient induced by a single jump operator $\vX_j$ by restricting the Hamiltonian to its neighborhood, i.e.,
\begin{equation}
\vH_{\ni j} = -\vZ_{j-1} \vZ_{j} - \vZ_j \vZ_{j+1}.
\end{equation}
Observe $\vH_{\ni j}$ has three degenerate eigenspaces $\vP^{E}_j$ with energy $E$ as follows:
\begin{gather}
\vP^{-2}_j = \sum_{\sv\in \{000,111\}}\ketbrat{\sv}_{j-1,j,j+1}, \qquad \qquad 
\vP^{2}_j = \sum_{\sv\in \{010,101\}} \ketbrat{\sv}_{j-1,j,j+1}, \nonumber \\
\text{and}\qquad
\vP^0_j = \sum_{\sv\in \{001,100,011,110\}}\ketbrat{\sv}_{j-1,j,j+1}.
\end{gather}
Then the negative Bohr-frequencies and the associated jumps are
\begin{align}
    \vA^{j}_{\nu_1} &= \vP_j^{-2} \vX_{j} \vP_j^2 = (\ketbra{000}{010}+\ketbra{111}{101})_{j-1,j,j+1}, &\nu_1&=-4, \nonumber\\
    \vA^{j}_{\nu_2} &= \vP_j^0 \vX_{j} \vP_j^2 + \vP_j^{-2} \vX_{j} \vP_j^0 = 0, &\nu_2 &= -2.
\end{align}
Hence, the energy gradient operator associated with jump $\vX_j$ is
\begin{align}
\CD_j^{\dag\beta,\infty,\vH}[\vH] &= \CD_j^{\dag\beta,\infty,\vH_{\ni j}}[\vH_{\ni j}] = {\textstyle \sum_{\nu\in B(\vH_{\ni j})} \nu \gamma_\beta(\nu) \vA_{\nu}^{j\dag} \vA_{\nu}^j}
\nonumber \\
&= \theta_0\cdot (\ketbrat{010}+\ketbrat{101})_{j-1,j,j+1} + \CO(e^{-4\beta})
\end{align}
where $\theta_0 = -4\gamma_\beta(-4)=-\Omega(1)$.
As we can see, the energy gradient is essentially 0 when the domain walls are more than distance 1 apart, and only becomes significant when two domain walls are next to each other, as in $\ket{\cdots010\cdots}$ or $\ket{\cdots101\cdots}$.
This implies the presence of exponentially many suboptimal local minima; for example, choose whether or not to have a domain wall every 2 sites.

Despite the presence of many suboptimal local minima, we now argue that they have a lifetime polynomial in the system size $n$ when the system evolves under thermal perturbations.
We may understand the dynamics of the system as a random walk of domain walls, and two domain walls annihilate each other when they meet.
Since two domain walls at distance $\ell$ apart moving under diffusive dynamics take $\CO(\ell^2)$ time to meet, a suboptimal local minimum with $k$ domain walls decays to a lower energy state after approximately $\CO(n^2/k^2)$ time.

\paragraph{Case 2: weak external field ($0<h<2$).}
In this case, the ground state of $\vH$ is uniquely $\ket{0^n}$, as all spins are slightly favored to be in the $\ket0$ state instead of the $\ket1$ state.
When the domain walls are far apart, e.g. $\ket{\cdots0001111000\cdots}$, the applied external field causes an attraction across the domain of $1$'s, which energetically favors the domain walls to move closer together.
The presence of the field $h$ removes all the suboptimal local minima that were in the previous case.
The state $\ket{1^n}$, which was a ground state in the previous case, becomes now the only suboptimal local minimum.

We now more formally characterize the energy landscape of $\vH$ using the energy gradient operator.
Again applying \cref{lem:commuting-decomp}, we may consider the gradient induced by a single jump operator by focusing on its neighborhood.
The relevant neighborhood Hamiltonian is
\begin{equation}
\vH_{\ni j} = -\vZ_{j-1} \vZ_{j} - \vZ_j \vZ_{j+1} - h\vZ_j.
\end{equation}
Turning the crank, we see that the negative Bohr-frequencies and the associated jumps are:
\begin{align}
    \vA^{j}_{\nu_1} &= \ketbra{000}{010}_{j-1,j,j+1}, &\nu_1&=-4-2h, \nonumber\\
    \vA^{j}_{\nu_2} &= \ketbra{111}{101}_{j-1,j,j+1}, &\nu_2&=-4+2h, \nonumber\\
    \vA^{j}_{\nu_3} &= (\ketbra{001}{011} + \ketbra{100}{110})_{j-1,j,j+1}, &\nu_3 &= -2h.
\label{eq:Ising-jumps}
\end{align}
Then the energy gradient operator associated with jump $\vX_j$ is
\begin{align}
&\CD_j^{\dag\beta,\infty,\vH}[\vH] = {\textstyle \sum_{\nu\in B(\vH_{\ni j})} \nu \gamma_\beta(\nu) \vA_{\nu}^{j\dag} \vA_{\nu}^j}
\nonumber \\
&\qquad = (\theta_1 \ketbrat{010}+\theta_2 \ketbrat{101} + \theta_3\ketbrat{011} + \theta_3\ketbrat{110})_{j-1,j,j+1} + \CO(e^{-2\beta h}),
\end{align}
where $\theta_j = \nu_j \gamma_\beta(\nu_j)$.
As we can see, any configuration with a domain wall now has a significant gradient from at least one of the jumps.
The only configurations without a significant energy gradient are $\ket{0^n}$, the ground state, and $\ket{1^n}$, a metastable local minimum.

\paragraph{Case 3: strong external field ($h>2$).}
In this case, the external field is sufficiently strong that the state $\ket{1^n}$ is no longer a local minimum, and $\vH$ has no suboptimal local minima.
To see this, we note that $h>2$ implies that $\nu_2 > 0$ in Eq.~\eqref{eq:Ising-jumps}, which means the energetically favored jump operator is actually $\vA^{j}_{-\nu_2} = \vA^{j\dag}_{\nu_2}$.
This implies the energy gradient operator induced by the jump $\vX_j$ in this case is
\begin{equation}
\CD_j^{\dag\beta,\infty,\vH}[\vH] = (\theta_1 \ketbrat{010}+\theta'_2 \ketbrat{111} + \theta_3\ketbrat{011} + \theta_3\ketbrat{110})_{j-1,j,j+1} + \CO(e^{-2\beta (h-2)}),
\end{equation}
where $\theta'_2 = -\nu_2\gamma_\beta(-\nu_2)$.
This gives the state $\ket{1^n}$ a significant energy gradient, and thus the ground state $\ket{0^n}$ is the only local minimum of $\vH$.

\clearpage

\section{All local minima are global in \textsf{BQP}-hard Hamiltonians (Proof of Theorem~\ref{thm:no-suboptimal-local-minima-informal2})}
\label{sec:universal-quantum-computation}

A main result of our work is that the task of finding a local minimum for $\vH_{\circuit}$ under thermal perturbation is universal for quantum computation and hence classically hard. 
As we have seen in the main text and Appendix~\ref{sec:main-thermal}, this main result follows from Theorem~\ref{thm:no-suboptimal-local-minima-informal2}, which we prove in this appendix.

We start by defining $\vH_\circuit$ in detail.
Given a 2D $n$-qubit circuit $\vU_{\circuit}=\vU_T\cdots\vU_2 \vU_1$ with $T = 2 t_0 + L = \mathrm{poly}(n)$ gates as constructed in Fig. 1 of Ref.~\cite{OliveiraTerhal}, where the first and last $t_0$ gates are identity gates and each gate of the 2D circuit $\vU_{\circuit}$ is geometrically adjacent to the subsequent gate. We consider a geometrically local Hamiltonian on a 2D lattice with $n + T$ qubits defined as follows.

\begin{definition}[Modified circuit-to-Hamiltonian construction]
\label{def:circuit-H}
Consider a 2D circuit $$\vU_{\circuit}=\vU_T\cdots\vU_2 \vU_1$$ on $n$ qubits with $T = 2t_0 + L$ gates, where the first and last $t_0 = c L^2$ gates are identity gates with $c = \mathcal{O}(1)$, and each consecutive gates are geometrically adjacent.
We define a geometrically-local Hamiltonian $\vH_{\circuit}$ on a 2D lattice with $n+T$ qubits as follows,
\begin{equation}
    \vH_{\circuit} :=  \vH_{\clock} +  \vH_{\inp} + \vH_{\prop} \quad \text{acting on}\quad (\BC^2)^{\otimes n}\otimes (\BC^2)^{\otimes T}, \label{eq:H_clock_in_prop}
\end{equation}
where each individual term is given by
\begin{align*}
    \vH_{\clock} &:= J_{\clock}\sum_{t=1}^{T-1} f_t  \vI \otimes \ketbrat{01}_{t,t+1}, \\
    \vH_{\inp} &:= J_{\inp}\sum_{j=1}^n g_j \ketbrat{1}_j  \otimes \ketbrat{10}_{t_{j}-1,t_j}, \\
    \vH_{\prop} &:= \frac12J_{\prop}  \sum_{t=1}^T \vH_{\prop}(t),\\
    \vH_{\prop}(1) &:=  \vI - h_1(\vU_1 \otimes \ketbra{10}{00}_{1,2}  + \vU_1^\dag \otimes\ketbra{00}{10}_{1,2}), \\
    \vH_{\prop}(t) &:=  \vI - h_t (\vU_t \otimes \ketbra{110}{100}_{t-1,t,t+1} + \vU_t^\dag \otimes \ketbra{100}{110}_{t-1,t,t+1}) \qquad \text{for each}\quad 1<t<T,\\
    \vH_{\prop}(T) &:= \vI - h_T (\vU_T \otimes \ketbra{11}{10}_{T-1,T} + \vU_T^\dag \otimes \ketbra{10}{11}_{T-1,T}).
\end{align*}
The $T$ qubits correspond to the $T$ geometrically-local gates and are placed next to each gate to ensure $\vH_{\circuit}$ is geometrically local.
The couplings are chosen as
\begin{equation}
    J_{\clock}=1, \quad f_t = (T-t)/T, \quad g_j = 1/\xi_{t_j-1}, \quad h_{t}=\sqrt{t(T-t+1)}.
\end{equation}
We will set the other parameters $J_{\inp}, J_{\prop}$ later. The time $t_j$ is the first time qubit $j$ is acted on.
\end{definition}

We will show later in Appendix~\ref{sec:Hcirc-low-energy} that $\vH_{\circuit}$ has a unique ground state given by
\begin{equation}
    \ket{\eta_{\bm0}} = \sum_{t=0}^T \sqrt{\xi_t}\big(\vU_t \cdots \vU_1 \ket{0^n}\big) \otimes \ket{0^t 1^{T-t}} 
    \qquad
    \text{where} \quad
    \xi_t := \frac{1}{2^T}  \binom{T}{t}.
\end{equation}
Note this state encodes the computational history of the circuit $\vU_{\circuit}$.
By choosing $t_0=L^2$, we ensure that each time in the interesting part of the computational history (i.e., the intermediate $L$ gates) can be observed with $\Omega(1/T)$ probability as we will show later in \cref{prop:idling}.
This also implies $g_j = \CO(T)$.

We now state a detailed version of Theorem~\ref{thm:no-suboptimal-local-minima-informal2} based on the definition of $\vH_{\circuit}$ in the following.

\begin{theorem}[All local minima are global in $\vH_{\circuit}$] \label{thm:Hcircuit}
Let $\vP_G$ be the ground-space projector for the Hamiltonian $\vH_{\circuit}$ in Eq.~\eqref{eq:H_clock_in_prop}.
For any failure probability $0<\delta<1$,
there is a parameter choice $J_{\inp}, J_{\prop} = \poly(n, T, \delta^{-1})$ and a choice of $m$ two-qubit jump operators 
\begin{equation}
\label{eq:jump-ops-for-HC}
    S_0=\{\vA^a\}_{a=1}^m := \{\vI\otimes \vX_t, \vI\otimes \vZ_t\}_{t=1}^T \cup \{\vX_j \otimes \ketbrat{0}_{t_j} \}_{j=1}^n
\end{equation}
with $m = 2T+n$ satisfying the following:

For a sufficiently small $\epsilon = 1/ \poly(n, T,\delta^{-1})$, any $\epsilon$-approximate local minimum $\vrho$ of $\vH_{\circuit}$ under thermal perturbations with sufficiently large $\beta = \poly(n, T, \delta^{-1})$, $\tau = \poly(n, T, \delta^{-1})$, and system-bath interactions generated by $S_0$ is an exact global minimum with probability $\Tr(\vP_G(\vH_{\circuit}) \vrho) \geq 1 - \delta$.
\end{theorem}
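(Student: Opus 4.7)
The plan is to invoke \cref{lem:suff-cond-no-suboptimal}: it suffices to exhibit a weight vector $\hat\balpha\in\BR^m_{\geq 0}$ with $\|\hat\balpha\|_1=1$ such that
\begin{equation*}
-\sum_{a\in S_0} \hat\alpha_a \CL_a^{\dag\beta,\tau,\vH_{\circuit}}[\vH_{\circuit}] \;\succeq\; r(\vI-\vP_G)-\epsilon'\vI,
\end{equation*}
with $r=\Omega(1/\poly(n,T,\delta^{-1}))$ and $\epsilon'\ll r\delta/2$, after which the theorem follows immediately; we take $\hat\balpha$ uniform over $S_0$. Since the spectral structure of $\vH_{\circuit}$ is too intricate to attack directly, the proof proceeds perturbatively along the telescoping sequence
\begin{equation*}
\vH_1 := \tfrac{J_{\prop}}{2}\vH_{\prop},\quad \vH_2 := \vH_1 + J_{\inp}\vH_{\inp},\quad \vH_3 := \vH_2 + \vH_{\clock} = \vH_{\circuit},
\end{equation*}
with the hierarchy $1\ll J_{\inp}\ll J_{\prop}$ fixed at the end. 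The ground spaces nest as $\vP_1\supset\vP_2\supset\vP_3=\vP_G$: the $2^n$-dimensional $\vP_1=\ker\vH_{\prop}$ is spanned by history states $\ket{\eta_{\bm x}}$ over all inputs $\bm x\in\{0,1\}^n$; $\vP_2$ restricts further to $\bm x=\bm 0$; and $\vP_3=\BC\ket{\eta_{\bm 0}}$.

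For the base case, I would conjugate by $\vW := \sum_t \vU_t\cdots\vU_1 \otimes \ketbrat{0^t 1^{T-t}}$ (extended to an arbitrary unitary on invalid clock strings). This removes the computation from $\vH_{\prop}$ and reduces it, on the valid-clock subspace, to a one-dimensional nearest-neighbour hopping Hamiltonian on a chain of $T{+}1$ sites. The specific weights $h_t=\sqrt{t(T-t+1)}$ make this a Jaynes--Cummings-type ladder whose eigenvalues are integer-spaced, giving $\vH_1$ a Bohr-frequency gap of $\Omega(J_{\prop})$ inside the valid-clock block; invalid clock configurations acquire strictly positive energies from the $\vI$ parts of the $\vH_{\prop}(t)$. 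In this diagonalised frame the clock jumps $\vI\otimes\vX_t,\vI\otimes\vZ_t$ act as raising/lowering-type operators along the ladder and as generators of transitions out of the invalid-clock sectors. Applying the subspace gradient formula \cref{lem:subspace-gradient-expr} block by block should yield $-\CL^\dag[\vH_1]\succeq r_1(\vI-\vP_1)-\epsilon_1\vI$ with $r_1=\Omega(J_{\prop}/\poly(T))$ for any $\beta,\tau=\poly(T,\delta^{-1})$.

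To propagate to $\vH_2$, I would restrict via \cref{cor:PLP_and_L} to the flat direction of $\vH_1$: within $\vP_1$, the operator $\vP_1\vH_{\inp}\vP_1$---after using $g_j=1/\xi_{t_j-1}$ to cancel the single-time binomial overlap---reduces in the $\ket{\eta_{\bm x}}$ basis to the Ising-type diagonal $\sum_j J_{\inp}\ketbrat{x_j=1}$, with unique ground state $\ket{\eta_{\bm 0}}$ and gap $\Omega(J_{\inp})$. The input jumps $\vX_j\otimes\ketbrat{0}_{t_j}$ map $\ket{\eta_{\bm x}}$ to $\ket{\eta_{\bm x\oplus e_j}}$ up to clock-weighted overlaps, and mirroring the analysis of \cref{sec:example-Ising} they produce a negative gradient of size $\Omega(J_{\inp}/\poly(n,T))$ on $\vP_1(\vI-\vP_2)$. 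Applying \cref{cor:off_diag_mono} (after splitting $\vH_{\inp}$ into block-diagonal and off-block-diagonal pieces with respect to $\vP_1$) lifts this to a global negative gradient for $\vH_2$, provided $J_{\prop}\gg J_{\inp} n T$ so that the perturbation is much smaller than the Bohr-frequency gap of $\vH_1$. The step $\vH_2\to\vH_3$ applies \cref{cor:monotonicity_subspace} with $\vV=\vH_{\clock}$ of norm $\CO(T)$, requiring $J_{\inp}\gg T$. Choosing $\beta,\tau,J_{\inp},J_{\prop}=\poly(n,T,\delta^{-1})$ with enough margin to absorb the error terms in \eqref{eq:mono_err_bound} and \eqref{eq:mono-subspace-err} delivers the desired gradient condition.

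The principal obstacle is the base case: while the ladder structure makes the gradient transparent on valid-clock excited states, verifying a substantial negative gradient on \emph{every} excited eigenstate of $\vH_1$---in particular those living on high-energy invalid-clock branches mixed with arbitrary computation-register content---requires careful accounting using the secular approximation and the Bohr-frequency decomposition of \cref{sec:OFT}. A secondary difficulty is purely combinatorial: the cascade of errors across \cref{thm:mono_gradient,cor:monotonicity_subspace,cor:off_diag_mono} forces a tight balance between $J_{\prop},J_{\inp}$ and the energy scales $\beta^{-1},\tau^{-1}$, and ultimately dictates the precise polynomial dependence of all these parameters on $n,T,1/\delta$.
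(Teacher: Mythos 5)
Your high-level architecture mirrors the paper's (establish the negative gradient condition of Lemma~\ref{lem:suff-cond-no-suboptimal} by a three-stage perturbative cascade driven by the monotonicity machinery of Appendix~\ref{sec:monotone_gradient}), but you have reversed the order of the cascade, and the reversal breaks the argument at your base case. You take $\vH_1\propto\vH_{\prop}$ and assert that $\ker\vH_{\prop}=\mathrm{span}\{\ket{\eta_{\xv}}\}$, with invalid clock configurations "acquiring strictly positive energies from the $\vI$ parts of the $\vH_{\prop}(t)$." That is false: every $\vH_{\prop}(t)$ contributes its identity part $+1$ in \emph{every} clock sector, while in an invalid-clock sector the hopping splits into independent domain walls whose amplitudes are drawn from disjoint ranges of $h_t=\sqrt{t(T-t+1)}$, and the sum of the sub-ladders' top eigenvalues can reach or exceed $T$. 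Concretely, for $T=4$ the sector spanned by clock strings $1010,0010,1011,0011$ is closed under $\vH_{\prop}$, the two walls hop with amplitudes $h_1$ and $h_4$, and $h_1+h_4=4=T$, so $\vH_{\prop}$ has additional zero-energy ground states outside the history space; for $T=6$, the sector with clock qubit $3$ frozen to $0$ and qubit $4$ frozen to $1$ has two independent walls hopping with amplitudes $\{h_1,h_2\}$ and $\{h_5,h_6\}$, giving top hopping eigenvalue $\sqrt{h_1^2+h_2^2}+\sqrt{h_5^2+h_6^2}=8>T$, i.e.\ the true ground states of $\vH_{\prop}$ alone lie in an invalid-clock sector at energy $-J_{\prop}<0$, strictly below the history states. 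So your $\vP_1$ is not the ground space of $\vH_1$; the integer-spaced ladder and the $\Omega(J_{\prop})$ Bohr-frequency gap exist only \emph{after} projecting onto the valid-clock subspace, whereas on the full space the invalid sectors carry Jacobi-submatrix spectra with no controlled Bohr-frequency gap --- and that gap is precisely the hypothesis that Theorem~\ref{thm:mono_gradient}, Corollary~\ref{cor:monotonicity_subspace}, Corollary~\ref{cor:off_diag_mono} and Lemma~\ref{lem:subspace-gradient-expr} need.

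The downstream consequence is that no stage of your cascade ever produces a gradient for the states that only $\vH_{\clock}$ penalizes. Your stage-2 decoupled-qubit analysis of $\vP_1\vH_{\inp}\vP_1$ applies only to the history-state block, and the monotonicity results merely transport an existing bound $-\CL^\dag[\vH]\succeq r(\vI-\vP)-\epsilon\vI$ to the perturbed Hamiltonian --- they never create gradients for states inside the reference space $\vP$. Since in your ordering $\vH_{\clock}$ (norm $\CO(T)$, with $J_{\inp}\gg T$) is the last and smallest term, the spurious invalid-clock ground/near-ground states of your $\vH_2$ end up as excited states of $\vH_{\circuit}$, as little as $\Omega(1/T)$ above $\ket{\eta_{\bm0}}$, with no gradient guarantee from any stage, so the negative gradient condition for $\vH_{\circuit}$ does not follow. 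This is exactly why the paper runs the cascade the other way: $\vH_{\rom1}=\vH_{\clock}$ is handled first (it is commuting, its spectrum lies in $\tfrac1T\BZ$, and the non-uniform $f_t$ together with the $\vX_t$ jumps give every invalid-clock excitation a gradient), and then $\vH_{\rom2}=\vH_{\clock}+\vH_{\prop}$ and $\vH_{\rom3}=\vH_{\circuit}$ are treated with $J_{\inp}\ll J_{\prop}\ll 1$, so the $\vL_x$-ladder of $\vH_{\prop}$ and the decoupled-qubit structure of $\vH_{\inp}$ are only ever invoked inside the valid-clock and history-state subspaces, where the required Bohr-gap control actually holds. To keep your ordering you would need a genuinely new argument covering all invalid-clock sectors of $\vH_{\prop}$ (whose ground space and Bohr structure you would first have to determine); your proposal does not contain one.
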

We remind the reader that the thermal Lindbladians that generate the perturbations are defined in Eq.~\eqref{eq:dissipative-Lind}. The transition weight $\gamma_\beta(\omega)$ is chosen to be Glauber dynamics as defined in Eq.~\eqref{eq:glauber-dyn}, with energy cut-off $\Lambda_0=1$ as a convenient choice so that $\|\omega\gamma_\beta(\omega)\|_{\infty}\le 1$. We do not expect our result to change with other reasonable choices of $\gamma_\beta(\omega)$.

\begin{remark}
Our $\vH_{\circuit}$ is similar to previous circuit-Hamiltonian constructions (see e.g., \cite{OliveiraTerhal, AharonovAQCUniversal, KSV02}), but there are some significant differences.
One key change is that $\vH_{\prop}$ is no longer frustration-free, and its couplings $h_t$ are not uniform;
consequently, this revised $\vH_{\prop}$ has better spectral properties that enable us to lower bound its Bohr-frequency gap.
Furthermore, $\vH_{\clock}$ is given non-uniform couplings $f_t$ so that any local excitation has an incentive to move rightwards (e.g. $\ket{0011}\to\ket{0001}$), ensuring $\vH_{\clock}$ has no local minima except its ground states.
These modifications allow us to prove that all excited states of $\vH_\circuit$ have significant negative gradients, so that they will all flow to the ground state under thermal perturbations.
\end{remark}

\subsection{Characterizing low energy states of $\vH_{\circuit}$}
\label{sec:Hcirc-low-energy}

We will start by characterizing the low energy states of the circuit Hamiltonian.
We define the following sequence of Hamiltonians
\begin{align*}
    \vH_{\rom1} &= \vH_{\clock} \\
    \vH_{\rom2} &= \vH_{\clock} + \vH_{\prop}\\
    \vH_{\rom3} &= \vH_{\clock} + \vH_{\prop} + \vH_{\inp} = \vH_{\circuit}
\end{align*} 
with the ground space projectors $\vP_j$ such that
\begin{align}
    \vP_{\rom1} \supset \vP_{\rom2} \supset \vP_{\rom3}.
\end{align}
Equivalently, we have 
$\vP_{\rom1} \vP_{\rom2} = \vP_{\rom2}$ and $\vP_{\rom1} \vP_{\rom3} = \vP_{\rom2} \vP_{\rom3}=\vP_{\rom3}$.
Our approach to calculating the gradient for $\vH_{\rom3}= \vH_{\circuit}$ is \textit{perturbative}: we start with the simple Hamiltonian $\vH_{\rom1}$ and gradually add perturbations (which will split the spectrum, Figure~\ref{fig:levelsplitting}). Remarkably, the gradient is \textit{stable} as long as the perturbation is weak enough. That is, it suffices to analyze the gradient of the simpler, unperturbed Hamiltonians on suitable subspaces.

Now we describe explicitly the ground subspaces $\vP_{\rom1}$, $\vP_{\rom2}$ and $\vP_{\rom3}$. Let  
\begin{equation}
    \ket{C_t} = \ket{1^t 0^{T-t}}\quad \text{for each}\quad t=0,1,\ldots,T
\end{equation}
and 
\begin{equation}
\ket{\eta_{\xv,t}} = \big(\vU_t \cdots \vU_1 \ket{\vect{x}}\big) \otimes \ket{C_t}\quad \text{for each}\quad \vect{x}\in \{0,1\}^n\quad \text{and}\quad 0\le t\le T.
\end{equation}
The set of $\ket{\eta_{\xv,t}}$ forms an orthonormal basis for the ground space of $\vH_{\rom1}=\vH_{\clock}$, with energy 0 and a spectral gap of $J_{\clock}/T$.
The ground space projector is
\begin{equation}
    \vP_{\rom1} = \sum_{\xv\in \{0,1\}^n}\sum_{t=0}^T  \ketbrat{\eta_{\xv,t}} \quad \text{and}\quad \braket{ \eta_{\yv,t'}|\eta_{\xv,t}} = \delta_{\yv\xv}\delta_{tt'}.
\end{equation}

Observe that $[\vH_{\prop}, \vP_{\rom1}] = 0$, so the ground states of $\vH_{\rom2} = \vH_{\clock} +  \vH_{\prop}$ are given as the ground states of 
\begin{align}
\vP_{\rom1} \vH_{\prop} \vP_{\rom1} = \frac{J_{\prop}}{2}\sum_{t=1}^T [
    \vI
    - h_t (\vU_t \otimes \ketbra{C_{t}}{C_{t-1}} + \vU_t^\dag \otimes \ketbra{C_{t-1}}{C_t} )].    
\end{align}
Furthermore, observe the orthogonality relations
\begin{equation}
\label{eq:Hprop-blockdiag}
    \braket{\eta_{\xv,t}| \vH_{\prop}|\eta_{\yv,t'}}= 0
    \qquad \text{ when}\quad \xv\neq \yv \quad \text{for each}\quad t,t'.
\end{equation}
That is, $\vP_{\rom1} \vH_{\prop} \vP_{\rom1}$ is block diagonal with blocks labeled by $\xv$.
Moreover, for any $\vect{x}$, in the basis of $\ket{\eta_{\vect{x},0}}, \ket{\eta_{\vect{x},1}}, \ldots, \ket{\eta_{\vect{x},T}}$, we can explicitly write down the effective $(T+1)\times (T+1)$ Hamiltonian 
\begin{equation}
\vP_{\rom1} \vH_{\prop} \vP_{\rom1} = \frac{J_{\prop}}{2}
\begin{pmatrix}
T & -h_1 &   & & \\
     -h_1 & T & -h_2 & & \\
  	    & -h_2 & T & -h_3&  \\
	    &	& -h_3 & \ddots &\ddots \\
	    &	&  & \ddots & T & -h_T \\
	    &	&	&	& -h_T & T
\end{pmatrix}
= J_{\prop} \Big(\frac{T}{2}\vI - \vL_x\Big),
\end{equation}
where $\vL_x$ is the matrix representation of the spin-$T/2$ angular momentum operator whose spectrum is well known. In particular, the unique ground state is 
\begin{equation}
    \ket{\eta_\xv} := \sum_{t=0}^T \sqrt{\xi_t} \ket{\eta_{\xv,t}}\quad \text{with energy}\quad 0 \quad \text{and spectral gap}\quad J_{\prop}.
\end{equation}
We will call $\ket{\eta_\xv}$ the \emph{history state} with respect to input $\ket{\xv}$.
The ground space projector of $\vH_{\rom2}$ is then given as
\begin{equation}
    \vP_{\rom2} = \sum_{\xv\in \{0,1\}^n} \ketbrat{\eta_{\xv}}.
\end{equation}

Finally, note $\ket{\eta_{\bm{0}}}$ for $\bm{0} = (0,0,\cdots,0)$ is the unique ground state of $\vH_{\circuit}=\vH_{\rom3}$ and so
\begin{equation}
    \vP_{\rom3} = \ketbrat{\eta_{\bm{0}}}.
\end{equation}
This is because $\vH_{\inp}$ is positive semi-definite, and $\ket{\eta_{\bm{0}}}$ is the only state in $\vP_{\rom2}$ with zero eigenvalue with respect to $\vH_{\inp}$.

\begin{figure}[t]
	\centering
	\includegraphics[width=\textwidth]{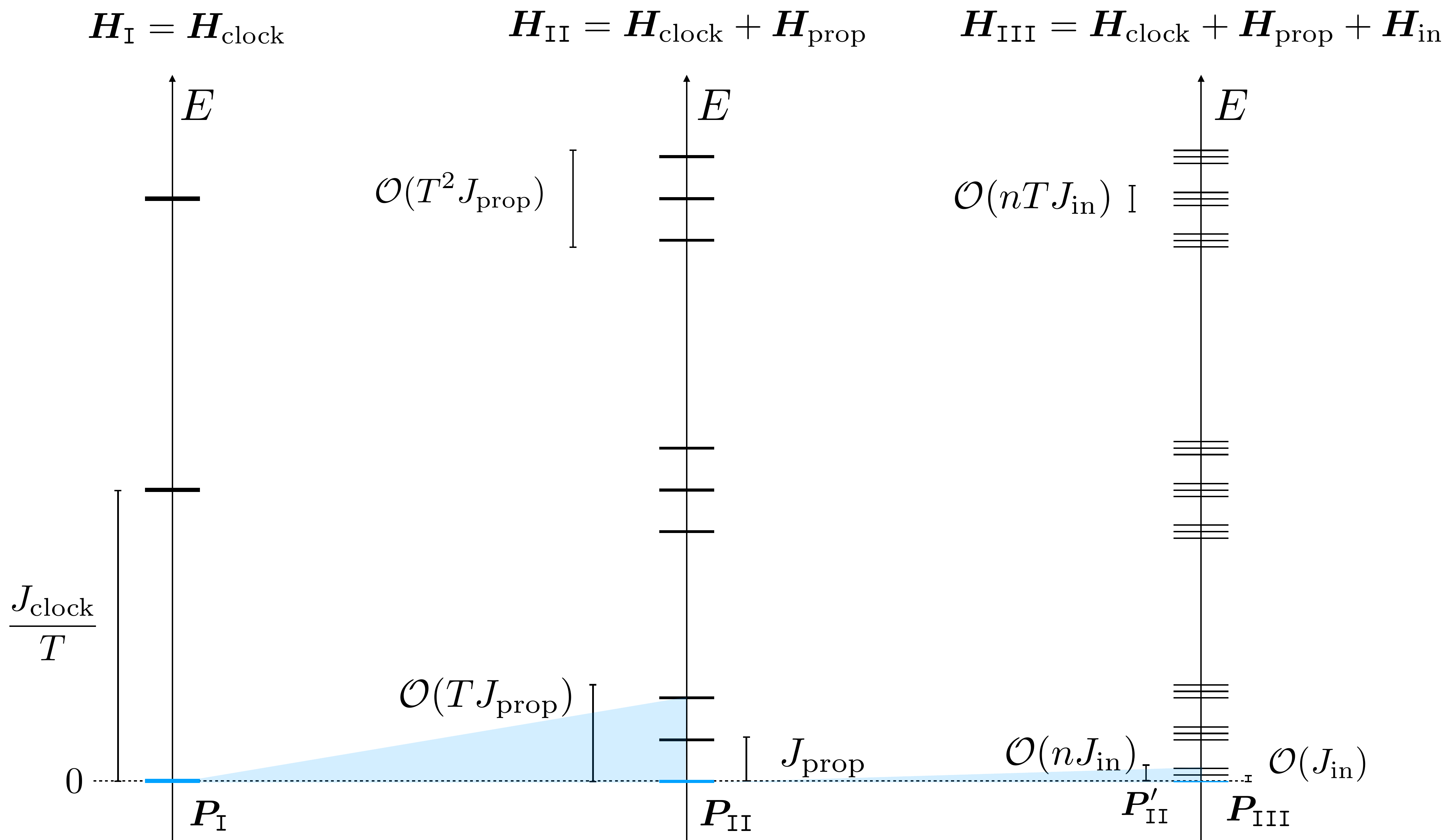}
	\caption{ The degenerate levels of $\vH_{\clock}$ split under perturbations $\vH_{\prop}$ and $\vH_{\inp}$. In particular, the ground state splitting is tracked in blue shades. The careful choice of energy scales ensures that the levels can be identified with the original degenerate blocks.
	}
	\label{fig:levelsplitting}
\end{figure}

\subsection{Proof of Theorem~\ref{thm:Hcircuit}}
To prove Theorem~\ref{thm:Hcircuit}, we show that all excited states in $\vI-\vP_{\rom3}$ have significant gradient relative to $\vH_{\rom3}$.
Our analysis for the gradient will be carried out in three subspaces
\begin{align}
    \vI - \vP_{\rom3} = \undersetbrace{\text{studying }\vH_{\rom1}}{(\vI-\vP_{\rom1})} + \undersetbrace{\text{studying }\vH_{\rom2}\vP_{\rom1}}{\vP_{\rom1}(\vI-\vP_{\rom2})} + \undersetbrace{\text{studying }\vH_{\rom3}\vP_{\rom2}}{\vP_{\rom2}(\vI-\vP_{\rom3})}
\end{align}
Let $\CL_j:=\CL^{\beta,\tau,\vH_j}$ be the thermal Lindbladian with uniform weights as in Eq.~\eqref{eq:thermal-L-unif}, defined with respect to $\vH_j$ and the jump operators in Eq.~\eqref{eq:jump-ops-for-HC} .

\paragraph{Case 1: Gradients for $\vI-\vP_{\rom1}$ from $\vH_{\rom1}$.}

We first show excited states of $\vH_{\rom1}$ have good energy gradient:
\begin{equation}
\label{eq:L1-grad}
    -\CL_{\rom1}^\dag[\vH_{\rom1}] \succeq r_1 (\vI - \vP_{\rom1}) - \epsilon_{1a}\vI.
\end{equation}
Because $\vH_{\rom1}=\vH_{\clock}$ is a commuting Hamiltonian, the global gradient can be lower bounded by checking the local gradient from individual local jumps. We carry out this computation in \cref{sec:gradient-clock}, where we show
$r_1 = \Omega(1/T\ln\beta)$ and $\epsilon_{1a}=\CO(T^{7/4}/\tau^{1/4} + T/\beta + T(1+\beta)\ln \tau / \tau )$ in \cref{lem:Hclock-grad}.

We then apply \cref{thm:mono_gradient} (with $\vH=\vH_{\rom1}$ and $\vH'=\vH_{\rom3}$) to show excited states of $\vH_{\rom1}$ have large gradient with respect to $\vH_{\rom3}$.
In other words,
\begin{equation} \label{eq:L3-outside-P1}
    -\CL_{\rom3}^\dag[\vH_{\rom3}] \succeq r_1(\vI-\vP_{\rom1}) - \epsilon_1\vI.
\end{equation}
To do this, we only need to check that the conditions of \cref{thm:mono_gradient} are satisfied.
Note $\vP_{\rom1}$ also projects onto eigenstates of $\vH_{\rom3}$ since $[\vP_{\rom1}, \vH_{\rom3}]=0.$
Note $\vH_{\rom1}$ has a discrete spectrum with a minimum Bohr-frequency gap of at least $\Delta_\nu \ge 1/T$.
We can choose sufficiently small $J_{\prop}, J_{\inp}$ such that  $\|\vV_{\rom1}\|:=\|\vH_{\prop}+\vH_{\inp}\| \le J_{\prop} T^2 + J_{\inp} n g_{\max} \ll \Delta_\nu(\vH_{\rom1}) = 1/T$, where $g_{max}:=\max_{1\le j\le n}g_j = \CO(T)$. And plugging in $\Lambda_0=1$ and other parameters into the error bound \eqref{eq:mono_err_bound}
\begin{equation}
    \epsilon_1 = \epsilon_{1a} + |S_0|  \CO\L(\frac{1}{\tau} +\frac{\norm{\vH_{\rom1}}^{3/4}}{\tau^{1/4}}+\frac{1}{\tau^{1/3}}  + \frac{1}{\sqrt{\Delta_\nu\tau}} + \frac{\e^{-\beta\Delta_{\nu}/4}}{\beta} + \Big(1+\frac{1+r_1}{\Delta_\nu}\Big) \norm{\vV_{\rom1}} \R).
\end{equation}
Noting that $|S_0| , \|\vH_{\rom1}\|, g_{\max}=\CO(T)$, we can make $\epsilon_1/r_1 \le \delta/6$ by choosing appropriate powers
\begin{equation} \label{eq:tbJ-cond-1}
\tau\ge \tOmega(T^{11}/\delta^4), \quad
\beta \ge \tOmega(T^2/\delta),  \quad J_{\prop} \le \tCO(\delta/T^5),  \quad \text{and} \quad J_{\inp} \le \tCO(\delta/nT^4).
\end{equation}

\paragraph{Case 2: Gradients for $\vP_{\rom1}(\vI-\vP_{\rom2})$ from $\vH_{\rom2}$.}

We next restrict our attention to the action of $\CL_{\rom2}^\dag[\vH_{\rom2}]$ inside the $\vP_{\rom1}$ subspace, which conveniently is also an eigensubspace of both $\vH_{\rom2}$ and $\vH_{\rom3}$ since $[\vP_{\rom1}, \vH_{\rom2}]=[\vP_{\rom1}, \vH_{\rom3}]=0$.
Explicit computation in \cref{sec:gradient-prop} shows that
\begin{equation}
\label{eq:L2-inside-P1}
    -\vP_{\rom1} \CL_{\rom2}^\dag[\vH_{\rom2}] \vP_{\rom1} \succeq r_2 \vP_{\rom1} (\vI-\vP_{\rom2}) - \epsilon_{2a} \vI,
\end{equation}
with the bounds from \cref{lem:L2-inside-P1} promising
\begin{equation}
     r_2 = \Omega(\frac{J_{\prop}}{T\ln\beta}) \quad \text{and}\quad \epsilon_{2a} = |S_0|\cdot \CO\Big(\frac{1}{\tau} +\frac{\norm{\vH_{\rom2}}^{3/4}}{\tau^{1/4}} + \frac{1}{\beta}+ \frac{1}{\sqrt{\tau J_{\prop}}}\Big).
\end{equation}

We then invoke Corollary~\ref{cor:monotonicity_subspace} with $\vQ=\vQ'=\vP_{\rom1}$, $\vH = \vH_{\rom2}$ and $\vH'=\vH_{\rom3}$ to show monotonicity of energy gradient on a subspace under perturbation
\begin{equation} 
    -\vP_{\rom1}\CL_{\rom3}^\dag[\vH_{\rom3}]\vP_{\rom1} \succeq r_2 \vP_{\rom1}(\vI - \vP_{\rom2}') - \epsilon_{2b} \vI,
\end{equation}
where $\vP_{\rom2}'$ is the perturbed eigensubspace of $\vH_{\rom3}$ that is identified with $\vP_{\rom2}$. To justify the application of \cref{cor:monotonicity_subspace}, we note
in $\vH_{\rom2}$, the eigensubspace $\vP_{\rom1}$ has an excitation gap of $\Delta_{\vQ}\ge 1/T - 2\|\vH_{\prop}\| = \Omega(1/T)$, where extra $2\|\vH_{\prop}\|$ term is due to shifts in eigenvalues of $\vH_{\rom1}$ bounded by Weyl's inequality (see \cref{fact:Weyl}).
The perturbation on $\vH_{\rom2}$ has strength $\|\vV_{\rom2}\|:=\|\vH_{\inp}\|\le J_{\inp} n g_{\max}$, and  $\Delta_\nu(\vH_{\rom2}|_{\vP_{\rom1}}) = J_{\prop}$.
Then noting $\Delta_\nu \ll \Delta_{\vQ}$, we keep the dominant terms in the error bound \eqref{eq:mono-subspace-err} and get
\begin{align}
    \epsilon_{2b} \le \epsilon_{2a} + |S_0| \cdot \CO\Big( \frac{\|\vV_{\rom2}\|\|\vH_{\rom2}\|}{\Delta_\nu \Delta_{\vQ}} + r_2\frac{\|\vV_{\rom2}\|}{\Delta_\nu}\Big) .
\end{align}
Furthermore, the eigensubspace $\vP_{\rom1}$ in $\vH_{\rom3}$ is separated by a spectral gap of $1/T-2\|\vV_{\rom1}\|=\Omega(1/T)$ from the other eigenstates, we may apply \cref{cor:PLP_and_L} to show
\begin{equation}  \label{eq:L3-outside-P2}
    -\CL_{\rom3}^\dag[\vH_{\rom3}] \succeq r_2 \vP_{\rom1}(\vI - \vP_{\rom2}') - \epsilon_2 \vI 
    \quad \text{where}\quad \epsilon_{2} = \epsilon_{2b} + |S_0| \CO(\frac{1}{\beta}+\frac{1}{\tau} + \sqrt{\frac{T}{\tau}}).
\end{equation}
Since $|S_0|,~ g_{\max},~ \|\vH_{\rom2}\|,~ \|\vH_{\rom3}\| = \CO(T)$, we can make $\epsilon_2/r_2\le \delta/6$ by choosing
\begin{equation} \label{eq:tbJ-cond-2}
\tau\ge \tOmega\Big(\frac{T^{11}}{J_{\prop}^2\delta^4}\Big), \quad
\beta\ge \tOmega\Big(\frac{T^2}{J_{\prop}\delta}\Big), \quad\text{and}\quad
J_{\inp} \le \tCO\Big(\frac{J_{\prop}^2 \delta}{nT^5}\Big).
\end{equation}

\paragraph{Case 3: Gradients for $\vP_{\rom2}(\vI - \vP_{\rom3})$ from $\vH_{\rom3}$.}

Now, we restrict our attention to $\vP_{\rom2}'$, the perturbed eigensubspace in $\vH_{\rom3}$ that correspond to $\vP_{\rom2}$.
We can show by explicit computation (deferred to \cref{sec:gradient-in}) that
\begin{equation}
\label{eq:L3-inside-P2}
    -\vP_{\rom2}' \CL_{\rom3}^\dag[\vH_{\rom3}] \vP_{\rom2}' \succeq r_3 \vP_{\rom2}' (\vI-\vP_{\rom3}) - \epsilon_{3a}\vI. 
\end{equation}
This computation shows that all valid history states $\ket{\eta_{\xv}}$ except for $\xv=\bm{0}$ have nonzero gradient with respect to $\CL_{\rom3}$.
The derivation uses a more fine-grained version of subspace gradient monotonicity (\cref{cor:off_diag_mono}) since the standard version yields insufficient bounds. Roughly, we need to capture the fact that off-diagonal perturbations induce only \textit{second-order} perturbation on the eigenvalues. The final calculated bounds in Eqs.~\eqref{eq:r3} and \eqref{eq:error-gradient-in} give us
\begin{equation}
r_3 = \Omega\Big(\frac{J_{\inp}}{T^2\ln\beta}\Big)
\quad \text{and} \quad
\epsilon_{3a} \le 
T \CO\bigg(
    \frac{T^{3/4}}{\tau^{1/4}}  +\frac{1}{\beta} + \frac{1}{\sqrt{J_{\inp}\tau}} + \e^{-\beta J_{\inp}} + n \frac{(n T J_{\inp})^2}{J_{\prop}}
\bigg).
\end{equation}

Using the fact that $\vP_{\rom2}'$ is separated by energy of at least 
$J_{\prop} - 2\|\vH_{\inp}\|$ from the other eigenstates in $\vH_{\rom3}$, we can apply \cref{cor:PLP_and_L} to get
\begin{equation} \label{eq:L3-outside-P3}
    -\CL_{\rom3}^\dag [\vH_{\rom3}] \succeq r_3 \vP_{\rom2}' (\vI-\vP_{\rom3}) - \epsilon_{3} \vI \quad \text{where}\quad \epsilon_{3} = \epsilon_{3a} + \labs{S_0}\CO\Big(\frac{1}{\beta}+\frac{1}{\tau} + \frac{1}{\sqrt{J_{\prop}\tau}}\Big).
\end{equation}
We may ensure $\epsilon_{3}/r_3 \le \delta/6$ by choosing
\begin{equation}
\label{eq:tbJ-cond-3}
    \tau \ge \tOmega\Big(\frac{T^{15}}{J_{\inp}^4\delta^4} \Big), \quad
    \beta \ge \tOmega\Big(\frac{T^3 }{J_{\inp} \delta}\Big), 
    \quad\text{and} \quad
    J_{\inp} \le \tCO\Big( \frac{J_{\prop} \delta}{n^3 T^5}\Big).
\end{equation}

\paragraph{Altogether.}
Based on the conditions in Eqs.~\eqref{eq:tbJ-cond-1} \eqref{eq:tbJ-cond-2} \eqref{eq:tbJ-cond-3} and the fact that $T= \Omega(n)$, a consistent choice of parameters that satisfies all the bounds and ensures $\epsilon_j/r_j \le \delta/6$ are
\begin{equation}
    \tau = \tilde\Theta\Big(\frac{T^{79}}{\delta^{16}}\Big),\quad
    \beta=\tilde\Theta\Big(\frac{T^{19}}{\delta^4}\Big), \quad
    J_{\prop} = \tilde\Theta\Big(\frac{\delta}{T^5}\Big), \quad \text{and} \quad
    J_{\inp} = \tilde\Theta\Big(\frac{\delta^3}{T^{16}}\Big).
\end{equation}
Then combining Eqs.~\eqref{eq:L3-outside-P1}, \eqref{eq:L3-outside-P2}, and \eqref{eq:L3-outside-P3} implies that
\begin{align}
    \vI - \vP_{\rom1} &\preceq \frac{\epsilon_1}{r_1} \vI - \frac{1}{r_1} \CL^\dag_{\rom3}[\vH_{\rom3}], \nonumber \\
    \vP_{\rom1} - \vP_{\rom2}' &\preceq \frac{\epsilon_2}{r_2} \vI - \frac{1}{r_2} \CL^\dag_{\rom3}[\vH_{\rom3}], \nonumber \\
    \vP_{\rom2}' - \vP_{\rom3} &\preceq \frac{\epsilon_3}{r_3} \vI - \frac{1}{r_3} \CL^\dag_{\rom3}[\vH_{\rom3}].
\end{align}
Note we have used $\vP_{\rom1} \vP_{\rom2}'=\vP_{\rom2}'$ and $\vP_{\rom2}'\vP_{\rom3}=\vP_{\rom3}$.
Recall that $\CL_{\rom3} = \sum_{a=1}^m \CL^{\beta, \tau, \vH_\circuit}_{a}$.
Adding all three inequalities together and normalizing suitably by the number of jumps $m=\labs{S_0}$, we have
\begin{equation}
\vI - \vP_{\rom3} \preceq -\left(\sum_{j=1}^3\frac{m}{r_j} \right) \left(\frac{1}{m}\sum_{a=1}^m \CL^{\dag \beta, \tau, \vH_\circuit}_{a}[\vH_{\rom3}] \right) + \left(\sum_{j=1}^3 \frac{\epsilon_j}{r_j} \right) \vI.
\end{equation}
The above provides the desired negative gradient condition on the full Hamiltonian $\vH_{\rom3} = \vH_{\circuit}$.
From Lemma~\ref{lem:nece-QLM}, any $\epsilon$-approximate local minimum $\vrho$ of $\vH_{\circuit}$ under thermal perturbation satisfies
\begin{equation}
1 - \Tr(\vP_{\rom3} \vrho) \leq \sum_{j=1}^3\frac{\epsilon_j + m\epsilon}{r_j} \le \frac{\delta}{2} + \epsilon m \sum_{j=1}^3\frac{1}{r_j}.
\end{equation}
By choosing $\epsilon \le \frac{\delta}{2m} \big(\sum_{j=1}^3 1/r_j\big)^{-1} = 1/\poly(n,T,\delta^{-1})$, we guarantee that $1 - \Tr(\vP_{\rom3} \vrho)\le \delta$.
This concludes our proof of \cref{thm:Hcircuit}.

\subsection{Explicit calculations for energy gradients}

In this section of the appendix, we provide the missing calculations supporting the claims that were asserted in Eqs.~\eqref{eq:L1-grad}, \eqref{eq:L2-inside-P1}, and \eqref{eq:L3-inside-P2} in the above proof of Theorem~\ref{thm:Hcircuit}.

\subsubsection{Gradient from $\vH_{\clock}$}
\label{sec:gradient-clock}
Note that $\vH_{\rom1}=\vH_{\clock}$ is a commuting Hamiltonian
\begin{align}
    \vH_{\rom1}=\vH_{\clock} = J_{\clock} \sum_{t=1}^{T-1} f_t \vh_{t,t+1}\quad &\text{where}\quad
    \quad \vh_{t,t+1} = \vI \otimes \ketbrat{01}_{t,t+1} 
\end{align}
where we set $f_t = (T-t)/T$ and $J_{\clock}=1$.
We start by computing the gradient from a single jump operator, using the simplification from \cref{lem:commuting-decomp}:

\begin{lemma}
\label{lem:H1-local-gradient}
Let the jump operator $\vA^t = \vI \otimes \vX_t$ for each $t\in[T]$.
For all $t = 2,\ldots, T$, we have
\begin{equation}
-\CD_{t}^{\dag\beta,\tau,\vH_{\rom1}}[\vH_{\rom1}] \succeq r_1\vh_{t-1,t} - \epsilon_0 \vI
\end{equation}
where $r_1=\Omega(\frac{1}{T\ln\beta})$ and $\epsilon_0 =\CO(\sqrt{T/\tau} + [(1+\beta)\ln\tau]/\tau + 1/\beta)$.
\end{lemma}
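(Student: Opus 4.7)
The plan is as follows. Since $\vH_{\rom1}=\vH_{\clock}$ is commuting and $\vA^t=\vI\otimes\vX_t$ fails to commute only with the clock terms on the overlap of its support, I would first invoke Lemma~\ref{lem:commuting-decomp} to replace the global Hamiltonian by its neighborhood
\begin{equation}
\vH_{\ni t}=f_{t-1}\,\vh_{t-1,t}+f_t\,\vh_{t,t+1}
\end{equation}
(with the second term absent when $t=T$). After this reduction, it suffices to analyze $\CD_t^{\dag\beta,\tau,\vH_{\ni t}}[\vH_{\ni t}]$ acting on the three-qubit block $(t-1,t,t+1)$ of the clock register, since the superoperator acts trivially on everything that commutes with $\vX_t$.

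Second, I would replace the finite-$\tau$ dissipator by its Davies limit via Lemma~\ref{lem:recover-Davies}. The distinct eigenvalues of $\vH_{\ni t}$ restricted to the relevant block are $\{0,f_t,f_{t-1}\}$ with $f_{t-1}-f_t=1/T$, so the Bohr-frequency gap is $\Omega(1/T)$. Combining Lemma~\ref{lem:recover-Davies} with the explicit estimate of Lemma~\ref{lem:Davies-error-bound} yields
\begin{equation}
\bigl\|\CD_t^{\dag\beta,\tau,\vH_{\ni t}}-\CD_t^{\dag\beta,\infty,\vH_{\ni t}}\bigr\|_{\infty-\infty}=\CO\!\Bigl(\tfrac{(1+\beta)\ln\tau}{\tau}+\sqrt{T/\tau}\Bigr),
\end{equation}
which accounts for two of the three pieces of $\epsilon_0$.

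Third, I would compute the Davies gradient exactly. Using the identity $[\vH,\vA_\nu]=\nu\vA_\nu$, one has $\CD_t^{\dag\beta,\infty,\vH_{\ni t}}[\vH_{\ni t}]=\sum_\nu\gamma_\beta(\nu)\,\nu\,\vA_\nu^{t\dag}\vA_\nu^t$. Enumerating the eight computational basis states on $(t-1,t,t+1)$ and the action of $\vX_t$ produces exactly three cooling transitions: $\ket{010}\!\to\!\ket{000}$ at $\nu=-f_{t-1}$, $\ket{011}\!\to\!\ket{001}$ at $\nu=-1/T$, and $\ket{101}\!\to\!\ket{111}$ at $\nu=-f_t$. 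The first two originate precisely from the support $\ketbrat{010}+\ketbrat{011}$ of $\vh_{t-1,t}$; since $f_{t-1}\ge 1/T$ and the Glauber weight~\eqref{eq:glauber-dyn} satisfies $\gamma_\beta(-\omega)=\Omega(1/\ln\beta)$ uniformly in $0<\omega\le 1$, each gives a cooling strength $\Omega(1/(T\ln\beta))$, fixing $r_1$. The heating transitions contribute positively to $\CD_t^\dag[\vH_{\ni t}]$, but the tail bound \eqref{eq:gamma-tailbound} caps their total magnitude by $\CO(1/\beta)$, supplying the remaining $1/\beta$ piece of $\epsilon_0$.

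The main obstacle I anticipate is bookkeeping the support of the jump carefully so that $\vh_{t-1,t}$ — and not the neighboring $\vh_{t,t+1}$ — is what appears on the right-hand side: the third cooling channel from $\ket{101}$ feeds $\vh_{t,t+1}$, and since it contributes positively to $-\CD_t^\dag[\vH_{\ni t}]$ it can simply be discarded, but one must be careful not to overclaim and mistakenly lower-bound by $\vh_{t-1,t}+\vh_{t,t+1}$. The boundary case $t=T$ is handled identically after dropping the missing $\vh_{t,t+1}$ term throughout.
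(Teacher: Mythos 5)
Your proposal is correct and follows essentially the same route as the paper's proof: reduce to $\vH_{\ni t}$ via Lemma~\ref{lem:commuting-decomp}, pass to the Davies limit using Lemmas~\ref{lem:recover-Davies} and \ref{lem:Davies-error-bound} with Bohr-frequency gap $1/T$, enumerate the three cooling channels (keeping $\ketbrat{010}+\ketbrat{011}=\vh_{t-1,t}$, discarding the $\ketbrat{101}$ channel), bound heating by $\CO(1/\beta)$ via Eq.~\eqref{eq:gamma-tailbound}, and treat $t=T$ separately. Your caution about not overclaiming $\vh_{t-1,t}+\vh_{t,t+1}$ is exactly the right bookkeeping point.
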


\begin{proof}
By \cref{lem:commuting-decomp}, we have $\CD_{t}^{\dag\beta,\tau,\vH_{\rom1}}[\vH_{\rom1}] = \CD_{t}^{\dag\beta,\tau,\vH_{\ni t}}[\vH_{\ni t}]$.
We then proceed in two cases.

\textbf{Case 1: $2 \le t \le T-1$.}
In this case, the relevant part of $\vH_{\rom1}$ that does not commute with $\vA^{t}$ is
\begin{equation}
    \vH_{\ni t} = f_{t-1}\ketbrat{01}_{t-1,t} + f_{t}\ketbrat{01}_{t,t+1}.
\end{equation}
Observe that $\vH_{\ni t}$ has three degenerate eigenspaces with corresponding energies as follows:
\begin{align}
\vP^0_t &= \sum_{\sv\in \{000,100,110,111\}}\ketbrat{\sv}_{t-1,t,t+1},\quad  &E_0 &= 0 , \nonumber\\
\vP^L_t &= \sum_{\sv\in \{010,011\}}\ketbrat{\sv}_{t-1,t,t+1},\quad &E_L &= f_{t-1},  \nonumber\\ 
\vP^R_t &= \sum_{\sv\in \{001,101\}} \ketbrat{\sv}_{t-1,t,t+1},\quad &E_R &= f_{t}.
\end{align}
The possible negative Bohr frequencies and the associated jumps are 
\begin{align}
    \vA^{t}_{\nu_1} &= \vP_t^0 \vX_{t} \vP_t^L = \ketbra{000}{010}_{t-1,t,t+1}, &\nu_1&=-f_{t-1}, \nonumber\\
    \vA^{t}_{\nu_2} &= \vP_t^0 \vX_{t} \vP_t^R = \ketbra{111}{101}_{t-1,t,t+1}, &\nu_2 &= -f_{t}, \nonumber \\
    \vA^{t}_{\nu_3} &= \vP_t^R \vX_{t} \vP_t^L = \ketbra{001}{011}_{t-1,t,t+1}, &\nu_3 &=f_{t}-f_{t-1}=-1/T. 
\end{align}

Furthermore, observe that the Bohr frequencies are exactly integer multiples of $1/T$, so we can lower bound the Bohr-frequency gap $\Delta_\nu(\vH_{\ni t}) \ge 1/T$.
Then by \cref{lem:recover-Davies} and \ref{lem:Davies-error-bound}, we can replace $\CD_{t}^{\dag\beta,\tau,\vH_{\ni t}}$ with $\CD_{t}^{\dag\beta,\infty,\vH_{\ni t}}$ up to an $\CO(\sqrt{T/\tau} + [(1+\beta)\ln \tau]/\tau)$ error.

Letting $\theta_j = \nu_j \gamma_\beta(\nu_j)$ for $j=1,2,3$ (recall $\gamma_\beta$ is given in Eq.~\eqref{eq:glauber-dyn} with $\Lambda_0=1$), we have
\begin{align}
\CD_{t}^{\dag\beta,\infty,\vH_{\ni t}}[\vH_{\ni t}] &= 
     \sum_{\nu\in B(\vH_{\ni t})} \nu \gamma_\beta(\nu) \vA^{a\dag}_\nu \vA^a_\nu  \nonumber \\
&= \big(\theta_1 \ketbrat{010} + \theta_2 \ketbrat{101} + \theta_3\ketbrat{011}\big)_{t-1,t,t+1} + \CO(1/\beta),
\end{align}
where the last error term is due to heating transitions (positive Bohr frequencies), which incur errors of at most $\|\omega\gamma_\beta(\omega) \indicator(\omega>0)\|_{\infty} = \CO(1/\beta)$.
Note that $\theta_1,\theta_2,\theta_3 < 0$, and furthermore we have 
\begin{align}
\min \{ |\theta_1|, |\theta_3|\} \ge  \min_{\omega\in[-1,-1/T]} |\omega| \gamma_\beta(\omega) =:r_1  = \Omega\big(\frac{1}{T\ln\beta}\big).
\end{align}
Hence,
\begin{equation}
-\CD_{t}^{\dag\beta,\infty,\vH_{\ni t}}[\vH_{\ni t}] \succeq r_1 \big(\ketbrat{010} + \ketbrat{011}\big)_{t-1,t,t+1} - \CO(1/\beta) \vI.
\end{equation}
Note the first term combines to make $\vh_{t-1,t}$.
We then return to the finite-$\tau$ Lindbladian up to the aforementioned error:
\begin{equation}
-\CD_{t}^{\dag\beta, \tau,\vH_{\ni t}}[\vH_{\ni t}] \succeq r_1\vh_{t-1,t} - \CO\Big(\sqrt{\frac{T}{\tau}} + \frac{(1+\beta)\ln\tau}{\tau} + \frac{1}{\beta}\Big) \vI
\end{equation}

\textbf{Case 2: $t=T$.}
The relevant part of $\vH_{\rom1}$ in this case is
\begin{equation}
    \vH_{\ni T} = f_{T-1}\ketbrat{01}_{T-1,T},
\end{equation}
which has two eigenspaces.
There is only one negative Bohr frequency with a corresponding jump operator filtered at $\nu$
\begin{equation}
    \vA^{T}_\nu = (\vI-\ketbrat{01}_{T-1,T})\vX_T \ketbrat{01}_{T-1,T} = \ketbra{00}{01}_{T-1,T}\quad \text{where}\quad \nu = -f_{T-1} = -\frac{1}{T}.
\end{equation}
Then,
\begin{equation}
\textstyle
\CD_{t}^{\dag\beta,\infty,\vH_{\ni T}}[\vH_{\ni T}] = -\frac{1}{T} \gamma_\beta(-\frac{1}{T}) \ketbrat{01}_{T-1,T}  + \CO(\frac{1}{\beta})\vI.
\end{equation}
Note that $\frac{1}{T} \gamma_\beta(-\frac{1}{T})\ge r_1$. Applying \cref{lem:recover-Davies} and \ref{lem:Davies-error-bound} to return to the finite $\tau$ expression, 
\begin{equation}
-\CD_{t}^{\dag\beta,\tau,\vH_{\ni T}}[\vH_{\ni T}] \succeq r_1 \vh_{T-1,T} -  \CO\Big(\sqrt{\frac{T}{\tau}} + \frac{(1+\beta)\ln\tau}{\tau} + \frac{1}{\beta}\Big) \vI,
\end{equation}
which is the advertised result.
\end{proof}

We are now ready to prove Eq.~\eqref{eq:L1-grad}, which we state as the following lemma:
\begin{lemma}
\label{lem:Hclock-grad}
Assume $1\le T\le \tau$. We have
\begin{equation}
    -\CL_{\rom1}^\dag[\vH_{\rom1}] \succeq r_1 (\vI - \vP_{\rom1})-\epsilon_{1a} \vI
\end{equation}
where
\begin{equation}
r_1=\Omega\Big(\frac{1}{T\ln\beta}\Big)
\quad \text{and} \quad
    \epsilon_{1a} = \CO\Big(\frac{T^{7/4}}{\tau^{1/4}} + \frac{T}{\beta}  + \frac{T(1+\beta)\ln\tau}{\tau}\Big).
\end{equation}
\end{lemma}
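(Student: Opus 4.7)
The plan is to derive the global gradient bound by summing the per-site gradient bounds supplied by Lemma~\ref{lem:H1-local-gradient} and then controlling the residual contributions from (i) the remaining jumps in $S_0$ and (ii) the Lamb-shift terms. First, since $\vH_{\rom1}=\vH_{\clock}$ is a weighted sum of mutually commuting, computational-basis-diagonal rank-one projectors $\vh_{s,s+1}$ on the clock register, their unweighted sum counts the number of $\ket{01}$ patterns present in a basis state; every state outside $\vP_{\rom1}$ contains at least one such pattern, so $\sum_{s=1}^{T-1}\vh_{s,s+1}\succeq \vI-\vP_{\rom1}$. For each $t=2,\ldots,T$, Lemma~\ref{lem:H1-local-gradient} supplies the inequality $-\CD_t^{\dag\beta,\tau,\vH_{\rom1}}[\vH_{\rom1}]\succeq r_1\vh_{t-1,t}-\epsilon_0\vI$ with $\epsilon_0=\CO(\sqrt{T/\tau}+(1+\beta)\ln\tau/\tau+1/\beta)$. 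Summing over $t$ and applying the projector inequality yields $-\sum_{t=2}^T \CD_t^{\dag\beta,\tau,\vH_{\rom1}}[\vH_{\rom1}]\succeq r_1(\vI-\vP_{\rom1})-(T-1)\epsilon_0\vI$.

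Next I handle the other jumps in $S_0$, for which no sharp cancellation is needed. The extra jump $\vI\otimes\vX_1$ is absorbed into the generic upper bound of Lemma~\ref{lem:gradient_upperbound}, contributing at most $\CO(\|\vH_{\rom1}\|^{3/4}/\tau^{1/4}+1/\tau+1/\beta)=\CO(T^{3/4}/\tau^{1/4}+1/\tau+1/\beta)$ to $\CD^\dag[\vH_{\rom1}]$ from above. Crucially, the $\vI\otimes\vZ_t$ jumps commute with $\vH_{\rom1}$, so their Heisenberg evolution is trivial, $\hat{\vZ}_t(\omega)=\vZ_t\hat{f}_\tau(\omega)$ with $\hat{\vZ}_t(\omega)^\dag\hat{\vZ}_t(\omega)=|\hat{f}_\tau(\omega)|^2\vI$; the sandwich and anticommutator contributions to $\CD^\dag_{Z_t}[\vH_{\rom1}]$ cancel identically, and the corresponding Lamb-shift is proportional to $\vI$. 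Finally, each of the $n$ input jumps $\vX_j\otimes\ketbrat{0}_{t_j}$ contributes at most $\CO(T^{3/4}/\tau^{1/4}+1/\tau+1/\beta)$ by Lemma~\ref{lem:gradient_upperbound}.

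For the Lamb-shift piece $\ri[\vH_{LS,a},\vH_{\rom1}]$ in every $\CL_a^\dag[\vH_{\rom1}]$, Proposition~\ref{prop:norm-lamb-shift-part} gives $\|[\vH_{LS,a},\vH_{\rom1}]\|=\CO(\|\vH_{\rom1}\|^{3/4}/\tau^{1/4})=\CO(T^{3/4}/\tau^{1/4})$ per jump, and with $|S_0|=2T+n=\CO(T)$ jumps the total is $\CO(T^{7/4}/\tau^{1/4})$. Collecting everything, $-\CL_{\rom1}^\dag[\vH_{\rom1}]\succeq r_1(\vI-\vP_{\rom1})-\epsilon_{1a}\vI$ where
\begin{equation*}
\epsilon_{1a}=\CO\Big(T\epsilon_0+\tfrac{T^{7/4}}{\tau^{1/4}}+\tfrac{nT^{3/4}}{\tau^{1/4}}+\tfrac{n}{\tau}+\tfrac{n}{\beta}\Big).
\end{equation*}
Using $n\le T$ and the hypothesis $T\le \tau$ (which gives $T^{3/2}/\sqrt\tau\le T^{7/4}/\tau^{1/4}$) reduces this to the claimed bound $\epsilon_{1a}=\CO(T^{7/4}/\tau^{1/4}+T/\beta+T(1+\beta)\ln\tau/\tau)$.

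The main obstacle is the bookkeeping across multiple jump types. The delicate point is to argue that the $\vZ_t$ jumps contribute exactly zero to both the dissipator and the Lamb-shift; if one only used the generic $\CO(\|\vH\|^{3/4}/\tau^{1/4})$ bound, summing over $T$ such jumps would produce a stray $T^{7/4}/\tau^{1/4}$ term from each of the two blocks of $T$ trivial jumps, which is still tolerable, but one must also verify that the non-commuting input jumps $\vX_j\otimes\ketbrat{0}_{t_j}$ accumulate only $\CO(nT^{3/4}/\tau^{1/4})\le \CO(T^{7/4}/\tau^{1/4})$, so that the Lamb-shift term remains the dominant $\tau$-error. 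Everything else reduces to a straightforward sum of operator inequalities.
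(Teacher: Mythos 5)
Your proposal is correct and follows essentially the same route as the paper: sum the per-site bounds from Lemma~\ref{lem:H1-local-gradient} over the jumps $\vI\otimes\vX_t$ ($t\ge 2$), use $\sum_t \vh_{t,t+1}\succeq \vI-\vP_{\rom1}$, absorb the remaining jumps and Lamb-shift contributions via Lemma~\ref{lem:gradient_upperbound} and Proposition~\ref{prop:norm-lamb-shift-part}, and simplify with $|S_0|,\norm{\vH_{\rom1}}=\CO(T)$ and $1\le T\le\tau$. Your extra observation that the $\vI\otimes\vZ_t$ jumps contribute exactly zero is correct but unnecessary, since the paper's generic $\CO(|S_0|(\norm{\vH_{\rom1}}^{3/4}/\tau^{1/4}+1/\tau+1/\beta))$ bound already suffices.
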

\begin{proof}
Note by linearity, we have
\begin{equation}
\CL_{\rom1}^\dag[\vH_{\rom1}] = \sum_{a\in S} \CL_a^{\dag \beta,\tau,\vH_{\rom1}} [\vH_{\rom1}].
\end{equation}
Let $S_{\rom1} = \{\vI\otimes \vX_t: 2\le t \le T\}$ be a subset of the jump operators.
Then
\begin{equation}
-\CL_{\rom1}^\dag[\vH_{\rom1}] \succeq -  \sum_{a\in S_{\rom1}} \CD_a^{\dag \beta,\tau,\vH_{\rom1}} [\vH_{\rom1}] - \CO\bigg(\labs{S_0} \Big(\frac{\|\vH_{\rom1}\|^{3/4}}{\tau^{1/4}} + \frac{1}{\tau} + \frac{1}{\beta}\Big)\bigg) \vI,
\end{equation}
where the error contribution from neglecting the Lamb-shift term and the other jump operators in $S_0\setminus S_{\rom 1}$ are bounded by 
\cref{prop:norm-lamb-shift-part} and \cref{lem:gradient_upperbound}.

Applying \cref{lem:H1-local-gradient} to the sum on the right hand side above, we get
\begin{equation}
-\sum_{a\in S_{\rom 1}} \CD_a^{\dag \beta,\tau,\vH_{\rom1}} [\vH_{\rom1}]  \succeq r_1 \sum_{t=1}^{T-1} \vh_{t,t+1} - T\epsilon_0\vI.
\end{equation}
It is not difficult to see that
\begin{equation}
    \sum_{t=1}^{T-1} \vh_{t,t+1} \succeq \vI-\vP_{\rom1},
\end{equation}
that is, the smallest excitation has energy $1$. Hence,
\begin{equation}
-\CL_{\rom1}^\dag[\vH_{\rom1}] \succeq  r_1 (\vI-\vP_{\rom1}) - \epsilon_{1a} \vI,
\end{equation}
where
\begin{align}
    \epsilon_{1a} &= \CO\bigg(\labs{S_0} \Big(\frac{\|\vH_{\rom1}\|^{3/4}}{\tau^{1/4}} + \frac{1}{\tau} + \frac{1}{\beta}\Big)\bigg) + \CO\Big(\frac{T^{3/2}}{\tau^{1/2}} + \frac{T(1+\beta)\ln\tau}{\tau} + \frac{T}{\beta}\Big) \nonumber \\
    &= \CO\Big(\frac{T^{7/4}}{\tau^{1/4}} + \frac{T}{\tau} + \frac{T}{\beta} + \frac{T^{3/2}}{\tau^{1/2}} + \frac{T(1+\beta)\ln\tau}{\tau} \Big).
\end{align}
The last equality uses that $\labs{S_0}, \|\vH_{\rom1}\| = \CO(T)$.
Since $1\le T \le \tau$, we have $T^{7/4}/\tau^{1/4} \ge T/\tau$ and $T^{7/4}/\tau^{1/4} \ge T^{3/2}/\tau^{1/2}$, so we drop the latter two terms for the final error estimate in the lemma statement.
\end{proof}

\subsubsection{Gradient from $\vH_{\prop}$}
\label{sec:gradient-prop}
In this subsection, we prove 
\begin{align}
-\vP_{\rom1} \CL_{\rom2}^\dag[\vH_{\rom2}] \vP_{\rom1} \succeq r_2 \vP_{\rom1} (\vI-\vP_{\rom2}) - \epsilon_{2a}\vI.
\end{align}

Denote $\ket{t}:=\ket{\eta_{\vect{x},t}}$ in what follows. Let $\vL_+$ be the raising operator whose only non-trivial action is
\begin{align}
    \vL_+\ket{t} &= \sqrt{(t+1)(T-t)}\ket{t+1} \quad\text{for each}\quad 0\le t\le T-1 \nonumber\\
    \text{and}\quad \vL_- &=: \vL_+^\dag.
\end{align}
Furthermore, let $\vL_x = \frac12 (\vL_+ + \vL_-)$, $\vL_y = \frac{1}{2\ri} (\vL_+ - \vL_-)$, $\vL_z=\sum_{t=0}^T (t-T/2)\ketbrat{t}$.
These operators form a set of angular momentum operators as $[\vL_a, \vL_b] = \ri\epsilon_{abc} \vL_c$ for $a,b,c\in\{x,y,z\}$. 
As noted earlier, we have
\begin{equation}
    \vP_{\rom1} \vH_{\prop} \vP_{\rom1} = J_{\prop} (\frac{T}{2}-\vL_x).
\end{equation}
The eigenstates are known to be
\begin{align}
    \ket{v_k} = \e^{\ri\pi \vL_y/2} \ket{k}\quad \text{with eigenvalues}\quad \lambda_k = k J_{\prop} \quad \text{for }\quad k = 0, 1,\ldots, T.
\end{align}
This integer spectrum means the minimum Bohr-frequency gap in the subspace $\vP_{\rom1}$ is $\Delta_\nu (\vH_{\prop}|_{\vP_{\rom1}})=J_{\prop}$.

Next, we give jump operators with nontrivial gradient on any excited state of $\vP_{\rom1}\vH_{\prop}\vP_{\rom1}$. These will be the 1-local jumps acting on the clock register
\begin{align}
    \vI\otimes \vZ_\ell \quad \text{for each}\quad 1\le\ell \le T
\end{align}
which nicely respects the block-diagonal structure of $\vH_{\prop}$ such that
\begin{equation}
\braket{\eta_{\xv,t}|\vI\otimes \vZ_\ell|\eta_{\yv,t'}}=  0 \quad \text{if}\quad \yv\ne \xv.\label{eq:xZy} 
\end{equation}
Thus, fixing $\vect{x}$, we merely need to consider effective jump operators
\begin{align}
    \vP_{\rom1} (\vI\otimes \vZ_\ell) \vP_{\rom1} \equiv \vsigma_\ell\quad \text{such that} \quad \vsigma_{\ell}\ket{t} = (-1)^{\indicator_{t\ge \ell}} \ket{t}.
    \label{eq:Zell}
\end{align}

\begin{lemma}[Good transition rates]
\label{lem:Hprop-grad}

For the operators $\vsigma_{\ell}$ in Eq.~\eqref{eq:Zell} and any $0\le k<T$, we have that
\begin{equation}
    \max_{\ell \in [T]}\left|\braket{v_{k} | \vsigma_\ell | v_{k+1}}\right| \ge \frac{1}{\sqrt{T}}.
\end{equation}
\end{lemma}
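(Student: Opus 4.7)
}
My plan is to relate the sum over $\ell$ of the matrix elements $M_\ell := \braket{v_k|\vsigma_\ell|v_{k+1}}$ to a single off-diagonal matrix element of $\vL_z$, whose magnitude I can compute explicitly using the rotational structure of the $\ket{v_k}$'s. An averaging argument then forces at least one $|M_\ell|$ to be large.

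First, I would rewrite $\vsigma_\ell = \vI - 2\vQ_\ell$ with $\vQ_\ell := \sum_{t\ge \ell}\ketbra{t}{t}$. Using orthogonality $\braket{v_k|v_{k+1}}=0$ we get
\begin{equation}
M_\ell = -2\,\braket{v_k|\vQ_\ell|v_{k+1}} = -2\sum_{t\ge \ell} d_{kt}\,d_{k+1,t},
\qquad d_{kt} := \braket{v_k|t}\in\BR.
\end{equation}
Next I would sum $M_\ell$ over $\ell\in[T]$. Writing $t=\sum_{\ell=1}^T \indicator(t\ge \ell)$ and swapping sums gives
\begin{equation}
\sum_{\ell=1}^T M_\ell
= -2\sum_{t=0}^T t\, d_{kt}d_{k+1,t}
= -2\sum_{t=0}^T (t-T/2)\, d_{kt}d_{k+1,t}
= -2\braket{v_k|\vL_z|v_{k+1}},
\end{equation}
where in the second equality I used $\braket{v_k|v_{k+1}}=0$ again to drop the $T/2$ shift.

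Now I would evaluate this matrix element of $\vL_z$. By the $\mathfrak{su}(2)$ commutation relations, conjugation by $\vR:=\e^{\ri\pi\vL_y/2}$ interchanges $\vL_x$ and $\vL_z$ up to a sign, so $\vR^\dag \vL_z \vR = \pm\vL_x$. Since $\ket{v_k}=\vR\ket{k}$, this yields
\begin{equation}
\braket{v_k|\vL_z|v_{k+1}} = \pm\braket{k|\vL_x|k+1} = \pm\tfrac{1}{2}\sqrt{(k+1)(T-k)},
\end{equation}
the last equality following from $\vL_x=(\vL_++\vL_-)/2$ and the explicit action of $\vL_-$ on $\ket{k+1}$ given at the start of Appendix~\ref{sec:gradient-prop}. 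Combining,
\begin{equation}
\Big|\sum_{\ell=1}^T M_\ell\Big| = \sqrt{(k+1)(T-k)} \ge \sqrt{T} \qquad\text{for every }0\le k\le T-1,
\end{equation}
since the quadratic $(k+1)(T-k)$ is minimized at the endpoints where it equals $T$. A simple averaging bound then gives $T\max_\ell |M_\ell|\ge |\sum_\ell M_\ell|\ge \sqrt{T}$, i.e.\ $\max_\ell |M_\ell|\ge 1/\sqrt{T}$, as claimed.

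I do not expect any serious obstacle here: the only nontrivial step is the recognition that $\sum_\ell \vQ_\ell$ acts essentially as $\vL_z+\text{const}$ on the relevant subspace, which immediately converts a seemingly combinatorial lower bound on diagonal projectors into a computable off-diagonal angular-momentum matrix element. The sign of the rotation $\vR^\dag\vL_z\vR$ is irrelevant since only $|M_\ell|$ enters the final estimate.
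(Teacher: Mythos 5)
Your proposal is correct and follows essentially the same route as the paper's proof: both reduce the problem to the off-diagonal element $\braket{v_k|\vL_z|v_{k+1}}$ via the identity $\vL_z=-\tfrac12\sum_{\ell}\vsigma_\ell$ (which you rederive through the $\vQ_\ell$ decomposition, with orthogonality absorbing the constant shift), evaluate it by rotating $\vL_z$ into $\vL_x$ to get $\tfrac12\sqrt{(k+1)(T-k)}\ge\tfrac12\sqrt{T}$, and finish with the same averaging/triangle-inequality step. No gaps.
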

\begin{proof}
Fix any $ 0\le k<T$.
Observe that
\begin{equation} \label{eq:vk_Lz_vk1}
    \braket{v_k|\vL_z|v_{k+1}} = \braket{k|\e^{-\ri\pi \vL_y/2} \vL_z \e^{\ri\pi \vL_y/2}|k+1} = \braket{k|\vL_x|k+1} = \frac12 \sqrt{(k+1)(T-k)}.
\end{equation}
Then, 
\begin{align}
\max_{t }\left|\braket{v_{k} | \vsigma_t | v_{k+1}}\right| &\ge \frac{1}{T} \sum_{t=1}^T \left|\braket{v_k|\vsigma_t|v_{k+1}}\right| \tag{maximum is at least the mean}\\
&\ge \frac{2}{T} \left|\braket{v_{k}|\vL_z|v_{k+1}}\right|\tag{triangle inequality for $\vL_z = -\frac12 \sum_{t=1}^T \vsigma_t$}\\
&\ge \frac{1}{\sqrt{T}} \tag{Since $\sqrt{(k+1)(T-k)}\ge \sqrt{T}$ when $k<T$}
\end{align}
as advertised.
\end{proof}

Now that we understand the connectivity between the $t$ labels, we may restore the $2^n$ many labels $\xv$
\begin{equation}
    \ket{v_{k}} \rightarrow \ket{v_{k,\xv}} \quad \text{such that}\quad \braket{ v_{k',\yv}|v_{k,\xv} } = \delta_{kk'}\delta_{\xv\yv}.
\end{equation}
Fortunately, we do not need to address the explicit labels $\xv$ due to the orthogonality properties in Eq.~\eqref{eq:xZy}.
We may now calculate the gradient operator $\vP_{\rom1} \CL_{\rom2}^\dag[\vH_{\rom2}] \vP_{\rom1}$.

\begin{lemma} 
\label{lem:L2-inside-P1}
Consider the thermal Lindbladian $\CL_{\rom2} = \sum_{a\in S_0} \CL_{a}^{\beta,\tau,\vH_{\rom2}}$.
Then,
    \begin{align}
        -\vP_{\rom1} \CL_{\rom2}^\dag[\vH_{\rom2}] \vP_{\rom1} \succeq r_2 \vP_{\rom1} (\vI-\vP_{\rom2}) - \epsilon_2 \vI
    \end{align}
    for
    \begin{align}
        r_2 = \Omega(\frac{J_{\prop}}{T\ln\beta}) \quad \text{and}\quad \epsilon_2 = |S_0| \cdot \CO\L(\frac{1}{\tau} +\frac{\norm{\vH_{\rom2}}^{3/4}}{\tau^{1/4}} + \frac{1}{\beta}+ \frac{1}{\sqrt{\tau J_{\prop}}}\R).
    \end{align}
\end{lemma}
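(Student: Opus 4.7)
My plan is to reduce the subspace gradient $\vP_{\rom1}\CL_{\rom2}^\dag[\vH_{\rom2}]\vP_{\rom1}$ to a sum over Bohr-frequency channels via \cref{lem:subspace-gradient-expr}, then isolate the single relevant channel $\nu=-J_{\prop}$ driven by the clock Pauli-$\vZ$ jumps, and finally close the argument with the combinatorial transition estimate already proved in \cref{lem:Hprop-grad}. Since $[\vH_{\prop},\vP_{\rom1}]=0$ and $J_{\prop}\cdot T \ll 1/T$ (by our parameter choice), $\vP_{\rom1}$ is a union of eigensubspaces of $\vH_{\rom2}$ separated from the rest by $\Delta_{\vQ}=\Omega(1/T)$; inside $\vP_{\rom1}$ the spectrum is the ladder $\{kJ_{\prop}\}_{k=0}^T$, so the internal Bohr-frequency gap is $\Delta_\nu = J_{\prop}$.

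Applying \cref{lem:subspace-gradient-expr} with $\vQ=\vP_{\rom1}$ and $\mu=J_{\prop}/2$ gives
\begin{equation*}
\vP_{\rom1}\CL_{\rom2}^\dag[\vH_{\rom2}]\vP_{\rom1} \;\approx\; \sum_{a\in S_0}\sum_{\nu\in B(\vH_{\rom2}|_{\vP_{\rom1}})} \vP_{\rom1}\vA^{a\dag}_\nu \vP_{\rom1} \vA^a_\nu \vP_{\rom1}\cdot \theta(\nu),
\end{equation*}
where $\theta(\nu)=\int_{-\infty}^0 \gamma_\beta(\omega)\omega |\hat f_\mu(\omega-\nu)|^2\rd\omega$, with approximation error bounded by the expression in the lemma. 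Because $-\theta(\nu)\ge 0$ for every Bohr frequency and each $\vA^{a\dag}_\nu\vA^a_\nu\succeq 0$, every term in the sum contributes nonnegatively to $-\vP_{\rom1}\CL_{\rom2}^\dag[\vH_{\rom2}]\vP_{\rom1}$, so we may discard all but a chosen subset and keep a valid lower bound.

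I will keep only the contributions from jumps $\vA^a = \vI\otimes\vZ_\ell$ at $\nu = -J_{\prop}$. By the block-diagonal property Eq.~\eqref{eq:xZy} and the effective action Eq.~\eqref{eq:Zell}, inside each $\xv$-block the restricted jump is $\vsigma_\ell$, and so
\begin{equation*}
\vP_{\rom1}(\vI\otimes\vZ_\ell)_{-J_{\prop}}\vP_{\rom1}\Big|_{\xv\text{-block}} = \sum_{k=1}^T \ketbra{v_{k-1,\xv}}{v_{k-1,\xv}} \vsigma_\ell \ketbra{v_{k,\xv}}{v_{k,\xv}}.
\end{equation*}
Squaring and summing over $\ell$ gives the diagonal operator $\sum_{k\ge 1}\big(\sum_\ell |\langle v_{k-1,\xv}|\vsigma_\ell|v_{k,\xv}\rangle|^2\big)\ketbrat{v_{k,\xv}}$, and \cref{lem:Hprop-grad} together with the inequality $\sum_\ell|\cdot|^2\ge \max_\ell|\cdot|^2 \ge 1/T$ yields
\begin{equation*}
\sum_{\ell=1}^T \vP_{\rom1}(\vI\otimes\vZ_\ell)^\dag_{-J_{\prop}}\vP_{\rom1}(\vI\otimes\vZ_\ell)_{-J_{\prop}}\vP_{\rom1} \;\succeq\; \tfrac{1}{T}\,\vP_{\rom1}(\vI-\vP_{\rom2}),
\end{equation*}
using $\vP_{\rom2}=\sum_\xv \ketbrat{v_{0,\xv}}$.

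Multiplying by $-\theta(-J_{\prop})$ completes the main bound, with $-\theta(-J_{\prop}) \ge \tfrac12 J_{\prop}\gamma_\beta(-J_{\prop}) - \CO(1/\sqrt{J_{\prop}\tau})$ in the regime $\beta J_{\prop}=\CO(1)$ (so $\gamma_\beta(-J_{\prop})=\Omega(1/\ln\beta)$), producing $r_2 = \Omega(J_{\prop}/(T\ln\beta))$. All remaining error channels (the Lamb-shift, the $|S_0|$ non-$\vZ_\ell$ jumps, the secular cut-off, positive frequencies, and the $\hat f_\mu$ overlap correction) are absorbed into $\epsilon_{2a}$ by reading them off the error bound in \cref{lem:subspace-gradient-expr}, giving exactly $|S_0|\cdot\CO(1/\tau + \|\vH_{\rom2}\|^{3/4}/\tau^{1/4} + 1/\beta + 1/\sqrt{\tau J_{\prop}})$. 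The main bookkeeping obstacle is ensuring the positive-frequency and cross-frequency contributions from the same $\vZ_\ell$ do not dominate $r_2$; this is precisely why the secular approximation with $\mu = J_{\prop}/2$ and the tail bound $|\omega\gamma_\beta(\omega)|\le 1/\beta$ for $\omega\ge 0$ are needed, and they already appear cleanly in the stated error terms.
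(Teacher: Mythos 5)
Your proposal is correct and follows essentially the same route as the paper's proof: reduce via \cref{lem:subspace-gradient-expr} with $\vQ=\vP_{\rom1}$, exploit nonpositivity of the half-integral weights to keep only the $\vI\otimes\vZ_\ell$ jumps on the cooling channel $\nu=-J_{\prop}$, and close with the block structure of Eq.~\eqref{eq:xZy}, the effective $\vsigma_\ell$ action, and the $1/\sqrt{T}$ matrix-element bound of \cref{lem:Hprop-grad}. (Two cosmetic remarks: the excitation-gap condition actually requires $\|\vH_{\prop}\|=\CO(J_{\prop}T^2)\ll 1/T$, which the final parameter choice guarantees, and for the Glauber weight $\gamma_\beta(-J_{\prop})=\Omega(1/\ln\beta)$ holds without assuming $\beta J_{\prop}=\CO(1)$.)
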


\begin{proof}

Observe that $\vP_{\rom1}$ projects onto a low-energy subspace of $\vH_{\rom2}$ with an excitation gap of at least $J_{\clock}/T- 2\|\vH_{\prop}\|$ from Weyl's inequality.
Furthermore, $\vH_{\rom2}$ restricted to $\vP_{\rom1}$ has eigenvalues that are integer multiples of $J_{\prop}$, so the Bohr-frequency gap in the subspace is $\Delta_\nu(\vH_{\rom2}|_{\vP_{\rom1}})= J_{\prop}$.
Assuming $J_{\prop}/2 < J_{\clock}/T - 2\|\vH_\prop\|$, we may apply \cref{lem:subspace-gradient-expr} with $\vH=\vH_{\rom2}$, $\vQ=\vP_{\rom1}$ to get
\begin{align}
\vP_{\rom1}\CL_{\rom2}^\dag[\vH_{\rom2}] \vP_{\rom1}
    &\stackrel{\verr}{\approx} \sum_{a\in S_0} \sum_{\nu\in B(\vH_{\rom2}|_{\vP_{\rom1}})} \vP_{\rom1}\vA^{a\dag}_{\nu}\vP_{\rom1}\vA^{a}_{\nu}\vP_{\rom1} \int_{-\infty}^{0} \gamma_{\beta}(\omega) \omega  \labs{\hat{f_{\mu}}(\omega-\nu)}^2 \rd \omega \nonumber\\
    &\preceq \sum_{a\in S_{\rom2}} \sum_{\nu\in B(\vH_{\rom2}|_{\vP_{\rom1}})} \vP_{\rom1}\vA^{a\dag}_{\nu}\vP_{\rom1}\vA^{a}_{\nu}\vP_{\rom1} \int_{-\infty}^{0} \gamma_{\beta}(\omega) \omega  \labs{\hat{f_{\mu}}(\omega-\nu)}^2 \rd \omega
    \label{eq:L2-in-P1-reduced}
\end{align}
where $\mu=\Delta_\nu/2$ and 
\begin{align}
  \verr = \labs{S_0} \CO\L(\frac{\norm{\vH}^{3/4}}{\tau^{1/4}} + \frac{1}{\tau} + \frac{1}{\beta} + \frac{1}{\sqrt{\Delta_\nu \tau}}\R).  
\end{align}
The second line uses the negativity of the half-integral to reduce to the following subset of jump operators from Eq.~\eqref{eq:jump-ops-for-HC}
\begin{equation}
    S_{\rom2} = \{\vI\otimes \vZ_\ell : \ell\in [T]\}.
\end{equation}
Let us now explicitly display the matrix elements of the above jump operators in the $\ket{v_{k,\xv}}$ basis:
\begin{align*}
    \vP_{\rom1}(\vI\otimes \vZ_\ell)\vP_{\rom1}&= 
    \sum_{k,k',\xv,\yv } \ket{v_{k',\yv}}\bra{v_{k',\yv}}\vI\otimes \vZ_\ell\ket{v_{k,\xv}} \bra{v_{k,\xv}} = \sum_{k,k',\xv} \ket{v_{k',\xv}}\bra{v_{k',\xv}}\vI\otimes \vZ_\ell\ket{v_{k,\xv}} \bra{v_{k,\xv}},
\end{align*}
where we applied Eq.~\eqref{eq:xZy} to drop the sum on $\yv$.
Thus we can rewrite the RHS of Eq.~\eqref{eq:L2-in-P1-reduced} as
\begin{align}
    (cont.)
    &= \sum_{\ell,k',k,\xv} \ketbrat{v_{k,\xv}} \vsigma_\ell \ketbrat{v_{k',\xv}}\vsigma_\ell
    \ketbrat{v_{k,\xv}}   \int_{-\infty}^{0} \gamma_{\beta}(\omega) \omega   \labs{\hat{f_{\mu}}(\omega-J_{\prop}(k'-k) )}^2 \rd \omega \nonumber \\
    & \preceq  -\Omega\left(\frac{J_{\prop}}{\ln\beta}\right) \sum_{k,\xv} \ket{v_{k,\xv}}\bra{v_{k,\xv}} \max_{\ell\in[T]}\labs{\bra{v_{k,\xv}}\vsigma_{\ell}\ket{v_{k-1,\xv}}}^2  \nonumber \\
    & \preceq -\Omega\left(\frac{J_{\prop}}{T\ln\beta}\right) \sum_{k \ge 1,\xv} \ket{v_{k,\xv}} \bra{v_{k,\xv}} \tag{applying Lemma~\ref{lem:Hprop-grad}}\\
    & \preceq  -\Omega\left(\frac{J_{\prop}}{T\ln\beta}\right) \Big(\vP_{\rom1} - \sum_{\xv} \ket{v_{0,\xv}} \bra{v_{0,\xv}}\Big) = -\Omega(\frac{J_{\prop}}{T\ln\beta}) \cdot \vP_{\rom1}(\vI-\vP_{\rom2})
\end{align}
The first line uses the orthogonality condition $\braket{v_{k',\yv}|v_{k,\xv} } = \delta_{kk'} \delta_{\xv\yv}$, and the fact that the identical $\nu$ labels on $\vA_\nu^{a\dag}$ and $\vA_\nu^a$ enforce that transitions from $k'$ need to be to the same $k$ on both sides.
The second line uses the negativity of the half-integral to focus on cooling transitions (which includes $k\rightarrow k-1$) and evaluates the integral (which concentrates near $\omega \approx J_{\prop}$). Lastly, we combine the above with the error bound on $\verr$ to conclude the proof.
\end{proof}

\subsubsection{Gradient from $\vH_{\inp}$}
\label{sec:gradient-in}
The goal of this subsection is to prove 
\begin{equation}
        -\vP_{\rom2}' \CL_{\rom3}^\dag[\vH_{\rom3}] \vP_{\rom2}' \succeq r_3 \vP_{\rom2}' (\vI-\vP_{\rom3})  - \epsilon_{3a} \vI.
\end{equation}
Here $\vP_{\rom2}'$ is the projector onto perturbed low-energy eigenstates of $\vH_{\rom3}=\vH_{\rom2} + \vH_{\inp}$, corresponding to 
\begin{equation}
\vP_{\rom2} = \sum_{\xv \in \{0,1\}^n} \ketbrat{\eta_\xv},
\end{equation}
where
\begin{equation}
    \ket{\eta_\xv} = \sum_{t=0}^T \sqrt{\xi_t} \ket{\eta_{\xv,t}} = \sum_{t=0}^T \sqrt{\xi_t} U_t\cdots U_1 \ket{\xv}\otimes \ket{C_t}\quad \text{and}\quad
    \qquad \xi_t = \frac{1}{2^{T}}\binom{T}{t}.
\end{equation}
Recall our \cref{def:circuit-H} where given a circuit with $L$ computational gates, we pad it in the beginning and the end with $t_0$ identity gates to make a total $T=2t_0 + L$ gates.
We can understand $\xi_t$ as the probability from a symmetric binomial distribution $\text{Binom}(T,\frac12)$, which has substantial weight near the center where the interesting computation takes place.

\begin{proposition}[Lower bound $\xi_t$ in the center]
    \label{prop:idling}
Suppose $T = 2 t_0 + L$ and $t_0 = c L^2$ are positive integers. Then we have
\begin{equation}
    \xi_{t}\ge \frac{\e^{-c/4}}{T+1} \qquad \text{for each}\quad t \in [t_0, T-t_0].
\end{equation}
\end{proposition}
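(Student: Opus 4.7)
The plan is to reduce the claim to a single endpoint by unimodality of the binomial, then estimate that endpoint probability via a telescoping product against the peak.

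First, by the symmetry $\xi_t = \xi_{T-t}$ and strict unimodality of $\binom{T}{t}$ with maximum at $\lfloor T/2\rfloor$, together with the bound $t_0 = cL^2 \leq cL^2 + L/2 = T/2$, the minimum of $\xi_t$ over the interval $[t_0, T-t_0]$ is attained at $t = t_0$. So it suffices to show $\xi_{t_0} \geq \e^{-c/4}/(T+1)$.

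To estimate $\xi_{t_0}$, I would compare it with the peak value $\xi_M$ at $M = \lfloor T/2\rfloor$ via the telescoping ratio
\begin{equation*}
\frac{\xi_{t_0}}{\xi_M} \;=\; \prod_{j=1}^{M-t_0}\frac{t_0+j}{T-t_0-j+1} \;=\; \prod_{j=1}^{M-t_0}\left(1-\frac{T-2t_0-2j+1}{T-t_0-j+1}\right).
\end{equation*}
Applying the elementary inequality $\log(1-x)\geq -x/(1-x)$ term-by-term, using the arithmetic-series identity $\sum_{j=1}^{L/2}(L-2j+1) = L^2/4$, and bounding each denominator by $t_0+j \geq (T-L)/2$, I expect to obtain (for $T,L$ even; other parities are handled analogously with $\lceil L/2\rceil$)
\begin{equation*}
\log(\xi_{t_0}/\xi_M) \;\geq\; -\frac{L^2/4}{(T-L)/2} \;=\; -\frac{L^2}{2(T-L)} \;=\; -\frac{1}{4c},
\end{equation*}
where the last equality uses $T - L = 2cL^2$. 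Combining with the trivial bound $\xi_M \geq 1/(T+1)$ (the maximum of a probability distribution on $T+1$ atoms is at least its average $1/(T+1)$) yields $\xi_{t_0} \geq \e^{-1/(4c)}/(T+1)$.

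To match the claimed form $\e^{-c/4}/(T+1)$, I would invoke $1/(4c) \leq c/4$ whenever $c \geq 1$, which is exactly the regime in which the proposition is applied (\cref{def:circuit-H} uses $t_0 = L^2$, i.e.\ $c = 1$). The main obstacle here is largely cosmetic rather than substantive: the telescoping argument naturally produces the bound $\e^{-1/(4c)}/(T+1)$, which equals $\e^{-c/4}/(T+1)$ at $c = 1$ and dominates it for all $c \geq 1$, so one must either restrict attention to this regime or restate the bound as $\e^{-1/(4c)}/(T+1)$. A secondary care-point is handling parity when $T$ or $L$ is odd, which requires replacing $L/2$ by $\lceil L/2\rceil$ in the product and tracking an $O(1/T)$ correction in the exponent --- routine but easy to bungle if done carelessly.
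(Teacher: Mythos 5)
Your argument is correct and takes a genuinely different route from the paper's. The paper also reduces to the endpoint $t=t_0$ by unimodality of $\mathrm{Binom}(T,\tfrac12)$, but then lower-bounds the binomial coefficient via the Beta-integral identity $\binom{T}{t}^{-1}=(T+1)\int_0^1 x^t(1-x)^{T-t}\,\rd x$, obtaining $\xi_{t_0}\ge f(L)/(T+1)$ with $f(L)=2^{-T}(T/t_0)^{t_0}\bigl(T/(T-t_0)\bigr)^{T-t_0}$, and finally argues that $f$ is monotonically decreasing in $L$ and passes to the limit $L\to\infty$. Your telescoping ratio against the mode, combined with $\log(1-x)\ge -x/(1-x)$ and the max-at-least-mean bound $\xi_M\ge 1/(T+1)$, is more elementary: it avoids both the monotonicity check and the limit computation, and the parity issue you flag really is routine (for odd $L$ the relevant sum is $(L^2-1)/4\le L^2/4$, so the same bound holds verbatim).

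The one thing to revise is your final "cosmetic" step, and it works in your favor: the constant you derived is the correct one, and the constant in the stated proposition is not. Evaluating the paper's own limit gives $\lim_{L\to\infty}f(L)=\e^{-1/(4c)}$, not $\e^{-c/4}$ (the two agree only at $c=1$), and for $c<1$ the bound as stated can genuinely fail: with $t_0=1$, $L=4$, $T=6$ one has $\xi_1=6/64\approx 0.094$, which is below $\e^{-1/64}/7\approx 0.141$. So rather than restricting to $c\ge 1$ — a restriction the paper itself does not respect, since \cref{prop:BQP-hardness-estimate-prop-VHC} chooses $c\le 1/(16\ln 18)$ — you should simply state the conclusion as $\xi_t\ge \e^{-1/(4c)}/(T+1)$, which your proof establishes for all $c>0$ and which suffices for every use of \cref{prop:idling} in the paper (all that is ever needed is $\xi_{t_0}\ge \Omega_c(1)/(T+1)$ for constant $c$).
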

\begin{proof}
As a property of the binomial distribution $\text{Binom}(T,\frac12)$, we have $\xi_t \ge \xi_{t_0}$ for all $t\in [t_0, T-t_0]$.
Observe that
\begin{equation}
    \binom{T}{t}^{-1} = (T+1)\int_0^1 x^{t} (1-x)^{T-t} \rd x \le (T+1) \Big(\frac{t}{T}\Big)^{t}\Big(1-\frac{t}{T}\Big)^{T-t} ,
\end{equation}
where the inequality comes from the fact that $\arg \max_{x\in[0,1]} x^t(1-x)^{T-t} = t/T$.
Then
\begin{align}
\xi_{t_0} &= \frac{1}{2^T}\binom{T}{t_0} \ge \frac{f(L)}{T+1} \nonumber \\
\text{where}\quad
f(L) &=
    \frac{1}{2^T}\Big(\frac{T}{t_0}\Big)^{t_0} \Big(\frac{T}{T-t_0}\Big)^{T-t_0} =
     \Big(1+\frac{1}{2cL}\Big)^{cL^2} \Big(1-\frac{1}{2(cL+1)}\Big)^{cL^2+L} .
\end{align}
The last equality is obtained after plugging in $T=2t_0+L$, $t_0=cL^2$ and simplifying.
We can use the first-derivative test to check that $f(L)$ is monotonically decreasing, and so $f(L) \ge \lim_{L\to\infty} f(L) = \e^{-c/4}$.
Hence, $\xi_t \ge \xi_{t_0} \ge f(L)/(T+1) \ge \e^{-c/4}/(T+1)$.
\end{proof}

Using the fact that $U_{t_j-1}\cdots U_{1}$ acts trivially on the $j$-th qubit (by definition of $t_j$), we see that $\vP_{\rom2} \vH_{\inp} \vP_{\rom2}$ is diagonal in the $\ket{\eta_\xv}$ basis:
\begin{align*}
    \braket{\eta_\xv|\vH_{\inp} |\eta_\yv} &= J_{\inp} \sum_{t,t'=0}^T \sqrt{\xi_t \xi_{t'}}\braket{\eta_{\xv,t}| \Big(\sum_{j=1}^n g_j \ketbrat{1}_j\otimes \ketbrat{C_{t_j-1}}\Big) | \eta_{\yv, t'}}  \\
    &= \delta_{\xv,\yv} \cdot J_{\inp}\sum_{j=1}^n x_j g_{j} \xi_{t_j-1}\label{eq:xHiny}.
\end{align*}
Since $g_j=1/\xi_{t_j-1}$ (see \cref{def:circuit-H}), then the above implies that that $\ket{\eta_\xv}$ are eigenstates of $\vP_{\rom2} \vH_{\inp} \vP_{\rom2}$ with eigenvalue $J_{\inp} \cdot \wt(\xv)$, where $\wt(\xv)$ is the Hamming weight of bit string $\xv$. While $\ket{\eta_\xv}$ are eigenstates of $\vP_{\rom2}\vH_{\inp}\vP_{\rom2}$, unfortunately, only $\ket{\eta_{\bm0}}$ is an eigenstate of $\vH_{\inp}$; this will require an additional perturbation step to handle this off-block-diagonal effect. 

Let $\vH_{\rom3} =\tilde{\vH}_{\rom3} + \vV_{\rom3} $ where
\begin{align}
    \tilde{\vH}_{\rom3} = \vH_{\rom2} + \vP_{\rom2} \vH_{\inp} \vP_{\rom2} + \vP_{\rom2}^\perp \vH_{\inp} \vP_{\rom2}^\perp \quad \text{and}\quad 
    \vV_{\rom3} = \vP_{\rom2} \vH_{\inp} \vP_{\rom2}^\perp +  \vP_{\rom2}^\perp \vH_{\inp}\vP_{\rom2}.
\end{align}
We will also denote $\tilde{\CL}_{\rom3}$ as the thermal Lindbladian with respect to $\tilde{\vH}_{\rom3}$.

We start by studying the gradient on $\tilde{\vH}_{\rom3}$ and 
showing all its excited states have negative energy gradients. 
It suffices to focus on the states in $\vP_{\rom2}$, which is a low-energy subspace of eigenstates of $\tilde\vH_{\rom3}$ with excitation gap of at least $J_{\prop}-2\|\vH_{\inp}\|$ from Weyl's inequality.
The effective Hamiltonian in this subspace has a simple form of decoupled qubits, which we write as
\begin{equation}
    \tilde{\vH}_\eff := \tilde\vH_{\rom3}|_{\vP_{\rom2}} = \vP_{\rom2} \vH_{\inp} \vP_{\rom2} = J_{\inp} \sum_{\xv} \wt(\xv) \ketbrat{\eta_{\xv}} \equiv J_{\inp}  \sum_{j=1}^n (\vI-\vZ_j^\eff)/2,
\end{equation}
where $\vZ_j^\eff$ is the Pauli Z operator of a virtual qubit defined as $\vZ_j^\eff\ket{\eta_\xv} = (-1)^{x_j}\ket{\eta_\xv}$.

Observe that $\tilde\vH_\eff$ has eigenvalues that are integer multiples of $J_{\inp}$, and thus its Bohr-frequency gap is $\Delta_\nu(\tilde\vH_\eff)=J_{\inp}$.
Hence, assuming $J_{\inp}/2 < J_{\prop} - 2\|\vH_{\inp}\|$, we can apply \cref{lem:subspace-gradient-expr} to see that the gradient operator sandwiched by $\vP_{\rom2}$ can be understood by fully restricting to the subspace:
\begin{equation}
\vP_{\rom2} \tilde{\CL}_{\rom3}^\dag[\tilde{\vH}_{\rom3}] \vP_{\rom2} 
\stackrel{\tilde{E}_{a}}{\approx} \sum_{a\in S_0} \sum_{\nu \in B(\tilde{\vH}_\eff)}  \vP_{\rom2} \vA_\nu^{a\dag} \vP_{\rom2} \vA_{\nu}^{a} \vP_{\rom2} \int_{-\infty}^0 \gamma_\beta(\omega) \omega |\hat{f}_\mu(\omega-\nu)|^2 \rd\omega 
\label{eq:L3tilde-in-P2-a}
\end{equation}
where $\mu=J_{\inp}/2$ and 
\begin{align}
    \tilde{\verr}_{a} = |S_0| \CO\L(\frac{\|\tilde\vH_{\rom3}\|^{3/4}}{\tau^{1/4}} + \frac{1}{\tau} + \frac{1}{\beta} + \frac{1}{\sqrt{J_{\inp} \tau}}\R).
\end{align}
To show that this has good gradients for all states in $\vP_{\rom2}(\vI-\vP_{\rom3})$, it is sufficient to consider the following subset of the jump operators from Eq.~\eqref{eq:jump-ops-for-HC}: 
\begin{equation}
 S_{\rom3} = \L\{\vX_j\otimes \ketbrat{0}_{t_j}\R\}_{j=1}^n.
\end{equation}
These jump operators from $S_{\rom3}$ effectively flip the individual virtual qubits by
\begin{equation}
\textstyle
\braket{\eta_{\yv}| (\vX_j \otimes \ketbrat{0}_{t_j}) |\eta_\xv} =  \braket{\yv| \vX_j | \xv}  \sum_{t=0}^{t_j-1} \xi_t  =: \braket{\yv| \vX_j | \xv}  \sqrt{\alpha_j}
\end{equation}
where we have denoted $\alpha_j = (\sum_{t<t_j} \xi_t)^2$.
Let 
\begin{align}
\vX^\eff_j = \vP_{\rom2}(\vX_j\otimes \ketbrat{0}_{t_j})\vP_{\rom2}/\sqrt{\alpha_j}\quad \text{such that}\quad \norm{\vX^\eff_j} = 1 \quad \text{for each}\quad j=1,\ldots, n.     
\end{align}
Note $\vX^\eff_j$ is effectively the Pauli X operator for the $j$-th virtual qubit.
Furthermore, note since $t_j\in [t_0,T-t_0]$ by our circuit construction in \cref{def:circuit-H}, we have $\alpha_j \ge \xi_{t_0}^2 \ge \Omega(1/T^2)$ by \cref{prop:idling}.

Next, we replace $\tilde\vH_{\rom3}$ by $\tilde\vH_\eff$ so we only need to talk about the virtual qubits.
We first restore the RHS of Eq.~\eqref{eq:L3tilde-in-P2-a} to the thermal Lindbladian form by undoing the approximation in \cref{lem:subspace-gradient-expr}, which incurs another error bounded by $\tilde{\verr}_a$:
\begin{equation}
    \vP_{\rom2} \tilde{\CL}_{\rom3}^\dag[\tilde{\vH}_{\rom3}]\vP_{\rom2} 
    \stackrel{2\tilde{E}_a}{\approx} \CL^{\dag\beta,\tau,\vH_{\eff}}[\vH_\eff].
\end{equation}
We then focus on the subset $S_{\rom3}$ of jump operators and write
\begin{equation}
\label{eq:Ltilde-global}
\textstyle
\CL^{\dag\beta,\tau,\vH_{\eff}}[\vH_\eff]
    \stackrel{\tilde{\verr}_{b}}{\approx} \sum_{j=1}^n \alpha_j \tilde{\CL}_j^\dag[\tilde\vH_\eff],
\end{equation}
where we denoted $\tilde{\CL_j} := \CL_j^{\beta,\tau,\tilde\vH_\eff}$ to be the thermal Lindbladian associated with effective jump operator $\vX^\eff_j$ and pulled out the normalization factor $\alpha_j$.
The error 
\begin{align}
\tilde{\verr}_{b} = |S_0|\CO\L(\frac{\|\tilde\vH_\eff\|^{3/4}}{\tau^{1/4}} + \frac{1}{\tau} + \frac{1}{\beta}\R)    
\end{align}
comes from neglecting the other jump operators $S_0\setminus S_{\rom3}$, and is bounded using \cref{lem:gradient_upperbound}.

Since the effective operators on different virtual qubits commute, we can treat them independently.
More formally, by \cref{lem:commuting-decomp} we have $\tilde\CL_j^\dag[\tilde\vH_\eff] = \CL_j^{\dag\beta,\tau,\vh_j}[\vh_j]$, where $\vh_j=J_{\inp}(\vI-\vZ_\eff^j)/2$.
To bound the global gradient in Eq.~\eqref{eq:Ltilde-global}, we first consider cooling a single qubit.

\begin{lemma}[Cooling a qubit] On a qubit, consider the thermal Lindbladian $\CL = \CL_a^{\beta,\tau,\vH}$ with the Hamiltonian $\vH = J_{\inp} (\vI-\vZ)/2$ and one jump operator $\vA = \vX$.
Then,
\begin{equation}
    -\CL^{\dag}[\vH] \succeq r_{\inp} (\vI-\vZ) - \epsilon_{\inp} \vI
\end{equation}
where
\begin{equation}
r_{\inp} = \Omega\Big(\frac{J_{\inp}}{\ln\beta}\Big) 
\quad \text{and}\quad 
\epsilon_{\inp} = \CO\Big(\frac{J_{\inp}^{3/4}}{\tau^{1/4}} + \frac{1}{\tau} + \frac{1}{\sqrt{\tau J_{\inp}}} + \e^{-\beta J_{\inp}}\Big).
\end{equation}
\end{lemma}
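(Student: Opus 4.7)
The plan is to first compute the Davies generator ($\tau=\infty$) exactly, observe that it already gives the desired inequality up to an exponentially small heating correction, and then transfer the result to the finite-$\tau$ thermal Lindbladian by the error bounds established earlier.

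First I would diagonalize. The Hamiltonian $\vH = J_{\inp}(\vI-\vZ)/2$ has eigenvalues $0$ and $J_{\inp}$ on $\ket0$ and $\ket1$, so the set of Bohr frequencies is $\{-J_{\inp},0,J_{\inp}\}$ with Bohr-frequency gap $\Delta_\nu(\vH)=J_{\inp}$. The jump $\vA=\vX$ decomposes as $\vA_{-J_{\inp}}=\ketbra{0}{1}$ (cooling) and $\vA_{+J_{\inp}}=\ketbra{1}{0}$ (heating), so by the Davies-generator expression
\begin{equation}
-\CD^{\dag\beta,\infty,\vH}[\vH] = J_{\inp}\gamma_\beta(-J_{\inp})\,\ketbrat{1} - J_{\inp}\gamma_\beta(J_{\inp})\,\ketbrat{0}.
\end{equation}
Since $\ketbrat{1}=(\vI-\vZ)/2$ and $0\preceq \ketbrat{0}\preceq \vI$, this already gives
\begin{equation}
-\CD^{\dag\beta,\infty,\vH}[\vH] \succeq \tfrac{1}{2}J_{\inp}\gamma_\beta(-J_{\inp})(\vI-\vZ) - J_{\inp}\gamma_\beta(J_{\inp})\vI.
\end{equation}
The leading coefficient gives $r_{\inp}=\tfrac12 J_{\inp}\gamma_\beta(-J_{\inp})=\Omega(J_{\inp}/\ln\beta)$, using $\gamma_\beta(-J_{\inp})\ge \e^{-J_{\inp}^2/2}/[(2+\ln(1+\beta))(1+\e^{-\beta J_{\inp}})]$ and that $J_{\inp}=\CO(1)$ in our regime. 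The heating correction is controlled by the KMS condition: $J_{\inp}\gamma_\beta(J_{\inp})\le J_{\inp}\e^{-\beta J_{\inp}}=\CO(\e^{-\beta J_{\inp}})$.

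Next I would upgrade from $\tau=\infty$ to finite $\tau$. Lemma~\ref{lem:recover-Davies} with Bohr-frequency gap $\Delta_\nu=J_{\inp}$ gives
\begin{equation}
\bigl\|\CD^{\dag\beta,\tau,\vH}-\CD^{\dag\beta,\infty,\vH}\bigr\|_{\infty-\infty}
\le \CO\!\left(\max_{\nu}\Bigl|\gamma_\beta(\nu)-\!\int \gamma_\beta(\omega)|\hat f(\omega-\nu)|^2\rd\omega\Bigr|+\frac{1}{\sqrt{J_{\inp}\tau}}\right),
\end{equation}
and then Lemma~\ref{lem:Davies-error-bound} (with $\Lambda_0=1$) bounds the first term by $\CO((1+\beta)\ln\tau/\tau)$. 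Since $\vH$ is bounded, this error estimate is absorbed into the $1/\tau$ and $1/\sqrt{J_{\inp}\tau}$ terms in $\epsilon_{\inp}$ (folding the $\ln\tau$ and $\beta$ factors into the implicit constants or the $\beta$-regime allowed). Finally, Proposition~\ref{prop:norm-lamb-shift-part} allows me to discard the Lamb-shift contribution $\ri[\vH_{LS}^{\beta,\tau,\vH},\vH]$ from $\CL^\dag[\vH]$ at the cost of $\CO(\|\vH\|^{3/4}/\tau^{1/4})=\CO(J_{\inp}^{3/4}/\tau^{1/4})$ in operator norm.

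Combining these three error contributions yields
\begin{equation}
-\CL^{\dag}[\vH]\succeq r_{\inp}(\vI-\vZ)-\Bigl[\CO(\e^{-\beta J_{\inp}})+\CO\!\bigl(\tfrac{J_{\inp}^{3/4}}{\tau^{1/4}}+\tfrac{1}{\tau}+\tfrac{1}{\sqrt{J_{\inp}\tau}}\bigr)\Bigr]\vI,
\end{equation}
which is exactly the advertised bound. The only mildly delicate point is verifying that the Davies-to-finite-$\tau$ error from Lemma~\ref{lem:Davies-error-bound} really fits in the stated budget; this is where I expect to have to be careful about the $\beta$-dependence, but since we always choose $\beta=\poly(n,T,\delta^{-1})$ while $\tau$ is a larger polynomial, the $(1+\beta)\ln\tau/\tau$ bound is dominated by the terms already present in $\epsilon_{\inp}$.
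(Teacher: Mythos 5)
Your route is correct in its essentials but genuinely different from the paper's. You pass through the exact Davies generator at $\tau=\infty$ (whose computation you carry out correctly: $-\CD^{\dag\beta,\infty,\vH}[\vH]=J_{\inp}\gamma_\beta(-J_{\inp})\ketbrat{1}-J_{\inp}\gamma_\beta(J_{\inp})\ketbrat{0}$, with $r_{\inp}=\Omega(J_{\inp}/\ln\beta)$ and the heating term killed by KMS) and then transfer to finite $\tau$ via \cref{lem:recover-Davies} and \cref{lem:Davies-error-bound}. The paper never takes the Davies limit: after dropping the Lamb shift (\cref{prop:norm-lamb-shift-part}, cost $\CO(J_{\inp}^{3/4}/\tau^{1/4})$) and simplifying via \cref{lemma:expr_energy_gradient} (cost $\CO(1/\tau)$), it applies the secular approximation (\cref{cor:AA-SS}) at truncation $\mu=J_{\inp}/2$, after which the two Bohr-frequency blocks $\nu=\pm J_{\inp}$ decohere and the integral is evaluated directly, paying only the $\beta$-independent $\CO(1/\sqrt{J_{\inp}\tau})$ plus the $\CO(\e^{-\beta J_{\inp}})$ heating tail. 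What the paper's route buys is precisely the absence of any polynomial $\beta$-dependence in the error; what your route buys is a more transparent exact computation on the qubit.

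The one genuine caveat is the point you yourself flag: \cref{lem:Davies-error-bound} contributes an extra $\CO((1+\beta)\ln\tau/\tau)$ term (times $\|\vH\|=J_{\inp}\le 1$ after applying the superoperator bound to $\vH$, which only helps), and this term is \emph{not} dominated by the stated $\epsilon_{\inp}$ uniformly in $\beta,\tau$ — e.g.\ once $\beta\gtrsim\sqrt{\tau/J_{\inp}}/\ln\tau$ it exceeds $1/\sqrt{J_{\inp}\tau}$. So as written you prove the lemma with an enlarged error budget, not the lemma verbatim; your appeal to the eventual parameter choice ($\beta^2\ll\tau$) repairs the downstream application in Appendix~\ref{sec:gradient-in} but is external to the lemma statement. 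To get the stated $\epsilon_{\inp}$ for all $\beta,\tau>0$, replace the Davies detour by the direct secular-approximation step, exactly as the paper does.
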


\begin{proof}
Again, we invoke a series of approximations
\begin{align}
   \CL^\dag[\vH] 
   &\approx \CD^\dag[\vH]\tag{Proposition~\ref{prop:norm-lamb-shift-part}}\\
   &\approx \int_{-\infty}^{\infty} \gamma_{\beta}(\omega)\omega  \vA(\omega)^\dag\vA(\omega) \rd \omega \tag{Lemma~\ref{lemma:expr_energy_gradient}}\\
   &\approx\int_{-\infty}^{\infty} \gamma_{\beta}(\omega) \omega  \vS(\omega)^\dag\vS(\omega) \rd \omega
   \tag{secular approximation ($\mu = J_{\inp}/2$): Corollary~\ref{cor:AA-SS}}\\
   &=\sum_{\nu\in B(\vH)} \vA^{\dag}_{\nu}\vA_{\nu} \int_{-\infty}^{\infty} \gamma_{\beta}(\omega) \omega  \labs{\hat{f_{\mu}}(\omega-\nu)}^2 \rd \omega \tag{different blocks $\nu\ne \nu'$ decohere}\\
   & \preceq - \Omega(J_{\inp}/\ln\beta) \ket{1}\bra{1} + \CO(\e^{-\beta J_{\inp}}) \ket{0}\bra{0}. 
\end{align}
The last line uses the transition matrix elements for the two Bohr frequencies $\nu = \pm J_{\inp}$:
\begin{align}
    \vA_{+1} = \ket{1}\bra{0}, 
    \qquad\text{and}\qquad
    \vA_{-1} = \ket{0}\bra{1}.
\end{align}
Combine the error bound to conclude the proof.
\end{proof}

Summing up over the contributions from the individual qubits, the global gradient satisfies
\begin{align}
-\vP_{\rom2} \tilde{\CL}_{\rom3}^\dag[\tilde{\vH}_{\rom3}] \vP_{\rom2}  &\succeq \sum_{j=1}^n \alpha_j \vP_{\rom2} \L[r_{\inp} (\vI-\vZ_j^\eff)  - \epsilon_{\inp}\vI\R] \vP_{\rom2}  - (2\tilde{\verr}_a + \tilde{\verr}_b)\vI \nonumber \\
    &\succeq r_3 \vP_{\rom2}(\vI-\vP_{\rom3}) - \tilde{\epsilon}_3\vI
\end{align}
where we used the fact that $\sum_{j=1}^n (\vI-\vZ_j^\eff) \succeq \vP_{\rom3}$, and denoted $r_3 = r_{\inp}\min_j \alpha_j$, and $\tilde{\epsilon}_3 = 2\tilde{\verr}_a + \tilde{\verr}_b + \epsilon_{\inp}\sum_{j=1}^n \alpha_j$.
Since $\alpha_j = \Omega(1/T^2)$ and $\alpha_j\le 1$, we have that
\begin{equation}
\label{eq:r3}
r_3 = \Omega\Big (\frac{J_{\inp}}{T^2\ln\beta}\Big) 
\quad \text{and} \quad
\tilde{\epsilon}_3 \le 2\tilde{\verr}_a + \tilde{\verr}_b + n\epsilon_{\inp} = |S_0| \CO\Big(
    \frac{\|\tilde\vH_{\rom3}\|^{3/4}}{\tau^{1/4}} + \frac{1}{\tau} + \frac{1}{\beta} + \frac{1}{\sqrt{J_{\inp}\tau}} + \e^{-\beta J_{\inp}}
    \Big).
\end{equation}

Lastly, to obtain the gradient for the final Hamiltonian $\vH_{\rom3}$,  we need to add the off-block-diagonal perturbation and show that the gradient persists on the subspace.
Directly applying subspace monotonicity (\cref{cor:monotonicity_subspace}) yields loose bounds; we will need to invoke a finer-grained subspace monotonicity (\cref{cor:off_diag_mono}) that exploits the fact that $\vV$ is off-block-diagonal and contribute to eigenvalue change at \textit{second order} $\CO(\norm{\vV}^2)$. We apply \cref{cor:off_diag_mono} with $\vH=\tilde\vH_{\rom3}$ and $\vH' = \vH_{\rom3}$, and parameters
\begin{align*}
    \vQ &= \vP_{\rom2} \tag{low-energy eigensubspace of $\vH=\tilde{\vH}_{\rom3}$}\\
    \Delta_{\vQ} &= J_{\prop} - 2\norm{\vH_{\inp}} \tag{excitation gap}\\
    \vV &= \vP_{\rom2} \vH_{\inp} (\vI-\vP_{\rom2}) +  (\vI-\vP_{\rom2})\vH_{\inp}\vP_{\rom2}\tag{off-block-diagonal perturbation}\\
    \Delta_\nu &= J_{\inp} \tag{subspace Bohr-frequency gap}
\end{align*}
Therefore, by Corollary~\ref{cor:off_diag_mono},
\begin{align}
    -\vP_{\rom2}' \CL_{\rom3}^\dag[\vH_{\rom3}] \vP_{\rom2}' \succeq r_3 \vP_{\rom2}' (\vI-\vP_{\rom3})  - \epsilon_{3a} \vI
\end{align}
where
\begin{align}
    \epsilon_{3a} &\le \tilde{\epsilon_3} + \labs{S_0}\cdot 
    \CO\bigg(
        \frac{1}{\tau} +\frac{\|\tilde\vH_{\rom3}\|^{3/4}}{\tau^{1/4}}+\frac{\Lambda_0^{2/3}}{\tau^{1/3}}+\frac{\Lambda_0}{\sqrt{\Delta_{\nu}\tau}} +\frac{\Lambda_0}{\sqrt{\Delta_{\vQ}\tau}} + \frac{\e^{-\beta\Delta_{\vQ}/4}}{\beta}  \nonumber \\
        &\qquad\qquad\qquad +\frac{\norm{\vV}^2}{\Delta_{\vQ}} +  \norm{\vH_{\vQ}} \cdot \Big( \frac{\norm{\vH_{\vQ}}\norm{\vV}}{\Delta_{\vQ}\Delta_\nu}+\frac{\norm{\vV}^2}{\Delta_{\vQ}\Delta_\nu}\Big) + r_3 \Big(\frac{\norm{\vV}}{\Delta_{\vQ}} + \frac{\norm{\vV}^2}{\Delta_\nu\Delta_{\vQ}}\Big)
    \bigg).
\end{align}

Noting that $|S_0|, \|\tilde\vH_{\rom3}\| = \CO(T)$, $\|\vV\| = \CO(n T J_{\inp})$, $\|\vH_{\vQ}\| = n J_{\inp}$, and
$J_{\inp} \ll J_{\prop}$, we simplify the error bound above by keeping the dominant term to get
\begin{equation}
\label{eq:error-gradient-in}
\epsilon_{3a} \le 
T \CO\bigg(
    \frac{T^{3/4}}{\tau^{1/4}}  +\frac{1}{\beta} + \frac{1}{\sqrt{J_{\inp}\tau}} + \e^{-\beta J_{\inp}} + n \frac{(n T J_{\inp})^2}{J_{\prop}}  
\bigg)
\end{equation}
as advertised earlier in Eq.~\eqref{eq:L3-inside-P2}.

\section{Operator Fourier Transform}
\label{sec:OFT}
Recall that the exact form of thermal Lindbladians in Appendix~\ref{sec:thermo-lindblad-detail} involves the operator Fourier transform (OFT)~\cite{Chen2023quantumthermal} for a set of jump operators $\vA^a$.
In this appendix, we provide the key properties of OFT which are used in the proofs of many statements in Appendices~\ref{sec:characterize-neg-grad-condition} and \ref{sec:monotone_gradient}. 
For any operator $\vA$ (the jump operator), and Hermitian operator $\vH$ (the Hamiltonian), and a weight function $f$, the \textit{operator Fourier Transform} (OFT) is an integral over time-evolution of the operator $\vA$,
\begin{align}
    \hat{\vA}_{\hat{f}}(\omega) &:=  \frac{1}{\sqrt{2\pi}}\int_{-\infty}^{\infty} \e^{\ri \vH t} \vA \e^{-\ri \vH t} \e^{-\ri \omega t} f(t)\rd t.
\end{align}
Often, we will also write $\hat{\vA}(\omega)$ when we choose $f$ to be the default \emph{normalized window function} 
\begin{align}
 f_{\tau}(t) = \frac{1}{\sqrt{\tau}} \cdot \begin{cases}
 1 &\text{if} \quad t\in [-\tau/2,\tau/2]\\
 0 &\text{else}.
\end{cases} \label{eq:bump}
\end{align}
It is usually helpful to consider the energy eigenspaces $\vH = \sum_i E_i \ket{\psi_i}\bra{\psi_i} = \sum_{E\in \text{Spec}(\vH)} E\vP_{E}$ and and write $\vA$ as the following decomposition
\begin{equation}
    \vA =  \sum_{E_2,E_1 \in \text{Spec}(\vH)} \vP_{E_2} \vA \vP_{E_1} = \sum_{\nu\in B(\vH)} \vA_\nu 
    \quad \text{where}\quad 
    \vA_\nu :=  \sum_{E_2-E_1=\nu} \vP_{E_2} \vA \vP_{E_1}.
\end{equation}
Formally, these energy differences $\nu \in B(\vH):= \{ E_i - E_j \, | \, E_i, E_j \in \mathrm{Spec}(\vH) \}$ are called the \textit{Bohr frequencies} (Figure~\ref{fig:energybohr}), and $\vA_\nu$ collects the matrix elements $\bra{\psi_i}\vA\ket{\psi_j}$ that changes the energy by $\nu =  E_i - E_j$. 
The Bohr frequencies are natural for the Heisenberg-picture evolution of $\vA$ since
\begin{equation}
\label{eqn:Aoperator}
\e^{\ri \vH t} \vA \e^{-\ri \vH t} = \sum_{\nu \in B(\vH)} \e^{\ri \nu t}  \vA_\nu.
\end{equation}
Then, executing the Fourier integral yields the frequency domain representation
\begin{align}
     \hat{\vA}_f(\omega)&=\sum_{\nu \in B} \vA_{\nu} \hat{f}(\omega-\nu) \quad \text{where}\quad \hat{f}(\omega) := \frac{1}{\sqrt{2\pi}}\int_{-\infty}^{\infty}f(t)\e^{-\ri\omega t} \rd t,
\end{align}
which contains a collection of Bohr frequencies $\nu$ near $\omega$. Conceptually, we can think of the operator Fourier transform $\hat{\vA}_f(\omega)$ as the \textit{smooth} probe of $\vA_{\nu}$ with exact Bohr frequency $\nu$, which generally requires resolving arbitrarily close eigenvalues.

\subsection{Useful properties}
We instantiate some useful properties of the Operator Fourier Transform.

\begin{proposition}[{Operator Parseval's identity~\cite{Chen2023quantumthermal}}]\label{prop:iso_operator_FT}
	Consider a set of operators $\{\vA^a\}_{a}$ and its operator Fourier transform with weight $f\in \mathcal{L}_2(\mathbb{R})$ and Hamiltonian $\vH$. Then, we have a certain symmetry
	$(\vA^a_{f}(\omega))^{\dag} = \vA^{a\dag}_{f^*}(-\omega)$ and 
	certain Parseval-type identity
	\begin{align}
		\sum_{a\in S} \int_{-\infty}^{\infty}  \hat{\vA}^a_{f}(\omega)^{\dag} \hat{\vA}^a_{f}(\omega) \rd \omega
		&= \sum_{a\in S} \int_{-\infty}^{\infty} \e^{\ri \vH t}\vA^{a\dag}\vA^a\e^{-\ri \vH t}\labs{f(t)}^2\rd t
		\preceq  \norm{\sum_{a\in S}\vA^{a\dag}\vA^a} \norm{f}_2^2\cdot \vI \label{eq:contFourierIdentity}\\
		\sum_{a\in S} \int_{-\infty}^{\infty}  \hat{\vA}^a_{f}(\omega)\hat{\vA}^a_{f}(\omega)^{\dag} \rd \omega
		&= \sum_{a\in S} \int_{-\infty}^{\infty} \e^{\ri \vH t}\vA^a\vA^{a\dag}\e^{-\ri \vH t}\labs{f(t)}^2\rd t
		\preceq  \norm{\sum_{a\in S}\vA^{a}\vA^{a\dag}}\norm{f}_2^2 \cdot \vI.	\label{eq:contFourierIdentity2}
	\end{align}
\end{proposition}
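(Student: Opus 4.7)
The plan is to verify both assertions by direct calculation, treating the operator Fourier transform exactly like a scalar Fourier transform but keeping track of operator ordering. Everything reduces to standard Parseval-type manipulations, so I do not anticipate a serious conceptual obstacle; the work is bookkeeping.

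First I would handle the symmetry $(\hat{\vA}^a_f(\omega))^\dag = \hat{\vA}^{a\dag}_{f^*}(-\omega)$. Taking the adjoint inside the defining integral, the factor $e^{\ri\vH t}\vA^a e^{-\ri\vH t}$ becomes $e^{\ri\vH t}\vA^{a\dag}e^{-\ri\vH t}$, the scalar $e^{-\ri\omega t}$ becomes $e^{+\ri\omega t}$, and $f(t)$ becomes $f^*(t)$. The resulting expression is exactly the definition of $\hat{\vA}^{a\dag}_{f^*}(-\omega)$, so the symmetry is immediate.

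Next I would prove the first Parseval identity. Writing out $\int_{-\infty}^{\infty}\hat{\vA}^{a\dag}_f(\omega)\hat{\vA}^a_f(\omega)\rd\omega$ using the definition gives a triple integral
\begin{equation*}
\frac{1}{2\pi}\int\!\!\int\!\!\int f^*(t_1)f(t_2)\,e^{\ri\omega(t_1-t_2)}\,e^{\ri\vH t_1}\vA^{a\dag}e^{-\ri\vH t_1}e^{\ri\vH t_2}\vA^a e^{-\ri\vH t_2}\,\rd\omega\,\rd t_1\,\rd t_2.
\end{equation*}
Performing the $\omega$-integral first yields $2\pi\,\delta(t_1-t_2)$, which collapses $t_1=t_2=:t$ and leaves $\int|f(t)|^2 e^{\ri\vH t}\vA^{a\dag}\vA^a e^{-\ri\vH t}\rd t$. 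Summing over $a$ gives the claimed identity. The operator inequality then follows because for each fixed $t$ the unitary conjugation preserves operator order, so $e^{\ri\vH t}\sum_a\vA^{a\dag}\vA^a e^{-\ri\vH t}\preceq \|\sum_a\vA^{a\dag}\vA^a\|\,\vI$, and integrating against the nonnegative scalar $|f(t)|^2$ produces the factor $\|f\|_2^2$. The second identity is obtained by the identical computation with $\vA^a$ and $\vA^{a\dag}$ swapped (equivalently, apply the first identity to $\{\vA^{a\dag}\}_a$).

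The only subtlety worth being careful about is justifying the use of $\int e^{\ri\omega s}\rd\omega = 2\pi\delta(s)$ rigorously, since $\hat{\vA}^a_f(\omega)$ is an operator-valued Schwartz-like function when $f\in\mathcal{L}_2$. This can be handled by first assuming $f$ is a Schwartz function (so all integrals converge absolutely and Fubini applies term-by-term in the energy-eigenbasis expansion of $\vA^a$ from Eq.~\eqref{eqn:Aoperator}), obtaining the identity on a dense subspace, and then extending to $f\in\mathcal{L}_2$ by the continuity of both sides in the $\mathcal{L}_2$-norm of $f$. This is the only step that is not purely mechanical, and even it is routine.
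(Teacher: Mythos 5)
Your proof is correct, and since the paper itself imports this proposition from~\cite{Chen2023quantumthermal} without reproving it, the only benchmark is the standard argument, which is exactly what you give: the adjoint computation for the symmetry, collapsing the $\omega$-integral to a $\delta(t_1-t_2)$ (equivalently, scalar Plancherel applied blockwise in the Bohr-frequency decomposition of Eq.~\eqref{eqn:Aoperator}) to obtain $\int |f(t)|^2\,\e^{\ri\vH t}\vA^{a\dag}\vA^a\,\e^{-\ri\vH t}\,\rd t$, and then unitary invariance of the spectral norm plus positivity of $|f(t)|^2$ for the semidefinite bound, with the second identity following by replacing $\vA^a$ with $\vA^{a\dag}$. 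Your closing remark on rigor is also sound; in fact, since $\vH$ here acts on finitely many qubits, the Bohr-frequency sum is finite and the Plancherel route handles any $f\in\mathcal{L}_2$ directly, so even the Schwartz-density detour is unnecessary.
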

Intuitively, the above tells us that the Fourier Transforms $\hat{\vA}^a_f(\omega)$ with different frequencies $\omega$ are ``orthogonal'' to each other and that the average of squares of strengths is bounded (reminiscent of a probability). Without using this norm sum constraint from Fourier transforms, one easily gets loose bounds. An alternative view of the above (as a natural purification) will prove useful for manipulating norms of expression involving $\vA^a_{f}(\omega)$.  
\begin{corollary}[Purification of Operator Fourier Transform] \label{cor:iso_OFT}
In the prevailing notation, the abstract operator Fourier Transform has a norm bound,
\begin{align}
    \norm{\sum_{a\in S} \int_{-\infty}^{\infty} \hat{\vA}^a_{f}(\omega) \otimes \ket{a} \otimes\ket{\omega} \rd \omega} \le \norm{f}_2 \sqrt{\norm{\sum_{a\in S}\vA^{a\dag}\vA^a}},
\end{align}
where the continuous basis vectors satisfy the normalization
\begin{align}
\braket{\omega'|\omega} = \delta(\omega'-\omega).
\end{align}
\end{corollary}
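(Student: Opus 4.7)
The plan is to reduce the corollary directly to the operator Parseval identity stated in Proposition~\ref{prop:iso_operator_FT}. Let me denote the operator in question by
\begin{equation}
\vM := \sum_{a\in S} \int_{-\infty}^{\infty} \hat{\vA}^a_{f}(\omega) \otimes \ket{a} \otimes \ket{\omega}\, \rd \omega,
\end{equation}
which acts from the ``system'' Hilbert space into the tensor product of the system, an ancilla register indexing $a$, and a continuous register indexing $\omega$. The standard identity $\|\vM\| = \sqrt{\|\vM^\dagger \vM\|}$ tells us it suffices to control $\vM^\dagger \vM$ as an operator on the system.

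The key step is to compute $\vM^\dagger \vM$ by exploiting orthogonality in the ancilla registers. Using $\braket{a|a'} = \delta_{a,a'}$ and $\braket{\omega|\omega'} = \delta(\omega-\omega')$, the double sum and double integral collapse:
\begin{equation}
\vM^\dagger \vM = \sum_{a,a' \in S} \int\!\!\int \hat{\vA}^{a\dag}_{f}(\omega)\, \hat{\vA}^{a'}_{f}(\omega')\, \braket{a|a'}\, \braket{\omega|\omega'}\, \rd\omega\, \rd\omega' = \sum_{a\in S} \int_{-\infty}^{\infty} \hat{\vA}^{a\dag}_{f}(\omega)\, \hat{\vA}^{a}_{f}(\omega)\, \rd\omega.
\end{equation}
This is precisely the left-hand side of Eq.~\eqref{eq:contFourierIdentity} in Proposition~\ref{prop:iso_operator_FT}, which then gives the operator bound $\vM^\dagger \vM \preceq \|f\|_2^2 \,\|\sum_{a\in S} \vA^{a\dag}\vA^a\|\, \vI$.

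Taking operator norms and then the square root yields $\|\vM\| \le \|f\|_2 \sqrt{\|\sum_{a\in S}\vA^{a\dag}\vA^a\|}$, which is the claimed inequality. There is no real obstacle here; the only subtlety is being careful with the continuous ancilla register $\ket{\omega}$ and its Dirac-delta normalization, but this is just the statement that the ``purification'' is well-defined as a rigged operator whose norm is captured by $\vM^\dagger \vM$ in the usual sense. The symmetry statement $(\vA^a_f(\omega))^\dag = \vA^{a\dag}_{f^*}(-\omega)$ from the same proposition is not needed for this direction. The corollary is thus essentially an immediate repackaging of Parseval, made convenient for later manipulations where one wants to treat $\hat{\vA}^a_f(\omega)$ as components of a single bounded operator into an enlarged Hilbert space.
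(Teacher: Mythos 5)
Your proof is correct and follows essentially the same route as the paper: form $\vM^\dagger\vM$, collapse the registers via the orthogonality of $\ket{a}$ and the $\delta$-normalization of $\ket{\omega}$ to obtain $\sum_{a\in S}\int \hat{\vA}^a_f(\omega)^\dag \hat{\vA}^a_f(\omega)\,\rd\omega$, and then invoke the operator Parseval identity of Proposition~\ref{prop:iso_operator_FT} before taking the operator norm and square root. No gaps.
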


\begin{proof} We multiply the conjugate
    \begin{align}
    &\L(\sum_{a'\in S} \int_{-\infty}^{\infty} \hat{\vA}^{a'}_{f}(\omega')^{\dag} \otimes \bra{a'} \otimes\bra{\omega'} \rd \omega'\R) \cdot \L(\sum_{a\in S} \int_{-\infty}^{\infty} \hat{\vA}^a_{f}(\omega) \otimes \ket{a} \otimes\ket{\omega} \rd \omega \R) \nonumber\\
    &= \sum_{a\in S} \int_{-\infty}^{\infty}  \hat{\vA}^a_{f}(\omega)^{\dag} \hat{\vA}^a_{f}(\omega) \rd \omega.
\end{align}
Take the operator norm and use Proposition~\ref{prop:iso_operator_FT} to conclude the proof.
\end{proof}
The above ``purification'' trick applies to other quantities with a liberal choice of summation indices, whether they are $a, \omega,$ or $\nu$.
\begin{lemma}[Norm inequalities from operator purification]
\label{lem:Op_purification_bounds}
For any operator $\vO$ and any set of operators $\vA_i, \vA'_j$ acting on the same Hilbert space, we have that
\begin{align}
\textstyle
    \left\|\sum_{i,j} \vA_i^\dag \vO \vA'_j G_{ij}\right\| &\le \|\vG\| \norm{\vO} \sqrt{\left\| \sum_i \vA_i^\dag \vA_i \right\|\left\| \sum_j \vA_j^{'\dag} \vA'_j \right\|} \\
    \left\|\sum_{i,j} \vA_i^\dag \vA'_j G_{ij}\right\| &\le \|\vG\| \sqrt{\left\| \sum_i \vA_i^\dag \vA_i \right\|\left\| \sum_j \vA_j^{'\dag} \vA'_j \right\|}
\end{align}
\end{lemma}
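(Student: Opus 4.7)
The plan is to reduce both inequalities to a single application of submultiplicativity of the operator norm via the ``purification'' construction already used in Corollary~\ref{cor:iso_OFT}. Define the column-block operators
\begin{equation}
\mathbf{V} := \sum_i \vA_i \otimes \ket{i}, \qquad \mathbf{V}' := \sum_j \vA'_j \otimes \ket{j},
\end{equation}
viewed as maps from the underlying Hilbert space $\CH$ to $\CH \otimes \CH_{\mathrm{label}}$. A direct computation gives $\mathbf{V}^\dag \mathbf{V} = \sum_i \vA_i^\dag \vA_i$ and $\mathbf{V}'^\dag \mathbf{V}' = \sum_j \vA_j'^\dag \vA'_j$, so
\begin{equation}
\|\mathbf{V}\| = \sqrt{\left\|\sum_i \vA_i^\dag \vA_i\right\|}, \qquad \|\mathbf{V}'\| = \sqrt{\left\|\sum_j \vA_j'^\dag \vA'_j\right\|}.
\end{equation}

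Next, let $\vG = \sum_{i,j} G_{ij}\ket{i}\bra{j}$ acting on $\CH_{\mathrm{label}}$, and verify by expanding the tensor action on an arbitrary $\ket{\psi}\in\CH$ that
\begin{equation}
\mathbf{V}^\dag\,(\vO \otimes \vG)\,\mathbf{V}' \;=\; \sum_{i,j} G_{ij}\,\vA_i^\dag \vO \vA'_j,
\end{equation}
and analogously $\mathbf{V}^\dag\,(\vI \otimes \vG)\,\mathbf{V}' = \sum_{i,j} G_{ij}\,\vA_i^\dag \vA'_j$. Taking operator norms, submultiplicativity together with $\|\vO \otimes \vG\| = \|\vO\|\,\|\vG\|$ and $\|\vI \otimes \vG\| = \|\vG\|$ yields both advertised inequalities after substituting the expressions for $\|\mathbf{V}\|$ and $\|\mathbf{V}'\|$ above.

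There is no real obstacle here: the only thing one needs to be a bit careful about is unpacking the tensor-product action and confirming that the label indices get contracted correctly with $G_{ij}$, which is exactly the calculation that appears in the proof of Corollary~\ref{cor:iso_OFT}. The argument is purely algebraic and does not depend on any analytic properties of the operators (Hermiticity, normality, finite dimension, etc.), so it applies in full generality as stated.
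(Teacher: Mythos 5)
Your proposal is correct and follows essentially the same route as the paper: introduce the block-column purifications $\vV=\sum_i \vA_i\otimes\ket{i}$, $\vV'=\sum_j \vA'_j\otimes\ket{j}$, factor the double sum as $\vV^\dag(\vO\otimes\vG)\vV'$, and apply submultiplicativity; the paper merely normalizes all four norms to one first and writes the middle factor as $(\vO\otimes\vI)(\vI\otimes\vG)$. No gaps.
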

\begin{proof}
    By homogeneity, it suffices to set normalization to be $\norm{\sum_{j}\vA_j^{'\dag}\vA'_j} =\norm{\sum_{i}\vA_i^{\dag}\vA_i} = \norm{\vO}=\norm{\vG}=1$. Introduce purifications
    \begin{gather}
    \vG':= \vI\otimes \sum_{i,j} G_{ij}\ket{i}\bra{j}, \qquad
        \vO':= \vO\otimes \vI,\\
        \vV':= \sum_j \vA'_{j} \otimes \ket{j}, \qquad
        \vV:= \sum_i \vA_{i} \otimes \ket{i}, 
    \end{gather}
    which are all bounded by $\norm{\vG'}, \norm{\vO'}, \norm{\vV'},\norm{\vV} \le 1$. Then, 
    \begin{align}
        \bigg\|\sum_{i,j} \vA_i^\dag \vO \vA'_j G_{ij}\bigg\| & = \norm{ \vV' \vO' \vG' \vV} \le 1\\
         \bigg\|\sum_{i,j} \vA_i^\dag \vA'_j G_{ij}\bigg\| & = \norm{ \vV' \vG' \vV} \le 1.
    \end{align}
    Rescale to obtain the advertised result.
\end{proof}

\subsection{Secular approximation}\label{sec:secular}
Due to the energy-time uncertainty principle, the energies $\omega$ that are accessed by finite-time quantum algorithms always inherit an uncertainty.
Indeed, when we choose our weight function $f_{\tau}(t)$ in Eq.~\eqref{eq:bump} for our operator Fourier transform in the frequency domain, we have
\begin{align}
    \hat{\vA}_f(\omega)&=\sum_{\nu \in B} \vA_{\nu} \hat{f}(\omega-\nu) \quad \text{where}\quad \hat{f}(\omega) = \frac{\e^{\ri \omega \tau/2} - \e^{-\ri \omega \tau/2}}{\ri \omega \sqrt{2\pi\tau}} \quad \text{when}\quad f(t)=f_{\tau}(t).
\end{align}
Note with this choice, $\hat{f}(\omega)$ has a heavy tail $\sim1/\omega$, which is reminiscent of unamplified phase estimation. Therefore, even when restricting to jumps with $\omega<0$, there is a decent chance that $\hat\vA_f(\omega)$ mistakenly activates a heating transition ($\nu >0$) instead of a cooling transition $(\nu <0)$, unintentionally heating up the system instead of cooling it. 

To control the resulting error, in this section, we introduce the \textit{secular approximation}~\cite{Chen2023quantumthermal} of the Fourier transformed operators $\vA_{f}(\omega)$. The secular approximation applies truncation to the Fourier-transformed operators in the frequency domain by truncating Bohr frequencies $E \in B$ that deviate substantially from the frequency label $\omega$. Truncation at energy difference $\mu$ can be achieved by setting a step function and defining the following secular-approximated operators as follows
\begin{align}\label{eq:genSecDef}
	\hat{\vS}_{f,\mu}(\omega) := \sum_{\nu \in B} \vA_{\nu} \hat{f}(\omega-\nu)\cdot \hat{s}_{\mu}(\omega-\nu)\quad \text{where}\quad \hat{s}_{\mu}(\omega):=\indicator(\labs{\omega} < \mu).
\end{align}
We often drop subscript $f,\mu$ for simplicity. The truncation error is an operator whose norm can be bounded by the following (as a variant of Corollary~\ref{cor:iso_OFT}). 
\begin{corollary}[Secular approximation]\label{cor:S-A} In the prevailing notation,
    \begin{align}
        \norm{\sum_{a\in S} \int_{-\infty}^{\infty} (\hat{\vS}^a_{f}(\omega)-\hat{\vA}^a_{f}(\omega)) \otimes \ket{a} \otimes\ket{\omega} \rd \omega} \le \norm{\hat{f}\cdot (1-\hat{s}_{\mu})}_2 \sqrt{\norm{\sum_{a\in S}\vA^{a\dag}\vA^a}}. 
    \end{align}
\end{corollary}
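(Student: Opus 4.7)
The plan is to reduce this statement directly to the already-established Corollary~\ref{cor:iso_OFT}, by recognizing the difference $\hat{\vS}^a_{f,\mu}(\omega) - \hat{\vA}^a_f(\omega)$ as an operator Fourier transform with a modified weight function. Specifically, define a weight function $g$ in the time domain whose Fourier transform in frequency is
\begin{equation*}
    \hat{g}(\omega) := \hat{f}(\omega)\bigl(1 - \hat{s}_\mu(\omega)\bigr).
\end{equation*}
Then, using the explicit frequency-domain expansion $\hat{\vA}^a_f(\omega) = \sum_{\nu \in B(\vH)} \vA^a_\nu \hat{f}(\omega-\nu)$ and the definition of $\hat{\vS}^a_{f,\mu}$ in Eq.~\eqref{eq:genSecDef}, a short calculation shows
\begin{equation*}
    \hat{\vA}^a_f(\omega) - \hat{\vS}^a_{f,\mu}(\omega) \;=\; \sum_{\nu \in B(\vH)} \vA^a_\nu\, \hat{f}(\omega-\nu)\bigl(1 - \hat{s}_\mu(\omega-\nu)\bigr) \;=\; \hat{\vA}^a_g(\omega),
\end{equation*}
so the ``tail'' of the Fourier transform that the secular approximation discards is itself the operator Fourier transform of $\vA^a$ with weight $g$.

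Next, I would apply Corollary~\ref{cor:iso_OFT} with the weight $g$ replacing $f$. This yields
\begin{equation*}
    \Bigl\| \sum_{a \in S} \int_{-\infty}^\infty \hat{\vA}^a_g(\omega) \otimes \ket{a} \otimes \ket{\omega}\,\rd\omega \Bigr\| \;\le\; \norm{g}_2\, \sqrt{\Bigl\| \sum_{a\in S} \vA^{a\dag}\vA^a \Bigr\|}.
\end{equation*}
Finally, Plancherel's theorem gives $\norm{g}_2 = \norm{\hat{g}}_2 = \norm{\hat{f}\cdot(1-\hat{s}_\mu)}_2$, which matches the right-hand side of the claimed bound. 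An overall minus sign in $\hat{\vS}^a_{f,\mu}(\omega) - \hat{\vA}^a_f(\omega) = -\hat{\vA}^a_g(\omega)$ is absorbed by the operator norm on the left-hand side.

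The proof is essentially a one-line reduction, so I do not anticipate a substantive obstacle. The only mild point to verify is that $g \in \mathcal{L}_2(\BR)$ so that Corollary~\ref{cor:iso_OFT} legitimately applies; this is immediate because $\hat{f}\cdot(1-\hat{s}_\mu)$ is a pointwise-bounded truncation of $\hat{f}$, hence $\norm{\hat{g}}_2 \le \norm{\hat{f}}_2 = \norm{f}_2 < \infty$. Thus the conceptual work is just recognizing that the secular truncation in the frequency domain is equivalent to replacing $f$ by $g$, after which the bound follows from the Parseval-type identity already proven.
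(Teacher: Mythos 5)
Your proof is correct and is exactly the argument the paper intends: the paper states Corollary~\ref{cor:S-A} as ``a variant of Corollary~\ref{cor:iso_OFT},'' meaning precisely your reduction — identify $\hat{\vA}^a_f(\omega)-\hat{\vS}^a_{f,\mu}(\omega)$ as the operator Fourier transform with the tail weight $\hat{g}=\hat{f}\cdot(1-\hat{s}_\mu)$, apply the purification bound, and use Plancherel (which the paper itself invokes right after, noting the Fourier transform preserves the $2$-norm). Nothing further is needed.
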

The error is controlled by the 2-norm for the truncated tail
\begin{equation}
    \norm{\hat{f}\cdot (1-\hat{s}_{\mu})}_2^2 =  \int_{\labs{\omega} \ge \mu } |\hat{f}(\omega)|^2 \rd \omega
    \label{eq:tailbound} 
\end{equation}
noting that conveniently, the Fourier transform preserves the 2-norm of functions.
For our bump function~\eqref{eq:bump} in particular, we can integrate the tail-bound 
\begin{equation}
    \norm{\hat{f}_{\tau}\cdot (1-\hat{s}_{\mu})}_2^2 \le \frac{4}{\pi\mu\tau} \label{eq:secular_tailbound}.
\end{equation}
This conveniently leads to bounds on other quantities involving operator Fourier Transforms $\hat{\vA}^a(\omega)$. 
\begin{corollary}[Error from secular approximation]\label{cor:AA-SS} For any real function $\theta: \BR \rightarrow \BR$, 
    \begin{align}
        \lnorm{\sum_{a\in S} \int_{-\infty}^{\infty} \theta(\omega') \hat{\vA}_f^{a}(\omega')^{\dag}\hat{\vA}_f^a(\omega') \rd \omega'-\sum_{a\in S} \int_{-\infty}^{\infty} \theta(\omega') \hat{\vS}_{f,\mu}^{a}(\omega')^{\dag}\hat{\vS}_{f,\mu}^a(\omega') \rd \omega'} \nonumber\\
        \le 2\norm{\theta}_{\infty} \norm{\hat{f}\cdot (1-\hat{s}_{\mu})}_2 \norm{f}_2 \sqrt{\norm{\sum_{a\in S}\vA^{a\dag}\vA^a}}.
    \end{align}
\end{corollary}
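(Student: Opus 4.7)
My plan is to reduce the bound to the two ``purified'' operator-norm estimates already proved in Corollary~\ref{cor:iso_OFT} (bounding the full Fourier transform) and Corollary~\ref{cor:S-A} (bounding its secular truncation error), via the standard algebraic identity
\begin{equation}
\vX^{\dag}\Theta\vX-\vY^{\dag}\Theta\vY \;=\; \vX^{\dag}\Theta(\vX-\vY)+(\vX-\vY)^{\dag}\Theta\,\vY.
\end{equation}
So the first step will be to introduce ancilla registers indexed by $a\in S$ and a continuous register indexed by $\omega\in\BR$, and define the purified operators
\begin{equation}
\vV := \sum_{a\in S}\int_{-\infty}^{\infty}\hat{\vA}^a_f(\omega)\otimes\ket{a}\otimes\ket{\omega}\,\rd\omega,\qquad \vW := \sum_{a\in S}\int_{-\infty}^{\infty}\hat{\vS}^a_{f,\mu}(\omega)\otimes\ket{a}\otimes\ket{\omega}\,\rd\omega,
\end{equation}
and the bounded diagonal operator $\vect{\Theta}:=\vI\otimes\sum_a\ketbrat{a}\otimes\int\theta(\omega)\ketbrat{\omega}\rd\omega$ on the ancilla, which obviously satisfies $\norm{\vect{\Theta}}\le\norm{\theta}_{\infty}$.

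Second, I will observe that the operator whose norm we want to bound is exactly $\vV^{\dag}\vect{\Theta}\vV-\vW^{\dag}\vect{\Theta}\vW$: collapsing the ancilla registers via $\braket{a'|a}=\delta_{a,a'}$ and $\braket{\omega'|\omega}=\delta(\omega-\omega')$ precisely reproduces the two integrals in the statement. Then applying the algebraic identity above with $\vX=\vV$, $\vY=\vW$ and the triangle inequality yields
\begin{equation}
\norm{\vV^{\dag}\vect{\Theta}\vV-\vW^{\dag}\vect{\Theta}\vW}\le\norm{\vect{\Theta}}\,\norm{\vV-\vW}\,\bigl(\norm{\vV}+\norm{\vW}\bigr).
\end{equation}

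Third, I will plug in the ready-made estimates. Corollary~\ref{cor:iso_OFT} directly gives $\norm{\vV}\le\norm{f}_2\sqrt{\norm{\sum_a\vA^{a\dag}\vA^a}}$. For $\norm{\vW}$, note that $\hat{\vS}^a_{f,\mu}(\omega)$ is itself the operator Fourier transform of $\vA^a$ with weight function $g$ whose Fourier transform is $\hat{f}\cdot\hat{s}_{\mu}$; since $\abs{\hat{s}_{\mu}}\le 1$ pointwise we have $\norm{g}_2=\norm{\hat f \hat s_\mu}_2\le\norm{f}_2$, so Corollary~\ref{cor:iso_OFT} applied to $g$ yields the same bound on $\norm{\vW}$. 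Corollary~\ref{cor:S-A} gives the crucial difference estimate $\norm{\vV-\vW}\le\norm{\hat{f}\cdot(1-\hat{s}_{\mu})}_2\sqrt{\norm{\sum_a\vA^{a\dag}\vA^a}}$. Combining these with $\norm{\vect{\Theta}}\le\norm{\theta}_{\infty}$ assembles into the stated bound (the factor of $2$ comes from the two triangle-inequality terms, i.e.\ from $\norm{\vV}+\norm{\vW}$).

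The proof is short and routine once the purification framework is in place; there is no real obstacle. The only subtlety worth double-checking is that $\hat{\vS}^a_{f,\mu}$ really fits the operator Fourier transform framework with an $L^2$-controlled weight, so that Corollary~\ref{cor:iso_OFT} applies to $\vW$ exactly as to $\vV$; this follows from the identity $\hat{\vS}^a_{f,\mu}(\omega)=\sum_{\nu\in B}\vA^a_\nu\,\hat{f}(\omega-\nu)\hat{s}_\mu(\omega-\nu)$ and Plancherel's theorem, ensuring $\norm{g}_2\le\norm{f}_2$ as noted.
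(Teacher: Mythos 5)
Your proposal is correct and follows essentially the same route as the paper's own proof: the same purified operators $\vV$, $\vV'$ (your $\vW$) and diagonal weight operator, the same telescoping/triangle-inequality split, and the same three ingredients — Corollary~\ref{cor:iso_OFT} for $\norm{\vV}$ and $\norm{\vW}$ (via $\norm{\hat f\cdot\hat s_\mu}_2\le\norm{f}_2$), Corollary~\ref{cor:S-A} for $\norm{\vV-\vW}$, and $\norm{\theta}_\infty$ for the weight. No gaps.
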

\begin{proof}
It suffices to set normalization $\norm{\sum_{a\in S}\vA^{a\dag}\vA^a} = 1$. Introduce purifications
    \begin{align}
        \vF&:= \vI\otimes \vI\otimes \int_{-\infty}^{\infty} \theta(\omega) \ket{\omega}\bra{\omega} \rd \omega, \\
        \vV&:=\L(\sum_{a\in S}\int_{-\infty}^{\infty} \hat{\vA}^a_{f}(\omega) \otimes \ket{a} \otimes\ket{\omega} \rd \omega \R), \\
        \vV'&:=\L(\sum_{a\in S} \int_{-\infty}^{\infty} \hat{\vS}^a_{f}(\omega) \otimes \ket{a} \otimes\ket{\omega} \rd \omega \R).
    \end{align}
Then, by a telescoping sum,
\begin{align}
    (LHS) = \norm{\vV^{\dag} \vF\vV- \vV^{'\dag} \vF\vV'} &\le \norm{(\vV^{\dag}-\vV^{'\dag})\vF\vV} +\norm{\vV^{'\dag}\vF(\vV-\vV')}.
\end{align}
We conclude the proof using the individual bounds on the operator norm
\begin{align}
     \norm{\vV} &\le \norm{f}_2\quad \tag{Corollary~\ref{cor:iso_OFT}}\\
     \norm{\vV'}&\le \norm{\hat{f}\cdot \hat{s}_{\mu}}_2 \le \norm{f}_2\quad \tag{Corollary~\ref{cor:iso_OFT}}\\
    \norm{\vF} &\le \norm{\theta}_{\infty}\\
    \norm{\vV'-\vV}&\le \norm{\hat{f}\cdot (1-\hat{s}_{\mu})}_2 \quad \tag{Corollary~\ref{cor:S-A}}.
\end{align}
This concludes the proof.
\end{proof}

\section{Proving monotonicity of energy gradient under level splitting}
\label{sec:monotone_gradient}

It will often be helpful to understand how the energy gradients of a Hamiltonian $\vH$ change when a perturbation $\vV$ is added to yield $\vH'=\vH+\vV$.
This allows us to characterize the energy gradient of $\vH'$ by analyzing the unperturbed spectrum of $\vH$, which is usually much simpler than that of $\vH'$.
Indeed, this is an important part of our proof strategy for showing our key result that $\vH_C$ has no suboptimal local minima (\cref{thm:Hcircuit} in \cref{sec:universal-quantum-computation}). 

The relationship we can prove, which was previously stated in \cref{thm:mono_gradient} in \cref{sec:characterize-neg-grad-condition}, takes the form of \textit{monotonicity}. As the name implies, the result only holds in one direction; it fails when the $\vH'$ and $\vH$ are switched. It is imperative that $\vH$ have a highly degenerate spectrum with the \textit{Bohr-frequency gap}
\begin{align}
    \Delta_{\nu}(\vH): = \min_{\nu_1 \ne \nu_2 \in B(\vH)} \labs{\nu_1-\nu_2}, 
\end{align}
which sets an energy scale for which $\vV$ is perturbative. Note that the Bohr-frequency gap is upper bounded by the gap in the spectrum\footnote{The Bohr-frequency gap can be much smaller than the eigenvalue gap. For example, consider the energies $\{-0.99, 0 , 1\}$, which has an eigenvalue gap of $0.99$ and a Bohr-frequency gap of $0.01$.
}
\begin{align}
    \Delta_{\nu}\le \Delta_E \quad \text{where} \quad \Delta_{E} = \min_{E_1 \ne E_2 \in \text{Spec}(\vH)} \labs{E_1-E_2}.
\end{align}

We now state a more general version of \cref{thm:mono_gradient}, which we prove in the remainder of this appendix.

\begin{theorem}[Monotonicity of gradient under level splitting, expanded version]
\label{thm:mono_gradient_full}
Consider a Hamiltonian $\vH$ with a highly degenerate spectrum and Bohr-frequency gap $\Delta_{\nu} := \min_{\nu_1\neq \nu_2\in B(\vH)} |\nu_1-\nu_2|$, and a perturbed Hamiltonian $\vH'=\vH+\vV$.
Suppose the perturbation is weaker than the Bohr-frequency gap,  $\norm{\vV} \le \frac18\Delta_{\nu}$.
For any $\beta, \tau >0$, let $\CL = \sum_{a\in S} \CL^{\beta,\tau, \vH}_a, \CL' = \sum_{a\in S} \CL^{\beta,\tau, \vH'}_a$ be thermal Lindbladians with jumps $\{\vA^a\}_{a\in S}$, where $\norm{\vA^a}\le 1$ and the transition weight $\gamma_{\beta}(\omega)$ is given by Eq.~\eqref{eq:glauber-dyn}.
Let $\delta_\lambda=\max_j|\lambda_j(\vH)-\lambda_j(\vH')|$, 
where $\lambda_j(\vX)$ is the $j$-th largest eigenvalue of $\vX$, and let
$\theta_{\max}=\max_{\nu\in B(\vH)}|\nu\gamma_\beta(\nu) \indicator(\nu\le \Delta_\nu/2)|$.
For any two operators $\vO$ and $\vO'$, where $ [\vO', \vH']=0$, we have the monotone property that
\begin{align}
    -\CL^\dag[\vH] \succeq r\vO -\epsilon \vI \quad \text{implies}\quad -\CL^{'\dag}[\vH'] \succeq r\vO'- \epsilon' \vI
\end{align}
where
\begin{equation}
    \epsilon' \le \epsilon+ \labs{S} \cdot \CO\L(\frac{1}{\tau} +\frac{\norm{\vH}^{3/4}}{\tau^{1/4}}+\frac{\Lambda_0^{2/3}}{\tau^{1/3}}+\frac{\Lambda_0}{\sqrt{\Delta_{\nu}\tau}}  + \frac{\e^{-\beta\Delta_{\nu}/4}}{\beta} +\delta_\lambda + \theta_{\max} \frac{\norm{\vV}}{\Delta_{\nu}} + r \|\vO-\vO'\|\R).
\end{equation}
For the special case of $\vO = \vI-\vP$ and $\vO' = \vI-\vP'$, where $\vP$ projects onto an eigensubspace of $\vH$, and $\vP'$ projects onto the corresponding perturbed eigensubspace in $\vH'$, then we have the following simpler error bound
\begin{align}
    \epsilon' \le \epsilon+ \labs{S} \cdot \CO\L(\frac{1}{\tau} +\frac{\norm{\vH}^{3/4}}{\tau^{1/4}}+\frac{\Lambda_0^{2/3}}{\tau^{1/3}}+\frac{\Lambda_0}{\sqrt{\Delta_{\nu}\tau}}  + \frac{\e^{-\beta\Delta_{\nu}/4}}{\beta} +\big(1+\frac{\Lambda_0+r}{\Delta_\nu}\big)\norm{\vV}\R).
\end{align}
\end{theorem}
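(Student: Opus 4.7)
The plan is to express both $-\CL^\dag[\vH]$ and $-\CL'^{\dag}[\vH']$ in a common ``canonical form'' via the operator Fourier transform together with the secular approximation, and then compare them term-by-term. The perturbation $\vV$ will cause a small, controllable deformation of this canonical form, which propagates into the final error $\epsilon'$.

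First I would peel off the Lamb-shift contribution to $-\CL^\dag[\vH]$ using \cref{prop:norm-lamb-shift-part}, expand the purely dissipative piece as in the proof of \cref{lem:gradient_upperbound}, and apply the secular approximation (\cref{cor:AA-SS}) at window $\mu := \Delta_\nu/4$. This gives
\begin{align*}
-\CL^\dag[\vH] \;=\; -\sum_{a\in S}\sum_{\nu\in B(\vH)} \nu\gamma_\beta(\nu)\,\vA^{a\dag}_\nu\vA^a_\nu \;\pm\; \vE,
\end{align*}
where $\vE$ collects the approximation errors of order $|S|\cdot\CO\!\left(\|\vH\|^{3/4}/\tau^{1/4} + 1/\tau + \Lambda_0/\sqrt{\Delta_\nu\tau} + \Lambda_0^{2/3}/\tau^{1/3}\right)$, the last term arising from optimizing the secular window against the $|\nu\gamma_\beta(\nu)|$ envelope for large $|\nu|$. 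Heating contributions with $\nu\ge\Delta_\nu/2$ are suppressed to $\CO(e^{-\beta\Delta_\nu/4}/\beta)$ by the KMS tail bound \eqref{eq:gamma-tailbound}. The identical manipulation for $-\CL'^{\dag}[\vH']$ yields an analogous expression indexed by $B(\vH')$ with the same error (using $\|\vH'\|\le \|\vH\|+\|\vV\|$).

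The heart of the argument is the comparison step. Since $\|\vV\|\le\Delta_\nu/8 \le \Delta_E/8$, Weyl's inequality supplies a canonical bijection $E\leftrightarrow E'$ between $\mathrm{Spec}(\vH)$ and $\mathrm{Spec}(\vH')$ with $|E-E'|\le\delta_\lambda$, while standard first-order perturbation theory bounds the associated spectral projectors by $\|\vP_E-\vP'_{E'}\| = \CO(\|\vV\|/\Delta_\nu)$; this lifts to a bijection $\nu\leftrightarrow\nu'$ between $B(\vH)$ and $B(\vH')$ with $|\nu-\nu'|\le 2\|\vV\|$, which sits safely inside the window $\mu$. Using \cref{lem:Op_purification_bounds} with the normalization $\sum_\nu \vA^{a\dag}_\nu\vA^a_\nu = \vA^{a\dag}\vA^a\preceq\vI$ to keep the Bohr-frequency sum bounded, I would then estimate
\begin{align*}
\bigl\|\CL'^{\dag}[\vH']-\CL^\dag[\vH]\bigr\| \;\le\; |S|\cdot\CO\!\left(\delta_\lambda + \theta_{\max}\tfrac{\|\vV\|}{\Delta_\nu}\right),
\end{align*}
where the $\theta_{\max}$ factor captures the weight attached to projector deformation for $|\nu|\le\Delta_\nu/2$ and $\delta_\lambda$ captures the direct shift of the Bohr frequencies through $\nu\gamma_\beta(\nu)$ (whose derivative is $\CO(\Lambda_0)$ on the relevant range).

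Combining the above, $-\CL'^{\dag}[\vH']\succeq -\CL^\dag[\vH] - (\mathrm{error})\,\vI\succeq r\vO - (\epsilon+\mathrm{error})\,\vI$, and replacing $\vO$ by $\vO'$ costs an additional $r\|\vO-\vO'\|\vI$, yielding the general bound. In the special case $\vO=\vI-\vP$, $\vO'=\vI-\vP'$ we have $\|\vO-\vO'\|=\|\vP-\vP'\|=\CO(\|\vV\|/\Delta_\nu)$, and $\theta_{\max}=\CO(\Lambda_0)$ by the Gaussian cutoff in Eq.~\eqref{eq:glauber-dyn}, which consolidates everything into the simpler stated form. The main obstacle will be tracking the perturbative error through the operator Fourier transform so that it depends on the (potentially much smaller) Bohr-frequency gap $\Delta_\nu$ rather than the spectral gap $\Delta_E$: the trick is that the secular window $\mu=\Delta_\nu/4$ is wide enough to absorb the $2\|\vV\|$ shift of each $\nu$ yet narrow enough to keep different Bohr-frequency blocks decoupled, so that the bijection $\nu\leftrightarrow\nu'$ respects the block-diagonal structure on both sides.
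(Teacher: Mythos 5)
Your peripheral error accounting (Lamb shift, secular window $\mu\sim\Delta_\nu$, KMS tail, rounding of $\theta$) matches the paper's, but the central comparison step is wrong: the two-sided estimate $\bigl\|\CL'^{\dag}[\vH']-\CL^{\dag}[\vH]\bigr\|\le|S|\cdot\CO(\delta_\lambda+\theta_{\max}\|\vV\|/\Delta_\nu)$ does not hold, and this failure is exactly why the theorem is a \emph{monotonicity} statement rather than a stability statement. The trouble is inside a degenerate block: since $\tau\|\vV\|$ may be arbitrarily large, $\CL'$ resolves the split levels, so its gradient is (approximately) the dephased sum $\sum_{\nu'}\theta(\nu')\vA^{a\dag}_{\nu'}\vA^a_{\nu'}$ over perturbed Bohr frequencies, whereas $\CL^{\dag}[\vH]$, rewritten in the perturbed eigenbasis, retains the cross terms $\vA^{a\dag}_{\nu'_1}\vA^a_{\nu'_2}$ with $\nu'_1\neq\nu'_2$ inside the same block, weighted by $\theta(\nu)=\Theta(\theta_{\max})$; these cross terms are generically order one and are not suppressed by $\|\vV\|/\Delta_\nu$. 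Concretely, take $\vH=\tfrac12\vZ_1$, $\vV=\epsilon\,\vZ_1\vZ_2$, a single jump $\vA=\vsigma^-_1\otimes\ketbra{0}{+}_2$, and $\beta$ large, $\tau\epsilon\gg1$: then $-\CL^{\dag}[\vH]\approx\gamma_\beta(-1)\,\ketbrat{e}_1\otimes\ketbrat{+}_2$ while $-\CL'^{\dag}[\vH']\approx\tfrac12\gamma_\beta(-1)\,\ketbrat{e}_1\otimes\vI_2$, so the norm difference is $\Theta(\gamma_\beta(-1))$ uniformly as $\epsilon\to0$, contradicting your bound. Two further symptoms of the same issue: the individual split-level projectors are \emph{not} within $\CO(\|\vV\|/\Delta_\nu)$ of any unperturbed structure (only the aggregate cluster projector is, so your ``bijection $\nu\leftrightarrow\nu'$'' is really a many-to-one grouping), and your argument never invokes the hypothesis $[\vO',\vH']=0$, which would indeed be superfluous if a two-sided norm bound were available — but the theorem cannot hold without it.

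The correct route, which the paper takes, is inherently one-sided. One first rewrites the \emph{unperturbed} gradient in the perturbed eigenbasis, $\CL^{\dag}[\vH]\approx\sum_{a,\nu}\theta_-(\nu)\,(\vA^a_{\approx\nu})^{\dag}\vA^a_{\approx\nu}$ with error $\CO(|S|\,\theta_{\max}\|\vV\|/\Delta_\nu)$ (\cref{prop:thetaAAAA}; this is where your purification bound legitimately enters), and rewrites $\CL'^{\dag}[\vH']$ as $\sum_{a,\nu}\theta_-(\nu)\int\hat\vA^a_{\approx\nu}(\omega')^{\dag}\hat\vA^a_{\approx\nu}(\omega')\,\rd\omega'$, i.e.\ the time-averaged (dephased with respect to $\vH'$) version of the same expression, using the secular window to decohere distinct \emph{unperturbed} blocks while keeping each block intact. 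Dephasing can move the operator by $\Theta(\theta_{\max})$ in norm, but by \cref{lemma:mono_rate} it preserves PSD lower bounds by operators commuting with $\vH'$ — this is precisely where $[\vO',\vH']=0$ is used and where $-\CL'^{\dag}[\vH']\succeq r\vO'-\epsilon'\vI$ comes from, with the sign control supplied by truncating to $\theta_-\le0$. Replacing your norm-comparison step by this rewriting-plus-dephasing argument, and keeping your treatment of the Lamb shift, the secular approximation, and the heating tail, yields the stated bounds.
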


The above result is non-trivial because na\"ive perturbation theory fails: the Lindbladian depends sensitively on the perturbation $\vV$ (as it uses a long Hamiltonian simulation time $\tau\lVert \vV \rVert \gg 1$). In fact, it drastically fails if the energy spectrum of $\vH$ has a (nearly) continuous spectrum (as the opposite of the premise of gapped degenerate subspaces). We can understand the energy scale associated with the minimum Bohr-frequency gap $\Delta_{\nu}$ as the meaningful quantity for which $\vV$ is a perturbation
\begin{align}
     \frac{1}{\tau} \ll \norm{\vV} \ll \Delta_{\nu}.
\end{align}
Otherwise, the $1/\tau$ energy resolution is too small compared to the intended perturbation $\norm{\vV}$.

The proof of \cref{thm:mono_gradient_full} will be quite involved. Technically, we heavily utilize the manipulations using the operator Fourier Transform (Appendix~\ref{sec:OFT}). The key subroutines of the proof are discussed separately as follows.
First, in Appendix~\ref{sec:expressing_gradient}, we will simplify the intimidating expression for energy gradient $\CL^{\dag}[\vH]$.
Secondly, in Appendix~\ref{sec:Mono_rate}, we isolate the key non-perturbative argument, which roughly says level splitting only improves the gradient.
We then provide some results from perturbation theory in \cref{sec:perturb-theory}.
The altogether proof is presented in Appendix~\ref{sec:altogether}, with minor supporting calculations in \cref{sec:supplement_calc}.
We also prove two corollaries of \cref{thm:mono_gradient_full} that apply to subspace gradients in Appendix~\ref{sec:monotone_grad_subspace}.

Since this appendix only consider thermal Lindbladians, in what follows we will drop the superscripts $\beta,\tau,\vH$, i.e. $\CL\equiv \CL^{\beta,\tau,\vH}$, $\CL'\equiv \CL^{\beta,\tau,\vH'}$, $\CD \equiv \CD^{\beta,\tau,\vH}$, etc.

\subsection{Expressing the energy gradient}
\label{sec:expressing_gradient}
The thermal Lindbladian is quite cumbersome to manipulate. Nicely, the energy gradient operator associated with the dissipative part $\CD^{\dag}[\vH]$ permits a much simpler approximate form up to a controllable error. Combining with error bounds on the Lamb-shift term $[\vH_{LS},\vH]$ (Proposition~\ref{prop:norm-lamb-shift-part}) allows us to approximate the full gradient operator $\CL^{\dag}[\vH]$.
\begin{lemma}[Expression for energy gradient]\label{lemma:expr_energy_gradient}
Consider the operator Fourier Transforms $\hat{\vA}^a(\omega)$ weighted by the bump function $f_{\tau}$ in Eq.~\eqref{eq:bump} with Hamiltonian $\vH$. Then, for any Fourier transform pairs $\gamma(\omega)$ and $c(t)$, the energy gradient associated with the purely dissipative Lindbladian
\begin{align}
        \CD^{\dag}[\vH] &=\sum_{a\in S} \int_{-\infty}^{\infty} \gamma(\omega) \left( 
\hat{\vA}^{a}(\omega)^{\dag} \vH \hat{\vA}^a(\omega) -\frac{1}{2} \{\hat{\vA}^{a}(\omega)^{\dag}\hat{\vA}^a(\omega),\vH\} \right)\rd \omega
\end{align}
can be well approximated as a simpler form using
\begin{align}
    \norm{\CD^{\dag}[\vH] - \sum_{a\in S} \int_{-\infty}^{\infty} \gamma(\omega)\omega
 \hat{\vA}^{a}(\omega)^{\dag}\hat{\vA}^a(\omega) \rd \omega} \le\frac{2}{\sqrt{2\pi}\tau} \cdot \norm{c}_1\cdot  \norm{\sum_{a\in S}\vA^{a\dag}\vA^a}.
\end{align}
\end{lemma}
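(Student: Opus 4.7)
The plan is to isolate the ``error term'' between $\CD^\dag[\vH]$ and the simplified expression via an algebraic identity involving a ``boundary operator'' $\hat{\vB}^a(\omega)$, then evaluate $\hat{\vB}^a(\omega)$ explicitly using the compact support of $f_\tau$ and convert the $\omega$-integral into a time-domain integral via the Fourier pair $(\gamma,c)$.

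First, I would establish the key commutation relation: since $\vA^a_\nu$ shifts energy by $\nu$, we have $\vH \vA^a_\nu = \vA^a_\nu(\vH+\nu)$, so summing against $\hat f(\omega-\nu)$ yields
\begin{equation}
    \vH\hat{\vA}^a(\omega)-\hat{\vA}^a(\omega)\vH \;=\; \omega\,\hat{\vA}^a(\omega)+\hat{\vB}^a(\omega), \qquad \hat{\vB}^a(\omega):=\sum_{\nu\in B(\vH)}(\nu-\omega)\,\vA^a_\nu\,\hat{f}(\omega-\nu).
\end{equation}
A short computation using this identity twice (once for $\hat{\vA}^a$ and once for $\hat{\vA}^{a\dag}$) gives the pointwise algebraic identity
\begin{equation}
    \hat{\vA}^{a\dag}(\omega)\vH\hat{\vA}^a(\omega)-\tfrac12\{\hat{\vA}^{a\dag}(\omega)\hat{\vA}^a(\omega),\vH\}-\omega\,\hat{\vA}^{a\dag}(\omega)\hat{\vA}^a(\omega)=\tfrac12\bigl(\hat{\vB}^{a\dag}(\omega)\hat{\vA}^a(\omega)+\hat{\vA}^{a\dag}(\omega)\hat{\vB}^a(\omega)\bigr).
\end{equation}
So the entire task reduces to bounding $\|\sum_a\int\gamma(\omega)\,\hat{\vB}^{a\dag}(\omega)\hat{\vA}^a(\omega)\,\rd\omega\|$ and its adjoint.

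Second, observe that $\hat{\vB}^a(\omega)$ is a Fourier-transformed operator with weight $g(t)=\ri f_\tau'(t)$. Because $f_\tau$ is a rectangular bump, $g$ is supported on $\{\pm\tau/2\}$ as delta functions, giving the explicit boundary formula
\begin{equation}
    \hat{\vB}^a(\omega)\;=\;\frac{\ri}{\sqrt{2\pi\tau}}\bigl(\e^{\ri\omega\tau/2}\vA^a(-\tau/2)-\e^{-\ri\omega\tau/2}\vA^a(\tau/2)\bigr).
\end{equation}
Substituting this and the definition of $\hat{\vA}^a(\omega)$, swapping the order of integration, and recognizing that $\int\gamma(\omega)\e^{-\ri\omega(t\pm\tau/2)}\rd\omega=\sqrt{2\pi}\,c(-t\mp\tau/2)$ collapses the $\omega$-integral into a single time-domain integral of $c$ against Heisenberg-evolved operators $\vA^a(t)$, multiplied by the boundary pieces $\vA^{a\dag}(\mp\tau/2)$. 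The $\tau$ in the denominator comes out naturally from the $1/\sqrt\tau$ prefactors of both $f_\tau$ and $g$.

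Third, the norm bound follows from operator Cauchy--Schwarz: for any times $t_1,t_2$,
\begin{equation}
    \Bigl\|\sum_a \vA^{a\dag}(t_2)\vA^a(t_1)\Bigr\|\le \Bigl\|\sum_a \vA^{a\dag}\vA^a\Bigr\|,
\end{equation}
by the unitary invariance of the norm applied after splitting via Cauchy--Schwarz. Pulling this out of the $\int|c(\cdot)|\rd t$ integral (which is bounded by $\|c\|_1$) yields the claimed factor $\frac{2}{\sqrt{2\pi}\tau}\|c\|_1\|\sum_a\vA^{a\dag}\vA^a\|$, with the factor of $2$ coming from the two boundary terms. The adjoint expression $\hat{\vA}^{a\dag}(\omega)\hat{\vB}^a(\omega)$ is handled symmetrically, and averaging the two yields the stated bound. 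The only slightly subtle step is being careful that the boundary-term representation of $\hat{\vB}^a(\omega)$ does not blow up under the $\omega$-integral; this is where introducing $c=\hat\gamma$ is crucial, since it absorbs the would-be divergence into $\|c\|_1$.
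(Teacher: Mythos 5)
Your proposal is correct and follows essentially the same route as the paper: the commutator identity $\hat{\vA}^{a\dag}\vH\hat{\vA}^a-\tfrac12\{\hat{\vA}^{a\dag}\hat{\vA}^a,\vH\}=\tfrac12(\hat{\vA}^{a\dag}[\vH,\hat{\vA}^a]-[\vH,\hat{\vA}^{a\dag}]\hat{\vA}^a)$ combined with the boundary-term relation $[\vH,\hat{\vA}^a(\omega)]=\omega\hat{\vA}^a(\omega)+\hat{\vB}^a(\omega)$ is exactly the paper's integration-by-parts step (Prop.~\ref{prop:byParts}), your delta-function derivation of $\hat{\vB}^a$ reproduces the same boundary operator, and the conversion to the time domain via $c=\hat\gamma$ plus the Cauchy--Schwarz/purification bound $\|\sum_a\vA^{a\dag}(t_2)\vA^a(t_1)\|\le\|\sum_a\vA^{a\dag}\vA^a\|$ matches the paper's error estimate, including the constant $\frac{2}{\sqrt{2\pi}\tau}\|c\|_1$.
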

Intuitively, the following expression is the simplest proxy one can write down to capture the rate of energy change
\begin{align}
   \sum_{a\in S} \int_{-\infty}^{\infty} \omega\times \gamma(\omega) \hat{\vA}^{a}(\omega)^{\dag}\hat{\vA}^a(\omega) \rd \omega \sim \text{(energy difference)} \times \text{(rate)}.
\end{align}
Indeed, the Bohr frequency $\omega$ is essentially the energy difference after jump operator $\vA^a(\omega)$; but because of the energy uncertainty in the operator Fourier transform (i.e., because $\hat{f}(\omega)$ is not a delta function), this interpretation must be corrected by an error scaling as the energy resolution $\sim 1/\tau$.
The starting point of the calculation is a certain integration-by-part trick that relates the Hamiltonian operator $\vH$ to the scalar $\omega$.
\begin{proposition}[Integration by parts]\label{prop:byParts}
In the setting of Lemma~\ref{lemma:expr_energy_gradient},
\begin{align}
    [\vH, \hat{\vA}(\omega)] = \omega \hat{\vA}(\omega) + \frac{1}{\ri \sqrt{2\pi \tau}} \left(\vA(\tau/2)\e^{-\ri \omega \tau/2} - \vA(-\tau/2)\e^{\ri \omega \tau/2}\right).
\end{align}
\end{proposition}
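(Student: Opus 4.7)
The plan is to prove this identity by a direct computation from the definition of $\hat{\vA}(\omega)$ in Eq.~\eqref{eq:Aomega}, combining the Heisenberg equation of motion with a single integration by parts. The only input beyond the definition is the basic relation $[\vH, \vA(t)] = -\ri\,\frac{\rd}{\rd t}\vA(t)$, which follows by differentiating $\vA(t) = \e^{\ri \vH t}\vA\,\e^{-\ri \vH t}$ with respect to $t$.

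First I would bring the commutator with $\vH$ inside the integral. Since the scalar $\e^{-\ri\omega t}$ and the normalization $1/\sqrt{2\pi\tau}$ commute with $\vH$, this gives
\begin{equation*}
[\vH, \hat{\vA}(\omega)] \;=\; \frac{1}{\sqrt{2\pi\tau}} \int_{-\tau/2}^{\tau/2} [\vH, \vA(t)]\, \e^{-\ri\omega t}\, \rd t \;=\; \frac{-\ri}{\sqrt{2\pi\tau}} \int_{-\tau/2}^{\tau/2} \frac{\rd\vA(t)}{\rd t}\, \e^{-\ri\omega t}\, \rd t.
\end{equation*}
Now I would integrate by parts with $u = \e^{-\ri\omega t}$ and $\rd v = (\rd\vA(t)/\rd t)\,\rd t$. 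The boundary term yields $\vA(\tau/2)\e^{-\ri\omega\tau/2} - \vA(-\tau/2)\e^{\ri\omega\tau/2}$, which, after including the prefactor $-\ri/\sqrt{2\pi\tau}$, matches the second term in the claim (noting $-\ri = 1/\ri$ up to sign conventions). The remaining bulk integral is $\ri\omega\int_{-\tau/2}^{\tau/2}\vA(t)\e^{-\ri\omega t}\,\rd t$, which combined with the $-\ri/\sqrt{2\pi\tau}$ prefactor reproduces $\omega\,\hat{\vA}(\omega)$. Summing the two contributions gives the claimed identity.

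I do not expect a genuine obstacle: the window has compact support so there are no tail or convergence issues, and the only care required is to track the factor of $\ri$ from the Heisenberg equation and the signs from the boundary evaluation. Conceptually, this proposition says that $\hat{\vA}(\omega)$ would be an exact Bohr-frequency component (i.e., a simultaneous eigenoperator of $\ad_{\vH}$ with eigenvalue $\omega$) if not for the finite-$\tau$ truncation, whose failure is quantified by the boundary term of size $\mathcal{O}(1/\sqrt{\tau})$. This is precisely the form in which the correction enters the subsequent bound in Lemma~\ref{lemma:expr_energy_gradient}, where one sandwiches by $\vH$, uses this identity to replace $[\vH,\cdot]$ by multiplication by $\omega$, and absorbs the boundary piece into the $\mathcal{O}(\|c\|_1/\tau)$ error via Parseval-type bounds on the operator Fourier transform.
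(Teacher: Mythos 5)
Your proof is correct and is essentially the same argument as the paper's: the paper simply runs the identical computation in the reverse order, writing the boundary term as $\int_{-\tau/2}^{\tau/2}\frac{\rd}{\rd t}\big(\vA(t)\e^{-\ri\omega t}\big)\rd t$ and expanding by the product rule with $\frac{\rd}{\rd t}\vA(t)=\ri[\vH,\vA(t)]$, which is the same integration by parts you perform. Your sign bookkeeping (including $1/\ri=-\ri$) checks out, so no issues.
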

\begin{proof}
Integration of the derivative can be expanded by the product rule 
\begin{align}
    \frac{1}{\sqrt{2\pi \tau}} \left(\vA(\tau/2)\e^{-\ri \omega \tau/2} - \vA(-\tau/2)\e^{\ri \omega \tau/2}\right) &= \frac{1}{\sqrt{2\pi \tau}} \int_{-\tau/2}^{\tau/2} \frac{\rd}{\rd t} \left(\vA(t)\e^{-\ri \omega t}\right) \rd t \nonumber\\
    & = \frac{1}{\sqrt{2\pi \tau}} \int_{-\tau/2}^{\tau/2} \left( \ri [\vH,\vA(t)] \e^{-\ri \omega t} - \ri \omega \vA(t)\e^{-\ri \omega t} \right) \rd t \nonumber\\
    & = \ri [\vH, \hat{\vA}(\omega)] - \ri \omega \hat{\vA}(\omega).
\end{align}
Rearrange to conclude the proof.
\end{proof}

Observe that taking the infinite time limit $\tau \rightarrow \infty$ (i.e., perfect energy resolution) in the above proposition recovers the relation for the true Bohr frequencies $\nu$
\begin{align}
    [\vH,\vA_{\nu}] = \nu \vA_{\nu} \quad \text{for each}\quad \nu \in B(\vH).
\end{align}
At finite $\tau$, the above leads to simple bounds on the correction term.
We now present the proof of Lemma~\ref{lemma:expr_energy_gradient}.
\begin{proof}[Proof of Lemma~\ref{lemma:expr_energy_gradient}]
We calculate
    \begin{align}
    \CD^{\dag}[\vH] &=\sum_{a\in S}\int_{-\infty}^{\infty} \gamma(\omega) \left( 
\hat{\vA}^{a}(\omega)^{\dag} \vH \hat{\vA}^a(\omega) -\frac{1}{2} \{\hat{\vA}^{a}(\omega)^{\dag}\hat{\vA}^a(\omega),\vH\} \right)\rd \omega \nonumber\\
& = \sum_{a\in S}\int_{-\infty}^{\infty} \gamma(\omega) \frac{1}{2}\left( 
\hat{\vA}^{a}(\omega)^{\dag}[ \vH,\hat{\vA}^a(\omega)] -[\vH,\hat{\vA}^{a}(\omega)^{\dag}]\hat{\vA}^a(\omega) \right)\rd \omega \nonumber\\
& =\sum_{a\in S} \int_{-\infty}^{\infty} \gamma(\omega)\omega
\hat{\vA}^{a}(\omega)^{\dag}\hat{\vA}^a(\omega) \rd \omega + \vE,
\end{align}
where the error term $\vE$ is given by Proposition~\ref{prop:byParts} as
\begin{align}
    \vE &:= \frac{-\ri}{2\sqrt{2\pi \tau}}\sum_{a\in S}\int_{-\infty}^{\infty} \gamma(\omega)\hat{\vA}^{a}(\omega)^{\dag} \left(\vA^a(\tau/2)\e^{-\ri \omega \tau/2} - \vA^a(-\tau/2)\e^{\ri \omega \tau/2}\right)\rd\omega \nonumber\\
    &+\frac{\ri}{2\sqrt{2\pi \tau}}\sum_{a\in S}\int_{-\infty}^{\infty} \gamma(\omega)\left(\vA^{a}(\tau/2)^\dag\e^{\ri \omega \tau/2} - \vA^{a}(-\tau/2)^\dag\e^{-\ri \omega \tau/2}\right) \hat{\vA}^{a}(\omega) \rd\omega.
\end{align}
To bound this error term, let us calculate one of the four individual terms as an example
\begin{align}
    &\sum_{a\in S} \int_{-\infty}^{\infty} \gamma(\omega)\vA^{a}(\tau/2)^\dag\hat{\vA}^{a}(\omega)\e^{\ri \omega \tau/2} \rd \omega \\
    &=\sum_{a\in S} \frac{1}{2\pi \sqrt{\tau}}\int_{-\infty}^{\infty} 
 \int_{-\infty}^{\infty} c(t_1)\e^{- \ri \omega t_1}\rd t_1 \int_{-\tau/2}^{\tau/2}  \e^{\ri \omega \tau/2} \vA^{a}(\tau/2)^\dag\vA^{a}(t_2) \e^{-\ri \omega t_2} \rd t_2 \rd \omega\quad \tag{Fourier Transforms}\\
 &= \frac{1}{\sqrt{\tau}}\sum_{a\in S} \int_{-\tau/2}^{\tau/2} c(\tau/2 -t_2)\vA^{a}(\tau/2)^\dag\vA^{a}(t_2)\rd t_2. \quad \tag{using $\int_{-\infty}^{\infty} \e^{-\ri \omega t} \rd \omega = 2\pi \delta(t)$}
\end{align}
We can bound the operator norm of this term by $\|c\|_1/\sqrt{\tau}$ after applying the triangle inequality
and using the fact that
\begin{align}
    &\norm{\sum_{a\in S} \vA^{a}(\tau/2)^\dag\vA^{a}(t_2)} = \norm{ \sum_{a\in S} \vA^{a\dag}(\tau/2)\vA^{a}(t_2)} \nonumber\\
    &\le \norm{\sum_{a\in S} \vA^{a\dag} \otimes \bra{a}}\cdot   \norm{\sum_{a\in S} \vA^{a} \otimes \ket{a}} = \norm{\sum_{a\in S} \vA^{a\dag}\vA^a}.
\end{align}
Repeat a similar argument for the other three terms to conclude the proof. 
\end{proof}

\subsection{Monotonicty of rates}
\label{sec:Mono_rate}

Thermal Lindbladians generally depend sensitively on the Hamiltonian as it uses Hamiltonian simulation for a long time $\tau$
\begin{equation}
    \e^{\ri \vH \tau} \quad \text{for} \quad \tau \gg 1.
\end{equation}
Therefore, even adding a small perturbation to the Hamiltonian $\vH ' = \vH + \vV$ may have a non-perturbative effect on $\CL$ since
\begin{align}
    \tau \norm{\vV} \gg 1\quad \text{implies}\quad \normp{\CL - \CL'}{1-1} \gg 0. 
\end{align}

In other words, at a large $\tau$, it is not obvious at all why the Lindbladians $\CL, \CL'$ are related. Indeed, the original Davies' generator ($\tau\rightarrow \infty$) is unstable against arbitrarily small perturbations to the Hamiltonian; whenever energy degeneracy is broken, the Lindbladian can change substantially.

Nevertheless, what we can show as a compromise is that the rate only \textit{increases} if the perturbation only introduces \textit{level splitting}; this amounts to the assumption that the original Hamiltonian has highly degenerate subspaces with a certain Bohr frequencies gap $\Delta_{\nu}$ as another large energy scale
\begin{align}
         \frac{1}{\tau} \ll \norm{\vV} \ll \Delta_{\nu}.
\end{align}
Intuitively, level splitting causes \textit{decoherence} (and only decoherence) in the Bohr frequencies; for large $\tau$, the Lindblidan can indeed tell the transitions $\omega, \omega'$ apart if the Bohr frequencies are sufficiently different. Fortunately, even though decoherence can change the Lindbladian by a lot, we establish certain \textit{monotonicity} of transition rates. That is, $\CL'$ must have as good transition rates as $\CL$. 
A good example of $\vO$ would be an energy subspace projector. However, the argument works for general $\vO$, which makes it more flexible to use.

\begin{lemma}[Decoherence increases the rates]\label{lemma:mono_rate}
For any set of operators $\{\vA^a\}_{a\in S}$, suppose there exists an operator $\vO$ such that
\begin{align}
    \sum_{a\in S} \vA^{a\dag} \vA^a \succeq \vO \quad \text{where} \quad [\vO,\vH] =0.
\end{align}
Then, the operator Fourier Transforms $\hat{\vA}^a(\omega)$ (subscript $f$ omitted) for some normalized weight $\int_{-\infty}^{\infty} |f(t)|^2\rd t =1$ and Hamiltonian $\vH$ satisfies
\begin{align}
    \sum_{a\in S} \int_{-\infty}^{\infty}\hat{\vA}^a(\omega)^{\dag} \hat{\vA}^a(\omega) \rd \omega\succeq \vO.
\end{align}
\end{lemma}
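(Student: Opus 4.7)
The plan is to reduce the statement to the given operator Parseval identity (Proposition~\ref{prop:iso_operator_FT}) combined with unitary invariance of the operator ordering. The key observation is that the frequency-domain integral on the left can be converted to a time-domain integral against the weight $|f(t)|^2$, and then the assumption $[\vO,\vH]=0$ allows us to pull $\vO$ through the Heisenberg conjugation.

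Concretely, I would proceed as follows. First, invoke Proposition~\ref{prop:iso_operator_FT} to rewrite
\begin{equation*}
\sum_{a\in S} \int_{-\infty}^{\infty}\hat{\vA}^a(\omega)^{\dag} \hat{\vA}^a(\omega) \rd \omega \;=\; \sum_{a\in S} \int_{-\infty}^{\infty} \e^{\ri \vH t}\vA^{a\dag}\vA^a \e^{-\ri \vH t}\labs{f(t)}^2\rd t.
\end{equation*}
Second, conjugation by the unitary $\e^{\ri \vH t}$ preserves positive semidefinite ordering, so the hypothesis $\sum_{a\in S}\vA^{a\dag}\vA^a \succeq \vO$ yields
\begin{equation*}
\e^{\ri \vH t}\Bigl(\sum_{a\in S}\vA^{a\dag}\vA^a\Bigr) \e^{-\ri \vH t} \;\succeq\; \e^{\ri \vH t}\vO \e^{-\ri \vH t} \;=\; \vO,
\end{equation*}
where the final equality uses $[\vO,\vH]=0$. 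Third, since $|f(t)|^2 \ge 0$ for every $t$, integrating preserves the inequality, and by the normalization $\int_{-\infty}^{\infty}|f(t)|^2\rd t = 1$ the right-hand side integrates to $\vO$ itself. Stringing these three steps together gives the desired bound.

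There is no real obstacle here; the lemma is essentially a one-line consequence once the Parseval identity is in hand. The only subtle conceptual point worth emphasizing in the write-up is that the commutation assumption $[\vO,\vH]=0$ is precisely what allows the unitary conjugation to leave $\vO$ unchanged, so that after weighting and integration we recover $\vO$ rather than some time-averaged version of it. This also clarifies the intuition behind the lemma's name: the frequency integral effectively implements a dephasing of $\sum_a\vA^{a\dag}\vA^a$ in the eigenbasis of $\vH$, and any operator $\vO$ already diagonal in that basis can only be dominated more strongly after such dephasing.
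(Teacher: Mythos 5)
Your argument is correct and is essentially the paper's own proof: both reduce to the operator Parseval identity (Proposition~\ref{prop:iso_operator_FT}), use $[\vO,\vH]=0$ together with $\norm{f}_2=1$ to absorb $\vO$ into the time integral, and conclude from positivity under conjugation (the paper writes $\sum_a \vA^{a\dag}\vA^a-\vO=\vX^\dag\vX$ explicitly, which is the same fact you invoke as preservation of the semidefinite order). No gaps.
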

\begin{proof}
By Proposition~\ref{prop:iso_operator_FT}, 
    \begin{align}
        \sum_{a\in S} \int_{-\infty}^{\infty}\hat{\vA}^a(\omega)^{\dag} \hat{\vA}^a(\omega) \rd \omega - \vO &= \int_{-\infty}^{\infty} \e^{\ri \vH t}\L(\sum_{a\in S} \vA^{a\dag}\vA^a\R)\e^{-\ri \vH t} |f(t)|^2\rd t - \vO \nonumber \\
        &= \int_{-\infty}^{\infty} \e^{\ri \vH t}\L(\sum_{a\in S} \vA^{a\dag}\vA^a-\vO\R)\e^{-\ri \vH t} |f(t)|^2\rd t \nonumber\\
        &=\int_{-\infty}^{\infty} \e^{\ri \vH t}\vX^{\dag} \cdot \vX \e^{-\ri \vH t} |f(t)|^2\rd t \nonumber\\
        & \succeq 0.
    \end{align}
The second equality uses that $\int_{-\infty}^{\infty} |f(t)|^2\rd t =1$ and that $\e^{\ri \vH t} \vO \e^{-\ri \vH t} =\vO$. The last line establishes PSD order using the assumption that there exists operator $\vX$ such that $\sum_{a\in S} \vA^{a\dag} \vA^a-\vO = \vX^{\dag} \vX$. Together, we establish the desired statement.
\end{proof}

\subsection{Perturbation theory of eigenstates and eigenvalues}
\label{sec:perturb-theory}

We state a few useful facts about perturbed eigenspace and eigenvalues that would be useful in the proofs.

\begin{proposition}[Davis-Kahan $\sin\Theta$ theorem (see also Theorem VII.3.1 of \cite{Bhatia})]
\label{fact:Davis-Kahan}
Let $\vH$ and $\tilde{\vH}$ be two equal-sized Hermitian matrices. Let $\vP$ be the projector onto eigenstates of $\vH$ with eigenvalue in an interval $[a,b]$.
Let $\tilde{\vP}^\perp$ be the projector onto eigenstates of $\tilde{\vH}$ with eigenvalues outside the interval $[a-\delta, b+\delta]$.
Then
\begin{equation}
\|\vP\tilde{\vP}^\perp\| \le \|\vH-\tilde{\vH}\| / \delta.
\end{equation}
Here $\|\cdot\|$ is the spectral norm (or any unitarily invariant norm).
\end{proposition}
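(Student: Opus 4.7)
The plan is to reduce the claim to a Sylvester-equation norm bound. Setting $\vX := \tilde{\vP}^\perp \vP$, the commutations $[\tilde{\vP}^\perp, \tilde{\vH}] = 0$ and $[\vP, \vH] = 0$ give
\begin{equation}
\tilde{\vH}\vX - \vX\vH \;=\; \tilde{\vP}^\perp \tilde{\vH}\vP - \tilde{\vP}^\perp \vP \vH \;=\; \tilde{\vP}^\perp \tilde{\vH}\vP - \tilde{\vP}^\perp \vH \vP \;=\; \tilde{\vP}^\perp(\tilde{\vH} - \vH)\vP.
\end{equation}
Because $\tilde{\vP}^\perp$ and $\vP$ are contractions, the UI norm of the right-hand side is at most $\|\vH - \tilde{\vH}\|$. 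This is the Sylvester equation I aim to invert.

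The next step is to exploit the spectral separation hypothesis. On range of $\tilde{\vP}^\perp$, $\tilde{\vH}$ has spectrum contained in $(-\infty, a-\delta) \cup (b+\delta, \infty)$, while on range of $\vP$, $\vH$ has spectrum inside $[a,b]$, so the spectra of $\tilde{\vH}|_{\tilde{\vP}^\perp}$ and $\vH|_{\vP}$ are at distance at least $\delta$. Applying the standard norm estimate for Sylvester equations with separated Hermitian spectra (Bhatia VII.3.1–VII.3.2) then gives $\|\vX\| \le \|\tilde{\vP}^\perp(\tilde{\vH}-\vH)\vP\|/\delta \le \|\vH - \tilde{\vH}\|/\delta$, and the stated inequality follows at once from $\|\vP\tilde{\vP}^\perp\| = \|(\tilde{\vP}^\perp \vP)^\dagger\| = \|\tilde{\vP}^\perp \vP\|$ (valid for any unitarily invariant norm).

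The substantive content is the Sylvester norm bound itself, which I would obtain by decomposing $\tilde{\vP}^\perp = \tilde{\vP}_+ + \tilde{\vP}_-$ according to whether eigenvalues of $\tilde{\vH}$ exceed $b+\delta$ or lie below $a-\delta$. For $\vX_+ := \tilde{\vP}_+\vP$, one-sided spectral separation enables the integral representation
\begin{equation}
\vX_+ \;=\; -\int_0^\infty \e^{-(\tilde{\vH}-b)t}\,\tilde{\vP}_+\,(\vH - \tilde{\vH})\,\vP\,\e^{(\vH-b)t}\,\rd t,
\end{equation}
whose integrand is bounded in UI norm by $\e^{-\delta t}\|\vH - \tilde{\vH}\|$ since $\tilde{\vH} - b \succeq \delta\vI$ on range of $\tilde{\vP}_+$ and $\vH - b \preceq 0$ on range of $\vP$; integration gives $\|\vX_+\| \le \|\vH-\tilde{\vH}\|/\delta$, and an analogous representation with shift $a$ handles $\vX_-$. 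The main obstacle is to recombine $\vX_+$ and $\vX_-$ into $\tilde{\vP}^\perp\vP$ without losing a constant in the two-sided case; the classical Davis–Kahan remedy is a function-calculus argument using a piecewise-linear $\rho$ of slope $1/\delta$ with $\rho \equiv 0$ on $[a,b]$ and $|\rho| \equiv 1$ outside $[a-\delta, b+\delta]$, so that $\rho(\vH)\vP = 0$ and $\rho(\tilde{\vH})\tilde{\vP}^\perp = \tilde{\vP}^\perp$, reducing the bound to an operator-Lipschitz estimate on $\rho$. Since the final inequality with sharp constant is the content of Bhatia's Theorem VII.3.1, I would invoke it rather than reprove it here.
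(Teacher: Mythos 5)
Your proposal is correct and ends up in essentially the same place as the paper: the paper states this proposition as a known fact and simply cites Bhatia's Theorem VII.3.1 without proof, which is exactly what your final step does, and your intermediate reduction (the Sylvester equation $\tilde{\vH}\vX-\vX\vH=\tilde{\vP}^\perp(\tilde{\vH}-\vH)\vP$ plus the one-sided integral bounds) is sound. The only loose point is your aside that the two-sided recombination follows from an ``operator-Lipschitz estimate'' for the piecewise-linear $\rho$ --- Lipschitz functions are not operator Lipschitz in general, and the textbook route instead recenters to a disk-versus-exterior Sylvester bound (Bhatia VII.2.1) --- but since you explicitly invoke VII.3.1 for the sharp constant, this does not affect correctness.
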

Furthermore, the following fact bounds errors on perturbed eigenvalues: 
\begin{proposition}[Weyl's inequality]
\label{fact:Weyl}
For any two equal-sized Hermitian matrices $\vH$ and $\tilde{\vH}$, we have $|\lambda_j(\vH)-\lambda_j(\tilde{\vH})|\le \|\vH-\tilde{\vH}\|$ for all $j$, where $\lambda_j(\vX)$ is the $j$-th largest eigenvalue of matrix $\vX$.
\end{proposition}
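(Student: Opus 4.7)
The plan is to prove Weyl's inequality via the Courant--Fischer min-max characterization of eigenvalues, which is the standard and cleanest route for this classical fact. First I would set up notation: write $\vE := \vH - \tilde{\vH}$, which is Hermitian, and recall that for any Hermitian matrix $\vX$ acting on a $d$-dimensional space, the $j$-th largest eigenvalue satisfies
\begin{equation}
\lambda_j(\vX) \;=\; \max_{\substack{V \subseteq \mathbb{C}^d \\ \dim V = j}} \; \min_{\substack{v \in V \\ \|v\|=1}} \langle v, \vX v \rangle.
\end{equation}
This identity is the only nontrivial input; I would cite it as standard rather than re-derive it.

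Next I would use linearity and the variational characterization of $\|\vE\|$ to relate the two Rayleigh quotients. For any unit vector $v$,
\begin{equation}
\langle v, \vH v \rangle \;=\; \langle v, \tilde{\vH} v \rangle + \langle v, \vE v \rangle \;\geq\; \langle v, \tilde{\vH} v \rangle - \|\vE\|,
\end{equation}
since $|\langle v, \vE v\rangle| \leq \|\vE\|$ for Hermitian $\vE$. Taking the min over unit vectors $v$ in a fixed $j$-dimensional subspace $V$, and then the max over all such $V$, preserves the inequality on both sides:
\begin{equation}
\lambda_j(\vH) \;\geq\; \lambda_j(\tilde{\vH}) - \|\vH - \tilde{\vH}\|.
\end{equation}

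Finally, I would exchange the roles of $\vH$ and $\tilde{\vH}$ in the above argument (which is legitimate because the bound $\|\vE\|=\|{-}\vE\|$ is symmetric) to obtain the matching lower bound $\lambda_j(\tilde{\vH}) \geq \lambda_j(\vH) - \|\vH - \tilde{\vH}\|$. Combining the two inequalities yields $|\lambda_j(\vH) - \lambda_j(\tilde{\vH})| \leq \|\vH - \tilde{\vH}\|$, as required. There is no real obstacle here; the entire content lies in invoking Courant--Fischer, and the mild subtlety is only to note that although the $\min$ and $\max$ are taken over different sets (subspace $V$, then unit vector $v\in V$) the pointwise inequality $\langle v,\vH v\rangle \geq \langle v,\tilde\vH v\rangle - \|\vE\|$ survives both operations because the perturbation bound is uniform in $v$.
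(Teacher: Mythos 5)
Your proof is correct and complete: the Courant--Fischer characterization plus the uniform Rayleigh-quotient bound $|\langle v,(\vH-\tilde{\vH})v\rangle|\le\|\vH-\tilde{\vH}\|$, followed by symmetrizing in $\vH$ and $\tilde{\vH}$, is exactly the standard argument for Weyl's inequality. Note that the paper itself offers no proof of this proposition --- it is stated as a classical fact (alongside the Davis--Kahan theorem, which is cited to Bhatia) --- so there is nothing to compare against; your write-up would serve as a perfectly adequate self-contained justification if one were wanted.
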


Together, these facts imply that
\begin{lemma}
\label{lem:PP'}
    Let $\vH$ and $\tilde{\vH}=\vH+\vV$ be Hamiltonians.
    Let $\vP$ be the projector onto eigenstates of $\vH$ with eigenvalues in some interval $[a,b]$, which are separated from the other eigenvalues by a gap of at least $\Delta$.
    If $\|\vV\|\le \Delta/4$, then 
    \begin{enumerate}
        \item There exists a spectral projector $\tilde{\vP}$ onto eigenstates of $\tilde{\vH}$ with eigenvalues in $[a-\Delta/4, b+\Delta/4]$, which are separated from the other eigenvalues by a gap of at least $\Delta/2$.
        \item $\|\vP-\tilde{\vP}\| \le 8\|\vV\|/\Delta$.
    \end{enumerate}
\end{lemma}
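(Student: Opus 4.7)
The plan is to prove both parts by combining Weyl's inequality (Proposition~\ref{fact:Weyl}) to track eigenvalue locations with the Davis–Kahan $\sin\Theta$ theorem (Proposition~\ref{fact:Davis-Kahan}) to bound the projector distance. The hypothesis $\|\vV\|\le \Delta/4$ is tailored so that both tools fit together: the perturbation is small enough that no eigenvalue can cross into the wrong ``group,'' and there is a nonzero buffer left over to feed into Davis–Kahan.

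For part~1, I would start by applying Weyl's inequality so that each eigenvalue of $\tilde{\vH}$ lies within $\|\vV\|\le\Delta/4$ of the corresponding eigenvalue of $\vH$. Eigenvalues of $\vH$ lying in $[a,b]$ therefore perturb into $[a-\Delta/4,b+\Delta/4]$, and eigenvalues of $\vH$ lying in $(-\infty,a-\Delta]\cup[b+\Delta,\infty)$ perturb into $(-\infty,a-3\Delta/4]\cup[b+3\Delta/4,\infty)$. Defining $\tilde{\vP}$ as the spectral projector for the first group gives the advertised interval, and the two perturbed groups are separated by at least $(b+3\Delta/4)-(b+\Delta/4)=\Delta/2$, giving the gap.

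For part~2, I would apply Davis–Kahan twice and combine via a triangle inequality. The identity
\begin{equation}
\vP-\tilde{\vP} \;=\; \vP\tilde{\vP}^\perp \;-\; \vP^\perp\tilde{\vP}
\end{equation}
reduces the task to bounding each cross term. For the first term, $\vP$ projects onto eigenstates of $\vH$ in $[a,b]$ while $\tilde{\vP}^\perp$ projects onto eigenstates of $\tilde{\vH}$ outside $[a-\Delta/4,b+\Delta/4]$, so Proposition~\ref{fact:Davis-Kahan} with $\delta=\Delta/4$ gives $\|\vP\tilde{\vP}^\perp\|\le 4\|\vV\|/\Delta$. For the second term, I swap the roles: $\tilde{\vP}$ projects onto eigenstates of $\tilde{\vH}$ in $[a-\Delta/4,b+\Delta/4]$, and $\vP^\perp$ projects onto eigenstates of $\vH$ outside $[a-\Delta,b+\Delta]$, allowing $\delta=3\Delta/4$ and thus $\|\vP^\perp\tilde{\vP}\|\le 4\|\vV\|/(3\Delta)$. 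Summing, $\|\vP-\tilde{\vP}\|\le 4\|\vV\|/\Delta+4\|\vV\|/(3\Delta)\le 8\|\vV\|/\Delta$.

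I do not anticipate a real obstacle; the argument is bookkeeping around intervals. The only subtle point to verify is that Davis–Kahan is being applied with matching conventions in the swapped direction (i.e., that the interval for $\tilde{\vP}$ and the outside-interval for $\vP^\perp$ are indeed separated by $3\Delta/4$), and that the triangle inequality on the orthogonal decomposition of $\vP-\tilde{\vP}$ is tight enough to land at the stated constant $8$.
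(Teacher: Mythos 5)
Your proof is correct and follows essentially the same route as the paper: part 1 from Weyl's inequality, and part 2 from the decomposition $\vP-\tilde{\vP}=\vP\tilde{\vP}^\perp-\vP^\perp\tilde{\vP}$ together with the Davis--Kahan $\sin\Theta$ theorem. The only difference is that the paper applies Davis--Kahan with $\delta=\Delta/4$ to both cross terms (giving $4\|\vV\|/\Delta$ each), whereas you take $\delta=3\Delta/4$ for the second term, which merely sharpens the total to $\tfrac{16}{3}\|\vV\|/\Delta\le 8\|\vV\|/\Delta$.
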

\begin{proof}
The existence of $\tilde{\vP}$ (item 1) holds because of \cref{fact:Weyl} above.
Then observe that
\begin{equation}
    \|\vP-\tilde{\vP}\| = \|\vP-\vP\tilde{\vP} + \vP\tilde{\vP}-\tilde{\vP}\| \le \|\vP\tilde{\vP}^\perp\| + \|\vP^\perp \tilde{\vP}\| \le 8\|\vV\|/\Delta,
\end{equation}
where the last inequality is obtained by applying \cref{fact:Davis-Kahan} with $\delta=\Delta/4$ to bound $\|\vP\tilde{\vP}^\perp\|$ and $\|\vP^\perp\tilde{\vP}\|$.
\end{proof}

\begin{lemma}[Off-block-diagonal perturbation]\label{lem:off_diag_perturbation}
Consider a block diagonal Hermitian matrix $\vD = \vD_1+ \vD_2$, where the two blocks correspond to orthogonal subspace projectors $\vP_1$ and $\vI - \vP_1 = \vP_2$ and are separated by eigenvalue gap at least $\Delta$. Add an off-block-diagonal Hermitian perturbation $\vV = \vV_{12}+ \vV_{21}$ such that $\norm{\vV}\le \Delta/4$.
Then, there is an anti-Hermitian operator $\vB$ and an absolute constant $C_0$ such that
\begin{gather}
    \vD +\vV = \e^{-\vB}\vD \e^{\vB} + (\vD + \vV  - \e^{-\vB}\vD \e^{\vB}) \nonumber \\
\text{where } \qquad
\norm{\vB} \le C_0 \frac{\norm{\vV}}{\Delta},  \qquad \text{and} \qquad
\norm{\e^{\vB} (\vD + \vV) \e^{-\vB} -\vD} \le C_0 \frac{\norm{\vV}^2}{\Delta}.
\end{gather}
This implies
the sorted eigenvalues of $\vD$ are perturbed by $C_0 \norm{\vV}^2/\Delta$.
\end{lemma}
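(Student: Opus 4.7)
The plan is to construct $\vB$ as the unique off-block-diagonal anti-Hermitian solution of the Sylvester (``Schrieffer--Wolff'') equation $[\vB,\vD] = -\vV$, and then show that the unitary $\e^{\vB}$ block-diagonalizes $\vD+\vV$ up to a quadratic remainder. Writing $\vB = \vB_{12} + \vB_{21}$ with $\vB_{12}=\vP_1\vB\vP_2$ and projecting the equation onto the $(1,2)$-block, I need $\vD_1\vB_{12}-\vB_{12}\vD_2 = \vV_{12}$. Because $\mathrm{Spec}(\vD_1)$ and $\mathrm{Spec}(\vD_2)$ are separated by $\Delta$, I may shift $\vD$ by a scalar (which does not alter $[\vB,\vD]$) so that $\vD_1\succeq \tfrac{\Delta}{2}\vI$ and $\vD_2\preceq -\tfrac{\Delta}{2}\vI$; then
\begin{equation*}
    \vB_{12} \;=\; \int_0^\infty \e^{-s\vD_1}\,\vV_{12}\,\e^{s\vD_2}\,\rd s
\end{equation*}
is absolutely convergent, solves the Sylvester equation (verify by the fundamental theorem of calculus applied to $\tfrac{\rd}{\rd s}[\e^{-s\vD_1}\vV_{12}\e^{s\vD_2}]$), and obeys $\|\vB_{12}\|\le \|\vV\|/\Delta$. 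Setting $\vB_{21} := -\vB_{12}^\dag$ makes $\vB$ anti-Hermitian, and Hermiticity of $\vV$ ensures that the analogous Sylvester equation in the $(2,1)$-block is also satisfied. Altogether, $\|\vB\|\le \|\vV\|/\Delta\le 1/4$.

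Granting these properties of $\vB$, I expand the conjugation using the Hadamard series $\e^{\vB}X\e^{-\vB} = \sum_{k\ge 0}\tfrac{1}{k!}\,\mathrm{ad}_{\vB}^k(X)$. The defining relation $[\vB,\vD]=-\vV$ lets me rewrite $\mathrm{ad}_\vB^k(\vD) = -\mathrm{ad}_\vB^{k-1}(\vV)$ for $k\ge 1$, and a short telescoping rearrangement yields
\begin{equation*}
    \e^{\vB}(\vD+\vV)\e^{-\vB} \;=\; \vD \;+\; \sum_{k\ge 1} \frac{k}{(k+1)!}\,\mathrm{ad}_\vB^{k}(\vV) \;=\; \vD + \tfrac12[\vB,\vV] + \tfrac13[\vB,[\vB,\vV]] + \cdots .
\end{equation*}
Estimating $\|\mathrm{ad}_\vB^k(\vV)\|\le (2\|\vB\|)^k\|\vV\|$ and summing the geometric-like tail (which converges for $\|\vB\|\le 1/4$) gives a bound of $\CO(\|\vB\|\|\vV\|) = \CO(\|\vV\|^2/\Delta)$. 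Since $\e^{\vB}$ is unitary, the same bound transfers to $\|\vD+\vV-\e^{-\vB}\vD\e^{\vB}\|$, which is the main claim with an absolute constant $C_0$. The eigenvalue corollary then follows from Weyl's inequality (Prop.~\ref{fact:Weyl}) combined with the fact that $\e^{-\vB}\vD\e^{\vB}$ is unitarily equivalent to $\vD$ and hence has the same sorted spectrum.

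The one technically subtle step is producing an \emph{operator-norm} bound on $\vB$ that scales as $\|\vV\|/\Delta$ with no dimension dependence. Solving $(E_j-E_i)B_{ij} = -V_{ij}$ in the joint eigenbasis only yields entrywise bounds $|B_{ij}|\le |V_{ij}|/\Delta$, which translate to Hilbert--Schmidt-type inequalities but not directly to operator-norm inequalities. The integral representation above circumvents this by leveraging $\|\e^{-s\vD_1}\|\le \e^{-s\Delta/2}$ and $\|\e^{s\vD_2}\|\le \e^{-s\Delta/2}$ after the spectral shift, so that the triangle inequality on the integral gives $\|\vB_{12}\|\le \int_0^\infty \e^{-s\Delta}\|\vV\|\,\rd s = \|\vV\|/\Delta$. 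An equivalent route is the resolvent contour integral $\vB_{12}=\tfrac{1}{2\pi\ri}\oint_\Gamma (z\vI-\vD_1)^{-1}\vV_{12}(z\vI-\vD_2)^{-1}\,\rd z$ with $\Gamma$ separating the two spectra; either approach delivers the clean constant needed to close the Hadamard-series estimate.
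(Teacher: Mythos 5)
Your overall strategy is the same Schrieffer--Wolff argument the paper uses: solve $[\vB,\vD]=-\vV$ for an off-block-diagonal anti-Hermitian $\vB$, conjugate $\vD+\vV$ by $\e^{\vB}$, observe that the first-order term cancels so the residual is $\CO(\norm{\vB}\norm{\vV})=\CO(\norm{\vV}^2/\Delta)$, and finish with Weyl's inequality applied to the unitarily rotated $\vD$. Your remainder estimate (summing the Hadamard series with the telescoping identity $\ad_{\vB}^k(\vD)=-\ad_{\vB}^{k-1}(\vV)$) is a fine substitute for the paper's exact integral-form Taylor remainder, and your norm bookkeeping ($\norm{\vB}=\norm{\vB_{12}}\le\norm{\vV}/\Delta\le 1/4$, convergence of the tail, unitary invariance) is correct.

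The one substantive difference, and a genuine limitation, is in how you bound $\norm{\vB}$. Your semigroup integral $\vB_{12}=\int_0^\infty \e^{-s\vD_1}\vV_{12}\e^{s\vD_2}\,\rd s$ requires that after a scalar shift $\vD_1\succeq\tfrac{\Delta}{2}\vI$ and $\vD_2\preceq-\tfrac{\Delta}{2}\vI$, i.e.\ that the spectrum of one block lies entirely above that of the other. The hypothesis "separated by eigenvalue gap at least $\Delta$" as used in the paper's proof is only the pairwise condition $|D_i-D_j|\ge\Delta$ for eigenvalues in different blocks, which allows the two spectra to interlace (e.g.\ $\spec(\vD_1)=\{0,10\}$, $\spec(\vD_2)=\{5\}$, $\Delta=5$); there no scalar shift orders them, your integral does not converge to the Sylvester solution, and your fallback contour $\Gamma$ "separating the two spectra" either does not exist as a single separating curve or forces constants depending on the spectral diameter rather than an absolute $C_0$. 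The paper handles this general case by writing $\vB=\frac{1}{\sqrt{2\pi}}\int f(t)\,\e^{\ri\vD t}\vV\e^{-\ri\vD t}\,\rd t$ with a smooth filter satisfying $\hat f(\omega)=-1/\omega$ for $|\omega|\ge\Delta$ and $\norm{f}_1=\CO(1/\Delta)$, which yields $\norm{\vB}\le C_0\norm{\vV}/\Delta$ using only the pairwise gap. In the paper's actual application (Corollary~\ref{cor:off_diag_mono}, a low-energy subspace with an excitation gap) the interval separation you need does hold, so your proof covers that use; but to prove the lemma in the generality the paper's construction supports, you would need to replace the semigroup integral with a Fourier-multiplier (or equivalent) construction of $\vB$.
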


We remark that the scaling with respect to $\norm{\vV}$ is consistent with perturbation theory: the \textit{angle} change is first-order $\sim\frac{\norm{\vV}}{\Delta}$, and the \textit{eigenvalue} change is second-order $\sim\frac{\norm{\vV}^2}{\Delta}$. Note that for diagonal perturbation, the eigenvalue change is only bounded by $\sim\norm{\vV}$. 

\begin{proof}
Observe that
\begin{equation}
    \e^{\vB} (\vD + \vV) \e^{-\vB} = \vD + (\vV +  [\vB,\vD]) + [\vB,\vV]+ \sum_{k=2}^\infty \frac{1}{k!} \ad_{\vB}^k (\vD+\vV), 
\end{equation}
Let us choose $\vB$ to cancel the first order term in $\vV$, i.e.,
\begin{equation}
\label{eq:VBD-condition}
    \vV = - [\vB,\vD].
\end{equation}
We can solve for $\vB$ by working in the eigenbasis of $\vD=\sum_{i} D_i\ketbrat{\psi_i}$.
Then denoting $O_{ij} = \braket{\psi_i|\vO|\psi_j}$, we can rewrite Eq.~\eqref{eq:VBD-condition} as
\begin{equation}
V_{ij} = B_{ij}(D_i - D_j)
\qquad
\text{or} \qquad
B_{ij} = \frac{V_{ij}}{D_i - D_j}.
\end{equation}
Note $B_{ij}=V_{ij}=0$ whenever $|D_i-D_j|\ge \Delta$ by assumption. Hence, we can solve for $\vB$ using the Heisenberg picture Fourier transform:
 \begin{align}
     B_{ij} & = \frac{1}{\sqrt{2\pi}} \int_{-\infty}^{\infty} f(t) \e^{\ri (D_i - D_j) t} V_{ij}   \rd t  = V_{ij} \hat{f}(D_j-D_i)\quad \text{for each}\quad i,j\nonumber\\
     \text{or} \quad \vB& = \frac{1}{\sqrt{2\pi}} \int_{-\infty}^{\infty} f(t) \undersetbrace{=:\vV(t)}{\e^{\ri \vD t} \vV \e^{-\ri \vD t}}  \rd t ,
 \end{align}
where we choose the function $f(t)$ whose Fourier transform matches the reciprocal at sufficiently large values,
\begin{equation}
     \hat{f}(\omega) = \frac{1}{-\omega } \quad \text{when} \quad \labs{\omega} \ge \Delta,
\end{equation}
but remain ``nice'' near $\omega=0$. One example is to use a smooth bump function
\begin{align}
    -\frac{1}{\omega}\cdot b(\frac{\omega}{\Delta})\quad \text{where}\quad b(x) = \begin{cases}
    1 \quad &\text{if}\quad \labs{x} \ge 1\\
    \CO( x^2 ) \quad &\text{if}\quad \labs{x} \approx 0.
    \end{cases}
\end{align}
For concreteness, we take
\begin{align}
    b(x) = \begin{cases}
        1- \exp(\frac{1}{1-\frac{1}{x^2}}) \quad &\text{if} \quad \labs{x}< 1\\
        1 &\text{else}
    \end{cases}
\end{align}
Then, taking triangle inequality and using the unitary invariance of the operator norm,
\begin{align}
\norm{\vB} &\le \frac{1}{\sqrt{2\pi}}\norm{f}_1\cdot \norm{\vV} \le C_0 \frac{\norm{\vV}}{\Delta}. 
\end{align}
The last inequality bounds the Fourier transform by change-of-variable $x =\omega/\Delta$ and leaves a constant $C_0$ that depends on the inverse Fourier transform of the ``dimensionless function'' $b(x)/x$.
This bound on $\vB$ then allows us to control the higher-order errors
\begin{equation*}
    \e^{\vB} (\vD + \vV) \e^{-\vB} = \vD + \vV + \int_{0}^1\e^{\vB s}[\vB,\vV]\e^{-\vB s} \rd s +  [\vB,\vD]+ 
 \int_{0}^1 \e^{\vB s}[\vB,[\vB,\vD]] \e^{-\vB s} (1-s)\rd s .
\end{equation*}
Substitute $[\vB,\vD]= - \vV$ and rearranging, we get
\begin{equation}
    \e^{\vB} (\vD + \vV) \e^{-\vB} - \vD =   \int_{0}^1\e^{\vB s}[\vB,\vV]\e^{-\vB s} s\rd s.
\end{equation}
Apply the triangle inequality, we get
\begin{equation}
\|\e^{\vB} (\vD + \vV) \e^{-\vB} - \vD \|
\le \|[\vB,\vV]\| \int_0^1 s \rd s \le C_0 \frac{\|\vV\|^2}{\Delta}.
\end{equation}
Finally, to obtain sorted eigenvalues of $\vD$, use the fact that $\e^{-\vB}$ is unitary and apply Weyl's inequality for $\e^{-\vB} \vD\e^{\vB}$.
\end{proof}

\vspace{-10pt}
\begin{figure}[hbt]
	\centering
	\includegraphics[width=0.7\textwidth]{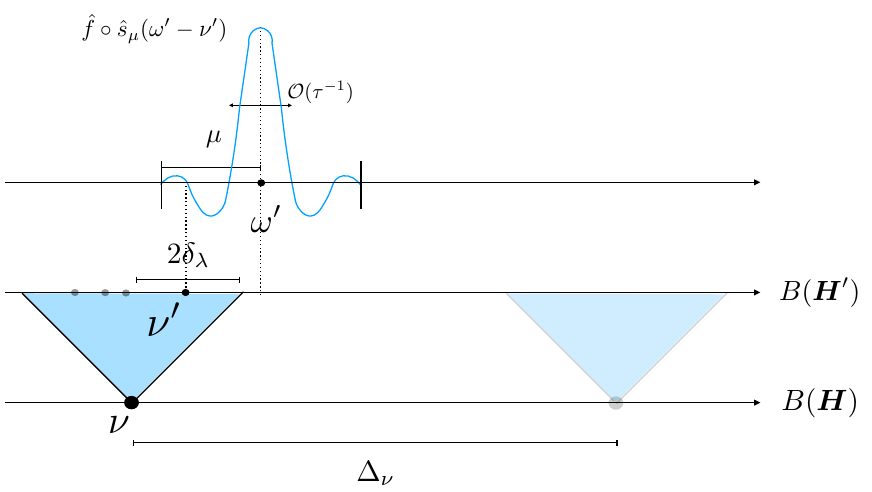}
\caption{ The energy scales $\frac{1}{\tau} \ll \norm{\vV} \ll \Delta_{\nu}$ in one plot. The Hamiltonian perturbation $\vV$ causes eigenvalues of $\vH$ to change by at most $\delta_\lambda \le \|\vV\|$, which splits the Bohr frequencies such that $|\nu-\nu'| \le 2\delta_\lambda$.
 Here $\mu$ is the cut-off frequency for the secular approximation $\hat{\vS}^a_{f,\mu}(\omega') = \sum_{\nu'} \vA_{\nu'}^a \hat{f}(\omega'-\nu') \hat{s}_{\mu}(\omega'-\nu')$. As long as we choose $\mu < \frac{\Delta_{\nu}-4\delta_\lambda}{2}$ small enough, the secular approximation $\hat{\vS}^a(\omega')$ can only contain Bohr frequencies $\vA_{\nu'}$ from at most one block $\nu$, i.e., different blocks decohere. 
	}
	\label{fig:energyscales}
\end{figure}

\subsection{Proof of \cref{thm:mono_gradient_full}}
\label{sec:altogether}

We combine the above ingredients for the proof of Theorem~\ref{thm:mono_gradient_full}.
In what follows, let $E_j$, $\vP_{E_j}$, $\nu$ be the eigenvalues, eigenspace projectors, and Bohr frequencies of the unperturbed Hamiltonian $\vH$; furthermore let  $E_j'$, $\vP_{E_j'}$ $\nu'$, be their counterpart for the perturbed Hamiltonian $\vH'$.

It will be helpful to display the structure of the Bohr frequencies and the energy eigenspaces under perturbation by
\begin{align}
\vA^a &= \sum_{\nu \in B(\vH)} \sum_{E_1-E_2=\nu} \vP_{E_1}\vA^a \vP_{E_2} = \sum_{\nu \in B(\vH)}\vA^a_{\nu} \nonumber \\
&=\sum_{\nu' \in B(\vH')} \sum_{E'_1-E'_2=\nu'} \vP_{E'_1}\vA^a \vP_{E'_2}= \sum_{\nu' \in B(\vH')} \vA^a_{\nu'} = \sum_{\nu\in B(\vH)} \vA^a_{\approx \nu} 
\end{align}
where we defined
\begin{equation}
\vA^a_{\approx \nu} :=
\sum_{\nu'\in B(\vH'),~\nu'\approx \nu} \vA^a_{\nu'}
\qquad
\text{and} \quad
\nu'\approx\nu \iff \labs{\nu'-\nu} \le 2\delta_\lambda.
\label{eq:def_A_approx_nu}
\end{equation}
In other words, the perturbed set of Bohr frequencies can be identified with the original degenerate blocks according to eigenvalue perturbation%
\footnote{Note that we always have $\delta_\lambda \le \|\vV\|$ by \cref{fact:Weyl}, but we keep $\delta_\lambda$ as an separate parameter which helps yield better bounds when $\delta_\lambda \ll \|\vV\|$, such as the case when the $\vV$ is an off-diagonal perturbation (see \cref{lem:off_diag_perturbation}).}
bounded by $\delta_\lambda$, under the assumption that the perturbation is weaker than the Bohr frequency differences $\Delta_{\nu} > 4 \delta_\lambda$.
This structure is crucial for proving the monotonicity of gradients; see Figure~\ref{fig:energyscales}.
For later use, we also define $\hat\vA^a_{\approx\nu}(\omega')$ to be the operator Fourier transform of $\vA^a_{\approx\nu}$ with respect to the perturbed Hamiltonian $\vH'$, and consider its secular approximation $\hat\vS^a_{\approx\nu}(\omega')$ at truncation scale $\mu$, i.e.,
\begin{equation}
\label{eq:AwSw_approx_nu}
\hat\vA^a_{\approx\nu}(\omega') = \sum_{\nu'\approx \nu} \vA^a_{\nu'} \hat{f}_\tau(\omega'-\nu'),
    \qquad
\hat\vS^a_{\approx\nu}(\omega') = \sum_{\nu'\approx \nu} \vA^a_{\nu'} \hat{f}_\tau(\omega'-\nu') \indicator(|\omega'-\nu'|< \mu).
\end{equation}

In what follows, we will denote $\theta(\omega)=\gamma(\omega)\omega$.
It is worth recalling the following bounds,
\begin{align}
\norm{\vA^a},~ \norm{f_\tau}_2,~
\frac{\norm{c_{\beta}}_1}{\sqrt{2\pi}} \le 1,~ \qquad
\norm{\theta}_{\infty} = \CO(\Lambda_0), \qquad
\Lambda_0 = \Theta(1).
\end{align}

Our strategy for proving \cref{thm:mono_gradient} is to rewrite the energy gradients $\CL^{\dag}[\vH]$ and $\CL^{'\dag}[\vH']$ in a form amenable to Lemma~\ref{lemma:mono_rate} between a set of operators and their Fourier Transforms.

\textbf{Step 1.}
For the perturbed Hamiltonian $\vH'$, we apply a sequence of approximations to establish 
\begin{equation}
\label{eq:LHprime-rewritten}
    \Bigg\|{\sum_{a\in S} \sum_{\nu \in B(\vH)} \theta_-(\nu) \int_{-\infty}^{\infty}\hat{\vA}^a_{\approx \nu}(\omega')^{\dag} \hat{\vA}^a_{\approx \nu}(\omega') \rd \omega' -\CL^{'\dag}[\vH']}\Bigg\| \le \epsilon_A
\end{equation}
for some $\epsilon_A >0$ and a function $\theta_-$ to be soon specified. 
Recall we write $\vA\stackrel{\verr}{\approx}\vB$ if $\|\vA-\vB\|\le \verr$.
\begin{align}
    \CL^{'\dag}[\vH']& \stackrel{\verr_1}{\approx} \CD^{'\dag}[\vH'] \tag{bounds on the Lamb-shift: Prop.~\ref{prop:norm-lamb-shift-part}}\\
    &\stackrel{\verr_2}{\approx} \sum_{a\in S} \int_{-\infty}^{\infty} \theta(\omega') \hat{\vA}^{a}(\omega')^{\dag}\hat{\vA}^a(\omega') \rd \omega'\tag{simplify: Lemma~\ref{lemma:expr_energy_gradient}}\\ &\stackrel{\verr_3}{\approx} \sum_{a\in S} \int_{-\infty}^{\infty} \theta(\omega') \hat{\vS}^{a}(\omega')^{\dag}\hat{\vS}^a(\omega') \rd \omega' \quad \tag{secular approximation: Corollary~\ref{cor:AA-SS}}\\
    & = \sum_{a\in S} \sum_{\nu \in B(\vH)} \int_{-\infty}^{\infty} \theta(\omega')\hat{\vS}^a_{\approx \nu}(\omega')^{\dag} \hat{\vS}^a_{\approx \nu}(\omega') \rd \omega' \tag{different blocks $\nu$s decohere}\\
    & \stackrel{\verr_4}{\approx} \sum_{a\in S} \sum_{\nu \in B(\vH)} \int_{-\infty}^{\infty} \theta(\omega')\hat{\vA}^a_{\approx \nu}(\omega')^{\dag} \hat{\vA}^a_{\approx \nu}(\omega') \rd \omega' \tag{undoing secular approximation: Corollary~\ref{cor:AA-SS}}\\
    & \stackrel{\verr_5}{\approx} \sum_{a\in S} \sum_{\nu \in B(\vH)} \theta(\nu) \int_{-\infty}^{\infty}\hat{\vA}^a_{\approx \nu}(\omega')^{\dag} \hat{\vA}^a_{\approx \nu}(\omega') \rd \omega' \tag{rounding $\theta$}\\
    & \stackrel{\verr_6}{\approx} \sum_{a\in S} \sum_{\nu \in B(\vH)} \theta_-(\nu) \int_{-\infty}^{\infty}\hat{\vA}^a_{\approx \nu}(\omega')^{\dag} \hat{\vA}^a_{\approx \nu}(\omega') \rd \omega' \tag{dropping positive values of $\theta$}.
\end{align}
The approximations $\verr_1,\verr_2$ are bounded by
\begin{align}
    {\verr_1} &\le \CO\L( \frac{\norm{\vH}^{3/4}}{\tau^{1/4}}\norm{c_{\beta}}_1 \norm{\sum_{a\in S} \vA^{a\dag}\vA^a} \R) = \labs{S}\CO(\frac{\norm{\vH}^{3/4}}{\tau^{1/4}})\\
    {\verr_2} &\le \frac{2\norm{\sum_{a\in S} \vA^{a\dag}\vA^a}}{\sqrt{2\pi}\tau} \norm{c_{\beta}}_1
    = \labs{S}\CO(\frac{1}{\tau}).
\end{align}

In the approximations $\verr_3$ and $\verr_4$, we choose the secular approximation parameter $\mu$ such that
\begin{equation}
\label{eq:mu-choice}
    \mu < (\Delta_{\nu}-4\delta_\lambda)/2
\end{equation}
with associated errors given by \cref{cor:AA-SS} as
\begin{align}
{\verr_3}&\le 2\norm{\theta}_{\infty} \norm{\sum_{a\in S} \vA^{a\dag}\vA^a }\norm{\hat{f}_{\tau}\cdot (1-\hat{s}_{\mu})}_2 \norm{f_{\tau}}_2 = \labs{S}\CO(\Lambda_0\sqrt{\frac{1}{\mu\tau}}), \\
{\verr_4}&\le 2\norm{\theta}_{\infty} \norm{\sum_{a, \nu}  \vA^{a\dag}_{\approx \nu}\vA^a_{\approx \nu}} \norm{\hat{f}_{\tau}\cdot (1-\hat{s}_{\mu})}_2\norm{f_{\tau}}_2 = \labs{S}\CO(\Lambda_0\sqrt{\frac{1}{\mu\tau}}),
\end{align}
where we applied Eq.~\eqref{eq:secular_tailbound} to bound $\norm{\hat{f}_{\tau}\cdot (1-\hat{s}_{\mu})}_2 \le \sqrt{4/{\pi \mu\tau}}$, and used Proposition~\ref{prop:AAAA} to bound the spectral norm of the sum of jump operators in the second line.

To justify the equality on the fourth line (different blocks $\nu$s decohere), observe the choice of the parameter $\mu$ in Eq.~\eqref{eq:mu-choice} implies
\begin{align}
\hat{\vS}^a(\omega') = \sum_{\nu \in B(\vH)} \hat{\vS}^a_{\approx \nu}(\omega') \indicator(\labs{\omega'- \nu}< \mu+2\delta_\lambda)
= \sum_{\nu \in B(\vH)} \hat{\vS}^a_{\approx \nu}(\omega') \indicator(\labs{\omega'- \nu}< \Delta_\nu/2),
\label{eq:mu_S}
\end{align}
where $\hat\vS^a_{\approx \nu}(\omega')$ is given in Eq.~\eqref{eq:AwSw_approx_nu}.
This ensures that for any given $\omega'$, $\hat{\vS}^a(\omega')$ can activate at most \textit{one} block of transitions with Bohr frequencies closest to $\nu$ (see Figure~\ref{fig:energyscales}).
Consequently,
\begin{equation}
    \hat{\vS}^{a\dag}(\omega')\hat{\vS}^a(\omega') = \sum_{\nu \in B(\vH)} \hat{\vS}^{a\dag}_{\approx \nu}(\omega') \hat{\vS}^a_{\approx \nu}(\omega').
\end{equation}

Next, for the approximation $\verr_5$, we define the following ``rounded'' function $\bar\theta(\omega')$ where an input $\omega'$ close to $\nu \in B(\vH)$ is assigned the same value $\theta(\nu)$, with uniqueness of $\nu$ guaranteed by Eq.~\eqref{eq:mu_S},
\begin{align}
\label{eq:theta-round}
    \bar{\theta}(\omega') := \begin{cases}
    \theta(\nu)\quad &\text{if}\quad \labs{\omega' - \nu} \le \mu + 2 \delta_\lambda\quad \text{for}\quad \nu \in B(\vH)\\
    \theta(\omega')\quad &\text{else}
    \end{cases}.
\end{align}
This lets us formally pull $\theta(\omega')$ out of the integral. Of course, this rounding introduces an error scaling with the energy spread multiplied with the derivative
\begin{align}
    \norm{\bar{\theta} - \theta}_{\infty} \le (2\mu + 4 \delta_\lambda) \cdot \norm{{\rd\theta}/{\rd\omega}}_{\infty}.
\end{align}
Roughly, this error quantifies how the energy gradient (i.e., $\theta(\omega)=\gamma_{\beta}(\omega)\omega$) changes due to perturbation in Bohr frequency. Thus,
\begin{equation}
    {\verr_5}\le \norm{\theta-\bar{\theta}}_{\infty} \bigg\|{\sum_{a, \nu}  \vA^{a\dag}_{\approx \nu}\vA^a_{\approx \nu}} \bigg\|\norm{\hat{f}_{\tau}\cdot \hat{s}_{\mu}}_2^2 = \labs{S}\CO\L(2\mu + 4 \delta_\lambda\R)
\end{equation}
where we applied Propositions \ref{prop:bound_derivative} and \ref{prop:AAAA} (deferred to \cref{sec:supplement_calc}), and used the fact that $\|\hat{f}_\tau \hat{s}_\mu\|_2\le \|\hat{f}_\tau\|_2 \le 1$.

Finally, in the last approximation $\verr_6$, we define the truncated weight 
\begin{equation}
    \theta_-(\nu) = 
    \theta(\nu) \indicator(\nu \le \Delta_{\nu}/2).
\end{equation}
This truncation has the property that $\theta_-(\nu) \le 0$ for each $\nu \in B(\vH)$, which ensures the last line is negative semidefinite.
Thus,
\begin{gather}
    {\verr_6} \le \norm{\theta - \theta_-}_{\infty} \norm{\sum_{a, \nu}  \vA^{a\dag}_{\approx\nu}\vA^a_{\approx\nu}}\norm{f_{\tau}}^2_2  \le \labs{S}\norm{\theta - \theta_-}_{\infty} \nonumber\\
    \text{where} \qquad  \norm{\theta - \theta_-}_{\infty} \le \max_{\omega\ge \Delta_\nu/2} \omega \gamma_\beta(\omega)  \le \frac{ \e^{-\beta\Delta_{\nu}/4} }{\beta},
\end{gather}
using the tail bound in Eq.~\eqref{eq:gamma-tailbound}. 
Altogether,
\begin{align}
    \epsilon_A &={\verr_1+\verr_2+\verr_3+\verr_4+\verr_5+\verr_6} \nonumber \\
    &\le \CO\L(\labs{S}\L(\frac{1}{\tau} +\frac{\norm{\vH}^{3/4}}{\tau^{1/4}}
    + \mu + \frac{\Lambda_0}{\sqrt{\mu\tau}}
    + \delta_\lambda + \frac{\e^{-\beta\Delta_{\nu}/4}}{\beta}\R)\R).
\end{align}
We then choose $\mu = \min(\Lambda_0^{2/3}/\tau^{1/3}, (\Delta_\nu-4\delta_\lambda)/4)$ so as to optimize the error $\CO(\mu+\Lambda_0/\sqrt{\mu \tau})$ while subject to the constraint that $\mu<(\Delta_{\nu} -4\delta_\lambda)/2$. 
This choice implies $\Lambda_0/\sqrt{\mu\tau} \le \Lambda_0^{2/3}/\tau^{1/3} + 2\Lambda_0/\sqrt{(\Delta_{\nu} -4\delta_\lambda)\tau} \le \CO(\Lambda_0^{2/3}/\tau^{1/3} + \Lambda_0/\sqrt{\Delta_\nu\tau})$, where we used $\delta_\lambda\le \|\vV\|\le \Delta_\nu/8$ which is a combination of \cref{fact:Weyl} and the assumption in the theorem statement. 
This yields the following error bound
\begin{equation}
    \epsilon_A \le \CO\L(\labs{S}\L(\frac{1}{\tau} +\frac{\norm{\vH}^{3/4}}{\tau^{1/4}}
    + \frac{\Lambda_0^{2/3}}{\tau^{1/3}} + \frac{\Lambda_0}{\sqrt{\Delta_{\nu} \tau}} 
    + \delta_\lambda + \frac{\e^{-\beta\Delta_{\nu}/4}}{\beta}\R)\R)\label{eq:epsA_bound}.
\end{equation}

\textbf{Step 2.}
For the original Hamiltonian $\vH$, we may repeat the above argument with trivial perturbation ($\vV = 0$) to get
\begin{equation}
\label{eq:LH-rewritten}
    \norm{\sum_{a\in S} \sum_{\nu \in B(\vH)} \theta_-(\nu)
    (\vA^a_{\approx \nu})^{\dag} \vA_{\approx \nu} -\CL^{\dag}[\vH]} \le \epsilon_B
\end{equation}
for some $\epsilon_B >0$.
More detailedly, we have
\begin{align}
\CL^{\dag}[\vH] &\stackrel{\verr_7}{\approx}  \sum_{a\in S} \sum_{\nu \in B(\vH)} \int_{-\infty}^{\infty}\theta_-(\nu)
    \hat{\vA}^a_{\nu}(\omega)^{\dag} \hat{\vA}^a_{\nu}(\omega) \rd \omega \tag{setting $\vV = 0$ from above} \\
    &=\sum_{a\in S} \sum_{\nu \in B(\vH)} \theta_-(\nu)
    (\vA^a_{\nu})^{\dag} \vA^a_{\nu} \tag{Proposition~\ref{prop:iso_operator_FT} and $[\vA_{\nu}^{a\dag}\vA^a_{\nu},\vH]=0$ for each $\nu\in B(\vH)$}\\
    &\stackrel{\verr_8}{\approx} \sum_{a\in S} \sum_{\nu \in B(\vH)} \theta_-(\nu)
    (\vA^a_{\approx \nu})^{\dag} \vA^a_{\approx \nu}
    \label{eq:subspace_perturb}.
\end{align}
The second line is operator Parseval's identity, where the time evolution simplifies due to commutativity $[\vA_{\nu}^{a\dag}\vA^a_{\nu},\vH]=0$.
The error $\verr_7$ can be bounded by the same bounds for $\epsilon_A$ in Eq.~\eqref{eq:epsA_bound} by setting $\vV =0$ (i.e., $\vA^a_{\approx\nu} \rightarrow \vA^a_{\nu}$ ).

The last line is a brute-force rewriting of $\vA^a_{\nu}$ into $\vA^a_{\approx \nu}$, which acts on eigenstates of $\vH'$ instead of $\vH$, with error $E_8$ bounded by perturbation theory.
This rewriting allows us to prove \cref{thm:mono_gradient_full} by directly applying Lemma~\ref{lemma:mono_rate} between the following set of operators and their Fourier transforms 
\begin{align*}
 \{\sqrt{\labs{\theta_-(\nu)}}\vA^a_{\approx \nu}\}_{a,\nu} \quad \text{and}\quad \{\sqrt{\labs{\theta_-(\nu)}}\hat{\vA}^a_{\approx \nu}(\omega')\}_{a,\nu,\omega'}\quad \text{for the perturbed Hamiltonian}\quad \vH'.
\end{align*}
We give an explicit error bound on $\verr_8$ in Proposition~\ref{prop:thetaAAAA} (deferred to \cref{sec:supplement_calc}), which yields
\begin{align}
    {\verr_8} = \CO\L( \bigg\|\sum_{a\in S} \vA^{a\dag}\vA^a\bigg\| \theta_{\max} \frac{\norm{\vV}}{\Delta_{\nu}}\R) = \CO\L( \labs{S} \theta_{\max}\frac{\norm{\vV}}{\Delta_{\nu}}\R),
\end{align}
where we used the bound $\max_{\nu\in B(\vH)} |\theta_-(\nu)| =\theta_{\max}$ provided in the theorem statement.
Collect the errors to bound $\epsilon_B = \verr_7 + \verr_8 \le \epsilon_A + \verr_8$.

\textbf{Step 3.}
Now we may finish the proof of \cref{thm:mono_gradient_full} by applying Lemma~\ref{lemma:mono_rate}. 
First, Eq.~\eqref{eq:LH-rewritten} implies
\vspace{-5pt}
\begin{align}
&-\sum_{a\in S} \sum_{\nu \in B(\vH)} \theta_-(\nu)
    (\vA^a_{\approx \nu})^{\dag} \vA_{\approx \nu} + \epsilon_B \vI 
    \succeq -\CL^{\dag}[\vH] \nonumber \\
    &\qquad \qquad 
    \succeq r\vO - \epsilon \vI  \tag{by assumption}\\
    &\qquad \qquad 
    \succeq r\vO'  - (r\|\vO'-\vO\| + \epsilon) \vI .
\end{align}
Similarly, Eq.~\eqref{eq:LHprime-rewritten} implies
\begin{align}
    -\CL^{'\dag}[\vH'] + \epsilon_A \vI &\succeq -\sum_{a\in S} \sum_{\nu \in B(\vH)} \theta_-(\nu) \int_{-\infty}^{\infty}
    \hat{\vA}^a_{\approx \nu}(\omega')^{\dag} \hat{\vA}^a_{\approx \nu}(\omega')   \rd \omega' \nonumber \\
    &\succeq r\vO' - (\epsilon + \epsilon_B + r \norm{\vO-\vO'}) \vI.
\end{align}
Note in the last step we used the assumption that $[\vO',\vH']=0$ and applied \cref{lemma:mono_rate}.
Hence, we have shown that $-\CL^{'\dag}[\vH'] \succeq r\vO' - \epsilon'\vI$, where $\epsilon' = \epsilon + \epsilon_A + \epsilon_B   + r\|\vO-\vO'\|$ can be bounded by
\begin{align}
\label{eq:mono-eps-bound-full}
    \epsilon' \le \epsilon + \CO\bigg(
\labs{S}\Big(\frac{1}{\tau} +\frac{\norm{\vH}^{3/4}}{\tau^{1/4}}+\frac{\Lambda_0^{2/3}}{\tau^{1/3}}  + \frac{\Lambda_0}{\sqrt{\Delta_{\nu} \tau}} 
+ \frac{\e^{-\beta\Delta_{\nu}/4}}{\beta}
+ \delta_\lambda + 
\theta_{\max}\frac{\|\vV\|}{\Delta_\nu} + r\|\vO-\vO'\| \Big)
\bigg).
\end{align}

\paragraph{A simpler bound.}
We now consider the special case of $\vO=\vI-\vP$, $\vO=\vI-\vP'$ to derive a simpler bound as in the theorem statement.
Note we have $\|\vO-\vO'\|= \|\vP-\vP'\| \le 8\|\vV\|/\Delta_E \le 8\|\vV\|/\Delta_\nu$, using \cref{lem:PP'} and the fact the spectral gap is lower bounded by the Bohr-frequency gap, $\Delta_E\ge \Delta_\nu$.
Furthermore generally $\theta_{\max} = \|\theta_-\|_\infty = \CO(\Lambda_0)$ for our choice of $\gamma_\beta(\omega)$ in \cref{eq:glauber-dyn}.
And we always have $\delta_\lambda \le \|\vV\|$ by \cref{fact:Weyl}.
Plugging these into \cref{eq:mono-eps-bound-full}, we have
\begin{equation}
\label{eq:mono-eps-bound-simple}
\epsilon' \le \epsilon + 
\CO\bigg(
\labs{S}\Big(\frac{1}{\tau} +\frac{\norm{\vH}^{3/4}}{\tau^{1/4}} + \frac{\Lambda_0^{2/3}}{\tau^{1/3}} + \frac{\Lambda_0}{\sqrt{\Delta_{\nu} \tau}} + \frac{\e^{-\beta\Delta_{\nu}/4}}{\beta} + \norm{\vV} + (\Lambda_0+r)\frac{\|\vV\|}{\Delta_\nu} \Big)
\bigg).
\end{equation}
This concludes our proof of \cref{thm:mono_gradient_full}.

\subsection{Supplementary calculations}
\label{sec:supplement_calc}
In this section, we provide some missing calculations that prove some propositions used in the proof in the previous section.

\begin{proposition}[Bounds on the derivative]\label{prop:bound_derivative}
There exists an absolute constant $C$ such that for any $\beta,\Lambda_0$,
    \begin{equation}
          \norm{\frac{\rd}{\rd\omega} \theta(\omega)}_{\infty} = \CO\L( \norm{\frac{\rd}{\rd\omega} \L( \frac{\e^{-\omega^2/2\Lambda_0^2}}{1+\e^{\beta \omega}}\omega \R)}_{\infty}\R) \le C.
    \end{equation}
\end{proposition}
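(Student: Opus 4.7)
The plan is direct: write out the derivative of $\theta(\omega) = \gamma_\beta(\omega)\omega$ explicitly and bound each resulting term by an absolute constant using the rescalings $x = \omega/\Lambda_0$ and $y = \beta\omega$. First I would note that the normalization prefactor $\frac{1}{2+\ln(1+\beta\Lambda_0)}$ appearing in $\gamma_\beta$ is bounded by $1/2$ uniformly in $\beta,\Lambda_0$, which justifies the first $\CO(\cdot)$ relation in the proposition and reduces the problem to showing that $g(\omega) := \frac{e^{-\omega^2/2\Lambda_0^2}}{1+e^{\beta\omega}}\omega$ has a uniformly bounded derivative.

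Applying the product rule followed by the quotient rule yields three terms:
\begin{equation}
g'(\omega) = \frac{e^{-\omega^2/2\Lambda_0^2}}{1+e^{\beta\omega}} \;-\; \frac{\omega^2}{\Lambda_0^2}\cdot\frac{e^{-\omega^2/2\Lambda_0^2}}{1+e^{\beta\omega}} \;-\; \frac{\beta\omega\, e^{\beta\omega}}{(1+e^{\beta\omega})^2}\, e^{-\omega^2/2\Lambda_0^2}.
\end{equation}
Each term will be handled by a one-line substitution. The first term is bounded by $1$ in absolute value since $e^{-\omega^2/2\Lambda_0^2}\le 1$ and $1/(1+e^{\beta\omega})\le 1$. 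For the second term, setting $x = \omega/\Lambda_0$ reduces it to $x^2 e^{-x^2/2}/(1+e^{\beta\omega})$, whose magnitude is at most $\sup_{x\in\BR} x^2 e^{-x^2/2} = 2/e$, which is an absolute constant. For the third term, setting $y = \beta\omega$ gives $\bigl|\frac{y\, e^y}{(1+e^y)^2}\bigr|\, e^{-\omega^2/2\Lambda_0^2}$; the function $y\mapsto y e^y/(1+e^y)^2 = \tfrac{y}{4}\,\mathrm{sech}^2(y/2)$ vanishes at $y=0$ and decays exponentially as $|y|\to\infty$, so it attains a finite absolute-constant maximum on $\BR$.

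Combining these three bounds gives $\|g'\|_\infty \le C'$ for an absolute constant $C'$, and then multiplying by the prefactor bound $\le 1/2$ gives $\|\theta'\|_\infty \le C$ for an absolute constant $C$, as claimed. There is no substantial obstacle: the only mild subtlety is verifying that $y e^y/(1+e^y)^2$ is bounded uniformly in $y\in\BR$, which follows immediately from its identification with $\tfrac14 y\,\mathrm{sech}^2(y/2)$ and the exponential decay of $\mathrm{sech}^2$ at infinity. Crucially, no dependence on $\beta$ or $\Lambda_0$ is introduced in any of the three estimates, since each rescaling absorbs the relevant parameter and the Gaussian envelope provides the required decay whenever the polynomial factor would otherwise grow.
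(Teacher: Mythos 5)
Your proposal is correct and follows essentially the same route as the paper's proof: apply the product rule to $\omega\,\e^{-\omega^2/2\Lambda_0^2}/(1+\e^{\beta\omega})$ and bound each resulting term by an absolute constant via the rescalings $x=\omega/\Lambda_0$ and $y=\beta\omega$, with the normalization prefactor absorbed into the $\CO(\cdot)$. Your write-up is just a more explicit version of the paper's one-line argument, including the harmless extra detail that $y\e^y/(1+\e^y)^2=\tfrac{y}{4}\,\mathrm{sech}^2(y/2)$ is uniformly bounded.
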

\begin{proof}
By the product rule, 
\begin{align}
         \labs{\frac{\rd}{\rd\omega}\L( \frac{\omega\e^{-\omega^2/2\Lambda_0^2}}{1+\e^{\beta \omega}} \R)} & = \labs{\frac{\e^{-\omega^2/2\Lambda_0^2}-\e^{-\omega^2/2\Lambda_0^2}\omega^2/\Lambda_0^2 }{1+\e^{\beta \omega}} -  \frac{ \e^{-\omega^2/2\Lambda_0^2} \beta\omega\e^{\beta \omega} }{(1+\e^{\beta \omega})^2}}  \le (const.)
\end{align}
using change of variable $x = \beta \omega$ and $y = \omega/ \Lambda_0 $ to obtain the absolute constant bound.
\end{proof}

\begin{proposition}\label{prop:AAAA}
In the prevailing notation,
    \begin{equation}
    \Bigg\|{\sum_{a\in S} \sum_{\nu \in B(\vH)} \vA^{a\dag}_{\nu}\vA^a_{\nu}}\Bigg\|, 
    \quad \Bigg\|{\sum_{a\in S}\sum_{ \nu\in B(\vH)} \vA^{a\dag}_{\approx \nu}\vA^a_{\approx \nu}}\Bigg\| \le \Bigg\|{\sum_{a\in S} \vA^{a\dag}\vA^a}\Bigg\|.
    \end{equation}
\end{proposition}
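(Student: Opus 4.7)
The plan is to realize each of the two sums as a \emph{pinching map} applied to $\sum_a (\vA^a)^\dag \vA^a$, and then invoke the standard fact that pinching with respect to a complete family of orthogonal projectors is a unital completely positive (hence operator-norm contractive) superoperator.

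For the first inequality, let $\Phi_\vH[\vM] := \sum_{E\in\mathrm{Spec}(\vH)} \vP_E\, \vM\, \vP_E$ denote the pinching with respect to the spectral projectors of $\vH$. Using the definition $\vA^a_\nu = \sum_{E_2-E_1=\nu} \vP_{E_2} \vA^a \vP_{E_1}$, direct bookkeeping gives
\[
(\vA^a_\nu)^\dag \vA^a_\nu \;=\; \sum_{\substack{E_2-E_1=\nu\\ E_2-E_1'=\nu}} \vP_{E_1}(\vA^a)^\dag \vP_{E_2}\vA^a\vP_{E_1'},
\]
where orthogonality $\vP_{E_2}\vP_{E_2'}=\delta_{E_2,E_2'}\vP_{E_2}$ collapses one sum, and the two frequency conditions then force $E_1=E_1'$. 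Summing over $\nu$ kills the constraint on $(E_1,E_2)$, so $\sum_\nu (\vA^a_\nu)^\dag \vA^a_\nu = \Phi_\vH[(\vA^a)^\dag \vA^a]$. Linearity over $a$ and contractivity of $\Phi_\vH$ then yield the desired inequality.

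For the second inequality, the main technical point is to show that $\sum_\nu (\vA^a_{\approx\nu})^\dag \vA^a_{\approx\nu}$ is again a single pinching, just with respect to a coarser partition. The hypothesis $\|\vV\|\le \tfrac18\Delta_\nu$ (implicit through $\delta_\lambda\le\|\vV\|$ and $\Delta_\nu>4\delta_\lambda$) guarantees that the relation $\nu'\approx\nu$ from Eq.~\eqref{eq:def_A_approx_nu} partitions $\mathrm{Spec}(\vH')$ into well-defined \emph{clusters} $\{G_j\}$, each cluster being the set of eigenvalues of $\vH'$ that flowed from a single degenerate eigenvalue of $\vH$. Let $\vQ_j:=\sum_{E'\in G_j}\vP_{E'}$, so that $\{\vQ_j\}$ is a complete orthogonal family and $\vA^a_{\approx\nu}=\sum_{E^{(k)}-E^{(j)}=\nu}\vQ_k\vA^a\vQ_j$. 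Repeating the bookkeeping above with $\vQ_j$ in place of $\vP_{E_1}$, the pair of conditions $E^{(k)}-E^{(j)}=\nu=E^{(k)}-E^{(j')}$ forces $j=j'$, giving
\[
\sum_\nu (\vA^a_{\approx\nu})^\dag \vA^a_{\approx\nu} \;=\; \sum_{j,k}\vQ_j(\vA^a)^\dag \vQ_k\vA^a\vQ_j \;=\; \Psi\bigl[(\vA^a)^\dag\vA^a\bigr],
\]
where $\Psi[\vM]=\sum_j \vQ_j\vM\vQ_j$. The operator norm bound again follows from linearity in $a$ and $\|\Psi\|_{\infty\to\infty}\le 1$.

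The only step that requires some care, rather than pure symbol pushing, is verifying that the clusters $\{G_j\}$ really are well-defined and that the pairs $(E'_1,E'_2)$ contributing to different $(\vA^a_{\approx\nu})^\dag \vA^a_{\approx\nu}$ do not overlap; this is where the Bohr-frequency gap assumption is used, and it is precisely the content of the identification scheme in Eq.~\eqref{eq:def_A_approx_nu}. Everything else is either linearity or the standard fact that pinching is a unital CP map (for instance, pinching can be written as a uniform average of unitary conjugations by a discrete phase group over the $\vQ_j$'s), so it is contractive in every unitarily invariant norm, including the operator norm.
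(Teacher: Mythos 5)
Your proposal is correct and follows essentially the same route as the paper: the paper also resolves the identity into the clustered projectors $\vP_{\approx\bar E}$, collapses the double sum over $\nu$ to obtain the block-diagonal (pinched) compression of $\sum_{a}\vA^{a\dag}\vA^a$, and bounds the norm by noting that a block-diagonal operator's norm is the maximum over blocks, which is just your contractivity-of-pinching argument stated differently. The point you flag about the clusters being well-defined under the Bohr-frequency-gap assumption is exactly the identification the paper uses implicitly when writing $\vA^a_{\approx\nu}=\sum_{\bar E_2-\bar E_1=\nu}\vP_{\approx\bar E_2}\vA^a\vP_{\approx\bar E_1}$.
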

\begin{proof}
We focus on $\vA^{a\dag}_{\approx \nu}$ since the case of $\vA^{a\dag}_{\nu}$ is a special case. Resolve the identity by nearby energy subspaces
\begin{equation}
    \vI = \sum_{E} \vP_{E} = \sum_{\bar{E}} \undersetbrace{=:\vP_{\approx \bE}}{\sum_{E\approx \bE} \vP_{E}}\quad \text{such that}\quad     \vA^{a}_{\approx \nu} = \sum_{\bE_2 - \bE_1 = \nu }  \vP_{\approx \bE_2} \vA^{a} \vP_{\approx \bE_1}.
\end{equation}
Now, we calculate
\begin{align}
    \Bigg\|{\sum_{a\in S}\sum_{ \nu\in B(\vH)}\vA^{a\dag}_{\approx \nu}\vA^a_{\approx \nu}}\Bigg\| &= \Bigg\|{\sum_{a\in S,\bE_2, \bE_1} \vP_{\approx \bE_1} \vA^{a\dag} \vP_{\approx \bE_2} \vA^{a} \vP_{\approx \bE_1}}\Bigg\|  = \Bigg\|{\sum_{a\in S,\bE_1} \vP_{\approx \bE_1} \vA^{a\dag} \vA^{a} \vP_{\approx \bE_1}}\Bigg\| \nonumber \\
    &= \max_{\bE} \bigg\|{\vP_{\approx \bE_1} \sum_{a\in S}\vA^{a\dag} \vA^{a} \vP_{\approx \bE_1}}\bigg\| \le \bigg\|{\sum_{a\in S} \vA^{a\dag}\vA^a}\bigg\|.
\end{align}
The last line uses that the operator norm of block-diagonal matrices equals the maximum among the blocks. 
\end{proof}
\begin{proposition}[Jumps with perturbed Hamiltonian]\label{prop:thetaAAAA}
In the prevailing notation, and for any function $h(\nu)$, we have that
\begin{align}
&\norm{\sum_{a\in S} \sum_{\nu\in B(\vH)} h(\nu) (\vA^a_{\nu})^{\dag} \vA^a_{\nu} -\sum_{a\in S} \sum_{\nu\in B(\vH)} h(\nu)
    (\vA^a_{\approx \nu})^{\dag} \vA^a_{\approx \nu}} \nonumber \\
    &\qquad\qquad\qquad\qquad\qquad\qquad\qquad
    \le \CO\Bigg( \Bigg\|{\sum_{a\in S} \vA^{a\dag}\vA^a}\Bigg\| \cdot \max_{\nu \in B(\vH)}\labs{h(\nu)} \cdot \frac{\norm{\vV}}{\Delta_{\nu}} \Bigg).     
\end{align}    
\end{proposition}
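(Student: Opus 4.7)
I would begin by expressing both sides in the energy eigenbasis of $\vH$ and $\vH'$. Since $\vA^a_\nu = \sum_{E_2-E_1 = \nu} \vP_{E_2} \vA^a \vP_{E_1}$, one has $\sum_\nu h(\nu) \vA^{a\dag}_\nu \vA^a_\nu = \sum_{E_1, E_2} h(E_2-E_1) \vP_{E_1} \vA^{a\dag} \vP_{E_2} \vA^a \vP_{E_1}$, and analogously for the perturbed side with $\vP_{\approx \bar E}$ (projector onto the perturbed eigensubspace identified with unperturbed eigenvalue $\bar E$) replacing $\vP_{\bar E}$ throughout. Taking the difference and telescoping in the three projector factors yields $\vM = \vM_1 + \vM_2 + \vM_3$, where each summand contains exactly one factor of $\vQ_{\bar E} := \vP_{\bar E} - \vP_{\approx \bar E}$. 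Davis--Kahan (Lemma~\ref{lem:PP'}), applied using that the spectral gap $\Delta_E \ge \Delta_\nu$ dominates $\|\vV\|$, gives $\|\vQ_{\bar E}\| \le 8\|\vV\|/\Delta_\nu =: \eta$ uniformly in $\bar E$.

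\paragraph{Two-stage norm reduction.} For each $\vM_j$ I would decouple the $a$-sum from the $\bar E$-sum by a double purification. The inner sum over $a$ has the shape $\sum_a \vA^{a\dag} \vX \vA^a$ where $\vX$ is a Hermitian operator built from $h$ and projectors; writing $\vN^L = \sum_a \vA^a \otimes \ket{a}$, the pseudo-Parseval estimate (Lemma~\ref{lem:Op_purification_bounds}) gives $\|\sum_a \vA^{a\dag} \vX \vA^a\| \le \|\vX\| \cdot \|\sum_a \vA^{a\dag}\vA^a\|$, with no spurious $|S|$ factor. The outer sum over $\bar E_1$ then reduces to bounding $\|\sum_{\bar E_1} \vQ_{\bar E_1} \vT_{\bar E_1} \vP_{\bar E_1}\|$ for $\vM_1$ (and its mirror for $\vM_3$), and $\|\sum_{\bar E_1} \vP_{\approx \bar E_1} \vT_{\bar E_1} \vP_{\bar E_1}\|$ for $\vM_2$. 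The $\vM_2$ case is immediate: orthogonality of both projector families together with $\|\vP_{\bar E_1}\psi\|^2$ summing to $\|\psi\|^2$ gives $\|\vM_2\| \le \max_{\bar E_1} \|\vT_{\bar E_1}\|$, and for $\vM_2$ the middle factor already carries an $\eta$ since $\sum_{\bar E_2} h(\bar E_2 - \bar E_1) \vQ_{\bar E_2} = h(\vH - \bar E_1) - \vU h(\vH - \bar E_1) \vU^\dag$ has norm $\mathcal{O}(\eta \|h\|_\infty)$ by unitary conjugation (see below).

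\paragraph{Main obstacle.} The delicate step lies in $\vM_1$ (and $\vM_3$): the second purification produces a factor $\|\sum_{\bar E_1} \vQ_{\bar E_1}^2\|^{1/2}$, and a na\"ive bound $\vQ_{\bar E}^2 \preceq \eta^2 \vI$ summed over exponentially many $\bar E$'s is useless. The essential cancellation to exploit is that $\{\vP_{\bar E}\}$ and $\{\vP_{\approx \bar E}\}$ are both resolutions of the identity. To extract this, I would invoke Lemma~\ref{lem:off_diag_perturbation} (or a direct perturbative construction) to produce an anti-Hermitian $\vB$ with $\|\vB\| = \mathcal{O}(\|\vV\|/\Delta_\nu)$ such that the unitary $\vU = e^{\vB}$ simultaneously conjugates $\vP_{\bar E}$ to $\vP_{\approx \bar E}$ for every $\bar E$. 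Defining the isometry $\vB_p := \sum_{\bar E} \vP_{\bar E} \otimes \ket{\bar E}$, one identifies
\begin{equation}
\textstyle\sum_{\bar E} \vQ_{\bar E}^2 = \vR^\dag \vR, \qquad \vR := \vB_p - (\vU \otimes \vI) \vB_p \vU^\dag,
\end{equation}
so $\|\sum_{\bar E} \vQ_{\bar E}^2\| = \|\vR\|^2 \le (2\|\vU - \vI\|)^2 = \mathcal{O}(\|\vV\|^2/\Delta_\nu^2)$ by adding and subtracting $(\vU \otimes \vI) \vB_p$.

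\paragraph{Assembly.} With $\|\sum \vQ^2\|^{1/2} = \mathcal{O}(\|\vV\|/\Delta_\nu)$ in hand, the purified estimate on $\vM_1$ yields $\|\vM_1\| \le \mathcal{O}(\|\vV\|/\Delta_\nu) \cdot \max_\nu |h(\nu)| \cdot \|\sum_a \vA^{a\dag} \vA^a\|$, and the analogous argument for $\vM_3$ together with the $\vM_2$ bound above gives the same estimate for each piece. The triangle inequality $\|\vM\| \le \|\vM_1\| + \|\vM_2\| + \|\vM_3\|$ delivers the claim. I expect the construction of the simultaneous intertwining unitary $\vU$ to be the subtlest technical step: it requires that the Hamiltonian $\vH$ really does have a block structure with gap $\Delta_\nu$, so that perturbed blocks line up with their unperturbed counterparts --- this is precisely the premise $\|\vV\| \ll \Delta_\nu$ of the ambient theorem.
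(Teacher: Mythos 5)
Your overall architecture --- telescope the difference into three terms each carrying one factor $\vQ_{\bE}=\vP_{\bE}-\vP_{\approx\bE}$, decouple the $a$-sum via Lemma~\ref{lem:Op_purification_bounds}, and then beat the sum over exponentially many eigenvalues by reducing $\norm{\sum_{\bE}\vQ_{\bE}^2}^{1/2}$ (and the middle-slot term) to the distance of a \emph{single} global rotation --- is sound, and you correctly flag that everything hinges on a unitary $\vU$ with $\vU\vP_{\bE}\vU^\dag=\vP_{\approx\bE}$ for every $\bE$ simultaneously and $\norm{\vU-\vI}=\CO(\norm{\vV}/\Delta_\nu)$. The gap is that this ingredient is asserted rather than established, and the lemma you cite does not deliver it: Lemma~\ref{lem:off_diag_perturbation} treats a two-block split, assumes $\vV$ is off-block-diagonal, and yields only an \emph{approximate} block diagonalization ($\e^{\vB}(\vD+\vV)\e^{-\vB}=\vD+\CO(\norm{\vV}^2/\Delta)$), not an exact intertwiner of the two spectral resolutions. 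The natural exact candidate $\vU=\vX(\vX^\dag\vX)^{-1/2}$ with $\vX=\sum_{\bE}\vP_{\approx\bE}\vP_{\bE}$ does intertwine all blocks and is well defined (since $\vX^\dag\vX=\vI+\sum_{\bE}\vP_{\bE}\vQ_{\bE}\vP_{\bE}$ is block-diagonal, hence $\succeq(1-\eta)\vI$), but bounding $\norm{\vU-\vI}$ then requires bounding $\norm{\vX-\vI}=\norm{\sum_{\bE}\vQ_{\bE}\vP_{\bE}}$, which is, up to a square root, exactly the many-block quantity you set out to control --- so that route is circular. A genuine fix exists (Kato's transformation function along $\vH+s\vV$: its generator is block-off-diagonal with matrix elements $V_{ab}/(E_a-E_b)$ across clusters, so its norm is $\CO(\norm{\vV}/\Delta_E)$ by the same smooth-$1/\omega$ filter used inside Lemma~\ref{lem:off_diag_perturbation}, uniformly in the number of clusters), but that is real additional work your proposal does not contain; note also that all three pieces $\vM_1,\vM_2,\vM_3$ lean on it, including your bound for the middle slot, since $h$ is only bounded, not Lipschitz, so $h_{\bE_1}(\vH)-h_{\bE_1}(\bar\vH)$ cannot be controlled without the common rotation.

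For contrast, the paper's proof sidesteps subspace perturbation theory entirely: it introduces the auxiliary Hamiltonian $\bar\vH$ with the spectrum of $\vH$ but the eigenbasis of $\vH'$ (so $\norm{\vH-\bar\vH}\le2\norm{\vV}$), writes both sums exactly as operator Fourier transforms with a smooth window $g$ whose frequency support lies below $\Delta_\nu/2$ (so distinct Bohr blocks decohere and $\int\labs{\hat g(\omega-\nu)}^2\rd\omega=1$ reproduces each side), and then bounds the difference in the time domain via $\norm{\e^{\ri\vH t}-\e^{\ri\bar\vH t}}\le2\norm{\vV}\labs{t}$ integrated against $\labs{g(t)}$, which decays beyond $\labs{t}\sim1/\Delta_\nu$, finishing with the same purification trick. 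If you wish to keep your spectral-projector route, you must first prove the simultaneous-intertwiner lemma; otherwise the time-domain comparison is the shorter and safer path.
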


\begin{proof}
It suffices to set $\norm{\sum_{a\in S} \vA^{a\dag}\vA^a} =1$. Consider the operator Fourier transform with a smooth bump weight $\norm{g}_2 =1$, 
\begin{align}
    \vA^a_{g,\vH}(\omega) \quad \text{for}\quad \hat{g}(\omega)\propto  \begin{cases}
    0 \quad &\text{if}\quad \labs{\omega} \ge \frac{\Delta_{\nu}}{2}\\
    \CO(1) \quad &\text{else}.
    \end{cases}    
\end{align}
which automatically decohere different Bohr frequency blocks (i.e., no need to apply secular approximation). Also, extend the function locally 
\begin{align}
    h(\omega):= \begin{cases}
    h(\nu) \quad &\text{if}\quad \labs{\omega-\nu} \le 2\norm{\vV}
    \end{cases}  .  
\end{align}
Then,
\begin{align}
    \sum_{a\in S} \int_{-\infty}^{\infty} h(\omega)\vA^a_{g,\vH}(\omega)^{\dag} \vA_{g,\vH}^a(\omega)\rd \omega &=\sum_{a\in S} \sum_{\nu,\nu' \in B(\vH)}\int_{-\infty}^{\infty} h(\omega)(\vA^a_{\nu'})^{\dag} \vA^a_{\nu}\hat{g}^*(\omega-\nu')\hat{g}(\omega-\nu) \rd \omega \nonumber\\
    &=\sum_{a\in S} \sum_{\nu,\in B(\vH)}  h(\nu)(\vA^a_{\nu})^{\dag} \vA^a_{\nu}\int_{-\infty}^{\infty}\labs{\hat{g}(\omega-\nu)}^2 \rd \omega \nonumber\\
    &=\sum_{a\in S} \sum_{\nu\in B(\vH)} h(\nu)(\vA^a_{\nu})^{\dag} \vA^a_{\nu}.
\end{align}

Now, we add the perturbation $\vH + \vV$. The insight is that we can introduce an artificial Hamiltonian 
\begin{equation}
    \bar{\vH} := \sum_{E\in \text{spec}(\vH)} E\sum_{E' \approx E }\vP_{E'} \quad \text{such that}\quad \norm{\bar{\vH} - \vH} \le 2\norm{\vV} \label{eq:barH-H} 
\end{equation}
with exactly the same spectrum of the original Hamiltonian $\vH$, but with the basis according to the perturbed Hamiltonian $\vH'$. Then, the same argument with the artificial Hamiltonian implies
\begin{align} 
    \sum_{a\in S} \int_{-\infty}^{\infty} h(\omega)\vA^a_{g,\bar{\vH}}(\omega)^{\dag} \vA_{g,\bar{\vH}}^a(\omega)\rd \omega &=\sum_{a\in S} \sum_{\nu\in B(\bar{\vH})} h(\nu)(\vA^a_{\approx\nu})^{\dag} \vA^a_{\approx\nu}.
\end{align}
Lastly, we may bound the difference by the purification
\begin{align}
     &\norm{\sum_{a\in S} \int_{-\infty}^{\infty} \hat{\vA}^a_{g,\vH}(\omega) \otimes \ket{a} \otimes\ket{\omega} \rd \omega - \sum_{a\in S} \int_{-\infty}^{\infty} \hat{\vA}^a_{g,\bar{\vH}}(\omega) \otimes \ket{a} \otimes\ket{\omega} \rd \omega} \nonumber \\
     &=\norm{\sum_{a\in S} \int_{-\infty}^{\infty} \vA^a_{\vH}(t) \otimes \ket{a} \otimes g(t)\ket{t} \rd t - \sum_{a\in S} \int_{-\infty}^{\infty} \vA^a_{\bar{\vH}}(t) \otimes \ket{a} \otimes g(t)\ket{t} \rd t}\tag{Fourier Transform is unitary}\\
     &\le \sqrt{ \int_{-\infty}^{\infty} \labs{2\norm{\e^{\ri \vH t} - \e^{\ri \bar{\vH} t}} g(t)}^2 \rd t } = \CO(\frac{\norm{\vV}}{\Delta_{\nu}}).
\end{align}
The factor of 2 is due to left and right Hamiltonian evolution
\begin{align}
    \e^{\ri \vH t}\vA^a \e^{-\ri \vH t} -  \e^{\ri \bar{\vH} t}\vA^a \e^{-\ri \bar{\vH} t} &= \e^{\ri \vH t}\vA^a (\e^{-\ri \vH t} -  \e^{-\ri \bar{\vH} t}) +
(\e^{\ri \vH t} - \e^{\ri \bar{\vH} t})\vA^a \e^{-\ri \bar{\vH} t}.
\end{align}
To evaluate the integral, we use that $\norm{\e^{\ri \vH t} - \e^{\ri \bar{\vH} t}}\le \norm{\vH-\bar{\vH}}t \le 2 \norm{\vV}t$ and that $g(t)$ is rapidly decaying for large $\labs{t}\ge \frac{1}{\Delta_{\nu}}$. To conclude the proof, use the purification tricks (\cref{lem:Op_purification_bounds}).
\end{proof}

\subsection{Monotonicity of gradient on a subspace}
\label{sec:monotone_grad_subspace}
For our proof that \textsf{BQP}-hard Hamiltonians has no suboptimal local minima in \cref{sec:universal-quantum-computation}, we will need the following refinements of \cref{thm:mono_gradient_full} where the gradient operator acts on a low-energy subspace with an excitation gap.
Intuitively, we care only about the Bohr-frequency gap restricted to the low-energy subspace $\vH\vQ$ instead of the full Hilbert space; the gradient on that subspace should not be sensitive to the excited states above the excitation gap. 

\begin{corollary}[Monotonicity of gradient on a subspace; \cref{cor:monotonicity_subspace} restated]
\label{cor:mono_subspace_restated}
Consider a Hamiltonian $\vH = \sum_{\bar{E}} \bar{E}\vP_{\Bar{E}}$ and its perturbation $\vH':=\vH+\vV$. Let $\vP$ be the ground space projector for $\vH$ and $\vP'$ be the corresponding perturbed eigensubspace of $\vH'$.
Let $\vQ$ be a low-energy eigensubspace projector of $\vH$ (i.e., $\vQ  = \sum_{E \le E_{\vQ}} \vP_E$ for $E_{\vQ} \in \text{Spec}(\vH)$) with excitation gap $\Delta_{\vQ}$.
Assume $\frac{\|\vV\|\norm{\vH}}{\Delta_{\vQ}} \le \frac{1}{144}\Delta_\nu$ where $\Delta_\nu := \min_{\nu_1\neq \nu_2\in B(\vH|_{\vQ})} |\nu_1-\nu_2|$ is the Bohr-frequency gap of $\vH$ within the subspace $\vQ$.  
For any $\beta, \tau >0$, let $\CL = \sum_{a\in S} \CL^{\beta,\tau, \vH}_a, \CL' = \sum_{a\in S} \CL^{\beta,\tau, \vH'}_a$ be thermal Lindbladians with jumps $\{\vA^a\}_{a\in S}$, where $\norm{\vA^a}\le 1$ and the transition weight $\gamma_{\beta}(\omega)$ is given by Eq.~\eqref{eq:glauber-dyn}.
    Then we have the monotone property that
\begin{align}
    - \vQ \CL^{\dag}[\vH]\vQ \succeq r \vQ(\vI-\vP) - \epsilon \vI
    \quad \text{implies} \quad - \vQ' \CL'^{\dag}[\vH']\vQ' \succeq r \vQ'(\vI-\vP') - \epsilon' \vI
\end{align}
where $\vQ'$ projects onto the perturbed eigensubspace of $\vH'$ identified with $\vQ$, and
\begin{align}
    \epsilon' \le \epsilon + \labs{S}\cdot \CO\bigg(\frac{1}{\tau} +\frac{\norm{\vH}^{3/4}}{\tau^{1/4}}+\frac{\Lambda_0^{2/3}}{\tau^{1/3}}+\frac{\Lambda_0}{\sqrt{\Delta_{\nu}\tau}} +\frac{\Lambda_0}{\sqrt{\Delta_{\vQ}\tau}} + \frac{\e^{-\beta\Delta_{\nu}/4}}{\beta} + \frac{\e^{-\beta\Delta_{\vQ}/4}}{\beta}  \nonumber \\
    + \L(1+\frac{\Lambda_0}{\Delta_\nu}\R)\frac{\norm{\vV}\norm{\vH}}{\Delta_{\vQ}} + r \Big(\frac{\norm{\vV}}{\Delta_{\vQ}}+\frac{\norm{\vV}}{\Delta_{\nu}}\Big)\bigg).
\end{align}
\end{corollary}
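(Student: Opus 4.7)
The plan is to reduce the subspace statement to the global monotonicity statement of Theorem~\ref{thm:mono_gradient_full} by localizing the gradient operator to the subspace and then treating $\vH|_\vQ \to \vH'|_{\vQ'}$ as a perturbation problem on its own. First, I would apply Lemma~\ref{lem:subspace-gradient-expr} to both sides to rewrite
\begin{equation*}
    \vQ \CL^\dag[\vH] \vQ \;\approx\; \sum_{a\in S}\sum_{\nu \in B(\vH|_\vQ)} \vQ \vA^{a\dag}_\nu \vQ \vA^a_\nu \vQ \int_{-\infty}^0 \gamma_\beta(\omega)\omega\,|\hat f_\mu(\omega-\nu)|^2\,\rd\omega,
\end{equation*}
with $\mu=\Delta_\nu/2$, and analogously for $\vQ'\CL'^\dag[\vH']\vQ'$ using $B(\vH'|_{\vQ'})$. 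The errors from this step are exactly those listed in Lemma~\ref{lem:subspace-gradient-expr} (containing $1/\sqrt{\Delta_\vQ \tau}$, $\e^{-\beta\Delta_\vQ/4}/\beta$, etc.), and they directly supply the $\Delta_\vQ$-dependent terms appearing in the final bound.

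Second, I would introduce the effective Hamiltonians $\bar\vH := \vQ\vH\vQ$ and $\bar\vH' := \vQ'\vH'\vQ'$, viewed as Hamiltonians on the ambient Hilbert space (the complement acts as $0$, so its spectrum is gapped away by $M \to \infty$ or by adding $M(\vI-\vQ)$; the Lindbladian terms outside $\vQ$ are irrelevant since we only need agreement of the subspace-restricted expressions above). By the off-block-diagonal perturbation machinery of Lemma~\ref{lem:off_diag_perturbation} applied to $\vH = \vQ\vH\vQ + (\vI-\vQ)\vH(\vI-\vQ) + \vV_{\mathrm{off}}$ (where $\vV_{\mathrm{off}}$ is the $\vQ$-to-$\vQ^\perp$ coupling of $\vV$, of norm $\le \|\vV\|$), together with the Davis--Kahan bound $\|\vQ-\vQ'\| = O(\|\vV\|/\Delta_\vQ)$ from Lemma~\ref{lem:PP'}, one shows that up to the unitary identification of $\vQ$ with $\vQ'$ the effective perturbation on the subspace has norm
\begin{equation*}
    \|\bar\vV_{\mathrm{eff}}\| \;=\; O\!\left(\frac{\|\vV\|\,\|\vH\|}{\Delta_\vQ}\right).
\end{equation*}
The assumption $\|\vV\|\|\vH\|/\Delta_\vQ \le \Delta_\nu/144$ ensures $\|\bar\vV_{\mathrm{eff}}\| \le \Delta_\nu/8$, which is the hypothesis needed for Theorem~\ref{thm:mono_gradient_full} with Bohr-frequency gap $\Delta_\nu$.

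Third, I would invoke Theorem~\ref{thm:mono_gradient_full} on the subspace system $\bar\vH \to \bar\vH'$ with jumps $\{\vQ\vA^a\vQ\}$, operators $\vO = \vQ(\vI - \vP)$ and $\vO' = \vQ'(\vI-\vP')$ (noting $[\vO',\bar\vH']=0$ since both $\vP'$ and $\vQ'$ are spectral projectors of $\bar\vH'$). This yields the monotonicity on the subspace with an extra error of
\begin{equation*}
    O\!\left(\tfrac1\tau + \tfrac{\|\vH\|^{3/4}}{\tau^{1/4}} + \tfrac{\Lambda_0^{2/3}}{\tau^{1/3}} + \tfrac{\Lambda_0}{\sqrt{\Delta_\nu\tau}} + \tfrac{\e^{-\beta\Delta_\nu/4}}{\beta} + \Big(1+\tfrac{\Lambda_0+r}{\Delta_\nu}\Big)\|\bar\vV_{\mathrm{eff}}\|\right),
\end{equation*}
together with the cross term $r\|\vO-\vO'\| \le r(\|\vQ-\vQ'\|+\|\vP-\vP'\|) \le r\, O(\|\vV\|/\Delta_\vQ + \|\vV\|/\Delta_\nu)$ via Lemma~\ref{lem:PP'} applied twice (once at gap $\Delta_\vQ$ for $\vQ$, once at gap $\Delta_\nu$ for $\vP$ within the $\vQ$-block). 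Finally, I would undo the subspace reduction by invoking Lemma~\ref{lem:subspace-gradient-expr} in the reverse direction on $\vH'$, adding back the same type of errors.

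The main obstacle I expect is the second step: carefully identifying the ``effective perturbation'' of the Hamiltonian restricted to the evolving subspace. One must simultaneously rotate $\vQ$ to $\vQ'$, conjugate $\bar\vH$ by that rotation, and compare with $\bar\vH'$; the residue involves both a first-order off-block-diagonal piece (controlled by Lemma~\ref{lem:off_diag_perturbation}) and a second-order-in-angle piece from Davis--Kahan, whose combination yields the factor $\|\vV\|\|\vH\|/\Delta_\vQ$ rather than the naive $\|\vV\|$. Getting this factor right (rather than a weaker $\|\vV\|$) is what makes the corollary useful when $\Delta_\vQ \gg \|\vV\|$, and it is the step where the bound in Eq.~\eqref{eq:mono-subspace-err} genuinely differs from a direct application of the global theorem.
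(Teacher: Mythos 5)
Your proposal follows essentially the same route as the paper's own proof: restrict the gradient to the low-energy subspace using the excitation-gap/secular machinery, pass to an effective subspace Hamiltonian with restricted jumps $\vQ\vA^a\vQ$ whose effective perturbation is $\CO(\norm{\vV}\norm{\vH}/\Delta_{\vQ})$, and then invoke Theorem~\ref{thm:mono_gradient_full} with $\vO=\vQ(\vI-\vP)$, $\vO'=\vQ'(\vI-\vP')$ and the Davis--Kahan bounds of Lemma~\ref{lem:PP'} for $\norm{\vO-\vO'}$. Two minor slips do not change this: the paper obtains the $\norm{\vV}\norm{\vH}/\Delta_{\vQ}$ factor by a plain triangle inequality plus Lemma~\ref{lem:PP'} (Lemma~\ref{lem:off_diag_perturbation} is only needed for the off-block-diagonal refinement in Corollary~\ref{cor:off_diag_mono}), and the complement energy of the effective Hamiltonian should be pinned at $E_{\vQ}$ (as in $\vH_{\vQ}=\vH\vQ+E_{\vQ}(\vI-\vQ)$) rather than sent to infinity, so that the $\norm{\vH}$-dependent error terms of the theorem remain finite.
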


\begin{proof}
The idea is that $\vQ' \CL^{'\dag}[\vH']\vQ'$ essentially depends only on the low energy subspace $\vQ'$ and the corresponding restricted transition $\vQ'\vA^a \vQ'$.
\begin{align}
    \vQ' \CL^{'\dag}[\vH']\vQ' &\stackrel{\verr_1}{\approx} \sum_{a\in S} \int_{-\infty}^{\infty} \theta(\omega') \vQ'\hat{\vA}^{a}(\omega')^{\dag}\hat{\vA}^a(\omega')\vQ' \rd \omega'\tag{Prop.~\ref{prop:norm-lamb-shift-part} and Lemma~\ref{lemma:expr_energy_gradient}}\\
    &\stackrel{\verr_2}{\approx}  \sum_{a\in S} \int_{-\infty}^{\infty} \theta(\omega') \vQ'\hat{\vA}^{a}(\omega')^{\dag}\vQ' \hat{\vA}^a(\omega')\vQ' \rd \omega'\tag{excitation gap $\Delta_{\vQ'}$ of $\vQ'$}\\
    &=  \sum_{a\in S} \int_{-\infty}^{\infty} \theta(\omega') \hat{\vR}'^{a}(\omega')^{\dag} \hat{\vR}'^a(\omega') \rd \omega'\tag{set $\vR'^a := \vQ'\vA^a\vQ'$ and use that $[\vQ',\vH'] =0$}\\
    &\stackrel{\verr_3}{\approx} \CL^{\dag\beta,\tau,\vH'_{\vQ'}}_{\{\vR'^a\}}[\vH'_{\vQ'}]\tag{unsimplify: Prop.~\ref{prop:norm-lamb-shift-part} and Lemma~\ref{lemma:expr_energy_gradient}}\\
    &\stackrel{\verr_4}{\approx} \CL^{\dag\beta,\tau,\vH'_{\vQ'}}_{\{\vR^a\}}[\vH'_{\vQ'}]\tag{change the jumps to $\vR^a = \vQ \vA^a \vQ$ }.
\end{align}
The approximation $\verr_2$ inserts the low-energy projector $\vQ'$.  
To do so, we resolves the identity by $\vI = \vQ' + (\vI - \vQ')$ and uses that 
\begin{align}
    &\sum_{a\in S} \int_{-\infty}^{\infty} \theta(\omega') \vQ'\hat{\vA}^{a}(\omega')^{\dag}(\vI-\vQ')\hat{\vA}^a(\omega')\vQ' \rd \omega'\tag{excitation gap of $\vQ'$}\\
    &\stackrel{\verr_{21}}{\approx} \sum_{a\in S} \int_{-\infty}^{\infty} \theta(\omega') \hat{\vS}^{a}(\omega')^{\dag} \hat{\vS}^a(\omega') \rd \omega' \tag{secular approximation for $(\vI-\vQ')\vA^a\vQ'$}\\
    &\stackrel{\verr_{22}}{\approx} 0\tag{ $\mu \ll \Delta_{\vQ}$, $\Delta_{\vQ}\beta \gg 1$. }
\end{align}
That is, we need the excitation gap to be large so that $(\vI-\vQ')\vA^a\vQ'$ have a vanishing contribution to the gradient. These error combines $\verr_2 = \verr_{21}+\verr_{22}$ where
\begin{align}
    {\verr_{21}} &\le 2\norm{\sum_{a\in S} \vA^{a\dag}\vA^a} \norm{\theta}_{\infty} \norm{f_{\tau}\cdot (1-\hat{s}_{\mu})}_2 \norm{f_{\tau}}_2 =  \labs{S} \CO(\frac{\Lambda_0}{\sqrt{\mu\tau}})\\
    {\verr_{22}}&\le \norm{\sum_{a\in S} \vA^{a\dag}\vA^a}\max_{\omega' \ge \Delta_{\vQ'} - \mu}{\labs{\theta(\omega')}} \norm{f_{\tau}\cdot \hat{s}_{\mu}}_2^2 = \labs{S} \frac{\e^{-\beta (\Delta_{\vQ'} - \mu) /2}}{\beta}.
\end{align}
Thus, we consider a safe choice of $\mu =\Delta_{\vQ'}/2$. Also, since the perturbation is small by assumption, $\norm{\vV} \le \frac{1}{144} \Delta_{\nu} \frac{\Delta_{\vQ}}{\norm{\vH}} \le \frac{1}{144} \Delta_{\vQ}$, the excitation gap remains large $\Delta_{\vQ'} \ge \Delta_{\vQ} - 2\norm{\vV} \ge \Delta_{\vQ} - \frac{1}{72} \Delta_{\vQ} \ge \Delta_{\vQ}/2$. 
The third line (after approximation $E_2$) formally disposes of the excited state of $\vH'$ above $E'_{\vQ'}$ by defining a modified Hamiltonian 
\begin{align}
    \vH'_{\vQ'} := \vH'\vQ' + E'_{\vQ'} (\vI - \vQ')
\end{align}
for the Fourier transform $\hat{\vR'^a}(\omega')$. This Hamiltonian is merely a proof artifact, and one can also set the energy of the excited subspace $\vI-\vQ'$ to be infinity. The error $\verr_3$ is merely the error to put it back to the form of an energy gradient.

The approximation $\verr_4$ changes the jumps with norm bounded by 
\begin{align}
    \norm{\sum_{a\in S} \vR'^a \otimes \ket{a} - \sum_{a\in S} \vR^a \otimes \ket{a}} \le 2\sqrt{\norm{\sum_{a\in S} \vA^{a\dag}\vA^a}} \norm{\vQ-\vQ'} = \labs{S}\CO(\frac{\norm{\vV}}{\Delta_{\vQ}})
\end{align}
using subspace perturbation bounds $\norm{\vQ-\vQ'} \le 8\frac{\norm{\vV}}{\Delta_{\vQ}}$ (Lemma~\ref{lem:PP'}). Since the (suitably normalized) gradient operator $\frac{1}{2\labs{S}\norm{\vO}}\cdot \CL^{\dag}[\vO]$ can be block-encoded using $\CO(1)$ block-encodings of the jumps (\cref{thm:LCUSim_main}, \cref{prop:blockLS}, \cref{prop:LOL}), perturbation to the jumps propagates to the gradient operator by
\begin{equation}
    {\verr_4} = \CO\L(\bigg\|\sum_a \vR'^a \ket{a} - \sum_a \vR^a \ket{a}\bigg\| \cdot \norm{\vH'_{\vQ'}}\R) = \labs{S}\CO\Big(\norm{\vH'_{\vQ'}}\frac{\norm{\vV}}{\Delta_{\vQ}}\Big).
\end{equation}

To summarize, the above gives the bound
\begin{align}
    \norm{\vQ' \CL^{'\dag}[\vH']\vQ' -\CL^{\dag\beta,\tau,\vH'_{\vQ'}}_{\{\vR^a\}}[\vH'_{\vQ'}]}
    &\le   \labs{S}\cdot \CO\L(\frac{1}{\tau} + \frac{\norm{\vH}^{3/4}}{\tau^{1/4}} + \frac{\Lambda_0}{\sqrt{\Delta_{\vQ}\tau}}+ \frac{\e^{-\beta \Delta_{\vQ}/4 }}{\beta}+\frac{\norm{\vV}}{\Delta_{\vQ}}\norm{\vH'_{\vQ'}}\R)
    \label{eq:QLQ}
\end{align}
using $\norm{\vH'} \le \norm{\vH}+ \norm{\vV} \le 2\norm{\vH}$ and $\Delta_{\vQ'} \ge \Delta_{\vQ}/2$. Similarly, the 
    \begin{align}
        \vQ \CL^{\dag}[\vH]\vQ &\stackrel{\verr_5}{\approx} \CL^{\dag\beta,\tau,\vH_{\vQ}}_{\{\vR^a\}}[\vH_{\vQ}]\quad \text{for}\quad \vH_{\vQ} := \vH\vQ + E_{\vQ} (\vI - \vQ)\label{eq:modifiedHQ}
    \end{align}
    with ${\verr_5}$ also bounded by the RHS of Eq.~\eqref{eq:QLQ} but with $\norm{\vV} \rightarrow 0$.
    
Now, we may use the monotonicity of gradient (\cref{thm:mono_gradient_full}) for Hamiltonian pairs $\vH_{\vQ}$ and $\vH'_{\vQ'}$, jumps $\{\vR^a\}_{a\in S}$, with the characteristic Bohr-frequency gap $\Delta_\nu =\min_{\nu_1\neq \nu_2\in B(\vH_{\vQ})} |\nu_1-\nu_2|$.
The modified Hamiltonian perturbation $\|\vH_{\vQ} - \vH'_{\vQ'}\|$ is bounded by the sum of the following two  errors:
\begin{equation}
    \norm{\vH\vQ  - \vH' \vQ'}  = \norm{\vH\vQ - \vH \vQ' +  \vH\vQ' - \vH' \vQ' } \le 8 \frac{\norm{\vV}}{\Delta_{\vQ}}\norm{\vH} +\norm{\vV} \le 9\frac{\norm{\vV}}{\Delta_{\vQ}}\norm{\vH},
    \label{eq:Hq-Hqprime}
\end{equation}
\begin{align}
    \norm{ E_{\vQ}(\vI - \vQ)-E'_{\vQ'}(\vI - \vQ')} &= \norm{ E_{\vQ}(\vI - \vQ)-E_{\vQ}(\vI - \vQ') + E_{\vQ}(\vI - \vQ') - E'_{\vQ'}(\vI - \vQ')} \nonumber\\
    &\le 8 E_{\vQ} \frac{\norm{\vV}}{\Delta_{\vQ}} + \norm{\vV}\le 9 \frac{\norm{\vV}}{\Delta_{\vQ}}\norm{\vH}.
    \label{eq:Eq-Eqprime}
\end{align}
So $\|{\vH_{\vQ} - \vH'_{\vQ'}}\|\le 18 \|\vV\|\|\vH\|/\Delta_{\vQ}$, which is less than $\Delta_\nu/8$ by the assumption in the corollary statement so we may apply \cref{thm:mono_gradient_full}.
Note that error due to perturbing the $\vQ(\vI-\vP)$ can be bounded directly by 
\begin{equation}
    \norm{ \vQ'(\vI-\vP') - \vQ(\vI-\vP) } \le  \norm{\vQ'-\vQ}+ \norm{\vP-\vP'} =  \CO\Big(\frac{\norm{\vV}}{\Delta_{\vQ}}+\frac{\norm{\vV}}{\Delta_{\nu}}\Big).
    \label{eq:QP-diff}
\end{equation}
Apply \cref{thm:mono_gradient_full} with $\vH=\vH_{\vQ}$, $\vH'=\vH'_{\vQ'}$, $\vO=\vQ(\vI-\vP)$, and $\vO'=\vQ'(\vI-\vP')$, $\delta_\lambda \le \|\vH_{\vQ}-\vH_{\vQ'}\|$, and $\theta_{\max}=\CO(\Lambda_0)$, along with the additional approximation error in Eq.~\eqref{eq:QLQ}, we obtain the result as advertised in the corollary statement.
\end{proof}

\begin{corollary}[Monotonicity of gradient on a subspace under off-block-diagonal perturbation; \cref{cor:off_diag_mono} restated]
In the setting of Corollary~\ref{cor:mono_subspace_restated}, instead assume $\frac{\norm{\vV}}{\Delta_\nu},\frac{\norm{\vV}}{\Delta_{\vQ}} \le (const.)$, and that the perturbation is off-block-diagonal, i.e.,  $\vQ\vV\vQ=(\vI-\vQ)\vV(\vI-\vQ) = 0$. Then,
\begin{align}
    - \vQ \CL^{\dag}[\vH]\vQ \succeq r \vQ(\vI-\vP) - \epsilon \vI
    \quad \text{implies} \quad - \vQ' \CL'^{\dag}[\vH']\vQ' \succeq r \vQ'(\vI-\vP') - \epsilon' \vI
\end{align}
where
\begin{align}
    \epsilon' &\le \epsilon + \labs{S}\cdot \CO\bigg(\frac{1}{\tau} +\frac{\norm{\vH}^{3/4}}{\tau^{1/4}}+\frac{\Lambda_0^{2/3}}{\tau^{1/3}}+\frac{\Lambda_0}{\sqrt{\Delta_{\nu}\tau}} +\frac{\Lambda_0}{\sqrt{\Delta_{\vQ}\tau}} + \frac{\e^{-\beta\Delta_{\nu}/4}}{\beta} + \frac{\e^{-\beta\Delta_{\vQ}/4}}{\beta}   \nonumber\\
    &\qquad\qquad\qquad\qquad +\frac{\norm{\vV}^2}{\Delta_{\vQ}} +  \norm{\vH_{\vQ}} \cdot \Big( \frac{\norm{\vH_{\vQ}}\norm{\vV}}{\Delta_{\vQ}\Delta_\nu}+\frac{\norm{\vV}^2}{\Delta_{\vQ}\Delta_\nu}\Big) 
    + r \Big(\frac{\norm{\vV}}{\Delta_{\vQ}}+\frac{\norm{\vV}^2}{\Delta_{\vQ}\Delta_{\nu}}\Big)\bigg).
\label{eq:mono-subspace-offdiag-err}
\end{align}
\end{corollary}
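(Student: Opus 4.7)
The plan is to follow the template of Corollary~\ref{cor:monotonicity_subspace} (Corollary~\ref{cor:mono_subspace_restated} in restated form) but replace the first-order perturbation estimates on the modified subspace Hamiltonian with second-order estimates enabled by the off-block-diagonal structure of $\vV$. Concretely, I would first reuse the chain of approximations leading up to Eq.~\eqref{eq:QLQ} verbatim to reduce both $\vQ\CL^{\dag}[\vH]\vQ$ and $\vQ'\CL'^{\dag}[\vH']\vQ'$ to ``restricted'' thermal gradients $\CL^{\dag\beta,\tau,\vH_{\vQ}}_{\{\vR^a\}}[\vH_{\vQ}]$ and $\CL^{\dag\beta,\tau,\vH'_{\vQ'}}_{\{\vR^a\}}[\vH'_{\vQ'}]$ respectively, using the modified Hamiltonians $\vH_{\vQ}:=\vH\vQ+E_{\vQ}(\vI-\vQ)$ and $\vH'_{\vQ'}:=\vH'\vQ'+E'_{\vQ'}(\vI-\vQ')$. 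This part of the argument does not use the off-block-diagonal hypothesis; it only contributes the approximation errors of the form $\frac{\Lambda_0}{\sqrt{\Delta_{\vQ}\tau}}$, $\frac{\e^{-\beta\Delta_{\vQ}/4}}{\beta}$, and $\frac{\norm{\vV}}{\Delta_{\vQ}}\norm{\vH}$ listed in the bound~\eqref{eq:mono-subspace-offdiag-err}, where the last is improved below.

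Next I would invoke Lemma~\ref{lem:off_diag_perturbation} on the block-diagonal decomposition of $\vH$ with respect to $\vQ,\vI-\vQ$ whose blocks are separated by $\Delta_{\vQ}$. This produces an anti-Hermitian $\vB$ with $\norm{\vB}=\CO(\norm{\vV}/\Delta_{\vQ})$ such that $\e^{\vB}(\vH+\vV)\e^{-\vB}$ differs from $\vH$ by only $\CO(\norm{\vV}^2/\Delta_{\vQ})$ in operator norm. Because $\e^{\vB}$ is unitary, this tells us two crucial things: (i) the sorted eigenvalues of $\vH'$ (and hence of $\vH'_{\vQ'}$ within the $\vQ'$ block) differ from those of $\vH$ (resp.~$\vH_{\vQ}$) by only $\delta_\lambda=\CO(\norm{\vV}^2/\Delta_{\vQ})$, so all Bohr frequencies in $B(\vH_{\vQ'}')$ can still be identified with those in $B(\vH_{\vQ})$ in the sense of~\eqref{eq:def_A_approx_nu} provided $\delta_\lambda\ll \Delta_\nu$, which is guaranteed by the constant-bound hypothesis $\norm{\vV}/\Delta_\nu,\norm{\vV}/\Delta_{\vQ}\le (const.)$; and (ii) the rotation $\vQ'=\e^{-\vB}\vQ\e^{\vB}+\CO(\norm{\vV}^2/\Delta_{\vQ})$ is first-order in $\norm{\vV}/\Delta_{\vQ}$, so $\norm{\vP-\vP'}$ and $\norm{\vQ-\vQ'}$ remain first-order and the $r\|\vO-\vO'\|$ term of Theorem~\ref{thm:mono_gradient_full} contributes $r\bigl(\norm{\vV}/\Delta_{\vQ}+\norm{\vV}^2/(\Delta_{\vQ}\Delta_\nu)\bigr)$ after replacing $\norm{\vP-\vP'}$ by the subspace-level estimate analogous to Eq.~\eqref{eq:QP-diff}.

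The key upgrade is then in the Hamiltonian perturbation that feeds the $\delta_\lambda$ and the $\theta_{\max}\norm{\vV}/\Delta_\nu$ terms of Theorem~\ref{thm:mono_gradient_full}. Decompose
\[
\vH'_{\vQ'}-\vH_{\vQ}=(\vH\vQ'-\vH\vQ)+(\vV\vQ')+\bigl(E'_{\vQ'}(\vI-\vQ')-E_{\vQ}(\vI-\vQ)\bigr),
\]
and observe that, in the $\approx\nu$ grouping used in the proof of Theorem~\ref{thm:mono_gradient_full}, the first-order piece $\vV\vQ'$ is pure off-block between $\vQ'$ and $\vI-\vQ'$ and so vanishes after conjugation by the artificial $\e^{\pm\vB}$ rotation up to second order. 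Hence the effective ``diagonal'' perturbation entering the rounding step $\verr_5$ and the purification step $\verr_8$ of the proof of Theorem~\ref{thm:mono_gradient_full} is of size $\CO(\norm{\vH_\vQ}\norm{\vV}/\Delta_{\vQ}+\norm{\vV}^2/\Delta_{\vQ})$ (from the unitary rotation mixing $\vH\vQ$ with itself at order $\norm{\vB}\norm{\vH_\vQ}$ plus the $\CO(\norm{\vV}^2/\Delta_{\vQ})$ residual), while the eigenvalue perturbation $\delta_\lambda$ contributing the term $\labs{S}\delta_\lambda$ in Eq.~\eqref{eq:mono-eps-bound-full} is upgraded from $\CO(\norm{\vV})$ to $\CO(\norm{\vV}^2/\Delta_{\vQ})$. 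Combining these with the $\theta_{\max}\,\norm{\vH_\vQ\vV}/(\Delta_{\vQ}\Delta_\nu)$ and $\theta_{\max}\,\norm{\vV}^2/(\Delta_{\vQ}\Delta_\nu)$ contributions to the $\theta_{\max}\norm{\vV}/\Delta_\nu$ term (with $\theta_{\max}=\CO(\Lambda_0)=\CO(1)$), and adding the reduction errors from Eq.~\eqref{eq:QLQ}, yields precisely the bound~\eqref{eq:mono-subspace-offdiag-err}.

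The main obstacle I expect is in the third step: cleanly propagating the off-block-diagonal cancellation through the approximations $\verr_5$ (rounding of $\theta$) and $\verr_8$ (replacing $\vA^a_\nu$ by $\vA^a_{\approx\nu}$) inside the proof of Theorem~\ref{thm:mono_gradient_full}. Both of these steps were originally bounded in terms of $\delta_\lambda\le\norm{\vV}$ (via Weyl), and one has to revisit their derivations with the refined $\delta_\lambda=\CO(\norm{\vV}^2/\Delta_{\vQ})$ while simultaneously tracking that the unitary conjugation by $\e^{\vB}$ introduces only a controlled mixing of the jumps $\vR^a\mapsto \e^{\vB}\vR^a\e^{-\vB}$ of first order in $\norm{\vV}/\Delta_{\vQ}$. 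Everything else -- the approximation lemmas for the Lindbladian, the secular truncation, and the application of the monotonicity of rates (Lemma~\ref{lemma:mono_rate}) -- is imported unchanged from the proofs of Theorem~\ref{thm:mono_gradient_full} and Corollary~\ref{cor:mono_subspace_restated}.
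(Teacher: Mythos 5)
Your route is essentially the paper's: reduce both $\vQ\CL^{\dag}[\vH]\vQ$ and $\vQ'\CL'^{\dag}[\vH']\vQ'$ to the restricted gradients on $\vH_{\vQ}$ and $\vH'_{\vQ'}$ exactly as in Corollary~\ref{cor:mono_subspace_restated}, then feed the second-order estimates of Lemma~\ref{lem:off_diag_perturbation} (namely $\delta_\lambda=\CO(\norm{\vV}^2/\Delta_{\vQ})$, $\norm{\vH_{\vQ}-\vH'_{\vQ'}}=\CO(\norm{\vH_{\vQ}}\norm{\vV}/\Delta_{\vQ}+\norm{\vV}^2/\Delta_{\vQ})$, and $\norm{\vP-\vP'}=\CO(\norm{\vV}/\Delta_{\vQ}+\norm{\vV}^2/(\Delta_{\vQ}\Delta_\nu))$) into Theorem~\ref{thm:mono_gradient_full}. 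Your anticipated ``main obstacle'' is not actually an obstacle: Theorem~\ref{thm:mono_gradient_full} is deliberately stated with $\delta_\lambda$ and $\theta_{\max}$ as free parameters, so you never need to re-open the $\verr_5$/$\verr_8$ steps or conjugate the jumps by $\e^{\vB}$; the rotation is used only inside Lemma~\ref{lem:off_diag_perturbation} to bound eigenvalue shifts and $\norm{\vH_{\vQ}-\vH'_{\vQ'}}$, and the jump-change error is already absorbed through $\norm{\vQ-\vQ'}$ as in Eq.~\eqref{eq:QLQ}.

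There is, however, one genuine gap: you take $\theta_{\max}=\CO(\Lambda_0)=\CO(1)$, which gives the middle terms with prefactor $\Lambda_0$, i.e.\ $\CO\big(\Lambda_0\,(\norm{\vH_{\vQ}}\norm{\vV}+\norm{\vV}^2)/(\Delta_{\vQ}\Delta_\nu)\big)$, rather than the stated $\norm{\vH_{\vQ}}\cdot(\norm{\vH_{\vQ}}\norm{\vV}+\norm{\vV}^2)/(\Delta_{\vQ}\Delta_\nu)$. The two only agree when $\norm{\vH_{\vQ}}\gtrsim\Lambda_0$, but the corollary is invoked (in Appendix~\ref{sec:gradient-in}) precisely in the opposite regime $\norm{\vH_{\vQ}}=nJ_{\inp}\ll 1=\Lambda_0$; there your version makes the error only \emph{linear} in $J_{\inp}$, so it can no longer be dominated by $r_3=\Omega(J_{\inp}/(T^2\ln\beta))$ (since $J_{\prop}\ll1$), and the downstream parameter choice fails. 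The missing one-line ingredient is the sharper bound $\theta_{\max}\le 2\norm{\vH_{\vQ}}$, valid because every Bohr frequency of the restricted Hamiltonian $\vH_{\vQ}$ has magnitude at most $2\norm{\vH_{\vQ}}$ and $\gamma_\beta\le1$; this is exactly the improved $\theta_{\max}$ bound the paper's proof invokes, and with it your argument delivers the stated error bound.
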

\begin{proof}
When the perturbation $\vV$ is off-diagonal in the energy eigenbasis of $\vH$, we can use tighter bounds on the changes in eigenvalues and eigensubspaces of $\vH'$ from \cref{lem:off_diag_perturbation}, which implies
\begin{align}
    \delta_\lambda = \max_j |E_j-E_j'| &= \CO\Big( \frac{\norm{\vV}^2}{\Delta_{\vQ}}\Big) \tag{originally $\CO(\norm{\vV})$}\\
     \norm{\vH_{\vQ}-\vH'_{\vQ'}} & = \CO\L(\norm{\vH_{\vQ}} \frac{\norm{\vV}}{\Delta_{\vQ}} + \frac{\norm{\vV}^2}{\Delta_{\vQ}}\R)\tag{originally $\CO(\norm{\vV})$}.\\
\norm{\vP - \vP'} & \le \CO\Big(\frac{\norm{\vV}}{\Delta_{\vQ}} + \frac{\norm{\vV}^2}{\Delta_{\vQ}\Delta_{\nu}}\Big). \tag{originally $\CO(\frac{\norm{\vV}}{\Delta_{\nu}})$}
\end{align}
The second line and third line are due to rotation and then subspace perturbation (Lemma~\ref{lem:PP'}).
Essentially, this is because (1) all subspace rotations are small ($\frac{\norm{\vV}}{\Delta_{\vQ}}\ll 1$) and (2) the energy perturbation is \textit{smaller} than the level spacing ($\frac{\norm{\vV}^2}{\Delta_{\vQ}} \ll \Delta_{\nu}$).
We then follow the same argument in the proof of \cref{cor:mono_subspace_restated} above with the improved bounds for Eqs.~\eqref{eq:Hq-Hqprime}, \eqref{eq:Eq-Eqprime}, and \eqref{eq:QP-diff}.
We can also use an improve bound of $\theta_{\max} \le \|\vH_{\vQ}\|$.
Together these improvements yield the better error bound on $\epsilon'$ as advertised.
\end{proof}

\subsection{Example where perturbation kills energy gradient}
We present an example where despite $\norm{\vV} \ll \Delta_{\nu}$, the gradient is lost due to the perturbation. Therefore, the resulting change in gradient is not multiplicative $(1- \frac{\norm{\vV}}{\Delta_{\nu}})$, but merely additive. If the gradient is polynomially small, we need $\frac{\norm{\vV}}{\Delta_{\nu}}$ to be also polynomially small to secure the gradient.  
\begin{proposition}[Perturbation kills the gradient]
    Let $\vH = \vZ = \ket{1}\bra{1} - \ket{-1}\bra{-1}$, $\vA = \vZ + \epsilon \vX$, and $\vV = \epsilon \vX$. Then, for the $\beta, \tau \rightarrow \infty$ heat bath Lindbladian, we have that
    \begin{align}
        \CL^{\dag}[\vH] \preceq -  2 \epsilon^2 (\vI - \vP) \quad \text{but}\quad  \CL'^{\dag}[\vH']=0.
    \end{align}
\end{proposition}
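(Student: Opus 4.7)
The plan is to just compute both sides directly, exploiting the fact that the two-dimensional Hilbert space and the Davies limit $\beta,\tau\to\infty$ make everything closed form. Throughout, let $\ket{1}$ and $\ket{-1}$ denote the $\pm1$ eigenstates of $\vZ$, so that $\vH=\vZ$ has spectrum $\{-1,+1\}$, Bohr frequencies $B(\vH)=\{-2,0,+2\}$, ground-state projector $\vP=\ketbra{-1}{-1}$, and $\vI-\vP=\ketbra{1}{1}$. Since we are at $\tau=\infty$, the exact Davies formula $\CD^{\dag}[\vH]=\sum_{\nu\in B(\vH)}\nu\,\gamma_\beta(\nu)\,\vA_{\nu}^{\dag}\vA_{\nu}$ applies, and the Lamb-shift term commutes with $\vH$ in this limit, so we can identify $\CL^{\dag}[\vH]$ with $\CD^{\dag}[\vH]$ for the purpose of the inequality.

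For the unperturbed claim, I would just read off the Bohr components of $\vA=\vZ+\epsilon\vX$ in the $\vZ$-eigenbasis: $\vA_{0}=\vZ$, $\vA_{-2}=\epsilon\ketbra{-1}{1}$, $\vA_{+2}=\epsilon\ketbra{1}{-1}$. Then
\begin{equation}
\CD^{\dag}[\vH] \;=\; -2\gamma_{\beta}(-2)\,\epsilon^{2}\ketbra{1}{1} \;+\; 2\gamma_{\beta}(+2)\,\epsilon^{2}\ketbra{-1}{-1},
\end{equation}
and the $\beta\to\infty$ KMS condition sends $\gamma_\beta(+2)\to 0$ and $\gamma_\beta(-2)\to 1$ (with the Glauber choice from Eq.~\eqref{eq:glauber-dyn} suitably normalized), yielding $\CL^{\dag}[\vH]=-2\epsilon^{2}\ketbra{1}{1}=-2\epsilon^{2}(\vI-\vP)$, which is the first inequality.

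For the perturbed claim, the key observation is the algebraic identity $\vA=\vZ+\epsilon\vX=\vH'$. Hence $[\vA,\vH']=0$, so the Heisenberg evolution $\vA(t)=\e^{\ri\vH't}\vA\e^{-\ri\vH't}=\vA$ is stationary. Consequently the only Bohr component of $\vA$ with respect to $\vH'$ is at $\nu=0$, i.e.\ $\vA_{0}'=\vA$ and $\vA_{\nu}'=0$ for $\nu\neq 0$. Plugging into the Davies formula kills the dissipative contribution because the $\nu=0$ term carries a prefactor of $\nu=0$:
\begin{equation}
\CD'^{\dag}[\vH'] \;=\; 0\cdot\gamma_{\beta}(0)\,\vA^{\dag}\vA \;=\; 0.
\end{equation}
The Lamb-shift piece is built entirely from products of $\vA(t)=\vA$, which is a polynomial in $\vH'$ and therefore commutes with $\vH'$, so $\ri[\vH_{LS}',\vH']=0$ as well. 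Combining the two gives $\CL'^{\dag}[\vH']=0$.

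I do not anticipate any serious obstacle; the only mild subtlety is making the $\beta\to\infty$ limit of $\gamma_\beta$ precise given the $1/(2+\ln(1+\beta\Lambda_0))$ normalization in Eq.~\eqref{eq:glauber-dyn}, but since the statement is scale-invariant in $\gamma_\beta$ up to the overall constant (which we can absorb into the constant in front of $\epsilon^2$), this causes no trouble. The moral, which is the reason the example is included, is that the perturbation $\vV=\epsilon\vX$ is tiny compared to the Bohr-frequency gap $\Delta_{\nu}=2$, yet it aligns $\vA$ with $\vH'$ perfectly and collapses all previously helpful off-diagonal matrix elements into the null $\nu=0$ block, so the $\CO(\epsilon^{2})$ gradient is wiped out rather than merely perturbed.
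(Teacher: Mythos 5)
Your proposal is correct and follows essentially the same route as the paper's own proof: a direct computation showing the only contributing (cooling) Bohr component of $\vA$ with respect to $\vH$ is $\epsilon\ketbra{-1}{1}$, giving $-2\epsilon^2\ketbrat{1}$, and the observation that $[\vA,\vH']=0$ makes every term of $\CL'^{\dag}[\vH']$ vanish. Your extra care with the Lamb shift and the $\beta\to\infty$ normalization of $\gamma_\beta$ matches the paper's implicit conventions (step-function transition weight at absolute zero), so there is nothing to fix.
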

\begin{proof}
\begin{equation}
\CL^{\dag}[\vH] = \epsilon^2 \L(\ket{1}\bra{-1}\vH \ket{-1}\bra{1}- \frac{1}{2} \ket{1}\bra{1}\vH- \frac{1}{2} \vH\ket{1}\bra{1} \R) = -2\epsilon^2 \ket{1}\bra{1}.
\end{equation}
But $\CL'^{\dag}[\vH'] =0$ because $[\vH',\vA]=0$, i.e., $\vA$ is diagonal in the new energy basis.
\end{proof}

\bibliography{ref}

\begin{thebibliography}{10}

\bibitem{HohenbergKohn}
P.~Hohenberg and W.~Kohn.
\newblock Inhomogeneous electron gas.
\newblock {\em Phys. Rev.}, 136:B864--B871, 1964.

\bibitem{NobelKohn}
W.~Kohn.
\newblock Nobel lecture: Electronic structure of matter---wave functions and
  density functionals.
\newblock {\em Rev. Mod. Phys.}, 71:1253--1266, 1999.

\bibitem{CEPERLEY555}
David Ceperley and Berni Alder.
\newblock Quantum {M}onte {C}arlo.
\newblock {\em Science}, 231(4738):555--560, 1986.

\bibitem{SandvikSSE}
Anders~W. Sandvik.
\newblock Stochastic series expansion method with operator-loop update.
\newblock {\em Phys. Rev. B}, 59:R14157--R14160, 1999.

\bibitem{becca_sorella_2017}
Federico Becca and Sandro Sorella.
\newblock {\em Quantum Monte Carlo Approaches for Correlated Systems}.
\newblock Cambridge University Press, 2017.

\bibitem{DMRG1}
Steven~R. White.
\newblock Density matrix formulation for quantum renormalization groups.
\newblock {\em Phys. Rev. Lett.}, 69:2863--2866, 1992.

\bibitem{DMRG2}
Steven~R. White.
\newblock Density-matrix algorithms for quantum renormalization groups.
\newblock {\em Phys. Rev. B}, 48:10345--10356, 1993.

\bibitem{Garcia07}
D.~Perez-Garcia, F.~Verstraete, M.~M. Wolf, and J.~I. Cirac.
\newblock Matrix product state representations.
\newblock {\em Quantum Info. Comput.}, 7(5):401--430, 2007.

\bibitem{Vestraete2008}
F.~Verstraete, V.~Murg, and J.I. Cirac.
\newblock Matrix product states, projected entangled pair states, and
  variational renormalization group methods for quantum spin systems.
\newblock {\em Adv. Phys.}, 57(2):143--224, 2008.

\bibitem{SCHOLLWOCK201196}
Ulrich Schollwoeck.
\newblock The density-matrix renormalization group in the age of matrix product
  states.
\newblock {\em Ann. Phys.}, 326(1):96--192, 2011.

\bibitem{Haghshenas2019}
Reza Haghshenas, Matthew~J. O'Rourke, and Garnet Kin-Lic Chan.
\newblock Conversion of projected entangled pair states into a canonical form.
\newblock {\em Phys. Rev. B}, 100:054404, 2019.

\bibitem{Hyatt2019}
Katharine {Hyatt} and E.~M. {Stoudenmire}.
\newblock {DMRG Approach to Optimizing Two-Dimensional Tensor Networks}.
\newblock {\em arXiv preprint}, page arXiv:1908.08833, 2019.

\bibitem{Carleo_2017}
Giuseppe Carleo and Matthias Troyer.
\newblock Solving the quantum many-body problem with artificial neural
  networks.
\newblock {\em Science}, 355(6325):602--606, 2017.

\bibitem{hibat2020recurrent}
Mohamed Hibat-Allah, Martin Ganahl, Lauren~E Hayward, Roger~G Melko, and Juan
  Carrasquilla.
\newblock Recurrent neural network wave functions.
\newblock {\em Phys. Rev. Res.}, 2(2):023358, 2020.

\bibitem{Deng2017}
Dong-Ling Deng, Xiaopeng Li, and S.~Das~Sarma.
\newblock Quantum entanglement in neural network states.
\newblock {\em Phys. Rev. X}, 7:021021, 2017.

\bibitem{gilmer2017neural}
Justin Gilmer, Samuel~S. Schoenholz, Patrick~F. Riley, Oriol Vinyals, and
  George~E. Dahl.
\newblock Neural message passing for quantum chemistry.
\newblock In {\em Proceedings of the 34th International Conference on Machine
  Learning}, ICML'17, pages 1263--1272. JMLR.org, 2017.

\bibitem{qiao2020orbnet}
Zhuoran Qiao, Matthew Welborn, Animashree Anandkumar, Frederick~R Manby, and
  Thomas~F Miller.
\newblock {OrbNet}: Deep learning for quantum chemistry using symmetry-adapted
  atomic-orbital features.
\newblock {\em J. Chem. Phys.}, 153(12), 2020.

\bibitem{huang2021provably}
Hsin-Yuan Huang, Richard Kueng, Giacomo Torlai, Victor~V Albert, and John
  Preskill.
\newblock Provably efficient machine learning for quantum many-body problems.
\newblock {\em arXiv preprint arXiv:2106.12627}, 2021.

\bibitem{lewis2023improved}
Laura Lewis, Hsin-Yuan Huang, Viet~T Tran, Sebastian Lehner, Richard Kueng, and
  John Preskill.
\newblock Improved machine learning algorithm for predicting ground state
  properties.
\newblock {\em arXiv preprint arXiv:2301.13169}, 2023.

\bibitem{KSV02}
A.~Yu Kitaev, A.~Shen, and M.~N. Vyalyi.
\newblock {\em Classical and Quantum Computation}.
\newblock American Mathematical Society, 2002.

\bibitem{kempe2006complexity}
Julia Kempe, Alexei Kitaev, and Oded Regev.
\newblock {The Complexity of the Local Hamiltonian Problem}.
\newblock {\em SIAM Journal on Computing}, 35(5):1070--1097, 2006.

\bibitem{lee2022there}
Seunghoon Lee, Joonho Lee, Huanchen Zhai, Yu~Tong, Alexander~M Dalzell,
  Ashutosh Kumar, Phillip Helms, Johnnie Gray, Zhi-Hao Cui, Wenyuan Liu, et~al.
\newblock Is there evidence for exponential quantum advantage in quantum
  chemistry?
\newblock {\em arXiv preprint arXiv:2208.02199}, 2022.

\bibitem{lin2022heisenberg}
Lin Lin and Yu~Tong.
\newblock Heisenberg-limited ground-state energy estimation for early
  fault-tolerant quantum computers.
\newblock {\em PRX Quantum}, 3(1):010318, 2022.

\bibitem{gharibian2022dequantizing}
Sevag Gharibian and Fran{\c{c}}ois Le~Gall.
\newblock {Dequantizing the quantum singular value transformation: hardness and
  applications to quantum chemistry and the quantum PCP conjecture}.
\newblock In {\em Proceedings of the 54th Annual ACM SIGACT Symposium on Theory
  of Computing}, pages 19--32, 2022.

\bibitem{farhi2000quantum}
Edward Farhi, Jeffrey Goldstone, Sam Gutmann, and Michael Sipser.
\newblock Quantum computation by adiabatic evolution.
\newblock {\em arXiv preprint quant-ph/0001106}, 2000.

\bibitem{edwards1975theory}
Samuel~Frederick Edwards and Phil~W Anderson.
\newblock Theory of spin glasses.
\newblock {\em Journal of Physics F: Metal Physics}, 5(5):965, 1975.

\bibitem{kirkpatrick1978infinite}
Scott Kirkpatrick and David Sherrington.
\newblock Infinite-ranged models of spin-glasses.
\newblock {\em Phys. Rev. B}, 17(11):4384, 1978.

\bibitem{binder1986spin}
Kurt Binder and A~Peter Young.
\newblock Spin glasses: Experimental facts, theoretical concepts, and open
  questions.
\newblock {\em Rev. Mod. Phys.}, 58(4):801, 1986.

\bibitem{mydosh1993spin}
John~A Mydosh.
\newblock {\em Spin glasses: an experimental introduction}.
\newblock CRC Press, 1993.

\bibitem{pardalos1991quadratic}
Panos~M Pardalos and Stephen~A Vavasis.
\newblock Quadratic programming with one negative eigenvalue is np-hard.
\newblock {\em Journal of Global optimization}, 1(1):15--22, 1991.

\bibitem{boyd2004convex}
Stephen Boyd, Stephen~P Boyd, and Lieven Vandenberghe.
\newblock {\em Convex optimization}.
\newblock Cambridge University Press, 2004.

\bibitem{agarwal2017finding}
Naman Agarwal, Zeyuan Allen-Zhu, Brian Bullins, Elad Hazan, and Tengyu Ma.
\newblock Finding approximate local minima faster than gradient descent.
\newblock In {\em Proceedings of the 49th Annual ACM SIGACT Symposium on Theory
  of Computing}, pages 1195--1199, 2017.

\bibitem{jin2018accelerated}
Chi Jin, Praneeth Netrapalli, and Michael~I Jordan.
\newblock Accelerated gradient descent escapes saddle points faster than
  gradient descent.
\newblock In {\em Conference On Learning Theory}, pages 1042--1085. PMLR, 2018.

\bibitem{ahmadi2022complexity}
Amir~Ali Ahmadi and Jeffrey Zhang.
\newblock On the complexity of finding a local minimizer of a quadratic
  function over a polytope.
\newblock {\em Mathematical Programming}, 195(1-2):783--792, 2022.

\bibitem{o2016scalable}
Peter~JJ O'Malley, Ryan Babbush, Ian~D Kivlichan, Jonathan Romero, Jarrod~R
  McClean, Rami Barends, Julian Kelly, Pedram Roushan, Andrew Tranter, Nan
  Ding, et~al.
\newblock Scalable quantum simulation of molecular energies.
\newblock {\em Phys. Rev. X}, 6(3):031007, 2016.

\bibitem{grimsley2019adaptive}
Harper~R Grimsley, Sophia~E Economou, Edwin Barnes, and Nicholas~J Mayhall.
\newblock An adaptive variational algorithm for exact molecular simulations on
  a quantum computer.
\newblock {\em Nature Communications}, 10(1):3007, 2019.

\bibitem{cerezo2021variational}
Marco Cerezo, Andrew Arrasmith, Ryan Babbush, Simon~C Benjamin, Suguru Endo,
  Keisuke Fujii, Jarrod~R McClean, Kosuke Mitarai, Xiao Yuan, Lukasz Cincio,
  et~al.
\newblock Variational quantum algorithms.
\newblock {\em Nature Reviews Physics}, 3(9):625--644, 2021.

\bibitem{lindblad1976generators}
Goran Lindblad.
\newblock On the generators of quantum dynamical semigroups.
\newblock {\em Commun. Math. Phys.}, 48:119--130, 1976.

\bibitem{davies1979generators}
E~Brian Davies.
\newblock Generators of dynamical semigroups.
\newblock {\em Journal of Functional Analysis}, 34(3):421--432, 1979.

\bibitem{breuer2002theory}
Heinz-Peter Breuer and Francesco Petruccione.
\newblock {\em The theory of open quantum systems}.
\newblock Oxford University Press, USA, 2002.

\bibitem{mozgunov2020completely}
Evgeny Mozgunov and Daniel Lidar.
\newblock Completely positive master equation for arbitrary driving and small
  level spacing.
\newblock {\em Quantum}, 4:227, 2020.

\bibitem{Chen2023quantumthermal}
Chi-Fang Chen, Michael~J Kastoryano, Fernando~GSL Brand{\~a}o, and Andr{\'a}s
  Gily{\'e}n.
\newblock Quantum thermal state preparation.
\newblock {\em arXiv preprint arXiv:2303.18224}, 2023.

\bibitem{OliveiraTerhal}
Roberto Oliveira and Barbara~M. Terhal.
\newblock {The complexity of quantum spin systems on a two-dimensional square
  lattice}.
\newblock {\em Quantum Inf. Comput.}, 8:900--924, 2008.

\bibitem{AharonovAQCUniversal}
Dorit Aharonov, Wim van Dam, Julia Kempe, Zeph Landau, Seth Lloyd, and Oded
  Regev.
\newblock Adiabatic quantum computation is equivalent to standard quantum
  computation.
\newblock {\em SIAM Journal on Computing}, 37(1):166--194, April 2007.

\bibitem{nielsen2006quantum}
Michael~A Nielsen, Mark~R Dowling, Mile Gu, and Andrew~C Doherty.
\newblock Quantum computation as geometry.
\newblock {\em Science}, 311(5764):1133--1135, 2006.

\bibitem{mcclean2018barren}
Jarrod~R McClean, Sergio Boixo, Vadim~N Smelyanskiy, Ryan Babbush, and Hartmut
  Neven.
\newblock Barren plateaus in quantum neural network training landscapes.
\newblock {\em Nature Communications}, 9(1):1--6, 2018.

\bibitem{verstraete2009quantum}
Frank Verstraete, Michael~M Wolf, and J~Ignacio~Cirac.
\newblock Quantum computation and quantum-state engineering driven by
  dissipation.
\newblock {\em Nature Physics}, 5(9):633--636, 2009.

\bibitem{lee2023evaluating}
Seunghoon Lee, Joonho Lee, Huanchen Zhai, Yu~Tong, Alexander~M Dalzell,
  Ashutosh Kumar, Phillip Helms, Johnnie Gray, Zhi-Hao Cui, Wenyuan Liu, et~al.
\newblock Evaluating the evidence for exponential quantum advantage in
  ground-state quantum chemistry.
\newblock {\em Nature Communications}, 14(1):1952, 2023.

\bibitem{temme2009QuantumMetropolis}
Kristan Temme, Tobias~J. Osborne, Karl~G. Vollbrecht, David Poulin, and Frank
  Verstraete.
\newblock Quantum {M}etropolis sampling.
\newblock {\em Nature}, 471(7336):87--90, 2011.

\bibitem{Rall22thermal}
Patrick Rall, Chunhao Wang, and Pawel Wocjan.
\newblock {Thermal State Preparation via Rounding Promises}.
\newblock {\em arXiv preprint arXiv:2210.01670}, 2022.

\bibitem{ding2023single}
Zhiyan Ding, Chi-Fang Chen, and Lin Lin.
\newblock {Single-ancilla ground state preparation via Lindbladians}.
\newblock {\em arXiv preprint arXiv:2308.15676}, 2023.

\bibitem{Jordan2008}
J.~Jordan, R.~Or\'us, G.~Vidal, F.~Verstraete, and J.~I. Cirac.
\newblock Classical simulation of infinite-size quantum lattice systems in two
  spatial dimensions.
\newblock {\em Phys. Rev. Lett.}, 101:250602, 2008.

\bibitem{Vanderstraeten2016}
Laurens Vanderstraeten, Jutho Haegeman, Philippe Corboz, and Frank Verstraete.
\newblock Gradient methods for variational optimization of projected
  entangled-pair states.
\newblock {\em Phys. Rev. B}, 94:155123, 2016.

\bibitem{Corboz2016}
Philippe Corboz.
\newblock Variational optimization with infinite projected entangled-pair
  states.
\newblock {\em Phys. Rev. B}, 94:035133, 2016.

\bibitem{landau2015polynomial}
Zeph Landau, Umesh Vazirani, and Thomas Vidick.
\newblock {A polynomial time algorithm for the ground state of one-dimensional
  gapped local Hamiltonians}.
\newblock {\em Nature Physics}, 11(7):566--569, 2015.

\bibitem{arad2017rigorous}
Itai Arad, Zeph Landau, Umesh Vazirani, and Thomas Vidick.
\newblock Rigorous rg algorithms and area laws for low energy eigenstates in
  1d.
\newblock {\em Commun. Math. Phys.}, 356(1):65--105, 2017.

\bibitem{abrahamsen2020polynomialtime}
Nilin Abrahamsen.
\newblock A polynomial-time algorithm for ground states of spin trees.
\newblock {\em arXiv preprint arXiv:1907.04862}, 2019.

\bibitem{Stoudenmire12}
E.M. Stoudenmire and Steven~R. White.
\newblock Studying two-dimensional systems with the density matrix
  renormalization group.
\newblock {\em Annu. Rev. Condens.}, 3(1):111--128, 2012.

\bibitem{Wu2019}
Han-Qing Wu, Shou-Shu Gong, and D.~N. Sheng.
\newblock Randomness-induced spin-liquid-like phase in the spin-$\frac{1}{2}$
  ${J}_{1}\ensuremath{-}{J}_{2}$ triangular heisenberg model.
\newblock {\em Phys. Rev. B}, 99:085141, 2019.

\bibitem{Moore2020}
Aaron Szasz, Johannes Motruk, Michael~P. Zaletel, and Joel~E. Moore.
\newblock Chiral spin liquid phase of the triangular lattice hubbard model: A
  density matrix renormalization group study.
\newblock {\em Phys. Rev. X}, 10:021042, 2020.

\bibitem{Zalatel2020}
Michael~P. Zaletel and Frank Pollmann.
\newblock Isometric tensor network states in two dimensions.
\newblock {\em Phys. Rev. Lett.}, 124:037201, 2020.

\bibitem{dassarma2017}
Dong-Ling Deng, Xiaopeng Li, and S.~Das~Sarma.
\newblock Machine learning topological states.
\newblock {\em Phys. Rev. B}, 96:195145, 2017.

\bibitem{Nomura2017}
Yusuke Nomura, Andrew~S. Darmawan, Youhei Yamaji, and Masatoshi Imada.
\newblock Restricted boltzmann machine learning for solving strongly correlated
  quantum systems.
\newblock {\em Phys. Rev. B}, 96:205152, 2017.

\bibitem{choo_fermionicnqs2020}
Kenny Choo, Antonio Mezzacapo, and Giuseppe Carleo.
\newblock Fermionic neural-network states for ab-initio electronic structure.
\newblock {\em Nature Communications}, 11(1):2368, May 2020.

\bibitem{Ferrari2019}
Francesco Ferrari, Federico Becca, and Juan Carrasquilla.
\newblock Neural gutzwiller-projected variational wave functions.
\newblock {\em Phys. Rev. B}, 100:125131, 2019.

\bibitem{Glasser2018}
Ivan Glasser, Nicola Pancotti, Moritz August, Ivan~D. Rodriguez, and J.~Ignacio
  Cirac.
\newblock Neural-network quantum states, string-bond states, and chiral
  topological states.
\newblock {\em Phys. Rev. X}, 8:011006, 2018.

\bibitem{Choo2018}
Kenny Choo, Giuseppe Carleo, Nicolas Regnault, and Titus Neupert.
\newblock Symmetries and many-body excitations with neural-network quantum
  states.
\newblock {\em Phys. Rev. Lett.}, 121:167204, 2018.

\bibitem{rbm_su2}
Tom Vieijra, Corneel Casert, Jannes Nys, Wesley De~Neve, Jutho Haegeman, Jan
  Ryckebusch, and Frank Verstraete.
\newblock Restricted boltzmann machines for quantum states with non-abelian or
  anyonic symmetries.
\newblock {\em Phys. Rev. Lett.}, 124:097201, 2020.

\bibitem{morawetz2020}
Stewart {Morawetz}, Isaac J.~S. {De Vlugt}, Juan {Carrasquilla}, and Roger~G.
  {Melko}.
\newblock {U(1) symmetric recurrent neural networks for quantum state
  reconstruction}.
\newblock {\em arXiv preprint arXiv:2010.14514}, 2020.

\bibitem{Luo2021}
Di~{Luo}, Zhuo {Chen}, Kaiwen {Hu}, Zhizhen {Zhao}, Vera {Mikyoung Hur}, and
  Bryan~K. {Clark}.
\newblock {Gauge Invariant Autoregressive Neural Networks for Quantum Lattice
  Models}.
\newblock {\em arXiv preprint arXiv:2101.07243}, 2021.

\bibitem{hayden2006aspects}
Patrick Hayden, Debbie~W Leung, and Andreas Winter.
\newblock Aspects of generic entanglement.
\newblock {\em Commun. Math. Phys.}, 265:95--117, 2006.

\bibitem{martyn2021grand}
John~M Martyn, Zane~M Rossi, Andrew~K Tan, and Isaac~L Chuang.
\newblock Grand unification of quantum algorithms.
\newblock {\em PRX Quantum}, 2(4):040203, 2021.

\bibitem{berry2015hamiltonian}
Dominic~W Berry, Andrew~M Childs, and Robin Kothari.
\newblock Hamiltonian simulation with nearly optimal dependence on all
  parameters.
\newblock In {\em 2015 IEEE 56th Annual Symposium on Foundations of Computer
  Science}, pages 792--809. IEEE, 2015.

\bibitem{childs2017quantum}
Andrew~M Childs, Robin Kothari, and Rolando~D Somma.
\newblock Quantum algorithm for systems of linear equations with exponentially
  improved dependence on precision.
\newblock {\em SIAM Journal on Computing}, 46(6):1920--1950, 2017.

\bibitem{davies1976quantum}
Edward~Brian Davies.
\newblock Quantum theory of open systems.
\newblock {\em IMA}, 1976.

\bibitem{kastoryano2016commuting}
Michael~J. Kastoryano and Fernando G. S.~L. Brandao.
\newblock {Quantum Gibbs Samplers: The Commuting Case}.
\newblock {\em Commun. Math. Phys.}, 344:915--957, 2016.

\bibitem{capel2021modified}
Angela Capel, Cambyse Rouze, and Daniel Stilck~Franca.
\newblock {The modified logarithmic Sobolev inequality for quantum spin
  systems: classical and commuting nearest neighbour interactions}.
\newblock {\em arXiv preprint arXiv:2009.11817}, 2021.

\bibitem{Bhatia}
Rajendra Bhatia.
\newblock {\em Matrix analysis}, volume 169.
\newblock Springer Science \& Business Media, 2013.

\end{thebibliography}
\bibliographystyle{unsrt}

\end{document}